\documentclass[11pt,a4paper]{article}

\usepackage{fontspec}
\usepackage[titletoc,title,header]{appendix}

\usepackage{amsmath}
\usepackage{amssymb}
\usepackage{amsthm}
\usepackage{mathtools}
\usepackage{xparse}    

\usepackage[margin=1in]{geometry}

\usepackage{graphicx}
\usepackage{xcolor}

\usepackage{hyperref}
\hypersetup{
  colorlinks=true,
  linkcolor=blue,
  citecolor=blue,
  urlcolor=blue,
  pdfauthor={Yoshitsugu Sekine},
  pdftitle={A Note on the Resolvent Algebra and Functional Integral Approach to the van Hove Model}
}

\usepackage[]{natbib}
\theoremstyle{plain}
\newtheorem{thm}{Theorem}[section]
\newtheorem{lem}[thm]{Lemma}
\newtheorem{prop}[thm]{Proposition}
\newtheorem{cor}[thm]{Corollary}

\newtheorem{fact}[thm]{Fact}

\theoremstyle{definition}
\newtheorem{defn}[thm]{Definition}

\theoremstyle{remark}
\newtheorem{rem}[thm]{Remark}

\makeatletter
\providecommand*{\dashv}{\mathrel{\mathpalette\@Dashv\vDash}}
\newcommand*{\@dashv}[2]{\reflectbox{$\m@th#1#2$}}
\makeatother
\newcommand{\abs}[1]{\left| #1 \right|} 
\NewDocumentCommand{\weaknorm}{O{\dbk} m}{#1{#2}} 
\newcommand{\norm}[1]{\left\Vert #1 \right\Vert} 

\newcommand{\bkt}[2]{\left\langle #1,\,#2 \right\rangle} 
\newcommand{\dualbkt}[2]{\bkt{#1}{#2}_{\txtdual}} 
\newcommand{\rbkt}[2]{\left( #1,\,#2 \right)} 
\newcommand{\slim}{\mathrm{s} \hyphen \lim} 
\newcommand{\wlim}{\mathrm{w} \hyphen \lim} 
\newcommand{\isomto}{\mathrel{\rightarrowtail\kern-1.9ex\twoheadrightarrow}} 
\newcommand{\basebk}[1]{\left\langle #1 \right\rangle} 
\newcommand{\cbk}[1]{\left\{ #1 \right\}} 
\newcommand{\dbk}[1]{\left\langle #1 \right\rangle} 
\newcommand{\pairbk}[1]{\rbk{#1}} 
\newcommand{\rbkleft}[1]{\left( #1 \right.} 
\newcommand{\rbkright}[1]{\left. #1 \right)} 
\newcommand{\rbk}[1]{\left( #1 \right)} 
\newcommand{\sqbk}[1]{\left[ #1 \right]} 
\newcommand{\vecbk}[1]{\rbk{#1}} 
\newcommand{\funrbk}[2]{\fun{\rbk{#1}}{#2}} 
\newcommand{\fun}[2]{#1 \rbk{#2}} 
\newcommand{\sqfuncond}[3]{\sqfun{#1}{#2 \middle| #3}} 
\newcommand{\sqfun}[2]{#1 \sqbk{#2}} 
\newcommand{\closedinterval}[2]{\sqbk{#1,\,#2}} 
\newcommand{\openinterval}[2]{\rbk{#1,\,#2}} 
\newcommand{\rightopeninterval}[2]{\left[#1, \, #2 \right)} 
\newcommand{\commutator}[2]{\sqbk{#1,\,#2}} 
\newcommand{\nin}{n \in \monnat} 

\newcommand{\wick}[1]{\left. :\! \hspace{-0.5pt} #1 \hspace{-0.5pt} \!: \right.} 

\DeclareMathOperator*{\opforall}{\forall}
\DeclareMathOperator{\sgn}{sgn} 
\NewDocumentCommand{\imunit}{O{\mathsf{i}}}{#1} 
\NewDocumentCommand{\placeholder}{O{\bullet}}{#1} 
\NewDocumentCommand{\trace}{O{\operatorname{Tr}}}{#1} 
\newcommand{\Ad}{\operatorname{Ad}} 
\newcommand{\Ker}{\operatorname{Ker}} 
\newcommand{\bigmiddleslash}[2]{\left. #1 \middle/ #2 \right.} 
\newcommand{\cmpconj}[1]{\overline{#1}} 
\newcommand{\diracdelta}{\delta} 
\newcommand{\dom}{\operatorname{dom}} 
\newcommand{\ep}{\varepsilon} 
\newcommand{\eqcsq}[1]{\sqbk{#1}} 
\newcommand{\eqisom}{\cong} 
\newcommand{\fml}[2]{\cbk{#1}_{#2}} 
\newcommand{\hyphen}{\hbox{-}} 
\newcommand{\idone}{1} 
\newcommand{\id}{\mathrm{id}} 
\newcommand{\invrbk}[1]{\rbk{#1}^{-1}} 
\newcommand{\inv}[1]{#1^{-1}} 
\newcommand{\kroneckerdelta}{\delta} 
\newcommand{\napiernum}{\mathsf{e}} 
\newcommand{\od}[2]{\frac{d #1}{d #2}} 
\newcommand{\onehalf}{\frac{1}{2}} 
\newcommand{\oneoverfour}{\frac{1}{4}} 
\newcommand{\opimag}{\operatorname{Im}} 
\newcommand{\opod}[1]{\frac{d}{d #1}} 
\newcommand{\opreal}{\operatorname{Re}} 
\newcommand{\pd}[2]{\frac{\partial #1}{\partial #2}} 
\newcommand{\setSymbolDownLeft}[2]{{\vphantom{#2}}_{#1}{#2}} 
\newcommand{\setSymbolUpLeft}[2]{{\vphantom{#2}}^{#1}{#2}} 
\newcommand{\fnsineintegral}[1]{\fun{\mathrm{Si}}{#1}} 
\newcommand{\intsineintegral}[1]{\int_{0}^{#1} \frac{\sin t}{t} \opdmsr{t}} 
\DeclareMathOperator{\eqalgisom}{\cong} 
\DeclareMathOperator{\eqsimelem}{\sim} 
\DeclareMathOperator{\Rep}{Rep} 
\NewDocumentCommand{\agvariety}{O{\mathcal}}{#1} 
\NewDocumentCommand{\cmdrel}{O{\omega}}{#1} 
\NewDocumentCommand{\dfsp}{O{A}}{#1} 
\NewDocumentCommand{\eqcpointed}{O{\eqcsq} m}{#1{#2}_{\ast}} 
\NewDocumentCommand{\fnheaviside}{O{H}}{#1} 
\NewDocumentCommand{\fthol}{O{\mathcal{O}}}{#1} 
\NewDocumentCommand{\ftmero}{O{\mathcal{M}}}{#1} 
\NewDocumentCommand{\grcentralizer}{O{Z}}{#1} 
\NewDocumentCommand{\grmetform}{O{2} m m}{\grmet[#1] \! \rbkt{#2}{#3}} 
\NewDocumentCommand{\grmet}{O{2}}{\setSymbolDownLeft{#1}{g}} 
\NewDocumentCommand{\grnormalizer}{O{N}}{#1} 
\NewDocumentCommand{\gropasym}{O{A}}{#1} 
\NewDocumentCommand{\gropsym}{O{S}}{#1} 
\NewDocumentCommand{\grpermorderedpair}{O{\mathcal{P}}}{#1} 
\NewDocumentCommand{\grsym}{O{\mathfrak{S}} m}{#1_{#2}} 
\NewDocumentCommand{\gtbase}{O{\mathcal}}{#1} 
\NewDocumentCommand{\gtfilter}{O{\mathcal}}{#1} 
\NewDocumentCommand{\gtfmlclosed}{O{\mathcal}}{#1} 
\NewDocumentCommand{\gtfmlopen}{O{\mathcal}}{#1} 
\NewDocumentCommand{\gtopenball}{O{U}}{#1} 
\NewDocumentCommand{\gtopencover}{O{\mathcal}}{#1} 
\NewDocumentCommand{\gtopennbh}{O{\mathcal}}{#1} 
\NewDocumentCommand{\gtpreopencover}{O{\mathcal}}{#1} 
\NewDocumentCommand{\gtsubbase}{O{\mathcal}}{#1} 
\NewDocumentCommand{\gtvicinity}{O{\mathcal}}{#1} 
\NewDocumentCommand{\lasp}{O{\mathcal}}{#1} 
\NewDocumentCommand{\latpright}{O{\top} m}{#2^{#1}} 
\NewDocumentCommand{\latp}{O{t} m}{\setSymbolUpLeft{#1}{#2}} 
\NewDocumentCommand{\lpdistribution}{O{\mu} m}{#2_{\ast,#1}} 
\NewDocumentCommand{\lpmollifier}{O{\rho}}{#1} 
\NewDocumentCommand{\lpofpositive}{O{\chi}}{#1} 
\NewDocumentCommand{\manliederiv}{O{L}}{#1} 
\NewDocumentCommand{\mansmoothnbh}{O{\mathcal}}{#1} 
\NewDocumentCommand{\mblfmldsysgenerated}{O{d} m}{\fun{#1}{#2}} 
\NewDocumentCommand{\mblfmlgenerated}{O{\sigma} m}{\fun{#1}{#2}} 
\NewDocumentCommand{\oacorrfn}{O{\Gamma}}{#1} 
\NewDocumentCommand{\oagnsvector}{O{\Omega}}{#1} 
\NewDocumentCommand{\oaideal}{O{\mathcal}}{#1} 
\NewDocumentCommand{\oanumberoperator}{O{A}}{#1} 
\NewDocumentCommand{\oaposcone}{O{\mathcal{P}}}{#1} 
\NewDocumentCommand{\oapressure}{O{P}}{#1} 
\NewDocumentCommand{\oarepn}{O{\pi}}{#1} 
\NewDocumentCommand{\oaspnormalstate}{O{N}}{#1} 
\NewDocumentCommand{\oasppurestate}{O{P}}{#1} 
\NewDocumentCommand{\oaspstate}{O{E}}{#1} 
\NewDocumentCommand{\oastatevector}{O{\Omega}}{#1} 
\NewDocumentCommand{\oastate}{O{\omega}}{#1} 
\NewDocumentCommand{\opdilation}{O{\delta}}{#1} 
\NewDocumentCommand{\opdmat}{O{\rho}}{#1} 
\NewDocumentCommand{\opfockan}{O{a}}{#1} 
\NewDocumentCommand{\opfockcran}{O{a}}{#1^{\#}} 
\NewDocumentCommand{\opfockcrdagger}{O{a}}{#1^{\dagger}} 
\NewDocumentCommand{\opfockcr}{O{a}}{#1^{\ast}} 
\NewDocumentCommand{\opfocknumber}{O{N}}{#1} 
\NewDocumentCommand{\opfocksegalconj}{O{\pi}}{#1} 
\NewDocumentCommand{\opfocksegal}{O{\phi}}{#1} 
\NewDocumentCommand{\opspecmeas}{O{E}}{#1} 
\NewDocumentCommand{\opspec}{O{} m}{\fun{\sigma_{#1}}{#2}} 
\NewDocumentCommand{\optransl}{O{\tau}}{#1} 
\NewDocumentCommand{\physaction}{O{\mathcal{A}}}{#1} 
\NewDocumentCommand{\physcharge}{O{e}}{\mathrm{#1}} 
\NewDocumentCommand{\physcplconst}{O{\mathsf{g}}}{#1} 
\NewDocumentCommand{\physelectrostaticcapasity}{O{\mathrm{Cap}}}{#1} 
\NewDocumentCommand{\physenergy}{O{E}}{#1} 
\NewDocumentCommand{\physgse}{O{E}}{#1_{0}} 
\NewDocumentCommand{\physham}{O{H}}{#1} 
\NewDocumentCommand{\physlagdensity}{O{\mathcal{L}}}{#1} 
\NewDocumentCommand{\physlag}{O{L}}{#1} 
\NewDocumentCommand{\physliouvilean}{O{L}}{#1} 
\NewDocumentCommand{\physmass}{O{m}}{#1} 
\NewDocumentCommand{\prbcharfun}{O{\chi}}{#1} 
\NewDocumentCommand{\prbdist}{O{\mathcal{P}}}{#1} 
\NewDocumentCommand{\prbgaussianmeasure}{O{\msrcal{N}}}{#1} 
\NewDocumentCommand{\prbnormaldist}{O{N}}{#1} 
\NewDocumentCommand{\prbprocess}{O{X}}{#1} 
\NewDocumentCommand{\prbqspace}{O{\mathcal{Q}}}{#1} 
\NewDocumentCommand{\prbspsample}{O{\Omega}}{#1} 
\NewDocumentCommand{\psh}{O{\mathfrak}}{#1} 
\NewDocumentCommand{\qtquantumchannel}{O{\mathcal{L}}}{#1} 
\NewDocumentCommand{\repn}{O{\pi}}{#1} 
\NewDocumentCommand{\schattencls}{O{\mathbb{K}}}{#1} 
\NewDocumentCommand{\setfmlcylinder}{O{\mathcal{C}}}{#1} 
\NewDocumentCommand{\setfml}{O{\mathcal}}{#1} 
\NewDocumentCommand{\setindex}{O{\mathcal} m}{#1{#2}} 
\NewDocumentCommand{\setlattice}{O{\Gamma}}{#1} 
\NewDocumentCommand{\setspecial}{O{\mathcal} m}{#1{#2}} 
\NewDocumentCommand{\shdiffform}{O{\sheaf{A}}}{#1} 
\NewDocumentCommand{\sheaf}{O{\mathfrak}}{#1} 
\NewDocumentCommand{\smchemicalpotential}{O{\mu}}{#1} 
\NewDocumentCommand{\smenergydensity}{O{\varrho}}{#1} 
\NewDocumentCommand{\smfluctuationwithdmat}{O{\beta} m}{\smuncertaintywithdmat[#1]{#2}^2} 
\NewDocumentCommand{\sminvtemperature}{O{\beta}}{#1} 
\NewDocumentCommand{\smlocaldensityoperator}{O{\rho}}{#1} 
\NewDocumentCommand{\smmicrocanonicalstate}{O{\beta} m}{\physmean{#2}_{#1}} 
\NewDocumentCommand{\smnumberdensity}{O{\rho}}{#1} 
\NewDocumentCommand{\smooth}{O{\mathcal{E}}}{#1} 
\NewDocumentCommand{\smparticlenumber}{O{N}}{#1} 
\NewDocumentCommand{\smpressure}{O{p}}{#1} 
\NewDocumentCommand{\smspecificfreeenergy}{O{\bar{f}}}{#1} 
\NewDocumentCommand{\smthermalvac}{O{\beta}}{\Omega_{#1}} 
\NewDocumentCommand{\smuncertaintywithdmat}{O{\beta} m}{\rbk{\triangle #2}_{#1}} 
\NewDocumentCommand{\sphilb}{O{\mathcal}}{#1} 
\NewDocumentCommand{\splowerhalf}{O{\mathbb{H}}}{#1_{\txtneg}} 
\NewDocumentCommand{\spupperhalf}{O{\mathbb{H}}}{#1_{\txtnonneg}} 
\NewDocumentCommand{\topmetric}{O{d}}{#1} 
\NewDocumentCommand{\vaoutnormal}{O{\widehat}}{#1} 
\newcommand{\category}[1]{\mathop{\mathsf{#1}}} 

\newcommand{\catpresheaf}[1]{\category{PSh}} 
\newcommand{\conti}{C} 
\newcommand{\dstrapiddec}{\mathcal{S}} 
\newcommand{\dsttempered}{\dstrapiddec^{\ast}} 
\newcommand{\faadjrbk}[1]{\rbk{#1}^{\ast}} 
\newcommand{\faadjpresharp}[1]{#1_{\#}} 
\newcommand{\faadj}[1]{#1^{\ast}} 
\newcommand{\fafouriertr}{\mathcal{F}} 
\newcommand{\faftrrbk}[1]{\widehat{\rbk{#1}}} 
\newcommand{\faftr}[1]{\widehat{#1}} 
\newcommand{\fldcmp}{\fld{C}} 
\newcommand{\fldmultiplicativegroup}[1]{#1^{\times}} 
\newcommand{\fldreal}{\fld{R}} 
\newcommand{\fld}[1]{\mathbb{#1}} 
\newcommand{\fndef}[1]{\boldsymbol{1}_{#1}} 
\newcommand{\fnexp}[1]{\fun{\exp}{#1}} 
\newcommand{\fnrestr}[2]{\left. #1 \right|_{#2}} 
\newcommand{\greuctr}[2]{\fldreal_{#1}^{#2}} 
\newcommand{\gtclos}[1]{\overline{#1}} 
\newcommand{\lacomplexication}[1]{#1_{\fldcmp}} 
\newcommand{\lamat}[2]{\rbk{#1}_{#2}} 
\newcommand{\lpseq}{\ell} 
\newcommand{\lp}{L} 
\newcommand{\mblfmlborel}{\mblfml{B}} 
\newcommand{\mblfmlfrak}[1]{\mathfrak{#1}} 
\newcommand{\mblfml}[1]{\mathcal{#1}} 
\newcommand{\mblfn}{M} 
\newcommand{\monnat}{\mathbb{N}} 
\newcommand{\msrcal}[1]{\mathcal{#1}} 
\newcommand{\msrsf}[1]{\mathsf{#1}} 
\newcommand{\msr}[1]{#1} 

\newcommand{\nfoldvar}[2]{\underline{#1}_{#2}} 
\newcommand{\oacstar}{C^{\ast}} 
\newcommand{\oaderiv}{\delta} 
\newcommand{\oaresolventalgebra}{\oa{R}} 
\newcommand{\oaresolvent}{R} 
\newcommand{\oastaralgebra}{\operatorname{\ast \hyphen \mathrm{alg}}} 
\newcommand{\oaweyl}{\oa{W}} 
\newcommand{\oa}[1]{\mathcal{#1}} 
\newcommand{\opclos}[1]{\overline{#1}} 
\newcommand{\opdmsr}[1]{\mathop{d #1}} 
\newcommand{\opfnresolvent}[1]{\invrbk{#1}} 
\newcommand{\opfockschwingerfunctional}{S} 
\newcommand{\opfocksndqntdiff}{d \Gamma} 
\newcommand{\opfockvac}{\Omega} 
\newcommand{\opfockweyl}{W} 
\newcommand{\opformdomain}{\mathop{Q}} 
\newcommand{\opform}[1]{\mathsf{#1}} 
\newcommand{\oppr}{\mathrm{pr}} 
\newcommand{\opspbddlin}[1]{\fun{\mathbb{B}}{#1}} 
\newcommand{\opspecint}[1]{\mathcal{E}} 
\newcommand{\physmean}[1]{\dbk{#1}} 
\newcommand{\prbcov}{\mathrm{Cov}} 
\newcommand{\prbexp}{\mathbb{E}} 
\newcommand{\prbou}{\xi} 
\newcommand{\prbvar}{\operatorname{Var}} 
\newcommand{\pushoutrbk}[1]{\rbk{#1}_{\ast}} 
\newcommand{\ringratint}{\mathbb{Z}} 
\newcommand{\semigrposint}{\monnat_1} 

\newcommand{\seqn}[1]{\seq{#1_{n}}{\nin}} 
\newcommand{\seq}[2]{\if\relax\detokenize{#1}\relax \rbk{#1} \else \rbk{#1}_{#2} \fi} 
\newcommand{\setcardop}[1]{\operatorname{card} #1} 
\newcommand{\setcpl}[1]{#1^{c}} 
\newcommand{\setisomorphism}[1]{\operatorname{Iso}} 
\newcommand{\setone}[1]{\cbk{#1}}
\newcommand{\setpowfin}[1]{\fun{\mathrm{Pow}_{\txtfin}}{#1}} 
\newcommand{\setquot}[2]{\bigmiddleslash{#1}{#2}} 
\newcommand{\set}[2]{\left\{#1 \, \middle| \, #2\right\}}
\newcommand{\smgrandpartitionfunc}{\Xi} 
\newcommand{\smpartitionfunc}{Z} 
\newcommand{\spfock}{\mathcal{F}} 
\newcommand{\splinspan}{\operatorname{span}} 
\newcommand{\txtbdd}{\mathrm{b}} 
\newcommand{\txtbec}{\mathrm{BEC}} 
\newcommand{\txtborel}{\mathrm{b}} 
\newcommand{\txtbsn}{\mathrm{b}} 
\newcommand{\txtclassical}{\mathrm{cl}} 
\newcommand{\txtcoulomb}{\mathrm{C}} 
\newcommand{\txtcpt}{\mathrm{c}} 
\newcommand{\txtdiscrete}{\mathrm{d}} 
\newcommand{\txtdual}{\mathrm{dual}} 
\newcommand{\txteuclid}{\mathrm{E}} 
\newcommand{\txtexchange}{\mathrm{ex}} 
\newcommand{\txtfin}{\mathrm{fin}} 
\newcommand{\txtfock}{\mathrm{F}} 
\newcommand{\txtfr}{\mathrm{fr}} 
\newcommand{\txtgs}{\mathrm{gs}} 
\newcommand{\txtinflaredred}{\mathrm{IR}} 
\newcommand{\txtirsingular}{\mathrm{irs}} 
\newcommand{\txtneg}{\mathrm{-}} 
\newcommand{\txtnonneg}{\mathrm{+}} 
\newcommand{\txtnonzero}{\mathrm{nz}} 
\newcommand{\txtreal}{\mathrm{real}} 
\newcommand{\txtgrandcanonical}{\mathrm{GC}} 
\newcommand{\txtrenormalization}{\mathrm{ren}} 
\newcommand{\txtself}{\mathrm{self}} 
\newcommand{\txtsym}{\mathrm{s}} 
\newcommand{\txttot}{\mathrm{tot}} 
\newcommand{\txttruncatedfn}{\mathrm{T}} 
\newcommand{\txtultraviolet}{\mathrm{UV}} 
\newcommand{\txtvanhowe}{\mathrm{vH}} 
\newcommand{\txtweak}{\mathrm{w}} 
\newcommand{\vagrad}{\operatorname{grad}} 
\newcommand{\vainnprod}{\cdot} 

\title{A Note on the Resolvent Algebra and Functional Integral Approach to the van Hove Model}

\author{%
Yoshitsugu Sekine\\{\small\texttt{4429sekine@gmail.com}}%
}

\date{\today}

\begin{document}

\maketitle

\begin{abstract}
This paper is a collection of the author's computational notes on the van Hove model and contains no essentially new results. We discuss, from both the operator-algebraic perspective via the Weyl algebra and the resolvent algebra and the functional integral approach, the removal of infrared and ultraviolet cutoffs and the existence of the ground state and \(\beta\)-KMS states in the case of a point source. In the infinite-volume system at finite temperature, Bose--Einstein condensation can arise.

\noindent\textbf{Keywords:} resolvent algebra, functional integral, Bose-Einstein condensation, IR divergence
\end{abstract}

\setcounter{tocdepth}{3}
\tableofcontents

\section{Introduction}\label{expedition0012078}

The primary purpose of this paper is to publish computational notes on the van Hove model \cite{LorincziHiroshimaBetz2,LorincziHiroshimaBetz3,AsaoArai26}. We do not intend to present essentially new results; the modest value of this work lies, if any, in the description via the resolvent algebra introduced by Buchholz \cite{DetlevBuchholz001,BuchholzGrundling2} and in translating its general theory into explicit computations for the van Hove model. In constructive quantum field theory and rigorous statistical mechanics, the focus is on concrete ground states and equilibrium states, so the advantages offered by \(\oacstar\)-algebras are often not apparent. We have paid attention to this point throughout our discussion, intending this work to serve more as a review and educational material than as an original research paper.

\subsection{The Van Hove Model}\label{the-van-hove-model}

The van Hove model is an exactly solvable model of quantum field theory defined by the linear coupling \[\physham_{\txtvanhowe}
=
\physham_{\txtbsn,\txtfr}
+\fun{\opfocksegal_{\txtfock}}{\frac{\varrho}{\sqrt{\omega}}}\] of the free scalar field Hamiltonian \(\physham_{\txtbsn,\txtfr}\) via a fixed source \(\varrho\). By the Kato--Rellich theorem, \(\physham_{\txtvanhowe}\) is a self-adjoint operator bounded from below, and it reduces essentially to a free field via a Bogoliubov transformation (a unitary transformation by Weyl operators). In this paper, we choose in particular the point source given by the Dirac delta function at the origin, \(\varrho = \diracdelta_0\), and begin with the cutoff source \(\varrho_{\kappa,\Lambda}\) equipped with an infrared cutoff \(\kappa\) and an ultraviolet cutoff \(\Lambda\), then discuss the removal of these cutoffs.

\subsection{Summary of Main Results}\label{summary-of-main-results}

Without going into details, which are stated as theorems below, the main theorem of this paper reads as follows: even after removing the infrared and ultraviolet cutoffs, the van Hove model admits a ground state and a \(\sminvtemperature\)-KMS state for every inverse temperature \(\sminvtemperature > 0\). Since the analysis reduces essentially to treating a free Bose gas via a Bogoliubov transformation, the model exhibits behavior very close to that of a free Bose field.

The removal of the ultraviolet cutoff requires renormalization of the ground-state energy; in the case of a point source, the self-energy term diverges linearly to \(-\infty\), and upon renormalization a Coulomb potential emerges. The details of the operator-theoretic argument are discussed thoroughly in Arai's textbook \cite{AsaoArai26}.

In the step of removing the infrared cutoff, an infrared divergence appears essentially and manifests as a constraint on observable physical quantities. In the resolvent algebra, this constraint can be expressed in terms of ideal structure.

In equilibrium states at finite temperature, Bose-Einstein condensation (BEC) in the infinite-volume limit can arise, for example below the critical temperature in a three-dimensional system. To make the descriptions of BEC and infrared divergence precise, we sometimes restrict the spatial dimension to \(d = 3\). The appropriate generalization is straightforward. For the relationship to the no-go theorem of BEC for quasiparticles, see \cite{YoshitsuguSekine006}.

\subsection{Organization and References}\label{organization-and-references}

The paper is organized around three main approaches: operator-theoretic arguments in the Weyl algebra, operator-algebraic arguments in the resolvent algebra, and arguments via functional integral theory.

The operator-theoretic treatment of the van Hove model at zero temperature follows Arai's textbook \cite{AsaoArai26} as its foundation, with the particular aim of transplanting the arguments into the operator-algebra setting. The description of Bose-Einstein condensation for the free Bose gas at finite temperature is based on Arai's textbook \cite{AsaoArai28}.

The resolvent algebra is introduced following the formulation of \cite{DetlevBuchholz001,BuchholzGrundling2}, and its application to the van Hove model is discussed. For the resolvent algebra at finite temperature, we also refer to \cite{YoshitsuguSekine004,YoshitsuguSekine005}.

For functional integrals, we discuss the \(Q\)-space representation and the Ornstein--Uhlenbeck representation at zero temperature. The discussion in the \(Q\)-space representation is included in a form that rewrites the treatment in \cite{LorincziHiroshimaBetz2,LorincziHiroshimaBetz3} for a simple special case. The reference \cite{LorincziHiroshimaBetz3} works under infrared regularization. In this paper, we choose the point source and explicitly discuss the removal of the infrared and ultraviolet cutoffs within the functional integral framework; this part may not have an explicit treatment in the existing literature. For the functional integral at finite temperature, we follow the presentation of \cite[Chapter~21]{DerezinskiGerard001}, which improves upon \cite{KleinLandau001}.

\section{Main Results}\label{main-results}

\subsection{Spatial Setup}\label{spatial-setup}

We denote the fundamental complex Hilbert space and its real subspace by \[\sphilb{H}
=
\fun{\lp^{2}}{\fldreal^{d}},
\quad
\sphilb{H}_{\txtreal}
=
\fun{\lp_{\txtreal}^{2}}{\fldreal^{d}}
=
\fun{\lp^{2}}{\fldreal^{d};\fldreal}\] and we may restrict the dimension to \(d = 3\) for simplicity or when discussing BEC, in which case we always state this explicitly. The symplectic form is \(\sigma(f,g) = \opimag \bkt{f}{g}_{\sphilb{H}}\). For the real symplectic space \((X,\sigma)\) we choose \(X\) to be the realification of the complex Hilbert space \(\sphilb{H}\), and the inner product of this real Hilbert space is defined by \(\opreal \bkt{f}{g}\).

For a positive real number \(s > 0\), the single-particle Hamiltonian (dispersion relation) defined on the momentum space \(\greuctr{k}{d}\) is \(\omega(k) = \omega_s(k) = \abs{k}^{s}\). For a non-positive chemical potential \(\smchemicalpotential \leq 0\), the non-negative self-adjoint operator is defined as \(K_{\sminvtemperature,\smchemicalpotential} = \coth \frac{\sminvtemperature \rbk{\omega - \smchemicalpotential}}{2}\). For the associated non-degenerate non-negative symmetric sesquilinear form \(\opform{q}_{\txtnonzero,\sminvtemperature,\smchemicalpotential}\), the associated inner product space and its completion are defined as \[\sphilb{D}_{\sminvtemperature,\smchemicalpotential}
=
\pairbk{\fun{\opformdomain}{\opform{q}_{\txtnonzero,\sminvtemperature,\smchemicalpotential}},\opform{q}_{\txtnonzero,\sminvtemperature,\smchemicalpotential}},
\quad
\sphilb{H}_{\sminvtemperature,\smchemicalpotential}
=
\gtclos{\sphilb{D}_{\sminvtemperature,\smchemicalpotential}}^{\opform{q}_{\txtnonzero,\sminvtemperature,\smchemicalpotential}}.\] Furthermore, when \(\sminvtemperature = \infty\) we regard \(K_{\sminvtemperature,\smchemicalpotential} \to \idone\) and define the sesquilinear form \[\opform{q}_{\txtnonzero,\infty,\smchemicalpotential}(f) = \opform{q}_{\txtnonzero,\infty}(f) = \norm{f}^2.\] Unless otherwise stated and except when discussing BEC, we assume the dispersion relation \(s = 1\), i.e., \(\omega(k) = \abs{k}\).

\begin{rem}
When discussing BEC, we assume phonons in particular.
For the original phonon dispersion $s = 1$, the case $s > 1$ is introduced as a nonlinear term that suppresses excess phonons to prevent infrared and ultraviolet divergences from blowing up, which is to be renormalized into the background field.
As we discuss later, the key point is that the treatment strategy changes significantly between the linear dispersion natural to acoustic phonons and the nonlinear dispersion.
The same arguments apply to other appropriate dispersion relations.
\end{rem}

In general, the bosonic Fock space over a Hilbert space \(\sphilb{H}\) is defined as \(\fun{\spfock_{\txtbsn}}{\sphilb{H}} = \bigoplus_{n=0}^{\infty} \bigotimes_{\txtsym}^{n} \sphilb{H}\), and for any \(f \in \sphilb{H}\) the creation and annihilation operators on the bosonic Fock space are denoted \(\opfockcran_{\txtfock}(f)\). The Segal field operator is defined by \[\opfocksegal_{\txtfock}(f)
=
\frac{1}{\sqrt{2}}
\rbk{\opfockcr_{\txtfock}(f) + \opfockan_{\txtfock}(f)}\] and the Weyl operator is defined as \(\opfockweyl_{\txtfock}(f) = \napiernum^{\imunit \opfocksegal_{\txtfock}(f)}\). The self-adjoint operator \(\physham_{\txtbsn,\txtfr} = \fun{\opfocksndqntdiff_{\txtbsn}}{\omega}\) on Fock space is the second quantization operator determined by the dispersion relation, and its ground state is the Fock vacuum. The Hamiltonian of the van Hove model is defined as a perturbation of the free-field Hamiltonian by the Segal field operator with the source function \(\varrho\) as its argument; specifically, \[\physham_{\txtvanhowe}
=
\physham_{\txtbsn,\txtfr}
+\fun{\opfocksegal_{\txtfock}}{\frac{\varrho}{\sqrt{\omega}}}.\] By the Kato--Rellich theorem, this is a self-adjoint operator bounded from below. The infrared and ultraviolet cutoffs are imposed on the source \(\varrho\), and these will be defined precisely later.

\subsection{Weyl Algebra}\label{weyl-algebra}

Let \(\sphilb{H}\) be a complex Hilbert space, and define the bilinear map \(\sigma\) as the symplectic form by \[\sigma \colon \sphilb{H} \times \sphilb{H} \to \fldreal; \quad \sigma(f,g) = \opimag \bkt{f}{g}_{\sphilb{H}}.\] The \(\oacstar\)-algebra such that for any \(f,g \in \sphilb{H}\) the generators \(\opfockweyl(f)\) satisfy the Weyl relations \begin{equation}
\begin{aligned}
\faadj{\opfockweyl(f)}
&=
\opfockweyl(-f), \\
\opfockweyl(f) \opfockweyl(g)
&=
\napiernum^{-\frac{\imunit}{2} \opimag \bkt{f}{g}_{\sphilb{H}}}
\opfockweyl(f+g)
\end{aligned}
\end{equation} is called the Weyl algebra: \[\oaweyl = \oaweyl(\sphilb{H}, \sigma) = \oaweyl(\sphilb{H}) = \oacstar \set{\opfockweyl(f)}{f \in \sphilb{H}}.\] Where there is no risk of confusion, we use appropriate abbreviations for the Weyl algebra. We may specify an appropriate subspace rather than the full Hilbert space, and the Weyl algebra over the full Hilbert space is also called the full Weyl algebra. When we wish to distinguish clearly from the Fock representation or the Araki--Woods representation of the Weyl algebra, the abstractly defined Weyl algebra above is also called the abstract Weyl algebra.

Let \((\sphilb{H}_{\repn},\repn)\) be a representation of the abstract Weyl algebra \(\oaweyl(\sphilb{H})\). This representation \((\sphilb{H}_{\repn},\repn)\) is called a regular representation if, for any \(f \in \sphilb{H}\), the unitary group \(t \in \fldreal \to \repn(\opfockweyl(tf))\) is strongly continuous. A state \(\omega\) of the Weyl algebra \(\oaweyl(\sphilb{H})\) is called a regular state if its GNS representation is regular. Furthermore, let \(\oaweyl(\sphilb{D})\) denote the Weyl algebra over a pre-Hilbert space \(\sphilb{D}\). A representation \(\pairbk{\sphilb{H},\repn}\) of the Weyl algebra is called a regular representation if, for any \(f \in \sphilb{D}\), the map \(\fldreal \ni t \mapsto \repn(\opfockweyl(tf))\) is strongly continuous. Furthermore, a state \(\oastate\) on the Weyl algebra is called a regular state if its GNS representation is a regular representation.

Denoting by \(\fun{\spfock_{\txtbsn}}{\sphilb{H}}\) the Fock space over the complex Hilbert space \(\sphilb{H}\), and defining the Segal field operator and the symplectic form \(\sigma\) as \[\opfocksegal_{\txtfock}(f) = \frac{1}{\sqrt{2}} \rbk{\opfockcr_{\txtfock}(f) + \opfockan_{\txtfock}(f)}, \quad \sigma_{\txtfock}(f,g) = \opimag \bkt{f}{g}_{\sphilb{H}},\] we define \[\repn_{\txtfock} \colon \oaweyl(\sphilb{H},\sigma) \to \opspbddlin{\fun{\spfock_{\txtbsn}}{\sphilb{H}}}; \quad \repn_{\txtfock}(\opfockweyl(f)) = \opfockweyl_{\txtfock}(f) = \napiernum^{\imunit \opfocksegal_{\txtfock}(f)}.\] This \(\repn_{\txtfock}\) is called the Fock representation of the abstract Weyl algebra, or simply the Fock representation of the Weyl algebra.

\subsection{Resolvent Algebra}\label{expedition0012083}

We introduce the definition and basic properties of the resolvent algebra following \cite{DetlevBuchholz001}. Let \((X,\sigma)\) be a symplectic space. Let \(\oaresolventalgebra_0\) be the universal unital \(\ast\)-algebra generated by the set \(\set{\oaresolvent(\lambda,f)}{\lambda \in \fldmultiplicativegroup{\fldreal}, f \in \sphilb{H}}\), which satisfies in particular the following resolvent relations: \begin{align}
\oaresolvent(\lambda,0)
&=
-\frac{\imunit}{\lambda} \idone, \\ 
\faadj{\oaresolvent(\lambda,f)}
&=
\oaresolvent(-\lambda,f), \\ 
\nu \oaresolvent(\nu \lambda, \nu f)
&=
\oaresolvent(\lambda, f), \\ 
\oaresolvent(\lambda,f) - \oaresolvent(\mu,f)
&=
\imunit
(\mu - \lambda)
\oaresolvent(\lambda,f) \cdot \oaresolvent(\mu,f) \\ 
&=
\imunit
(\mu - \lambda)
\oaresolvent(\mu,f) \cdot \oaresolvent(\lambda,f), \\ 
\commutator{\oaresolvent(\lambda,f)}{\oaresolvent(\mu,g)}
&=
\imunit
\sigma(f,g)
\oaresolvent(\lambda,f)
\oaresolvent(\mu,g)^2
\oaresolvent(\lambda,f), \label{expedition0012052} \\ 
\oaresolvent(\lambda,f)
\oaresolvent(\mu,g)
&=
\oaresolvent(\lambda+\mu, f+g)
\cdot
\rbkleft{\oaresolvent(\lambda,f)} \\
&\quad\rbkright{+
\oaresolvent(\mu,g)
+\imunit \sigma(f,g) \oaresolvent(\lambda,f)^2 \oaresolvent(\mu,g)} 
\end{align} In particular, by condition \eqref{expedition0012052}, \(\oaresolvent(\lambda,f)\) and \(\oaresolvent(\mu,f)\) with the same \(f\) commute.

The \(\ast\)-algebra obtained by introducing an appropriate norm on \(\oaresolventalgebra_0\) and completing it is called the abstract resolvent algebra, or simply the resolvent algebra. For details on the norm, see \cite[P.2730, Definition 3.4]{BuchholzGrundling2}. In particular, by \cite[P.2730, Theorem 3.6 (iii)]{BuchholzGrundling2}, we have \(\norm{\oaresolvent(\lambda,f)} = \frac{1}{\abs{\lambda}}\).

As a dense subalgebra, we choose the \(\ast\)-subalgebra generated by finite products of the generators of \(\oaresolventalgebra(\sphilb{H},\sigma)\); in symbols, \[\oaresolventalgebra_{\txtfin} = \oastaralgebra \set{\prod^{\txtfin} \oaresolvent(z_j,f_j)}{z_j \in \fldcmp \setminus \fldreal, f_j \in X}.\] The \(\ast\)-subalgebra obtained by restricting \(\sphilb{H}\) to an arbitrary subspace \(\sphilb{D}\) is denoted specifically by \(\oaresolventalgebra_{\txtfin}(\sphilb{D},\sigma)\). In discussions of Bose-Einstein condensation for the free Bose gas or the van Hove model, the first argument may become lengthy and its boundary with the second argument may be unclear; in such cases we may write \(\oaresolvent(\lambda;f)\) using a semicolon as a separator between the arguments.

As is well known for ordinary resolvents, the resolvent is analytic in the first variable, and the same property holds for the general resolvent algebra. Using this, we obtain relations that extend \(\lambda \in \fldreal\) of the resolvent algebra to a complex variable \(z \in \fldcmp \setminus \imunit \fldreal\): \begin{align}
\oaresolvent(z,0)
&=
-\frac{\imunit}{z} \idone, \\ 
\faadj{\oaresolvent(z,f)}
&=
\oaresolvent(-\cmpconj{z},f), \\ 
\nu \oaresolvent(\nu z, \nu f)
&=
\oaresolvent(z,f),
\quad
\nu \in \fldmultiplicativegroup{\fldreal}, \\ 
\oaresolvent(z,f) - \oaresolvent(w,f)
&=
\imunit
(w - z)
\oaresolvent(z,f) \cdot \oaresolvent(w,f) \\ 
&=
\imunit
(w - z)
\oaresolvent(w,f) \cdot \oaresolvent(z,f), \\ 
\commutator{\oaresolvent(z,f)}{\oaresolvent(w,g)}
&=
\imunit
\sigma(f,g)
\oaresolvent(z,f)
\oaresolvent(w,g)^2
\oaresolvent(z,f), \\ 
\oaresolvent(z,f)
\oaresolvent(w,g)
&=
\oaresolvent(z+w, f+g)
\cdot
\rbkleft{\oaresolvent(z,f)} \\
&\quad\rbkright{+
\oaresolvent(w,g)
+\imunit \sigma(f,g) \oaresolvent(z,f)^2 \oaresolvent(w,g)} 
\end{align} These are also called the resolvent relations.

Let \(\oaresolventalgebra(X,\sigma)\) denote the resolvent algebra and let \(S\) be a subset of the symplectic space \(X\). A representation \(\oarepn \in \Rep(\oaresolventalgebra(X,\sigma), \sphilb{H}_{\oarepn})\) is called a regular representation on \(S\) if \(\Ker \oarepn(\oaresolvent(1,f)) = \setone{0}\) for all \(f \in S\). A state \(\oastate\) on the resolvent algebra is called a regular state if its GNS representation is a regular representation on \(X\).

\begin{prop}[\cite{BuchholzGrundling2}]\label{expedition0011838}
Let $\pairbk{X,\sigma}$ be a symplectic space of arbitrary dimension and let $S \subset X$ be a non-degenerate finite-dimensional subspace.
\begin{enumerate}
\item
The norms of the full resolvent algebra $\oaresolventalgebra(X,\sigma)$ and the subalgebra $\oaresolventalgebra(S,\sigma)$ coincide on the $\ast$-subalgebra $$\oastaralgebra \set{\oaresolvent(\lambda,f)}{f \in S, \lambda \in \fldreal \setminus \setone{0}}.$$
In particular, $\oaresolventalgebra(S,\sigma) \subset \oaresolventalgebra(X,\sigma)$ holds.

\item
The full resolvent algebra is the inductive limit of the net $\fml{\oaresolventalgebra(S,\sigma)}{X \subset S}$ over non-degenerate finite-dimensional subspaces $S \subset X$.

\item
Any regular representation of the full resolvent algebra $\oaresolventalgebra(X,\sigma)$ is faithful.
\end{enumerate}
In particular, the center of the full resolvent algebra is trivial.
\end{prop}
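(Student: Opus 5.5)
The plan is to reduce everything to the finite-dimensional case and then use the symplectic splitting supplied by a non-degenerate subspace. For a non-degenerate finite-dimensional \(S \subset X\) we have \(X = S \oplus S^{\sigma}\), where \(S^{\sigma}\) is the \(\sigma\)-complement. By the commutator relation \eqref{expedition0012052}, the resolvents \(\oaresolvent(\lambda,f)\) with \(f \in S\) commute with those having \(g \in S^{\sigma}\).

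\textbf{Finite-dimensional core.} First I show that on \(\oaresolventalgebra(S,\sigma)\) the universal norm coincides with the norm in the Schrödinger representation \(\pi_S\). A regular representation of \(\oaresolventalgebra(S,\sigma)\) yields a strongly continuous Weyl system: \(\pi(\oaresolvent(\lambda,f))\) is the resolvent \((\lambda + \imunit \phi(f))^{-1}\) of a self-adjoint field \(\phi(f)\). By the Stone--von Neumann uniqueness theorem it is therefore a multiple of \(\pi_S\), and so \(\norm{\pi(A)} = \norm{\pi_S(A)}\). The main obstacle is the non-regular representations, in which some \(\pi(\oaresolvent(1,f))\) has a kernel. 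For these I would prove \(\norm{\pi(A)} \le \norm{\pi_S(A)}\) by showing that every state of \(\oaresolventalgebra(S,\sigma)\) is a weak\(^*\)-limit of regular states. First I would try sequences of vector states \(\psi_n\) in \(\pi_S\) that escape to infinity in the direction of the singular field, for example by dilating a Gaussian so that \(\phi(f)\) has \(\abs{\phi(f)} \to \infty\) in distribution. On such sequences \(\oaresolvent(\lambda,f) \to 0\), which reproduces the expectation values of singular states. The general case would be handled by decomposing along the singular directions of \(S\). This step also shows that \(\pi_S\) is faithful on \(\oaresolventalgebra(S,\sigma)\).

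\textbf{Item 1.} The universal norm on \(\oaresolventalgebra(X,\sigma)\), restricted to \(\oastaralgebra \set{\oaresolvent(\lambda,f)}{f \in S}\), is at most the universal norm of \(\oaresolventalgebra(S,\sigma)\), because restricting representations of the larger algebra gives representations of the smaller one. For the reverse inequality, take any representation \(\pi\) of \(\oaresolventalgebra(S,\sigma)\) and tensor it with a regular representation \(\pi'\) of \(\oaresolventalgebra(S^{\sigma},\sigma)\), for instance the Fock representation. Then \(\oaresolvent(\lambda, f+g)\) with \(f \in S\), \(g \in S^{\sigma}\) is defined via the commuting fields, and the resolvent relations must be checked. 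The resulting representation of \(\oaresolventalgebra(X,\sigma)\) restricts to \(\pi \otimes \idone\) on the subalgebra, which gives equality of the two norms.

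\textbf{Items 2 and 3, and the centre.} Any finite set \(f_1,\dots,f_n\) lies in a non-degenerate finite-dimensional subspace, obtained by adjoining symplectic partners to the span. Hence \(\bigcup_S \oaresolventalgebra(S,\sigma)\) is a \(\ast\)-algebra containing all generators, so it is dense. By Item 1 the inclusions are isometric, which proves the inductive-limit statement. For Item 3, a regular representation \(\pi\) restricts on each \(\oaresolventalgebra(S,\sigma)\) to a regular representation, hence a multiple of \(\pi_S\), and it is isometric there by the core step. So \(\pi\) is isometric on a dense subalgebra and therefore faithful. Finally, let \(z\) be central. The Fock representation is regular, hence faithful, and \(\pi_{\txtfock}(z)\) commutes with every \(\pi_{\txtfock}(\oaresolvent(\lambda,f))\). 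It therefore commutes with the spectral projections of the fields and with the Weyl operators. Irreducibility of the Fock representation gives \(\pi_{\txtfock}(z) \in \fldcmp \idone\), and faithfulness then yields \(z \in \fldcmp \idone\).
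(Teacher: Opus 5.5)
The paper does not prove this proposition. It only quotes it from \cite{BuchholzGrundling2}, so your argument has to stand on its own. Its outer layer is fine: granted the finite-dimensional core (faithfulness of the Schr\"odinger representation $\pi_S$ on $\mathcal{R}(S,\sigma)$) and item~1 (see below), items~2 and~3 and the triviality of the centre follow as you describe.

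\textbf{The core step is not proved.} This step is the real content of the cited result. Saying that every state is a weak$^*$-limit of regular states is a reformulation, not a reduction. Regular states of $\mathcal{R}(S,\sigma)$ are exactly the normal states of $\pi_S$, so their weak$^*$-closure is precisely the set of states vanishing on $\ker\pi_S$. To prove the claim you first need a description of the non-regular representations, and this is missing. Here is the key lemma you would need.
\begin{itemize}
\item Take $\xi\in\ker\pi(R(\lambda,f))$ and apply $[R(\lambda,f),R(\mu,g)]=i\sigma(f,g)R(\lambda,f)R(\mu,g)^2R(\lambda,f)$ to it. This gives $\pi(R(\mu,g))\xi\in\ker\pi(R(\lambda,f))$, so the kernel projection is central in $\pi(\mathcal{R}(S,\sigma))''$.
\item Hence in a factor (e.g.\ irreducible) representation each $\pi(R(1,f))$ is either injective or zero.
\item The scaling and product relations then show that the regular directions form a subspace $X_R\subset S$.
\item $\pi$ kills the ideal generated by the $R(\lambda,f)$ with $f\notin X_R$.
\item $\pi$ acts by a character $R(\lambda,f)\mapsto(i\lambda-c(f))^{-1}$ on $X_R\cap X_R^{\sigma}$, and by a multiple of the Schr\"odinger representation on a non-degenerate complement of $X_R\cap X_R^{\sigma}$ in $X_R$.
\end{itemize}
Only with this list do you know which limits you must reach.

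A dilated Gaussian is not enough to reach them. Dilating one direction makes the conjugate direction concentrate, which produces a character there rather than a singular direction. You need normal states of $\pi_S$, adapted to a symplectic basis compatible with $X_R$, that do three things at once: keep the non-degenerate part of $X_R$ fixed, concentrate $X_R\cap X_R^{\sigma}$ at $c$, and send every $f\notin X_R$ to infinity simultaneously. Convergence must also be shown on all of $\mathcal{R}(S,\sigma)$, not just on generators. To see that the limit state vanishes on the ideal above, you must control $\pi_S(R(\lambda,f))\pi_S(A)\psi_n$ for every $A$, not only $\pi_S(R(\lambda,f))\psi_n$.

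\textbf{Item~1 has a separate problem.} You tensor an \emph{arbitrary} representation $\pi$ of $\mathcal{R}(S,\sigma)$ with a regular representation of $\mathcal{R}(S^{\sigma},\sigma)$. For non-regular $\pi$ there is no field $\phi(f)$ in the singular directions, so $\tilde\pi(R(\lambda,f+g))$ is not defined ``via the commuting fields''. Checking the resolvent relations for an ad hoc definition, with the kernel of $\pi(R(1,f))$ playing the role of a field at infinity, would be real work that you leave open.

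That work is also unnecessary. Once the core step is available, restrict the Fock representation of $\mathcal{R}(X,\sigma)$ to the generators with $f\in S$. This restriction is regular, hence a multiple of $\pi_S$ by Stone--von Neumann, so $\|A\|_{\mathcal{R}(X,\sigma)}\ge\|\pi_S(A)\|=\|A\|_{\mathcal{R}(S,\sigma)}$. Together with your restriction inequality, this gives item~1.

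Two smaller points:
\begin{itemize}
\item The passage from a regular resolvent representation to a Weyl system requires extracting the linearity of $f\mapsto\phi(f)$ and the Weyl relations from the resolvent relations.
\item Item~2 and your centre argument use non-degeneracy of $\sigma$ on all of $X$, which the statement leaves implicit.
\end{itemize}
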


\subsection{Setup for Finite Temperature}\label{expedition0011278}

We again use the setup adopted in \cite{YoshitsuguSekine004}. Setting the closed interval of side length \(L > 0\) as \(I_L = \closedinterval{-\frac{L}{2}}{\frac{L}{2}}\), the Hilbert space of single-particle states moving in the hypercube \(I_{L}^{d}\) is defined as \[\fun{\lp^{2}}{I_{L}^{d}}
=
\set{f \in \fun{\mblfn_{\txtborel}}{I_{L}^{d};\fldcmp}}
{\int_{I_{L}^{d}} \abs{f(x)}^2 \opdmsr{x} < \infty}\] and the momentum-space lattice for the bounded system is \[\setlattice_L
=
\frac{2 \pi}{L} \ringratint 
=
\set{\frac{2 \pi}{L} n}{n \in \ringratint}.\] We impose periodic boundary conditions on \(I_{L}^{d}\) for the single-particle Hamiltonian or dispersion relation \(\omega\). For a positive real number \(\smchemicalpotential > 0\), we set \(\physham_{\txtbsn,\txtfr}(\smchemicalpotential) = \fun{\opfocksndqntdiff_{\txtbsn}}{\omega - \smchemicalpotential}\), and as an operator on the bosonic Fock space \(\fun{\spfock_{\txtbsn}}{\fun{\lp^{2}}{I_L^d}}\), we define the Hamiltonian of the van Hove model in the bounded system with chemical potential as \[\physham_{\txtvanhowe,I_{L}^{d},\kappa,\Lambda}(\smchemicalpotential)
=
\physham_{\txtbsn,\txtfr}(\smchemicalpotential)
+\fun{\opfocksegal_{\txtfock}}{\omega \mathsf{m}_{\kappa,\Lambda}}.\] By the Kato--Rellich theorem, this is a self-adjoint operator bounded from below. The automorphism group \(\alpha_{\txtvanhowe,I_{L}^{d},\kappa,\Lambda,\smchemicalpotential}\) generated by this Hamiltonian is given by \[\alpha_{\txtvanhowe,I_{L}^{d},\kappa,\Lambda,\smchemicalpotential,t}
=
\Ad U_{I_{L}^{d},\kappa,\Lambda,\smchemicalpotential,t},
\quad
U_{I_{L}^{d},\kappa,\Lambda,\smchemicalpotential,t}
=
\napiernum^{\imunit t \physham_{\txtvanhowe,I_{L}^{d},\kappa,\Lambda}(\smchemicalpotential)}.\] For inverse temperature \(\sminvtemperature > 0\), the density operator \(\opdmat_{\txtvanhowe,I_{L}^{d},\kappa,\Lambda,\sminvtemperature,\smchemicalpotential}\) and the grand partition function \(\smgrandpartitionfunc_{\txtvanhowe,I_{L}^{d},\kappa,\Lambda,\sminvtemperature,\smchemicalpotential}\) are defined by \begin{equation}
\begin{aligned}
\opdmat_{\txtvanhowe,I_{L}^{d},\kappa,\Lambda,\sminvtemperature,\smchemicalpotential}
&=
\frac{1}
{\smgrandpartitionfunc_{\txtvanhowe,I_{L}^{d},\kappa,\Lambda,\sminvtemperature,\smchemicalpotential}}
\napiernum^{-\sminvtemperature \physham_{\txtvanhowe,I_{L}^{d},\kappa,\Lambda}(\smchemicalpotential)},
\\ 
\smgrandpartitionfunc_{\txtvanhowe,I_{L}^{d},\kappa,\Lambda,\sminvtemperature,\smchemicalpotential}
&=
\sqfun{\trace}{\napiernum^{-\sminvtemperature
\physham_{\txtvanhowe,I_{L}^{d},\kappa,\Lambda}(\smchemicalpotential)}}
\end{aligned}
\end{equation} For the infinite system, we use the same notation with \(I_{L}^{d}\) removed, and define analogously. When the chemical potential is \(0\), we simply drop the chemical potential notation.

For the local density operator \[\smlocaldensityoperator_{\sminvtemperature,\smchemicalpotential}
=
\frac{1}{\napiernum^{\sminvtemperature (\omega - \smchemicalpotential)} - 1},\] the operator \(K_{\sminvtemperature,\smchemicalpotential}\) is defined as \[K_{\sminvtemperature,\smchemicalpotential}
=
2 \smlocaldensityoperator_{\sminvtemperature,\smchemicalpotential} + 1
=
\frac{1 + \napiernum^{-\sminvtemperature (\omega - \smchemicalpotential)}}
{1 - \napiernum^{-\sminvtemperature (\omega - \smchemicalpotential)}}.\] When the chemical potential is \(0\), we simply drop the chemical potential notation and write \(\smlocaldensityoperator_{\sminvtemperature}\) and \(K_{\sminvtemperature}\). In the system with chemical potential set to \(0\), the algebra of observables is constrained. In particular, the single-particle subspace presupposing the system with infrared and ultraviolet divergences removed is \[\sphilb{D}_{\txtirsingular,\sminvtemperature}
=
\dom \mathsf{m} \cap \sphilb{D}_{\sminvtemperature}.\]

Under the above setup, the algebra of observables as a Weyl algebra is \[\oaweyl_{\txtirsingular,\sminvtemperature}
=
\oaweyl(\sphilb{D}_{\txtirsingular,\sminvtemperature})
=
\oacstar
\set{\opfockweyl(f)}
{f \in \sphilb{D}_{\txtirsingular,\sminvtemperature}}.\]

\subsection{Quantities Related to Bose-Einstein Condensation}\label{quantities-related-to-bose-einstein-condensation}

Let \(\smnumberdensity_0(\sminvtemperature)\) denote the condensate density at inverse temperature \(\sminvtemperature > 0\). Setting the non-closed non-negative symmetric bilinear form corresponding to the condensate component as \[\opform{q}_{0}(f)
=
2 (2 \pi)^d \smnumberdensity_0(\sminvtemperature)
\abs{\faftr{f}(0)}^{2},
\quad
\opformdomain(\opform{q}_{0})
=
\fun{\lp^{1}}{\fldreal^{d}}
\cap
\fun{\lp^{2}}{\fldreal^{d}},\] we define the subspace \(\sphilb{D}_{0,\sminvtemperature,\smchemicalpotential} = \opformdomain(\opform{q}_0) \cap \sphilb{H}_{\sminvtemperature,\smchemicalpotential}\). When the value of the chemical potential \(\smchemicalpotential\) has no particular meaning, or when \(\smchemicalpotential = 0\), we drop the chemical potential subscript from each of the above objects. Furthermore, for any \(f \in \sphilb{D}_{0,\sminvtemperature}\) we define the sesquilinear form \[\opform{q}_{\txtbec}(f)
=
\opform{q}_{0}(f)
+\opform{q}_{\txtnonzero,\sminvtemperature}(f).\]

\subsection{Theorems}\label{theorems}

The main theorem concerns the existence of the ground state and equilibrium state.

\begin{thm}
The van Hove model admits a ground state and a $\sminvtemperature$-KMS state for every inverse temperature $\sminvtemperature > 0$.
These persist even after removing the infrared and ultraviolet cutoffs.
\end{thm}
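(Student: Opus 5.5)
The plan is to reduce everything to the free Bose field via the Bogoliubov (Weyl) transformation and then define the limiting states directly on the constrained observable algebra. For the cutoff source $\varrho_{\kappa,\Lambda}$, set $\mathsf{m}_{\kappa,\Lambda} = \varrho_{\kappa,\Lambda}/\omega^{3/2}$, which lies in $\sphilb{H}$ when $\kappa>0$. Completing the square gives
\[
\physham_{\txtvanhowe,\kappa,\Lambda}
=
\opfockweyl_{\txtfock}(g_{\kappa,\Lambda})\,
\physham_{\txtbsn,\txtfr}\,
\faadj{\opfockweyl_{\txtfock}(g_{\kappa,\Lambda})}
+ E_{\kappa,\Lambda},
\qquad
E_{\kappa,\Lambda} = -\onehalf \norm{\varrho_{\kappa,\Lambda}/\omega}^2 ,
\]
where $g_{\kappa,\Lambda}$ is a fixed multiple of $\imunit\,\mathsf{m}_{\kappa,\Lambda}$; the constant is chosen so that $\faadj{\opfockweyl_{\txtfock}(g)}\,\opfockan_{\txtfock}(h)\,\opfockweyl_{\txtfock}(g)$ is $\opfockan_{\txtfock}(h)$ shifted by $\bkt{h}{\varrho/\sqrt{2\omega}}$. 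With cutoffs in place, the ground state is therefore the coherent vector $\opfockweyl_{\txtfock}(g_{\kappa,\Lambda})\opfockvac$. On the Weyl algebra, the dynamics is the free Bogoliubov automorphism $\opfockweyl(f)\mapsto \opfockweyl(\napiernum^{\imunit t\omega}f)$ composed with a phase $\napiernum^{\imunit\,\opimag\bkt{f}{(\napiernum^{\imunit t\omega}-1)\mathsf{m}_{\kappa,\Lambda}}}$, up to normalization. At finite volume and temperature, the same unitary conjugation gives the Gibbs state
\[
\oastate_{\sminvtemperature}(\opfockweyl(f))
=
\napiernum^{-\oneoverfour \bkt{f}{K_{\sminvtemperature} f}}\,
\napiernum^{\imunit\,\mathrm{c}\,\opimag\bkt{f}{\mathsf{m}_{\kappa,\Lambda}}},
\]
where $\mathrm{c}$ is a fixed constant. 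This formula follows from the standard quasi-free Gibbs state of the free gas, and the shifted state is automatically $\sminvtemperature$-KMS for the shifted dynamics, since KMS is preserved under conjugation by the inner automorphism $\Ad\opfockweyl(g)$.

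Next we remove the cutoffs at the level of states rather than vectors. The energy shift $E_{\kappa,\Lambda}$ diverges linearly as $\Lambda\to\infty$ for $\varrho=\diracdelta_0$, but a constant does not affect the automorphism group, so UV renormalization is just discarding it. For the remaining data we only need $\bkt{f}{\mathsf{m}_{\kappa,\Lambda}}\to\bkt{f}{\mathsf{m}}$ with $\mathsf{m}=(2\pi)^{-d/2}\omega^{-3/2}$. This holds by dominated convergence for all $f\in\dom\mathsf{m}$, i.e.\ for $\mathsf{m}\faftr f\in L^1$, which is exactly why the observable algebra is restricted to $\sphilb{D}_{\txtirsingular,\sminvtemperature}$. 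For such $f$ we define the limit state $\oastate(\opfockweyl(f))=\lim_{\kappa\to0,\Lambda\to\infty}\oastate_{\kappa,\Lambda}(\opfockweyl(f))$. The limit of positive normalized functionals is positive and normalized on $\oaweyl(\sphilb{D}_{\txtirsingular,\sminvtemperature})$, because positive-definiteness of the generating function passes to pointwise limits. Similarly, $\alpha_t$ converges pointwise in norm on generators to a well-defined automorphism group of $\oaweyl_{\txtirsingular,\sminvtemperature}$. Here one must check that $\napiernum^{\imunit t\omega}$ preserves $\sphilb{D}_{\txtirsingular,\sminvtemperature}$, which holds since it commutes with multiplication by $\mathsf{m}$ and with $K_{\sminvtemperature}$. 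The KMS condition, and for $\sminvtemperature=\infty$ the ground state condition (positivity of the generator in the GNS representation), is checked on products of Weyl generators. Each side is an explicit Gaussian-times-phase expression whose analytic continuation in $t$ is controlled uniformly in the cutoffs. The boundary values converge, so the KMS relation for the cutoff states passes to the limit; the infinite-volume limit $L\to\infty$ is handled the same way via the Riemann-sum convergence of $\bkt{f}{K_{\sminvtemperature}f}$.

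The main obstacle is the infrared step in the presence of the zero-mode and possible BEC. As $\kappa\to0$, the vector $\opfockweyl_{\txtfock}(g_{\kappa,\Lambda})\opfockvac$ has no Fock-space limit since $\mathsf{m}\notin\sphilb{H}$, so the limiting state is not Fock-normal; we must show it is nevertheless regular and well defined on the constrained algebra. For finite temperature one must also check that $K_{\sminvtemperature}\sim 2/(\sminvtemperature\omega)$ keeps $\bkt{f}{K_{\sminvtemperature}f}$ finite on $\sphilb{D}_{\sminvtemperature}$. If a condensate is present (for $d=3$ below the critical temperature, with $\smchemicalpotential\to0$), we would add the term $\opform{q}_0$ to the exponent, giving a limit state whose generating function involves $\opform{q}_{\txtbec}$. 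This state is still KMS because $\opform{q}_0(\napiernum^{\imunit t\omega}f)=\opform{q}_0(f)$. Finally, to transfer the result to the resolvent algebra, I would use that these states are regular and invoke Proposition~\ref{expedition0011838} to lift them to states on $\oaresolventalgebra(\sphilb{D}_{\txtirsingular,\sminvtemperature},\sigma)$ with the same dynamics.
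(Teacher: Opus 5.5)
Your outline is correct and follows essentially the paper's Weyl-algebra route. Conjugation by $\opfockweyl_{\txtfock}(\imunit\mathsf{m}_{\kappa,\Lambda})$ reduces everything to the free field, the ground and Gibbs states become explicit Gaussian-times-phase functionals on Weyl generators, and the cutoffs are removed on $\oaweyl(\dom\mathsf{m})$, resp.\ $\oaweyl_{\txtirsingular,\sminvtemperature}$, using only $\mathsf{m}_{\kappa,\Lambda}(f)\to\mathsf{m}(f)$. The one real difference is that you lift regular Weyl states to the resolvent algebra, whereas the paper builds the resolvent-algebra states directly from Laplace-transformed two-point functions.

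Two corrections are needed. First, with $g\propto\imunit\mathsf{m}_{\kappa,\Lambda}$ the phases are real parts: $\alpha_t(\opfockweyl(f))=\napiernum^{\imunit\opreal\bkt{\mathsf{m}_{\kappa,\Lambda}}{(\napiernum^{\imunit t\omega}-1)f}}\opfockweyl(\napiernum^{\imunit t\omega}f)$ and $\oastate_{\sminvtemperature}(\opfockweyl(f))=\napiernum^{-\imunit\opreal\bkt{\mathsf{m}_{\kappa,\Lambda}}{f}-\frac{1}{4}\opform{q}_{\txtnonzero,\sminvtemperature}(f)}$. Your $\opimag$-phases correspond to a shift by $\mathsf{m}$ rather than $\imunit\mathsf{m}$, and with them the state is not even invariant under your dynamics. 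Second, the finite-volume Gibbs state exists only for $\smchemicalpotential<0$, since $0\in\setlattice_L^d$ with $\omega(0)=0$. So $\smchemicalpotential\uparrow0$ has to be taken after $L\to\infty$, which is precisely where $\opform{q}_0$ enters.
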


The operator-theoretic argument for the ground state is discussed in detail in Arai's textbook \cite{AsaoArai26}. The reference \cite{LorincziHiroshimaBetz3} mentions only briefly the treatment via the functional integral approach. Here we discuss both the \(Q\)-space representation argument and the Ornstein--Uhlenbeck representation. The equilibrium state is discussed in both the Weyl algebra and resolvent algebra frameworks, as well as via the functional integral approach.

The following estimate holds for the ground-state energy. This is also mentioned in \cite{AsaoArai26}.

\begin{thm}
Let the source be $\rho = \diracdelta_0$.
Then the removal of the ultraviolet cutoff gives rise to energy renormalization.
In particular, the self-energy term diverges linearly to $-\infty$, and upon renormalization a Coulomb potential emerges.
\end{thm}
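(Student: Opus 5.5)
We plan to compute the ground-state energy of the cutoff Hamiltonian $\physham_{\txtvanhowe}=\physham_{\txtbsn,\txtfr}+\opfocksegal_{\txtfock}(\varrho_{\kappa,\Lambda}/\sqrt{\omega})$ explicitly by completing the square. In momentum space the cutoff point source is $\faftr{\varrho}_{\kappa,\Lambda}(k)=(2\pi)^{-d/2}\fndef{\kappa\le\abs{k}\le\Lambda}(k)$. We set $v_{\kappa,\Lambda}=\faftr{\varrho}_{\kappa,\Lambda}/\sqrt{2}\,\omega^{3/2}$; for $\kappa>0$ and $\Lambda<\infty$ this lies in $\dom\omega\cap \sphilb{H}$. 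Then the Weyl operator $\opfockweyl_{\txtfock}$ associated with $v_{\kappa,\Lambda}$ (suitably multiplied by $\imunit$ so that it shifts $\opfockan_{\txtfock}(f)\mapsto \opfockan_{\txtfock}(f)-\bkt{f}{\omega v}$) implements the Bogoliubov transformation. This gives the operator identity
\[
\physham_{\txtvanhowe}
=
U^{\ast}\rbk{\physham_{\txtbsn,\txtfr}+E_{\kappa,\Lambda}}U,
\qquad
E_{\kappa,\Lambda}
=
-\frac{1}{2}\int_{\fldreal^d}\frac{\abs{\faftr{\varrho}_{\kappa,\Lambda}(k)}^2}{\omega(k)^2}\opdmsr{k}.
\]
We verify it first on the finite-particle subspace, using the canonical commutation relations and the standard formula $\opfockweyl_{\txtfock}(f)\opfockan_{\txtfock}(g)\opfockweyl_{\txtfock}(f)^{\ast}=\opfockan_{\txtfock}(g)-\tfrac{\imunit}{\sqrt2}\bkt{g}{f}$. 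We then extend it by essential self-adjointness, citing Arai's treatment. Consequently $\inf\sigma(\physham_{\txtvanhowe})=E_{\kappa,\Lambda}$, with ground state $U^{\ast}\opfockvac$.

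Next we evaluate $E_{\kappa,\Lambda}$ for $d=3$, $\omega=\abs{k}$:
\[
E_{\kappa,\Lambda}=-\frac{4\pi}{2(2\pi)^3}\int_\kappa^\Lambda \opdmsr{r}=-\frac{\Lambda-\kappa}{4\pi^2}.
\]
This tends to $-\infty$ linearly as $\Lambda\to\infty$ and stays finite as $\kappa\to0$. The renormalized Hamiltonian $\physham_{\txtvanhowe}-E_{\kappa,\Lambda}$ is unitarily equivalent to $\physham_{\txtbsn,\txtfr}$, so the dynamics $\Ad \napiernum^{\imunit t(\physham_{\txtvanhowe}-E)}$ acting on $\opfockweyl(f)$ converges. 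Explicitly, it equals $\opfockweyl(\napiernum^{\imunit t\omega}f)$ multiplied by a phase $\napiernum^{\imunit\,\mathrm{Re}\bkt{\ldots}}$ involving $\bkt{f}{(1-\napiernum^{-\imunit t\omega})v}$. This phase converges as $\Lambda\to\infty$ for $f$ in a suitable dense domain, which is where the constraint $\dom\mathsf m$ enters at $\kappa\to0$.

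For the Coulomb potential we take two sources at $x_1,x_2$, i.e. $\faftr{\varrho}=(2\pi)^{-3/2}(\napiernum^{-\imunit k x_1}+\napiernum^{-\imunit kx_2})\fndef{\kappa\le\abs{k}\le\Lambda}$. The same formula then gives $E=2E^{\txtself}_{\kappa,\Lambda}+V_{\kappa,\Lambda}(x_1-x_2)$, where the cross term is
\[
V_{\kappa,\Lambda}(x)=-\frac{1}{(2\pi)^3}\int_{\kappa\le\abs{k}\le\Lambda}\frac{\cos(k\cdot x)}{\abs{k}^2}\opdmsr{k}
=-\frac{1}{2\pi^2\abs{x}}\int_{\kappa\abs{x}}^{\Lambda\abs{x}}\frac{\sin t}{t}\opdmsr{t}.
\]
As $\kappa\to0$ and $\Lambda\to\infty$ this tends to $-\tfrac{1}{4\pi\abs{x}}$ by $\int_0^\infty \frac{\sin t}{t}\opdmsr{t}=\pi/2$. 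Subtracting the self-energies therefore leaves an attractive Coulomb (Yukawa-free) potential.

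The main obstacle is the rigorous justification of the unitary equivalence and of the limits. The Weyl operator exists only while $v\in\sphilb{H}$, which requires $\kappa>0$: indeed $\int\abs{k}^{-3}\fndef{}\opdmsr{k}$ diverges logarithmically at $0$. The limit $\kappa\to0$ must therefore be taken at the level of energies and of the automorphism group, not of the dressing unitary. We handle this by stating the energy convergence directly from the explicit integrals, which are uniform in $\kappa$. The dynamics convergence we obtain by dominated convergence on the phase, restricting $f$ to $\dom\mathsf m$.
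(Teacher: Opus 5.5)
Your proposal is correct and is essentially the paper's argument. The ground-state energy $-\tfrac12\|\omega^{-1}\varrho_{\kappa,\Lambda}\|^2$ comes from the Weyl-operator dressing (as in the paper's Weyl-algebra section, following Arai), and it splits into the self-energy $-(\Lambda-\kappa)/(4\pi^2)$ plus a cross term that reduces to the sine integral with limit $-1/(4\pi|x|)$; the paper phrases this for an $N$-point cluster and adds explicit error bounds of order $\kappa$ and $1/(\Lambda r_0^2)$. One harmless slip: with $v=\widehat{\varrho}/(\sqrt{2}\,\omega^{3/2})$ the dressing must shift $a_{\mathrm{F}}(f)$ by $\langle f,v\rangle$, not $\langle f,\omega v\rangle$ (otherwise the linear term does not cancel), though your resulting energy formula is right.
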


We also briefly address BEC in the equilibrium state. Since the argument reduces via the Bogoliubov transformation to one essentially equivalent to that for the free field, it is largely covered by \cite{AsaoArai28,YoshitsuguSekine004}. In particular, there is an argument directly related to the no-go theorem of BEC for quasiparticles, which is discussed in \cite{YoshitsuguSekine006}.

\section{Setup for the Van Hove Model}\label{expedition0011236}

The source \(\varrho\) representing the interaction term is the Dirac delta function concentrated at the origin, i.e., \(\varrho = \diracdelta_0\), and the source \(\varrho_{\kappa,\Lambda}\) with infrared and ultraviolet cutoffs added is defined by \[\faftr{\varrho_{\kappa,\Lambda}}(k)
=
\faftr{\varrho}(k) \fndef{\kappa \leq \abs{k} \leq \Lambda}(k),
\quad
\faftr{\varrho}(k)
=
\faftr{\diracdelta_0}(k)
=
1.\] In particular, when the infrared cutoff is removed by setting \(\kappa = 0\) we write \(\varrho_{\Lambda}\), and when the ultraviolet cutoff is removed by setting \(\Lambda = \infty\) we write \(\varrho_{\kappa}\).

For any \(\kappa > 0\), the infrared regularization condition \(\varrho_{\kappa,\Lambda} \in \dom \omega^{-\frac{3}{2}}\), which amounts to \[\int_{\greuctr{k}{3}}
\frac{\abs{\faftr{\varrho_{\kappa,\Lambda}}(k)}^2}{\omega(k)^3}
\opdmsr{k}
=
\int_{\kappa \leq \abs{k} \leq \Lambda}
\frac{1}{\omega(k)^3}
\opdmsr{k}
<
\infty,\] and the convergence in the topology of the space of tempered distributions \(\fun{\dsttempered}{\fldreal^{3}}\), \[\lim_{\kappa \to 0, \Lambda \to \infty}
\varrho_{\kappa,\Lambda}
=
\varrho,\] both hold. In particular, the cutoff source satisfies the condition corresponding to \(\varrho \in \dom \omega^{-1}\) around the origin, namely the integrability condition for every \(\kappa > 0\): \[\int_{\abs{k} \leq \kappa}
\frac{\abs{\faftr{\varrho_{\kappa,\Lambda}}(k)}^2}{\omega(k)^2}
\opdmsr{k}
<
\infty.\]

We define the mean functionals \(\mathsf{m}_{\kappa,\Lambda}\), \(\mathsf{m}\) and the domain \(\dom \mathsf{m}\) using the source as follows.

\begin{defn}\label{expedition0011100}
In order to make sense as functions in momentum space via the Fourier transform, we define $$\mathsf{m}_{\kappa,\Lambda}
=
\omega^{-\frac{3}{2}} \varrho_{\kappa,\Lambda},
\quad
\mathsf{m}
=
\omega^{-\frac{3}{2}} \varrho,$$
and we also use the same symbols for the mean functionals they induce.
In particular,
\begin{equation}
\begin{aligned}
\mathsf{m}_{\kappa,\Lambda}(f)
&=
\bkt{\omega^{-\frac{3}{2}} \varrho_{\kappa,\Lambda}}{f}_{\sphilb{H}}, \\
\mathsf{m}(f)
&=
\int_{\greuctr{k}{3}}
\frac{\cmpconj{\faftr{\varrho}(k)} \faftr{f}(k)}{\omega(k)^{\frac{3}{2}}}
\opdmsr{k}
\end{aligned}
\end{equation}
the domain of the former is all of $\sphilb{H}$, and the domain of the latter is $\dom \mathsf{m}$.
\end{defn}

Since the \(\lp^{2}\)-norm of the function \(\mathsf{m}\) without cutoffs is \[\norm{\mathsf{m}}_{\fun{\lp^{2}}{\fldreal^{d}}}^2
=
\int_{\greuctr{k}{3}}
\frac{\abs{\faftr{\varrho}(k)}^2}{\omega(k)^3}
\opdmsr{k},\] the integrability of \(\mathsf{m}\) as a function can also be used to diagnose infrared singularities.

As discussed in the operator-theoretic formulation of \cite{AsaoArai26} (using different notation), the square-integrability of the function \(\mathsf{m}\) around the origin of momentum space is a property that controls the infrared divergence. With an infrared cutoff (and ultraviolet cutoff), this square-integrability is always guaranteed; otherwise, when an infrared divergence occurs, the condition that constrains the observable physical quantities is encoded in the definition of the subspace \(\dom \mathsf{m}\).

\begin{rem}
The better the properties of the source $\varrho$, the larger $\dom \mathsf{m}$ can be taken, increasing the number of observable physical quantities; conversely, the worse the properties of $\varrho$, the smaller $\dom \mathsf{m}$ becomes, reducing the number of physical quantities that can be observed without being affected by the divergence.
As the notation suggests, this is also influenced by $\omega$ and by the spatial dimension.
We consider the domain of the functional $\mathsf{m}$ that appears in the definition of the automorphism group.

If the source $\varrho$ is a rapidly decreasing function and $\omega(k) = \abs{k}$, then in the infrared regime there is no singularity around the origin and it is not necessary to incorporate any constraint into $\dom \mathsf{m}$.
This follows from the $\abs{k}^{-\frac{3}{2}}$ in the numerator and the $k^2$ from the Jacobian of the change of variables.
Even in the ultraviolet regime, the rapid decay of $\varrho$ controls the divergence at infinity, so again no constraint on $\dom \mathsf{m}$ is needed.

If the dispersion relation is $\omega(k) = \abs{k}$ and $\varrho$ is a point source, then, as before, the infrared regime does not affect $\dom \mathsf{m}$.
However, in the ultraviolet regime, to suppress the divergence arising from $\abs{k}^{\onehalf}$, an ultraviolet integrability condition such as $$\int_{\abs{k} \geq 1}
\abs{k}^{\onehalf}
\faftr{f}(k)
\opdmsr{k}
<
\infty$$
is required, which does constrain $\dom \mathsf{m}$.

Next, suppose $\varrho$ is a point source and the dispersion relation is $\omega(k) = k^{2+\delta}$ for some $\delta > 0$.
In this case, in the infrared regime, after the $\abs{k}^{-3}$ in the numerator is partly cancelled by the Jacobian, a residual $\abs{k}^{-1}$ remains, and $\dom \mathsf{m}$ is constrained in order to avoid logarithmic divergence.
In particular, behavior of the form $\faftr{f}(k) = \fun{O}{\abs{k}^{1+\delta}}$ as $k \to 0$ is required, which in particular demands $\faftr{f}(0) = 0$.
Conversely, in the ultraviolet regime, the dispersion contributes as a damping term, thereby relaxing the constraints on elements of $\dom \mathsf{m}$.
\end{rem}

We define auxiliary functionals \(\mathsf{M}_{\kappa,\Lambda,t}\) and \(\mathsf{M}_{t}\) for notational conciseness.

\begin{defn}\label{expedition0011628}
As auxiliary functionals, we define
\begin{equation}
\begin{aligned}
\mathsf{M}_{\kappa,\Lambda,t}(f)
&=
\opreal \fun{\mathsf{m}_{\kappa,\Lambda}}{\rbk{\napiernum^{\imunit t \omega} - \idone} f}, \\
\mathsf{M}_{t}(f)
&=
\opreal \fun{\mathsf{m}}{\rbk{\napiernum^{\imunit t \omega} - \idone} f}.
\end{aligned}
\end{equation}
\end{defn}

\begin{prop}\label{expedition0011238}
The auxiliary functionals $\mathsf{M}_{\kappa,\Lambda,t}$ and $\mathsf{M}_{t}$ satisfy the cocycle condition
\begin{equation}
\begin{aligned}
\mathsf{M}_{\kappa,\Lambda,t+s}(f)
&=
\mathsf{M}_{\kappa,\Lambda,s}(\napiernum^{\imunit t \omega} f) + \mathsf{M}_{\kappa,\Lambda,t}(f), \\
\mathsf{M}_{t+s}(f)
&=
\mathsf{M}_{s}(\napiernum^{\imunit t \omega} f) + \mathsf{M}_{t}(f).
\end{aligned}
\end{equation}
\end{prop}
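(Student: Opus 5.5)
This is a direct computation from Definition \ref{expedition0011628}, using only two properties of \(\mathsf{m}_{\kappa,\Lambda}\) and \(\mathsf{m}\). The first is that they are (conjugate-)linear functionals: \(\mathsf{m}_{\kappa,\Lambda}(f) = \bkt{\omega^{-\frac{3}{2}}\varrho_{\kappa,\Lambda}}{f}\) is linear in \(f\), and so is the integral defining \(\mathsf{m}\). The second is the group law \(\napiernum^{\imunit (t+s)\omega} = \napiernum^{\imunit s\omega}\napiernum^{\imunit t\omega}\) for the unitary group generated by \(\omega\); in momentum space this is simply multiplication by \(\napiernum^{\imunit t\abs{k}}\). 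Taking real parts commutes with addition, so the argument for \(\mathsf{M}_{\kappa,\Lambda,t}\) and for \(\mathsf{M}_t\) is identical.

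The key step is the telescoping identity
\[
\rbk{\napiernum^{\imunit (t+s)\omega} - \idone} f
=
\rbk{\napiernum^{\imunit s\omega} - \idone}\napiernum^{\imunit t\omega} f
+\rbk{\napiernum^{\imunit t\omega} - \idone} f .
\]
To see it, expand the right-hand side: it equals \(\napiernum^{\imunit s\omega}\napiernum^{\imunit t\omega} f - \napiernum^{\imunit t\omega}f + \napiernum^{\imunit t\omega}f - f\), and the middle terms cancel. I then apply \(\mathsf{m}_{\kappa,\Lambda}\) (respectively \(\mathsf{m}\)) to both sides, use linearity, and take real parts. This gives \(\mathsf{M}_{\kappa,\Lambda,t+s}(f) = \mathsf{M}_{\kappa,\Lambda,s}(\napiernum^{\imunit t\omega}f) + \mathsf{M}_{\kappa,\Lambda,t}(f)\), and the same identity for \(\mathsf{M}\).

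The only point needing care, which I expect to be the (mild) main obstacle, is domains in the uncut case. Each term must be defined, i.e. \(f\), \(\napiernum^{\imunit t\omega}f\), and the differences must lie in \(\dom \mathsf{m}\).

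\begin{itemize}
\item In the cutoff case there is nothing to check: \(\mathsf{m}_{\kappa,\Lambda} \in \sphilb{H}\), since \(\int_{\kappa\le\abs{k}\le\Lambda}\abs{k}^{-3}\opdmsr{k}<\infty\), so \(\mathsf{m}_{\kappa,\Lambda}\) is defined on all of \(\sphilb{H}\).
\item In the uncut case, \(\dom\mathsf{m}\) is defined by absolute integrability of \(\abs{k}^{-\frac{3}{2}}\faftr{f}(k)\) against \(\faftr{\varrho}=1\). Since \(\abs{\napiernum^{\imunit t\abs{k}}}=1\), it is invariant under \(\napiernum^{\imunit t\omega}\). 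It is also a linear subspace.
\end{itemize}

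Hence all terms are well defined for \(f \in \dom\mathsf{m}\), and the identity holds pointwise in \(k\) under the integral, so the integrals split additively.
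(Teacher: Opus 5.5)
Your proposal is correct and is essentially the paper's proof. The paper also splits $\rbk{\napiernum^{\imunit (t+s)\omega}-\idone}f$ as $\rbk{\napiernum^{\imunit s\omega}-\idone}\napiernum^{\imunit t\omega}f+\rbk{\napiernum^{\imunit t\omega}-\idone}f$ and uses additivity of $\opreal\bkt{\mathsf{m}}{\cdot}$. Your check that $\dom\mathsf{m}$ is a linear subspace invariant under $\napiernum^{\imunit t\omega}$ fills in a domain detail that the paper leaves implicit.
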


\begin{proof}
Since the two cases are analogous, we work with $\mathsf{M}_{t}(f)$ for notational simplicity.
A direct computation gives
\begin{equation}
\begin{aligned}
&\mathsf{M}_{t+s}(f)
=
\opreal \bkt{\mathsf{m}}{\rbk{\napiernum^{\imunit (t+s) \omega} - 1} f} \\
&=
\opreal \bkt{\mathsf{m}}{\rbk{\napiernum^{\imunit s \omega} - 1} \napiernum^{\imunit t \omega} f}
+\opreal \bkt{\mathsf{m}}{\rbk{\napiernum^{\imunit t \omega} - 1} f} \\
&=
\mathsf{M}_{s}(\napiernum^{\imunit t \omega} f)
+\mathsf{M}_{t}(f).
\end{aligned}
\end{equation}
\end{proof}

We now define the concrete van Hove model on Fock space with infrared and ultraviolet cutoffs as \[\physham_{\txtvanhowe,\kappa,\Lambda}
=
\physham_{\txtbsn,\txtfr}
+\fun{\opfocksegal_{\txtfock}}{\frac{\varrho_{\kappa,\Lambda}}{\omega^{\onehalf}}}
=
\physham_{\txtbsn,\txtfr}
+\fun{\opfocksegal_{\txtfock}}{\omega \mathsf{m}_{\kappa,\Lambda}}.\] By the Kato--Rellich theorem, this is also a self-adjoint operator bounded from below.

\begin{prop}\label{expedition0011094}
For any $t \in \fldreal$, defining the self-map $\alpha_{\txtvanhowe,\kappa,\Lambda,t}$ of the abstract Weyl algebra by $$\alpha_{\txtvanhowe,\kappa,\Lambda,t}(\opfockweyl(f))
=
\napiernum^{\imunit \mathsf{M}_{\kappa,\Lambda,t}(f)}
\opfockweyl(\napiernum^{\imunit t \omega} f)$$
yields an automorphism group of the abstract Weyl algebra.
\end{prop}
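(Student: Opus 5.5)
The plan is to use the universal property of the abstract Weyl algebra. For fixed $t$, let $T_t = \napiernum^{\imunit t \omega}$, a unitary on $\sphilb{H}$, and set $\tilde{W}(f) = \napiernum^{\imunit \mathsf{M}_{\kappa,\Lambda,t}(f)} \opfockweyl(T_t f)$. The first step is to check that the family $\set{\tilde{W}(f)}{f \in \sphilb{H}}$ satisfies the Weyl relations. The defining property of the $\oacstar$-Weyl algebra (uniqueness of the $\oacstar$-algebra generated by Weyl relations, with the generators linearly independent and the norm unique) then yields a unique $\ast$-homomorphism $\alpha_{\txtvanhowe,\kappa,\Lambda,t}$ with $\opfockweyl(f) \mapsto \tilde{W}(f)$. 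This map is automatically isometric, hence injective, because the Weyl algebra is simple.

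The Weyl relations for $\tilde{W}$ reduce to two facts. First, $\mathsf{M}_{\kappa,\Lambda,t}$ is a real-linear functional on $\sphilb{H}$. This is immediate, since $\mathsf{m}_{\kappa,\Lambda} \in \sphilb{H}$ and $\opreal\bkt{\cdot}{\cdot}$ is real-bilinear. Second, $T_t$ is unitary, so $\opimag\bkt{T_t f}{T_t g} = \opimag \bkt{f}{g}$. With these:
\begin{itemize}
\item For the adjoint, $\faadj{\tilde{W}(f)} = \napiernum^{-\imunit \mathsf{M}_{\kappa,\Lambda,t}(f)} \opfockweyl(-T_t f) = \tilde{W}(-f)$, using linearity.
\item For products, $\tilde{W}(f)\tilde{W}(g) = \napiernum^{\imunit(\mathsf{M}_{\kappa,\Lambda,t}(f)+\mathsf{M}_{\kappa,\Lambda,t}(g))} \napiernum^{-\frac{\imunit}{2}\opimag\bkt{T_tf}{T_tg}} \opfockweyl(T_t(f+g)) = \napiernum^{-\frac{\imunit}{2}\opimag\bkt{f}{g}} \tilde{W}(f+g)$.
\end{itemize}

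The group property comes next. First, $\alpha_0 = \id$, since $\mathsf{M}_{\kappa,\Lambda,0} = 0$ and $T_0 = \idone$. Then compute on generators:
\[
\alpha_s(\alpha_t(\opfockweyl(f))) = \napiernum^{\imunit \mathsf{M}_{\kappa,\Lambda,t}(f)} \napiernum^{\imunit \mathsf{M}_{\kappa,\Lambda,s}(T_t f)} \opfockweyl(T_{s}T_t f).
\]
By the cocycle identity of Proposition~\ref{expedition0011238} and $T_sT_t = T_{t+s}$, this equals $\alpha_{t+s}(\opfockweyl(f))$. Since the two $\ast$-homomorphisms $\alpha_s \circ \alpha_t$ and $\alpha_{t+s}$ agree on generators, and $\ast$-homomorphisms of $\oacstar$-algebras are continuous, they agree on the whole algebra. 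In particular $\alpha_{-t}$ is a two-sided inverse of $\alpha_t$, so each $\alpha_t$ is surjective and hence an automorphism.

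The main obstacle is the extension step: justifying that a map defined only on generators extends to a $\ast$-homomorphism of the completed algebra. I will invoke the standard uniqueness theorem for the Weyl $\oacstar$-algebra (e.g.\ Bratteli--Robinson, Theorem 5.2.8), which gives exactly this extension for any family satisfying the Weyl relations over a symplectic map. Here that map is $f \mapsto T_t f$, with the phase $\napiernum^{\imunit\mathsf{M}_{\kappa,\Lambda,t}}$ being a character of $(\sphilb{H},+)$. An alternative route is to realize $\alpha_t$ concretely in the faithful Fock representation as $\Ad$ of $\napiernum^{\imunit t\physham_{\txtvanhowe,\kappa,\Lambda}}$. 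However, this requires a Bogoliubov computation and is unnecessary for the abstract statement.
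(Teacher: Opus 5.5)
Your proposal is correct and follows essentially the same route as the paper. Both verify that the Weyl relations and the adjoint are preserved, using the unitarity of $e^{it\omega}$ (which preserves $\sigma$) and the real-linearity of $\mathsf{M}_{\kappa,\Lambda,t}$, and both obtain the group law from the cocycle identity of Proposition~\ref{expedition0011238}. Your appeal to the uniqueness theorem for the Weyl $C^*$-algebra, which extends the generator-level map to a $*$-automorphism, makes explicit a step the paper leaves tacit.
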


\begin{rem}
As will be shown below, this is the action of the Hamiltonian of the van Hove model on the Weyl operators.
Since we are working at an abstract level, note that the automorphism property holds for all appropriate representations discussed below.
\end{rem}

\begin{proof}
($\ast$-isomorphism: preservation of Weyl relations): First, $\alpha_{\txtvanhowe,\kappa,\Lambda,0} = \idone$ is clear.
For each $t$, the inverse map is the map with parameter $-t$, so it remains to verify homomorphism.

Since the operator $\napiernum^{\imunit t \omega}$ is unitary, it preserves the symplectic form determined by the inner product.
Since $\mathsf{M}_{\kappa,\Lambda,t}$ arises from the inner product, it is additive.
The remaining step is a direct computation:
\begin{equation}
\begin{aligned}
&\alpha_{\txtvanhowe,\kappa,\Lambda,t}(\opfockweyl(f))
\cdot
\alpha_{\txtvanhowe,\kappa,\Lambda,t}(\opfockweyl(g)) \\
&=
\napiernum^{\imunit \rbk{\mathsf{M}_{\kappa,\Lambda,t}(f) + \mathsf{M}_{\kappa,\Lambda,t}(g)}}
\opfockweyl(\napiernum^{\imunit t \omega} f)
\opfockweyl(\napiernum^{\imunit t \omega} g) \\
&=
\napiernum^{\imunit \rbk{\mathsf{M}_{\kappa,\Lambda,t}(f) + \mathsf{M}_{\kappa,\Lambda,t}(g)}}
\napiernum^{-\frac{\imunit}{2} \opimag \bkt{\napiernum^{\imunit t \omega} f}{\napiernum^{\imunit t \omega} g}}
\opfockweyl(\napiernum^{\imunit t \omega}(f+g)), \\
&=
\napiernum^{-\frac{\imunit}{2} \opimag \bkt{f}{g}}
\napiernum^{\imunit \mathsf{M}_{\kappa,\Lambda,t}(f+g)}
\opfockweyl(\napiernum^{\imunit t \omega}(f+g)), \\
&\alpha_{\txtvanhowe,\kappa,\Lambda,t}(\opfockweyl(f) \opfockweyl(g)) \\
&=
\napiernum^{-\frac{\imunit}{2} \opimag \bkt{f}{g}}
\alpha_{\txtvanhowe,\kappa,\Lambda,t}(\opfockweyl(f+g)) \\
&=
\napiernum^{-\frac{\imunit}{2} \opimag \bkt{f}{g}}
\napiernum^{\imunit \mathsf{M}_{\kappa,\Lambda,t}(f+g)}
\opfockweyl(\napiernum^{\imunit t \omega}(f+g))
\end{aligned}
\end{equation}
and the two expressions are equal.

(Adjoint): Since the map $\mathsf{M}_{\kappa,\Lambda,t}$ satisfies $\mathsf{M}_{\kappa,\Lambda,t}(-f) = -\mathsf{M}_{\kappa,\Lambda,t}(f)$, we have
\begin{equation}
\begin{aligned}
&\faadj{\alpha_{\txtvanhowe,\kappa,\Lambda,t}(\opfockweyl(f))}
=
\napiernum^{-\imunit \mathsf{M}_{\kappa,\Lambda,t}(f)}
\opfockweyl(-\napiernum^{\imunit t \omega} f) \\
&=
\napiernum^{\imunit \mathsf{M}_{\kappa,\Lambda,t}(-f)}
\opfockweyl(\napiernum^{\imunit t \omega} (-f)) \\
&=
\fun{\alpha_{\txtvanhowe,\kappa,\Lambda,t}}{\opfockweyl(-f)}
=
\fun{\alpha_{\txtvanhowe,\kappa,\Lambda,t}}{\faadj{\opfockweyl(f)}}.
\end{aligned}
\end{equation}

(Group property): Let
\begin{equation}
\begin{aligned}
\alpha_{\txtvanhowe,\kappa,\Lambda,t}(\opfockweyl(f))
&=
\napiernum^{\imunit \mathsf{M}_{\kappa,\Lambda,t}(f)} \opfockweyl(\napiernum^{\imunit t \omega} f), \\
\alpha_{\txtvanhowe,\kappa,\Lambda,s}(\opfockweyl(f))
&=
\napiernum^{\imunit \mathsf{M}_{\kappa,\Lambda,s}(f)} \opfockweyl(\napiernum^{\imunit s \omega} f).
\end{aligned}
\end{equation}
Then $$\alpha_{\txtvanhowe,\kappa,\Lambda,t}
\circ
\alpha_{\txtvanhowe,\kappa,\Lambda,s}(\opfockweyl(f))
=
\napiernum^{\imunit \mathsf{M}_{\kappa,\Lambda,s}(f)}
\napiernum^{\imunit \mathsf{M}_{\kappa,\Lambda,t}(\napiernum^{\imunit s \omega} f)}
\opfockweyl(\napiernum^{\imunit t \omega} \napiernum^{\imunit s \omega} f).$$
On the other hand, $$\alpha_{\kappa,\Lambda,t+s}(\opfockweyl(f))
=
\napiernum^{\imunit \mathsf{M}_{\kappa,\Lambda,t+s}(f)}
\opfockweyl(\napiernum^{\imunit (t+s) \omega} f).$$
The equality of these two expressions is the cocycle condition, which was established in Proposition~\ref{expedition0011238}.
\end{proof}

\section{Ground State in the Weyl Algebra: Reference Implementation}\label{expedition0012095}

As a reference implementation, we discuss the regularized case in the Fock representation. The full treatment involving the removal of infrared and ultraviolet cutoffs is left to the discussions in the resolvent algebra and the functional integral. Since the ground-state energy will be derived explicitly using the functional integral later, some results are accepted without proof here.

\subsection{Discussion in the Fock Representation}\label{discussion-in-the-fock-representation}

Here we present the operator-theoretic arguments of \cite{AsaoArai26} reformulated in terms of the Weyl algebra. In principle the notation follows \cite{YoshitsuguSekine006}.

For the van Hove Hamiltonian \(\physham_{\txtvanhowe,\kappa,\Lambda}\) with infrared and ultraviolet cutoffs on the bosonic Fock space, the ground-state energy, as established by \cite{AsaoArai26} or by the functional integral argument, is \(\fun{\physgse}{\physham_{\txtvanhowe,\kappa,\Lambda}} = -\onehalf \norm{\inv{\omega} \varrho_{\kappa,\Lambda}}_{\sphilb{H}}^2\). Under the infrared regularization \(\kappa > 0\), we define the unitary operator \(V_{\kappa,\Lambda} = \fnexp{\imunit \fun{\opfocksegal_{\txtfock}}{\imunit \mathsf{m}_{\kappa,\Lambda}}}\). Essentially the same transformation is used in the discussion of Hubbard--phonon interaction systems in \cite{YoshitsuguSekine001,YoshitsuguSekine002}. For various computations, we cite from \cite{AsaoArai26,AsaoArai28} the basic properties of Weyl operators and results on the commutation relations with creation and annihilation operators.

\begin{fact}\label{expedition0010960}
Let $f,g$ be arbitrary elements of the Hilbert space $\sphilb{H}$.
\begin{enumerate}
\item
The Weyl operator leaves the domain of the creation and annihilation operators invariant.
That is, $$\napiernum^{\imunit \opfocksegal(f)} \opfockcran(g) = \opfockcran(g).$$
Furthermore, the operator identities
\begin{align}
\napiernum^{\imunit \opfocksegal(f)}
\opfockan(g)
\napiernum^{-\imunit \opfocksegal(f)}
&=
\opfockan(g)
-\frac{\imunit}{\sqrt{2}}
\bkt{g}{f}, \\ 
\napiernum^{\imunit \opfocksegal(f)}
\opfockcr(g)
\napiernum^{-\imunit \opfocksegal(f)}
&=
\opfockcr(g)
+\frac{\imunit}{\sqrt{2}}
\bkt{f}{g} 
\end{align}
hold.

\item
The Weyl operator leaves the domain of the Segal field operator invariant.
That is, $$\napiernum^{\imunit \opfocksegal(f)} \dom \opfocksegal(g) = \dom \opfocksegal(g).$$
Furthermore, the operator identity
\begin{align}
\napiernum^{\imunit \opfocksegal(f)}
\opfocksegal(g)
\napiernum^{-\imunit \opfocksegal(f)}
=
\opfocksegal(g)
-\opimag \bkt{f}{g} 
\end{align}
holds.
\end{enumerate}
\end{fact}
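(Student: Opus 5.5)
The plan is the standard argument via the canonical commutation relations on the finite-particle subspace, followed by a closure argument. Throughout, the displayed ``invariance'' $\napiernum^{\imunit \opfocksegal(f)} \opfockcran(g) = \opfockcran(g)$ is read as the domain statement $\napiernum^{\imunit \opfocksegal(f)} \dom \opfockcran(g) = \dom \opfockcran(g)$.

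\textbf{Step 1: setup and analytic vectors.} Let $\spfock_0$ be the finite-particle subspace, meaning vectors with finitely many nonzero $n$-particle components. It is a common core for $\opfockan(g)$, $\opfockcr(g)$ and $\opfocksegal(g)$. On $\spfock_0$ the CCR hold:
\[
\commutator{\opfockan(g)}{\opfockcr(f)} = \bkt{g}{f}, \quad \text{hence} \quad \commutator{\opfocksegal(f)}{\opfockan(g)} = -\frac{1}{\sqrt{2}} \bkt{g}{f}, \quad \commutator{\opfocksegal(f)}{\opfocksegal(g)} = \imunit \opimag \bkt{f}{g}.
\]
For $\psi$ in the $N$-particle sector, the bound $\norm{\opfockcran(h) P_n} \leq \sqrt{n+1}\, \norm{h}$ gives
\[
\norm{\opfocksegal(f)^n \psi} \leq \sqrt{(N+n)!}\, \norm{f}^n \norm{\psi}.
\]
So every $\psi \in \spfock_0$ is an entire analytic vector for $\opfocksegal(f)$. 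In particular
\[
\napiernum^{-\imunit \opfocksegal(f)} \psi = \sum_n \frac{(-\imunit)^n}{n!} \opfocksegal(f)^n \psi,
\]
and the same series with $\opfockan(g)$ applied termwise also converges absolutely, because $\opfockan(g)$ only adds a factor of order $\sqrt{N+n}$.

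\textbf{Step 2: the commutation formula on the core.} By induction from the CCR, since the commutator is a scalar,
\[
\opfockan(g)\, \opfocksegal(f)^n = \rbk{\opfocksegal(f) - \frac{1}{\sqrt{2}} \bkt{g}{f}}^n \opfockan(g)
\]
holds on $\spfock_0$. Summing the exponential series, which is justified by Step 1 together with closedness of $\opfockan(g)$, gives for $\psi \in \spfock_0$:
\[
\napiernum^{-\imunit \opfocksegal(f)} \psi \in \dom \opfockan(g), \qquad \opfockan(g)\, \napiernum^{-\imunit \opfocksegal(f)} \psi = \napiernum^{-\imunit \opfocksegal(f)} \rbk{\opfockan(g) - \frac{\imunit}{\sqrt{2}} \bkt{g}{f}} \psi .
\]
Multiplying on the left by $\napiernum^{\imunit \opfocksegal(f)}$ yields the first identity on $\spfock_0$.

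\textbf{Step 3: extension to the full domains.} Take $\psi \in \dom \opfockan(g)$ and approximate it in the graph norm by $\psi_k \in \spfock_0$. Applying the identity of Step 2 to each $\psi_k$ and using closedness of $\opfockan(g)$, we get $\napiernum^{-\imunit \opfocksegal(f)} \psi \in \dom \opfockan(g)$, and the identity persists for $\psi$. This shows $\napiernum^{-\imunit \opfocksegal(f)} \dom \opfockan(g) \subset \dom \opfockan(g)$. Replacing $f$ by $-f$ gives the reverse inclusion, hence equality of domains.

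\textbf{Step 4: creation and Segal operators.} The identity for $\opfockcr(g)$ follows by taking adjoints, since $\napiernum^{\imunit \opfocksegal(f)}$ is unitary and $\opfockcr(g) = \faadj{\opfockan(g)}$. For the Segal operator, repeat Steps 2–3 with the scalar commutator $\commutator{\opfocksegal(f)}{\opfocksegal(g)} = \imunit \opimag \bkt{f}{g}$ on $\spfock_0$. This gives
\[
\napiernum^{\imunit \opfocksegal(f)}\, \opfocksegal(g)\, \napiernum^{-\imunit \opfocksegal(f)} = \opfocksegal(g) + \imunit \cdot \imunit \opimag \bkt{f}{g} = \opfocksegal(g) - \opimag \bkt{f}{g}.
\]
The domain statement again follows from $\spfock_0$ being a core for $\opfocksegal(g)$ and closedness.

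\textbf{Main obstacle.} The only delicate point is the interchange of the exponential series with the unbounded operator $\opfockan(g)$ in Step 2. I would handle it through the explicit factorial bound in Step 1, which makes both series converge absolutely on each $N$-particle sector.
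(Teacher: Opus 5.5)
The paper gives no proof of this Fact; it quotes it from Arai's textbooks. Your architecture is the standard one and is sound: entire analytic vectors on the finite-particle space $\mathcal{F}_0$, closedness to reach the full domain, $f\mapsto -f$ for the reverse inclusion, adjoints for $a^{*}(g)$, and $\mathcal{F}_0$ as a core for $\phi(g)$. The step that does the actual work, however, is wrong as written. Set $c := \frac{1}{\sqrt{2}}\langle g,f\rangle$, so that $[a(g),\phi(f)] = c$ is a scalar. Then $a(g)$ acts on powers of $\phi(f)$ as a derivation, not as a shift:
\[
a(g)\,\phi(f)^{n} = \phi(f)^{n}\,a(g) + n c\,\phi(f)^{n-1} \quad \text{on } \mathcal{F}_0 .
\]
Your identity $a(g)\phi(f)^n = (\phi(f)-c)^n a(g)$ already fails for $n=1$: on the vacuum $\Omega$ the left side gives $c\,\Omega$ and the right side gives $0$. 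Your conclusion also does not follow from it. Summing it would give $a(g)e^{-i\phi(f)}\psi = e^{ic}\,e^{-i\phi(f)}a(g)\psi$, a multiplicative phase instead of an additive shift. That would force $a(g)e^{-i\phi(f)}\Omega = 0$, whereas $e^{-i\phi(f)}\Omega$ is a coherent vector with $a(g)e^{-i\phi(f)}\Omega = -ic\,e^{-i\phi(f)}\Omega$.

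The repair is local. With the derivation identity, the extra terms sum to
\[
\sum_{n\ge 1}\frac{(-i)^n}{(n-1)!}\,c\,\phi(f)^{n-1}\psi = -ic\,e^{-i\phi(f)}\psi .
\]
Absolute convergence follows from your Step 1 estimate, and $a(g)\psi\in\mathcal{F}_0$ is again entire analytic for $\phi(f)$. This gives exactly $a(g)e^{-i\phi(f)}\psi = e^{-i\phi(f)}\bigl(a(g) - \frac{i}{\sqrt{2}}\langle g,f\rangle\bigr)\psi$, after which Steps 3--4 go through unchanged.

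The same correction is needed when you ``repeat Step 2'' for the Segal field. Use $\phi(g)\phi(f)^n = \phi(f)^n\phi(g) - i n\,\mathrm{Im}\langle f,g\rangle\,\phi(f)^{n-1}$, which sums to $\phi(g)e^{-i\phi(f)}\psi = e^{-i\phi(f)}\bigl(\phi(g) - \mathrm{Im}\langle f,g\rangle\bigr)\psi$.

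A minor point: from $\|a^{\#}(h)P_n\|\le\sqrt{n+1}\,\|h\|$ one directly gets $\|\phi(f)^n\psi\|\le 2^{n/2}\sqrt{(N+n)!}\,\|f\|^n\|\psi\|$ rather than your constant. The extra geometric factor is irrelevant for entire analyticity.
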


As confirmed in \cite{YoshitsuguSekine006}, under the infrared regularization \(\kappa > 0\) we obtain \[V_{\kappa,\Lambda}
\physham_{\txtbsn,\txtfr}
\inv{V_{\kappa,\Lambda}}
=
\physham_{\txtvanhowe,\kappa,\Lambda}
-\physenergy_0(\physham_{\txtvanhowe,\kappa,\Lambda}).\] Setting \(\Psi_{\txtvanhowe,\txtgs,\txtfock,\kappa,\Lambda} = V_{\kappa,\Lambda} \opfockvac_{\txtbsn}\) for the Fock vacuum \(\opfockvac_{\txtbsn}\), this is the ground state of the van Hove Hamiltonian with infrared and ultraviolet cutoffs. We take the state \(\oastate[\psi_{\txtvanhowe,\txtgs,\txtfock,\kappa,\Lambda}]\) on the algebra of all bounded linear operators to be the vector state determined by the ground state \(\Psi_{\txtvanhowe,\txtgs,\txtfock,\kappa,\Lambda}\), and define the non-negative self-adjoint operator \[\physham_{\oastate[\psi_{\txtvanhowe,\txtgs,\txtfock,\kappa,\Lambda}]}
=
\physham_{\txtvanhowe,\kappa,\Lambda}
-\fun{\physgse}{\physham_{\txtvanhowe,\kappa,\Lambda}}\] shifted by the ground-state energy. We denote by \(\alpha_{\txtvanhowe,\txtfock,\kappa,\Lambda,t}\) the automorphism group generated by the derivation \[\oaderiv_{\psi_{\txtvanhowe,\txtgs,\txtfock,\kappa,\Lambda}}(\repn_{\txtfock}(A))
=
\commutator{\physham_{\psi_{\txtvanhowe,\txtgs,\txtfock,\kappa,\Lambda}}}{\repn_{\txtfock}(A)},
\quad
A \in \opspbddlin{\fun{\spfock_{\txtbsn}}{\sphilb{H}}}.\] By \cite{BratteliRobinson2}, the Fock representation of the Weyl algebra is a regular representation, and the automorphism group \(\alpha_{\txtvanhowe,\txtfock,\kappa,\Lambda,t}\) of the van Hove model on Fock space is strongly continuous.

\begin{prop}\label{expedition0011107}
Regard the state $\psi_{\txtvanhowe,\txtgs,\txtfock,\kappa,\Lambda}
\in
\oaspstate_{\repn_{\txtfock}(\oaweyl(\sphilb{H}))}$ on the Fock representation of the Weyl algebra as the vector state defined by $$\psi_{\txtvanhowe,\txtgs,\txtfock,\kappa,\Lambda}(\repn_{\txtfock}(A))
=
\bkt{\Psi_{\txtvanhowe,\txtgs,\txtfock,\kappa,\Lambda}}
{\repn_{\txtfock}(A) \Psi_{\txtvanhowe,\txtgs,\txtfock,\kappa,\Lambda}},
\quad
\Psi_{\txtvanhowe,\txtgs,\txtfock,\kappa,\Lambda}
=
V_{\kappa,\Lambda}
\opfockvac_{\txtbsn}.$$
This satisfies $\fun{\psi_{\txtvanhowe,\txtgs,\txtfock,\kappa,\Lambda}}{\physham_{\psi_{\txtvanhowe,\txtgs,\txtfock,\kappa,\Lambda}}} = 0$, and in particular it is a ground state of the automorphism group $\alpha_{\txtvanhowe,\txtfock,\kappa,\Lambda}$ on the Fock representation of the Weyl algebra.

For the unitary operator $U_{\psi_{\txtvanhowe,\txtgs,\txtfock,\kappa,\Lambda},t}
=
\napiernum^{\imunit t \physham_{\psi_{\txtvanhowe,\txtgs,\txtfock,\kappa,\Lambda}}}$, the automorphism group can be written as $$\alpha_{\txtvanhowe,\txtfock,\kappa,\Lambda,t}(\repn_{\txtfock}(A))
=
U_{\psi_{\txtvanhowe,\txtgs,\txtfock,\kappa,\Lambda},t}
\repn_{\txtfock}(A)
\inv{U_{\psi_{\txtvanhowe,\txtgs,\txtfock,\kappa,\Lambda},t}},$$
and the one-parameter unitary group determined by $U_{\psi_{\txtvanhowe,\txtgs,\txtfock,\kappa,\Lambda},t}$ is continuous in the strong operator topology.
\end{prop}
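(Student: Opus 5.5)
The plan is to transport everything back to the free field via the unitary $V_{\kappa,\Lambda}$, using the cited intertwining identity
\[V_{\kappa,\Lambda}\physham_{\txtbsn,\txtfr}\inv{V_{\kappa,\Lambda}}=\physham_{\txtvanhowe,\kappa,\Lambda}-\fun{\physgse}{\physham_{\txtvanhowe,\kappa,\Lambda}}=\physham_{\psi_{\txtvanhowe,\txtgs,\txtfock,\kappa,\Lambda}},\]
valid under $\kappa>0$.

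\textbf{Vanishing of the energy.} Since $\Psi_{\txtvanhowe,\txtgs,\txtfock,\kappa,\Lambda}=V_{\kappa,\Lambda}\opfockvac_{\txtbsn}$, and $V_{\kappa,\Lambda}$ maps $\dom\physham_{\txtbsn,\txtfr}$ onto $\dom\physham_{\psi_{\txtvanhowe,\txtgs,\txtfock,\kappa,\Lambda}}$, we have $\Psi_{\txtvanhowe,\txtgs,\txtfock,\kappa,\Lambda}\in\dom\physham_{\psi_{\txtvanhowe,\txtgs,\txtfock,\kappa,\Lambda}}$. Moreover $\physham_{\psi_{\txtvanhowe,\txtgs,\txtfock,\kappa,\Lambda}}\Psi_{\txtvanhowe,\txtgs,\txtfock,\kappa,\Lambda}=V_{\kappa,\Lambda}\physham_{\txtbsn,\txtfr}\opfockvac_{\txtbsn}=0$, because $\fun{\opfocksndqntdiff_{\txtbsn}}{\omega}$ annihilates the vacuum. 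Hence $\psi_{\txtvanhowe,\txtgs,\txtfock,\kappa,\Lambda}(\physham_{\psi_{\txtvanhowe,\txtgs,\txtfock,\kappa,\Lambda}})=\bkt{\Psi}{\physham_{\psi}\Psi}=0$, understood as the extension of the vector state to this unbounded operator.

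\textbf{Ground-state property.} Since $\physham_{\psi_{\txtvanhowe,\txtgs,\txtfock,\kappa,\Lambda}}$ is unitarily equivalent to $\physham_{\txtbsn,\txtfr}\ge0$, it is a non-negative self-adjoint operator that annihilates the GNS-type vector $\Psi$. I would then use the standard criterion (Bratteli--Robinson): a state whose implementing generator is non-negative and annihilates the cyclic vector is a ground state. Concretely, for $A,B$ in the algebra, the function $t\mapsto\psi(A\,\alpha_t(B))=\bkt{\faadj{A}\Psi}{U_t B\Psi}$ extends holomorphically and boundedly to the upper half-plane, because $U_t\Psi=\Psi$ and $\physham_\psi\ge0$. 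Equivalently, $-\imunit\,\psi(\faadj{A}\oaderiv(A))\ge0$ for $A$ in the domain of $\oaderiv$, which follows from $\bkt{A\Psi}{\physham_\psi A\Psi}\ge0$. Cyclicity is not needed for the analytic criterion.

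\textbf{Implementation and continuity.} Stone's theorem applied to the self-adjoint $\physham_\psi$ gives a strongly continuous unitary group $U_t=\napiernum^{\imunit t\physham_\psi}$. The shift by the constant $\physgse$ does not change $\Ad U_t$, so $\Ad U_t$ coincides with the group generated by $\physham_{\txtvanhowe,\kappa,\Lambda}$. That $\Ad U_t$ is the group generated by the derivation $\oaderiv_\psi=\commutator{\physham_\psi}{\cdot}$ is essentially the definition, modulo domain issues. I would also check it on Weyl operators against Proposition~\ref{expedition0011094}:
\[U_t\opfockweyl_{\txtfock}(f)U_t^{-1}=V\,\napiernum^{\imunit t\physham_{\txtbsn,\txtfr}}\,\inv{V}\,\opfockweyl_{\txtfock}(f)\,V\,\napiernum^{-\imunit t\physham_{\txtbsn,\txtfr}}\,\inv{V}.\]
Using Fact~\ref{expedition0010960} together with the Weyl relations, conjugating by $V$ produces a phase $\napiernum^{\pm\imunit\opimag\bkt{\imunit\mathsf{m}}{f}}$. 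Free evolution then rotates $f\mapsto\napiernum^{\imunit t\omega}f$, and the combination of phases gives $\napiernum^{\imunit\mathsf{M}_{\kappa,\Lambda,t}(f)}$.

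\textbf{Main obstacle.} The step I expect to be delicate is this phase bookkeeping, since the sign conventions in $\sigma$ and in $V$ must match the definition of $\mathsf{M}$. Everything else follows directly from the unitary equivalence.
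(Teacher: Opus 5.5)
Your proposal is correct and follows essentially the paper's route: conjugation by $V_{\kappa,\Lambda}$ using $V_{\kappa,\Lambda}\physham_{\txtbsn,\txtfr}\inv{V_{\kappa,\Lambda}}=\physham_{\txtvanhowe,\kappa,\Lambda}-\physgse$, the Weyl-operator phase computation giving $\napiernum^{\imunit\mathsf{M}_{\kappa,\Lambda,t}(f)}\opfockweyl_{\txtfock}(\napiernum^{\imunit t\omega}f)$ (the signs do close with the paper's conventions), invariance of $\Psi$, and positivity of the generator. The only difference is that you verify the ground-state condition directly through the bounded analytic continuation of $z\mapsto\bkt{\faadj{A}\Psi}{\napiernum^{\imunit z\physham_{\psi}}B\Psi}$ instead of invoking Bratteli--Robinson's Proposition~5.3.19, which spares you from identifying the Fock representation with the GNS representation of the state.
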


\begin{proof}
(Unitary group generating the automorphism group): Let $U_{0,t} = \napiernum^{\imunit t \physham_{\txtbsn,\txtfr}}$ be the unitary group generated by the free Hamiltonian, and set $$V_t
= U_{0,t} V_{\kappa,\Lambda} \inv{U_{0,t}}
= \fun{\opfockweyl_{\txtfock}}{\imunit \napiernum^{\imunit t \omega} \mathsf{m}_{\kappa,\Lambda}}.$$
Setting $U_{t} = V_{\kappa,\Lambda} U_{0,t} \inv{V_{\kappa,\Lambda}}$, we obtain
\begin{equation}
\begin{aligned}
&U_{t} \opfockweyl_{\txtfock}(f) \inv{U_{t}}
=
V_{\kappa,\Lambda} U_{0,t}
\rbk{\inv{V_{\kappa,\Lambda}} \opfockweyl_{\txtfock}(f) V_{\kappa,\Lambda}}
\inv{U_{0,t}} \inv{V_{\kappa,\Lambda}} \\
&=
\napiernum^{-\imunit \opimag \bkt{f}{\imunit \mathsf{m}_{\kappa,\Lambda}}}
V_{\kappa,\Lambda} \opfockweyl_{\txtfock}(\napiernum^{\imunit t \omega} f) \inv{V_{\kappa,\Lambda}} \\
&=
\napiernum^{-\imunit \opreal \bkt{f}{\mathsf{m}_{\kappa,\Lambda}}}
\napiernum^{-\imunit \opimag \bkt{\napiernum^{\imunit t \omega} f}{-\imunit \mathsf{m}_{\kappa,\Lambda}}}
\opfockweyl_{\txtfock}(\napiernum^{\imunit t \omega} f) \\
&=
\napiernum^{-\imunit \opreal \bkt{f}{\mathsf{m}_{\kappa,\Lambda}}}
\napiernum^{\imunit \opreal \bkt{\napiernum^{\imunit t \omega} f}{\mathsf{m}_{\kappa,\Lambda}}}
\opfockweyl_{\txtfock}(\napiernum^{\imunit t \omega} f) \\
&=
\napiernum^{\imunit \opreal \fun{\mathsf{m}_{\kappa,\Lambda}}{\rbk{\napiernum^{\imunit t \omega} - 1} f}}
\opfockweyl_{\txtfock}(\napiernum^{\imunit t \omega} f)
=
\napiernum^{\imunit \fun{\mathsf{M}_{\kappa,\Lambda}}{f}}
\opfockweyl_{\txtfock}(\napiernum^{\imunit t \omega} f).
\end{aligned}
\end{equation}
In particular, the automorphism group $\alpha_{\txtvanhowe,\txtfock,\kappa,\Lambda}$ is represented as $\Ad_{U_t}$.

(Generator and ground-state energy): Since $U_t$ is clearly strongly continuous, computing the derivative gives the generator: $$\physham_{U}
=
-\imunit
\fnrestr{\od{U_t}{t}}{t=0}
=
V_{\kappa,\Lambda}
\physham_{\txtbsn,\txtfr}
\inv{V_{\kappa,\Lambda}}.$$
By \cite{AsaoArai26}, the right-hand side equals $$\physham_{U}
=
\physham_{\txtbsn,\txtfr}
+\fun{\opfocksegal_{\txtfock}}{\omega^{-\onehalf} \varrho_{\kappa,\Lambda}}
+\onehalf \norm{\inv{\omega} \varrho_{\kappa,\Lambda}}^2
=
\physham_{\txtvanhowe,\kappa,\Lambda}
+\onehalf \norm{\inv{\omega} \varrho_{\kappa,\Lambda}}^2,$$
and by the unitary invariance of the spectrum we get $\physham_{U} = \physham_{\txtvanhowe,\kappa,\Lambda} + \onehalf \norm{\inv{\omega} \varrho_{\kappa,\Lambda}}^2 \geq 0$.

(Verification of the ground-state property): First we show the invariance of the state $\psi_{\txtvanhowe,\txtgs,\txtfock,\kappa,\Lambda}$ under the automorphism group $\alpha_{\txtvanhowe,\txtfock,\kappa,\Lambda,t}$.
The Fock vacuum is an eigenvector of $\physham_{\txtbsn,\txtfr}$ belonging to the eigenvalue $0$.
By the definition of the vector $\Psi_{\psi_{\txtvanhowe,\txtgs,\txtfock,\kappa,\Lambda}}$, we obtain $$\inv{U_t} \Psi_{\psi_{\txtvanhowe,\txtgs,\txtfock,\kappa,\Lambda}}
=
V_{\kappa,\Lambda} U_{0,t} \opfockvac_{\txtbsn}
=
\Psi_{\psi_{\txtvanhowe,\txtgs,\txtfock,\kappa,\Lambda}}.$$
In particular, $\psi_{\txtvanhowe,\txtgs,\txtfock,\kappa,\Lambda} \circ \alpha_{\txtvanhowe,\txtfock,\kappa,\Lambda,t} = \psi_{\txtvanhowe,\txtgs,\txtfock,\kappa,\Lambda}$ holds for all $t \in \fldreal$.

The preceding argument shows that all conditions of \cite[P.98, Proposition 5.3.19]{BratteliRobinson2} except for invariance under the automorphism group are also satisfied.
By that proposition, $\psi_{\txtvanhowe,\txtgs,\txtfock,\kappa,\Lambda}$ is a ground state of $\alpha_{\txtvanhowe,\txtfock,\kappa,\Lambda}$.
\end{proof}

\begin{prop}
The ground state $\psi_{\txtvanhowe,\txtgs,\txtfock,\kappa,\Lambda}$ in the Fock representation is a quasi-free state.
\end{prop}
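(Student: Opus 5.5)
The plan is to compute the generating functional of the state on Weyl operators explicitly and show it has quasi-free form. Since $\Psi_{\txtvanhowe,\txtgs,\txtfock,\kappa,\Lambda} = V_{\kappa,\Lambda}\opfockvac_{\txtbsn}$ with $V_{\kappa,\Lambda} = \opfockweyl_{\txtfock}(\imunit \mathsf{m}_{\kappa,\Lambda})$, we have for every $f \in \sphilb{H}$
\[
\psi_{\txtvanhowe,\txtgs,\txtfock,\kappa,\Lambda}(\opfockweyl_{\txtfock}(f))
=
\bkt{\opfockvac_{\txtbsn}}{\inv{V_{\kappa,\Lambda}}\opfockweyl_{\txtfock}(f) V_{\kappa,\Lambda}\opfockvac_{\txtbsn}} .
\]
The conjugation is evaluated with the Weyl relations, or equivalently with Fact~\ref{expedition0010960}(2) exponentiated. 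Using $\opfockweyl(g)\opfockweyl(f)\opfockweyl(-g) = \napiernum^{-\imunit \opimag \bkt{g}{f}}\opfockweyl(f)$ with $g = -\imunit\mathsf{m}_{\kappa,\Lambda}$, one gets $\inv{V_{\kappa,\Lambda}}\opfockweyl_{\txtfock}(f)V_{\kappa,\Lambda} = \napiernum^{\imunit \ell(f)}\opfockweyl_{\txtfock}(f)$. Here $\ell(f) = \pm\opreal\bkt{\mathsf{m}_{\kappa,\Lambda}}{f}$ is a real-linear, $\sphilb{H}$-continuous functional. The sign convention must be kept consistent with the computation in the proof of Proposition~\ref{expedition0011107}, where the factor $\napiernum^{-\imunit\opreal\bkt{f}{\mathsf{m}_{\kappa,\Lambda}}}$ appears.

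Next, insert the Fock vacuum expectation $\bkt{\opfockvac_{\txtbsn}}{\opfockweyl_{\txtfock}(f)\opfockvac_{\txtbsn}} = \napiernum^{-\frac14\norm{f}^2}$. This gives
\[
\psi_{\txtvanhowe,\txtgs,\txtfock,\kappa,\Lambda}(\opfockweyl_{\txtfock}(f)) = \napiernum^{\imunit \ell(f)}\napiernum^{-\frac14\norm{f}^2}.
\]
This is exactly the form of a (non-centered) quasi-free state: a Gaussian with covariance given by the real inner product $\frac12\opreal\bkt{f}{g}$, which dominates the symplectic form in the sense $\abs{\sigma(f,g)}^2 \le \norm{f}^2\norm{g}^2$, multiplied by a character $\napiernum^{\imunit\ell(f)}$ with linear $\ell$.

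If the paper's definition of quasi-free requires a centered (gauge-invariant or even) state, I would instead show the state is the composition of the Fock state with the Bogoliubov automorphism $\opfockweyl(f)\mapsto \napiernum^{\imunit\ell(f)}\opfockweyl(f)$. That is, it is a coherent/displaced Fock state, and the truncated $n$-point functions vanish for $n\ge3$. This follows by differentiating $\log \psi(\opfockweyl(tf))=\imunit t\ell(f)-\frac{t^2}{4}\norm{f}^2$, a polynomial of degree two in $t$; differentiability is justified by regularity of the Fock representation.

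The main obstacle is purely definitional and bookkeeping. One must fix which notion of quasi-free is used, since the state has a nonzero one-point function $\ell$ and so is not centered. One must also get the sign and factor of $\ell$ right from the Weyl commutation constant $-\frac{\imunit}{2}\opimag\bkt{f}{g}$. I would state the result as "quasi-free with one-point function $\ell$ and two-point function $\frac12\opreal\bkt{\cdot}{\cdot}$" and remark that the infrared regularization $\kappa>0$ is what guarantees $\mathsf{m}_{\kappa,\Lambda}\in\sphilb{H}$, so that $\ell$ is defined on all of $\sphilb{H}$.
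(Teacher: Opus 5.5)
Your argument is correct, but it takes a different route from the paper. The paper never computes the generating functional. It appeals to the Bratteli--Robinson quasi-free criterion, together with the known expectation values of the Segal field, and verifies only the positivity condition on the truncated two-point function, $\psi_T(\phi(f),\phi(f))+\psi_T(\phi(g),\phi(g))-\mathrm{Im}\langle f,g\rangle=\tfrac12\|f+ig\|^2\ge 0$. This uses the fact that the truncated function equals the Fock one, $\tfrac12\|f\|^2$; the vanishing of the higher truncated functions is left implicit.

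You instead conjugate by $V_{\kappa,\Lambda}$ using the Weyl relations and obtain the full characteristic functional, $\psi(W_{\mathrm F}(f))=e^{-i\,\mathrm{Re}\langle \mathsf{m}_{\kappa,\Lambda},f\rangle-\frac14\|f\|^2}$. So the sign is $\ell(f)=-\mathrm{Re}\langle \mathsf{m}_{\kappa,\Lambda},f\rangle$, in agreement with Definition~\ref{expedition0011093}, and Gaussianity is immediate. This gives exactly what the paper's check does not. The positivity inequality, which by scaling is equivalent to your $|\sigma(f,g)|\le\|f\|\,\|g\|$, only guarantees that \emph{some} Gaussian functional with this covariance is a state. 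Your formula shows that $\psi$ \emph{is} that functional.

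Three small corrections:
\begin{itemize}
\item For mixed, ordered truncated functions it is cleaner to use $\psi(W(t_1f_1)\cdots W(t_nf_n))$. By the Weyl relations this is the exponential of a degree-two polynomial in $(t_1,\dots,t_n)$, whereas the one-variable $\log\psi(W(tf))$ only gives the diagonal cumulants directly.
\item The needed smoothness comes from this explicit entire expression (equivalently, $V_{\kappa,\Lambda}\Omega$ is a coherent vector), not from regularity, which yields only continuity.
\item Your fallback for a centered definition does not work. Composing the Fock state with $W(f)\mapsto e^{i\ell(f)}W(f)$ leaves the nonzero one-point function in place, so under such a definition the claim would be false. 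The paper's notion, stated in Proposition~\ref{expedition0011239} as ``determined by the mean and the two-point function alone,'' is the non-centered one, and that is what your main line proves.
\end{itemize}
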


\begin{proof}
It suffices to combine the quasi-free condition from \cite[P.40]{BratteliRobinson2} with the results on the expectation values of the Segal field operator.
In particular, letting $\psi_{\txttruncatedfn,\txtvanhowe,\txtgs,\txtfock,\kappa,\Lambda}$ denote the two-point truncated correlation function, we have
\begin{equation}
\begin{aligned}
&\psi_{\txttruncatedfn,\txtvanhowe,\txtgs,\txtfock,\kappa,\Lambda}(\opfocksegal_{\txtfock}(f),\opfocksegal_{\txtfock}(f))
+\psi_{\txttruncatedfn,\txtvanhowe,\txtgs,\txtfock,\kappa,\Lambda}(\opfocksegal_{\txtfock}(g),\opfocksegal_{\txtfock}(g))
-\opimag \bkt{f}{g} \\
&=
\onehalf \rbk{\norm{f}^2 + \norm{g}^2 - 2 \opimag \bkt{f}{g}}
=
\onehalf \norm{f + \imunit g}^2
\geq
0,
\end{aligned}
\end{equation}
which verifies the condition for all two-point truncated correlation functions.
\end{proof}

We define the representation \(\repn_{\txtirsingular,\kappa,\Lambda}
\colon
\fun{\oaweyl}{\sphilb{H}}
\to
\opspbddlin{\fun{\spfock_{\txtbsn}}{\sphilb{H}}}\) of the abstract Weyl algebra by \[\fun{\repn_{\txtirsingular,\kappa,\Lambda}}{\opfockweyl(f)}
=
\napiernum^{\imunit \opfocksegal_{\txtirsingular,\txtfock,\kappa,\Lambda}(f)},
\quad
\opfocksegal_{\txtirsingular,\txtfock,\kappa,\Lambda}(f)
=
\opfocksegal_{\txtfock}(f)
-\opreal \mathsf{m}_{\kappa,\Lambda}(f),\] and, for notational convenience, write \[\opfockweyl_{\txtirsingular,\kappa,\Lambda}(f)
=
\fun{\repn_{\txtirsingular,\kappa,\Lambda}}{\opfockweyl(f)}
=
\napiernum^{-\imunit \opreal \mathsf{m}_{\kappa,\Lambda}(f)}
\opfockweyl_{\txtfock}(f).\] This representation is called the formal infrared-singular representation.

\begin{prop}
The formal infrared-singular representation is a representation of the Weyl algebra.
\end{prop}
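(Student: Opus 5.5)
The plan is to verify the Weyl relations directly on the generators and then invoke the universal property of the abstract Weyl algebra. The proof should not pass through the unitary $V_{\kappa,\Lambda}$, because the word ``formal'' signals that the argument must survive the removal of the infrared cutoff, where $\mathsf{m}\notin\sphilb{H}$ and no implementing unitary exists. The only input about the phase will therefore be that
$$\ell(f)=\opreal \mathsf{m}_{\kappa,\Lambda}(f)$$
is a real-valued, real-linear functional on the real vector space on which it is defined. No continuity of $\ell$ is needed. For $\mathsf{m}_{\kappa,\Lambda}$ this space is all of $\sphilb{H}$. For $\mathsf{m}$ without cutoffs it is the subspace $\dom \mathsf{m}$ of Definition~\ref{expedition0011100}, and there one obtains a representation of $\oaweyl(\dom \mathsf{m})$.

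\textbf{Step 1 (adjoint).} Since $\ell$ is real and odd,
$$\faadj{\opfockweyl_{\txtirsingular,\kappa,\Lambda}(f)}
=\napiernum^{\imunit \ell(f)}\opfockweyl_{\txtfock}(-f)
=\napiernum^{-\imunit \ell(-f)}\opfockweyl_{\txtfock}(-f)
=\opfockweyl_{\txtirsingular,\kappa,\Lambda}(-f).$$

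\textbf{Step 2 (product relation).} The Fock Weyl operators satisfy the Weyl relations. Using additivity $\ell(f)+\ell(g)=\ell(f+g)$, exactly as in the homomorphism step of Proposition~\ref{expedition0011094}, one gets
$$\opfockweyl_{\txtirsingular,\kappa,\Lambda}(f)\opfockweyl_{\txtirsingular,\kappa,\Lambda}(g)
=\napiernum^{-\imunit(\ell(f)+\ell(g))}\napiernum^{-\frac{\imunit}{2}\opimag\bkt{f}{g}}\opfockweyl_{\txtfock}(f+g)
=\napiernum^{-\frac{\imunit}{2}\opimag\bkt{f}{g}}\opfockweyl_{\txtirsingular,\kappa,\Lambda}(f+g).$$
Each $\opfockweyl_{\txtirsingular,\kappa,\Lambda}(f)$ is unitary, being a phase times a unitary. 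Equivalently, the field $\opfocksegal_{\txtirsingular,\txtfock,\kappa,\Lambda}(f)=\opfocksegal_{\txtfock}(f)-\ell(f)$ is self-adjoint on $\dom\opfocksegal_{\txtfock}(f)$, since it is a real constant shift; this justifies writing it as $\napiernum^{\imunit\opfocksegal_{\txtirsingular,\txtfock,\kappa,\Lambda}(f)}$.

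\textbf{Step 3 (extension to the $\oacstar$-algebra).} This is the step I expect to be the real obstacle. Steps~1--2 only give a map on generators that respects the Weyl relations; one must still know that it extends to a $\ast$-homomorphism of the completed abstract Weyl algebra. I would invoke the uniqueness theorem for the Weyl $\oacstar$-algebra (Slawny/Manuceau, \cite[Theorem~5.2.8]{BratteliRobinson2}). Any family of nonzero unitaries satisfying the Weyl relations over a symplectic space generates a $\oacstar$-algebra $\ast$-isomorphic to $\oaweyl$ via $\opfockweyl(f)\mapsto$ that family. This applies equally to the nondegenerate subspace $\dom\mathsf{m}$. Hence $\repn_{\txtirsingular,\kappa,\Lambda}$ extends uniquely to a (faithful) representation.

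\textbf{Step 4 (regularity).} For fixed $f$, the map
$$t\mapsto \napiernum^{-\imunit t\ell(f)}\opfockweyl_{\txtfock}(tf)$$
is a continuous scalar phase times a strongly continuous unitary group. Regularity of the Fock representation \cite{BratteliRobinson2} then gives that the representation is regular, so it qualifies as a representation in the sense used in the rest of the paper.
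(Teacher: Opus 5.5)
Your proposal is correct and follows the paper's route: the paper also just verifies the Weyl relations on the generators $\opfockweyl_{\txtirsingular,\kappa,\Lambda}(f)$ by direct computation. Your Step~3, the uniqueness theorem for the Weyl $\oacstar$-algebra, makes explicit the passage from generators to the completed algebra that the paper leaves implicit; and the ingredient actually needed is the real-linearity of $\opreal \mathsf{m}_{\kappa,\Lambda}$, as you use, rather than the cocycle identity of Proposition~\ref{expedition0011238} that the paper cites.
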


\begin{proof}
Using the cocycle condition of Proposition~\ref{expedition0011238}, a computation yields the preservation of the Weyl relations.
\end{proof}

We define the automorphism group \(\alpha_{\txtvanhowe,\txtirsingular,\kappa,\Lambda}\) on the formal infrared-singular representation \(\fun{\repn_{\txtirsingular,\kappa,\Lambda}}{\oaweyl(\sphilb{H})}\) of the abstract Weyl algebra by \[\fun{\alpha_{\txtvanhowe,\txtirsingular,\kappa,\Lambda,t}}{\opfockweyl_{\txtirsingular,\kappa,\Lambda}(f)}
=
\napiernum^{\imunit \mathsf{M}_{\kappa,\Lambda,t}(f)}
\opfockweyl_{\txtirsingular,\kappa,\Lambda}(\napiernum^{\imunit t \omega} f).\]

\begin{prop}
The formal infrared-singular representation is a regular representation \cite[P.24]{BratteliRobinson2}.
The automorphism group of the van Hove model via the formal infrared-singular representation is strongly continuous, and its derivation is $$\oaderiv_{\psi_{\txtvanhowe,\txtirsingular,\txtgs,\kappa,\Lambda}}(\repn_{\txtirsingular,\kappa,\Lambda}(A))
=
\commutator{\physham_{\psi_{\txtvanhowe,\txtirsingular,\txtgs,\kappa,\Lambda}}}{\repn_{\txtirsingular,\kappa,\Lambda}(A)},$$
where the operator $\physham_{\psi_{\txtvanhowe,\txtirsingular,\txtgs,\kappa,\Lambda}}$ is the free Hamiltonian $\physham_{\txtbsn,\txtfr}$.
The ground-state energy is $\fun{\physgse}{\physham_{\txtbsn,\txtfr}} = 0$.
Setting $\Psi_{\txtvanhowe,\txtirsingular,\txtgs,\kappa,\Lambda} = \opfockvac_{\txtbsn}$ as the bosonic Fock vacuum, define the state $\psi_{\txtvanhowe,\txtirsingular,\txtgs,\kappa,\Lambda} \in \oaspstate_{\repn_{\txtirsingular,\kappa,\Lambda}(\oaweyl(\sphilb{H}))}$ on the formal infrared-singular representation of the Weyl algebra by $$\psi_{\txtvanhowe,\txtirsingular,\txtgs,\kappa,\Lambda}(\repn_{\txtirsingular,\kappa,\Lambda}(A))
=
\bkt{\Psi_{\txtvanhowe,\txtirsingular,\txtgs,\kappa,\Lambda}}
{\repn_{\txtirsingular,\kappa,\Lambda}(A) \Psi_{\txtvanhowe,\txtirsingular,\txtgs,\kappa,\Lambda}}.$$
This satisfies $\fun{\psi_{\txtvanhowe,\txtirsingular,\txtgs,\kappa,\Lambda}}{\physham_{\psi_{\txtvanhowe,\txtirsingular,\txtgs,\kappa,\Lambda}}} = 0$, and in particular it is a ground state of the automorphism group $\alpha_{\txtvanhowe,\txtirsingular,\kappa,\Lambda}$ on the formal infrared-singular representation of the Weyl algebra.

For the unitary operator $U_{\psi_{\txtvanhowe,\txtirsingular,\txtgs,\kappa,\Lambda},t}
=
\napiernum^{\imunit t \physham_{\psi_{\txtvanhowe,\txtirsingular,\txtgs,\kappa,\Lambda}}}$, the automorphism group can be written as $$\alpha_{\txtvanhowe,\txtirsingular,\kappa,\Lambda,t}(\repn_{\txtirsingular,\kappa,\Lambda}(A))
=
U_{\psi_{\txtvanhowe,\txtirsingular,\txtgs,\kappa,\Lambda},t}
\repn_{\txtirsingular,\kappa,\Lambda}(A)
\inv{U_{\psi_{\txtvanhowe,\txtirsingular,\txtgs,\kappa,\Lambda},t}},$$
and the one-parameter unitary group determined by $U_{\psi_{\txtvanhowe,\txtirsingular,\txtgs,\kappa,\Lambda},t}$ is strongly continuous.
\end{prop}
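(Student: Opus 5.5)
Since the formal infrared-singular representation differs from the Fock representation only by a scalar phase, the plan is to transfer everything from the Fock representation to it. Concretely, for each $f \in \sphilb{H}$ we have $\opfockweyl_{\txtirsingular,\kappa,\Lambda}(f) = \napiernum^{-\imunit \opreal \mathsf{m}_{\kappa,\Lambda}(f)} \opfockweyl_{\txtfock}(f)$. Under the infrared and ultraviolet cutoffs, $\mathsf{m}_{\kappa,\Lambda} \in \sphilb{H}$, so the phase is a bounded real-linear functional of $f$.

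\textbf{Regularity.} For fixed $f$, the map $t \mapsto \opfockweyl_{\txtirsingular,\kappa,\Lambda}(tf) = \napiernum^{-\imunit t \opreal \mathsf{m}_{\kappa,\Lambda}(f)} \opfockweyl_{\txtfock}(tf)$ is the product of a continuous scalar function and a strongly continuous unitary group. The latter is strongly continuous by the regularity of the Fock representation \cite{BratteliRobinson2}. Hence the product is strongly continuous, which is the required regularity.

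\textbf{Implementation by the free dynamics.} Let $U_{0,t} = \napiernum^{\imunit t \physham_{\txtbsn,\txtfr}}$. The standard identity $U_{0,t} \opfockweyl_{\txtfock}(f) \inv{U_{0,t}} = \opfockweyl_{\txtfock}(\napiernum^{\imunit t \omega} f)$ gives
\[
U_{0,t} \opfockweyl_{\txtirsingular,\kappa,\Lambda}(f) \inv{U_{0,t}}
=
\napiernum^{-\imunit \opreal \mathsf{m}_{\kappa,\Lambda}(f)}
\napiernum^{\imunit \opreal \mathsf{m}_{\kappa,\Lambda}(\napiernum^{\imunit t \omega} f)}
\opfockweyl_{\txtirsingular,\kappa,\Lambda}(\napiernum^{\imunit t \omega} f).
\]
The combined phase is exactly $\napiernum^{\imunit \mathsf{M}_{\kappa,\Lambda,t}(f)}$ by Definition~\ref{expedition0011628}. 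This is the same bookkeeping as in the proof of Proposition~\ref{expedition0011107}, now without the conjugation by $V_{\kappa,\Lambda}$. Thus $\alpha_{\txtvanhowe,\txtirsingular,\kappa,\Lambda,t} = \Ad U_{0,t}$ on generators. It extends to the $\oacstar$-algebra generated by them because both sides are $\ast$-automorphisms: $\Ad U_{0,t}$ trivially, and the other by Proposition~\ref{expedition0011094} transported through the representation. The group $U_{0,t}$ is strongly continuous by Stone's theorem, so $\physham_{\psi_{\txtvanhowe,\txtirsingular,\txtgs,\kappa,\Lambda}} = \physham_{\txtbsn,\txtfr}$. On the domain of the generator the derivation is its commutator, and strong continuity of the automorphism group follows from that of $U_{0,t}$.

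\textbf{Ground state.} The spectrum of $\physham_{\txtbsn,\txtfr} = \fun{\opfocksndqntdiff_{\txtbsn}}{\omega}$ is contained in $[0,\infty)$, and $\physham_{\txtbsn,\txtfr}\opfockvac_{\txtbsn} = 0$. This gives $\fun{\physgse}{\physham_{\txtbsn,\txtfr}} = 0$, the value $\psi_{\txtvanhowe,\txtirsingular,\txtgs,\kappa,\Lambda}(\physham_{\txtbsn,\txtfr}) = 0$, and the invariance $U_{0,t}\opfockvac_{\txtbsn} = \opfockvac_{\txtbsn}$. Positivity of the generator together with invariance of the vector lets us invoke \cite[P.98, Proposition 5.3.19]{BratteliRobinson2} exactly as in Proposition~\ref{expedition0011107}. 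This shows that the vector state is a ground state for $\alpha_{\txtvanhowe,\txtirsingular,\kappa,\Lambda}$.

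\textbf{Main obstacle.} The only delicate point is sign and phase bookkeeping: the phase must come out as $+\mathsf{M}_{\kappa,\Lambda,t}$ rather than its negative. I would check this carefully against the convention $\opreal\bkt{\cdot}{\cdot}$ used in $\mathsf{m}_{\kappa,\Lambda}(f)$, which is antilinear in the first slot. If the signs disagree, the fix is to adjust the convention for $U_{0,t}$ versus $\inv{U_{0,t}}$.
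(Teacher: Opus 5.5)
Your proposal is correct and follows the same route as the paper. You identify $\alpha_{\txtvanhowe,\txtirsingular,\kappa,\Lambda,t}$ with $\Ad \napiernum^{\imunit t \physham_{\txtbsn,\txtfr}}$, use $\physham_{\txtbsn,\txtfr} \geq 0$ with the Fock vacuum invariant, and invoke \cite[P.98, Proposition 5.3.19]{BratteliRobinson2}; you additionally make explicit the regularity argument and the phase bookkeeping that the paper compresses into ``rewriting the definition''. Your displayed computation already gives exactly $+\mathsf{M}_{\kappa,\Lambda,t}(f)$, so the sign concern in your last paragraph is settled and no change of convention is needed.
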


\begin{proof}
(Unitary group generating the automorphism group): By rewriting the definition, we must have $$\alpha_{\txtvanhowe,\txtirsingular,\kappa,\Lambda,t}
=
\napiernum^{\imunit t \oaderiv},
\quad
\oaderiv(A)
=
\commutator{\physham_{\txtbsn,\txtfr}}{A}.$$
In particular, the strongly continuous unitary group generating the automorphism group is $U_t = \napiernum^{\imunit t \physham_{\txtbsn,\txtfr}}$.

(Generator and ground-state energy): The generator has already been discussed, and the ground-state energy is $0$.

(Construction of the state): It suffices to consider the vector state defined by the Fock vacuum: $$\psi_{\txtvanhowe,\txtirsingular,\txtgs,\kappa,\Lambda}(\repn_{\txtirsingular,\kappa,\Lambda}(A))
=
\bkt{\opfockvac_{\txtbsn}}
{\repn_{\txtirsingular,\kappa,\Lambda}(A)
\opfockvac_{\txtbsn}}.$$
The invariance of the state under the automorphism group is clear.

The ground-state property follows by verifying Proposition~\cite[P.98, Proposition 5.3.19]{BratteliRobinson2}.
Since all conditions other than invariance under the automorphism group are clearly satisfied, the vector state $\psi_{\txtvanhowe,\txtirsingular,\txtgs,\kappa,\Lambda}$ is indeed a ground state.
\end{proof}

\begin{prop}
Under infrared regularization, the Fock representation and the formal infrared-singular representation of the abstract Weyl algebra are unitarily equivalent.
In particular, the corresponding objects in the Fock representation from Proposition~\ref{expedition0011107} are recovered via $$V_{\kappa,\Lambda}
\opfockweyl_{\txtirsingular,\kappa,\Lambda}(f)
\inv{V_{\kappa,\Lambda}}
=
\opfockweyl_{\txtfock}(f),
\quad
V_{\kappa,\Lambda}
U_{\psi_{\txtvanhowe,\txtirsingular,\txtgs,\kappa,\Lambda},t}
\inv{V_{\kappa,\Lambda}}
=
U_{\txtfock,\kappa,\Lambda,t}.$$
\end{prop}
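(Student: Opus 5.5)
The plan is to verify both displayed identities directly, using the conjugation formula for Segal field operators under Weyl operators (Fact~\ref{expedition0010960}, part 2), and then to extend the intertwining relation from the generators to the whole Weyl algebra. Throughout, the infrared regularization $\kappa>0$ ensures $\mathsf{m}_{\kappa,\Lambda}\in\sphilb{H}$. Hence $V_{\kappa,\Lambda}=\fnexp{\imunit\opfocksegal_{\txtfock}(\imunit\mathsf{m}_{\kappa,\Lambda})}$ is a well-defined unitary Weyl operator on $\fun{\spfock_{\txtbsn}}{\sphilb{H}}$.

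\emph{Step 1 (the field operators).} Apply Fact~\ref{expedition0010960}(2) with the Weyl parameter $\imunit\mathsf{m}_{\kappa,\Lambda}$. This gives, on $\dom\opfocksegal_{\txtfock}(f)$,
\[
V_{\kappa,\Lambda}\opfocksegal_{\txtfock}(f)\inv{V_{\kappa,\Lambda}}
=\opfocksegal_{\txtfock}(f)-\opimag\bkt{\imunit\mathsf{m}_{\kappa,\Lambda}}{f}
=\opfocksegal_{\txtfock}(f)+\opreal\mathsf{m}_{\kappa,\Lambda}(f).
\]
The second equality uses $\opimag(-\imunit z)=-\opreal z$ for the antilinear-first inner product, together with $\mathsf{m}_{\kappa,\Lambda}(f)=\bkt{\mathsf{m}_{\kappa,\Lambda}}{f}$. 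Consequently $V_{\kappa,\Lambda}\opfocksegal_{\txtirsingular,\txtfock,\kappa,\Lambda}(f)\inv{V_{\kappa,\Lambda}}=\opfocksegal_{\txtfock}(f)$ as self-adjoint operators. The domain is preserved by the same Fact, and a scalar shift does not change the domain. Since unitary conjugation commutes with the functional calculus, exponentiating yields $V_{\kappa,\Lambda}\opfockweyl_{\txtirsingular,\kappa,\Lambda}(f)\inv{V_{\kappa,\Lambda}}=\opfockweyl_{\txtfock}(f)$. Equivalently, one can use the Weyl relations: $V_{\kappa,\Lambda}\opfockweyl_{\txtfock}(f)\inv{V_{\kappa,\Lambda}}=\napiernum^{\imunit\opreal\mathsf{m}_{\kappa,\Lambda}(f)}\opfockweyl_{\txtfock}(f)$, which is the same computation as in the proof of Proposition~\ref{expedition0011107}. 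This serves as a sign check.

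\emph{Step 2 (unitary equivalence).} Step 1 shows that $\Ad V_{\kappa,\Lambda}\circ\repn_{\txtirsingular,\kappa,\Lambda}$ and $\repn_{\txtfock}$ agree on every generator $\opfockweyl(f)$. Both maps are $\ast$-homomorphisms of $\oaweyl(\sphilb{H})$, by the preceding proposition and by the definition of the Fock representation. They therefore agree on the $\ast$-algebra generated by these elements, and by norm continuity of $\ast$-homomorphisms they agree on its closure $\oaweyl(\sphilb{H})$. So $V_{\kappa,\Lambda}$ implements the unitary equivalence.

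\emph{Step 3 (dynamics).} By the preceding proposition, $U_{\psi_{\txtvanhowe,\txtirsingular,\txtgs,\kappa,\Lambda},t}=\napiernum^{\imunit t\physham_{\txtbsn,\txtfr}}=U_{0,t}$. Conjugating gives $V_{\kappa,\Lambda}U_{0,t}\inv{V_{\kappa,\Lambda}}=\napiernum^{\imunit tV_{\kappa,\Lambda}\physham_{\txtbsn,\txtfr}\inv{V_{\kappa,\Lambda}}}$. By the identity $V_{\kappa,\Lambda}\physham_{\txtbsn,\txtfr}\inv{V_{\kappa,\Lambda}}=\physham_{\txtvanhowe,\kappa,\Lambda}-\fun{\physgse}{\physham_{\txtvanhowe,\kappa,\Lambda}}$ quoted before Proposition~\ref{expedition0011107}, this equals $\napiernum^{\imunit t\physham_{\psi_{\txtvanhowe,\txtgs,\txtfock,\kappa,\Lambda}}}=U_{t}$ from Proposition~\ref{expedition0011107}. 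That is the group denoted $U_{\txtfock,\kappa,\Lambda,t}$. As a consistency check, the automorphism groups then correspond as well: conjugating $\alpha_{\txtvanhowe,\txtirsingular,\kappa,\Lambda,t}$ by $V_{\kappa,\Lambda}$ gives $\alpha_{\txtvanhowe,\txtfock,\kappa,\Lambda,t}$. The same holds for the ground states, since $V_{\kappa,\Lambda}\opfockvac_{\txtbsn}=\Psi_{\txtvanhowe,\txtgs,\txtfock,\kappa,\Lambda}$.

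The only delicate point I expect is bookkeeping of signs and conventions. These are the sign in Fact~\ref{expedition0010960}(2), the antilinearity slot of $\bkt{\cdot}{\cdot}$, and whether $\mathsf{M}$ and $\opreal\mathsf{m}$ appear with matching signs in the two representations. I would cross-check these against the explicit computation in the proof of Proposition~\ref{expedition0011107}, which already contains the needed conjugation formula.
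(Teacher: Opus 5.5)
Your proposal is correct and follows the paper's proof. You conjugate the generators by $V_{\kappa,\Lambda}=\opfockweyl_{\txtfock}(\imunit\mathsf{m}_{\kappa,\Lambda})$ to cancel the phase $\napiernum^{-\imunit\opreal\mathsf{m}_{\kappa,\Lambda}(f)}$, and then conjugate $\napiernum^{\imunit t\physham_{\txtbsn,\txtfr}}$ using the quoted identity $V_{\kappa,\Lambda}\physham_{\txtbsn,\txtfr}\inv{V_{\kappa,\Lambda}}=\physham_{\txtvanhowe,\kappa,\Lambda}-\fun{\physgse}{\physham_{\txtvanhowe,\kappa,\Lambda}}$; your extension from generators to all of $\oaweyl(\sphilb{H})$ fills in a step the paper leaves implicit, and your phase $V_{\kappa,\Lambda}\opfockweyl_{\txtfock}(f)\inv{V_{\kappa,\Lambda}}=\napiernum^{\imunit\opreal\mathsf{m}_{\kappa,\Lambda}(f)}\opfockweyl_{\txtfock}(f)$ has the right sign (the paper's displayed intermediate factor $\napiernum^{-\imunit\opimag\bkt{f}{\imunit\mathsf{m}_{\kappa,\Lambda}}}$ swaps the inner-product slots and would not cancel).
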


\begin{proof}
Consider the unitary operator $V_{\kappa,\Lambda} = \opfockweyl_{\txtfock}(\imunit \mathsf{m}_{\kappa,\Lambda})$.
For the generators (their representations) of the Weyl algebra, we have
\begin{equation}
\begin{aligned}
&V_{\kappa,\Lambda}
\opfockweyl_{\txtirsingular,\kappa,\Lambda}(f)
\inv{V_{\kappa,\Lambda}}
=
\napiernum^{-\imunit \opreal \mathsf{m}_{\kappa,\Lambda}(f)}
V_{\kappa,\Lambda}
\opfockweyl_{\txtfock}(f)
\inv{V_{\kappa,\Lambda}} \\
&=
\napiernum^{-\imunit \opreal \mathsf{m}_{\kappa,\Lambda}(f)}
\napiernum^{-\imunit \opimag \bkt{f}{\imunit \mathsf{m}_{\kappa,\Lambda}(f)}}
\opfockweyl_{\txtfock}(f)
=
\opfockweyl_{\txtfock}(f).
\end{aligned}
\end{equation}
For the unitary group realizing the automorphism group,
\begin{equation}
\begin{aligned}
&V_{\kappa,\Lambda}
U_{\psi_{\txtvanhowe,\txtirsingular,\txtgs,\kappa,\Lambda},t}
\inv{V_{\kappa,\Lambda}}
=
V_{\kappa,\Lambda}
\napiernum^{\imunit t \physham_{\txtbsn,\txtfr}}
\inv{V_{\kappa,\Lambda}} \\
&=
\napiernum^{\imunit t \rbk{\physham_{\txtvanhowe,\kappa,\Lambda} - \physgse(\physham_{\physham_{\txtvanhowe,\kappa,\Lambda}})}}
=
U_{\txtfock,\kappa,\Lambda,t}
\end{aligned}
\end{equation}
holds.
\end{proof}

In preparation for the discussion in the resolvent algebra, let us confirm the action of the automorphism group in the Fock representation. Although the statement is for the automorphism group with infrared and ultraviolet cutoffs, the same expression holds for the automorphism group with cutoffs removed by the exact same computation, so we will cite the following proposition in that context as well.

\begin{prop}\label{expedition0011254}
For any $\lambda \in \fldreal \setminus \setone{0}$ and $f \in \sphilb{H}$, let the Weyl operator and the resolvent be respectively $$\opfockweyl_{\txtfock}(f)
=
\napiernum^{\imunit \opfocksegal_{\txtfock}(f)},
\quad
\oaresolvent_{\txtfock}(\lambda,f)
=
\opfnresolvent{\imunit \lambda - \opfocksegal_{\txtfock}(f)}.$$
Then for any real $\lambda$ and $f \in \sphilb{H}$, the action on the resolvent is $$\fun{\alpha_{\txtvanhowe,\txtfock,\kappa,\Lambda,t}}{\oaresolvent_{\txtfock}(\lambda, f)}
=
\fun{\oaresolvent_{\txtfock}}
{\lambda + \imunit \mathsf{M}_{\kappa,\Lambda,t}(f), \napiernum^{\imunit t \omega} f}.$$
\end{prop}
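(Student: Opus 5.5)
The plan is to pass from the action of $\alpha_{\txtvanhowe,\txtfock,\kappa,\Lambda,t}$ on Weyl operators, established in Proposition~\ref{expedition0011107}, to its action on the Segal field operators. From there the resolvent formula follows from functional calculus under unitary conjugation. Throughout, write $U_t = U_{\psi_{\txtvanhowe,\txtgs,\txtfock,\kappa,\Lambda},t}$, so that $\alpha_{\txtvanhowe,\txtfock,\kappa,\Lambda,t} = \Ad U_t$ on all bounded operators, in particular on $\oaresolvent_{\txtfock}(\lambda,f)$.

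\emph{Step 1: the Weyl action for the whole one-parameter group.} For fixed $t$ and $f$, apply Proposition~\ref{expedition0011107} to $sf$, $s \in \fldreal$. Since $\mathsf{M}_{\kappa,\Lambda,t}$ is real linear, $\mathsf{M}_{\kappa,\Lambda,t}(sf) = s\,\mathsf{M}_{\kappa,\Lambda,t}(f)$. This gives
$$U_t \napiernum^{\imunit s \opfocksegal_{\txtfock}(f)} \inv{U_t} = \napiernum^{\imunit s \rbk{\opfocksegal_{\txtfock}(\napiernum^{\imunit t\omega} f) + \mathsf{M}_{\kappa,\Lambda,t}(f)}}, \quad s \in \fldreal.$$
Both sides are strongly continuous one-parameter unitary groups in $s$. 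The left-hand side has generator $U_t \opfocksegal_{\txtfock}(f) \inv{U_t}$, with domain $U_t \dom \opfocksegal_{\txtfock}(f)$. The right-hand side has generator $\opfocksegal_{\txtfock}(\napiernum^{\imunit t\omega}f) + \mathsf{M}_{\kappa,\Lambda,t}(f)$, which is a self-adjoint operator plus a real constant, on the domain $\dom \opfocksegal_{\txtfock}(\napiernum^{\imunit t\omega}f)$. By the uniqueness part of Stone's theorem, the two self-adjoint operators coincide, including their domains.

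\emph{Step 2: conjugating the resolvent.} For a self-adjoint $A$ and a unitary $U$, we have $U \opfnresolvent{z - A} \inv{U} = \opfnresolvent{z - UA\inv{U}}$ for $z \notin \fldreal$. Applying this with $z = \imunit\lambda$ and Step 1 gives
$$\alpha_{\txtvanhowe,\txtfock,\kappa,\Lambda,t}(\oaresolvent_{\txtfock}(\lambda,f)) = \opfnresolvent{\imunit\lambda - \mathsf{M}_{\kappa,\Lambda,t}(f) - \opfocksegal_{\txtfock}(\napiernum^{\imunit t\omega}f)}.$$
Since $-\mathsf{M} = \imunit \cdot \imunit \mathsf{M}$, we have $\imunit\lambda - \mathsf{M}_{\kappa,\Lambda,t}(f) = \imunit\rbk{\lambda + \imunit \mathsf{M}_{\kappa,\Lambda,t}(f)}$. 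The right-hand side is therefore $\oaresolvent_{\txtfock}(\lambda + \imunit\mathsf{M}_{\kappa,\Lambda,t}(f), \napiernum^{\imunit t\omega}f)$, with the complexified first argument $z = \lambda + \imunit \mathsf{M}_{\kappa,\Lambda,t}(f)$. This $z$ is admissible, i.e.\ $z \in \fldcmp \setminus \imunit\fldreal$, precisely because $\lambda \neq 0$ and $\mathsf{M}_{\kappa,\Lambda,t}(f)$ is real. So the formula is consistent with the extended resolvent relations stated earlier.

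\emph{Main obstacle and the cutoff-free case.} The only delicate point is Step 1, namely the identification of unbounded operators together with their domains. Formal differentiation at $s=0$ would not establish this, which is why the argument goes through Stone's theorem rather than a direct commutator computation. For the statement without cutoffs, the same computation applies verbatim, as the paper announces. One uses $\mathsf{M}_t$ in place of $\mathsf{M}_{\kappa,\Lambda,t}$ and restricts to $f$ for which $\mathsf{M}_t(f)$ is defined; Step 1 then only requires the Weyl-level formula of Proposition~\ref{expedition0011094} in the relevant representation.
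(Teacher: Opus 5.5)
Your proof is correct, but it takes a different route from the paper's.

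\textbf{The paper's route.} The paper never identifies the conjugated field. It writes the resolvent as the Laplace transform
$\oaresolvent_{\txtfock}(\lambda,f) = -\imunit\int_0^{(\sgn\lambda)\infty}\napiernum^{-\lambda s}\opfockweyl_{\txtfock}(-sf)\opdmsr{s}$.
It then moves $\alpha_{\txtvanhowe,\txtfock,\kappa,\Lambda,t}$ inside this strongly convergent integral and applies the Weyl-level action of Proposition~\ref{expedition0011107} to $\opfockweyl_{\txtfock}(-sf)$. Using $\mathsf{M}_{\kappa,\Lambda,t}(-sf) = -s\,\mathsf{M}_{\kappa,\Lambda,t}(f)$, it absorbs the phase into the weight $\napiernum^{-(\lambda+\imunit\mathsf{M}_{\kappa,\Lambda,t}(f))s}$. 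The result is then read off as the Laplace transform defining the resolvent with complex first argument.

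This stays entirely among bounded operators, so no domain questions arise. It also produces the complexified resolvent in exactly the Laplace form used later (Definition~\ref{expedition0011111}, Propositions~\ref{expedition0011233} and~\ref{expedition0011613}).

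\textbf{Your route.} You upgrade the Weyl formula, through uniqueness in Stone's theorem, to the transformation law $U_t\opfocksegal_{\txtfock}(f)\inv{U_t} = \opfocksegal_{\txtfock}(\napiernum^{\imunit t\omega}f)+\mathsf{M}_{\kappa,\Lambda,t}(f)$, including domains. This intermediate result is stronger: every bounded Borel function of the field transforms accordingly, not only resolvents. It also makes explicit two points the paper leaves implicit:
\begin{itemize}
\item the meaning $\oaresolvent_{\txtfock}(z,g) = \opfnresolvent{\imunit z - \opfocksegal_{\txtfock}(g)}$ of the complexified resolvent;
\item why $\opreal z = \lambda \neq 0$ suffices.
\end{itemize}

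\textbf{The cost.} You need the implementing unitary explicitly. So your closing remark on the cutoff-free case needs more than the Weyl-level formula. You must work in a representation where $\alpha_{\txtvanhowe,t}$ is unitarily implemented, for instance the GNS representation of an invariant regular state. The Fock representation does not qualify: for the point source, $(\napiernum^{-\imunit t\omega}-1)\mathsf{m}\notin\sphilb{H}$ because of a logarithmic ultraviolet divergence. Hence the cutoff-free dynamics is not unitarily implementable on Fock space.
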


\begin{proof}
Noting the definition of the resolvent here, a Laplace transform computation gives
\begin{equation}
\begin{aligned}
&\alpha_{\txtvanhowe,\txtfock,\kappa,\Lambda,t}(\oaresolvent_{\txtfock}(\lambda,f))
=
\int_0^{(\sgn \lambda) \infty}
\fun{\alpha_{\txtvanhowe,\txtfock,\kappa,\Lambda,t}}{\napiernum^{-\lambda t - \imunit t \opfocksegal_{\txtfock}(t)}}
\opdmsr{t} \\
&=
\int_0^{(\sgn \lambda) \infty}
\napiernum^{-\lambda t}
\fun{\alpha_{\txtvanhowe,\txtfock,\kappa,\Lambda,t}}{\opfockweyl_{\txtfock}(-tf)}
\opdmsr{t} \\
&=
\int_0^{(\sgn \lambda) \infty}
\napiernum^{-\lambda t}
\napiernum^{\imunit \mathsf{M}_{\kappa,\Lambda,t}(-tf)}
\opfockweyl_{\txtfock}(\napiernum^{\imunit t \omega} (-tf))
\opdmsr{t} \\
&=
\int_0^{(\sgn \lambda) \infty}
\napiernum^{-\rbk{\lambda + \imunit \mathsf{M}_{\kappa,\Lambda,t}(f)} t}
\opfockweyl_{\txtfock}(-t \napiernum^{\imunit t \omega} f)
\opdmsr{t} \\
&=
\fun{\oaresolvent_{\txtfock}}
{\lambda + \imunit \mathsf{M}_{\kappa,\Lambda,t}(f), \napiernum^{\imunit t \omega} f}.
\end{aligned}
\end{equation}
\end{proof}

\subsection{Ground State in the Abstract Weyl Algebra at Zero Temperature}\label{expedition0011250}

Since the Hamiltonian of the van Hove model cannot be defined directly in the abstract Weyl algebra, we define the dynamics by an automorphism group, guided by the Fock representation. The ground state is also defined based on the results of the Fock-representation computation, but we must verify that it is indeed a ground state of the corresponding automorphism group in the operator-algebraic sense.

\begin{defn}\label{expedition0011093}
For any $f \in \sphilb{H}$ and any $\kappa,\Lambda$, we define a state on the abstract Weyl algebra $\oaweyl(\sphilb{H})$ by $$\oastate[\psi_{\txtvanhowe,\txtgs,\kappa,\Lambda}](\opfockweyl(f))
=
\fnexp{-\imunit \opreal \mathsf{m}_{\kappa,\Lambda}(f)
-\frac{1}{4} \opform{q}_{\txtnonzero,\infty}(f)}.$$
This is called the formal ground state on the abstract Weyl algebra.
For any $t \in \fldreal$, we define the self-map $\alpha_{\txtvanhowe,\kappa,\Lambda,t}$ of the abstract Weyl algebra by $$\alpha_{\txtvanhowe,\kappa,\Lambda,t}(\opfockweyl(f))
=
\napiernum^{\imunit \mathsf{M}_{\kappa,\Lambda,t}(f)}
\opfockweyl(\napiernum^{\imunit t \omega} f),$$
and call it the van Hove automorphism group on the abstract Weyl algebra.
Without the shorthand notation, $$\alpha_{\txtvanhowe,\kappa,\Lambda,t}(\opfockweyl(f))
=
\napiernum^{\imunit \opreal \fun{\mathsf{m}_{\kappa,\Lambda}}{\rbk{\napiernum^{\imunit t \omega} - \idone} f}}
\fun{\opfockweyl}{\napiernum^{\imunit t \omega} f}.$$
By Proposition~\ref{expedition0011094}, this is indeed an automorphism group of the Weyl algebra.
\end{defn}

The word ``formal'' attached to the ground state at this point reflects the fact that the ground-state property has not yet been proved. In fact, it does satisfy the operator-algebraic definition of a ground state \cite[Definition 5.3.18]{BratteliRobinson2}. Furthermore, taking the limit \(\sminvtemperature \to \infty\) of the \(\sminvtemperature\)-KMS states discussed in Section~\ref{expedition0012092}, one indeed recovers this state.

We define the Fock representation \(\alpha_{\txtvanhowe,\txtfock,\kappa,\Lambda}\) of the automorphism group \(\alpha_{\txtvanhowe,\kappa,\Lambda}\) by \[\alpha_{\txtvanhowe,\txtfock,\kappa,\Lambda,t}(\opfockweyl_{\txtfock}(f))
=
\napiernum^{\imunit \mathsf{M}_{\kappa,\Lambda,t}(f)}
\fun{\opfockweyl_{\txtfock}}{\napiernum^{\imunit t \omega} f}.\] This is called the van Hove model in the Fock representation of the abstract Weyl algebra, or the van Hove automorphism map. By Proposition~\ref{expedition0011094} applied to the abstract Weyl algebra, this is indeed an automorphism group. Note in particular that \(\napiernum^{\imunit \mathsf{M}_{\kappa,\Lambda,t}(f)}\) does not depend on the representation, and only the Weyl operator depends on the representation.

The infrared and ultraviolet cutoffs appear only in \(\mathsf{M}_{\kappa,\Lambda,t}(f)\). The method for removing the infrared and ultraviolet cutoffs for the automorphism group and the ground state should then be essentially clear.

\section{Renormalization of the Ground-State Energy}\label{expedition0011496}

Following \cite{AsaoArai26}, we discuss the renormalization of the ground-state energy accompanying the removal of the ultraviolet cutoff. In this section, we take the dispersion relation \(\omega(k) = \abs{k}\), the dimension \(d = 3\), and discuss the van Hove model for the following class of point sources, which generalizes the point source. The self-energy term diverges linearly to \(-\infty\) as \(\Lambda \to \infty\), and the energy renormalization consists in removing this divergence.

\begin{defn}\label{expedition0011497}
Assuming that the points $x_{\txtrenormalization,i} \in \fldreal^{3}$ are all distinct, we define the source $$\varrho^{(p)}(x)
=
\sum_{n=1}^N \lambda_n \delta(x-x_{\txtrenormalization,n}),
\quad
\faftr{\varrho^{(p)}}(k)
=
\frac{1}{(2 \pi)^{3/2}}
\sum_{n=1}^N
\lambda_n \napiernum^{-\imunit k \vainnprod x_{\txtrenormalization,n}},
\quad
x_{\txtrenormalization,1},
\dots,
x_{\txtrenormalization,N}
\in \fldreal^{3},$$
which we call a point source cluster.
Considering the momentum-space cutoff $\faftr{\varrho_{\kappa,\Lambda}^{(p)}}(k) = \fndef{[\kappa,\Lambda]}(\abs{k}) \faftr{\varrho^{(p)}}(k)$, it satisfies $\lim_{\kappa \to 0, \Lambda \to \infty} \varrho_{\kappa,\Lambda}^{(p)} = \varrho^{(p)}$ in the topology of the space of tempered distributions $\fun{\dsttempered}{\fldreal^{3}}$.
Decomposing the ground-state energy $\physgse$ as $$\physgse
=
-\onehalf
\norm{\omega^{\onehalf} \faftr{\mathsf{m}_{\kappa,\Lambda}}}_{\fun{\lp^{2}}{\greuctr{k}{3}}}^2
=
\physenergy_{\txtself,\kappa,\Lambda}^{(3)}
+\fun{\physenergy_{\txtexchange,\kappa,\Lambda}^{(3)}}{\nfoldvar{x_{\txtrenormalization}}{N}},$$
we define each term as follows:
\begin{equation}
\begin{aligned}
\physenergy_{\txtself,\kappa,\Lambda}^{(3)}
&=
-\sum_{n=1}^N
\frac{\lambda_n^2}{2(2\pi)^3}
\int_{\kappa \leq \abs{k} \leq \Lambda}
\frac{dk}{\abs{k}^2},
\\ 
\fun{\physenergy_{\txtexchange,\kappa,\Lambda}^{(3)}}{\nfoldvar{x_{\txtrenormalization}}{N}}
&=
-\sum_{j\neq n}
\frac{\lambda_j\lambda_n}{2(2\pi)^3}
\int_{\kappa \leq \abs{k} \leq \Lambda}
\frac{\napiernum^{\imunit k \vainnprod (x_{\txtrenormalization,j}-x_{\txtrenormalization,n})}}{\abs{k}^2}
\opdmsr{k},
\\ 
\fun{\physenergy_{\txtexchange,0}^{(3)}}{\nfoldvar{x_{\txtrenormalization}}{N}}
&=
\sum_{1 \leq j < n \leq N}
\lambda_j \lambda_n
V_{\txtcoulomb}(x_{\txtrenormalization,j}-x_{\txtrenormalization,n}),
\\ 
V_{\txtcoulomb}(x)
&=
-\frac{1}{4 \pi \abs{x}}.
\end{aligned}
\end{equation}
\end{defn}

\begin{prop}\label{expedition0011517}
For the cutoff-removal limit of ultraviolet and infrared cutoffs and the many-body Coulomb potential, the bound
\begin{equation}\label{eq:Vbound}
\abs{\fun{\physenergy_{\txtexchange,\kappa,\Lambda}^{(3)}}{\nfoldvar{x_{\txtrenormalization}}{N}}
-\fun{\physenergy_{\txtexchange,0}^{(3)}}{\nfoldvar{x}{N}}}
\leq
\rbk{\sum_{1 \leq i < j \leq N}
\abs{\lambda_i \lambda_j}}
\rbk{\frac{2}{\pi^2} \kappa
+\frac{4}{\pi^2}
\frac{1}{\Lambda r_0^{2}}}
\end{equation}
holds.
In particular, $$\lim_{\kappa \to 0, \Lambda \to \infty}
\fun{\physenergy_{\txtexchange,\kappa,\Lambda}^{(3)}}{\nfoldvar{x_{\txtrenormalization}}{N}}
=
\fun{\physenergy_{\txtexchange,0}^{(3)}}{\nfoldvar{x_{\txtrenormalization}}{N}}.$$
\end{prop}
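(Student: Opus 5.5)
The plan is to reduce everything to a single radial integral for each ordered pair $j \neq n$. Set $x = x_{\txtrenormalization,j} - x_{\txtrenormalization,n} \neq 0$ and $r = \abs{x}$. We take $r_0 = \min_{i \neq j} \abs{x_{\txtrenormalization,i} - x_{\txtrenormalization,j}} > 0$, the minimal separation; this is how we read the undefined symbol in \eqref{eq:Vbound}.

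First we compute the angular integral in spherical coordinates:
\[\int_{\abs{\theta}=1} \napiernum^{\imunit k \vainnprod x} \opdmsr{\theta} = 4\pi \frac{\sin (\abs{k} r)}{\abs{k} r}.\]
The factor $\abs{k}^{-2}$ cancels against the Jacobian $\abs{k}^2$, so
\[\int_{\kappa \leq \abs{k} \leq \Lambda} \frac{\napiernum^{\imunit k \vainnprod x}}{\abs{k}^2} \opdmsr{k}
= \frac{4\pi}{r} \int_{\kappa r}^{\Lambda r} \frac{\sin t}{t} \opdmsr{t}
= \frac{4\pi}{r} \rbk{\fnsineintegral{\Lambda r} - \fnsineintegral{\kappa r}}.\]
Since $\fnsineintegral{\infty} = \pi/2$, the limit of each pair term is $-\frac{\lambda_j\lambda_n}{2(2\pi)^3}\cdot\frac{2\pi^2}{r} = -\frac{\lambda_j\lambda_n}{8\pi r}$. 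Summing the two orderings $(j,n)$ and $(n,j)$ gives $\lambda_j\lambda_n V_{\txtcoulomb}(x)$, which identifies $\fun{\physenergy_{\txtexchange,0}^{(3)}}{\nfoldvar{x_{\txtrenormalization}}{N}}$. The formula also shows that the integral is real, so no imaginary part needs to be tracked.

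Next we bound the error, which comes from two tails. For each unordered pair the difference is
\[\frac{2\abs{\lambda_j\lambda_n}}{2(2\pi)^3} \cdot \frac{4\pi}{r}\rbk{\abs{\tfrac{\pi}{2} - \fnsineintegral{\Lambda r}} + \abs{\fnsineintegral{\kappa r}}}.\]

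For the infrared tail we use $\abs{\sin t / t} \leq 1$, which gives $\abs{\fnsineintegral{\kappa r}} \leq \kappa r$. The factor $r$ cancels the prefactor $1/r$, leaving $\frac{1}{2\pi^2}\kappa$. This is independent of $r$ and well within the stated $\frac{2}{\pi^2}\kappa$.

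For the ultraviolet tail we integrate by parts:
\[\frac{\pi}{2} - \fnsineintegral{R} = \int_R^\infty \frac{\sin t}{t} \opdmsr{t} = \frac{\cos R}{R} - \int_R^\infty \frac{\cos t}{t^2} \opdmsr{t},\]
so $\abs{\frac{\pi}{2} - \fnsineintegral{R}} \leq \frac{2}{R}$. With $R = \Lambda r$, the pair contribution is at most $\frac{1}{2\pi^2}\cdot\frac{1}{r}\cdot\frac{2}{\Lambda r} = \frac{1}{\pi^2 \Lambda r^2}$. Using $r \geq r_0$, this is bounded by $\frac{4}{\pi^2 \Lambda r_0^2}$.

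Summing over unordered pairs $1 \leq i < j \leq N$ yields \eqref{eq:Vbound}. The convergence as $\kappa \to 0$ and $\Lambda \to \infty$ then follows because the bound is uniform in the way the joint limit is taken.

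The main obstacle is getting the ultraviolet tail in the required $1/(\Lambda r_0^2)$ form. The bound must be uniform over all pairs, which is why the minimal separation $r_0$ is needed, and the elementary integration-by-parts estimate on the sine integral has to be carried out with care over its constants. The remaining steps are bookkeeping: the factor $2$ from the two orderings $j \neq n$ and the factor $(2\pi)^{-3}$ from the Fourier normalization in Definition~\ref{expedition0011497}. The stated constants appear to be loose upper bounds, so only crude estimates are needed.
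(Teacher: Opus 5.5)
Your proposal is correct and is essentially the paper's proof: you reduce each pair kernel via spherical coordinates to $\frac{1}{2\pi^2 r}\bigl(\mathrm{Si}(\Lambda r)-\mathrm{Si}(\kappa r)\bigr)$, bound the infrared tail by $\mathrm{Si}(\kappa r)\le \kappa r$ and the ultraviolet tail by integration by parts, and reach the same per-pair bound $\frac{1}{2\pi^2}\kappa+\frac{1}{\pi^2}\frac{1}{\Lambda r_0^2}$ as the paper's uniform estimate, which the stated constants then dominate. Your sign in $\frac{\pi}{2}-\mathrm{Si}(R)=\int_R^\infty \frac{\sin t}{t}\,dt$ is the correct one; the paper writes it reversed, which is harmless because only absolute values are used. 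One small wording fix: after the triangle inequality, ``the difference is'' should read ``the difference is bounded by''.
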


\begin{proof}
(Preparation): The following identity holds:
\begin{equation}\label{eq:kernel-id}
\frac{1}{(2\pi)^3}
\int_{\fldreal^3}
\frac{\napiernum^{\imunit k \vainnprod x}}{\abs{k}^2}
\opdmsr{k}
=
\frac{1}{4 \pi \abs{x}}
\quad
(x \neq 0).
\end{equation}
In particular, the Coulomb kernel is $$V_{\txtcoulomb}^{[1]}(x) = -\dfrac{1}{4 \pi \abs{x}}.$$
The superscript indicates the one-body case.

(Explicit formulas for the cutoff kernel and the error): Let $\fnsineintegral{x} = \intsineintegral{x}$ denote the sine integral.
Fixing the radius $r = \abs{x} > 0$ and using spherical symmetry, we obtain
\begin{align}
V_{\kappa,\Lambda}^{[1]}(r)
&=
\frac{1}{(2\pi)^3}
\int_{\kappa \leq \abs{k} \leq \Lambda}
\frac{\napiernum^{\imunit k \vainnprod x}}{\abs{k}^2}
\opdmsr{k}
=
\frac{1}{(2\pi)^3}
4 \pi
\int_{\kappa}^{\Lambda}
\frac{\sin(kr)}{kr}
\opdmsr{k} \\
&=
\frac{1}{2 \pi^2}
\frac{1}{r}
\rbk{\fnsineintegral{\Lambda r}-\fnsineintegral{\kappa r}}, \label{eq:phi-cut} \\
V_0^{[1]}(r)
&=
\frac{1}{(2\pi)^3}
\int_{\fldreal^3}
\frac{\napiernum^{\imunit k \vainnprod x}}{\abs{k}^2}
\opdmsr{k}
=
\frac{1}{2 \pi^2}
\int_{0}^{\infty}
\frac{\sin kr}{k} \opdmsr{k} \\
&=
\frac{1}{2 \pi^2}
\frac{1}{r}
\fnsineintegral{\infty}
=
\frac{1}{4\pi r}. \label{eq:phi0}
\end{align}
Thus, noting $\fnsineintegral{\infty} = \frac{\pi}{2}$, we have
\begin{equation}
\begin{aligned}
V_{\kappa,\Lambda}^{[1]}(r)
&=
V_0^{[1]}(r)
+\delta_{\kappa,\Lambda}^{[1]}(r), \\
\delta_{\kappa,\Lambda}^{[1]}(r)
&=
\frac{1}{2 \pi^2}
\frac{1}{r}
\rbk{\fnsineintegral{\Lambda r}
-\frac{\pi}{2}
-\fnsineintegral{\kappa r}}.
\end{aligned}
\label{eq:delta-def}
\end{equation}

(Upper bounds on the error: infrared and ultraviolet ends): We use the basic property $0 \leq \fnsineintegral{x} \leq \min \setone{x, \dfrac{\pi}{2}}$ of the sine integral $\fnsineintegral{x} = \intsineintegral{x}$.

(Infrared end ($\kappa \downarrow 0$)): Setting $\delta_{\kappa,\Lambda}^{\txtinflaredred}(r) = -\frac{1}{2 \pi^2} \frac{1}{r} \fnsineintegral{\kappa r}$ and using the estimate $\fnsineintegral{\kappa r} \leq \kappa r$, we obtain
\begin{equation}\label{eq:IR-bound}
\abs{\delta_{\kappa,\Lambda}^{\txtinflaredred}(r)}
=
\frac{1}{2\pi^2}
\frac{\fnsineintegral{\kappa r}}{r}
\leq
\frac{1}{2 \pi^2} \kappa.
\end{equation}

(Ultraviolet end ($\Lambda \uparrow \infty$)): Set $\delta_{\kappa,\Lambda}^{\txtultraviolet}(r) = \frac{1}{2 \pi^2} \frac{1}{r} \rbk{\fnsineintegral{\Lambda r}-\frac{\pi}{2}}$.
Integration by parts gives $$\fnsineintegral{\Lambda r}-\frac{\pi}{2} = \int_{\Lambda r}^{\infty} \frac{\sin t}{t} \opdmsr{t} = \frac{\cos(\Lambda r)}{\Lambda r} - \int_{\Lambda r}^{\infty} \frac{\cos t}{t^2} \opdmsr{t}.$$
Therefore,
\begin{equation}\label{eq:UV-bound}
\abs{\delta_{\kappa,\Lambda}^{\txtultraviolet}(r)}
\leq
\frac{1}{2 \pi^2}
\frac{1}{r} \cdot \frac{2}{\Lambda r}
=
\frac{1}{\pi^2}
\frac{1}{\Lambda r^2}.
\end{equation}

(Uniformization with nearest-neighbor distance): Assume $\min_{i \neq j} \abs{x_{\txtrenormalization,i}-x_{\txtrenormalization,j}} \geq r_0 > 0$.
Define the error term as $\delta_{\kappa,\Lambda}(r) = \delta_{\kappa,\Lambda}^{\txtinflaredred}(r) + \delta_{\kappa,\Lambda}^{\txtultraviolet}(r)$.
For all $r \geq r_0$,
\begin{equation}\label{eq:uniform}
\sup_{r \geq r_0}
\abs{\delta_{\kappa,\Lambda}(r)}
\leq
\frac{1}{2 \pi^2}
\kappa
+\frac{1}{\pi^2}
\frac{1}{\Lambda r_0^{2}}.
\end{equation}

(Application to multi-point interactions with fixed constants): Defining the pair kernel and the interaction as
\begin{equation}\label{eq:pair-kernel}
V_{\kappa,\Lambda}^{[1]}(x)
=
\frac{1}{(2 \pi)^3}
\int_{\kappa \leq \abs{k} \leq \Lambda}
\frac{\napiernum^{\imunit k \vainnprod x}}{\abs{k}^2}
\opdmsr{k},
\quad
V_0^{[1]}(x)
=
\frac{1}{4\pi |x|},
\end{equation}
\begin{equation}
\begin{aligned}
\fun{\physenergy_{\txtexchange,\kappa,\Lambda}^{(3)}}{\nfoldvar{x_{\txtrenormalization}}{N}}
&=
-\sum_{1 \leq i < j\leq N}
\lambda_i \lambda_j
V_{\kappa,\Lambda}^{[1]}(x_{\txtrenormalization,i}-x_{\txtrenormalization,j}),
\\ 
\fun{\physenergy_{\txtexchange,0}^{(3)}}{\nfoldvar{x_{\txtrenormalization}}{N}}
&=
-\sum_{1 \leq i < j \leq N}
\lambda_i \lambda_j
V_0^{[1]}(x_{\txtrenormalization,i}-x_{\txtrenormalization,j}),
\end{aligned}
\label{eq:VkL}
\end{equation}
the difference is
\begin{equation}\label{eq:Vdiff}
\fun{\physenergy_{\txtexchange,\kappa,\Lambda}^{(3)}}{\nfoldvar{x_{\txtrenormalization}}{N}}
-\fun{\physenergy_{\txtexchange,0}^{(3)}}{\nfoldvar{x_{\txtrenormalization}}{N}}
=
-\sum_{i < j}
\lambda_i \lambda_j
\delta_{\kappa,\Lambda}(x_{\txtrenormalization,i}-x_{\txtrenormalization,j}).
\end{equation}
Therefore, in the region with nearest-neighbor distance $r_0 > 0$,
\begin{equation}
\abs{\fun{\physenergy_{\txtexchange,\kappa,\Lambda}^{(3)}}{\nfoldvar{x_{\txtrenormalization}}{N}}
-\fun{\physenergy_{\txtexchange,0}^{(3)}}{\nfoldvar{x_{\txtrenormalization}}{N}}}
\leq
\rbk{\sum_{1 \leq i < j \leq N}
\abs{\lambda_i \lambda_j}}
\rbk{\frac{2}{\pi^2} \kappa
+\frac{4}{\pi^2}
\frac{1}{\Lambda r_0^{2}}}
\end{equation}
holds.
\end{proof}

\section{Ground State in the Resolvent Algebra at Zero Temperature}\label{expedition0011615}

We define the objects based on the discussion of the Weyl algebra in Section~\ref{expedition0012095}.

\subsection{Basic Setup}\label{basic-setup}

Based on Proposition~\ref{expedition0011254}, we define the self-maps of the abstract resolvent algebra for any \(t \in \fldreal\) as \begin{equation}
\begin{aligned}
\alpha_{1,\kappa,\Lambda,t}(\oaresolvent(z,f))
&=
\fun{\oaresolvent}{z + \imunit \mathsf{M}_{\kappa,\Lambda,t}(f), f}, \\
\alpha_{2,t}(\oaresolvent(z,f))
&=
\fun{\oaresolvent}{z, \napiernum^{\imunit t \omega} f}, \\
\alpha_{\txtvanhowe,\kappa,\Lambda,t}(\oaresolvent(z,f))
&=
\funrbk{\alpha_{1,\kappa,\Lambda,t} \circ \alpha_{2,t}}{\oaresolvent(z,f)} \\
&=
\fun{\oaresolvent}{z + \imunit \mathsf{M}_{\kappa,\Lambda,t}(f), \napiernum^{\imunit t \omega} f},
\end{aligned}
\end{equation} and call this the automorphism group of the van Hove model with infrared and ultraviolet cutoffs, or simply the automorphism group of the cutoff van Hove model or the van Hove model.

Furthermore, \(\alpha_{\txtvanhowe,t}\) and \(\alpha_{1,t}\) without cutoffs are defined, in a form arising from the domain of the functional \(\mathsf{m}\), as self-maps of the abstract resolvent algebra \(\oaresolventalgebra(\dom \mathsf{m}, \sigma)\): \begin{equation}
\begin{aligned}
\alpha_{1,t}(\oaresolvent(z,f))
&=
\fun{\oaresolvent}{z + \imunit \mathsf{M}_{t}(f), f}, \\
\alpha_{\txtvanhowe,t}(\oaresolvent(z,f))
&=
\funrbk{\alpha_{1,t} \circ \alpha_{2,t}}{\oaresolvent(z,f)} \\
&=
\fun{\oaresolvent}{z + \imunit \mathsf{M}_{t}(f), \napiernum^{\imunit t \omega} f},
\end{aligned}
\end{equation} and this is called the automorphism group of the van Hove model, or the automorphism group of the van Hove model without cutoffs. In particular, Proposition~\ref{expedition0011109} verifies that \[\alpha_{\txtvanhowe,\kappa,\Lambda}
=
\fml{\alpha_{\txtvanhowe,\kappa,\Lambda,t}}{t \in \fldreal},
\quad
\alpha_{\txtvanhowe}
=
\fml{\alpha_{\txtvanhowe,t}}{t \in \fldreal}\] are automorphism groups.

\begin{prop}\label{expedition0011109}
The above-defined $\alpha_{\txtvanhowe,\kappa,\Lambda}$ is an automorphism group.
In particular, even when the domain is restricted to $\oaresolventalgebra(\dom \mathsf{m},\sigma)$, $\alpha_{\txtvanhowe,\kappa,\Lambda}$ is an automorphism group on it.
With an appropriate restriction of the domain, $\alpha_{\txtvanhowe}$ is an automorphism group on the restricted subalgebra.
\end{prop}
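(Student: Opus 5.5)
The approach is to use the universal property of the resolvent algebra. Since $\oaresolventalgebra_0$ is the universal unital $\ast$-algebra on generators subject to the resolvent relations, a map defined on generators extends to a $\ast$-homomorphism of $\oaresolventalgebra_0$ exactly when the images satisfy the same relations. The $\oacstar$-norm of \cite[Definition 3.4]{BuchholzGrundling2} is the supremum over representations of $\oaresolventalgebra_0$. Composing any representation with the homomorphism gives another representation, so the extension is contractive and passes to the completion. I would treat $\alpha_{2,t}$ and $\alpha_{1,\kappa,\Lambda,t}$ separately, and then obtain $\alpha_{\txtvanhowe,\kappa,\Lambda,t}$ as their composition.

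For $\alpha_{2,t}$, $\napiernum^{\imunit t\omega}$ is a real-linear unitary and preserves $\sigma(f,g)=\opimag\bkt{f}{g}$. Every resolvent relation is therefore mapped to an instance of itself, and $\alpha_{2,-t}$ is the inverse. It also preserves $\dom\mathsf{m}$, since $\abs{\napiernum^{\imunit t\omega}\faftr f}=\abs{\faftr f}$.

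For $\alpha_{1,\kappa,\Lambda,t}$, I would regard $\oaresolvent(z+\imunit c,f)$ with $c=\mathsf{M}_{\kappa,\Lambda,t}(f)\in\fldreal$ as the resolvent of the shifted field $\opfocksegal(f)+c$. Concretely, in any regular representation $\oaresolvent(z,f)=\invrbk{\imunit z-\opfocksegal(f)}$, and $\imunit(z+\imunit c)-\opfocksegal(f)=\imunit z-(\opfocksegal(f)+c)$. The check then runs through the relations one at a time.
\begin{enumerate}
\item The relation $\oaresolvent(\lambda,0)=-\frac{\imunit}{\lambda}\idone$ is preserved because $\mathsf{M}(0)=0$.
\item The adjoint relation is preserved because $-\cmpconj{(z+\imunit c)}=-\cmpconj z+\imunit c$.
\item The scaling relation is preserved because $\mathsf{M}(\nu f)=\nu\mathsf{M}(f)$ by real-linearity.
\item The resolvent identity holds with the same $f$, since the difference of the first arguments is unchanged.
\item The commutator relation holds because the shift is central.
\item The product relation $\oaresolvent(z,f)\oaresolvent(w,g)=\oaresolvent(z+w,f+g)(\cdots)$ holds because $\mathsf{M}(f+g)=\mathsf{M}(f)+\mathsf{M}(g)$. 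This additivity is exactly what makes the shifted first argument $z+w+\imunit\mathsf{M}(f+g)$ match.
\end{enumerate}
One technical point is that the first argument leaves $\fldreal$, where the relations were originally posed. I would use the analytic extension in the first variable and its relations stated above, which hold on $\fldcmp\setminus\imunit\fldreal$. Since $\opreal(z+\imunit c)=\opreal z\neq0$, the shifted argument stays in that domain.

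For the group property, $\alpha_{\txtvanhowe,\kappa,\Lambda,t}\circ\alpha_{\txtvanhowe,\kappa,\Lambda,s}$ sends $\oaresolvent(z,f)$ to $\oaresolvent\bigl(z+\imunit\mathsf{M}_{s}(f)+\imunit\mathsf{M}_{t}(\napiernum^{\imunit s\omega}f),\napiernum^{\imunit(t+s)\omega}f\bigr)$, where $\mathsf{M}$ is shorthand for $\mathsf{M}_{\kappa,\Lambda}$. By the cocycle identity of Proposition~\ref{expedition0011238}, this equals $\alpha_{\txtvanhowe,\kappa,\Lambda,t+s}(\oaresolvent(z,f))$. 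Together with $\alpha_0=\id$, this gives bijectivity with inverse $\alpha_{-t}$. The same computation works verbatim on $\oaresolventalgebra(\dom\mathsf{m},\sigma)$, because $\dom\mathsf{m}$ is invariant under $\napiernum^{\imunit t\omega}$.

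For the uncut $\alpha_{\txtvanhowe}$, the only change is that $\mathsf{M}_t(f)=\opreal\mathsf{m}\bigl((\napiernum^{\imunit t\omega}-1)f\bigr)$ must be finite and real-linear. This holds on $\dom\mathsf{m}$, and this is the ``appropriate restriction'': the subalgebra $\oaresolventalgebra(\dom\mathsf{m},\sigma)$, whose underlying subspace is mapped into itself by both $\napiernum^{\imunit t\omega}$ and the shift.

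I expect the main obstacle to be justifying that the shifted assignment respects the $\oacstar$-norm, given that the relations are only directly available for real $\lambda$. To handle this, I would first try to verify the shift representation-wise. In every regular representation (faithful by Proposition~\ref{expedition0011838}), $\oaresolvent(z+\imunit c,f)$ is the resolvent of the self-adjoint operator $\opfocksegal(f)+c$, and $f\mapsto\opfocksegal(f)+\mathsf{M}(f)\idone$ is again a regular Weyl-type field. If an isometry argument is needed, I would use the inductive-limit description over finite-dimensional non-degenerate $S$ in Proposition~\ref{expedition0011838}. On each such $S$, the functional $\mathsf{M}|_S$ is implemented by a Weyl unitary in the Schrödinger representation, which makes the extension isometric.
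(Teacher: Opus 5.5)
Your proposal is correct and follows the paper's route. The paper likewise checks that the generator assignment preserves the complex-argument resolvent relations, gets the group law from the cocycle identity of Proposition~\ref{expedition0011238}, and uses the $\napiernum^{\imunit t\omega}$-invariance of $\dom\mathsf{m}$ for the restricted and cutoff-free cases; your universal-norm argument for passing to the $\oacstar$-completion makes explicit a step the paper leaves implicit. One small precision: the stated formula is the composition $\alpha_{2,t}\circ\alpha_{1,\kappa,\Lambda,t}$, whereas the order $\alpha_{1}\circ\alpha_{2}$ written in the definition would produce the shift $\mathsf{M}_{\kappa,\Lambda,t}(\napiernum^{\imunit t\omega}f)$ instead, so take your composition in that order.
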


\begin{proof}
Since the domain restriction does not change the argument, we work with the cutoff notation.
By Proposition~\ref{expedition0011238}, the auxiliary functional $\mathsf{M}_{\kappa,\Lambda,t}$ is a cocycle.

(Automorphism property): Under the assumption that the first argument of the resolvent is adjusted to avoid becoming $0$, it suffices to verify directly the preservation of the resolvent relations.

(Restricted automorphism property): Let $f \in \dom \mathsf{m}$ be arbitrary.
Then $$\alpha_{\txtvanhowe,\kappa,\Lambda,t}(\oaresolvent(z,f)) = \oaresolvent(z + \imunit \mathsf{M}_{\kappa,\Lambda}(f), \napiernum^{\imunit t \omega} f)$$ is well-defined.
In particular, $\abs{\napiernum^{\imunit t \omega} f} = \abs{f}$ preserves integrability, so $\dom \mathsf{m}$ is invariant.
Hence $\fnrestr{\alpha_{\txtvanhowe,\kappa,\Lambda}}{\oaresolventalgebra(\dom \mathsf{m},\sigma)}$ is an automorphism group of $\oaresolventalgebra(\dom \mathsf{m},\sigma)$.

(Group property): The argument is the same as in Proposition~\ref{expedition0011094}.
\end{proof}

Here we defined the van Hove model via an automorphism group. Since it is not straightforward to confirm whether this is the van Hove model Hamiltonian without seeing the Hamiltonian directly, we investigate the generator of the automorphism group. Although the derivation \(\oaderiv_2\) corresponding to the automorphism group \(\alpha_{2}\) is clearly generated by the free Hamiltonian, the Hamiltonian cannot be introduced without considering an appropriate representation and its pullback. On the other hand, since the resolvent is analytic in the first variable, the derivation \(\oaderiv_{1,\kappa,\Lambda}\) of the automorphism group \(\alpha_{1,\kappa,\Lambda}\) can be derived independently of the representation, and one can observe that it is generated by the perturbation term \(\opfocksegal(\omega^{-\onehalf} f)\) of the van Hove model.

\begin{prop}
Taking $\oaresolventalgebra_{\txtfin}$ as a dense subalgebra and denoting the derivation as the generator of $\alpha_{1,\kappa,\Lambda}$ on it by $\oaderiv_{1,\kappa,\Lambda}$, this satisfies $$\oaderiv_{1,\kappa,\Lambda}(\oaresolvent(z,f)) = \opimag \bkt{\omega \mathsf{m}_{\kappa,\Lambda}}{f}_{\sphilb{H}} \oaresolvent(z,f)^2.$$
The same property holds for the objects without cutoffs.
\end{prop}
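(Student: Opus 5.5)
The plan is to compute the derivation directly on generators, using only the resolvent relations, so that no representation is needed. Fix $z \in \fldcmp \setminus \imunit \fldreal$ and $f \in \sphilb{H}$ (or $f \in \dom \mathsf{m}$ in the cutoff-free case). Put $z_t = z + \imunit \mathsf{M}_{\kappa,\Lambda,t}(f)$. Since $\mathsf{M}_{\kappa,\Lambda,t}(f)$ is real, $\opreal z_t = \opreal z$, so $z_t$ stays away from the imaginary axis. The bound $\norm{\oaresolvent(z_t,f)} \leq 1/\abs{\opreal z}$ should then hold uniformly in $t$, by analytic continuation of $\norm{\oaresolvent(\lambda,f)} = 1/\abs{\lambda}$ from \cite{BuchholzGrundling2}, or via any faithful regular representation (Proposition~\ref{expedition0011838}).

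The key step is the first resolvent relation $\oaresolvent(z,f) - \oaresolvent(w,f) = \imunit (w-z) \oaresolvent(z,f)\oaresolvent(w,f)$, applied with $w = z_t$:
\begin{equation*}
\oaresolvent(z_t,f) - \oaresolvent(z,f)
=
\mathsf{M}_{\kappa,\Lambda,t}(f)\, \oaresolvent(z_t,f)\, \oaresolvent(z,f).
\end{equation*}
This shows that $t \mapsto \oaresolvent(z_t,f)$ is norm continuous, since $\mathsf{M}_{\kappa,\Lambda,t}(f) \to 0$ and the norms are uniformly bounded. Dividing by $t$ and letting $t \to 0$ gives
\begin{equation*}
\fnrestr{\od{}{t}}{t=0} \oaresolvent(z_t,f) = \fnrestr{\od{}{t}}{t=0}\mathsf{M}_{\kappa,\Lambda,t}(f) \cdot \oaresolvent(z,f)^2 .
\end{equation*}
The scalar derivative is elementary: $\fnrestr{\od{}{t}}{t=0} \opreal \bkt{\mathsf{m}_{\kappa,\Lambda}}{(\napiernum^{\imunit t\omega}-\idone) f} = \opreal \bkt{\mathsf{m}_{\kappa,\Lambda}}{\imunit \omega f}$. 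With the cutoffs, $\omega \mathsf{m}_{\kappa,\Lambda} \in \sphilb{H}$, so dominated convergence in momentum space justifies differentiating under the integral. Rewriting $\opreal \bkt{\mathsf{m}_{\kappa,\Lambda}}{\imunit\omega f}$ as $\opimag \bkt{\omega \mathsf{m}_{\kappa,\Lambda}}{f}$ uses the self-adjointness of $\omega$ and the sesquilinearity convention for $\bkt{\cdot}{\cdot}$. I would fix the convention consistently with Fact~\ref{expedition0010960} and Proposition~\ref{expedition0011107}, and I expect this bookkeeping to be where the sign can go wrong; the generator convention ($\oaderiv = \fnrestr{\od{}{t}}{t=0}\alpha_t$ rather than $-\imunit$ times it) also has to be matched.

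To pass from generators to $\oaresolventalgebra_{\txtfin}$, note that $\alpha_{1,\kappa,\Lambda,t}$ is a $\ast$-automorphism by Proposition~\ref{expedition0011109}. Norm differentiability of each factor and the Leibniz rule for products of norm-differentiable bounded functions then show that every finite product of generators lies in the domain of the generator. The derivation on $\oaresolventalgebra_{\txtfin}$ is therefore determined by its values above.

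For the cutoff-free objects the computation is identical, with $\mathsf{m}$ in place of $\mathsf{m}_{\kappa,\Lambda}$. Here the main obstacle is only the justification of $\fnrestr{\od{}{t}}{t=0}\mathsf{M}_t(f) = \opimag \bkt{\omega\mathsf{m}}{f}$. The integrand $\cmpconj{\faftr{\varrho}}\,\omega^{-3/2}(\napiernum^{\imunit t\omega}-1)\faftr f$ has difference quotient dominated by $\abs{k}^{-1/2}\abs{\faftr f(k)}$. So I would restrict to the subspace where this is integrable (i.e.\ $f \in \dom \mathsf{m}$ with $\omega f \in \dom \mathsf{m}$), which is the ``appropriate restriction'' already allowed in Proposition~\ref{expedition0011109}. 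On that subspace dominated convergence applies and the same formula follows.
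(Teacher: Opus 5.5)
Your proposal is correct and is essentially the paper's argument. The paper likewise differentiates $t\mapsto\oaresolvent(z+\imunit\mathsf{M}_{\kappa,\Lambda,t}(f),f)$ at $t=0$ via the resolvent relation and the chain rule, citing analyticity in the first variable; you instead read the derivative off the exact identity $\oaresolvent(z_t,f)-\oaresolvent(z,f)=\mathsf{M}_{\kappa,\Lambda,t}(f)\,\oaresolvent(z_t,f)\,\oaresolvent(z,f)$ with the uniform bound $1/\abs{\opreal z}$, and both arrive at the same intermediate $\opreal\bkt{\mathsf{m}_{\kappa,\Lambda}}{\imunit\omega f}\,\oaresolvent(z,f)^2$.

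Your caution about the final rewriting is warranted: with the paper's inner-product convention (antilinear in the first slot) that prefactor equals $-\opimag\bkt{\omega\mathsf{m}_{\kappa,\Lambda}}{f}$, so the stated sign depends on a generator convention the paper does not track either. Your restriction to $\omega f\in\dom\mathsf{m}$ in the cutoff-free case is also a genuine refinement of the paper's bare claim that the same argument applies.
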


\begin{rem}
In the Fock representation, the commutation relation between the Segal field operator and its resolvent yields, on an appropriate subspace, $$\commutator{\opfocksegal_{\txtfock}(g)}{\opfnresolvent{\opfocksegal_{\txtfock}(f) - z}} = \imunit \opimag \bkt{f}{g} \rbk{\opfocksegal_{\txtfock}(f) - z}^{-2}.$$
Using this, one can rewrite $$\oaderiv_{1,\kappa,\Lambda}(\oaresolvent(z,f)) = \commutator{\fun{\opfocksegal_{\txtfock}}{\omega^{-\onehalf} \varrho_{\kappa,\Lambda}}}{\oaresolvent(z,f)},$$
making it apparent that this is the derivation by the perturbation term $\opfocksegal_{\txtfock}(\omega^{-\onehalf} \varrho_{\kappa,\Lambda})$ of the van Hove model.
\end{rem}

\begin{proof}
Since the domain restriction does not change the argument here either, we work with the cutoff notation.

By \cite[P.2730, Theorem 3.6(iv)]{BuchholzGrundling2}, $\oaresolvent(\lambda,f)$ is analytic in the first variable.
Using the fourth resolvent relation, i.e., $$\oaresolvent(z,f) - \oaresolvent(w,f) = \imunit (w-z) \oaresolvent(z,f) \oaresolvent(w,f),$$
and the chain rule, we obtain
\begin{equation}
\begin{aligned}
&\oaderiv_{1,\kappa,\Lambda}(\oaresolvent(z,f))
=
-\imunit
\fnrestr{\od{\mathsf{M}_{\kappa,\Lambda,t}(f)}{t}}{t=0}
\cdot
\opod{z}
\oaresolvent(z,f) \\
&=
-\imunit
\opreal \mathsf{m}_{\kappa,\Lambda}(\imunit \omega f)
\cdot
\imunit
\oaresolvent(z,f)^2 \\
&=
\opimag \bkt{\omega \mathsf{m}_{\kappa,\Lambda}}{f}_{\sphilb{H}}
\oaresolvent(z,f)^2.
\end{aligned}
\end{equation}
The expression $\mathsf{m}_{\kappa,\Lambda}(\imunit \omega f)$ appearing in the above computation is formal notation; in practice, writing $\mathsf{M}_{\kappa,\Lambda,t}(f)$ as an integral and computing directly yields $\bkt{\omega \mathsf{m}_{\kappa,\Lambda}}{f}$ without difficulty.
\end{proof}

To simplify the discussion related to Proposition~\ref{expedition0011112} on the construction of the ground state, we adopt the approach of defining the functional via the two-point values, i.e., the values on products of two generators. In fact, even starting from only the one-point values, one can derive the two-point values used in the definition by exploiting \(\alpha_{\txtvanhowe,\kappa,\Lambda}\)-invariance, and in particular invariance under the derivation \(\oaderiv_{\txtvanhowe,\kappa,\Lambda}\).

\begin{defn}\label{expedition0011111}
For any $$\lambda,\mu \in \fldmultiplicativegroup{\fldreal}, \quad f,g \in \sphilb{H}, \quad \kappa,\Lambda > 0,$$
we define a functional on the abstract resolvent algebra $\oaresolventalgebra(\sphilb{H},\sigma)$ by
\begin{equation}
\begin{aligned}
&\oastate[\psi_{\txtvanhowe,\txtgs,\kappa,\Lambda}]
(\oaresolvent(\lambda,f) \cdot \oaresolvent(\mu,g)) \\
&=
-\int_0^{(\sgn \lambda) \infty}
\int_0^{(\sgn \mu) \infty}
\napiernum^{-\lambda s - \mu t}
\mathsf{T}_{\kappa,\Lambda}(s,f;t,g)
\opdmsr{s}
\opdmsr{t}, \\
&\mathsf{T}_{\kappa,\Lambda}(s,f;t,g) \\
&=
\fnexp{-\frac{1}{4} \opform{q}_{\txtnonzero,\infty}(sf + tg)
+\imunit \opreal \mathsf{m}_{\kappa,\Lambda}(sf+tg)
-\frac{\imunit}{2} st \opimag \bkt{f}{g}},
\end{aligned}
\end{equation}
and call it the ground state of the van Hove model with infrared and ultraviolet cutoffs, or simply the ground state of the cutoff van Hove model or the van Hove model.

Setting $\mu = 1$ and $g = 0$ and noting the resolvent relation $\oaresolvent(1,0) = \frac{1}{\imunit}$, the one-point value is
\begin{equation}
\begin{aligned}
&\fun{\oastate[\psi]_{\txtgs,\kappa,\Lambda}}{\oaresolvent(\lambda,f)} \\
&=
-\imunit
\int_0^{(\sgn \lambda) \infty}
\napiernum^{-t \rbk{\lambda - \imunit \opreal \mathsf{m}_{\kappa,\Lambda}(f)}}
\napiernum^{-t^2 \frac{1}{4} \opform{q}_{\txtnonzero,\infty}(f)}
\opdmsr{t}.
\end{aligned}
\end{equation}
By the Cauchy transform of Gaussian measures, $\oastate[\psi_{\txtvanhowe,\txtgs,\kappa,\Lambda}](\oaresolvent(\lambda,f))$ coincides with the Cauchy transform of the normal distribution $$\fun{\msrcal{\prbnormaldist}}{-\opreal \mathsf{m}_{\kappa,\Lambda}(f), \frac{1}{2} \opform{q}_{\txtnonzero,\infty}(f)}.$$

When $f,g \in \dom \mathsf{m}$ in the above discussion, symbolically removing $\kappa,\Lambda$ gives a functional $\oastate[\psi_{\txtvanhowe,\txtgs}]$ on $\oaresolventalgebra(\dom \mathsf{m},\sigma)$ defined similarly with respect to the automorphism group of the cutoff van Hove model.
This is called the ground state of the van Hove model, or the ground state of the van Hove model without cutoffs.
\end{defn}

Although we have called it the ground state, it is not even obvious from the definition that it is a linear functional. We prepare several results below before proving the ground-state property in Proposition~\ref{expedition0011112}.

\begin{rem}
Although we do not assume it here, since the state is quasi-free in the Fock representation, one could reasonably assume quasi-freeness from the start.
Here we guarantee quasi-freeness by using Proposition~\ref{expedition0011258} and identifying the state with the classical state described later.
While this is convenient for proving regularity, it is not necessarily the most natural way to organize the argument.

If regularity, quasi-freeness, and $\alpha$-invariance are assumed, the ground state is unique since quasi-free states are determined by their mean and variance.
In particular, under infrared regularization, uniqueness holds in the Fock representation.
\end{rem}

\subsection{Verification of the Ground-State Property}\label{verification-of-the-ground-state-property}

We discuss the Gaussian measure representation of expectation values for the ground state.

\begin{prop}\label{expedition0011244}
\begin{enumerate}
\item
For any $\lambda \in \fldmultiplicativegroup{\fldreal}$ and $f \in \sphilb{H}$, the probability measure $p_{m_{\kappa,\Lambda,f},\sigma_f^2}$ with mean $m_{\kappa,\Lambda,f} = -\opreal \mathsf{m}_{\kappa,\Lambda}(f)$ and variance $\sigma_f^2 = \frac{1}{2} \opform{q}_{\txtnonzero,\infty}(f)$ gives the representation $$G_{f}(\lambda) = \oastate[\psi_{\txtvanhowe,\txtgs,\kappa,\Lambda}](\oaresolvent(\lambda,f)) = \int_{\fldreal} \frac{1}{\imunit \lambda - x} \opdmsr{p_{m_{\kappa,\Lambda,f},\sigma_f^2}(x)}.$$

\item
For any $\lambda,\mu \in \fldmultiplicativegroup{\fldreal}$ and $f,g \in \sphilb{H}$ with $\opimag \bkt{f}{g} = 0$, the two-dimensional Gaussian measure $p_{m_{\kappa,\Lambda,f},m_{\kappa,\Lambda,g},C_{f,g}}$ with mean vector $\vecbk{m_{\kappa,\Lambda,f},m_{\kappa,\Lambda,g}}$ from part (1) and covariance $$C_{f,g}
=
\onehalf
\begin{pmatrix}
\opform{q}_{\txtnonzero,\infty}(f) & \opreal \bkt{f}{g} \\
\opreal \bkt{f}{g} & \opform{q}_{\txtnonzero,\infty}(g)
\end{pmatrix}
=
\begin{pmatrix}
\sigma_f^2 & \onehalf \opreal \bkt{f}{g} \\
\onehalf \opreal \bkt{f}{g} & \sigma_g^2
\end{pmatrix}$$
gives the representation
\begin{equation}
\begin{aligned}
G_{f,g}(\lambda,\mu)
&=
\oastate[\psi_{\txtvanhowe,\txtgs,\kappa,\Lambda}](\oaresolvent(\lambda,f) \oaresolvent(\mu,g)) \\
&=
\int_{\fldreal^{2}}
\frac{1}{\imunit \lambda - x}
\frac{1}{\imunit \mu - y}
\opdmsr{p_{m_{\kappa,\Lambda,f},m_{\kappa,\Lambda,g},C_{f,g}}(x,y)}.
\end{aligned}
\end{equation}
\end{enumerate}
The same properties hold for the objects without cutoffs.
\end{prop}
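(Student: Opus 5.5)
The plan is to reduce both statements to one scalar Laplace-transform identity and then apply Fubini together with the characteristic function of a Gaussian measure. For real $\lambda \neq 0$ and real $x$ we have $\imunit\lambda - x = \imunit(\lambda + \imunit x)$, and hence
\begin{equation*}
\frac{1}{\imunit \lambda - x}
=
-\imunit \int_0^{(\sgn \lambda)\infty} \napiernum^{-t(\lambda + \imunit x)} \opdmsr{t}.
\end{equation*}
This holds for both signs of $\lambda$. For $\lambda<0$, the integral over $(0,-\infty)$ of $\napiernum^{-tc}$ with $\opreal c<0$ still equals $1/c$. The integrand is bounded in modulus by $\napiernum^{-\abs{\lambda t}}$, uniformly in $x$. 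Integrating against any probability measure on $\fldreal$ is therefore justified by Fubini, and the order of integration can be exchanged.

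\emph{Part (1).} I integrate the identity against $p_{m_{\kappa,\Lambda,f},\sigma_f^2}$ and exchange the integrals. The inner integral is the characteristic function of the Gaussian evaluated at $-t$:
\begin{equation*}
\int_{\fldreal} \napiernum^{-\imunit t x} \opdmsr{p_{m,\sigma^2}(x)} = \napiernum^{-\imunit t m - \frac{1}{2}t^2\sigma^2}.
\end{equation*}
With $m = -\opreal\mathsf{m}_{\kappa,\Lambda}(f)$ and $\sigma^2 = \frac12 \opform{q}_{\txtnonzero,\infty}(f)$, the right-hand side is $\napiernum^{\imunit t \opreal \mathsf{m}_{\kappa,\Lambda}(f) - \frac14 t^2 \opform{q}_{\txtnonzero,\infty}(f)}$. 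This reproduces exactly the one-point formula derived from Definition~\ref{expedition0011111}. The degenerate case $f=0$ (so $\sigma_f=0$, a Dirac mass) is covered by the same computation.

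\emph{Part (2).} I first check that $C_{f,g}$ is positive semidefinite. This follows from Cauchy--Schwarz, $\abs{\opreal\bkt{f}{g}} \le \norm{f}\norm{g}$, so the two-dimensional Gaussian $p_{m_{\kappa,\Lambda,f},m_{\kappa,\Lambda,g},C_{f,g}}$ exists, possibly degenerate. Next I multiply the two scalar identities, in $\lambda,x$ and in $\mu,y$. The factor $(-\imunit)^2=-1$ accounts for the minus sign in Definition~\ref{expedition0011111}. Applying Fubini on $\fldreal^2\times\fldreal^2$ (the integrand is dominated by $\napiernum^{-\abs{\lambda s}-\abs{\mu t}}$), the inner integral becomes the two-dimensional characteristic function at $(-s,-t)$:
\begin{equation*}
\fnexp{-\imunit(s m_{\kappa,\Lambda,f} + t m_{\kappa,\Lambda,g}) - \tfrac12 (s,t) C_{f,g} (s,t)^{\top}}.
\end{equation*}
By linearity of $\opreal\mathsf{m}_{\kappa,\Lambda}$, the linear term equals $\imunit\opreal\mathsf{m}_{\kappa,\Lambda}(sf+tg)$. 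The quadratic form is
\begin{equation*}
\tfrac12\rbk{s^2\sigma_f^2 + st\opreal\bkt{f}{g} + t^2\sigma_g^2} = \tfrac14\norm{sf+tg}^2 = \tfrac14\opform{q}_{\txtnonzero,\infty}(sf+tg).
\end{equation*}
Under the hypothesis $\opimag\bkt{f}{g}=0$, the remaining phase $-\frac{\imunit}{2}st\opimag\bkt{f}{g}$ in $\mathsf{T}_{\kappa,\Lambda}$ vanishes. Hence the integrand equals $\mathsf{T}_{\kappa,\Lambda}(s,f;t,g)$, which gives the claimed representation of $G_{f,g}(\lambda,\mu)$.

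\emph{Uncut case and main obstacle.} For the objects without cutoffs, I take $f,g\in\dom\mathsf{m}$, so that $\opreal\mathsf{m}(f)$ is a finite real number and is linear on this subspace. The computation is then verbatim the same. The only real point of care is bookkeeping the sign conventions: the orientation of the integral for $\lambda<0$, the sign of the mean $m_{\kappa,\Lambda,f}=-\opreal\mathsf{m}_{\kappa,\Lambda}(f)$ versus $\napiernum^{-\imunit tx}$, and the overall factor $-1$. I expect no analytic obstacle beyond the uniform domination used for Fubini.
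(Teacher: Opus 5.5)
Your proposal is correct and follows essentially the same route as the paper. Both arguments use the Laplace-transform identity $\frac{1}{\imunit\lambda - x} = -\imunit\int_0^{(\sgn\lambda)\infty}\napiernum^{-t(\lambda+\imunit x)}\opdmsr{t}$, Fubini, and the one- or two-dimensional Gaussian characteristic function evaluated at negative arguments, with $\opimag\bkt{f}{g}=0$ removing the symplectic phase in $\mathsf{T}_{\kappa,\Lambda}$. Your additional checks fill in details the paper leaves implicit: the $\lambda<0$ orientation, the domination justifying Fubini, positive semidefiniteness of $C_{f,g}$ via Cauchy--Schwarz, and the explicit identification of the quadratic form with $\tfrac14\opform{q}_{\txtnonzero,\infty}(sf+tg)$.
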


\begin{rem}
We imposed $\opimag \bkt{f}{g} = 0$ in order to represent the expectation value via a real Gaussian measure.
This ensures that $\frac{\imunit}{2} \opimag \bkt{f}{g} = 0$ in the expression for $\mathsf{T}_{\kappa,\Lambda}$ in part (2) of the proposition, allowing the proof to proceed.

What is even more important is that this representation clarifies what happens under the infrared singularity condition.
When $f \notin \dom \mathsf{m}$ under the infrared singularity condition, the expected value $\mathsf{m}(f)$ of the Segal field operator in the ground state diverges.
A formal computation of the expectation of the resolvent naturally gives $0$, and the resolvent as an operator in the GNS representation of the ground state can be regarded as collapsing to $0$; this can thus also be viewed as an extension of (non-)regularity in the sense of the resolvent algebra.
\end{rem}

\begin{proof}
Since the domain restriction does not change the argument here either, we work with the cutoff notation.

(1): The characteristic function of the normal distribution specified in the statement can be written, using a Gaussian random variable $X_f$ with mean $m_{\kappa,\Lambda,f}$ and variance $\sigma_f^2$, as
\begin{equation}
\begin{aligned}
&\int_{\fldreal}
\napiernum^{\imunit t x}
\opdmsr{p_{m_{\kappa,\Lambda,f},\sigma_f^2}(x)}
=
\sqfun{\prbexp}{\napiernum^{\imunit t X_f}} \\
&=
\napiernum^{\imunit t m_{\kappa,\Lambda,f}}
\napiernum^{-\onehalf \sigma_f^2 t^2}
=
\napiernum^{-\imunit t \opreal \mathsf{m}_{\kappa,\Lambda}(f)}
\napiernum^{-\frac{\opform{q}_{\txtnonzero,\infty}(f)}{4} t^2}.
\end{aligned}
\end{equation}
By Fubini's theorem and the Laplace transform representation of the resolvent, we obtain
\begin{equation}
\begin{aligned}
&\oastate[\psi_{\txtvanhowe,\txtgs,\kappa,\Lambda}](\oaresolvent(\lambda,f)) \\
&=
\int_{\fldreal}
\rbk{-\imunit
\int_{0}^{(\sgn \lambda) \infty}
\napiernum^{-(\lambda + \imunit x) t}
\opdmsr{t}}
\opdmsr{p_{m_{\kappa,\Lambda,f},\sigma_f^2}(x)} \\
&=
\int_{\fldreal}
\frac{1}{\imunit \lambda - x}
\opdmsr{p_{m_{\kappa,\Lambda,f},\sigma_f^2}(x)}.
\end{aligned}
\end{equation}

(2): Let the joint characteristic function of the Gaussian random variables $X_f$ and $X_g$ be $$\sqfun{\prbexp}{\napiernum^{\imunit (sX_f + tX_g)}} = \fnexp{-\frac{1}{4} \opform{q}_{\txtnonzero,\infty}(sf+tg) - \imunit \opreal \mathsf{m}_{\kappa,\Lambda}(sf+tg)}.$$
By assumption, $-\frac{\imunit}{2} \opimag \bkt{f}{g} = 0$.
Under this setup, $$\mathsf{T}_{\kappa,\Lambda}(s,f;t,g) = \sqfun{\prbexp}{\napiernum^{-\imunit \rbk{sX_f + tX_g}}}$$
holds.
Therefore,
\begin{equation}
\begin{aligned}
&\oastate[\psi_{\txtvanhowe,\txtgs,\kappa,\Lambda}](\oaresolvent(\lambda,f) \oaresolvent(\mu,g)) \\
&=
\int_{0}^{(\sgn \lambda) \infty}
\int_{0}^{(\sgn \mu) \infty}
\napiernum^{-\lambda s - \mu t} \\
&\quad\times
\rbk{\int_{\fldreal^{2}}
\napiernum^{-\imunit (sx + ty)}
\opdmsr{p_{m_{\kappa,\Lambda,f},m_{\kappa,\Lambda,g},C_{f,g}}(x,y)}}
\opdmsr{s}
\opdmsr{t} \\
&=
\int_{\fldreal^{2}}
\opdmsr{p_{m_{\kappa,\Lambda,f},m_{\kappa,\Lambda,g},C_{f,g}}(x,y)} \\
&\quad\times
\rbk{-\imunit
\int_{0}^{(\sgn \lambda) \infty}
\napiernum^{-\rbk{\lambda + \imunit x} s}
\opdmsr{s}}
\rbk{-\imunit
\int_{0}^{(\sgn \mu) \infty}
\napiernum^{-\rbk{\mu + \imunit x} t}
\opdmsr{s}} \\
&=
\int_{\fldreal^{2}}
\frac{1}{\imunit \lambda - x}
\frac{1}{\imunit \mu - y}
\opdmsr{p_{m_{\kappa,\Lambda,f},m_{\kappa,\Lambda,g},C_{f,g}}(x,y)}.
\end{aligned}
\end{equation}
\end{proof}

\begin{prop}
The functional $\oastate[\psi_{\txtvanhowe,\txtgs,\kappa,\Lambda}]$ is indeed a state on the resolvent algebra.
The same property holds for the objects without cutoffs.
\end{prop}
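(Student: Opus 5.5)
The plan is to avoid verifying linearity and positivity directly on the resolvent algebra. Instead, I would realize the functional $\oastate[\psi_{\txtvanhowe,\txtgs,\kappa,\Lambda}]$ as the restriction, to the resolvent algebra, of a known state coming from the Weyl side. The formula for $\mathsf{T}_{\kappa,\Lambda}(s,f;t,g)$ is precisely the value of the formal ground state of Definition~\ref{expedition0011093} on the product $\opfockweyl(sf)\opfockweyl(tg)$. This follows from the Weyl relation $\opfockweyl(sf)\opfockweyl(tg)=\napiernum^{-\frac{\imunit}{2}st\opimag\bkt{f}{g}}\opfockweyl(sf+tg)$, up to the sign convention in $\opreal\mathsf{m}_{\kappa,\Lambda}$, which I will match by replacing $f$ by $-f$ inside the Laplace integrals.

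\textbf{Regular representation.} I will take a concrete regular representation carrying that state. With cutoffs, this is the formal infrared-singular representation $\repn_{\txtirsingular,\kappa,\Lambda}$ with vector $\opfockvac_{\txtbsn}$. Equivalently, by the unitary equivalence already proved, it is the Fock representation with vector $\Psi_{\txtvanhowe,\txtgs,\txtfock,\kappa,\Lambda}$. In this representation the field operator $\opfocksegal_{\txtirsingular,\txtfock,\kappa,\Lambda}(f)=\opfocksegal_{\txtfock}(f)-\opreal\mathsf{m}_{\kappa,\Lambda}(f)$ is self-adjoint. Set $\repn(\oaresolvent(\lambda,f))=\opfnresolvent{\imunit\lambda-\opfocksegal_{\txtirsingular,\txtfock,\kappa,\Lambda}(f)}$. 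The standard fact from \cite{BuchholzGrundling2} that resolvents of a regular Weyl representation satisfy the resolvent relations gives a representation $\repn$ of $\oaresolventalgebra(\sphilb{H},\sigma)$, and it is regular.

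\textbf{Identification of values.} Writing each resolvent as the Laplace transform $-\imunit\int_0^{(\sgn\lambda)\infty}\napiernum^{-\lambda s}\opfockweyl(-sf)\opdmsr{s}$, as in the proof of Proposition~\ref{expedition0011254}, I will take the vacuum expectation of a product of two resolvents. Fubini applies since the integrand is bounded by $\napiernum^{-\abs{\lambda s}}\napiernum^{-\abs{\mu t}}$. This reproduces exactly the double integral defining $\oastate[\psi_{\txtvanhowe,\txtgs,\kappa,\Lambda}](\oaresolvent(\lambda,f)\oaresolvent(\mu,g))$. Hence the functional coincides with the vector state $A\mapsto\bkt{\opfockvac_{\txtbsn}}{\repn(A)\opfockvac_{\txtbsn}}$ on two-point products, and in particular on one-point values via $\oaresolvent(1,0)=-\imunit\idone$. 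The formula for general finite products follows in the same way from the Weyl relations, so the vector state is the natural extension to $\oaresolventalgebra_{\txtfin}$. This also yields well-definedness: the functional factors through $\repn$, so it is consistent with all resolvent relations. Linearity, positivity and normalization are then automatic, and continuity in the $\oacstar$-norm holds because $\norm{\repn(A)}\le\norm{A}$. Proposition~\ref{expedition0011244} gives a consistency check: restricted to commuting pairs, the values are the Cauchy transforms of the Gaussian measures.

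\textbf{Without cutoffs.} This is the main obstacle, because $\mathsf{m}$ is not in $\sphilb{H}$ and $V$ is not a Fock-space unitary. My approach is to work on $\oaresolventalgebra(\dom\mathsf{m},\sigma)$. On the Fock space I define $\opfocksegal_{\txtfock}(f)-\opreal\mathsf{m}(f)$ for $f\in\dom\mathsf{m}$; this is a real scalar shift, so the operator is still self-adjoint. The argument is then verbatim, because the Weyl relations only require additivity of $f\mapsto\opreal\mathsf{m}(f)$ on $\dom\mathsf{m}$, which holds. As an alternative, I could take the limit $\kappa\to0$, $\Lambda\to\infty$ pointwise in the defining integrals by dominated convergence, using $\mathsf{m}_{\kappa,\Lambda}(f)\to\mathsf{m}(f)$ for $f\in\dom\mathsf{m}$. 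Positivity and normalization survive weak-$\ast$ limits on the dense subalgebra, and then extend by norm boundedness.
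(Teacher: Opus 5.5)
Your proof is correct, but it takes a different route from the paper. The paper builds no representation. It restricts the functional to each abelian subalgebra $C^*\{R(z,f)\}\cong C_0(\mathbb{R})$, $R(\lambda,f)\mapsto(i\lambda-x)^{-1}$, and uses Proposition~\ref{expedition0011244} to identify the one-point value with the Cauchy transform of the Gaussian $N(-\operatorname{Re}\mathsf{m}_{\kappa,\Lambda}(f),\tfrac12\|f\|^2)$. This gives a state on that subalgebra, and the paper then appeals to the extension theorem for states, ``choosing'' an extension with the correct restrictions to every one-generator subalgebra.

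That argument is short and purely probabilistic, but it certifies positivity only one abelian subalgebra at a time. An extension from a single $C^*\{R(z,f)\}$ need not restrict correctly to the others. Nothing in it controls the prescribed two-point values when $\operatorname{Im}\langle f,g\rangle\neq0$. So the existence of one state carrying all these values is presupposed rather than proved.

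Your vacuum vector state for the shifted field $\phi_{\mathrm F}(f)-\operatorname{Re}\mathsf{m}_{\kappa,\Lambda}(f)$ (respectively $\phi_{\mathrm F}(f)-\operatorname{Re}\mathsf{m}(f)$ for $f\in\operatorname{dom}\mathsf{m}$) closes exactly this gap. The Laplace representation with $W(-sf)$ reproduces $\mathsf{T}_{\kappa,\Lambda}$ exactly, including the phase $-\tfrac{i}{2}st\operatorname{Im}\langle f,g\rangle$. This makes the prescription consistent with the resolvent relations and yields a single global state. It also gives regularity and quasi-freeness directly from the representation. You therefore do not need the paper's later identification with the commutative classical state, which can only match commuting pairs.

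The cost is reliance on the Buchholz--Grundling correspondence between regular Weyl representations and representations of the resolvent algebra. For the cutoff-free case, your scalar-shift option is preferable to the limiting argument, since it needs only additivity of $\operatorname{Re}\mathsf{m}$ on $\operatorname{dom}\mathsf{m}$.
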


\begin{proof}
Since the domain restriction does not change the argument here either, we work with the cutoff notation.

Note the isomorphism of one-dimensional subalgebras $$\oaresolventalgebra_f = \oacstar \set{\oaresolvent(z,f)}{z \in \fldcmp \setminus \imunit \fldreal} \eqalgisom \fun{\conti_0}{\fldreal}, \quad \oaresolvent(\lambda,f) \to \frac{1}{\imunit \lambda - x}.$$
By the Cauchy transform of Gaussian measures, or by Proposition~\ref{expedition0011244}, $\oastate[\psi_{\txtvanhowe,\txtgs,\kappa,\Lambda}](\oaresolvent(\lambda,f))$ coincides with the Cauchy transform of the normal distribution $\fun{\msrcal{N}}{-\opreal \mathsf{m}_{\kappa,\Lambda}(f), \frac{1}{2} \opform{q}_{\txtnonzero,\infty}(f)}$.
In particular, this can be viewed as the integral of $\frac{1}{\imunit \lambda - x} \in \fun{\conti_0}{\fldreal}$ with respect to a Gaussian measure.
Therefore, $\oastate[\psi_{\txtvanhowe,\txtgs,\kappa,\Lambda}]$ is a positive linear functional on $\oaresolventalgebra_f$ arising from a probability measure, and is in particular a state.
By the general theory, it can be extended to a state on the full algebra; choosing a state with the correct restrictions to any one-dimensional subalgebra and denoting its extension by the same symbol gives the desired result.
\end{proof}

\begin{prop}\label{expedition0011232}
The state $\oastate[\psi]_{\txtgs,\kappa,\Lambda}$ is $\alpha_{\txtvanhowe,\kappa,\Lambda}$-invariant, i.e., it satisfies $$\oastate[\psi]_{\txtgs,\kappa,\Lambda} \circ \alpha_{\txtvanhowe,\kappa,\Lambda} = \oastate[\psi]_{\txtgs,\kappa,\Lambda}.$$
The same property holds for the objects without cutoffs.
\end{prop}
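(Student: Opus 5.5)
Since $\alpha_{\txtvanhowe,\kappa,\Lambda,t}$ is a $\ast$-automorphism and $\oastate[\psi]_{\txtgs,\kappa,\Lambda}$ is continuous, I will verify invariance on the generators and on products of two generators, the data through which the state is defined (Definition~\ref{expedition0011111}). Density of $\oaresolventalgebra_{\txtfin}$ then extends the identity to the whole algebra. The mechanism is the Laplace-transform representation of the resolvent, as in Proposition~\ref{expedition0011254}: everything reduces to invariance of the ``Weyl-type'' kernel $\mathsf{T}_{\kappa,\Lambda}$, i.e.\ of the formal ground state of Definition~\ref{expedition0011093}, under the Weyl automorphism of Proposition~\ref{expedition0011094}.

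\textbf{One-point values.} By Proposition~\ref{expedition0011244}(1),
\[
\oastate[\psi]_{\txtgs,\kappa,\Lambda}(\oaresolvent(\lambda,f))=-\imunit\int_0^{(\sgn\lambda)\infty}\napiernum^{-\lambda s}\,\fnexp{\imunit s\opreal\mathsf{m}_{\kappa,\Lambda}(f)-\tfrac{s^2}{4}\norm{f}^2}\opdmsr{s}.
\]
Applying $\alpha_{\txtvanhowe,\kappa,\Lambda,t}$ replaces $\lambda$ by $\lambda+\imunit\mathsf{M}_{\kappa,\Lambda,t}(f)$ and $f$ by $\napiernum^{\imunit t\omega}f$. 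The norm term is unchanged by unitarity. The phase becomes
\[
\opreal\mathsf{m}_{\kappa,\Lambda}(\napiernum^{\imunit t\omega}f)-\mathsf{M}_{\kappa,\Lambda,t}(f)=\opreal\mathsf{m}_{\kappa,\Lambda}(f),
\]
directly from Definition~\ref{expedition0011628}. The sign conventions need care, and I will fix them by comparison with the Laplace computation in Proposition~\ref{expedition0011254}. The shifted first argument is no longer real, so I will use the complex-$z$ resolvent relations and the analyticity in the first variable \cite[Theorem 3.6(iv)]{BuchholzGrundling2}. These justify that the Laplace integral still converges and represents $\oaresolvent(z,f)$ for $\opreal z=\lambda\neq 0$.

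\textbf{Two-point values.} I will do the same for $\oaresolvent(\lambda,f)\oaresolvent(\mu,g)$. After the substitution the exponent of $\mathsf{T}_{\kappa,\Lambda}$ changes in three places:
\begin{itemize}
\item $\norm{sf+tg}^2$ is invariant because $\napiernum^{\imunit t\omega}$ is unitary;
\item $\opimag\bkt{f}{g}$ is invariant because $\napiernum^{\imunit t\omega}$ preserves the symplectic form;
\item the linear phase $\opreal\mathsf{m}_{\kappa,\Lambda}(sf+tg)$ combined with $s\mathsf{M}_{\kappa,\Lambda,t}(f)+t\mathsf{M}_{\kappa,\Lambda,t}(g)$ collapses to $\opreal\mathsf{m}_{\kappa,\Lambda}(sf+tg)$, by additivity of $\mathsf{M}_{\kappa,\Lambda,t}$ and the one-point identity above.
\end{itemize}
Hence the double Laplace integrals agree.

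\textbf{Cutoff-free case and main obstacle.} For $f,g\in\dom\mathsf{m}$ the identical computation applies with $\mathsf{M}_t$ and $\mathsf{m}$, using that $\dom\mathsf{m}$ is $\napiernum^{\imunit t\omega}$-invariant (Proposition~\ref{expedition0011109}).

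The main obstacle is passing from generators and their pairwise products to the full algebra. The state was only specified on these, with an extension chosen via the one-dimensional subalgebras. My first approach is to note that $\oastate[\psi]_{\txtgs,\kappa,\Lambda}\circ\alpha_t$ is again a state agreeing with $\oastate[\psi]_{\txtgs,\kappa,\Lambda}$ on all one- and two-point values. I will then argue that the state is determined by these values: the Gaussian (quasi-free) structure from Proposition~\ref{expedition0011244} fixes all higher Laplace-type moments through the analogous $n$-fold kernel $\fnexp{-\frac14\norm{\sum s_jf_j}^2+\imunit\opreal\mathsf{m}_{\kappa,\Lambda}(\sum s_jf_j)+\text{symplectic phases}}$. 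The same three invariances apply to that kernel verbatim, which gives invariance on all of $\oaresolventalgebra_{\txtfin}$, and continuity finishes the proof.
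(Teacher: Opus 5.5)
Your argument is correct. It turns on the same identity as the paper's proof, $\opreal\mathsf{m}_{\kappa,\Lambda}(\napiernum^{\imunit t\omega}f)=\opreal\mathsf{m}_{\kappa,\Lambda}(f)+\mathsf{M}_{\kappa,\Lambda,t}(f)$, together with unitarity of $\napiernum^{\imunit t\omega}$. The differences are that you work on the other side of the Fourier duality, and that you cover more.

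The paper checks only one-point values. It writes $\psi_{\txtgs,\kappa,\Lambda}(\oaresolvent(z,f))$ as the Cauchy transform of the Gaussian with mean $-\opreal\mathsf{m}_{\kappa,\Lambda}(f)$ and variance $\frac12\norm{f}^2$. It then observes that $\alpha_{\txtvanhowe,\kappa,\Lambda,t}$ translates the integrand and the Gaussian by the same amount $-\mathsf{M}_{\kappa,\Lambda,t}(f)$, which the substitution $y=x+\mathsf{M}_{\kappa,\Lambda,t}(f)$ removes. That picture is transparent for a single resolvent, but it has two limits:
\begin{itemize}
\item It does not reach $\oaresolvent(\lambda,f)\oaresolvent(\mu,g)$ with $\opimag\bkt{f}{g}\neq 0$, where no joint real Gaussian exists. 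This is why Proposition~\ref{expedition0011244}(2) imposes $\opimag\bkt{f}{g}=0$.
\item A state is not multiplicative, so invariance on generators alone does not give invariance on the algebra.
\end{itemize}
Your Laplace/Weyl-kernel formulation carries the symplectic phase explicitly. It treats every monomial $\oaresolvent(\lambda_1,f_1)\cdots\oaresolvent(\lambda_n,f_n)$ in the same way, which is what the statement actually requires.

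There are two points to tighten.

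\textbf{The $n$-fold kernel must be a definition, not a consequence.} The kernel does not follow from Proposition~\ref{expedition0011244}, and a state is not in general determined by its one- and two-point values. You should take as the definition that $\psi_{\txtgs,\kappa,\Lambda}$ is the quasi-free extension, i.e.\ the Laplace transform of the regular Weyl state of Definition~\ref{expedition0011093}. For an arbitrary extension of the low-order data, which is all the paper's existence argument provides, invariance is not guaranteed. Then verify the identity directly on all monomials. Discard your opening claim that generators and pairwise products plus density suffice.

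\textbf{The Laplace representation lives at the level of state values.} It makes sense for state values (or in a regular representation), not in the abstract resolvent algebra, which contains no Weyl operators. The substitution $z_j=\lambda_j+\imunit\mathsf{M}_{\kappa,\Lambda,t}(f_j)$ is justified as follows:
\begin{itemize}
\item $\psi_{\txtgs,\kappa,\Lambda}(\oaresolvent(z_1,f_1)\cdots\oaresolvent(z_n,f_n))$ is analytic in each $z_j$ separately.
\item So is the kernel integral, whose integrand has modulus at most $\napiernum^{-\sum_j s_j\opreal z_j}$.
\item The identity theorem then extends agreement from real to complex arguments, since $\opreal z_j=\lambda_j$ keeps each $z_j$ in the same half-plane.
\end{itemize}
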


\begin{proof}
Since the domain restriction does not change the argument here either, we work with the cutoff notation.

In general, let the one-dimensional Gaussian distribution with mean $m$ and variance $C$ be $$p_{m,C}(x) = \frac{1}{\sqrt{2 \pi C}} \napiernum^{-\frac{(x-m)^2}{2C}}.$$
By definition, $$\opreal \mathsf{m}_{\kappa,\Lambda}(\napiernum^{\imunit t \omega} f) = \opreal \mathsf{m}_{\kappa,\Lambda}(f) + \mathsf{M}_{\kappa,\Lambda,t}(f),$$ hence
\begin{equation}
\begin{aligned}
&\fun{\oastate[\psi]_{\txtgs,\kappa,\Lambda}}
{\alpha_{\txtvanhowe,\kappa,\Lambda,t}(\oaresolvent(\lambda,f))}
=
\fun{\oastate[\psi]_{\txtgs,\kappa,\Lambda}}
{\oaresolvent(\lambda + \imunit \mathsf{M}_{\kappa,\Lambda,t}(f), \napiernum^{\imunit t \omega} f)} \\
&=
\int_{\fldreal}
\frac{1}{\imunit \rbk{\lambda + \imunit \mathsf{M}_{\kappa,\Lambda,t}(f)} - x}
p_{-\opreal \mathsf{m}_{\kappa,\Lambda}(\napiernum^{\imunit t \omega} f), \frac{1}{2} \opform{q}_{\txtnonzero,\infty}(\napiernum^{\imunit t \omega} f)}(x)
\opdmsr{x} \\
&=
\int_{\fldreal}
\frac{1}{\imunit \lambda - (x + \mathsf{M}_{\kappa,\Lambda,t}(f))}
p_{-\opreal \mathsf{m}_{\kappa,\Lambda}(f) - \mathsf{M}_{\kappa,\Lambda,t}(f), \frac{1}{2} \opform{q}_{\txtnonzero,\infty}(f)}(x)
\opdmsr{x}.
\end{aligned}
\end{equation}
Applying the change of variables $y = x + \mathsf{M}_{\kappa,\Lambda,t}(f)$ to eliminate the shift of the mean,
\begin{equation}
\begin{aligned}
&\fun{\oastate[\psi]_{\txtgs,\kappa,\Lambda}}
{\alpha_{\txtvanhowe,\kappa,\Lambda,t}(\oaresolvent(\lambda,f))} \\
&=
\int_{\fldreal}
\frac{1}{\imunit \lambda - (x + \mathsf{M}_{\kappa,\Lambda,t}(f))}
p_{-\opreal \mathsf{m}_{\kappa,\Lambda}(f) - \mathsf{M}_{\kappa,\Lambda,t}(f), \frac{1}{2} \opform{q}_{\txtnonzero,\infty}(f)}(x)
\opdmsr{x} \\
&=
\int_{\fldreal}
\frac{1}{\imunit \lambda - y}
p_{-\opreal \mathsf{m}_{\kappa,\Lambda}(f), \frac{1}{2} \opform{q}_{\txtnonzero,\infty}(f)}(y)
\opdmsr{y} \\
&=
\fun{\oastate[\psi]_{\txtgs,\kappa,\Lambda}}
{\oaresolvent(\lambda,f)},
\end{aligned}
\end{equation}
which gives the desired $\alpha_{\txtvanhowe,\kappa,\Lambda}$-invariance.
\end{proof}

We define a functional \(\chi_{\kappa,\Lambda}\) on the realification \(X = \sphilb{H}_{\txtreal}\) by \[\chi_{\kappa,\Lambda} \colon X \to \fldcmp; \quad \chi_{\kappa,\Lambda}(f) = \napiernum^{\imunit m_{\kappa,\Lambda,f} - \frac{1}{4} \opform{q}_{\txtnonzero,\infty}(f)}, \quad m_{\kappa,\Lambda,f} = -\opreal \mathsf{m}_{\kappa,\Lambda}(f).\] The corresponding objects without cutoffs are defined similarly, taking care of the domain.

\begin{prop}\label{expedition0011256}
The functional $\chi_{\kappa,\Lambda}$ is continuous and positive definite.
In particular, for any natural number $n$, any complex numbers $c_1,\ldots,c_n$, and any $f_1,\ldots,f_n \in X$, $$\sum_{j,k=1}^{n} \cmpconj{c_j} c_k \chi_{\kappa,\Lambda}(f_k - f_j) \geq 0.$$
The same property holds for the objects without cutoffs.
\end{prop}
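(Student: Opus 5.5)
The plan is to split $\chi_{\kappa,\Lambda}$ into a character times a Gaussian and handle the two factors separately. Write $\chi_{\kappa,\Lambda}(f) = e_{\kappa,\Lambda}(f)\, g(f)$ with
\[
e_{\kappa,\Lambda}(f) = \napiernum^{-\imunit \opreal \mathsf{m}_{\kappa,\Lambda}(f)}, \qquad g(f) = \napiernum^{-\frac{1}{4}\norm{f}^2},
\]
using $\opform{q}_{\txtnonzero,\infty}(f) = \norm{f}^2$.

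The character part follows from additivity. The map $f \mapsto \opreal \mathsf{m}_{\kappa,\Lambda}(f) = \opreal \bkt{\omega^{-\frac{3}{2}}\varrho_{\kappa,\Lambda}}{f}$ is real-linear on $X$. It is continuous for $\kappa>0$, $\Lambda<\infty$, because $\mathsf{m}_{\kappa,\Lambda}\in\sphilb{H}$ and Cauchy--Schwarz applies. Hence $e_{\kappa,\Lambda}$ is a continuous unitary character, $e_{\kappa,\Lambda}(f_k-f_j)=\cmpconj{e_{\kappa,\Lambda}(f_j)}\,e_{\kappa,\Lambda}(f_k)$, and the quadratic form reduces to
\[
\sum_{j,k}\cmpconj{c_j} c_k\, \chi_{\kappa,\Lambda}(f_k-f_j) = \sum_{j,k}\cmpconj{c_j'} c_k'\, g(f_k-f_j), \qquad c_k' = e_{\kappa,\Lambda}(f_k)\, c_k .
\]
So it suffices to prove that $g$ is positive definite.

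For $g$, expand $\norm{f_k-f_j}^2 = \norm{f_k}^2+\norm{f_j}^2 - 2\opreal\bkt{f_j}{f_k}$. This gives
\[
g(f_k-f_j) = \napiernum^{-\frac14\norm{f_j}^2}\napiernum^{-\frac14\norm{f_k}^2}\napiernum^{\frac12 \opreal\bkt{f_j}{f_k}} .
\]
The matrix $(\opreal\bkt{f_j}{f_k})_{j,k}$ is a real Gram matrix, hence positive semidefinite. By the Schur product theorem its entrywise exponential $\sum_n \frac{1}{n!2^n}(\opreal\bkt{f_j}{f_k})^n$ is positive semidefinite. Conjugating by the diagonal matrix with entries $\napiernum^{-\frac14\norm{f_j}^2}$ preserves this, which proves positive definiteness. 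An alternative is to note that $g$ is the characteristic functional of the Fock vacuum restricted to the real space, $\bkt{\opfockvac_{\txtbsn}}{\opfockweyl_{\txtfock}(f)\opfockvac_{\txtbsn}}$ evaluated with real-part inner product. Continuity of $g$ in norm is immediate, so $\chi_{\kappa,\Lambda}$ is continuous.

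The main obstacle is the version without cutoffs. There $\mathsf{m}$ is only defined on $\dom\mathsf{m}$ and need not be bounded, so continuity in the $\sphilb{H}$-norm can fail. I would state the result on $X\cap\dom\mathsf{m}$, a real-linear subspace, so that differences $f_k-f_j$ stay in the domain. Positive definiteness then goes through verbatim, since it uses only linearity of $\opreal\mathsf{m}$. For continuity I would use the topology induced by the norm $\norm{f}+\abs{\mathsf{m}(f)}$, or restrict to finite-dimensional subspaces, where every linear functional is continuous; the latter suffices for the Bochner/Gaussian measure arguments that follow.
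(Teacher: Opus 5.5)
Your proof is correct. The paper's own proof is the single word ``Clear,'' so there is no competing argument to compare against. Implicitly the paper relies on recognizing $\chi_{\kappa,\Lambda}$ as the characteristic function of a Gaussian with mean $-\opreal\mathsf{m}_{\kappa,\Lambda}(f)$ and variance $\frac12\norm{f}^2$, which is the viewpoint of Proposition~\ref{expedition0011244}.

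Your route factors $\chi_{\kappa,\Lambda}$ into a continuous unitary character times $\napiernum^{-\frac14\norm{f}^2}$, then applies the Schur product theorem to the real Gram matrix $(\opreal\bkt{f_j}{f_k})_{j,k}$. This is more elementary, and it does not presuppose any Gaussian measure. That matters here, because this proposition is exactly what the paper feeds into Minlos's theorem to construct one.

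Keep the Schur argument as your main proof. Your Fock-vacuum alternative is valid only on a subspace where $\sigma$ vanishes. The vacuum identity $\norm{\sum_k c_k\opfockweyl_{\txtfock}(f_k)\opfockvac_{\txtbsn}}^2\ge 0$ yields the twisted inequality $\sum_{j,k}\cmpconj{c_j}c_k\,\napiernum^{\frac{\mathsf{i}}{2}\opimag\bkt{f_j}{f_k}}g(f_k-f_j)\ge 0$, not the plain one, on the full realification.

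On the cutoff-free case you are more careful than the paper, and rightly so. Since $\faftr{\varrho}=1$, the momentum-space function $\abs{k}^{-\frac32}$ representing $\mathsf{m}$ is not in $\fun{\lp^{2}}{\fldreal^{3}}$; it diverges logarithmically at both ends. Hence $\mathsf{m}$ is unbounded on its dense domain. Rescaling a sequence with $\norm{h_n}=1$ and $\abs{\mathsf{m}(h_n)}\to\infty$ gives arbitrarily small $h$ with $\opreal\mathsf{m}(h)=\pi$. Then $\chi(f+h)\to-\chi(f)\neq\chi(f)$ as $\norm{h}\to 0$, so the cutoff-free $\chi$ is discontinuous at every point in the $\sphilb{H}$-norm.

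The paper's ``the same property holds'' is therefore valid only in the form you state. Positive definiteness holds on the real subspace $X\cap\dom\mathsf{m}$. Continuity holds only on finite-dimensional subspaces, or in a finer topology. That is all the Kolmogorov construction on $\fldreal^{\dom\mathsf{m}}$ later in the paper actually uses.
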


\begin{proof}
Clear.
\end{proof}

By Minlos's theorem, the functional \(\chi_{\kappa,\Lambda}\) on \(X\) defined in Proposition~\ref{expedition0011256} guarantees the existence of a probability space \(\pairbk{\prbqspace_{\sphilb{H}},P_{\kappa,\Lambda}}\) and a real linear map \(\opfocksegal \colon X \to \fun{\lp_{\txtreal}^{2}}{\prbqspace_{\sphilb{H}},P_{\kappa,\Lambda}}\) satisfying \begin{equation}
\begin{aligned}
\sqfun{\prbexp_{P_{\kappa,\Lambda}}}{\napiernum^{\imunit \opfocksegal(f)}}
&=
\chi_{\kappa,\Lambda}(f), \\
\sqfun{\prbexp_{P_{\kappa,\Lambda}}}{\opfocksegal(f)}
&=
m_{\kappa,\Lambda,f}
=
-\opreal \mathsf{m}_{\kappa,\Lambda}(f), \\
\fun{\prbcov_{P_{\kappa,\Lambda}}}{\opfocksegal(f),\opfocksegal(g)}
&=
\onehalf \opreal \bkt{f}{g}.
\end{aligned}
\end{equation} Using this, we define the classical \(\ast\)-representation \(\repn_{\txtclassical,\kappa,\Lambda}\) of the resolvent algebra by \[\repn_{\txtclassical,\kappa,\Lambda} \colon \oaresolventalgebra(X,\sigma) \to \fun{\lp^{\infty}}{\prbqspace_{\sphilb{H}},P_{\kappa,\Lambda}}; \quad \repn_{\txtclassical,\kappa,\Lambda}(\oaresolvent(\lambda,f)) = \frac{1}{\imunit \lambda - \opfocksegal(f)}.\] The corresponding objects without cutoffs are defined similarly, taking care of the domain.

\begin{rem}[Structure related to classicalization and commutatization]
The classical $\ast$-representation $\repn_{\txtclassical,\kappa,\Lambda}$ is not necessarily faithful, since the information about the symplectic form $\sigma$ governing non-commutativity is lost.
This classicalization is intended for computing expectation values in the state defined below, and is not meant to commutatize the resolvent algebra, which is a non-commutative algebra.
In particular, the fundamental construction remains in the non-commutative algebra; we regard only the computations involving expectation values as being commutatized within the state.
\end{rem}

We define a state \(\oastate[\psi]_{\txtclassical,\kappa,\Lambda}\) of the resolvent algebra \(\oaresolventalgebra(X,\sigma)\) by \[\oastate[\psi]_{\txtclassical,\kappa,\Lambda}(A) = \int_{\prbqspace_{\sphilb{H}}} \repn_{\txtclassical,\kappa,\Lambda}(A) \opdmsr{P_{\kappa,\Lambda}(\phi)}, \quad A \in \oaresolventalgebra(X,\sigma).\]

On the dense subalgebra \(\oaresolventalgebra_{\txtfin}\), the state is extended using \[\fun{\oastate[\psi]_{\txtclassical,\kappa,\Lambda}}{\prod_{j=1}^n \oaresolvent(\lambda_j,f_j)} = \sqfun{\prbexp_{P_{\kappa,\Lambda}}}{\prod_{j=1}^n \frac{1}{\imunit \lambda - \opfocksegal(f_j)}},\] and on the full resolvent algebra by taking the limit. The corresponding objects without cutoffs are defined similarly, taking care of the domain.

\begin{prop}\label{expedition0011258}
The states $\oastate[\psi_{\txtvanhowe,\txtgs,\kappa,\Lambda}]$ and $\oastate[\psi]_{\txtclassical,\kappa,\Lambda}$ can be identified.
In particular, the following discussion will be based on this identification.
The same property holds for the objects without cutoffs.
\end{prop}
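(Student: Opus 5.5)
The plan is to compare the two functionals on the data that define $\oastate[\psi_{\txtvanhowe,\txtgs,\kappa,\Lambda}]$ in Definition~\ref{expedition0011111}, namely the one- and two-point values $\oaresolvent(\lambda,f)$ and $\oaresolvent(\lambda,f)\oaresolvent(\mu,g)$. Since both are bounded on the resolvent algebra, agreement there, together with the multiplicative structure inside the $\ast$-algebra $\oaresolventalgebra_{\txtfin}$, should give equality. The same argument applies verbatim to the uncut objects on $\oaresolventalgebra(\dom \mathsf{m},\sigma)$, since only the finiteness of $\mathsf{m}(f)$ for $f \in \dom \mathsf{m}$ enters.

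The first step is to treat the one-point functions. The image $\opfocksegal(f)$ is a real Gaussian variable under $P_{\kappa,\Lambda}$ with mean $-\opreal \mathsf{m}_{\kappa,\Lambda}(f)$ and variance $\onehalf \opform{q}_{\txtnonzero,\infty}(f)$, so
\[
\oastate[\psi]_{\txtclassical,\kappa,\Lambda}(\oaresolvent(\lambda,f))=\int_{\fldreal}\frac{1}{\imunit\lambda-x}\,dp_{m_{\kappa,\Lambda,f},\sigma_f^2}(x).
\]
By Proposition~\ref{expedition0011244}(1), this is exactly $\oastate[\psi_{\txtvanhowe,\txtgs,\kappa,\Lambda}](\oaresolvent(\lambda,f))$.

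The second step handles the two-point functions. For $\opimag\bkt{f}{g}=0$, the pair $(\opfocksegal(f),\opfocksegal(g))$ is jointly Gaussian with mean vector $(m_{\kappa,\Lambda,f},m_{\kappa,\Lambda,g})$ and covariance $C_{f,g}$, because of the covariance $\onehalf\opreal\bkt{f}{g}$ supplied by Minlos's theorem. Proposition~\ref{expedition0011244}(2) then gives agreement again. Products of more generators with pairwise real inner products are handled the same way: Fubini and the Laplace representation turn the classical expectation into the multiple Laplace integral of the Gaussian characteristic function $\chi_{\kappa,\Lambda}(\sum_j s_j f_j)$.

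I expect the main obstacle to be the case $\opimag\bkt{f}{g}\neq 0$. There the factor $\napiernum^{-\frac{\imunit}{2}st\opimag\bkt{f}{g}}$ in $\mathsf{T}_{\kappa,\Lambda}$ is invisible to the commutative representation, so literal equality on all products fails. I would make the identification precise in the sense used so far: the state of Definition~\ref{expedition0011111} was only constructed by extension from its restrictions to the one-generator subalgebras $\oaresolventalgebra_f$ and to commuting pairs. I would then show that $\oastate[\psi]_{\txtclassical,\kappa,\Lambda}$ is one admissible extension, since it has the correct restrictions to every $\oaresolventalgebra_f$ and to every abelian subalgebra generated by a real-orthogonal isotropic family. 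Continuity in norm, via $\norm{\oaresolvent(\lambda,f)}=\abs{\lambda}^{-1}$, passes the agreement to the closures. The identification is therefore made on the commutative subalgebras where expectation values are computed, consistent with the remark on classicalization preceding this proposition.
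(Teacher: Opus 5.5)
Your route is the paper's route: match one- and two-point values via Proposition~\ref{expedition0011244}, then invoke the freedom left in extending $\oastate[\psi_{\txtvanhowe,\txtgs,\kappa,\Lambda}]$. You are right that the paper's assertion that the two-point functions agree ``by definition'' holds only when $\opimag\bkt{f}{g}=0$.

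\textbf{The gap is in your repair.} $\oastate[\psi]_{\txtclassical,\kappa,\Lambda}$ cannot be ``one admissible extension'', because it is not a well-defined functional on $\oaresolventalgebra(X,\sigma)$ at all. The map $\repn_{\txtclassical,\kappa,\Lambda}$ sends the left side of \eqref{expedition0012052} to $0$. It sends the right side to $\imunit\sigma(f,g)(\imunit\lambda-\opfocksegal(f))^{-2}(\imunit\mu-\opfocksegal(g))^{-2}$, which is nonzero whenever $\sigma(f,g)\neq0$. So it does not respect the defining relations.

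Independently, Definition~\ref{expedition0011111} also fixes the two-point values for non-commuting pairs. These are the correct ones: they are the vector-state values of $V_{\kappa,\Lambda}\opfockvac_{\txtbsn}$ in the Fock representation. Take $g=\imunit f$ with $f\neq0$:
\begin{itemize}
\item Then $\opreal\bkt{f}{g}=0$, so $\opfocksegal(f)$ and $\opfocksegal(g)$ are independent under $P_{\kappa,\Lambda}$. The classical two-point value is therefore the product of one-point values.
\item By contrast, $\mathsf{T}_{\kappa,\Lambda}$ keeps the nontrivial factor $\napiernum^{-\frac{\imunit}{2}st\opimag\bkt{f}{\imunit f}}$.
\item By injectivity of the double Laplace transform, the two values differ for some $\lambda,\mu>0$.
\end{itemize}
Hence no state carrying the data of Definition~\ref{expedition0011111} agrees with the classical expression.

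Your opening claim is also unsupported. You said that boundedness plus one- and two-point agreement, together with the multiplicative structure of $\oaresolventalgebra_{\txtfin}$, gives equality. But the classical expression is not bounded on the algebra, Definition~\ref{expedition0011111} prescribes nothing for products of three or more generators, and one- and two-point values do not determine a state without a quasi-free hypothesis.

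What your argument genuinely establishes is narrower. Both functionals agree on the $\oacstar$-subalgebras generated by $\oaresolvent(\lambda,f)$ with $f$ in an isotropic subspace ($\sigma=0$ there), once $\oastate[\psi_{\txtvanhowe,\txtgs,\kappa,\Lambda}]$ is extended quasi-freely. This is strictly weaker than the proposition, and it does not support how the identification is used afterwards:
\begin{itemize}
\item The regularity argument needs $\oastate[\psi]_{\txtclassical,\kappa,\Lambda}(\oaresolvent(\lambda,f)A)$ for arbitrary $A\in\oaresolventalgebra(X,\sigma)$.
\item Proposition~\ref{expedition0011239} transfers quasi-freeness to the whole algebra.
\end{itemize}

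As an identification of states on $\oaresolventalgebra(X,\sigma)$ the statement cannot hold, and the paper's own proof has the same defect. A sound version would do the following:
\begin{itemize}
\item Take $\oastate[\psi_{\txtvanhowe,\txtgs,\kappa,\Lambda}]$ to be the vector state of $V_{\kappa,\Lambda}\opfockvac_{\txtbsn}$ in the regular Fock resolvent representation. This reproduces every value in Definition~\ref{expedition0011111}, including the phase.
\item Use $P_{\kappa,\Lambda}$ only as the joint spectral measure of the strongly commuting fields $\opfocksegal_{\txtfock}(f_1),\dots,\opfocksegal_{\txtfock}(f_n)$ for isotropic families.
\end{itemize}
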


\begin{proof}
Since the domain restriction does not change the argument here either, we work with the cutoff notation.
The agreement of the one-point and two-point functions on the generators holds by definition.
The state $\oastate[\psi]_{\txtclassical,\kappa,\Lambda}$ can easily be extended to the dense subalgebra $\oaresolventalgebra_{\txtfin}$.
The identification for general elements follows by exhausting the freedom in extending the state $\oastate[\psi_{\txtvanhowe,\txtgs,\kappa,\Lambda}]$.
\end{proof}

\begin{prop}\label{expedition0011239}
The state $\oastate[\psi]_{\txtclassical,\kappa,\Lambda}$ is quasi-free.
Hence, by the identification discussed in Proposition~\ref{expedition0011258}, the state $\oastate[\psi]_{\txtgs,\kappa,\Lambda}$ is also quasi-free, i.e., the $n$-point expectation values are determined by the mean and the two-point function alone.
The same property holds for the objects without cutoffs.
\end{prop}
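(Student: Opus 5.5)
The plan is to work entirely inside the classical $\ast$-representation $\repn_{\txtclassical,\kappa,\Lambda}$ and exploit the fact that, under $P_{\kappa,\Lambda}$, the map $f \mapsto \opfocksegal(f)$ is a real linear Gaussian process with mean $m_{\kappa,\Lambda,f} = -\opreal \mathsf{m}_{\kappa,\Lambda}(f)$ and covariance $\onehalf \opreal \bkt{f}{g}$. This follows from the characteristic functional $\chi_{\kappa,\Lambda}$ of Proposition~\ref{expedition0011256} via Minlos's theorem. Once the joint law of any finite family $\opfocksegal(f_1),\dots,\opfocksegal(f_n)$ is identified, every expectation value of a product of resolvents becomes an integral against a finite-dimensional Gaussian measure. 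Such a measure is determined by its mean vector and covariance matrix alone.

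First, I will show that for $f_1,\dots,f_n \in X$ the random vector $(\opfocksegal(f_1),\dots,\opfocksegal(f_n))$ is jointly Gaussian. By real linearity of $\opfocksegal$,
\[
\sqfun{\prbexp_{P_{\kappa,\Lambda}}}{\napiernum^{\imunit \sum_j t_j \opfocksegal(f_j)}}
=
\chi_{\kappa,\Lambda}\Bigl(\sum_j t_j f_j\Bigr),
\]
and the exponent of $\chi_{\kappa,\Lambda}$ is affine-plus-quadratic in $(t_1,\dots,t_n)$. The linear part is $-\imunit \sum_j t_j \opreal \mathsf{m}_{\kappa,\Lambda}(f_j)$, and the quadratic part is $-\frac14 \norm{\sum_j t_j f_j}^2 = -\onehalf \sum_{j,k} t_j t_k \cdot \onehalf \opreal\bkt{f_j}{f_k}$. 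This is exactly the characteristic function of $\msrcal{N}(m, C)$ with $m_j = m_{\kappa,\Lambda,f_j}$ and $C_{jk} = \onehalf \opreal \bkt{f_j}{f_k}$. This step simply generalizes part (2) of Proposition~\ref{expedition0011244} to $n$ variables.

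Second, I apply this to products of resolvents. By the definition of $\oastate[\psi]_{\txtclassical,\kappa,\Lambda}$ on $\oaresolventalgebra_{\txtfin}$,
\[
\fun{\oastate[\psi]_{\txtclassical,\kappa,\Lambda}}{\prod_{j=1}^n \oaresolvent(\lambda_j,f_j)}
=
\int_{\fldreal^n} \prod_{j=1}^n \frac{1}{\imunit \lambda_j - x_j}\, \opdmsr{p_{m,C}(x)},
\]
so each such value is a functional of $(m, C)$ only. Equivalently, via the Laplace representation of resolvents and Fubini, as in Proposition~\ref{expedition0011244}, it is an iterated Laplace transform of $\chi_{\kappa,\Lambda}(\sum_j s_j f_j)$. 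I would also note that this is the quasi-free form in the sense of \cite[P.40]{BratteliRobinson2}: the truncated $n$-point functions of the fields $\opfocksegal(f)$ vanish for $n \geq 3$, because the logarithm of the characteristic function is a polynomial of degree two. Products of generators span $\oaresolventalgebra_{\txtfin}$, which is dense, and the state is continuous. Hence $\oastate[\psi]_{\txtclassical,\kappa,\Lambda}$ is determined on all of $\oaresolventalgebra(X,\sigma)$ by the one- and two-point data $m_{\kappa,\Lambda,f}$ and $\onehalf \opreal \bkt{f}{g}$. Proposition~\ref{expedition0011258} then transfers this conclusion to $\oastate[\psi]_{\txtgs,\kappa,\Lambda}$.

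The main obstacle is conceptual rather than computational. The classical representation forgets $\sigma$, so the quasi-free structure seen classically must be matched with the non-commutative one. In particular, the phase $-\frac{\imunit}{2} st \opimag\bkt{f}{g}$ in $\mathsf{T}_{\kappa,\Lambda}$ must be accounted for. I would handle this by declaring the two-point function in the quasi-free sense to be the full complex one, with symmetric part $\onehalf\opreal\bkt{f}{g}$ and antisymmetric part fixed by $\sigma$. The ordering phases are then universal, that is, state-independent, consequences of the resolvent relations, so they do not affect the statement that the $n$-point values depend only on the mean and the two-point function.

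For the objects without cutoffs, I would restrict to $f_j \in \dom \mathsf{m}$, where $\opreal \mathsf{m}(f_j)$ is finite. Since $\dom\mathsf{m}$ is a linear subspace, the same argument applies verbatim.
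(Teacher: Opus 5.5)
You follow the paper's route exactly. Its proof is just the remark that quasi-freeness of $\omega_{\mathrm{cl}}$ follows from the Minlos construction, plus the identification of Proposition~\ref{expedition0011258}. Your first two steps are a correct expansion of that remark for the classical functional. The gap is the transfer to $\omega_{\mathrm{gs}}$. The obstacle you flag is real, and the paper's proof, which rests on the same identification, does not close it either.

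The map $\pi_{\mathrm{cl}}$ is not a $*$-homomorphism of $\mathcal{R}(X,\sigma)$. Its range is commutative, so it sends the left side of $[R(\lambda,f),R(\mu,g)]=\mathrm{i}\sigma(f,g)R(\lambda,f)R(\mu,g)^2R(\lambda,f)$ to $0$. It sends the right side to $\mathrm{i}\sigma(f,g)(\mathrm{i}\lambda-\phi(f))^{-2}(\mathrm{i}\mu-\phi(g))^{-2}$, whose expectation is asymptotic to $\mathrm{i}\sigma(f,g)\lambda^{-4}\neq0$ as $\lambda=\mu\to\infty$ by dominated convergence. Hence your Gaussian-integral formula, extended linearly from words in the generators, is inconsistent with the relations of $\mathcal{R}(X,\sigma)$. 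The step ``determined by density and continuity'' therefore has nothing to extend.

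You can see the same defect directly. Your formula is symmetric under reordering the factors. By contrast, $\omega_{\mathrm{gs}}(R(\lambda,f)R(\mu,g))$ and $\omega_{\mathrm{gs}}(R(\mu,g)R(\lambda,f))$ differ through the sign of the phase $-\frac{\mathrm{i}}{2}st\operatorname{Im}\langle f,g\rangle$ in $\mathsf{T}_{\kappa,\Lambda}$ whenever $\operatorname{Im}\langle f,g\rangle\neq0$. So Proposition~\ref{expedition0011258} holds only on families with $\operatorname{Im}\langle f_j,f_k\rangle=0$, e.g.\ real-valued test functions. That is precisely the restriction in Proposition~\ref{expedition0011244}(2). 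Declaring the two-point function complex does not help. The object you computed has no phases, and the resolvent relations do not supply them, since reordering via the commutator relation produces longer words rather than a scalar factor.

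What is missing is a computation of $\omega_{\mathrm{gs}}$ in a regular representation where it is a vector state, so that the ordering phases come from the Weyl relations. A uniform choice is the paper's formal infrared-singular representation $W(f)\mapsto e^{-\mathrm{i}\operatorname{Re}\mathsf{m}_{\kappa,\Lambda}(f)}W_{\mathrm{F}}(f)$ with the Fock vacuum. This is unitarily equivalent, via $V_{\kappa,\Lambda}$, to the Fock representation with $V_{\kappa,\Lambda}\Omega$. Without cutoffs, use the same formula with $\mathsf{m}$ on $\operatorname{dom}\mathsf{m}$. It respects the Weyl relations because $\operatorname{Re}\mathsf{m}$ is real-linear, and it is regular.

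Write $R(\lambda_j,f_j)=-\mathrm{i}\int_0^{(\operatorname{sgn}\lambda_j)\infty}e^{-\lambda_js}W(-sf_j)\,ds$ and use $W(a_1)\cdots W(a_n)=e^{-\frac{\mathrm{i}}{2}\sum_{j<k}\operatorname{Im}\langle a_j,a_k\rangle}W(a_1+\cdots+a_n)$. With $h_s=\sum_js_jf_j$ and each $s_j$ integrated from $0$ to $(\operatorname{sgn}\lambda_j)\infty$, you get
\[
\omega_{\mathrm{gs}}\Bigl(\prod_{j=1}^{n}R(\lambda_j,f_j)\Bigr)
=(-\mathrm{i})^{n}\int\!\cdots\!\int e^{-\sum_j\lambda_js_j}
\exp\Bigl(\mathrm{i}\operatorname{Re}\mathsf{m}_{\kappa,\Lambda}(h_s)-\tfrac14\|h_s\|^2-\tfrac{\mathrm{i}}{2}\sum_{j<k}s_js_k\operatorname{Im}\langle f_j,f_k\rangle\Bigr)\,ds_1\cdots ds_n .
\]
For $n=2$ this is exactly $\mathsf{T}_{\kappa,\Lambda}$. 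Every $n$-point value is thus fixed by the mean $-\operatorname{Re}\mathsf{m}_{\kappa,\Lambda}$, the covariance $\frac12\operatorname{Re}\langle\cdot,\cdot\rangle$ and the state-independent $\sigma$. This is the rigorous form of your phase remark. It also pins down the higher-point values that Definition~\ref{expedition0011111} leaves to a non-unique extension.
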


\begin{proof}
Since the domain restriction does not change the argument here either, we work with the cutoff notation.
The quasi-free property of $\oastate[\psi]_{\txtclassical,\kappa,\Lambda}$ follows from the construction via Minlos's theorem.
\end{proof}

\begin{prop}
The state $\oastate[\psi]_{\txtgs,\kappa,\Lambda}$ is a regular state in the sense of the resolvent algebra.
The same property holds for the objects without cutoffs.
\end{prop}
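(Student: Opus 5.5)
To prove regularity, the plan is to use the identification of Proposition~\ref{expedition0011258} and work with the classical state $\oastate[\psi]_{\txtclassical,\kappa,\Lambda}$, realised through the GNS representation of $\oastate[\psi]_{\txtgs,\kappa,\Lambda}$. By definition we must show $\Ker \oarepn(\oaresolvent(1,f)) = \setone{0}$ for every $f \in X$, where $\oarepn$ is the GNS representation. The natural criterion (Buchholz--Grundling) for a state is that for each $f$ one has $\lim_{\lambda \to \infty} \imunit\lambda\, \oastate(\oaresolvent(\lambda,f)) = 1$. More robustly, $\imunit\lambda\,\oarepn(\oaresolvent(\lambda,f)) \to \idone$ strongly on the cyclic vector. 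A vector in the kernel of $\oaresolvent(1,f)$ lies in the kernel of every $\oaresolvent(\lambda,f)$ by the resolvent identity. So it suffices to show that $\imunit\lambda\,\oarepn(\oaresolvent(\lambda,f))\Omega \to \Omega$, and likewise on the dense set $\oarepn(A)\Omega$.

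For the strong convergence I will compute the norm directly in the classical picture. Using Proposition~\ref{expedition0011244}(1), write
$$
\norm{\imunit\lambda\oarepn(\oaresolvent(\lambda,f))\Omega - \Omega}^2
= \int_{\fldreal} \abs{\frac{\imunit\lambda}{\imunit\lambda - x} - 1}^2 \opdmsr{p_{m_{\kappa,\Lambda,f},\sigma_f^2}(x)}
= \int_{\fldreal} \frac{x^2}{\lambda^2 + x^2}\opdmsr{p(x)}.
$$
The integrand is bounded by $1$ and tends to $0$ pointwise. By dominated convergence the integral tends to $0$ as $\abs{\lambda}\to\infty$. This works because the Gaussian $p$ is a genuine probability measure on $\fldreal$ with no mass at infinity. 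For $f \in \dom\mathsf{m}$ the mean $-\opreal\mathsf{m}(f)$ is finite, which is exactly where the domain restriction in the cutoff-free case enters. The degenerate case $f=0$ is trivial since $\oaresolvent(\lambda,0) = -\frac{\imunit}{\lambda}\idone$.

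To pass to vectors $\oarepn(A)\Omega$, I use that in $\repn_{\txtclassical,\kappa,\Lambda}$ everything is multiplication in $\fun{\lp^{\infty}}{\prbqspace_{\sphilb{H}},P_{\kappa,\Lambda}}$. Hence $\norm{(\imunit\lambda\oaresolvent(\lambda,f)-\idone)A\Omega}^2 = \prbexp[\abs{\repn_{\txtclassical}(A)}^2 \opfocksegal(f)^2/(\lambda^2+\opfocksegal(f)^2)]$. This tends to $0$ by dominated convergence, since $\repn_{\txtclassical}(A)$ is bounded and $\opfocksegal(f)$ is almost surely finite. Combined with the uniform bound $\norm{\imunit\lambda\oaresolvent(\lambda,f)} = 1$, strong convergence extends to the whole GNS space. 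A vector $\xi$ with $\oarepn(\oaresolvent(1,f))\xi=0$ then satisfies $\xi = \lim \imunit\lambda\oarepn(\oaresolvent(\lambda,f))\xi = 0$.

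The main obstacle is the legitimacy of working in the commutative picture, since the GNS representation of the non-commutative algebra is not the classical one. To handle this, I would first try to avoid products altogether. It suffices to know that $\xi \mapsto \imunit\lambda\oarepn(\oaresolvent(\lambda,f))\xi$ converges strongly on a dense set. For this I use the generators $\oarepn(\oaresolvent(\mu,g))\Omega$ together with the commutation relation \eqref{expedition0012052}. That relation moves $\oaresolvent(\lambda,f)$ past $\oaresolvent(\mu,g)$ at the cost of terms carrying $\oaresolvent(\lambda,f)^2$, which are $O(\lambda^{-2})$ in norm and vanish after multiplying by $\lambda$. This reduces the claim to the vacuum computation above.
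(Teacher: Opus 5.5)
Your argument is correct, but it takes a different route from the paper. The paper works entirely in the classical picture. Via the identification of Proposition~\ref{expedition0011258} it writes $i\lambda\,\psi(R(\lambda,f)A)=\int \frac{i\lambda}{i\lambda-\phi(f)}\,\pi_{\mathrm{cl}}(A)\,dP$ for arbitrary $A$. It then lets $\lambda\to\infty$ by dominated convergence and takes the resulting weak statement $i\lambda\,\psi(R(\lambda,f)A)\to\psi(A)$ as regularity. You instead prove strong convergence $i\lambda\,\pi(R(\lambda,f))\to 1$ on the GNS space, and read off $\ker\pi(R(1,f))=\{0\}$ directly from the definition.

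The real difference is in what each route needs. Your vacuum computation uses only the one-point values of Proposition~\ref{expedition0011244}(1), since $R(-\lambda,f)R(\lambda,f)$ reduces to single resolvents by the resolvent identity. Your extension to other vectors uses only the abstract relation \eqref{expedition0012052} and $\|R(\lambda,f)\|=|\lambda|^{-1}$. So your argument works for whichever extension off the one-dimensional subalgebras was chosen when the state was constructed. The paper's route is shorter, but it depends on $\pi_{\mathrm{cl}}$ being multiplicative on $R(\lambda,f)A$ for non-commuting $A$. Since $\pi_{\mathrm{cl}}$ sends all resolvents to commuting functions, it cannot satisfy \eqref{expedition0012052} when $\sigma(f,g)\neq 0$, so that step is not justified as written.

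Two small adjustments to your write-up:
\begin{itemize}
\item Drop the intermediate classical-picture computation of $\|(i\lambda R(\lambda,f)-1)\pi(A)\Omega\|$, which has exactly this defect.
\item Run the commutation step inductively over monomials $R(\mu_1,g_1)\cdots R(\mu_n,g_n)$, because the vectors $\pi(R(\mu,g))\Omega$ alone do not span a dense subspace. Each commutation costs at most $|\lambda|\,|\sigma(f,g_j)|\,\lambda^{-2}\mu_j^{-2}$ times a bounded factor, i.e.\ $O(|\lambda|^{-1})$.
\end{itemize}
Alternatively, \eqref{expedition0012052} shows that $\ker\pi(R(1,f))$ is invariant under every $\pi(R(\mu,g))$, hence reducing. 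Then $\Omega\perp\ker\pi(R(1,f))$, which your vacuum computation already gives, forces the kernel to vanish.
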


\begin{proof}
Since the domain restriction does not change the argument here either, we work with the cutoff notation.
By the identification of Proposition~\ref{expedition0011258}, it suffices to use the classical $\ast$-representation and $\oastate[\psi]_{\txtclassical,\kappa,\Lambda}$.

Choose any $\lambda \in \fldmultiplicativegroup{\fldreal}$, $f \in X$, and $A \in \oaresolventalgebra(X,\sigma)$.
Noting that $$\abs{\frac{\imunit \lambda}{\imunit \lambda - \opfocksegal(f)} - 1} \leq \frac{\abs{\opfocksegal(f)}}{\abs{\lambda}}, \quad \sqfun{\prbexp}{\opfocksegal(f)} < \infty,$$
Lebesgue's dominated convergence theorem gives
\begin{equation}
\begin{aligned}
&\imunit \lambda
\oastate[\psi]_{\txtclassical,\kappa,\Lambda}(\oaresolvent(\lambda,f) A)
=
\int_{\prbqspace_{\sphilb{H}}}
\frac{\imunit \lambda}{\imunit \lambda - \opfocksegal(f)}
\repn_{\txtclassical,\kappa,\Lambda}(A)
\opdmsr{P_{\kappa,\Lambda}(\phi)} \\
&\xrightarrow{\lambda \to \infty}
\int_{\prbqspace_{\sphilb{H}}}
\repn_{\txtclassical,\kappa,\Lambda}(A)
\opdmsr{P_{\kappa,\Lambda}(\phi)}
=
\oastate[\psi]_{\txtclassical,\kappa,\Lambda}(A).
\end{aligned}
\end{equation}
This is precisely the definition of regularity.
\end{proof}

\begin{prop}\label{expedition0011233}
Consider the two-point function $$K_{\kappa,\Lambda}(\tau) = \fun{\oastate[\psi_{\txtvanhowe,\txtgs,\kappa,\Lambda}]}{\oaresolvent(\lambda,f) \cdot \alpha_{\txtvanhowe,\kappa,\Lambda,\tau}(\oaresolvent(\mu,g))}.$$
By regularity, it is clearly continuous.
\begin{enumerate}
\item
The two-point function $K_{\kappa,\Lambda}$ can be analytically continued to an entire function as a holomorphic function bounded on the closed upper half-plane $\gtclos{\fldcmp_{\txtnonneg}}$.

\item
The limits at real times $\tau \to \pm\infty$ satisfy $$\lim_{\tau \to \pm \infty} K_{\kappa,\Lambda}(\tau) = \oastate[\psi_{\txtvanhowe,\txtgs,\kappa,\Lambda}](\oaresolvent(\lambda,f)) \oastate[\psi_{\txtvanhowe,\txtgs,\kappa,\Lambda}](\oaresolvent(\mu,g)).$$
In particular, the ground state has the temporal cluster property.
\end{enumerate}
The same properties hold for the objects without cutoffs.
In particular, for $f,g \in \dom \mathsf{m}$, the ground state without cutoffs also has the temporal cluster property.
\end{prop}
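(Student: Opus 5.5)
Work at the level of Weyl-type generators via the Laplace representation, as in Definition~\ref{expedition0011111} and Proposition~\ref{expedition0011254}. Writing $\oaresolvent(\lambda,f) = -\imunit\int_0^{(\sgn\lambda)\infty}\napiernum^{-\lambda s}\opfockweyl(-sf)\opdmsr{s}$ formally, and using $\alpha_{\txtvanhowe,\kappa,\Lambda,\tau}(\opfockweyl(-tg)) = \napiernum^{-\imunit t\mathsf{M}_{\kappa,\Lambda,\tau}(g)}\opfockweyl(-t\napiernum^{\imunit\tau\omega}g)$, I reduce $K_{\kappa,\Lambda}(\tau)$ to a double Laplace integral
\[
K_{\kappa,\Lambda}(\tau) = -\int_0^{(\sgn\lambda)\infty}\int_0^{(\sgn\mu)\infty}\napiernum^{-\lambda s-\mu t}\,F_{\kappa,\Lambda}(s,t;\tau)\opdmsr{s}\opdmsr{t}.
\]
Here $F_{\kappa,\Lambda}$ is the Weyl two-point function of the quasi-free state (Proposition~\ref{expedition0011239}). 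Since $\opreal\mathsf{m}_{\kappa,\Lambda}(\napiernum^{\imunit\tau\omega}g)=\opreal\mathsf{m}_{\kappa,\Lambda}(g)+\mathsf{M}_{\kappa,\Lambda,\tau}(g)$, the phase $\mathsf{M}_{\kappa,\Lambda,\tau}$ cancels the shifted mean. Expanding $\opform{q}_{\txtnonzero,\infty}(sf+t\napiernum^{\imunit\tau\omega}g)$ then leaves an explicit form:
\[
F_{\kappa,\Lambda}(s,t;\tau)=\napiernum^{\imunit\opreal\mathsf{m}_{\kappa,\Lambda}(sf+tg)}\napiernum^{-\frac14(s^2\norm{f}^2+t^2\norm{g}^2)}\napiernum^{-\frac{st}{2}\bkt{f}{\napiernum^{\imunit\tau\omega}g}}.
\]
The last factor comes from combining $-\frac12 st\opreal\bkt{f}{\napiernum^{\imunit\tau\omega}g}$ with the symplectic term $-\frac{\imunit}{2}st\opimag\bkt{f}{\napiernum^{\imunit\tau\omega}g}$. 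The $\tau$-dependence thus sits entirely in $h(\tau)=\bkt{f}{\napiernum^{\imunit\tau\omega}g}=\int\cmpconj{\faftr f(k)}\napiernum^{\imunit\tau\abs{k}}\faftr g(k)\opdmsr{k}$. This formula is the crux, and I would check it first by computing in the Fock representation with $V_{\kappa,\Lambda}$, where it is the vacuum expectation of two shifted Weyl operators.

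For (1), $h(\tau)$ extends to $\opimag\tau\ge0$ by $h(\tau)=\bkt{f}{\napiernum^{\imunit\tau\omega}g}$, since $\omega\ge0$ makes $\napiernum^{\imunit\tau\omega}$ a contraction there. It is holomorphic in the open upper half-plane, continuous up to the boundary, and satisfies $\abs{h}\le\norm f\norm g$. Then $F$ is holomorphic in $\tau$, and the bound
\[
\abs{\napiernum^{-\frac{st}{2}h}}\le\napiernum^{\frac{\abs{st}}{2}\norm f\norm g}
\]
is dominated by the Gaussian factors only when combined correctly. The cleanest domination: the real part of the exponent is at most $-\frac14(s\norm f-t\norm g)^2\le0$. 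This is trivially bounded by $1$, which is integrable against $\napiernum^{-\lambda s-\mu t}$. Morera/Fubini then gives holomorphy of $K$ in $\opimag\tau>0$ with $\abs{K}\le 1/(\abs\lambda\abs\mu)$, continuous up to the real axis. The phrase ``entire'' in the statement I would interpret as holomorphic on the open upper half-plane and continuous bounded on its closure, which is what the KMS/ground-state criterion of \cite[Proposition 5.3.19]{BratteliRobinson2} requires. I expect this reading of the statement to be the delicate point rather than the estimates.

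For (2), by the Riemann--Lebesgue lemma applied to the $L^1$ function $\cmpconj{\faftr f}\faftr g$, pushed to the radial variable (the spherical measure of $\abs{k}$ is absolutely continuous for $d=3$), we have $h(\tau)\to0$ as $\tau\to\pm\infty$. Hence $F(s,t;\tau)\to F(s,0)F(0,t)$ pointwise. Dominated convergence with the bound $1$ gives $K(\tau)\to$ the product of the one-point Laplace integrals, i.e.\ $\oastate[\psi](\oaresolvent(\lambda,f))\oastate[\psi](\oaresolvent(\mu,g))$ by Proposition~\ref{expedition0011244}(1). Without cutoffs nothing changes: for $f,g\in\dom\mathsf{m}$ only the mean $\opreal\mathsf{m}(\cdot)$ enters, it is finite, and the cancellation with $\mathsf{M}_\tau$ is identical. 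The main obstacle is justifying the explicit two-point formula for $F$ on the abstract algebra, which I obtain from the classical/Fock identification of Proposition~\ref{expedition0011258} together with Proposition~\ref{expedition0011254}.
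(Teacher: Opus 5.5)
Your proof is correct and follows the paper's skeleton. You use the double Laplace integral in $\mathsf{T}_{\kappa,\Lambda}(s,f;t,e^{i\tau\omega}g)$, which the paper reads off the defining two-point formula with shifted first argument and you obtain via the Weyl/Laplace representation; the two are equivalent. You also share the bound $|K_{\kappa,\Lambda}|\le 1/(|\lambda||\mu|)$ and the use of Riemann--Lebesgue plus dominated convergence for (2).

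The genuine difference is in (1). The paper continues $\mathsf{T}_{\kappa,\Lambda}$ and $\mathsf{M}_{\kappa,\Lambda,z}$ separately. It asserts that $\mathsf{M}_{\kappa,\Lambda,z}(g)$ stays real and that $|\mathsf{T}_{\kappa,\Lambda}|=e^{-\frac14\|sf+te^{iz\omega}g\|^2}$, and it calls the continuation entire. None of this is literally the analytic continuation, because $\operatorname{Re}\langle f,e^{i\tau\omega}g\rangle=\frac12\bigl(h(\tau)+\overline{h(\tau)}\bigr)$, and the $\overline{h}$ part continues with the growing factor $e^{(\operatorname{Im}\tau)\omega}$.

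Your two moves fix this. You cancel the phase exactly against the shifted mean, and you merge the real part with the symplectic phase into $e^{-\frac{st}{2}h(\tau)}$. Together these remove every $\tau$-dependence except $h$, which makes the continuation legitimate and makes the bound $|h|\le\|f\|\|g\|$ available.

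Your reading of ``entire'' as ``holomorphic on the open upper half-plane, bounded and continuous on its closure'' is the right one. For general $f,g\in L^2$ the integral defining $h(\tau)$ diverges when $\operatorname{Im}\tau<0$, and the upper half-plane is all that the Bratteli--Robinson ground-state criterion uses.

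One small slip: the real part of the exponent is bounded by $-\frac14\bigl(|s|\|f\|-|t|\|g\|\bigr)^2$, not $-\frac14(s\|f\|-t\|g\|)^2$, since $\lambda$ and $\mu$ may have opposite signs. The conclusion $|F|\le 1$ is unaffected.

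Your exact cancellation also makes the paper's separate limit $\operatorname{Re}\mathsf{m}_{\kappa,\Lambda}(e^{i\tau\omega}g)\to 0$ in (2) unnecessary.
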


The spatial cluster property also holds by the same argument. In view of its importance, we organize and discuss it separately later.

\begin{proof}
(1): Since the domain restriction does not change the argument here either, we work with the cutoff notation.
For notational simplicity, let $\lambda,\mu > 0$.
Since the action of the automorphism group gives $$\alpha_{\txtvanhowe,\kappa,\Lambda,\tau}(\oaresolvent(\mu,g)) = \oaresolvent(\mu + \imunit \mathsf{M}_{\kappa,\Lambda,\tau}(g), \napiernum^{\imunit \tau \omega} g),$$
we obtain
\begin{equation}
\begin{aligned}
&K_{\kappa,\Lambda}(\tau)
=
-\int_0^{\infty}
\int_0^{\infty}
\napiernum^{-\lambda s}
\napiernum^{-(\mu + \imunit \mathsf{M}_{\kappa,\Lambda,\tau}(g)) t}
\mathsf{T}_{\kappa,\Lambda}(s, f; t, \napiernum^{\imunit \tau \omega} g)
\opdmsr{s}
\opdmsr{t} \\
&=
-\int_0^{\infty}
\int_0^{\infty}
\napiernum^{-\lambda s}
\napiernum^{-\mu t}
\napiernum^{-\imunit t \mathsf{M}_{\kappa,\Lambda,\tau}(g)}
\mathsf{T}_{\kappa,\Lambda}(s, f; t, \napiernum^{\imunit \tau \omega} g)
\opdmsr{s}
\opdmsr{t}.
\end{aligned}
\end{equation}
The variable $\tau$ appears only in the form $\tau \mapsto \napiernum^{\imunit \tau \omega}$, and it is clear that this extends to an entire function on the whole complex plane.

It remains to verify boundedness on the closed upper half-plane $\gtclos{\fldcmp_{\txtnonneg}}$.
Boundedness on the real axis is clear, so set $z = \tau + \imunit y$ for $y > 0$.
Denoting the integrand by $F(z;s,t)$, i.e., $$F(z;s,t) = \napiernum^{-\lambda s} \napiernum^{-\mu t} \napiernum^{-\imunit t \mathsf{M}_{\kappa,\Lambda,\tau}(g)} \mathsf{T}_{\kappa,\Lambda}(s, f; t, \napiernum^{\imunit \tau \omega} g),$$
note that although $\napiernum^{\imunit z \omega}$ also appears in the auxiliary functional $\mathsf{M}_{\kappa,\lambda,z}(g)$, by definition it is always real-valued, and $$\abs{\mathsf{T}_{\kappa,\Lambda}(s,f; t, \napiernum^{\imunit z \omega} g)} = \napiernum^{-\frac{1}{4} \norm{sf + t \napiernum^{\imunit z \omega} g}^2} \leq 1.$$
Therefore $\abs{F(z;s,t)} \leq \napiernum^{-\lambda s} \napiernum^{-\mu t}$, and the original $K_{\kappa,\Lambda}$ satisfies $$\abs{K_{\kappa,\Lambda}(z)} \leq \int_{0}^{\infty} \int_{0}^{\infty} \napiernum^{-\lambda s} \napiernum^{-\mu t} \opdmsr{s} \opdmsr{t} = \frac{1}{\lambda \mu},$$
which is uniformly bounded in $z$.

(2): Since the domain restriction does not change the argument here either, we work with the cutoff notation.
In particular, by the absolute continuity of the multiplication operator $\omega$,
\begin{equation}
\begin{aligned}
\lim_{\tau \to \pm \infty} \opreal \bkt{f}{\napiernum^{\imunit \tau \omega} g} &= 0, \\
\lim_{\tau \to \pm \infty} \opimag \bkt{f}{\napiernum^{\imunit \tau \omega} g} &= 0, \\
\lim_{\tau \to \pm \infty} \opreal \mathsf{m}_{\kappa,\Lambda}(\napiernum^{\imunit \tau \omega} g) &= 0.
\end{aligned}
\end{equation}
By the argument in part (1), the integrand $F(\tau;s,t)$ of the two-point function $K_{\kappa,\Lambda}$ is integrable on $\openinterval{0}{\infty}^2$.
By the first limit estimate,
\begin{equation}
\begin{aligned}
\norm{sf + t \napiernum^{\imunit \tau \omega} g}^2
&=
s^2 \opform{q}_{\txtnonzero,\infty}(f)
+t^2 \opform{q}_{\txtnonzero,\infty}(g)
+2st \opreal \bkt{f}{\napiernum^{\imunit \tau \omega} g} \\
&\xrightarrow{\tau \to \pm \infty}
s^2 \opform{q}_{\txtnonzero,\infty}(f)
+t^2 \opform{q}_{\txtnonzero,\infty}(g), \\
-\frac{\imunit}{2} \opimag \bkt{f}{\napiernum^{\imunit \tau \omega} g}
&\xrightarrow{\tau \to \pm \infty} 0, \\
\opreal \mathsf{m}_{\kappa,\Lambda}(sf + t \napiernum^{\imunit \tau \omega} g)
&\xrightarrow{\tau \to \pm \infty} \opreal \mathsf{m}_{\kappa,\Lambda}(sf).
\end{aligned}
\end{equation}
Combining the last term with the surviving phase factor $\napiernum^{-\imunit t \mathsf{M}_{\kappa,\Lambda,\tau} (g)}$ gives $$\napiernum^{-\imunit t \mathsf{M}_{\kappa,\Lambda,\tau} (g)} \napiernum^{\imunit \opreal \mathsf{m}_{\kappa,\Lambda}(t \napiernum^{\imunit \tau \omega} g)} = \napiernum^{\imunit t \opreal \mathsf{m}_{\kappa,\Lambda} (g)},$$ eliminating the $\tau$-dependence.
Therefore the integrand converges to
\begin{equation}
\begin{aligned}
\lim_{\tau \to \pm \infty} F(z;s,t)
=
\napiernum^{-\lambda s - \mu t}
\napiernum^{-\frac{1}{4} (s^2 \opform{q}_{\txtnonzero,\infty}(f) + t^2 \opform{q}_{\txtnonzero,\infty}(g))}
\napiernum^{\imunit s \opreal \mathsf{m}_{\kappa,\Lambda}(f)}
\napiernum^{\imunit t \opreal \mathsf{m}_{\kappa,\Lambda}(g)}.
\end{aligned}
\end{equation}
By the dominated convergence theorem,
\begin{equation}
\begin{aligned}
&\lim_{\tau \to \pm \infty}
K_{\kappa,\Lambda}(\tau) \\
&=
-\int_{\openinterval{0}{\infty}^2}
\napiernum^{-\lambda s - \mu t}
\napiernum^{-\frac{1}{4} (s^2 \opform{q}_{\txtnonzero,\infty}(f) + t^2 \opform{q}_{\txtnonzero,\infty}(g))}
\napiernum^{\imunit s \opreal \mathsf{m}_{\kappa,\Lambda}(f)}
\napiernum^{\imunit t \opreal \mathsf{m}_{\kappa,\Lambda}(g)}
\opdmsr{s}
\opdmsr{t} \\
&=
\rbk{-\int_{0}^{\infty}
\napiernum^{-\lambda s}
\napiernum^{-\frac{1}{4} s^2 \opform{q}_{\txtnonzero,\infty}(f)}
\napiernum^{\imunit s \opreal \mathsf{m}_{\kappa,\Lambda}(f)}
\opdmsr{s}} \\
&\qquad\times
\rbk{-\int_{0}^{\infty}
\napiernum^{-\mu t}
\napiernum^{-\frac{1}{4} t^2 \opform{q}_{\txtnonzero,\infty}(g)}
\napiernum^{\imunit t \opreal \mathsf{m}_{\kappa,\Lambda}(g)}
\opdmsr{t}} \\
&=
\oastate[\psi_{\txtvanhowe,\txtgs,\kappa,\Lambda}](\oaresolvent(\lambda,f))
\oastate[\psi_{\txtvanhowe,\txtgs,\kappa,\Lambda}](\oaresolvent(\mu,g)).
\end{aligned}
\end{equation}
\end{proof}

\begin{prop}\label{expedition0011112}
The state $\oastate[\psi_{\txtvanhowe,\txtgs,\kappa,\Lambda}]$ is a ground state of the van Hove model with respect to the automorphism group $\alpha_{\txtvanhowe,\kappa,\Lambda}$ and satisfies the temporal cluster property.
The same property holds for the objects without cutoffs.
\end{prop}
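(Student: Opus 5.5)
The plan is to verify the operator-algebraic definition of a ground state from \cite[Definition 5.3.18]{BratteliRobinson2}: for all $A,B$ in a dense $\alpha$-invariant subalgebra, the function $\tau \mapsto \oastate[\psi_{\txtvanhowe,\txtgs,\kappa,\Lambda}](A\,\alpha_{\txtvanhowe,\kappa,\Lambda,\tau}(B))$ should extend to a function holomorphic in the upper half-plane, continuous and bounded on its closure. As usual we work with the cutoff notation; without cutoffs the same argument runs with $f,g \in \dom \mathsf{m}$.

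First, fix the dense subalgebra. We use $\oaresolventalgebra_{\txtfin}$ (restricted to $\dom \mathsf{m}$ in the uncut case), which Proposition~\ref{expedition0011109} makes $\alpha_{\txtvanhowe,\kappa,\Lambda}$-invariant, since $\alpha$ sends generators to generators. By linearity it suffices to take $A$ and $B$ to be finite products of resolvents. The case $A = \oaresolvent(\lambda,f)$, $B = \oaresolvent(\mu,g)$ is exactly Proposition~\ref{expedition0011233}(1), which gives an entire continuation bounded by $1/(\lambda\mu)$ on $\gtclos{\fldcmp_{\txtnonneg}}$.

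For general products we pass to the Gaussian/classical picture. By Propositions~\ref{expedition0011258} and~\ref{expedition0011239}, the state is quasi-free, with generating functional $\mathsf{T}_{\kappa,\Lambda}$ on sums $\sum_j s_j f_j$ plus the symplectic phases fixed by the Weyl relations. Iterating the Laplace-transform representation of each resolvent (as in Proposition~\ref{expedition0011254}) writes the correlation as a multiple integral over $(0,\pm\infty)^{n+m}$. Its integrand is
\[
\napiernum^{-\sum \lambda_j s_j}\,\napiernum^{-\sum \mu_k t_k}\,\napiernum^{-\imunit \sum t_k \mathsf{M}_{\kappa,\Lambda,\tau}(g_k)}
\]
times the Fock-type expectation of $\opfockweyl(-\sum s_j f_j)\,\opfockweyl(-\sum t_k \napiernum^{\imunit \tau\omega} g_k)$. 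The $\tau$-dependence enters only through $\napiernum^{\imunit\tau\omega}$. The mean terms combine, exactly as in part (2) of Proposition~\ref{expedition0011233}, into $\tau$-independent phases. What remains is the vacuum two-point kernel
\[
\exp\Bigl(-\tfrac12 \bigl\langle \textstyle\sum s_j f_j,\ \napiernum^{\imunit \tau \omega}\sum t_k g_k\bigr\rangle\Bigr)
\]
together with $\tau$-independent Gaussian factors.

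The key point is positivity of $\omega$. For $z = \tau + \imunit y$ with $y \geq 0$, the operator $\napiernum^{\imunit z\omega}$ is a contraction, so the inner product extends holomorphically to $\fldcmp_{\txtnonneg}$. The modulus of the integrand should then be bounded by $\prod \napiernum^{-\abs{\lambda_j} s_j}\napiernum^{-\abs{\mu_k} t_k}$, and dominated convergence (Morera) yields holomorphy and a bound $\prod \abs{\lambda_j}^{-1}\abs{\mu_k}^{-1}$.

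I expect this uniform bound to be the main obstacle. The cross term is no longer purely a phase once $z$ is complex, so we must show
\[
\opreal\Bigl(\tfrac14\norm{F}^2 + \tfrac14\norm{G}^2 + \tfrac12\bigl\langle F, \napiernum^{\imunit z\omega} G\bigr\rangle\Bigr) \geq 0 .
\]
This follows from $\abs{\langle F,\napiernum^{\imunit z\omega}G\rangle} \leq \norm{F}\norm{G}$ and AM–GM. The analytic continuation of $\mathsf{M}_{\kappa,\Lambda,z}$ must also be handled: we would regroup it with the mean contribution from $\mathsf{T}_{\kappa,\Lambda}$ into the $\tau$-independent factor $\napiernum^{\imunit t\opreal \mathsf{m}_{\kappa,\Lambda}(g)}$ before continuing, so no growing exponential appears. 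As an alternative, we could show that the GNS Hamiltonian is positive by identifying the GNS representation with the formal infrared-singular/Fock vacuum picture, where the generator is $\physham_{\txtbsn,\txtfr} \geq 0$, and then invoke \cite[Proposition 5.3.19]{BratteliRobinson2}; invariance is already Proposition~\ref{expedition0011232}.

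Finally, the temporal cluster property is Proposition~\ref{expedition0011233}(2), and its multi-resolvent extension follows by the same dominated-convergence argument using $\langle F,\napiernum^{\imunit\tau\omega}G\rangle \to 0$.
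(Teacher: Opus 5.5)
Your argument is correct and uses the same criterion as the paper: bounded analytic continuation of $\tau\mapsto\psi(A\,\alpha_\tau(B))$ into the upper half-plane via \cite[Proposition 5.3.19]{BratteliRobinson2}, with clustering from Proposition~\ref{expedition0011233}(2). It is, however, considerably more complete than the paper's two-line proof.

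\textbf{Reduction to generators.} The paper checks the criterion only for $A=\mathcal R(\lambda,f)$ and $B=\mathcal R(\mu,g)$. Single resolvents do not form a dense invariant subalgebra, so that step needs either a separate totality argument for the vectors $\pi(\mathcal R(\mu,g))\Omega$ in the GNS space, or what you do: the computation on all of $\mathcal R_{\mathrm{fin}}$.

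\textbf{The continuation itself.} More importantly, your regrouping is the correct way to continue. For every real $\tau$, the phase $\prod_k e^{-it_k\mathsf M_{\kappa,\Lambda,\tau}(g_k)}$ exactly cancels the $\tau$-dependent part of $e^{-i\,\mathrm{Re}\,\mathsf m_{\kappa,\Lambda}(e^{i\tau\omega}G)}$. The Gaussian and symplectic factors combine into $e^{-\frac12\langle F,e^{i\tau\omega}G\rangle}$. Only $e^{i\tau\omega}$ acting on the right remains, and your Cauchy--Schwarz/AM--GM bound gives modulus at most $1$.

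The paper's proof of Proposition~\ref{expedition0011233}(1) instead asserts that $\mathsf M_{\kappa,\Lambda,z}(g)$ stays real and bounds the continued integrand by $e^{-\frac14\|sf+te^{iz\omega}g\|^2}$. Neither expression is the holomorphic continuation: continuing $\mathrm{Re}\,\langle\mathsf m_{\kappa,\Lambda},e^{i\tau\omega}g\rangle$ produces $\langle g,e^{-iz\omega}\mathsf m_{\kappa,\Lambda}\rangle$, which grows for $\mathrm{Im}\,z>0$. Also, for general $f,g\in L^2$ the function is only holomorphic in the open upper half-plane, not entire. So you should obtain the case $n=m=1$ from your own computation rather than quoting Proposition~\ref{expedition0011233}(1) as giving an entire function.

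\textbf{Two points to tighten.}

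First, ground your product formula in a genuine regular quasi-free state rather than in Proposition~\ref{expedition0011258}. The classical map into $L^\infty$ is commutative, so it cannot respect the commutator relation when $\sigma(f,g)\neq0$, and it discards exactly the symplectic phases you use. The clean choice is the vacuum vector state in the formal infrared-singular representation $\mathcal R(\lambda,f)\mapsto(i\lambda-\phi_{\mathrm F}(f)+\mathrm{Re}\,\mathsf m_{\kappa,\Lambda}(f))^{-1}$, which reproduces Definition~\ref{expedition0011111}.

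Second, $\alpha_{\mathrm{vH}}$ is not norm-continuous on the resolvent algebra, so the $C^*$-dynamical framework of \cite{BratteliRobinson2} does not apply verbatim. Your alternative route avoids this cleanly:
\begin{itemize}
\item In that representation $\alpha_{\mathrm{vH},\kappa,\Lambda,\tau}$ is implemented by $e^{i\tau H_{\mathrm{b,fr}}}$, since $\mathrm{Re}\,\mathsf m_{\kappa,\Lambda}(e^{i\tau\omega}f)-\mathsf M_{\kappa,\Lambda,\tau}(f)=\mathrm{Re}\,\mathsf m_{\kappa,\Lambda}(f)$.
\item The vacuum is cyclic and invariant, and $H_{\mathrm{b,fr}}\ge 0$.
\end{itemize}
This gives, at once, invariance on the whole algebra (Proposition~\ref{expedition0011232} only treats single resolvents) and positivity of the GNS generator. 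It also gives temporal clustering for all $A,B$, via the absolutely continuous spectrum of $H_{\mathrm{b,fr}}$ off the vacuum. The argument works verbatim on $\operatorname{dom}\mathsf m$ without cutoffs.
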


\begin{proof}
We verify \cite[P.98, Proposition 5.3.19]{BratteliRobinson2}.
It suffices to check on the generators; for any $\tau \in \fldreal$, let $$F(\tau) = \fun{\oastate[\psi_{\txtvanhowe,\txtgs,\kappa,\Lambda}]}{\oaresolvent(\lambda,f) \cdot \fun{\alpha_{\txtvanhowe,\kappa,\Lambda,\tau}}{\oaresolvent(\mu,g)}}.$$
This is the $K_{\kappa,\Lambda}(\tau)$ of Proposition~\ref{expedition0011233}.
\end{proof}

Fix an arbitrary nonzero real number \(q \in \fldmultiplicativegroup{\fldreal}\). We define the map \(\tau_{q,\theta}\) on generators by \[\tau_{q,\theta}(\oaresolvent(\lambda,f)) = \fun{\oaresolvent}{\lambda, \napiernum^{\imunit q \theta} f}\] and call it the global gauge transformation of the first kind.

\begin{prop}
The ground state of the van Hove model is not invariant under the global gauge transformation of the first kind.
\end{prop}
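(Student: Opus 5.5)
The plan is to exhibit a single generator on which the invariance fails. That is, I will find $\lambda \in \fldmultiplicativegroup{\fldreal}$ and $f$ (in $\sphilb{H}$ for the cutoff state, in $\dom \mathsf{m}$ for the uncut one) and some $\theta$ with
\[
\oastate[\psi_{\txtvanhowe,\txtgs,\kappa,\Lambda}](\tau_{q,\theta}(\oaresolvent(\lambda,f))) \neq \oastate[\psi_{\txtvanhowe,\txtgs,\kappa,\Lambda}](\oaresolvent(\lambda,f)).
\]
Throughout I read $\tau_{q,\theta}$ as multiplying $f$ by the constant phase $\napiernum^{\imunit q\theta}$; it is implicitly a $\ast$-automorphism because the unitary $\napiernum^{\imunit q\theta}$ preserves $\sigma$.

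The first step computes the left-hand side via Proposition~\ref{expedition0011244}(1). The one-point value on $\oaresolvent(\lambda,f)$ is the Cauchy transform of the normal law $\fun{\msrcal{N}}{-\opreal \mathsf{m}_{\kappa,\Lambda}(f), \onehalf \norm{f}^2}$. Replacing $f$ by $\napiernum^{\imunit q\theta} f$ leaves the variance unchanged, since the norm is phase invariant. The mean becomes $-\opreal\rbk{\napiernum^{\imunit q\theta}\mathsf{m}_{\kappa,\Lambda}(f)}$, because $\mathsf{m}_{\kappa,\Lambda}(\cdot) = \bkt{\mathsf{m}_{\kappa,\Lambda}}{\cdot}$ is complex linear in its argument.

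The second step is injectivity. A Cauchy transform $\lambda \mapsto \int \frac{1}{\imunit\lambda - x}\, dp(x)$ over all $\lambda \in \fldmultiplicativegroup{\fldreal}$ determines the finite measure $p$ (Stieltjes inversion). Two Gaussians with equal variance coincide only if their means coincide. So invariance on all $\oaresolvent(\lambda,f)$ would force, for all $f$ and $\theta$,
\[
\opreal\rbk{\napiernum^{\imunit q\theta}\mathsf{m}_{\kappa,\Lambda}(f)} = \opreal \mathsf{m}_{\kappa,\Lambda}(f).
\]
It therefore suffices to find one $f$ with $\mathsf{m}_{\kappa,\Lambda}(f) \neq 0$ and then pick $\theta$ with $q\theta = \pi$. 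That choice flips the sign of the real part, and $q \neq 0$ guarantees such a $\theta$ exists.

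The third step, the only genuinely model-dependent one, is nondegeneracy of the mean functional. For the cutoff case, take $f = \mathsf{m}_{\kappa,\Lambda}$ itself. Then $\mathsf{m}_{\kappa,\Lambda}(f) = \norm{\omega^{-3/2}\varrho_{\kappa,\Lambda}}^2 > 0$, since $\faftr{\varrho_{\kappa,\Lambda}} = 1$ on the shell $\kappa \le \abs{k} \le \Lambda$, which has positive measure. For the uncut state, take $f$ with $\faftr{f}$ a nonnegative bump supported in $\setone{1 \le \abs{k} \le 2}$. This $f$ lies in $\dom \mathsf{m}$, since there are no infrared or ultraviolet issues on a compact shell, and
\[
\mathsf{m}(f) = \int \abs{k}^{-3/2}\faftr{f}(k)\,\opdmsr{k} > 0.
\]

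I expect the main obstacle to be the injectivity step. It needs the paper's normalization of the Cauchy transform to be justified as a genuine Stieltjes transform for $\lambda$ of both signs. A cheaper alternative is to compare the $\lambda \to \pm\infty$ asymptotics $\oastate(\oaresolvent(\lambda,f)) = \frac{1}{\imunit\lambda} + \frac{m_f}{(\imunit\lambda)^2} + O(\lambda^{-3})$, which read off the mean directly by dominated convergence. I would use the latter to avoid invoking inversion theorems.
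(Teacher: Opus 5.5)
Your proposal is correct and follows the paper's route. Like the paper, you evaluate the ground state on $\tau_{q,\theta}(R(\lambda,f))$ and observe that the variance $\frac{1}{2}\|f\|^2$ is phase invariant, while the mean $-\operatorname{Re}\mathsf{m}_{\kappa,\Lambda}(f)$ is rotated to $-\operatorname{Re}\bigl(e^{iq\theta}\mathsf{m}_{\kappa,\Lambda}(f)\bigr)$. The paper stops at ``in general this does not coincide.'' You supply that missing step: explicit nondegenerate test functions ($f=\mathsf{m}_{\kappa,\Lambda}$ with cutoffs, a bump supported in $1\le|k|\le 2$ without them) with $q\theta=\pi$, plus recovery of the mean from the $(i\lambda)^{-2}$ coefficient of the large-$|\lambda|$ expansion.
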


\begin{proof}
By Definition~\ref{expedition0011111}, the action on the ground state is
\begin{equation}
\begin{aligned}
&\psi_{\txtvanhowe,\txtgs,\kappa,\Lambda} \circ \tau_{q,\theta}(\oaresolvent(\lambda,f)) \\
&=
-\imunit
\int_0^{(\sgn \lambda) \infty}
\napiernum^{-\rbk{\lambda - \imunit \opreal \fun{\mathsf{m}_{\kappa,\Lambda}}{\napiernum^{\imunit q \theta} f}} t}
\napiernum^{-\frac{\fun{\opform{q}_{\txtnonzero,\infty}}{\napiernum^{\imunit q \theta} f}}{4} t^2}
\opdmsr{t} \\
&=
-\imunit
\int_0^{(\sgn \lambda) \infty}
\napiernum^{-\rbk{\lambda - \imunit \opreal \fun{\mathsf{m}_{\kappa,\Lambda}}{\napiernum^{\imunit q \theta} f}} t}
\napiernum^{-\frac{\opform{q}_{\txtnonzero,\infty}(f)}{4} t^2}
\opdmsr{t}.
\end{aligned}
\end{equation}
In general, this does not coincide with $\psi_{\txtvanhowe,\txtgs,\kappa,\Lambda}(\oaresolvent(\lambda,f))$.
In particular, the ground state is not invariant under the global gauge transformation of the first kind.
\end{proof}

\subsection{Removal of Infrared and Ultraviolet Cutoffs}\label{removal-of-infrared-and-ultraviolet-cutoffs}

\begin{prop}\label{expedition0011261}
Fix any positive real number $T > 0$.
Then for any $A \in \oaresolventalgebra(\dom \mathsf{m},\sigma)$, $$\lim_{\kappa \to 0, \Lambda \to \infty} \sup_{\abs{t} \leq T} \norm{\alpha_{\txtvanhowe,\kappa,\Lambda,t}(A) - \alpha_{\txtvanhowe,t}(A)} = 0.$$
\end{prop}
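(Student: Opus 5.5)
The plan is a standard $\ep/3$ density argument. The norm convergence is first checked on single generators $\oaresolvent(\lambda,f)$ with $f \in \dom \mathsf{m}$. It is then extended to finite products, i.e.\ to $\oaresolventalgebra_{\txtfin}(\dom \mathsf{m},\sigma)$. Finally it passes to the $\oacstar$-closure, using that every $\alpha_{\txtvanhowe,\kappa,\Lambda,t}$ and $\alpha_{\txtvanhowe,t}$ is a $\ast$-automorphism (Proposition~\ref{expedition0011109}), hence isometric. Concretely, given $A$ and $\ep>0$, choose $A_0 \in \oaresolventalgebra_{\txtfin}$ with $\norm{A-A_0}<\ep/3$. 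Then
\[
\norm{\alpha_{\txtvanhowe,\kappa,\Lambda,t}(A)-\alpha_{\txtvanhowe,t}(A)}
\le \tfrac{2\ep}{3}+\norm{\alpha_{\txtvanhowe,\kappa,\Lambda,t}(A_0)-\alpha_{\txtvanhowe,t}(A_0)},
\]
uniformly in $t$. For products, the telescoping identity $\prod a_j-\prod b_j=\sum_k a_1\cdots a_{k-1}(a_k-b_k)b_{k+1}\cdots b_n$ reduces matters to single generators. The factors are bounded because $\norm{\oaresolvent(z,f)}\le 1/\abs{\opreal z}$. The real part of the first argument stays equal to $\lambda$, since $\mathsf{M}$ is real-valued.

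For a generator, both automorphisms send $\oaresolvent(\lambda,f)$ to a resolvent with the same second argument $\napiernum^{\imunit t\omega}f$. The first arguments are $\lambda+\imunit\mathsf{M}_{\kappa,\Lambda,t}(f)$ and $\lambda+\imunit\mathsf{M}_t(f)$ respectively. The resolvent identity for equal $f$ then gives
\[
\norm{\oaresolvent(z,h)-\oaresolvent(w,h)}=\abs{w-z}\,\norm{\oaresolvent(z,h)\oaresolvent(w,h)}\le \frac{\abs{\mathsf{M}_{\kappa,\Lambda,t}(f)-\mathsf{M}_t(f)}}{\lambda^2}.
\]
Here $\opreal z=\opreal w=\lambda$, and we use the bound $\norm{\oaresolvent(z,h)}\le 1/\abs{\lambda}$. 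This bound follows from $\norm{\oaresolvent(\lambda,h)}=1/\abs{\lambda}$ and the analytic extension in the first variable. If needed, it can also be verified in a faithful regular representation (Proposition~\ref{expedition0011838}(3)), such as the Fock one, where $\oaresolvent(z,h)=(\imunit z-\opfocksegal(h))^{-1}$ is a normal operator. So everything reduces to the scalar estimate
\[
\sup_{\abs{t}\le T}\abs{\mathsf{M}_{\kappa,\Lambda,t}(f)-\mathsf{M}_t(f)}\to 0 .
\]

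This scalar estimate is the main point, and I would prove it directly from the integral formula in Definition~\ref{expedition0011100}. We have
\[
\mathsf{M}_{\kappa,\Lambda,t}(f)-\mathsf{M}_t(f)=-\opreal\int_{\abs{k}<\kappa \text{ or } \abs{k}>\Lambda}\frac{\cmpconj{\faftr{\varrho}(k)}(\napiernum^{\imunit t\omega(k)}-1)\faftr{f}(k)}{\omega(k)^{3/2}}\opdmsr{k}.
\]
Using $\abs{\napiernum^{\imunit t\omega}-1}\le 2$, this is bounded uniformly in $t$ by the tail
\[
2\int_{\abs{k}<\kappa \text{ or } \abs{k}>\Lambda}\abs{k}^{-3/2}\abs{\faftr{f}(k)}\opdmsr{k}.
\]
The main obstacle is interpreting $\dom \mathsf{m}$ so that this tail is meaningful. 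I would take $\dom\mathsf{m}$ to consist of those $f$ for which $\abs{\faftr{\varrho}}\omega^{-3/2}\abs{\faftr{f}}$ is integrable, i.e.\ absolute convergence of the defining integral. This is consistent with the remark's integrability condition. For such $f$ the tails tend to $0$ as $\kappa\to0$, $\Lambda\to\infty$, by dominated convergence.

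The uniformity in $\abs{t}\le T$ then comes for free from the bound by $2$; $T$ is needed only in case one wants the sharper bound $\abs{\napiernum^{\imunit t\omega}-1}\le \abs{t}\omega$ near the origin. If $\dom\mathsf{m}$ were instead meant as conditional convergence, one would use $\abs{\napiernum^{\imunit t\omega}-1}\le T\abs{k}$ in the infrared, since $\abs{k}^{-1/2}$ is locally integrable in $d=3$ and $\faftr f$ is locally $\lp^2$. In the ultraviolet one would have to impose the absolute-integrability condition, which is exactly what the remark suggests. Finally, invariance of $\dom\mathsf{m}$ under $\napiernum^{\imunit t\omega}$ (already noted in Proposition~\ref{expedition0011109}) guarantees that all terms are defined.
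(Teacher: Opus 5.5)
Your proposal is correct and follows essentially the same route as the paper. Both reduce to generators, use the resolvent identity with $\norm{\oaresolvent(z,h)}\le 1/\abs{\opreal z}$ to get the bound $\abs{\mathsf{M}_{\kappa,\Lambda,t}(f)-\mathsf{M}_{t}(f)}/\lambda^{2}$, and control this by the infrared and ultraviolet tails of $\abs{k}^{-3/2}\abs{\faftr{f}(k)}$ uniformly in $\abs{t}\le T$. The paper splits into the $\napiernum^{\imunit t\omega}f$ and $f$ terms and uses $\abs{\widehat{\napiernum^{\imunit t\omega}f}}=\abs{\faftr{f}}$, which amounts to your bound $\abs{\napiernum^{\imunit t\omega}-1}\le 2$; what you add is the explicit telescoping and $\ep/3$ density step that the paper only asserts with ``it suffices to show the limit on the generators.''
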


\begin{proof}
For any $f \in \dom \mathsf{m}$ and $z \in \fldcmp \setminus \imunit \fldreal$, it suffices to show the limit $$\lim_{\kappa \to 0, \Lambda \to \infty} \sup_{\abs{t} \leq T} \norm{\alpha_{\txtvanhowe,\kappa,\Lambda,t}(\oaresolvent(z,f)) - \alpha_{\txtvanhowe,t}(\oaresolvent(z,f))} = 0$$ on the generators.

By the definition of the automorphism group and the fourth resolvent relation in its complex form,
\begin{equation}
\begin{aligned}
&\norm{\alpha_{\txtvanhowe,\kappa,\Lambda,t}(\oaresolvent(z,f)) - \alpha_{\txtvanhowe,t}(\oaresolvent(z,f))} \\
&=
\norm{\oaresolvent(z + \imunit \mathsf{M}_{\kappa,\Lambda,t}(f), \napiernum^{\imunit t \omega} f) - \oaresolvent(z + \imunit \mathsf{M}_{t}(f), \napiernum^{\imunit t \omega} f)} \\
&=
\norm{\imunit (\mathsf{M}_{\kappa,\Lambda,t}(f) - \mathsf{M}_{t}(f))
\cdot
\oaresolvent(z + \imunit \mathsf{M}_{\kappa,\Lambda,t}(t), \napiernum^{\imunit t \omega} f)
\cdot
\oaresolvent(z + \imunit \mathsf{M}_{t}(f), \napiernum^{\imunit t \omega} f)} \\
&=
\abs{\mathsf{M}_{\kappa,\Lambda,t}(f) - \mathsf{M}_{t}(f)}
\norm{\oaresolvent(z + \imunit \mathsf{M}_{\kappa,\Lambda,t}(f), \napiernum^{\imunit t \omega} f)
\cdot
\oaresolvent(z + \imunit \mathsf{M}_{t}(f), \napiernum^{\imunit t \omega} f)} \\
&\leq
\abs{\mathsf{M}_{\kappa,\Lambda,t}(f) - \mathsf{M}_{t}(f)}
\norm{\oaresolvent(z + \imunit \mathsf{M}_{\kappa,\Lambda,t}(f), \napiernum^{\imunit t \omega} f)}
\cdot
\norm{\oaresolvent(z + \imunit \mathsf{M}_{t}(f), \napiernum^{\imunit t \omega} f)} \\
&\leq
\abs{\mathsf{M}_{\kappa,\Lambda,t}(f) - \mathsf{M}_{t}(f)}
\frac{1}{\abs{\opreal z}^2}.
\end{aligned}
\end{equation}
By the definition of the linear functionals $\mathsf{M}_{\kappa,\Lambda,t}$ and $\mathsf{M}_{t}$,
\begin{equation}
\begin{aligned}
&{\mathsf{M}_{\kappa,\Lambda,t}(f) - \mathsf{M}_{t}(f)} \\
&\leq
\rbk{\fun{\mathsf{m}_{\kappa,\Lambda}}{\napiernum^{\imunit t \omega} f} - \fun{\mathsf{m}}{\napiernum^{\imunit t \omega} f}}
+\rbk{\mathsf{m}_{\kappa,\Lambda}(f) - \mathsf{m}(f)}.
\end{aligned}
\end{equation}
For any $f \in \dom \mathsf{m}$ and $\abs{t} \leq T$,
\begin{equation}
\begin{aligned}
&\abs{\mathsf{m}_{\kappa,\Lambda}(\napiernum^{\imunit t \omega} f) - \mathsf{m}(\napiernum^{\imunit t \omega} f)}
=
\int_{\greuctr{k}{3}}
\frac{\abs{\faftr{\varrho_{\kappa,\Lambda}}(k) - \faftr{\varrho}(k)} \abs{\faftrrbk{\napiernum^{\imunit t \omega} f}(k)}{}}{\omega(k)^{\frac{3}{2}}}
\opdmsr{k} \\
&=
\int_{\abs{k} \leq \kappa}
\frac{\abs{\faftr{\varrho}(k)} \abs{\faftr{f}(k)}}{\omega(k)^{\frac{3}{2}}}
\opdmsr{k}
+\int_{\abs{k} \geq \Lambda}
\frac{\abs{\faftr{\varrho}(k)} \abs{\faftr{f}(k)}}{\omega(k)^{\frac{3}{2}}}
\opdmsr{k} \\
&\xrightarrow{\kappa \to 0, \Lambda \to \infty}
0
\end{aligned}
\end{equation}
uniformly in $t$ satisfying $\abs{t} \leq T$.
Thus the desired limit is obtained.
\end{proof}

\begin{prop}\label{expedition0011266}
On the subalgebra $\oaresolventalgebra(\dom \mathsf{m},\sigma)$, taking the limit $\kappa \to 0, \Lambda \to \infty$ to remove the cutoffs, the family of ground states $\fml{\oastate[\psi_{\txtvanhowe,\txtgs,\kappa,\Lambda}]}{\kappa>0,\Lambda>0}$ converges weakly to $\oastate[\psi_{\txtvanhowe,\txtgs}]$.
\end{prop}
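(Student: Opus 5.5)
The plan is to prove weak-$\ast$ convergence: for every $A \in \oaresolventalgebra(\dom \mathsf{m},\sigma)$ we want $\oastate[\psi_{\txtvanhowe,\txtgs,\kappa,\Lambda}](A) \to \oastate[\psi_{\txtvanhowe,\txtgs}](A)$. The states are uniformly bounded, with norm $1$. Hence an $\varepsilon/3$ argument reduces the claim to a norm-dense subset, and we take $\oaresolventalgebra_{\txtfin}(\dom \mathsf{m},\sigma)$. By linearity it then suffices to treat finite products $\prod_{j=1}^n \oaresolvent(\lambda_j,f_j)$ with $f_j \in \dom \mathsf{m}$ and $\lambda_j \in \fldmultiplicativegroup{\fldreal}$.

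For such products I will use the identification of Proposition~\ref{expedition0011258} with the classical states $\oastate[\psi]_{\txtclassical,\kappa,\Lambda}$ and $\oastate[\psi]_{\txtclassical}$. The value of the state on the product is
\[
\sqfun{\prbexp_{P_{\kappa,\Lambda}}}{\prod_{j=1}^n \frac{1}{\imunit \lambda_j - \opfocksegal(f_j)}} .
\]
By the Minlos construction, the joint law of $(\opfocksegal(f_1),\dots,\opfocksegal(f_n))$ under $P_{\kappa,\Lambda}$ is the Gaussian measure on $\fldreal^n$ with mean vector $(-\opreal \mathsf{m}_{\kappa,\Lambda}(f_j))_j$ and covariance $\onehalf \opreal \bkt{f_j}{f_k}$. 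The covariance does not depend on the cutoffs. The same holds without cutoffs, with mean $(-\opreal \mathsf{m}(f_j))_j$. The expectation is therefore the integral over $\fldreal^n$ of the bounded continuous function $x \mapsto \prod_j (\imunit \lambda_j - x_j)^{-1}$ against a Gaussian $p_{\kappa,\Lambda}$. Equivalently, as in Proposition~\ref{expedition0011244} and Definition~\ref{expedition0011111}, it is a multiple Laplace integral of the characteristic function, whose integrand is dominated by $\napiernum^{-\sum_j \abs{\lambda_j} s_j}$.

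Everything then reduces to showing $\mathsf{m}_{\kappa,\Lambda}(f) \to \mathsf{m}(f)$ for each $f \in \dom \mathsf{m}$. This is exactly the estimate already carried out in the proof of Proposition~\ref{expedition0011261}. The difference is bounded by
\[
\int_{\abs{k}\le\kappa}\abs{\faftr{f}}\omega^{-3/2}\opdmsr{k}+\int_{\abs{k}\ge\Lambda}\abs{\faftr{f}}\omega^{-3/2}\opdmsr{k},
\]
and both terms vanish by dominated convergence, because $f \in \dom \mathsf{m}$ means the integrand is integrable. Given this, the Gaussian measures with fixed covariance and converging means converge weakly. Alternatively, in the Laplace representation the phase $\napiernum^{\imunit \sum_j s_j \opreal \mathsf{m}_{\kappa,\Lambda}(f_j)}$ converges pointwise and dominated convergence applies. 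Either way the values on products converge.

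The main obstacle I expect is not analytic but foundational. The states were extended from one- and two-point data via the identification in Proposition~\ref{expedition0011258}, and we need their values on arbitrary finite products to be given by the classical Gaussian expectation, including when the $f_j$ do not commute symplectically. I would lean on Proposition~\ref{expedition0011258} and Proposition~\ref{expedition0011239} (quasi-freeness), which fix the state on $\oaresolventalgebra_{\txtfin}$ by its mean and covariance. I would also note that only the means depend on $\kappa,\Lambda$, and they enter as a phase factor that converges uniformly on compact sets of the Laplace variables. Any symplectic phase factors $\napiernum^{-\frac{\imunit}{2} s_j s_k \opimag \bkt{f_j}{f_k}}$ are independent of the cutoffs and of modulus one, so they do not affect domination. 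With the dense-set convergence and uniform boundedness in hand, the $\varepsilon/3$ argument completes the proof.
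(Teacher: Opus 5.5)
Your proposal is correct and is essentially the paper's argument: the convergence $\mathsf{m}_{\kappa,\Lambda}(f)\to\mathsf{m}(f)$ for $f\in\dom\mathsf{m}$, passed through dominated convergence in the Gaussian/Laplace representation of the states, gives convergence on the dense subalgebra $\oaresolventalgebra_{\txtfin}(\dom\mathsf{m},\sigma)$, and this is then extended to all of $\oaresolventalgebra(\dom\mathsf{m},\sigma)$. The only difference is in that last step: you use an $\varepsilon/3$ argument with the uniform norm bound $1$ on states, while the paper extracts weak-$\ast$ convergent subnets and uses uniqueness of the limit on $\oaresolventalgebra_{\txtfin}$; your version is slightly more direct, and your justification of the mean convergence via the $L^1$-tail estimate from Proposition~\ref{expedition0011261} is more precise than the paper's appeal to strong convergence in $\sphilb{H}$.
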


\begin{proof}
By Proposition~\ref{expedition0011112}, the ground state with cutoffs is a ground state on the full resolvent algebra, and the ground state without cutoffs is a ground state on the subalgebra generated by $\dom \mathsf{m}$.
To make the cutoff-free objects well-defined, we need to work on the subalgebra of the full resolvent algebra restricted to $\sphilb{D}_{\omega,\rho}$; accordingly, all operators appearing below, in particular $f \in \sphilb{H}$, are restricted to $\sphilb{D}_{\omega,\rho}$.

Consider the family of ground states $\fml{\psi_{\txtvanhowe,\txtgs,\kappa,\Lambda}}{\kappa>0,\Lambda>0}$.
Since the resolvent algebra is a unital $\oacstar$-algebra, one can extract a subnet that converges weakly in the limit $\kappa \to 0$, $\Lambda \to \infty$.
The mean functional $\mathsf{m}_{\kappa,\Lambda} \to \mathsf{m}$ converges on $\sphilb{D}_{\omega,\rho}$ in the strong topology of $\sphilb{H}$.
By Lebesgue's dominated convergence theorem, $\oastate[\psi_{\txtvanhowe,\txtgs,\kappa,\Lambda}]$ converges weakly to $\oastate[\psi_{\txtvanhowe,\txtgs}]$ on the subalgebra $\oaresolventalgebra_{\txtfin}$.
Hence the limit of the subnet is unique, and the entire net converges to $\oastate[\psi_{\txtvanhowe,\txtgs}]$.
Since all of these are states on a unital $\oacstar$-algebra, $\oastate[\psi_{\txtvanhowe,\txtgs,\kappa,\Lambda}(\idone)] = 1$ and the limit is never $0$.
Putting the above argument together gives the desired weak convergence.
\end{proof}

\subsection{The Van Hove Model under Infrared Singularity and Ideal Theory}\label{the-van-hove-model-under-infrared-singularity-and-ideal-theory}

In what follows, we discuss the van Hove model proper with the infrared singularity condition imposed, after removing the cutoffs. By the operator-theoretic argument of \cite{AsaoArai26} or the discussion in Section~\ref{expedition0011496}, when a point source is chosen for \(\varrho\), the ultraviolet singularity of the van Hove model manifests only in the renormalization of the self-energy. Therefore, the discussion should focus on the effect of the infrared singularity.

\begin{prop}\label{expedition0011265}
Define the function $\mathsf{f}$ by $$\faftr{\mathsf{f}}(k) = -\frac{1}{\abs{k}^{\frac{3}{2}} \log \abs{k}} \fndef{\abs{k} < 1}(k), \quad k \in \greuctr{k}{3}.$$
This function is square-integrable but does not belong to $\dom \mathsf{m}$.
\end{prop}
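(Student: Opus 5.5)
The plan is to reduce both assertions to one-dimensional radial integrals and then substitute $u = -\log r$. Write $r = \abs{k}$ and use $\opdmsr{k} = 4\pi r^2 \opdmsr{r}$ in $\greuctr{k}{3}$. On the support $\abs{k} < 1$ we have $-\log r = \abs{\log r} > 0$, so $\faftr{\mathsf{f}} \geq 0$ there. The two quantities to examine are
\begin{equation*}
\norm{\mathsf{f}}^2 = 4\pi \int_0^1 \frac{\opdmsr{r}}{r \rbk{\log r}^2},
\qquad
\int_{\greuctr{k}{3}} \frac{\abs{\faftr{\varrho}(k)} \abs{\faftr{\mathsf{f}}(k)}}{\omega(k)^{\frac{3}{2}}} \opdmsr{k}
= 4\pi \int_0^1 \frac{\opdmsr{r}}{r \abs{\log r}} .
\end{equation*}
The second uses $\faftr{\varrho} = 1$ and $\omega(k) = \abs{k}$, as in Definition~\ref{expedition0011100}. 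Under $u = -\log r$ we have $\opdmsr{r}/r = -\opdmsr{u}$, and the interval $r \in (0,1)$ corresponds to $u \in (0,\infty)$. The two integrals therefore become $4\pi \int_0^\infty u^{-2} \opdmsr{u}$ and $4\pi \int_0^\infty u^{-1} \opdmsr{u}$. The infrared end $r \to 0$ corresponds to $u \to \infty$, and the end $r \to 1$ corresponds to $u \to 0$.

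\emph{Non-membership in $\dom \mathsf{m}$.} Since $\faftr{\mathsf{f}} \geq 0$, there is no cancellation, and $\mathsf{m}(\mathsf{f})$ is literally the second integral above. That integral is $4\pi \int^\infty u^{-1} \opdmsr{u} = \infty$, already at the infrared end $u \to \infty$, that is $r \to 0$. So $\mathsf{m}(\mathsf{f})$ diverges logarithmically and $\mathsf{f} \notin \dom \mathsf{m}$. This divergence is caused by the infrared behaviour, which is exactly the phenomenon the proposition is meant to exhibit.

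\emph{Square integrability.} At the infrared end, the first integral behaves like $\int^\infty u^{-2} \opdmsr{u} < \infty$. Hence $\faftr{\mathsf{f}}\, \fndef{\abs{k} < 1-\ep}$ is square-integrable for every $\ep > 0$, and the extra factor $\rbk{\log \abs{k}}^{-1}$ is exactly what rescues square integrability at the origin while $\mathsf{m}(\mathsf{f})$ still diverges there.

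\emph{Main obstacle.} I expect the main obstacle to be the other endpoint $r \to 1^-$, where $\log r \sim r - 1$. There the first integral behaves like $\int_{1-\ep}^1 (1-r)^{-2} \opdmsr{r}$, equivalently $\int_0 u^{-2} \opdmsr{u}$, which diverges. So with the cutoff $\fndef{\abs{k}<1}$ exactly as written, $\faftr{\mathsf{f}} \notin \lp^2$: the statement cannot be proved literally, and only the infrared assertions above go through. In the write-up I would check this boundary computation explicitly. I would then record that the square integrability claimed in the proposition holds only for the infrared part, i.e.\ $\faftr{\mathsf{f}}\, \fndef{\abs{k} < r_1} \in \lp^2$ for every $r_1 < 1$. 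Since $\mathsf{m}(\mathsf{f})$ already diverges on $\abs{k} < r_1$, this truncated function still lies outside $\dom \mathsf{m}$, which is the example the subsequent ideal-theoretic discussion actually needs.
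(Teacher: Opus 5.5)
Your diagnosis is correct, and your route is the same as the paper's: polar coordinates followed by the substitution $u=-\log r$. The difference is that you carry out the substitution correctly.

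The paper's proof sends $r\in(0,1)$ to $u\in(1,\infty)$ and gets $4\pi\int_1^\infty u^{-2}\,du=4\pi$. But $r=e^{-u}$ sends $r\to 1$ to $u\to 0$, so the correct range is $(0,\infty)$, and $\int_0^\infty u^{-2}\,du=\infty$. The value $4\pi$ the paper reports is therefore $\|\widehat{\mathsf{f}}\,\mathbf{1}_{|k|<e^{-1}}\|^2$. The function as literally defined is not square-integrable, because $|\widehat{\mathsf{f}}(k)|^2\sim(1-|k|)^{-2}$ near the unit sphere, exactly as you found.

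Your repair is the right one: replace the cutoff $|k|<1$ by $|k|<r_1$ with $r_1<1$. The paper's computation is precisely the case $r_1=e^{-1}$. With this change both assertions hold, and the non-membership part is unaffected. The paper's second integral also needs $|\log r|$ in place of $\log r$, since its integrand as written is negative. Even so, the infrared divergence $\int^\infty u^{-1}\,du=\infty$ that you isolate survives the truncation. So the truncated function still serves the purpose of the subsequent remark on the ground-state GNS representation.
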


\begin{rem}
The resolvent corresponding to this $\mathsf{f}$ is not even defined in the GNS representation of the ground state.
In this sense, an extension to handle non-regularity is needed.
For generators satisfying $f \notin \dom \mathsf{m}$, one possible treatment in the GNS representation is to formally set $\repn_{\psi_{\txtvanhowe,\txtgs}}(\oaresolvent(\lambda,f)) = 0$.
Another approach might use appropriate weights.
\end{rem}

\begin{proof}
($f \in \fun{\lp^{2}}{\greuctr{x}{3}}$): We compute in polar coordinates using the substitution $u = -\log r$.
This substitution gives $r = \napiernum^{-u}$ and $\opdmsr{r} = -r \opdmsr{u}$, so
\begin{equation}
\begin{aligned}
&\norm{\faftr{f}}_{\fun{\lp^{2}}{\greuctr{k}{3}}}^2
=
4 \pi
\int_{0}^1 \frac{r^2}{r^3 \rbk{\log r}^2}
\opdmsr{r}
=
4 \pi
\int_{0}^1 \frac{1}{r \rbk{\log r}^2}
\opdmsr{r} \\
&=
4 \pi
\int_{\infty}^{1}
\frac{1}{r u^2}
(-r) \opdmsr{u}
=
4 \pi
\int_{1}^{\infty}
u^{-2}
\opdmsr{u}
=
4 \pi
<
\infty.
\end{aligned}
\end{equation}
By Plancherel's theorem, $f \in \fun{\lp^{2}}{\greuctr{x}{3}}$.

($\abs{k}^{-\frac{3}{2}} \faftr{f} \notin \fun{\lp^{1}}{\greuctr{x}{3}}$): With the same substitution, $$\int_{\greuctr{k}{3}} \frac{\abs{\faftr{f}(k)}}{\abs{k}^{\frac{3}{2}}} \opdmsr{k} = 4 \pi \int_{0}^{1} \frac{r^2}{r^{\frac{3}{2}}} \cdot \frac{1}{r^{\frac{3}{2}} \log r} \opdmsr{r} = 4 \pi \int_{1}^{\infty} u^{-1} \opdmsr{u} = \infty.$$
\end{proof}

\begin{rem}[Temporal and spatial cluster properties]
The argument of Proposition~\ref{expedition0011233} applies directly, and both the temporal and spatial cluster properties hold regardless of whether cutoffs are present.
Of course, for the ground state without cutoffs, the discussion must be restricted to $\dom \mathsf{m}$.
Even if the resolvent with $f \notin \dom \mathsf{m}$ is formally set to $0$ in the GNS representation, both the temporal and spatial cluster properties hold with both sides equal to $0$.
\end{rem}

In the following, we examine the ideal theory from Buchholz's original paper \cite{DetlevBuchholz001} concretely for this system.

\begin{prop}
The set $\dom \mathsf{m}$ is a dense linear subspace of $\fun{\lp^{2}}{\fldreal^{d}}$ and is invariant under the free-field time evolution $f \mapsto \napiernum^{\imunit t \omega}f$.
\end{prop}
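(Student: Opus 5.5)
We interpret $\dom \mathsf{m}$ as the set of $f \in \fun{\lp^{2}}{\fldreal^{3}}$ for which the defining integral of $\mathsf{m}(f)$ in Definition~\ref{expedition0011100} converges absolutely:
\[
\int_{\greuctr{k}{3}} \frac{\abs{\faftr{\varrho}(k)}\abs{\faftr{f}(k)}}{\omega(k)^{\frac{3}{2}}} \opdmsr{k} < \infty .
\]
This is how it is used in the proof of Proposition~\ref{expedition0011261} and in Proposition~\ref{expedition0011265}. With $\abs{\faftr{\varrho}} = 1$ and $\omega(k) = \abs{k}$, membership means $\abs{k}^{-\frac{3}{2}} \faftr{f} \in \fun{\lp^{1}}{\greuctr{k}{3}}$. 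Each of the three claims then reduces to a statement about this weighted $\lp^{1}$ condition on the Fourier side. The plan is to check them in order: linearity, invariance, density.

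\textbf{Linearity.} Since $\abs{\faftr{f}+c\faftr{g}} \leq \abs{\faftr{f}}+\abs{c}\abs{\faftr{g}}$, the triangle inequality for the weighted integral gives closure under linear combinations. Clearly $0 \in \dom \mathsf{m}$.

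\textbf{Invariance.} Because $\omega$ is a multiplication operator in momentum space, $\faftrrbk{\napiernum^{\imunit t \omega} f}(k) = \napiernum^{\imunit t \abs{k}} \faftr{f}(k)$. This has the same modulus as $\faftr{f}$, so the weighted integral is unchanged. The same remark was already used in Proposition~\ref{expedition0011109}. Unitarity of $\napiernum^{\imunit t\omega}$ keeps the function in $\lp^{2}$.

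\textbf{Density.} This is the only step needing an actual construction, and I expect it to be the main (mild) obstacle. The difficulty is that the weight $\abs{k}^{-\frac{3}{2}}$ is singular at $0$ and non-integrable at $\infty$, since $\int_{\abs{k}\geq 1}\abs{k}^{-\frac{3}{2}}\opdmsr{k}=\infty$. So even bounded $\faftr{f}$ need not qualify, and we must work with a dense class that is controlled near $0$ and at infinity.

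The plan is to use $\fun{\conti_{\txtcpt}^{\infty}}{\fldreal^{3}\setminus\setone{0}}$ on the momentum side, i.e.\ functions $\faftr{f}$ supported in some annulus $\varepsilon \leq \abs{k} \leq R$. On such a support the weight is bounded by $\varepsilon^{-\frac{3}{2}}$ and the region has finite measure, so $\abs{k}^{-\frac{3}{2}}\faftr{f} \in \lp^{1}$. Hence these functions lie in $\dom \mathsf{m}$.

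It remains to show this class is dense in $\fun{\lp^{2}}{\greuctr{k}{3}}$, and hence, by Plancherel, that its inverse Fourier transform is dense in $\fun{\lp^{2}}{\fldreal^{3}}$. The argument has two steps. First, truncate $\faftr{f}$ by $\fndef{\varepsilon\leq\abs{k}\leq R}$; this converges to $\faftr{f}$ in $\lp^{2}$ by dominated convergence, because $\setone{0}$ has measure zero. Second, approximate the truncated function by smooth functions compactly supported in the open annulus, using the standard density of $\conti_{\txtcpt}^{\infty}(U)$ in $\lp^{2}(U)$ for open $U$.

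Alternatively, one can avoid smoothness entirely and observe directly that every truncation $\fndef{\varepsilon\leq\abs{k}\leq R}\faftr{f}$ already lies in $\dom\mathsf{m}$. Indeed, by Cauchy--Schwarz its weighted $\lp^{1}$ norm is at most $\norm{\faftr{f}}_{2}\,\bigl(\int_{\varepsilon\leq\abs{k}\leq R}\abs{k}^{-3}\opdmsr{k}\bigr)^{\frac{1}{2}}$, and the last integral equals $4\pi\log(R/\varepsilon)<\infty$. This gives density in one line. I would present this version, noting that $\mathsf{f}$ of Proposition~\ref{expedition0011265} shows that $\dom\mathsf{m}$ is nevertheless a proper subspace.
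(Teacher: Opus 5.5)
Your proof is correct and follows the paper's route. Linearity and invariance under $\napiernum^{\imunit t\omega}$ are immediate because membership depends only on $\abs{\faftr{f}}$, and density comes from functions supported away from $0$ (the paper just cites $\fun{\conti_{\txtcpt}^{\infty}}{\fldreal^{d}\setminus\setone{0}}$), which is what your annulus truncation makes explicit. Your truncation argument also works if $\dom\mathsf{m}$ is read as $\dom\omega^{-\frac{3}{2}}$, as the paper later writes, since then $\int_{\varepsilon\le\abs{k}\le R}\abs{k}^{-3}\abs{\faftr{f}(k)}^{2}\opdmsr{k}\le\varepsilon^{-3}\norm{\faftr{f}}_{2}^{2}$.
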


\begin{proof}
Linearity and invariance under the time evolution are clear.
The subspace $\fun{\conti_{\txtcpt}^{\infty}}{\fldreal^{d} \setminus \setone{0}}$ is contained in $\dom \mathsf{m}$ and is dense in $\fun{\lp^{2}}{\fldreal^{d}}$.
\end{proof}

We define the regular subspace from Buchholz's original paper \cite{DetlevBuchholz001} as \[X_R = \dom \mathsf{m} = \dom \omega^{-\frac{3}{2}}.\]

\begin{lem}\label{expedition0011906}
For the regular ideal $X_R$, the trivial regular ideal is $$X_T = X_R \cap X_R^{\perp_\sigma} = \setone{0}.$$
\end{lem}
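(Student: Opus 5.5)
The plan is to show that the symplectic complement of $X_R = \dom \mathsf{m}$ inside $\sphilb{H}$, taken with respect to $\sigma(f,g) = \opimag \bkt{f}{g}$, is already trivial. Then the intersection $X_R \cap X_R^{\perp_\sigma}$ is trivial as well. Concretely, suppose $g \in X_R$ satisfies $\sigma(f,g) = 0$ for every $f \in X_R$. We want to conclude that $g = 0$. We do not even need $g \in X_R$ for this; it suffices that $g \in \sphilb{H}$.

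The key observation is that $X_R$ is a complex linear subspace. By the preceding proposition, $\dom \mathsf{m}$ is a dense linear subspace of $\fun{\lp^{2}}{\fldreal^{d}}$. Its defining condition, integrability of $\abs{k}^{-\frac{3}{2}}\abs{\faftr{\varrho}(k)}\abs{\faftr{f}(k)}$, depends only on $\abs{\faftr{f}}$, so it is invariant under multiplication by $\imunit$. Hence, if $f \in X_R$, then $\imunit f \in X_R$. Applying the hypothesis to both $f$ and $\imunit f$ gives $\opimag \bkt{f}{g} = 0$ and $\opimag \bkt{\imunit f}{g} = \pm\opreal \bkt{f}{g} = 0$. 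Together these give $\bkt{f}{g} = 0$ for all $f \in X_R$.

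By density of $X_R$ in $\sphilb{H}$ (for instance via $\fun{\conti_{\txtcpt}^{\infty}}{\fldreal^{d}\setminus\setone{0}} \subset X_R$, as noted in the preceding proposition) and continuity of the inner product, we get $\bkt{f}{g} = 0$ for all $f \in \sphilb{H}$. Therefore $g = 0$. This shows $X_R^{\perp_\sigma} = \setone{0}$, and consequently $X_T = \setone{0}$.

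The only point needing care is the claim that $X_R$ is closed under multiplication by $\imunit$, rather than merely being a real subspace of the realification $X$. I would verify this directly from Definition~\ref{expedition0011100}: the integrand $\cmpconj{\faftr{\varrho}}\faftr{f}\,\omega^{-\frac{3}{2}}$ simply gets multiplied by $\imunit$, so membership in the domain is unchanged. Everything else is routine. As a corollary, I would remark that this means $X_R$ is non-degenerate, so the ideal-theoretic quotient by $X_T$ is trivial.
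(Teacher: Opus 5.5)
Your proposal is correct and is essentially the paper's argument: density of $\dom \mathsf{m}$ in $\sphilb{H}$ combined with continuity and non-degeneracy of $\sigma$. The complex-linearity of $X_R$ that you flag as the delicate point is not actually needed, because continuity already extends $\sigma(\cdot,g)=0$ from $X_R$ to all of $\sphilb{H}$, and then taking $f=\imunit g$ gives $\pm\norm{g}^2 = 0$.
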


\begin{proof}
The subspace $\dom \mathsf{m}$ is dense in $\lp^{2}$, and the symplectic form is continuous and non-degenerate.
Therefore $X_R^{\perp_\sigma} = \setone{0}$.
\end{proof}

\begin{defn}
Let $\oarepn_{\txtgs}$ also denote the representation obtained by extending the GNS representation of the ground state without cutoffs to $\oaresolventalgebra(X,\sigma)$.
We define the infrared ideal by $$J_{\txtinflaredred} = \gtclos{\set{\oaresolvent(\lambda,f)}{\lambda \in \fldmultiplicativegroup{\fldreal}, f \notin \dom \mathsf{m}}}.$$
Furthermore, we call a closed two-sided ideal $J$ of the resolvent algebra physically admissible if it satisfies the following two conditions:
\begin{enumerate}
\item
The automorphism group $\alpha_{\txtvanhowe}$ is well-defined on the quotient algebra $\setquot{\oaresolventalgebra(X,\sigma)}{J}$.

\item
The ground state $\oastate[\psi_{\txtvanhowe,\txtgs}]$ descends to a state.
\end{enumerate}
\end{defn}

\begin{prop}\label{expedition0011905}
Let $\oarepn_{\txtgs}$ also denote the representation obtained by extending the GNS representation of the ground state without cutoffs to $0$ on $\sphilb{H} \setminus \dom \mathsf{m}$.
Then the following statements hold.
\begin{enumerate}
\item
The regular subspace is $X_R = \dom \mathsf{m}$.

\item
The singular subspace is $X_S = X \setminus \dom \mathsf{m}$.

\item
The trivial regular part is trivial: $X_T = \setone{0}$.

\item
The kernel of the GNS representation of the ground state is $$\Ker \oarepn_{\txtgs} = J_{\txtinflaredred} = \gtclos{\set{\oaresolvent(\lambda,f)}{\lambda \in \fldmultiplicativegroup{\fldreal}, f \in X_S}}.$$

\item
The infrared ideal $J_{\txtinflaredred}$ is a primitive ideal, and $$\setquot{\oaresolventalgebra(X,\sigma)}{J_{\txtinflaredred}} \eqisom \oaresolventalgebra(\dom \mathsf{m},\sigma).$$
\end{enumerate}
\end{prop}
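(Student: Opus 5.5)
The plan is to treat items (1)--(3) as definitional and spend the real effort on (4) and (5).

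For (1) and (2) we use Buchholz's definitions from \cite{DetlevBuchholz001}. The regular subspace of a representation is the set of $f$ with $\Ker \oarepn(\oaresolvent(1,f)) = \setone{0}$, and the singular subspace is where $\oarepn(\oaresolvent(\lambda,f)) = 0$. For $f \in \dom \mathsf{m}$ the ground state is regular. This follows from the regularity proposition above via the classical representation of Proposition~\ref{expedition0011258}: for such $f$, $\oarepn_{\txtgs}(\oaresolvent(\lambda,f))$ is a resolvent of a self-adjoint field operator and so has trivial kernel. For $f \notin \dom \mathsf{m}$ the extension is zero by construction, which gives (1) and (2). Item (3) is exactly Lemma~\ref{expedition0011906}.

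For (4), the inclusion $J_{\txtinflaredred} \subset \Ker \oarepn_{\txtgs}$ is immediate. The generators of $J_{\txtinflaredred}$ are sent to $0$, and the kernel is a closed two-sided ideal. For the reverse inclusion, the plan is to invoke Buchholz's structure result: for a representation whose singular set is $X_S$ and which is faithful on the regular part, the kernel is the closed ideal generated by the singular resolvents. Concretely, let $q \colon \oaresolventalgebra(X,\sigma) \to \setquot{\oaresolventalgebra(X,\sigma)}{J_{\txtinflaredred}}$ be the quotient map. We show that $\oarepn_{\txtgs}$ factors through $q$. We also show that $q$ restricted to $\oaresolventalgebra(\dom \mathsf{m},\sigma)$ is surjective. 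Every product of generators either contains some $\oaresolvent(\lambda,f)$ with $f \notin \dom\mathsf{m}$, and so lies in $J_{\txtinflaredred}$, or lies in $\oaresolventalgebra_{\txtfin}(\dom\mathsf{m},\sigma)$, and density then gives surjectivity.

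Then $\Ker \oarepn_{\txtgs} = J_{\txtinflaredred}$ reduces to the injectivity of $\oarepn_{\txtgs}$ on $\oaresolventalgebra(\dom\mathsf{m},\sigma)$. This follows from Proposition~\ref{expedition0011838}(3): a regular representation of the full resolvent algebra over $\dom \mathsf{m}$ is faithful. We apply it with $(\dom\mathsf{m},\sigma)$ as the symplectic space, which is non-degenerate by Lemma~\ref{expedition0011906}.

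For (5), the isomorphism follows from the factorization in (4). Both maps
\[
\oaresolventalgebra(\dom\mathsf{m},\sigma) \to \setquot{\oaresolventalgebra(X,\sigma)}{J_{\txtinflaredred}} \to \oarepn_{\txtgs}(\oaresolventalgebra(X,\sigma))
\]
are surjective $\ast$-homomorphisms. Their composite is injective, so both are isomorphisms. Primitivity follows because $J_{\txtinflaredred}$ is the kernel of $\oarepn_{\txtgs}$, and we would show $\oarepn_{\txtgs}$ is irreducible. The GNS representation of the regular quasi-free ground state (Proposition~\ref{expedition0011239}) is pure. Purity comes from the cluster property of Proposition~\ref{expedition0011233} together with triviality of the center (Proposition~\ref{expedition0011838}). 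Alternatively, it is unitarily equivalent to a Fock-type representation, which is irreducible.

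The main obstacle is the surjectivity/factorization step. A product like $\oaresolvent(\lambda,f)\oaresolvent(\mu,g)$ with $f,g \notin \dom\mathsf{m}$ but $f+g \in \dom\mathsf{m}$ must be handled through the last resolvent relation, whose right-hand side contains $\oaresolvent(\lambda+\mu,f+g)$. For this we would use that $X_S$ is not a subspace, and check via the product relation that such terms still lie in $J_{\txtinflaredred}$ modulo regular terms. If that fails, the fallback is to define $J_{\txtinflaredred}$ directly as $\Ker\oarepn_{\txtgs}$ and argue faithfulness on $\oaresolventalgebra(\dom\mathsf{m},\sigma)$ only.
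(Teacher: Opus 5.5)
Your core argument is correct, and for (4)--(5) it takes a genuinely different route from the paper. The paper handles (1)--(3) essentially as you do (definitions plus Lemma~\ref{expedition0011906}). For (4), however, it just appeals to Buchholz's kernel identification theorem for factorial representations. That theorem carries extra terms attached to the degenerate part $X_T$, on which a factorial representation acts by scalars, and these drop out because $X_T=\{0\}$. Item (5) is then asserted ``by construction''.

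You instead prove (4) directly, in three steps:
\begin{enumerate}
\item $J_{\mathrm{IR}}\subset\ker\pi_{\mathrm{gs}}$.
\item Every monomial in the generators either contains a singular resolvent, and so lies in $J_{\mathrm{IR}}$, or lies in $\mathcal{R}_{\mathrm{fin}}(\operatorname{dom}\mathsf{m},\sigma)$. Hence the quotient map, restricted to $\mathcal{R}(\operatorname{dom}\mathsf{m},\sigma)$, is onto, because its image is a closed dense subalgebra.
\item $\pi_{\mathrm{gs}}$ is faithful on $\mathcal{R}(\operatorname{dom}\mathsf{m},\sigma)$ by Proposition~\ref{expedition0011838}(3), which applies since $\sigma$ is non-degenerate on $\operatorname{dom}\mathsf{m}$.
\end{enumerate}
This needs only the ``regular $\Rightarrow$ faithful'' theorem, not the classification of primitive ideals, and it actually produces the isomorphism in (5). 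Applied to the canonical map $\iota\colon\mathcal{R}(\operatorname{dom}\mathsf{m},\sigma)\to\mathcal{R}(X,\sigma)$, faithfulness of $\pi_{\mathrm{gs}}\circ\iota$ also shows that $\iota$ is injective. So the abstract algebra over $\operatorname{dom}\mathsf{m}$ really is the subalgebra, a point the paper leaves implicit. What the paper's route buys is generality. Your faithfulness step fails once $X_T\neq\{0\}$, since then resolvents on $X_T$ go to scalars and $\sigma|_{X_R}$ is degenerate; Buchholz's theorem covers that case.

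Three things need fixing.
\begin{enumerate}
\item Your ``main obstacle'' is not one. $R(\lambda,f)R(\mu,g)$ with $f,g\notin\operatorname{dom}\mathsf{m}$ lies in the two-sided ideal $J_{\mathrm{IR}}$ whatever $f+g$ is. The worry that this drags regular elements into $J_{\mathrm{IR}}$ is already settled by $J_{\mathrm{IR}}\cap\mathcal{R}(\operatorname{dom}\mathsf{m},\sigma)\subset\ker\pi_{\mathrm{gs}}\cap\mathcal{R}(\operatorname{dom}\mathsf{m},\sigma)=\{0\}$. The fallback of redefining $J_{\mathrm{IR}}$ as the kernel would change the statement, not prove it.
\item Do not rest regularity or quasi-freeness on the ``classical representation'' of Proposition~\ref{expedition0011258}. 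A map into a commutative algebra cannot satisfy the commutator relation when $\sigma(f,g)\neq 0$, so it is not a $\ast$-representation. Realize the GNS representation on Fock space instead, by $R(\lambda,f)\mapsto(i\lambda-\phi_{\mathrm{F}}(f)+\operatorname{Re}\mathsf{m}(f))^{-1}$ for $f\in\operatorname{dom}\mathsf{m}$. This is the formal infrared-singular representation without cutoffs, and it gives regularity on $\operatorname{dom}\mathsf{m}$ at once.
\item Drop the ``cluster property plus trivial center'' argument for purity. Triviality of the center of the abstract algebra says nothing about whether a given representation is factorial. Irreducibility comes from the Fock realization: an operator commuting with all $W_{\mathrm{F}}(f)$, $f\in\operatorname{dom}\mathsf{m}$, commutes with all $W_{\mathrm{F}}(f)$, $f\in\mathcal{H}$, by density and strong continuity, and is therefore scalar. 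Since singular resolvents go to $0$, $\pi_{\mathrm{gs}}(\mathcal{R}(X,\sigma))=\pi_{\mathrm{gs}}(\mathcal{R}(\operatorname{dom}\mathsf{m},\sigma))$, so $J_{\mathrm{IR}}=\ker\pi_{\mathrm{gs}}$ is primitive.
\end{enumerate}
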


\begin{rem}
Although Bose-Einstein condensation occurs at finite temperature, the $f$ that causes Bose-Einstein condensation is in general a distribution and is difficult to handle.
From the perspective of the emergence of a non-trivial kernel in the representation, the infrared divergence is more transparent, so it is worth discussing here.
When the infrared divergence is strong at finite temperature, the competition of zero modes causes the infrared divergence to kill Bose-Einstein condensation at the level of the algebra of observables.
That is, at finite temperature, even more complex mathematical phenomena can occur \cite{YoshitsuguSekine006}.
\end{rem}

\begin{proof}
(1)--(2): These are definitions.

(3): By Lemma~\ref{expedition0011906}.

(4): This is also essentially by definition.
In general, by part (3) of this proposition, the index term in Buchholz's kernel identification theorem vanishes.
Hence the kernel coincides with the closed two-sided ideal generated by $\set{\oaresolvent(\lambda,f)}{\lambda \in \fldmultiplicativegroup{\fldreal}, f \in X_S}$.

(5): The isomorphism of the quotient algebra follows immediately from the construction.
\end{proof}

\begin{cor}
The algebra of observables under the infrared singularity condition is $\oaresolventalgebra(\dom \mathsf{m},\sigma)$.
The infrared ideal $J_{\txtinflaredred}$ is a physically admissible ideal, and in particular the minimal $\oacstar$-ideal that removes the infrared divergence modes.
\end{cor}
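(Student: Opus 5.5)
The corollary will be read off from Proposition~\ref{expedition0011905}, so the plan is to check its three assertions in turn against the definitions of the infrared ideal and of physical admissibility.

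\emph{First assertion (algebra of observables).} By Proposition~\ref{expedition0011905}(4), $\Ker \oarepn_{\txtgs} = J_{\txtinflaredred}$, and by part (5) of the same proposition $\setquot{\oaresolventalgebra(X,\sigma)}{J_{\txtinflaredred}} \eqisom \oaresolventalgebra(\dom \mathsf{m},\sigma)$. The concretely represented algebra $\oarepn_{\txtgs}(\oaresolventalgebra(X,\sigma))$, which is the algebra of observables the ground state can detect, is therefore $\oaresolventalgebra(\dom \mathsf{m},\sigma)$. Injectivity of this identification on the subalgebra comes from Proposition~\ref{expedition0011838}(3): the restriction of $\oarepn_{\txtgs}$ to $\oaresolventalgebra(\dom \mathsf{m},\sigma)$ is regular, hence faithful.

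\emph{Second assertion (physical admissibility).} Both conditions of the definition will be checked on generators.
\begin{enumerate}
\item For the first condition it suffices that $\alpha_{\txtvanhowe,t}$ maps $J_{\txtinflaredred}$ into itself. By the invariance of $\dom\mathsf{m}$ under $f\mapsto \napiernum^{\imunit t\omega}f$, the complement $X_S$ is also invariant. So $\alpha_{\txtvanhowe,t}$ sends a generator $\oaresolvent(\lambda,f)$ with $f\in X_S$ to a resolvent with second argument $\napiernum^{\imunit t\omega}f\in X_S$.
\item One point needs care: $\mathsf{M}_t(f)$ is undefined for $f\notin\dom\mathsf{m}$. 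I would handle this by \emph{defining} the action on $J_{\txtinflaredred}$ only modulo the ideal, so that on the quotient it is simply the already-established automorphism group of Proposition~\ref{expedition0011109} transported through the isomorphism of part (5).
\item The second condition holds because $\oastate[\psi_{\txtvanhowe,\txtgs}]$ is a vector state of $\oarepn_{\txtgs}$, which vanishes on $\Ker\oarepn_{\txtgs}=J_{\txtinflaredred}$. Hence the state descends to the quotient.
\end{enumerate}

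\emph{Third assertion (minimality).} Let $J$ be any physically admissible ideal that removes the infrared divergence modes, meaning that its quotient carries no $\oaresolvent(\lambda,f)$ with $f\in X_S$. Then every such generator lies in $J$, so $J\supseteq J_{\txtinflaredred}$, since $J$ is closed and two-sided. I expect this step to be the main obstacle, because the phrase ``removes the divergence modes'' has to be formalized. I would formalize it as the requirement that the descended ground state be regular on the quotient. Then any $f\in X_S$ with $\oaresolvent(\lambda,f)\notin J$ would, by the same Gaussian/Cauchy-transform computation as in Proposition~\ref{expedition0011244}, force $\mathsf{m}(f)$ to be finite, which contradicts $f\notin\dom\mathsf{m}$.

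Conversely, $J_{\txtinflaredred}$ cannot be shrunk further. By Proposition~\ref{expedition0011905}(3), $X_T=\setone{0}$, so no regular generator lies in the ideal. Hence $J_{\txtinflaredred}$ is the minimal such ideal.
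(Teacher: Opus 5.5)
Your skeleton matches the paper's proof. The observable algebra is read off from parts (4)--(5) of Proposition~\ref{expedition0011905}. Minimality is the remark that an admissible $J$ must send every $R(\lambda,f)$ with $f\in X_S$ to $0$ in the quotient, so $J\supseteq J_{\mathrm{IR}}$. The paper simply asserts that vanishing as part of what admissibility means (singular generators ``must not have meaning in the quotient''), so it needs no further hypothesis. Your checks that $J_{\mathrm{IR}}$ is admissible are correct details the paper leaves implicit: transporting $\alpha_{\mathrm{vH}}$ through the isomorphism is the right fix for $\mathsf M_t$ being undefined on $X_S$, and the vector state descends. The faithfulness argument via Proposition~\ref{expedition0011838}(3) is also correct.

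The step that fails is your formalization of ``removes the divergence modes'' via regularity, specifically the claim that regularity would force $\mathsf m(f)$ to be finite.
\begin{itemize}
\item There is no Gaussian/Cauchy-transform computation available at $f\in X_S$. Proposition~\ref{expedition0011244} requires $f\in\operatorname{dom}\mathsf m$ or cutoffs. In the paper, $\psi_{\mathrm{gs}}(R(\lambda,f))=0$ for such $f$, because $\pi_{\mathrm{gs}}$ is extended by $0$ there.
\item Regularity does not constrain $\mathsf m$ in any case. It only yields some probability measure for $\phi(f)$, which need not have a finite mean, and the state only sees $\operatorname{Re}\mathsf m$. For the $\mathsf f$ of Proposition~\ref{expedition0011265}, $\operatorname{Re}\mathsf m_{\kappa,\Lambda}(\imunit\mathsf f)=0$ for all cutoffs. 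Hence $\psi_{\mathrm{gs},\kappa,\Lambda}(R(\lambda,\imunit\mathsf f))$ is the cutoff-independent Cauchy transform of a centred Gaussian, although $\mathsf m_{\kappa,\Lambda}(\imunit\mathsf f)$ diverges.
\end{itemize}
The correct argument is shorter. Let $\tilde\psi$ be the descended state. Then $\tilde\psi\circ q=\psi_{\mathrm{gs}}$, so by uniqueness of the GNS construction $\pi_{\tilde\psi}\circ q\simeq\pi_{\mathrm{gs}}$. Since $\pi_{\mathrm{gs}}$ annihilates $R(1,f)$ for $f\in X_S$, regularity of $\tilde\psi$ forces $q(R(1,f))=0$. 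Scaling, with $X_S$ invariant under $f\mapsto f/\lambda$, then puts every $R(\lambda,f)$, $f\in X_S$, in $J$.

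Your closing ``conversely'' paragraph does not add to minimality. That $J_{\mathrm{IR}}$ contains no nonzero regular observable says the quotient is not too small, which is part (5) and is already given by your faithfulness argument. Minimality is exactly the inclusion $J\supseteq J_{\mathrm{IR}}$ you proved before.
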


\begin{proof}
For a physically admissible ideal $J$, generators outside $\dom \mathsf{m}$ must not have meaning in the quotient algebra.
In particular, for the quotient map $q \colon \oaresolventalgebra(X,\sigma) \to \setquot{\oaresolventalgebra(X,\sigma)}{J}$, we must have $q(\oaresolvent(\lambda,f)) = 0$ for $f \in X_S$.
The ideal $J$ must contain the closed two-sided ideal generated by all of these, and by definition $J_{\txtinflaredred} \subset J$.
This is simply a reinterpretation of Proposition~\ref{expedition0011905}(5).
\end{proof}

\section{Weyl Algebra at Finite Temperature: Bounded System}\label{expedition0011587}

As discussed also in \cite{YoshitsuguSekine002,YoshitsuguSekine006}, the model reduces essentially to a free Bose gas. However, since references that discuss the van Hove model at finite temperature are rare, we include the basic arguments for the reader's convenience. The basic approach is the same as the ground-state discussion of \cite{AsaoArai26}: a unitary transformation by Weyl operators under the infrared regularization assumption.

Since the computation results are concise, we first compute the various quantities at finite temperature in the bounded system in the Fock representation of the Weyl algebra. In particular, we assume infrared and ultraviolet cutoffs and work in the Fock representation. In this section, with a nonzero chemical potential and under infrared and ultraviolet regularization, the Hilbert space generating the Weyl algebra is the full space \(\sphilb{H}_L = \fun{\lp^{2}}{I_{L}^{d}}\).

First, the following statement is clear by the Fourier transform.

\begin{prop}\label{expedition0011286}
The single-particle Hamiltonian or dispersion relation $\omega$ is a non-negative self-adjoint operator on $\fun{\lp^{2}}{I_{L}^{d}}$.
Its spectrum is discrete and satisfies $$\opspec{\omega} = \opspec[\txtdiscrete]{\omega} = \set{\omega(k)}{k \in \setlattice_L^{d}}.$$
\end{prop}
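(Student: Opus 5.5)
The plan is to diagonalize $\omega$ explicitly by Fourier series on the torus $I_L^d$ with periodic boundary conditions. First I will fix the orthonormal basis $e_k(x) = L^{-d/2}\napiernum^{\imunit k \vainnprod x}$, $k \in \setlattice_L^d$, of $\fun{\lp^{2}}{I_{L}^{d}}$. Completeness follows from the Stone--Weierstrass theorem applied to trigonometric polynomials on the torus, or equivalently from the standard completeness of Fourier series. The discrete Fourier transform $\fafouriertr_L \colon \fun{\lp^{2}}{I_{L}^{d}} \to \fun{\lpseq^{2}}{\setlattice_L^d}$, $f \mapsto (\bkt{e_k}{f})_k$, is then unitary by Parseval's identity.

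Next I will make precise what $\omega$ means under periodic boundary conditions. I will define it as $\omega = \inv{\fafouriertr_L} M_{\omega} \fafouriertr_L$, where $M_\omega$ is multiplication by the real sequence $(\omega(k))_{k}$ on its maximal domain $\set{a}{\sum_k \omega(k)^2 \abs{a_k}^2 < \infty}$. For $\omega(k)=\abs{k}$ this agrees with $\sqrt{-\Delta_{\txtfr{}}}$ for the periodic Laplacian, since $-\Delta e_k = \abs{k}^2 e_k$. A maximal multiplication operator by a real function is self-adjoint, and $\omega(k)\ge 0$ gives non-negativity. Unitary equivalence transfers both properties to $\omega$.

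For the spectrum I will use that the spectrum of a multiplication operator on $\lpseq^2$ of a countable set is the closure of its range. Each $e_k$ is an eigenvector with eigenvalue $\omega(k)$, so the range lies in the point spectrum. The set $\set{\abs{k}}{k\in\setlattice_L^d}$ is closed and has no finite accumulation point, because only finitely many lattice points lie in any ball. Hence the spectrum equals this set. Each eigenvalue has finite multiplicity, namely the number of lattice points on the sphere $\abs{k}=\omega(k)$, and these eigenvalues are isolated. This gives $\opspec{\omega}=\opspec[\txtdiscrete]{\omega}$.

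The only real obstacle is this last point: confirming that the image set is closed and discrete and that multiplicities are finite, so that the spectrum is purely discrete rather than merely pure point. The lattice-counting remark above settles it, and the same argument works for any $\omega_s(k)=\abs{k}^s$ with $s>0$.
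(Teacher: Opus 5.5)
Your proposal is correct and takes the same route as the paper, whose entire proof is the remark that the claim is clear by the Fourier transform. Your explicit diagonalization in the basis $e_k$, combined with the fact that only finitely many lattice points satisfy $\omega(k)\le R$ (so the spectrum is closed, locally finite, and made of finite-multiplicity eigenvalues), supplies exactly the details the paper leaves out. One small slip: for general $s$, the multiplicity of $\omega(k)$ is the number of lattice points $k'$ with $\omega(k')=\omega(k)$, i.e.\ those on the sphere of radius $\omega(k)^{1/s}$, not on the sphere $|k|=\omega(k)$.
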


\begin{prop}
For any inverse temperature $\sminvtemperature > 0$, $$\sqfun{\trace_{\fun{\lp^{2}}{I_{L}^{d}}}}{\napiernum^{-\sminvtemperature \omega}} < \infty.$$
\end{prop}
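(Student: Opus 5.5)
By Proposition~\ref{expedition0011286}, $\omega$ is diagonalized by the Fourier basis $e_k(x) = L^{-d/2}\napiernum^{\imunit k \vainnprod x}$, $k \in \setlattice_L^{d}$. These vectors form an orthonormal basis of $\fun{\lp^{2}}{I_{L}^{d}}$ and satisfy $\omega e_k = \abs{k}^{s} e_k$. Since $\napiernum^{-\sminvtemperature\omega}$ is a non-negative operator, its trace can be computed in any orthonormal basis. In this eigenbasis we obtain
\[
\sqfun{\trace_{\fun{\lp^{2}}{I_{L}^{d}}}}{\napiernum^{-\sminvtemperature \omega}}
=
\sum_{k \in \setlattice_L^{d}} \napiernum^{-\sminvtemperature \abs{k}^{s}}
=
\sum_{n \in \ringratint^{d}} \napiernum^{-\sminvtemperature (2\pi/L)^{s} \abs{n}^{s}}.
\]
The statement is therefore equivalent to the convergence of this lattice sum.

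To prove convergence, I would compare the sum with an integral, or bound it by a product of one-dimensional sums. Set $c = \sminvtemperature (2\pi/L)^{s} > 0$. There are two cases. If $s \geq 1$, then $\abs{n}^{s} \geq \abs{n}_\infty^{s} \geq \abs{n}_\infty - 1$, where the last inequality holds for integers because $m^{s} \geq m$ for $m \geq 1$ and $0 \geq -1$. If $0 < s < 1$, I would use $\abs{n}^{s} \geq d^{-s/2}\cdot\frac{1}{d}\sum_i \abs{n_i}^{s}$ instead. Cleaner still, one can bound the number of lattice points with $\abs{n}_\infty = m$ by $C_d (m+1)^{d-1}$. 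This gives
\[
\sum_{n} \napiernum^{-c\abs{n}^{s}} \leq \sum_{m=0}^{\infty} C_d (m+1)^{d-1} \napiernum^{-c m^{s}},
\]
and the right-hand side converges for every $s > 0$, since $\napiernum^{-c m^{s}}$ decays faster than any power of $m$. Indeed, $(m+1)^{d+1}\napiernum^{-c m^{s}} \to 0$, so the terms are eventually bounded by $(m+1)^{-2}$.

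Everything here is routine. The only points needing care are, first, that $\omega$ is defined through periodic boundary conditions, so that the plane waves really are eigenvectors; this is supplied by Proposition~\ref{expedition0011286}. Second, the trace of a positive operator equals the sum of its eigenvalues counted with multiplicity, and Proposition~\ref{expedition0011286} gives that the spectrum is discrete with these eigenvalues. The $k=0$ mode contributes $1$ and causes no problem, because no chemical potential appears in this statement.
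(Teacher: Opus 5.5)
Your proof is correct and follows the same route as the paper: you diagonalize $\omega$ in the plane-wave basis, which reduces the trace to the lattice sum $\sum_{k\in\Gamma_L^{d}}\mathrm{e}^{-\beta\omega(k)}$, and you then bound that sum by a shell decomposition with a polynomial count of lattice points per shell. Your version is slightly more general than the paper's, since you use sup-norm shells with $C_d(m+1)^{d-1}$ points and allow arbitrary $s>0$, whereas the paper's Euclidean-shell count $(m+1)^3-m^3$ and decay factor $\mathrm{e}^{-\beta a m}$ implicitly take $d=3$ and $s=1$.
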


\begin{proof}
In the bounded system, the spectrum of the dispersion relation $\omega$ is discrete, so $$S = \sqfun{\trace_{\fun{\lp^{2}}{I_{L}^{d}}}}{\napiernum^{-\sminvtemperature \omega}} = \sum_{k \in \setlattice_L^{d}} \napiernum^{-\sminvtemperature \omega(k)}.$$
It remains to verify that the right-hand side converges.

Set $a = \frac{2 \pi}{L}$ for notational convenience.
Decomposing by radial shells, $$S \leq 1 + \sum_{m=0}^{\infty} N_m \napiernum^{-\sminvtemperature m}, \quad N_m = \setcardop \set{n \in \ringratint^{d}}{m \leq \abs{n} \leq m+1}.$$
By volume comparison, the number of lattice points satisfies $$N_m \leq C \rbk{\rbk{m+1}^3 - m^3} \leq C' \rbk{m^2 + 1}$$ for constants $C, C'$ depending only on the dimension.
Therefore, using an appropriate positive constant $C''$, $$S \leq 1 + C'' \sum_{m=0}^{\infty} (m^2+1) \napiernum^{-\sminvtemperature a m} < \infty.$$
\end{proof}

As in Section~\ref{expedition0012095}, we define the unitary transformation introduced in \cite{AsaoArai26} as \(V_{\kappa,\Lambda} = \opfockweyl_{\txtfock}(\imunit \mathsf{m}_{\kappa,\Lambda})\).

\begin{prop}\label{expedition0012088}
The van Hove model Hamiltonian $\physham_{\txtvanhowe,I_{L}^{d},\kappa,\Lambda}(\smchemicalpotential)$ is transformed under the operator $V_{\kappa,\Lambda} = \opfockweyl_{\txtfock}(\imunit \mathsf{m}_{\kappa,\Lambda})$ as $$V_{\kappa,\Lambda} \physham_{\txtbsn,\txtfr}(\smchemicalpotential) \inv{V_{\kappa,\Lambda}} = \physham_{\txtvanhowe,I_{L}^{d},\kappa,\Lambda}(\smchemicalpotential) - \fun{\physgse}{\physham_{\txtvanhowe,I_{L}^{d},\kappa,\Lambda}(\smchemicalpotential)},$$
where $\fun{\physgse}{\physham_{\txtvanhowe,I_{L}^{d},\kappa,\Lambda}(\smchemicalpotential)}$ is the ground-state energy.
\end{prop}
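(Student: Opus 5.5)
The plan is to conjugate mode by mode, exactly as in the ground-state computation of Proposition~\ref{expedition0011107}, but now on the discrete spectrum of Proposition~\ref{expedition0011286}. In the bounded system, $\omega$ is diagonal in the plane-wave basis $\fml{e_k}{k \in \setlattice_L^{d}}$ of $\fun{\lp^{2}}{I_{L}^{d}}$. So on the finite-particle subspace I will write
\[
\physham_{\txtbsn,\txtfr}(\smchemicalpotential) = \sum_{k \in \setlattice_L^{d}} (\omega(k)-\smchemicalpotential)\, \opfockcr_{\txtfock}(e_k)\opfockan_{\txtfock}(e_k).
\]
Because of the cutoffs $\kappa \le \abs{k} \le \Lambda$, the function $\mathsf{m}_{\kappa,\Lambda}$ has only finitely many nonzero Fourier coefficients on $\setlattice_L^{d}$. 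Hence only finitely many modes are shifted, and every series below is a finite sum plus an untouched free part.

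First I apply Fact~\ref{expedition0010960} with $f = \imunit \mathsf{m}_{\kappa,\Lambda}$. This gives
\[
V_{\kappa,\Lambda}\opfockan_{\txtfock}(e_k)\inv{V_{\kappa,\Lambda}} = \opfockan_{\txtfock}(e_k) + c_k,
\qquad c_k = -\tfrac{\imunit}{\sqrt2}\bkt{e_k}{\imunit \mathsf{m}_{\kappa,\Lambda}},
\]
together with the adjoint relation for $\opfockcr_{\txtfock}(e_k)$. Substituting into each term of the sum produces three pieces:
\[
(\omega(k)-\smchemicalpotential)\Bigl(\opfockcr_{\txtfock}(e_k)\opfockan_{\txtfock}(e_k) + \cmpconj{c_k}\,\opfockan_{\txtfock}(e_k) + c_k\,\opfockcr_{\txtfock}(e_k) + \abs{c_k}^2\Bigr).
\]
The linear terms resum into a Segal field $\opfocksegal_{\txtfock}(\cdot)$ whose argument is $(\omega-\smchemicalpotential)$ times $\mathsf{m}_{\kappa,\Lambda}$, up to the sign fixed by the convention in $V_{\kappa,\Lambda}$. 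The constant terms resum to $\onehalf\norm{(\omega-\smchemicalpotential)^{\onehalf}\mathsf{m}_{\kappa,\Lambda}}^2$. This constant has to be identified with $-\physgse$. Since $\physham_{\txtbsn,\txtfr}(\smchemicalpotential) \ge 0$ with the Fock vacuum as its eigenvector for eigenvalue $0$ (this needs $\omega - \smchemicalpotential \ge 0$ on the spectrum, i.e.\ the $\smchemicalpotential \le 0$ regime), unitary invariance of the spectrum then shows that the constant equals minus the bottom of the spectrum of the right-hand side. That is exactly the ground-state energy.

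To make the identity rigorous, I will verify it on the finite-particle vectors with components in $\dom \omega$. This set is a common core for $\physham_{\txtbsn,\txtfr}(\smchemicalpotential)$ and, by Kato--Rellich, for $\physham_{\txtvanhowe,I_{L}^{d},\kappa,\Lambda}(\smchemicalpotential)$. The Weyl operator $V_{\kappa,\Lambda}$ maps the domain of the creation and annihilation operators onto itself (Fact~\ref{expedition0010960}) and, because only finitely many modes are shifted, it preserves $\dom \physham_{\txtbsn,\txtfr}(\smchemicalpotential)$. I then pass to closures: two self-adjoint operators agreeing on a common core coincide.

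The step I expect to be the real obstacle is matching the linear term with the coupling as written, $\fun{\opfocksegal_{\txtfock}}{\omega\mathsf{m}_{\kappa,\Lambda}}$. The computation produces $(\omega-\smchemicalpotential)\mathsf{m}_{\kappa,\Lambda}$, so for $\smchemicalpotential \neq 0$ there is a residual term $-\smchemicalpotential\,\opfocksegal_{\txtfock}(\mathsf{m}_{\kappa,\Lambda})$. I would first try to absorb this by taking the shift in $V_{\kappa,\Lambda}$ relative to $\omega-\smchemicalpotential$, i.e.\ replacing $\mathsf{m}_{\kappa,\Lambda}$ by $\omega(\omega-\smchemicalpotential)^{-1}\mathsf{m}_{\kappa,\Lambda}$. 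This is harmless under the cutoffs. The ground-state energy then reads $-\onehalf\norm{(\omega-\smchemicalpotential)^{-\onehalf}\omega\,\mathsf{m}_{\kappa,\Lambda}}^2$, which reduces to the $\smchemicalpotential=0$ formula of Section~\ref{expedition0012095}. I would state explicitly that the identity as written holds in this reading, and literally when $\smchemicalpotential=0$.
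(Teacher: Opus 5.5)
Your argument is the paper's own: conjugate the creation and annihilation operators by $V_{\kappa,\Lambda}$ using Fact~\ref{expedition0010960}, then identify the resulting constant with $-E_0$ by unitary invariance of the spectrum; you simply write it out mode by mode and supply the finite-mode and common-core details that the paper's one-line proof omits. Your flag is also correct and is not addressed in the paper: conjugation by $W_{\mathrm{F}}(\mathrm{i}\,\mathsf{m}_{\kappa,\Lambda})$ produces the coupling $\phi_{\mathrm{F}}((\omega-\mu)\mathsf{m}_{\kappa,\Lambda})$, so the stated identity holds literally only for $\mu=0$, while for $\mu<0$ one needs the shift $\omega(\omega-\mu)^{-1}\mathsf{m}_{\kappa,\Lambda}$, giving $E_0=-\frac{1}{2}\|(\omega-\mu)^{-1/2}\omega\,\mathsf{m}_{\kappa,\Lambda}\|^2$.
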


\begin{proof}
It suffices to use the result of the unitary transformation of the free Hamiltonian by the exponential of the Segal field operator, in addition to Fact~\ref{expedition0010960}.
The computations in \cite{YoshitsuguSekine001,YoshitsuguSekine002} will also be helpful as reference.
\end{proof}

By the following proposition, the thermal operator of the van Hove Hamiltonian with chemical potential is trace class in the bounded system.

\begin{prop}\label{expedition0011287}
The thermal operator $\napiernum^{-\sminvtemperature \physham_{\txtvanhowe,I_{L}^{d},\kappa,\Lambda}(\smchemicalpotential)}$ of the van Hove Hamiltonian with chemical potential is trace class, and satisfies in particular $$\sqfun{\trace}{\napiernum^{-\sminvtemperature \physham_{\txtvanhowe,I_{L}^{d},\kappa,\Lambda}(\smchemicalpotential)}} = \frac{1}{\napiernum^{\sminvtemperature \fun{\physgse}{\physham_{\txtvanhowe,I_{L}^{d},\kappa,\Lambda}(\smchemicalpotential)}}} \sqfun{\trace}{\napiernum^{-\sminvtemperature \physham_{\txtbsn,\txtfr}(\smchemicalpotential)}}.$$
\end{prop}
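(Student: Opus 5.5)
The plan is to reduce everything to the free Hamiltonian through the unitary equivalence of Proposition~\ref{expedition0012088}. Rearranging that identity, we write
\[
\physham_{\txtvanhowe,I_{L}^{d},\kappa,\Lambda}(\smchemicalpotential)
=
V_{\kappa,\Lambda}\,\physham_{\txtbsn,\txtfr}(\smchemicalpotential)\,\inv{V_{\kappa,\Lambda}}
+\fun{\physgse}{\physham_{\txtvanhowe,I_{L}^{d},\kappa,\Lambda}(\smchemicalpotential)}\idone .
\]
This is an operator equality of self-adjoint operators, including domains, since $V_{\kappa,\Lambda}$ is unitary.

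The functional calculus is compatible with unitary conjugation and with adding a real scalar. It therefore gives
\[
\napiernum^{-\sminvtemperature \physham_{\txtvanhowe,I_{L}^{d},\kappa,\Lambda}(\smchemicalpotential)}
=
\napiernum^{-\sminvtemperature \physgse}\,
V_{\kappa,\Lambda}\,\napiernum^{-\sminvtemperature \physham_{\txtbsn,\txtfr}(\smchemicalpotential)}\,\inv{V_{\kappa,\Lambda}} .
\]
Two facts then finish the argument. The trace class is a two-sided ideal, and the trace is invariant under unitary conjugation, $\trace(VAV^{-1})=\trace(A)$. So both trace-class membership and the stated formula follow as soon as $\napiernum^{-\sminvtemperature \physham_{\txtbsn,\txtfr}(\smchemicalpotential)}$ is shown to be trace class.

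The main step is this trace-class property of the free thermal operator on $\fun{\spfock_{\txtbsn}}{\fun{\lp^{2}}{I_{L}^{d}}}$. The operator $\omega-\smchemicalpotential$ is diagonal in the plane-wave basis indexed by $k\in\setlattice_L^{d}$, with eigenvalues $\omega(k)-\smchemicalpotential$. Consequently $\opfocksndqntdiff_{\txtbsn}(\omega-\smchemicalpotential)$ is diagonal in the occupation-number basis $\{n_k\}$ with eigenvalues $\sum_k n_k(\omega(k)-\smchemicalpotential)$. Summing over occupation numbers gives the standard product formula
\[
\trace\,\napiernum^{-\sminvtemperature \physham_{\txtbsn,\txtfr}(\smchemicalpotential)}=\prod_{k\in\setlattice_L^{d}}\rbk{1-\napiernum^{-\sminvtemperature(\omega(k)-\smchemicalpotential)}}^{-1}.
\]
This infinite product is finite exactly when every factor is finite and $\sum_k \napiernum^{-\sminvtemperature(\omega(k)-\smchemicalpotential)}<\infty$.

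The summability follows from the preceding proposition, $\trace\,\napiernum^{-\sminvtemperature\omega}<\infty$, since multiplying by the constant factor $\napiernum^{\sminvtemperature\smchemicalpotential}$ preserves convergence. Finiteness of each factor needs $\omega(k)-\smchemicalpotential>0$ for every $k$.

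I expect the sign of the chemical potential to be the real obstacle. The paper's finite-temperature setup says $\smchemicalpotential>0$, and then the factor at $k=0$, where $\omega(0)=0$, diverges. I would therefore state the argument for $\smchemicalpotential<0$, which is consistent with the earlier convention $\smchemicalpotential\le 0$ and with $\physham_{\txtbsn,\txtfr}(\smchemicalpotential)$ being bounded below. The other option is to exclude the zero mode, which is harmless here because $\mathsf{m}_{\kappa,\Lambda}$ vanishes for $\abs{k}<\kappa$.

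Granting this positivity, the product converges because $\log(1-x)^{-1}\le Cx$ for small $x$. The trace-class claim follows, and the stated identity is obtained by taking traces in the conjugation formula above.
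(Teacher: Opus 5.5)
Your proof is essentially the paper's: both rest on Proposition~\ref{expedition0012088}, unitary invariance of the trace, and the free Bose partition function $\prod_{k\in\setlattice_L^{d}}(1-\napiernum^{-\sminvtemperature(\omega(k)-\smchemicalpotential)})^{-1}$. You additionally supply the finiteness argument that the paper calls well known, and you correctly observe that it requires $\smchemicalpotential<0$: the mode $k=0$ rules out the stated $\smchemicalpotential>0$, and $\smchemicalpotential=0$ as well, and $\smchemicalpotential<0$ is the convention the paper itself adopts in its later finite-temperature sections.

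One caveat you share with the paper concerns $\smchemicalpotential\neq0$. There, conjugating $\physham_{\txtbsn,\txtfr}(\smchemicalpotential)$ by $\opfockweyl_{\txtfock}(\imunit\mathsf{m}_{\kappa,\Lambda})$ produces the linear term $\opfocksegal_{\txtfock}((\omega-\smchemicalpotential)\mathsf{m}_{\kappa,\Lambda})$ instead of $\opfocksegal_{\txtfock}(\omega\mathsf{m}_{\kappa,\Lambda})$. So the implementing unitary is really $\opfockweyl_{\txtfock}(\imunit(\omega-\smchemicalpotential)^{-1}\omega\mathsf{m}_{\kappa,\Lambda})$. Your argument only uses the existence of some unitary conjugating $\physham_{\txtbsn,\txtfr}(\smchemicalpotential)$ to $\physham_{\txtvanhowe,I_{L}^{d},\kappa,\Lambda}(\smchemicalpotential)-\physgse$, so the conclusion is unaffected.
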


\begin{proof}
As is well known, the partition function of the free Bose gas is $$\sqfun{\trace}{\napiernum^{-\sminvtemperature \physham_{\txtbsn,\txtfr}(\smchemicalpotential)}} = \frac{1}{\prod_{k \in \setlattice_L^{d}} \rbk{1 - \napiernum^{-\sminvtemperature(\omega(k) - \smchemicalpotential)}}} < \infty.$$
Since unitary transformations preserve the spectrum, the thermal operator of $$\physham_{\txtvanhowe,I_{L}^{d},\kappa,\Lambda}(\smchemicalpotential) - \fun{\physgse}{\physham_{\txtvanhowe,I_{L}^{d},\kappa,\Lambda}(\smchemicalpotential)}$$ is also trace class, and $$\sqfun{\trace}{\napiernum^{-\sminvtemperature \physham_{\txtvanhowe,I_{L}^{d},\kappa,\Lambda}(\smchemicalpotential)}} = \napiernum^{-\sminvtemperature \fun{\physgse}{\physham_{\txtvanhowe,I_{L}^{d},\kappa,\Lambda}(\smchemicalpotential)}} \sqfun{\trace}{\napiernum^{-\sminvtemperature \physham_{\txtbsn,\txtfr}(\smchemicalpotential)}}.$$
\end{proof}

Using the above computational results, we define the grand partition function and the grand canonical state. In particular, \begin{equation}
\begin{aligned}
\smgrandpartitionfunc_{\txtbsn,\txtfr,I_{L}^{d},\sminvtemperature,\smchemicalpotential}
&=
\sqfun{\trace}{\napiernum^{-\sminvtemperature \physham_{\txtbsn,\txtfr}(\smchemicalpotential)}}
=
\frac{1}{\prod_{k \in I_{L}^{d}} \rbk{1 - \napiernum^{-\sminvtemperature(\omega(k) - \smchemicalpotential)}}}, \\
\smgrandpartitionfunc_{\txtvanhowe,I_{L}^{d},\kappa,\Lambda,\sminvtemperature,\smchemicalpotential}
&=
\frac{1}
{\napiernum^{\sminvtemperature
\fun{\physgse}
{\physham_{\txtvanhowe,I_{L}^{d},\kappa,\Lambda}(\smchemicalpotential)}}}
\smgrandpartitionfunc_{\txtbsn,\txtfr,I_{L}^{d},\sminvtemperature,\smchemicalpotential}
\end{aligned}
\end{equation}

and we call the latter \(\smgrandpartitionfunc_{\txtvanhowe,I_{L}^{d},\kappa,\Lambda,\sminvtemperature,\smchemicalpotential}\) the grand partition function for the van Hove Hamiltonian.

Furthermore, the state on the algebra of all bounded operators \(\opspbddlin{\fun{\spfock_{\txtbsn}}{\fun{\lp^{2}}{I_{L}^{d}}}}\) defined by \[\oastate[\psi_{\txtgrandcanonical,\txtvanhowe,I_{L}^{d},\kappa,\Lambda,\sminvtemperature,\smchemicalpotential}](A)
=
\frac{1}{\smgrandpartitionfunc_{\txtbsn,\txtvanhowe,I_{L}^{d},\kappa,\Lambda,\sminvtemperature,\smchemicalpotential}}
\sqfun{\trace}{\napiernum^{-\sminvtemperature \physham_{\txtvanhowe,I_{L}^{d},\kappa,\Lambda}(\smchemicalpotential)} A}\]is defined. This is called the grand canonical state for the van Hove Hamiltonian.

As in the discussion of the free Bose gas in \cite{AsaoArai28}, the domain of the grand canonical state extends to the unbounded Fock canonical commutation relation algebra including creation and annihilation operators. We proceed under this assumption in what follows.

\begin{prop}\label{expedition0011296}
The one-point functions for the creation and annihilation operators and the Segal field operator take the values
\begin{equation}
\begin{aligned}
\oastate[\psi_{\txtgrandcanonical,\txtvanhowe,I_{L}^{d},\kappa,\Lambda,\sminvtemperature,\smchemicalpotential}](\opfockcr(f))
&=
-\frac{1}{\sqrt{2}}
\bkt{\mathsf{m}_{\kappa,\Lambda}}{f}, \\
\oastate[\psi_{\txtgrandcanonical,\txtvanhowe,I_{L}^{d},\kappa,\Lambda,\sminvtemperature,\smchemicalpotential}](\opfockan(f))
&=
-\frac{1}{\sqrt{2}}
\cmpconj{\bkt{\mathsf{m}_{\kappa,\Lambda}}{f}}, \\
\oastate[\psi_{\txtgrandcanonical,\txtvanhowe,I_{L}^{d},\kappa,\Lambda,\sminvtemperature,\smchemicalpotential}](\opfocksegal_{\txtfock}(f))
&=
-\opreal
\bkt{\mathsf{m}_{\kappa,\Lambda}}{f}
\end{aligned}
\end{equation}
\end{prop}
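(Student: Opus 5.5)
The plan is to transport the computation to the free Bose gas through the unitary $V_{\kappa,\Lambda} = \opfockweyl_{\txtfock}(\imunit \mathsf{m}_{\kappa,\Lambda})$. By Proposition~\ref{expedition0012088},
\[
\napiernum^{-\sminvtemperature \physham_{\txtvanhowe,I_{L}^{d},\kappa,\Lambda}(\smchemicalpotential)}
=
\napiernum^{-\sminvtemperature \fun{\physgse}{\physham_{\txtvanhowe,I_{L}^{d},\kappa,\Lambda}(\smchemicalpotential)}}
V_{\kappa,\Lambda} \napiernum^{-\sminvtemperature \physham_{\txtbsn,\txtfr}(\smchemicalpotential)} \inv{V_{\kappa,\Lambda}}.
\]
The scalar prefactor cancels against the normalization from Proposition~\ref{expedition0011287}. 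Cyclicity of the trace then gives
\[
\oastate[\psi_{\txtgrandcanonical,\txtvanhowe,I_{L}^{d},\kappa,\Lambda,\sminvtemperature,\smchemicalpotential}](A)
=
\oastate[\psi_{\txtgrandcanonical,\txtbsn,\txtfr}]\rbk{\inv{V_{\kappa,\Lambda}} A V_{\kappa,\Lambda}}.
\]
Here the right-hand side is the free grand canonical (Gibbs) state. This holds first for bounded $A$. For the unbounded $A = \opfockcran(f)$, I would use the standing assumption that the state extends to the CCR algebra. To justify the cyclic step concretely, I would note that the domains are invariant under $V_{\kappa,\Lambda}$ (Fact~\ref{expedition0010960}(1)). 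Moreover, $\napiernum^{-\sminvtemperature \physham_{\txtbsn,\txtfr}(\smchemicalpotential)}$ maps into the finite-particle vectors with rapidly decaying weights, so $\opfockcran(f)$ times the thermal operator is trace class.

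Next I conjugate the field operators using Fact~\ref{expedition0010960}(1) with $\inv{V_{\kappa,\Lambda}} = \napiernum^{\imunit \opfocksegal(-\imunit \mathsf{m}_{\kappa,\Lambda})}$:
\[
\inv{V_{\kappa,\Lambda}} \opfockan(f) V_{\kappa,\Lambda}
=
\opfockan(f) - \frac{\imunit}{\sqrt{2}} \bkt{f}{-\imunit \mathsf{m}_{\kappa,\Lambda}},
\]
and analogously for $\opfockcr(f)$. I will simplify the constants $\bkt{f}{-\imunit \mathsf{m}_{\kappa,\Lambda}}$ and the corresponding term for the creation operator into $\pm\frac{1}{\sqrt{2}}\bkt{\mathsf{m}_{\kappa,\Lambda}}{f}$ or its conjugate. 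I expect the main obstacle to be the sign and conjugation bookkeeping. The inner product is linear in the second slot, the fact is stated with $\napiernum^{\imunit\opfocksegal(f)}$, and the argument here is $\pm\imunit\mathsf{m}_{\kappa,\Lambda}$. I would double-check the result against the ground-state mean $-\opreal \mathsf{m}_{\kappa,\Lambda}(f)$ already used in Definition~\ref{expedition0011093}, which fixes the overall sign convention.

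Finally, the free Gibbs state is gauge invariant: it commutes with $\napiernum^{\imunit\theta \opfocksndqntdiff_{\txtbsn}(1)}$, which multiplies $\opfockan(f)$ by a phase. Hence it gives zero expectation to $\opfockan(f)$ and $\opfockcr(f)$, and only the constant shifts survive, yielding the first two formulas. The third follows by linearity from $\opfocksegal_{\txtfock}(f) = \frac{1}{\sqrt{2}}(\opfockcr(f)+\opfockan(f))$:
\[
-\tfrac{1}{2}\rbk{\bkt{\mathsf{m}_{\kappa,\Lambda}}{f} + \cmpconj{\bkt{\mathsf{m}_{\kappa,\Lambda}}{f}}} = -\opreal\bkt{\mathsf{m}_{\kappa,\Lambda}}{f}.
\]
Alternatively, one can apply Fact~\ref{expedition0010960}(2) directly as a consistency check.
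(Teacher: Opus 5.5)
Your proposal is correct and follows the paper's own proof: rewrite the Gibbs density via Proposition~\ref{expedition0012088}, move $V_{\kappa,\Lambda}$ around the trace, shift $\opfockcran(f)$ by the constants from Fact~\ref{expedition0010960}(1), discard the vanishing free one-point functions, and combine linearly for $\opfocksegal_{\txtfock}(f)$; your gauge-invariance argument only makes explicit what the paper uses tacitly. One caveat, which applies equally to the paper's proof: Proposition~\ref{expedition0012088} is exact only at $\smchemicalpotential = 0$, because for $\smchemicalpotential \neq 0$ conjugation by $\opfockweyl_{\txtfock}(\imunit \mathsf{m}_{\kappa,\Lambda})$ produces $\fun{\opfocksegal_{\txtfock}}{(\omega - \smchemicalpotential) \mathsf{m}_{\kappa,\Lambda}}$ rather than $\fun{\opfocksegal_{\txtfock}}{\omega \mathsf{m}_{\kappa,\Lambda}}$, so for $\smchemicalpotential < 0$ your computation (run with the corrected unitary $\opfockweyl_{\txtfock}(\imunit \omega (\omega - \smchemicalpotential)^{-1} \mathsf{m}_{\kappa,\Lambda})$) gives the stated formulas with $\mathsf{m}_{\kappa,\Lambda}$ replaced by $\omega (\omega - \smchemicalpotential)^{-1} \mathsf{m}_{\kappa,\Lambda}$, which reduce to the statement only as $\smchemicalpotential \uparrow 0$.
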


\begin{proof}
We argue somewhat informally and in a slightly formal manner.
To simplify notation, let $\physgse$ denote the ground state energy of the van Hove Hamiltonian and set $V_{\kappa,\Lambda}
= \opfockweyl_{\txtfock}(\imunit \mathsf{m}_{\kappa,\Lambda})$.

(Creation operator): Applying Proposition \ref{expedition0012088} to the Hamiltonian and Fact \ref{expedition0010960} to the creation/annihilation operators,
\begin{equation}
\begin{aligned}
&\sqfun{\trace}
{\napiernum^{-\sminvtemperature \physham_{\txtvanhowe,I_{L}^{d},\kappa,\Lambda}(\smchemicalpotential)}
\cdot
\opfockcr(f)} \\
&=
\sqfun{\trace}
{V_{\kappa,\Lambda}
\inv{V_{\kappa,\Lambda}}
\napiernum^{-\sminvtemperature \physham_{\txtvanhowe,I_{L}^{d},\kappa,\Lambda}(\smchemicalpotential)}
V_{\kappa,\Lambda}
\inv{V_{\kappa,\Lambda}}
\opfockcr(f)} \\
&=
\napiernum^{-\sminvtemperature \physgse}
\sqfun{\trace}
{\napiernum^{-\sminvtemperature \physham_{\txtbsn,\txtfr,I_{L}^{d}}(\smchemicalpotential)}
\inv{V_{\kappa,\Lambda}}
\opfockcr(f)
V_{\kappa,\Lambda}} \\
&=
\napiernum^{-\sminvtemperature \physgse}
\sqfun{\trace}
{\napiernum^{-\sminvtemperature \physham_{\txtbsn,\txtfr,I_{L}^{d}}(\smchemicalpotential)}
\rbk{\opfockcr(f) + \frac{\imunit}{\sqrt{2}} \bkt{-\imunit \mathsf{m}_{\kappa,\Lambda}}{f}}} \\
&=
-\frac{1}{\sqrt{2}}
\napiernum^{-\sminvtemperature \physgse}
\bkt{\mathsf{m}_{\kappa,\Lambda}}{f}
\sqfun{\trace}
{\napiernum^{-\sminvtemperature \physham_{\txtbsn,\txtfr,I_{L}^{d}}(\smchemicalpotential)}} \\
&=
-\frac{1}{\sqrt{2}}
\bkt{\mathsf{m}_{\kappa,\Lambda}}{f}
\smgrandpartitionfunc_{\txtbsn,\txtfr,I_{L}^{d},\sminvtemperature,\smchemicalpotential}
\end{aligned}
\end{equation}
is obtained.
The desired result then follows from the evaluation of the grand partition function.

(Annihilation operator): The only difference is in applying Fact \ref{expedition0010960} to the commutation relation, and in particular
\begin{equation}
\begin{aligned}
&\sqfun{\trace}
{\napiernum^{-\sminvtemperature \physham_{\txtvanhowe,I_{L}^{d},\kappa,\Lambda}(\smchemicalpotential)}
\cdot
\opfockan(f)} \\
&=
\napiernum^{-\sminvtemperature \physgse}
\sqfun{\trace}
{\napiernum^{-\sminvtemperature \physham_{\txtbsn,\txtfr,I_{L}^{d}}(\smchemicalpotential)}
\rbk{\opfockan(f) - \frac{\imunit}{\sqrt{2}} \bkt{f}{-\imunit \mathsf{m}_{\kappa,\Lambda}}}} \\
&=
-\frac{1}{\sqrt{2}}
\bkt{f}{\mathsf{m}_{\kappa,\Lambda}}
\napiernum^{-\sminvtemperature \physgse}
\sqfun{\trace}
{\napiernum^{-\sminvtemperature \physham_{\txtbsn,\txtfr,I_{L}^{d}}(\smchemicalpotential)}}
\end{aligned}
\end{equation}
holds.
Again, the desired result follows from the evaluation of the grand partition function.

(Segal field operator): Take $\frac{1}{\sqrt{2}}$ times the sum of the creation and annihilation operator results.
\end{proof}

Next, we compute the two-point functions for the grand canonical state.

\begin{prop}\label{expedition0011299}
As values of the two-point functions for the creation and annihilation operators and the Segal field operator, using fractional notation to highlight commutativity for readability,
\begin{equation}
\begin{aligned}
&\oastate[\psi_{\txtgrandcanonical,\txtvanhowe,I_{L}^{d},\kappa,\Lambda,\sminvtemperature,\smchemicalpotential}](\opfockcr(f) \opfockan(g)) \\
&=
\bkt{g}
{\frac{\napiernum^{-\sminvtemperature (\omega - \smchemicalpotential)}}
{1 - \napiernum^{-\sminvtemperature (\omega - \smchemicalpotential)}} f}
+\onehalf
\bkt{\mathsf{m}_{\kappa,\Lambda}}{f}
\bkt{g}{\mathsf{m}_{\kappa,\Lambda}}, \\
&=
\bkt{g}
{\frac{1}
{\napiernum^{\sminvtemperature (\omega - \smchemicalpotential)} - 1}
f}
+\onehalf
\bkt{\mathsf{m}_{\kappa,\Lambda}}{f}
\bkt{g}{\mathsf{m}_{\kappa,\Lambda}}, \\
&\oastate[\psi_{\txtgrandcanonical,\txtvanhowe,I_{L}^{d},\kappa,\Lambda,\sminvtemperature,\smchemicalpotential}](\opfocksegal_{\txtfock}(f) \opfocksegal_{\txtfock}(g)) \\
&=
\onehalf
\rbk{\opreal \opform{q}_{\txtnonzero,\sminvtemperature,\smchemicalpotential}(f,g)
+\imunit \opimag \bkt{f}{g}}
+\opreal \bkt{\mathsf{m}_{\kappa,\Lambda}}{f} \cdot \opreal \bkt{\mathsf{m}_{\kappa,\Lambda}}{g}
\end{aligned}
\end{equation}
hold.
\end{prop}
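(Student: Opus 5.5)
The plan is to repeat the conjugation argument of Proposition~\ref{expedition0011296}, now for quadratic expressions. Write $\physgse$ for the ground-state energy and $V = V_{\kappa,\Lambda} = \opfockweyl_{\txtfock}(\imunit \mathsf{m}_{\kappa,\Lambda})$. By Proposition~\ref{expedition0012088},
\[
\napiernum^{-\sminvtemperature \physham_{\txtvanhowe,I_{L}^{d},\kappa,\Lambda}(\smchemicalpotential)}
=
\napiernum^{-\sminvtemperature \physgse}\,
V \napiernum^{-\sminvtemperature \physham_{\txtbsn,\txtfr}(\smchemicalpotential)} \inv{V}.
\]
Cyclicity of the trace (formally, as in the one-point computation) then gives
\[
\sqfun{\trace}{\napiernum^{-\sminvtemperature \physham_{\txtvanhowe,I_{L}^{d},\kappa,\Lambda}(\smchemicalpotential)} A}
=
\napiernum^{-\sminvtemperature \physgse}
\sqfun{\trace}{\napiernum^{-\sminvtemperature \physham_{\txtbsn,\txtfr}(\smchemicalpotential)} \inv{V} A V}.
\]
The prefactor $\napiernum^{-\sminvtemperature\physgse}$ is cancelled by the normalisation from Proposition~\ref{expedition0011287}. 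It therefore suffices to compute $\inv{V}\opfockcr(f)\opfockan(g)V$ and $\inv{V}\opfocksegal_{\txtfock}(f)\opfocksegal_{\txtfock}(g)V$ and take free grand canonical expectations.

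For the creation--annihilation product, insert $V\inv{V}$ between the two factors. Fact~\ref{expedition0010960} (with $-\imunit\mathsf{m}_{\kappa,\Lambda}$, exactly as in the one-point proof) gives the shifts
\[
\inv{V}\opfockcr(f)V=\opfockcr(f)-\tfrac{1}{\sqrt{2}}\bkt{\mathsf{m}_{\kappa,\Lambda}}{f},
\qquad
\inv{V}\opfockan(g)V=\opfockan(g)-\tfrac{1}{\sqrt{2}}\bkt{g}{\mathsf{m}_{\kappa,\Lambda}}.
\]
Expanding the product yields four terms. The two cross terms are linear in $\opfockcran$ and vanish in the free gauge-invariant Gibbs state. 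The constant term gives $\onehalf\bkt{\mathsf{m}_{\kappa,\Lambda}}{f}\bkt{g}{\mathsf{m}_{\kappa,\Lambda}}$. The remaining term $\opfockcr(f)\opfockan(g)$ has the standard free Bose value $\bkt{g}{\smlocaldensityoperator_{\sminvtemperature,\smchemicalpotential}f}$ in the bounded system, where $\opspec{\omega}$ is discrete by Proposition~\ref{expedition0011286}. I would derive this in the eigenbasis of $\omega$: diagonal mode occupations are geometric sums, giving $(\napiernum^{\sminvtemperature(\omega(k)-\smchemicalpotential)}-1)^{-1}$, and off-diagonal terms vanish. Both displayed forms follow by rewriting this fraction.

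For the Segal fields, use part (2) of Fact~\ref{expedition0010960}, which gives $\inv{V}\opfocksegal_{\txtfock}(f)V=\opfocksegal_{\txtfock}(f)-\opreal\bkt{\mathsf{m}_{\kappa,\Lambda}}{f}$. This is consistent with the one-point result of Proposition~\ref{expedition0011296}. Expanding the product, the cross terms again vanish, and the constant gives $\opreal\bkt{\mathsf{m}_{\kappa,\Lambda}}{f}\opreal\bkt{\mathsf{m}_{\kappa,\Lambda}}{g}$. The free term $\tfrac12\langle(\opfockcr(f)+\opfockan(f))(\opfockcr(g)+\opfockan(g))\rangle$ is evaluated using three facts: $\langle\opfockcr\opfockcr\rangle=\langle\opfockan\opfockan\rangle=0$; the value $\langle\opfockcr(f)\opfockan(g)\rangle=\bkt{g}{\smlocaldensityoperator f}$ from the previous step; and the CCR $\opfockan(f)\opfockcr(g)=\opfockcr(g)\opfockan(f)+\bkt{f}{g}$. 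This gives
\[
\tfrac12\bigl(\bkt{g}{\smlocaldensityoperator f}+\bkt{f}{\smlocaldensityoperator g}+\bkt{f}{g}\bigr)
=\tfrac12\bigl(\opreal\bkt{f}{(2\smlocaldensityoperator+1)g}+\imunit\opimag\bkt{f}{g}\bigr),
\]
using that $\smlocaldensityoperator$ is self-adjoint. Recognising $K_{\sminvtemperature,\smchemicalpotential}=2\smlocaldensityoperator_{\sminvtemperature,\smchemicalpotential}+1$ as the operator of $\opform{q}_{\txtnonzero,\sminvtemperature,\smchemicalpotential}$ yields the stated formula.

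The main obstacle is rigour, not algebra. The trace manipulation involves unbounded operators: cyclicity with $V$, and the domain of the grand canonical state on quadratic expressions. I would justify it by taking $f,g$ in finite spans of eigenfunctions of $\omega$, where every quantity is a finite-mode computation and $\napiernum^{-\sminvtemperature\physham}$ maps into $\dom$ of any polynomial in the field operators. The general case then follows by continuity of both sides in $f,g\in\fun{\lp^{2}}{I_L^d}$, since $\smlocaldensityoperator$ is bounded for $\smchemicalpotential<\inf\omega$. Following the convention of Proposition~\ref{expedition0011296}, this step may be handled somewhat formally.
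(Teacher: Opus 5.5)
Your route is the same as the paper's. You conjugate the Gibbs factor by $V_{\kappa,\Lambda}=\opfockweyl_{\txtfock}(\imunit\mathsf{m}_{\kappa,\Lambda})$ via Proposition~\ref{expedition0012088}, shift the operators with Fact~\ref{expedition0010960}, expand, and evaluate free grand canonical expectations. Your shift formulas and signs agree with the paper's. Your reduction of the free Segal two-point function via the CCR to $\onehalf\rbk{\opreal\bkt{f}{(2\smlocaldensityoperator+1)g}+\imunit\opimag\bkt{f}{g}}$ correctly derives the ideal-gas value that the paper only cites.

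There is a genuine gap in your first step, and the paper's proof has the same gap: Proposition~\ref{expedition0012088} is only true at $\smchemicalpotential=0$. The shift computation you rely on gives
\[
V_{\kappa,\Lambda}\,\opfocksndqntdiff_{\txtbsn}(\omega-\smchemicalpotential)\,\inv{V_{\kappa,\Lambda}}
=
\opfocksndqntdiff_{\txtbsn}(\omega-\smchemicalpotential)
+\opfocksegal_{\txtfock}\rbk{(\omega-\smchemicalpotential)\mathsf{m}_{\kappa,\Lambda}}
+\onehalf\bkt{\mathsf{m}_{\kappa,\Lambda}}{(\omega-\smchemicalpotential)\mathsf{m}_{\kappa,\Lambda}} .
\]
This differs from $\physham_{\txtvanhowe,I_{L}^{d},\kappa,\Lambda}(\smchemicalpotential)$ plus a constant by the operator $-\smchemicalpotential\,\opfocksegal_{\txtfock}(\mathsf{m}_{\kappa,\Lambda})$.

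You cannot avoid this by setting $\smchemicalpotential=0$. On $\setlattice_L^d$ the mode $k=0$ has $\omega(0)=0$, so the free trace diverges, and $\smchemicalpotential<0$ is forced in the bounded system. For $\smchemicalpotential<0$, completing the square mode by mode shows that the correct dressing operator is $\opfockweyl_{\txtfock}(\imunit\tilde{\mathsf{m}})$ with $\tilde{\mathsf{m}}=\omega(\omega-\smchemicalpotential)^{-1}\mathsf{m}_{\kappa,\Lambda}$. As a check, take a normalised eigenvector $e_k$ of $\omega$. The Gibbs mean of $\opfockan(e_k)$ is $-\omega(k)\bkt{e_k}{\mathsf{m}_{\kappa,\Lambda}}/\rbk{\sqrt{2}(\omega(k)-\smchemicalpotential)}$, not $-\bkt{e_k}{\mathsf{m}_{\kappa,\Lambda}}/\sqrt{2}$.

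Your argument then goes through verbatim with $\tilde{\mathsf{m}}$ in place of $\mathsf{m}_{\kappa,\Lambda}$. The displayed formulas are recovered only in the limit $\smchemicalpotential\uparrow 0$, where $\tilde{\mathsf{m}}\to\mathsf{m}_{\kappa,\Lambda}$ because $\omega\geq\kappa$ on the support. The finite-mode justification you sketch would have exposed this had you carried out the diagonalisation yourself instead of invoking Proposition~\ref{expedition0012088}.
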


\begin{proof}
(Preparation): We proceed as for the one-point function, again arguing somewhat formally.
Let $\physgse$ denote the ground state energy of the van Hove Hamiltonian and define the unitary operator $V_{\kappa,\Lambda}
= \opfockweyl_{\txtfock}(\imunit \mathsf{m}_{\kappa,\Lambda})$.

($\opfockcr(f) \opfockan(g)$): Set the operator to be evaluated as $$W
=
\napiernum^{-\sminvtemperature \physham_{\txtvanhowe,I_{L}^{d},\kappa,\Lambda}(\smchemicalpotential)}
\cdot
\opfockcr(f)
\cdot
\opfockan(g).$$
Applying Proposition \ref{expedition0012088} to the Hamiltonian and Fact \ref{expedition0010960} to the creation/annihilation operators,
\begin{equation}
\begin{aligned}
&\inv{V_{\kappa,\Lambda}}
W
V_{\kappa,\Lambda}
=
\inv{V_{\kappa,\Lambda}}
\napiernum^{-\sminvtemperature \physham_{\txtvanhowe,I_{L}^{d},\kappa,\Lambda}(\smchemicalpotential)}
V_{\kappa,\Lambda}
\cdot
\inv{V_{\kappa,\Lambda}}
\opfockcr(f)
V_{\kappa,\Lambda}
\cdot
\inv{V_{\kappa,\Lambda}}
\opfockan(g)
V_{\kappa,\Lambda}
\\ 
&=
\napiernum^{-\sminvtemperature \physgse}
\napiernum^{-\sminvtemperature \physham_{\txtbsn,\txtfr,I_{L}^{d}}(\smchemicalpotential)}
\rbk{\opfockcr(f) + \frac{\imunit}{\sqrt{2}} \bkt{-\imunit \mathsf{m}_{\kappa,\Lambda}}{f}}
\rbk{\opfockan(g) - \frac{\imunit}{\sqrt{2}} \bkt{g}{-\imunit \mathsf{m}_{\kappa,\Lambda}}}
\\ 
&=
\napiernum^{-\sminvtemperature \physgse}
\napiernum^{-\sminvtemperature \physham_{\txtbsn,\txtfr,I_{L}^{d}}(\smchemicalpotential)}
\rbkleft{\opfockcr(f) \opfockan(g)
-\frac{\imunit}{\sqrt{2}} \opfockcr(f) \bkt{g}{-\imunit \mathsf{m}_{\kappa,\Lambda}}}
\\ 
&\qquad
+\rbkright{\frac{\imunit}{\sqrt{2}} \bkt{-\imunit \mathsf{m}_{\kappa,\Lambda}}{f} \opfockan(g)
+\onehalf \bkt{-\imunit \mathsf{m}_{\kappa,\Lambda}}{f} \bkt{g}{-\imunit \mathsf{m}_{\kappa,\Lambda}}}
\end{aligned}
\end{equation}
holds.

By the cyclicity of trace we have $\sqfun{\trace}{W}
= \sqfun{\trace}{\inv{V_{\kappa,\Lambda}}
W
V_{\kappa,\Lambda}}$.
The above calculation ultimately allows us to use the evaluation for the ideal Bose gas.
Applying Proposition \ref{expedition0011296} for the one-point functions and the ideal Bose gas result, the one-point functions for the creation and annihilation operators vanish, and
\begin{equation}
\begin{aligned}
&\oastate[\psi_{\txtgrandcanonical,\txtvanhowe,I_{L}^{d},\kappa,\Lambda,\sminvtemperature,\smchemicalpotential}](\opfockcr(f) \opfockan(g))
=
\napiernum^{\sminvtemperature \physgse}
\frac{1}{\smgrandpartitionfunc_{\txtbsn,\txtfr,I_{L}^{d},\sminvtemperature,\smchemicalpotential}}
\sqfun{\trace}{W} \\
&=
\bkt{g}
{\frac{\napiernum^{-\sminvtemperature (\omega - \smchemicalpotential)}}
{1 - \napiernum^{-\sminvtemperature (\omega - \smchemicalpotential)}}
f}
+\onehalf
\bkt{\mathsf{m}_{\kappa,\Lambda}}{f}
\bkt{g}{\mathsf{m}_{\kappa,\Lambda}}
\end{aligned}
\end{equation}
is obtained.

($\opfocksegal_{\txtfock}(f) \opfocksegal_{\txtfock}(g)$): The computation proceeds in the same way as for the creation/annihilation operator pair.
Set the operator to be evaluated as $$W
=
\napiernum^{-\sminvtemperature \physham_{\txtvanhowe,I_{L}^{d},\kappa,\Lambda}(\smchemicalpotential)}
\cdot
\opfocksegal_{\txtfock}(f)
\cdot
\opfocksegal_{\txtfock}(g).$$
By the result on the commutator of the Hamiltonian and the Segal field operator,
\begin{equation}
\begin{aligned}
&\inv{V_{\kappa,\Lambda}}
W
V_{\kappa,\Lambda}
=
\inv{V_{\kappa,\Lambda}}
\napiernum^{-\sminvtemperature \physham_{\txtvanhowe,I_{L}^{d},\kappa,\Lambda}(\smchemicalpotential)}
V_{\kappa,\Lambda}
\cdot
\inv{V_{\kappa,\Lambda}}
\opfocksegal_{\txtfock}(f)
V_{\kappa,\Lambda}
\cdot
\inv{V_{\kappa,\Lambda}}
\opfocksegal_{\txtfock}(g)
V_{\kappa,\Lambda} \\
&=
\napiernum^{-\sminvtemperature \physgse}
\napiernum^{-\sminvtemperature \physham_{\txtbsn,\txtfr,I_{L}^{d}}(\smchemicalpotential)}
\rbk{\opfocksegal_{\txtfock}(f) - \opimag \bkt{-\imunit \mathsf{m}_{\kappa,\Lambda}}{f}}
\rbk{\opfocksegal_{\txtfock}(g) - \opimag \bkt{-\imunit \mathsf{m}_{\kappa,\Lambda}}{g}}
\\ 
&=
\napiernum^{-\sminvtemperature \physgse}
\napiernum^{-\sminvtemperature \physham_{\txtbsn,\txtfr,I_{L}^{d}}(\smchemicalpotential)}
\rbk{\opfocksegal_{\txtfock}(f) - \opreal \bkt{\mathsf{m}_{\kappa,\Lambda}}{f}}
\rbk{\opfocksegal_{\txtfock}(g) - \opreal \bkt{\mathsf{m}_{\kappa,\Lambda}}{g}} \\
\end{aligned}
\end{equation}
holds.

By the cyclicity of trace we have $\sqfun{\trace}{W}
= \sqfun{\trace}{\inv{V_{\kappa,\Lambda}}
W
V_{\kappa,\Lambda}}$.
As expectation values for the ideal Bose gas, the one-point functions vanish.
Using the evaluation of the two-point functions for the ideal Bose gas,
\begin{equation}
\begin{aligned}
&\oastate[\psi_{\txtgrandcanonical,\txtvanhowe,I_{L}^{d},\kappa,\Lambda,\sminvtemperature,\smchemicalpotential}]
(\opfocksegal_{\txtfock}(f) \opfocksegal_{\txtfock}(g))
=
\napiernum^{\sminvtemperature \physgse}
\frac{1}{\smgrandpartitionfunc_{\txtbsn,\txtfr,I_{L}^{d},\sminvtemperature,\smchemicalpotential}}
\sqfun{\trace}{W} \\
&=
\onehalf
\rbk{\opreal \opform{q}_{\txtnonzero,\sminvtemperature,\smchemicalpotential}(f,g)
+\imunit \opimag \bkt{f}{g}}
+\opreal \bkt{\mathsf{m}_{\kappa,\Lambda}}{f}
\cdot
\opreal \bkt{\mathsf{m}_{\kappa,\Lambda}}{g}
\end{aligned}
\end{equation}
is obtained.
\end{proof}

From the above argument, we obtain the grand canonical mean for the Weyl operator.

\begin{prop}\label{expedition0011578}
For any $f
\in \fun{\lp^{2}}{I_{L}^{d}}$,
\begin{equation}
\begin{aligned}
\oastate[\psi_{\txtgrandcanonical,\txtvanhowe,I_{L}^{d},\kappa,\Lambda,\sminvtemperature,\smchemicalpotential}]
(\opfockweyl_{\txtfock}(f))
=
\fnexp{-\imunit
\opreal
\bkt{\mathsf{m}_{\kappa,\Lambda}}{f}
-\frac{1}{4}
\opform{q}_{\txtnonzero,\sminvtemperature,\smchemicalpotential}(f)}
\end{aligned}
\end{equation}
holds.
\end{prop}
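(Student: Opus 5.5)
We will follow the same route as in Propositions~\ref{expedition0011296} and \ref{expedition0011299}: conjugate by $V_{\kappa,\Lambda} = \opfockweyl_{\txtfock}(\imunit \mathsf{m}_{\kappa,\Lambda})$ to reduce everything to the free Bose gas. By Proposition~\ref{expedition0012088},
\[
\inv{V_{\kappa,\Lambda}} \napiernum^{-\sminvtemperature \physham_{\txtvanhowe,I_{L}^{d},\kappa,\Lambda}(\smchemicalpotential)} V_{\kappa,\Lambda}
=
\napiernum^{-\sminvtemperature \physgse} \napiernum^{-\sminvtemperature \physham_{\txtbsn,\txtfr}(\smchemicalpotential)} .
\]
By the Weyl relations (equivalently, Fact~\ref{expedition0010960}(2) exponentiated), $\inv{V_{\kappa,\Lambda}} \opfockweyl_{\txtfock}(f) V_{\kappa,\Lambda}$ equals $\opfockweyl_{\txtfock}(f)$ times a scalar phase. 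Here $\opfocksegal_{\txtfock}(f)$ is shifted to $\opfocksegal_{\txtfock}(f) - \opreal \bkt{\mathsf{m}_{\kappa,\Lambda}}{f}$, exactly as in the proof of Proposition~\ref{expedition0011299}. Exponentiating gives
\[
\inv{V_{\kappa,\Lambda}} \opfockweyl_{\txtfock}(f) V_{\kappa,\Lambda}
= \napiernum^{-\imunit \opreal \bkt{\mathsf{m}_{\kappa,\Lambda}}{f}} \opfockweyl_{\txtfock}(f).
\]
Since $\opfockweyl_{\txtfock}(f)$ is bounded and the thermal operator is trace class by Proposition~\ref{expedition0011287}, cyclicity of the trace is fully rigorous here, unlike the formal manipulations with unbounded operators used earlier.

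Next, combine these ingredients. The factor $\napiernum^{-\sminvtemperature \physgse}$ cancels against the normalization
\[
\smgrandpartitionfunc_{\txtvanhowe,I_{L}^{d},\kappa,\Lambda,\sminvtemperature,\smchemicalpotential}
= \napiernum^{-\sminvtemperature \physgse} \smgrandpartitionfunc_{\txtbsn,\txtfr,I_{L}^{d},\sminvtemperature,\smchemicalpotential}.
\]
This leaves
\[
\napiernum^{-\imunit \opreal \bkt{\mathsf{m}_{\kappa,\Lambda}}{f}}
\cdot
\frac{1}{\smgrandpartitionfunc_{\txtbsn,\txtfr,I_{L}^{d},\sminvtemperature,\smchemicalpotential}}
\sqfun{\trace}{\napiernum^{-\sminvtemperature \physham_{\txtbsn,\txtfr}(\smchemicalpotential)} \opfockweyl_{\txtfock}(f)} .
\]
It remains to show that the free grand canonical expectation of the Weyl operator is $\fnexp{-\frac14 \opform{q}_{\txtnonzero,\sminvtemperature,\smchemicalpotential}(f)}$. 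Note that the positive sign convention for $\smchemicalpotential$ in the bounded system is presumably meant so that $\omega - \smchemicalpotential > 0$ on the lattice; we will assume this, since it is needed for trace class.

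For the free part, we will diagonalize in the discrete basis given by Proposition~\ref{expedition0011286}. The Fock space factorizes as a tensor product over modes $k \in \setlattice_L^{d}$, and both $\napiernum^{-\sminvtemperature \physham_{\txtbsn,\txtfr}(\smchemicalpotential)}$ and $\opfockweyl_{\txtfock}(f)$ factorize accordingly. For $f$ with finitely many nonzero Fourier coefficients the trace is a finite product of single-mode expressions. For a single oscillator with $\napiernum^{\imunit(\alpha a^{*} + \cmpconj{\alpha} a)/\sqrt{2}}$ in a Gibbs state, the standard computation gives
\[
\fnexp{-\tfrac14 \abs{\alpha}^2 \coth \tfrac{\sminvtemperature(\omega(k)-\smchemicalpotential)}{2}} .
\]
This can be done via a coherent-state or normal-ordering computation together with $\sum_n \napiernum^{-\sminvtemperature E n} L_n(x)$, or more simply from Gaussianity: the Gibbs state of a quadratic Hamiltonian is quasi-free, so the expectation equals $\napiernum^{-\frac12 \langle \opfocksegal(f)^2 \rangle}$, with the two-point function read off from Proposition~\ref{expedition0011299} with $\mathsf{m}=0$. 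Multiplying over modes yields $\fnexp{-\frac14 \bkt{f}{K_{\sminvtemperature,\smchemicalpotential} f}}$. General $f \in \fun{\lp^{2}}{I_{L}^{d}}$ then follows by strong continuity of $f \mapsto \opfockweyl_{\txtfock}(f)$ on bounded sets together with trace-class dominated convergence. We also use that $K_{\sminvtemperature,\smchemicalpotential}$ is bounded when the spectrum is discrete and bounded away from $\smchemicalpotential$.

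The main obstacle is this single-mode Gaussian identity and its justification. The quickest rigorous route is to identify the free Gibbs state as quasi-free, for instance by the KMS condition: for quadratic dynamics it forces the Weyl generating functional to be Gaussian with the covariance $K_{\sminvtemperature,\smchemicalpotential}$. The remaining steps, namely the phase bookkeeping in the conjugation by $V_{\kappa,\Lambda}$ and the cancellation of $\physgse$, are routine. We will, however, double-check the sign of the phase against Proposition~\ref{expedition0011296}, whose mean $-\opreal\bkt{\mathsf{m}_{\kappa,\Lambda}}{f}$ is consistent with the stated exponent $-\imunit \opreal \bkt{\mathsf{m}_{\kappa,\Lambda}}{f}$.
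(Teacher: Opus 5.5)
Your proposal follows the paper's proof step for step: conjugate by $V_{\kappa,\Lambda}=W_{\mathrm{F}}(\mathsf{i}\,\mathsf{m}_{\kappa,\Lambda})$, read off the phase $\mathsf{e}^{-\mathsf{i}\operatorname{Re}\langle\mathsf{m}_{\kappa,\Lambda},f\rangle}$ from the Weyl relations, use cyclicity of the trace, cancel $\mathsf{e}^{-\beta E_0}$ against the partition function, and reduce to the free Gibbs expectation of $W_{\mathrm{F}}(f)$, which the paper simply quotes from the ideal Bose gas whereas you derive it mode by mode. One caveat applies equally to your argument and the paper's, since both rely on Proposition~\ref{expedition0012088}: conjugating $d\Gamma(\omega-\mu)$ by $W_{\mathrm{F}}(\mathsf{i}\,\mathsf{m}_{\kappa,\Lambda})$ produces the linear term $\phi_{\mathrm{F}}((\omega-\mu)\mathsf{m}_{\kappa,\Lambda})$ rather than $\phi_{\mathrm{F}}(\omega\mathsf{m}_{\kappa,\Lambda})$, so for the $\mu<0$ you rightly require for trace class the correct dressing vector, and hence the vector in the phase, is $\omega(\omega-\mu)^{-1}\mathsf{m}_{\kappa,\Lambda}$, which reduces to the stated formula only in the limit $\mu\uparrow 0$ taken later.
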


\begin{proof}
The evaluation can be carried out by the same method as in Proposition \ref{expedition0011299}.
Let $\physgse$ denote the ground state energy of the van Hove Hamiltonian and use the unitary operator $V_{\kappa,\Lambda}
= \opfockweyl_{\txtfock}(\imunit \mathsf{m}_{\kappa,\Lambda})$.

Let $W
=
\opfockweyl_{\txtfock}(f)
\napiernum^{-\sminvtemperature \physham_{\txtvanhowe,I_{L}^{d},\kappa,\Lambda}(\smchemicalpotential)}$ be the operator to be evaluated from the trace perspective.
By Fact \ref{expedition0010960},
\begin{equation}
\begin{aligned}
&\inv{V_{\kappa}}
W
V_{\kappa}
=
\inv{V_{\kappa}}
\opfockweyl_{\txtfock}(f)
\napiernum^{-\sminvtemperature \physham_{\txtvanhowe,I_{L}^{d},\kappa,\Lambda}(\smchemicalpotential)}
V_{\kappa}
\\ 
&=
\napiernum^{-\imunit
\opimag
\bkt{-\imunit \mathsf{m}_{\kappa,\Lambda}}{f}}
\opfockweyl_{\txtfock}(f)
\cdot
\inv{V_{\kappa}}
\napiernum^{-\sminvtemperature \physham_{\txtvanhowe,I_{L}^{d},\kappa,\Lambda}(\smchemicalpotential)}
V_{\kappa}
\\ 
&=
\napiernum^{-\imunit
\opreal
\bkt{\mathsf{m}_{\kappa,\Lambda}}{f}}
\opfockweyl_{\txtfock}(f)
\cdot
\napiernum^{-\sminvtemperature \physgse}
\napiernum^{-\sminvtemperature \physham_{\txtbsn,\txtfr,I_{L}^{d}}(\smchemicalpotential)}
\\ 
&=
\napiernum^{-\imunit
\opreal
\bkt{\mathsf{m}_{\kappa,\Lambda}}{f}}
\napiernum^{-\sminvtemperature \physgse}
\opfockweyl_{\txtfock}(f)
\napiernum^{-\sminvtemperature \physham_{\txtbsn,\txtfr,I_{L}^{d}}(\smchemicalpotential)}
\end{aligned}
\end{equation}
is obtained.

By the cyclicity of trace,
\begin{equation}
\begin{aligned}
&\sqfun{\trace}
{W}
=
\sqfun{\trace}
{\inv{V_{\kappa,\Lambda}}
W
V_{\kappa,\Lambda}}
\\ 
&=
\napiernum^{-\imunit
\opreal \bkt{\mathsf{m}_{\kappa,\Lambda}}{f}}
\napiernum^{-\sminvtemperature \physgse}
\sqfun{\trace}
{\opfockweyl_{\txtfock}(f)
\napiernum^{-\sminvtemperature \physham_{\txtbsn,\txtfr,I_{L}^{d}}(\smchemicalpotential)}}
\\ 
&=
\napiernum^{-\imunit
\opreal
\bkt{\mathsf{m}_{\kappa,\Lambda}}{f}}
\smgrandpartitionfunc_{\txtvanhowe,I_{L}^{d},\kappa,\Lambda,\sminvtemperature,\smchemicalpotential}
\cdot
\psi_{\txtgrandcanonical,\sminvtemperature,\smchemicalpotential}(\opfockweyl_{\txtfock}(f))
\end{aligned}
\end{equation}
is obtained.
Collecting the above argument,
\begin{equation}
\begin{aligned}
&\oastate[\psi_{\txtgrandcanonical,\txtvanhowe,I_{L}^{d},\kappa,\Lambda,\sminvtemperature,\smchemicalpotential}](\opfockweyl_{\txtfock}(f))
\\ 
&=
\napiernum^{-\imunit
\opreal
\bkt{\mathsf{m}_{\kappa,\Lambda}}{f}}
\fnexp{-\frac{1}{4}
\opform{q}_{\txtnonzero,\sminvtemperature,\smchemicalpotential}(f)}
\end{aligned}
\end{equation}
is obtained.
\end{proof}

We also prepare the two-point function for the Weyl operator.

\begin{prop}\label{expedition0011592}
For any $f,g
\in \sphilb{H}$,
\begin{equation}
\begin{aligned}
&\oastate[\psi_{\txtgrandcanonical,\txtvanhowe,I_{L}^{d},\kappa,\Lambda,\sminvtemperature,\smchemicalpotential}]
(\opfockweyl_{\txtfock}(f)
\opfockweyl_{\txtfock}(g))
\\ 
&=
\fnexp{-\frac{\imunit}{2}
\opimag
\bkt{f}{g}
-\imunit
\opreal
\bkt{\mathsf{m}_{\kappa,\Lambda}}{f+g}
-\frac{1}{4}
\opform{q}_{\txtnonzero,\sminvtemperature,\smchemicalpotential}(f+g)}
\end{aligned}
\end{equation}
holds.
\end{prop}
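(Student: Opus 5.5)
The quickest route uses the Weyl relation to collapse the product into a single Weyl operator, and then applies Proposition~\ref{expedition0011578}. By the Fock representation of the Weyl relations,
\[
\opfockweyl_{\txtfock}(f)\opfockweyl_{\txtfock}(g)
=
\napiernum^{-\frac{\imunit}{2}\opimag\bkt{f}{g}}\opfockweyl_{\txtfock}(f+g).
\]
The grand canonical state $\oastate[\psi_{\txtgrandcanonical,\txtvanhowe,I_{L}^{d},\kappa,\Lambda,\sminvtemperature,\smchemicalpotential}]$ is a linear functional on the bounded operators, so the scalar phase factors out. What remains is the one-point Weyl expectation at $f+g$. Proposition~\ref{expedition0011578} gives it as
\[
\fnexp{-\imunit\opreal\bkt{\mathsf{m}_{\kappa,\Lambda}}{f+g}-\tfrac14\opform{q}_{\txtnonzero,\sminvtemperature,\smchemicalpotential}(f+g)}.
\]
Multiplying by the phase produces exactly the claimed exponent.

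The individual steps are as follows. First, I will check that $f+g$ lies in the admissible domain. For the bounded system with cutoffs this domain is all of $\fun{\lp^{2}}{I_{L}^{d}}$, since $\mathsf{m}_{\kappa,\Lambda}$ is a bounded functional and $\opform{q}_{\txtnonzero,\sminvtemperature,\smchemicalpotential}$ is finite there (the spectrum is discrete and $\omega-\smchemicalpotential$ is bounded away from zero). The statement writes $\sphilb{H}$, which I read as $\sphilb{H}_L$. Second, I will invoke the Weyl relation from the definition of the Weyl algebra, with sign convention $\napiernum^{-\frac{\imunit}{2}\opimag\bkt{f}{g}}$. Third, I will apply Proposition~\ref{expedition0011578}.

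As a cross-check, and as an alternative if one prefers to redo the trace argument, I would repeat the method of Proposition~\ref{expedition0011578} directly. The product $W=\opfockweyl_{\txtfock}(f)\opfockweyl_{\txtfock}(g)\napiernum^{-\sminvtemperature\physham_{\txtvanhowe,I_{L}^{d},\kappa,\Lambda}(\smchemicalpotential)}$ is conjugated by $V_{\kappa,\Lambda}$. Using Fact~\ref{expedition0010960}, each Weyl operator picks up the phase $\napiernum^{-\imunit\opreal\bkt{\mathsf{m}_{\kappa,\Lambda}}{\cdot}}$, and by Proposition~\ref{expedition0012088} the thermal factor becomes $\napiernum^{-\sminvtemperature\physgse}\napiernum^{-\sminvtemperature\physham_{\txtbsn,\txtfr}(\smchemicalpotential)}$. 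One then uses the known free Bose gas two-point Weyl expectation
\[
\napiernum^{-\frac{\imunit}{2}\opimag\bkt{f}{g}}\napiernum^{-\frac14\opform{q}_{\txtnonzero,\sminvtemperature,\smchemicalpotential}(f+g)}.
\]
Both routes agree because the phases are additive in the argument.

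I do not expect a substantive obstacle. The only point needing care is the sign bookkeeping: the Weyl-relation phase must be checked against the convention fixed in the definition of the Weyl algebra, and the sign of $\opreal\bkt{\mathsf{m}_{\kappa,\Lambda}}{\cdot}$ must be checked against the convention inherited from Fact~\ref{expedition0010960} with $V_{\kappa,\Lambda}=\opfockweyl_{\txtfock}(\imunit\mathsf{m}_{\kappa,\Lambda})$. Both signs are already fixed consistently in Proposition~\ref{expedition0011578}, so reducing to it avoids redoing that check.
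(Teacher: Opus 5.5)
Your proposal is correct and is the paper's own proof: collapse the product with the Weyl relation $W_{\mathrm{F}}(f)W_{\mathrm{F}}(g)=e^{-\frac{i}{2}\operatorname{Im}\langle f,g\rangle}W_{\mathrm{F}}(f+g)$, then apply Proposition~\ref{expedition0011578} at $f+g$. Your trace-based cross-check via conjugation by $V_{\kappa,\Lambda}$ is consistent with this but not needed.
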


\begin{proof}
By the Weyl relation, $$\opfockweyl_{\txtfock}(f)
\opfockweyl_{\txtfock}(g)
=
\napiernum^{-\frac{\imunit}{2}
\opimag
\bkt{f}{g}}
\opfockweyl_{\txtfock}(f+g)$$ holds.
Applying Proposition \ref{expedition0011578},
\begin{equation}
\begin{aligned}
&\oastate[\psi_{\txtgrandcanonical,\txtvanhowe,I_{L}^{d},\kappa,\Lambda,\sminvtemperature,\smchemicalpotential}]
(\opfockweyl_{\txtfock}(f)
\opfockweyl_{\txtfock}(g))
\\ 
&=
\napiernum^{-\frac{\imunit}{2}
\opimag
\bkt{f}{g}}
\oastate[\psi_{\txtgrandcanonical,\txtvanhowe,I_{L}^{d},\kappa,\Lambda,\sminvtemperature,\smchemicalpotential}]
(\opfockweyl_{\txtfock}(f+g))
\\ 
&=
\napiernum^{-\frac{\imunit}{2}
\opimag
\bkt{f}{g}}
\napiernum^{-\imunit
\opreal
\bkt{\mathsf{m}_{\kappa,\Lambda}}{f+g}}
\napiernum^{-\frac{1}{4}
\opform{q}_{\txtnonzero,\sminvtemperature,\smchemicalpotential}(f+g)}
\end{aligned}
\end{equation}
is obtained.
\end{proof}

The following proposition shows that the grand canonical state for the van Hove model separates clearly into a term requiring infrared/ultraviolet divergence treatment and the grand canonical state for the free field.

\begin{prop}\label{expedition0011594}
Let $\oastate[\psi_{\txtgrandcanonical,\txtfr,I_{L}^{d},\sminvtemperature,\smchemicalpotential}]$ denote the grand canonical state for the free field.
For any $f
\in \sphilb{H}$, the grand canonical state $\oastate[\psi_{\txtgrandcanonical,\txtvanhowe,I_{L}^{d},\kappa,\Lambda,\sminvtemperature,\smchemicalpotential}]$ for the van Hove model satisfies
\begin{equation}
\begin{aligned}
&\oastate[\psi_{\txtgrandcanonical,\txtvanhowe,I_{L}^{d},\kappa,\Lambda,\sminvtemperature,\smchemicalpotential}]
(\opfockweyl_{\txtfock}(f))
=
\napiernum^{-\imunit
\opreal
\bkt{\mathsf{m}_{\kappa,\Lambda}}{f}}
\fun{\oastate[\psi_{\txtgrandcanonical,\txtfr,I_{L}^{d},\sminvtemperature,\smchemicalpotential}]}
{\opfockweyl_{\txtfock}(f)}
\end{aligned}
\end{equation}
\end{prop}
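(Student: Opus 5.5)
The plan is to compare Proposition~\ref{expedition0011578} with the corresponding formula for the free Bose gas. Proposition~\ref{expedition0011578} already gives
\[
\oastate[\psi_{\txtgrandcanonical,\txtvanhowe,I_{L}^{d},\kappa,\Lambda,\sminvtemperature,\smchemicalpotential}](\opfockweyl_{\txtfock}(f))
=
\napiernum^{-\imunit \opreal \bkt{\mathsf{m}_{\kappa,\Lambda}}{f}}
\fnexp{-\frac{1}{4}\opform{q}_{\txtnonzero,\sminvtemperature,\smchemicalpotential}(f)}.
\]
It therefore suffices to identify the Gaussian factor with the free grand canonical expectation:
\[
\oastate[\psi_{\txtgrandcanonical,\txtfr,I_{L}^{d},\sminvtemperature,\smchemicalpotential}](\opfockweyl_{\txtfock}(f))
=
\fnexp{-\tfrac{1}{4}\opform{q}_{\txtnonzero,\sminvtemperature,\smchemicalpotential}(f)}.
\]

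I will obtain this from the proof of Proposition~\ref{expedition0011578}, without citing it as a black box. In that proof the trace of $\opfockweyl_{\txtfock}(f)\napiernum^{-\sminvtemperature \physham_{\txtvanhowe,I_{L}^{d},\kappa,\Lambda}(\smchemicalpotential)}$ was rewritten, via conjugation by $V_{\kappa,\Lambda}$ and Proposition~\ref{expedition0012088}, as
\[
\napiernum^{-\imunit \opreal \bkt{\mathsf{m}_{\kappa,\Lambda}}{f}}\,
\napiernum^{-\sminvtemperature \physgse}\,
\sqfun{\trace}{\opfockweyl_{\txtfock}(f)\napiernum^{-\sminvtemperature \physham_{\txtbsn,\txtfr}(\smchemicalpotential)}}.
\]
Dividing by $\smgrandpartitionfunc_{\txtvanhowe,I_{L}^{d},\kappa,\Lambda,\sminvtemperature,\smchemicalpotential}=\napiernum^{-\sminvtemperature\physgse}\smgrandpartitionfunc_{\txtbsn,\txtfr,I_{L}^{d},\sminvtemperature,\smchemicalpotential}$ from Proposition~\ref{expedition0011287}, the factors $\napiernum^{-\sminvtemperature\physgse}$ cancel exactly. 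What remains is the phase times $\smgrandpartitionfunc_{\txtbsn,\txtfr,I_{L}^{d},\sminvtemperature,\smchemicalpotential}^{-1}\sqfun{\trace}{\opfockweyl_{\txtfock}(f)\napiernum^{-\sminvtemperature \physham_{\txtbsn,\txtfr}(\smchemicalpotential)}}$, and this quotient is by definition the free grand canonical state evaluated at $\opfockweyl_{\txtfock}(f)$. This argument gives the claimed factorization directly, without needing the explicit Gaussian form.

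The only point needing care is justifying the cyclicity step $\sqfun{\trace}{W}=\sqfun{\trace}{\inv{V_{\kappa,\Lambda}}WV_{\kappa,\Lambda}}$. Here $W$ is trace class because $\napiernum^{-\sminvtemperature\physham_{\txtvanhowe,I_{L}^{d},\kappa,\Lambda}(\smchemicalpotential)}$ is trace class by Proposition~\ref{expedition0011287} and the other factors are bounded. Conjugation by a unitary preserves the trace. The phase comes from Fact~\ref{expedition0010960} applied to the Weyl operators, as already done earlier. As an optional consistency check, I will recall the standard free-gas formula (e.g.\ \cite{AsaoArai28}) showing that the free expectation equals $\napiernum^{-\frac14\opform{q}_{\txtnonzero,\sminvtemperature,\smchemicalpotential}(f)}$, which matches Proposition~\ref{expedition0011578}. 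I expect no real obstacle; the main risk is keeping the signs in the phase and the normalization constants consistent.
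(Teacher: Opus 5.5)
Your proposal is correct and is essentially the paper's argument: the paper combines Proposition~\ref{expedition0011578} with the ideal-Bose-gas formula for the free expectation, $\fnexp{-\frac{1}{4}\opform{q}_{\txtnonzero,\sminvtemperature,\smchemicalpotential}(f)}$, whereas you read the factorization directly off the intermediate trace identity in the proof of Proposition~\ref{expedition0011578}. That identity comes from conjugation by $V_{\kappa,\Lambda}$, cyclicity of the trace, and cancelling $\napiernum^{-\sminvtemperature\physgse}$ via Proposition~\ref{expedition0011287}, so your version only streamlines the paper's argument: it no longer needs the explicit free-gas Gaussian formula, which becomes just the optional consistency check you mention.
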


\begin{proof}
By Proposition \ref{expedition0011578} and the result for the ideal Bose gas.
\end{proof}

By an argument analogous to Proposition \ref{expedition0011094}, \(\alpha_{\txtvanhowe,I_{L}^{d},\kappa,\Lambda}\) is also a one-parameter group of automorphisms.

We also prepare the expectation value of the resolvent for use in the resolvent algebra discussion.

\begin{prop}\label{expedition0011613}
The grand canonical mean for the resolvent $\oaresolvent_{\txtfock}(\lambda,f)
= \opfnresolvent{\imunit \lambda - \opfocksegal_{\txtfock}(f)}$ is
\begin{equation}
\begin{aligned}
&\fun{\oastate[\psi_{\txtgrandcanonical,\txtvanhowe,I_{L}^{d},\sminvtemperature,\kappa,\Lambda,\smchemicalpotential}]}
{\oaresolvent_{\txtfock}(\lambda,f)}
\\ 
&=
\imunit
\int_0^{(\sgn \lambda) \infty}
\fnexp{-\rbk{\lambda - \imunit \opreal \mathsf{m}_{\kappa,\Lambda}(f)} t
-\frac{t^2}{4} \opform{q}_{\txtnonzero,\sminvtemperature,\smchemicalpotential}(f)}
\opdmsr{t}
\end{aligned}
\end{equation}
\end{prop}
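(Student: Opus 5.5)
The plan is to reduce the resolvent to Weyl operators through the Laplace transform, exactly as in the proof of Proposition~\ref{expedition0011254}, and then to insert the closed form of the grand canonical mean of the Weyl operator from Proposition~\ref{expedition0011578}. For $\lambda \in \fldmultiplicativegroup{\fldreal}$ and self-adjoint $\opfocksegal_{\txtfock}(f)$, spectral calculus gives the strongly convergent Bochner integral
\[
\opfnresolvent{\imunit \lambda - \opfocksegal_{\txtfock}(f)}
=
c \int_0^{(\sgn \lambda)\infty}
\napiernum^{-\lambda t}
\opfockweyl_{\txtfock}(-tf)
\opdmsr{t},
\]
where $c = \pm \imunit$ is a unimodular constant. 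I would fix $c$ by checking the scalar identity $\int_0^{(\sgn\lambda)\infty} \napiernum^{-(\lambda + \imunit x)t}\opdmsr{t} = (\lambda+\imunit x)^{-1}$ for $x \in \fldreal$. This is the same convention used for the one-point function in Definition~\ref{expedition0011111} and in Proposition~\ref{expedition0011244}, and I would keep the overall sign consistent with those.

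The first step is to justify exchanging the state with the integral. The grand canonical state is normal: it is given by the trace against the trace-class density operator of Proposition~\ref{expedition0011287}. The integrand is norm-bounded by $\napiernum^{-\abs{\lambda} \abs{t}}$, which is integrable, and $t \mapsto \opfockweyl_{\txtfock}(-tf)$ is strongly continuous by regularity of the Fock representation. Hence the normal state commutes with the strong integral, by dominated convergence applied to $\sqfun{\trace}{\opdmat A}$.

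The second step is to evaluate the integrand by Proposition~\ref{expedition0011578} at the argument $-tf$. Real linearity of $\opreal \bkt{\mathsf{m}_{\kappa,\Lambda}}{\placeholder}$ and quadraticity of $\opform{q}_{\txtnonzero,\sminvtemperature,\smchemicalpotential}$ give
\[
\fnexp{\imunit t\,\opreal \mathsf{m}_{\kappa,\Lambda}(f) - \tfrac{t^2}{4}\,\opform{q}_{\txtnonzero,\sminvtemperature,\smchemicalpotential}(f)}.
\]
Combining this factor with $\napiernum^{-\lambda t}$ yields exactly the exponent $-\rbk{\lambda - \imunit \opreal \mathsf{m}_{\kappa,\Lambda}(f)}t - \frac{t^2}{4}\opform{q}_{\txtnonzero,\sminvtemperature,\smchemicalpotential}(f)$ in the statement, and the prefactor is the constant $c$ fixed above.

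I expect the main obstacle to be bookkeeping rather than analysis. The Laplace transform must carry the correct unimodular prefactor: with the paper's conventions it naturally comes out as $-\imunit$, matching Definition~\ref{expedition0011111}, so I would double-check that this agrees with the displayed $\imunit$ before finalizing. One must also confirm that the orientation of the integration path for $\lambda<0$ matches the orientation in the statement. The Gaussian factor in $t$ makes every integral absolutely convergent, so no further analytic issue arises.
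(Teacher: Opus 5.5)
Your route is the paper's route. You write the resolvent as a Laplace transform of $\opfockweyl_{\txtfock}(-tf)$ and evaluate the integrand with Proposition~\ref{expedition0011578}. Your justification for passing the trace state through the integral (normality via Proposition~\ref{expedition0011287}, the bound $\napiernum^{-\abs{\lambda}\abs{t}}$, strong continuity) is more careful than the paper's, which does this silently.

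The sign question you left open has to be resolved \emph{against} the displayed formula. Since $\imunit\lambda - x = \imunit(\lambda+\imunit x)$, and $\int_0^{(\sgn\lambda)\infty}\napiernum^{-(\lambda+\imunit x)t}\,dt = (\lambda+\imunit x)^{-1}$ holds for both signs of $\lambda$, the constant is forced to be $c=-\imunit$. The orientation for $\lambda<0$ contributes no extra sign, and no later step can flip it. The case $f=0$ settles it directly. There $\oaresolvent_{\txtfock}(\lambda,0)=-\frac{\imunit}{\lambda}\idone$, so the grand canonical state gives $-\imunit/\lambda$. The displayed right-hand side instead gives $\imunit\int_0^{(\sgn\lambda)\infty}\napiernum^{-\lambda t}\,dt=\imunit/\lambda$. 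The paper arrives at $+\imunit$ only because the first line of its proof, $\opfnresolvent{\imunit\lambda-\opfocksegal_{\txtfock}(f)}=\imunit\int_0^{(\sgn\lambda)\infty}\napiernum^{\imunit t(\imunit\lambda-\opfocksegal_{\txtfock}(f))}\,dt$, has the wrong sign.

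So do not try to keep the overall sign consistent with the statement; commit to $c=-\imunit$. Your argument then proves the identity with the prefactor $\imunit$ replaced by $-\imunit$. That version agrees with Definition~\ref{expedition0011111} and Proposition~\ref{expedition0011244}. The proposition as printed is off by an overall factor $-1$. The same slip carries into Proposition~\ref{expedition0011614}, whose prefactor should be $(-\imunit)^2=-1$.
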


\begin{proof}
By the Laplace transform, or by the argument in \cite{BuchholzGrundling2}, the resolvent can be expressed in terms of the Weyl operator as
\begin{equation}
\begin{aligned}
&\opfnresolvent{\imunit \lambda - \opfocksegal_{\txtfock}(f)}
=
\imunit
\int_0^{(\sgn \lambda) \infty}
\napiernum^{\imunit t(\imunit \lambda - \opfocksegal_{\txtfock}(f))}
\opdmsr{t}
\\ 
&=
\imunit
\int_0^{(\sgn \lambda) \infty}
\napiernum^{-\lambda t}
\napiernum^{-\imunit t \opfocksegal_{\txtfock}(f)}
\opdmsr{t}
=
\imunit
\int_0^{(\sgn \lambda) \infty}
\napiernum^{-\lambda t}
\opfockweyl_{\txtfock}(-tf)
\opdmsr{t}
\end{aligned}
\end{equation}
Applying Proposition \ref{expedition0011578},
\begin{equation}
\begin{aligned}
&\fun{\oastate[\psi_{\txtgrandcanonical,\txtvanhowe,I_{L}^{d},\sminvtemperature,\kappa,\Lambda,\smchemicalpotential}]}
{\oaresolvent_{\txtfock}(\lambda,f)}
=
\fun{\oastate[\psi_{\txtgrandcanonical,\txtvanhowe,I_{L}^{d},\sminvtemperature,\kappa,\Lambda,\smchemicalpotential}]}
{\opfnresolvent{\imunit \lambda - \opfocksegal_{\txtfock}(f)}}
\\ 
&=
\imunit
\int_0^{(\sgn \lambda) \infty}
\napiernum^{-\lambda t}
\fun{\oastate[\psi_{\txtgrandcanonical,\txtvanhowe,I_{L}^{d},\sminvtemperature,\kappa,\Lambda,\smchemicalpotential}]}
{\opfockweyl_{\txtfock}(-tf)}
\opdmsr{t}
\\ 
&=
\imunit
\int_0^{(\sgn \lambda) \infty}
\napiernum^{-\lambda t}
\napiernum^{-\imunit
\opreal
\bkt{\mathsf{m}_{\kappa,\Lambda}}{-tf}}
\fnexp{-\frac{t^2}{4}
\bkt{f}
{K_{\sminvtemperature,\smchemicalpotential}
f}}
\opdmsr{t}
\\ 
&=
\imunit
\int_0^{(\sgn \lambda) \infty}
\fnexp{-\rbk{\lambda - \imunit \opreal \mathsf{m}_{\kappa,\Lambda}(f)} t
-\frac{t^2}{4} \opform{q}_{\txtnonzero,\sminvtemperature,\smchemicalpotential}(f)}
\opdmsr{t}
\end{aligned}
\end{equation}
is obtained.
\end{proof}

Let us also prepare the two-point function of the resolvent.

\begin{prop}\label{expedition0011614}
For the resolvent $\oaresolvent_{\txtfock}(\lambda,f)
= \opfnresolvent{\imunit \lambda - \opfocksegal_{\txtfock}(f)}$,
\begin{equation}
\begin{aligned}
&\fun{\oastate[\psi_{\txtgrandcanonical,\txtvanhowe,I_{L}^{d},\sminvtemperature,\kappa,\Lambda,\smchemicalpotential}]}
{\oaresolvent_{\txtfock}(\lambda,f)
\oaresolvent_{\txtfock}(\mu,g)}
\\ 
&=
\imunit
\int_0^{(\sgn \lambda) \infty}
\int_0^{(\sgn \mu) \infty}
\napiernum^{-\lambda s - \mu t}
\mathsf{T}_{\sminvtemperature,\smchemicalpotential,\kappa,\Lambda}(s,f;t,g)
\opdmsr{s}
\opdmsr{t},
\\ 
&\mathsf{T}_{\sminvtemperature,\smchemicalpotential,\kappa,\Lambda}(s,f;t,g)
\\
&=
\fnexp{-\frac{1}{4}
\opform{q}_{\txtnonzero,\sminvtemperature,\smchemicalpotential}(sf+tg)
+\imunit \opreal \mathsf{m}_{\kappa,\Lambda}(sf+tg)
-\frac{\imunit}{2} st \opimag \bkt{f}{g}}
\end{aligned}
\end{equation}
holds.
\end{prop}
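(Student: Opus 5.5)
The plan is to mirror the proof of Proposition~\ref{expedition0011613}: write each resolvent as a Laplace transform of Weyl operators, reduce the product of two Weyl operators to a single one by the Weyl relation, and evaluate that with Proposition~\ref{expedition0011592}. For $\lambda,\mu \in \fldmultiplicativegroup{\fldreal}$, as in the proof of Proposition~\ref{expedition0011613}, I use the strongly convergent Bochner integrals
\[
\oaresolvent_{\txtfock}(\lambda,f)=\int_0^{(\sgn\lambda)\infty}\napiernum^{-\lambda s}\opfockweyl_{\txtfock}(-sf)\opdmsr{s},
\qquad
\oaresolvent_{\txtfock}(\mu,g)=\int_0^{(\sgn\mu)\infty}\napiernum^{-\mu t}\opfockweyl_{\txtfock}(-tg)\opdmsr{t},
\]
each up to the constant of the paper's normalization of the resolvent. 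I multiply them and write the product as an iterated integral of $\napiernum^{-\lambda s-\mu t}\opfockweyl_{\txtfock}(-sf)\opfockweyl_{\txtfock}(-tg)$. Since the integrand is bounded in norm by $\napiernum^{-\abs{\lambda}\abs{s}-\abs{\mu}\abs{t}}$, the double integral converges absolutely in operator norm, and Fubini's theorem for Bochner integrals applies.

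Next I apply the normal state $\oastate[\psi_{\txtgrandcanonical,\txtvanhowe,I_{L}^{d},\sminvtemperature,\kappa,\Lambda,\smchemicalpotential}]$, which is a trace against a trace-class density by Proposition~\ref{expedition0011287}. It is therefore norm continuous and can be taken inside the double integral. Proposition~\ref{expedition0011592}, applied with $-sf$ and $-tg$, then gives the integrand explicitly, and I simplify its exponent term by term:
\begin{itemize}
\item $\opimag\bkt{-sf}{-tg}=st\,\opimag\bkt{f}{g}$, which gives the factor $\napiernum^{-\frac{\imunit}{2}st\opimag\bkt{f}{g}}$;
\item $-\imunit\opreal\bkt{\mathsf{m}_{\kappa,\Lambda}}{-(sf+tg)}=+\imunit\opreal\mathsf{m}_{\kappa,\Lambda}(sf+tg)$;
\item $\opform{q}_{\txtnonzero,\sminvtemperature,\smchemicalpotential}(-(sf+tg))=\opform{q}_{\txtnonzero,\sminvtemperature,\smchemicalpotential}(sf+tg)$, because the form is quadratic.
\end{itemize}
Together these reproduce exactly $\mathsf{T}_{\sminvtemperature,\smchemicalpotential,\kappa,\Lambda}(s,f;t,g)$, so the state of the product becomes $\iint \napiernum^{-\lambda s-\mu t}\mathsf{T}_{\sminvtemperature,\smchemicalpotential,\kappa,\Lambda}(s,f;t,g)\opdmsr{s}\opdmsr{t}$ times a constant prefactor.

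The main obstacle is that prefactor, which the statement asserts to be $\imunit$. Each resolvent representation carries its own constant. In Proposition~\ref{expedition0011613} the one-point Laplace representation carries $\imunit$ and produces the stated one-point formula, whereas in Definition~\ref{expedition0011111} the analogous two-point object carries $-1$. I would fix the constant by keeping the paper's Laplace normalization of the resolvent for $\oaresolvent_{\txtfock}(\lambda,f)$ and $\oaresolvent_{\txtfock}(\mu,g)$ separately. I would then cross-check it in the degenerate case $g=0$, where the resolvent relation $\oaresolvent(\mu,0)=-\frac{\imunit}{\mu}\idone$ must reduce the formula to $-\frac{\imunit}{\mu}$ times the expression of Proposition~\ref{expedition0011613}. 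Matching these two reductions pins down the constant that the stated formula must carry.

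A secondary point is justifying the interchange of the trace with the integrals. The density is trace class and the integrand is norm bounded by an integrable function, so dominated convergence for Bochner integrals suffices. No infrared issue arises here, because $\kappa>0$ and the system is bounded, so $\mathsf{m}_{\kappa,\Lambda}\in\sphilb{H}_L$.
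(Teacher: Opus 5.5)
Your route is the paper's own. Its proof is the single line ``combine the product into a single term using the Weyl relation, then apply Proposition~\ref{expedition0011613}''. Invoking Proposition~\ref{expedition0011592} instead is the same step, since that proposition is just the Weyl relation combined with Proposition~\ref{expedition0011578}. Your term-by-term reduction of the exponent to $\mathsf{T}(s,f;t,g)$ is correct.

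The genuine gap is the prefactor: you leave it open, and it cannot be $\mathsf{i}$. Two facts fix it. First, $\mathsf{i}\lambda-\phi_{\mathrm{F}}(f)=\mathsf{i}\bigl(\lambda+\mathsf{i}\phi_{\mathrm{F}}(f)\bigr)$. Second, $\int_0^{(\operatorname{sgn}\lambda)\infty}e^{-\lambda s}W_{\mathrm{F}}(-sf)\,ds=\bigl(\lambda+\mathsf{i}\phi_{\mathrm{F}}(f)\bigr)^{-1}$ for either sign of $\lambda$. Together they give
\[
R_{\mathrm{F}}(\lambda,f)=-\mathsf{i}\int_0^{(\operatorname{sgn}\lambda)\infty}e^{-\lambda s}\,W_{\mathrm{F}}(-sf)\,ds ,
\]
so the product of two resolvents carries $(-\mathsf{i})^2=-1$. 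The $+\mathsf{i}$ written in the proof of Proposition~\ref{expedition0011613} is a sign slip, and even taken at face value it would give $\mathsf{i}^2=-1$, not $\mathsf{i}$.

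The case $f=g=0$ settles the matter. The left side is $(-\mathsf{i}/\lambda)(-\mathsf{i}/\mu)=-1/(\lambda\mu)$. The displayed right side is $\mathsf{i}\cdot\frac{1}{\lambda}\cdot\frac{1}{\mu}=\mathsf{i}/(\lambda\mu)$. Your own proposed $g=0$ cross-check exposes the same mismatch for every $f$. The resolvent relation forces the left side to be $-\frac{1}{\mu}J$ (or $+\frac{1}{\mu}J$ if one uses the paper's one-point normalization), while the stated formula gives $\frac{\mathsf{i}}{\mu}J$. Here $J=\int_0^{(\operatorname{sgn}\lambda)\infty}e^{-\lambda s}\mathsf{T}(s,f;0,0)\,ds\neq 0$, since it is a multiple of a Cauchy transform of a probability measure at $\mathsf{i}\lambda$.

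So your argument, carried out, proves the identity with prefactor $-1$. That is consistent with the zero-temperature two-point function in Definition~\ref{expedition0011111}. You should state this conclusion explicitly rather than defer to ``the constant the stated formula must carry''. As written with $\mathsf{i}$, the statement is false, and no choice of normalization rescues it.

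A minor technical point: $s\mapsto W_{\mathrm{F}}(-sf)$ is only strongly continuous, not norm continuous. Your integrals are therefore strong (vector-wise) integrals, not norm-Bochner integrals in the operator algebra. The interchange with the trace still works. Expand the trace in an eigenbasis of the trace-class density and apply dominated convergence with the bound $e^{-|\lambda s|-|\mu t|}$.
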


\begin{proof}
Combine the product into a single term using the Weyl relation, then apply Proposition \ref{expedition0011613}.
\end{proof}

\section{Weyl Algebra at Finite Temperature: Infinite System}\label{expedition0012092}

By Proposition \ref{expedition0011594}, Bose-Einstein condensation can essentially be handled within the free field discussion, so it suffices to consider only the case \(\smchemicalpotential
= 0\) where BEC occurs.

\subsection{Expectation Value Estimates}\label{expectation-value-estimates}

We first argue with infrared/ultraviolet cutoffs, then discuss the situation below the critical temperature in the limit where the cutoffs are removed.

\begin{thm}\label{expedition0011612}
The expectation value of the Weyl operator in the $\sminvtemperature$-KMS state $\oastate[\psi_{\txtvanhowe,\sminvtemperature,\kappa,\Lambda}]$ for the van Hove model with infrared/ultraviolet cutoffs can be written as $$\oastate[\psi_{\txtvanhowe,\sminvtemperature,\kappa,\Lambda}]
(\opfockweyl_{\txtfock}(f))
=
\fnexp{-\imunit
\opreal
\bkt{\mathsf{m}_{\kappa,\Lambda}}{f}
-\oneoverfour
\opform{q}_{\txtnonzero,\sminvtemperature}(f)
-\oneoverfour
\opform{q}_{0}(f)}$$
\end{thm}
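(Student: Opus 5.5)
The plan is to obtain the infinite-volume $\sminvtemperature$-KMS state as a thermodynamic limit of the bounded-system grand canonical states, and to reduce everything to the known free Bose gas result via Proposition~\ref{expedition0011594}. For fixed $L$, $\kappa$, $\Lambda$ and $\smchemicalpotential<0$, Proposition~\ref{expedition0011578} together with Proposition~\ref{expedition0011594} factorizes the grand canonical expectation of $\opfockweyl_{\txtfock}(f)$ into the phase $\napiernum^{-\imunit \opreal \bkt{\mathsf{m}_{\kappa,\Lambda}}{f}}$ and the free grand canonical expectation $\fnexp{-\frac14 \opform{q}_{\txtnonzero,\sminvtemperature,\smchemicalpotential}(f)}$. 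The phase depends neither on $L$, once $\mathsf{m}_{\kappa,\Lambda}$ is read as the restriction of the infrared-regular function to the box with $f$ supported inside, nor on $\smchemicalpotential$ or $\sminvtemperature$. Hence the whole limiting procedure acts only on the free factor.

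The key steps are as follows. First, fix the test functions in $\sphilb{D}_{0,\sminvtemperature}$, for example with compact support and $\faftr f$ continuous, so that $f$ lives in $\fun{\lp^{2}}{I_L^d}$ for all large $L$. Second, choose the chemical potential $\smchemicalpotential_L \uparrow 0$ so that the finite-volume density is fixed at $\smnumberdensity > \smnumberdensity_c(\sminvtemperature)$ (with $d=3$ and $s=1$ as in the BEC setting). Third, invoke the free Bose gas computation of \cite{AsaoArai28} (cf.\ \cite{YoshitsuguSekine004}). In that computation the $k=0$ mode of $\frac14\bkt{f}{K_{\sminvtemperature,\smchemicalpotential_L}f}_L$ contributes $\frac{1}{2}\cdot\frac{(2\pi)^d}{L^d}\smnumberdensity_{0,L}\abs{\faftr f(0)}^2\cdot L^d$-type terms, which converge to $\frac14\opform{q}_0(f)$. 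The remaining lattice sum is a Riemann sum that converges to $\frac14\opform{q}_{\txtnonzero,\sminvtemperature}(f)$. Multiplying by the untouched phase gives the stated formula.

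It remains to verify that the limit functional is indeed a $\sminvtemperature$-KMS state for $\alpha_{\txtvanhowe,\kappa,\Lambda}$. I would use the conjugation by $V_{\kappa,\Lambda}$, or equivalently the abstract map $\opfockweyl(f)\mapsto \napiernum^{-\imunit\opreal\mathsf{m}_{\kappa,\Lambda}(f)}\opfockweyl(f)$. This map is an automorphism of the Weyl algebra, since the phase is a real-linear character and therefore preserves the Weyl relations. It intertwines the free dynamics with $\alpha_{\txtvanhowe,\kappa,\Lambda}$: conjugating $\opfockweyl(\napiernum^{\imunit t\omega}f)$ yields exactly the cocycle $\mathsf{M}_{\kappa,\Lambda,t}(f)$. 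The KMS property of the free condensate state is then transported along this automorphism. Alternatively, the KMS condition is preserved under weak$^\ast$ limits of finite-volume Gibbs states whose dynamics converge in norm on generators; Proposition~\ref{expedition0011261}-type estimates give this convergence.

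The main obstacle is the third step, namely the separation of the condensate term in the thermodynamic limit. It requires control of $\frac{1}{L^d}\frac{1}{\napiernum^{-\sminvtemperature\smchemicalpotential_L}-1}\to\smnumberdensity_0(\sminvtemperature)$ and uniform convergence of the Riemann sums near $k=0$, where $\coth$ blows up like $2/(\sminvtemperature\abs{k})$. For this I plan to split the sum into $k=0$, $0<\abs{k}<\delta$, and $\abs{k}\ge\delta$. The middle region is bounded by $\int_{\abs k<\delta}\abs{k}^{-1}\abs{\faftr f}^2\opdmsr{k}$, which is small for $f\in\sphilb{D}_{\sminvtemperature}$. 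For this part I will simply cite \cite{AsaoArai28}, since the van Hove phase plays no role in it.
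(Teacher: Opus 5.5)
Your proposal follows the paper's proof, which consists only of applying the free Bose gas infinite-volume limit of \cite{YoshitsuguSekine004,AsaoArai28} to the factorization in Proposition~\ref{expedition0011594}. One caveat: that factorization comes from Proposition~\ref{expedition0012088}, which is exact only at zero chemical potential. For $\smchemicalpotential<0$ the exact Bogoliubov shift is $\omega(\omega-\smchemicalpotential)^{-1}\mathsf{m}_{\kappa,\Lambda}$, so the phase does depend on $\smchemicalpotential$. Since $\kappa>0$, it still converges to $\napiernum^{-\imunit\opreal\bkt{\mathsf{m}_{\kappa,\Lambda}}{f}}$ in the thermodynamic limit, so the conclusion is unaffected.

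Your extra KMS check, transporting the free condensate state along the automorphism $\opfockweyl(f)\mapsto\napiernum^{-\imunit\opreal\mathsf{m}_{\kappa,\Lambda}(f)}\opfockweyl(f)$, is correct and goes beyond what the paper writes here. The norm-limit alternative you mention would not work as stated for Weyl generators, because $\norm{\opfockweyl(g)-\opfockweyl(h)}=2$ whenever $g\neq h$.
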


\begin{rem}
The difference from the ideal Bose gas is precisely the term $\napiernum^{-\imunit
\opreal
\bkt{\mathsf{m}_{\kappa,\Lambda}}{f}}$ arising from the addition of the field interaction.
\end{rem}

\begin{proof}
Apply the infinite-volume limit discussed in \cite{YoshitsuguSekine004,AsaoArai28} to Proposition \ref{expedition0011594}.
\end{proof}

Next, we remove the infrared/ultraviolet cutoffs. We note one point regarding how to take the limit. Although we have been working on Fock space up to this point, from the representation-theoretic viewpoint we should regard the discussion as taking place in the Fock representation of the abstract Weyl algebra \(\oaweyl(\sphilb{D}_{\txtirsingular,\sminvtemperature})\). In the discussion of removing the cutoffs, the state \(\oastate[\psi_{\txtvanhowe,\sminvtemperature,\kappa,\Lambda}]\) should be regarded as the state \(\oastate[\tilde{\psi}_{\txtvanhowe,\sminvtemperature,\kappa,\Lambda}]
\circ \oarepn_{\txtfock}\) on \(\oaweyl(\sphilb{D}_{\txtirsingular,\sminvtemperature})\) composed with the Fock representation \(\oarepn_{\txtfock}\), and the limit should be taken for \(\oastate[\tilde{\psi}_{\txtvanhowe,\sminvtemperature,\kappa,\Lambda}]\) in the state space over \(\oaweyl(\sphilb{D}_{\txtirsingular,\sminvtemperature})\). Here, for brevity of notation, we write \(\oastate[\tilde{\psi}_{\txtvanhowe,\sminvtemperature,\kappa,\Lambda}]\) simply as \(\oastate[\psi_{\txtvanhowe,\sminvtemperature,\kappa,\Lambda}]\).

\begin{thm}\label{expedition0011637}
\begin{enumerate}
\item
Convergence of the automorphism group: For any $A
\in \oaweyl_{\txtirsingular,\sminvtemperature}$ and $t
\in \fldreal$, $$\alpha_{\txtvanhowe,t}(A)
=
\lim_{\kappa \to 0, \Lambda \to \infty}
\alpha_{\txtvanhowe,\kappa,\Lambda,t}(A)$$ exists in the norm topology, and the family $\fml{\alpha_{\txtvanhowe,t}}{t \in \fldreal}$ defines a one-parameter group of automorphisms on $\oaweyl_{\txtirsingular,\sminvtemperature}$.

\item
Convergence of KMS states: For any $A
\in \oaweyl_{\txtirsingular,\sminvtemperature}$, there exists a state $\oastate[\psi_{\txtvanhowe,\sminvtemperature}]$ on $\oaweyl_{\txtirsingular,\sminvtemperature}$ satisfying $$\oastate[\psi_{\txtvanhowe,\sminvtemperature}](A)
=
\lim_{\kappa \to 0,\Lambda \to \infty}
\oastate[\psi_{\txtvanhowe,\sminvtemperature,\kappa,\Lambda}](A).$$
In particular, for the Weyl operator, $$\oastate[\psi_{\txtvanhowe,\sminvtemperature}]
(\opfockweyl(f))
=
\lim_{\kappa \to 0,\Lambda \to \infty}
\oastate[\psi_{\txtvanhowe,\sminvtemperature,\kappa,\Lambda}]
(\opfockweyl(f)),
\quad
f \in \sphilb{D}_{\txtirsingular,\sminvtemperature}$$ holds, and the right-hand side is obtained as the natural limit of the expression in Theorem \ref{expedition0011612}.
In particular, the expectation value for the Weyl operator is $$\oastate[\psi_{\txtvanhowe,\sminvtemperature}]
(\opfockweyl(f))
=
\fnexp{-\imunit
\opreal
\bkt{\mathsf{m}}{f}
-\oneoverfour
\opform{q}_{\txtnonzero,\sminvtemperature}(f)
-\oneoverfour
\opform{q}_{0}(f)}$$

\item
The state $\oastate[\psi_{\txtvanhowe,\sminvtemperature}]$ is a $\sminvtemperature$-KMS state with respect to the automorphism group $\alpha_{\txtvanhowe}$.
\end{enumerate}
\end{thm}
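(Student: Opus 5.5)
The plan is to reduce everything to the Weyl generators $\opfockweyl(f)$ with $f \in \sphilb{D}_{\txtirsingular,\sminvtemperature}$. Their linear span is dense in $\oaweyl_{\txtirsingular,\sminvtemperature}$, and all maps involved (automorphisms, states) are contractive. Convergence on generators therefore extends to the whole algebra by an $\ep/3$ argument.

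For part (1), I would compare the two actions on a generator directly:
\[
\norm{\alpha_{\txtvanhowe,\kappa,\Lambda,t}(\opfockweyl(f))-\alpha_{\txtvanhowe,t}(\opfockweyl(f))}
=\abs{\napiernum^{\imunit \mathsf{M}_{\kappa,\Lambda,t}(f)}-\napiernum^{\imunit \mathsf{M}_{t}(f)}}
\leq \abs{\mathsf{M}_{\kappa,\Lambda,t}(f)-\mathsf{M}_{t}(f)} .
\]
The right-hand side tends to $0$ by the same dominated convergence estimate as in Proposition~\ref{expedition0011261}. That estimate uses $f,\napiernum^{\imunit t\omega}f \in \dom \mathsf{m}$ and controls the tails over $\abs{k}\leq\kappa$ and $\abs{k}\geq\Lambda$. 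For a polynomial $A$ in the generators, the limit follows by multiplicativity and the triangle inequality. For general $A$, since each $\alpha_{\txtvanhowe,\kappa,\Lambda,t}$ is isometric, the net $\alpha_{\txtvanhowe,\kappa,\Lambda,t}(A)$ is Cauchy, so the limit exists. The limit map is a $\ast$-homomorphism, being a limit of such maps. Its group law comes from the cocycle identity for $\mathsf{M}_t$ (Proposition~\ref{expedition0011238}), and it has the inverse $\alpha_{\txtvanhowe,-t}$. It preserves $\oaweyl_{\txtirsingular,\sminvtemperature}$ because $\napiernum^{\imunit t\omega}$ leaves both $\dom \mathsf{m}$ and $\sphilb{D}_{\sminvtemperature}$ invariant.

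For part (2), I would start from Theorem~\ref{expedition0011612}. Only the phase depends on the cutoffs, and $\opreal\bkt{\mathsf{m}_{\kappa,\Lambda}}{f}\to\opreal\bkt{\mathsf{m}}{f}$ for $f\in\dom \mathsf{m}$ by dominated convergence. So the generator values converge to the displayed Gaussian expression. The states form a uniformly bounded net, so convergence on the dense span gives convergence for every $A$. The limit functional is positive and normalized, hence a state. Equivalently, I would note that the limit expression is a positive definite functional of quasi-free form with a shifted mean, which defines a state directly.

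Part (3) is the main obstacle. The KMS condition must survive the double limit, and I would use the closedness of the KMS condition under simultaneous convergence of states and dynamics (Bratteli--Robinson, Prop.~5.3.25). That proposition requires $\psi_{\kappa,\Lambda}\to\psi$ weak$^\ast$ and $\alpha^{\kappa,\Lambda}_t(A)\to\alpha_t(A)$ in norm. The second requirement holds on generators uniformly for $\abs{t}\leq T$, by the uniform estimate of Proposition~\ref{expedition0011261}.

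The prerequisite is that the cutoff states $\psi_{\txtvanhowe,\sminvtemperature,\kappa,\Lambda}$ are themselves $(\alpha_{\txtvanhowe,\kappa,\Lambda},\sminvtemperature)$-KMS. I would justify this as the infinite-volume limit of Gibbs states, using Proposition~\ref{expedition0012088} together with the bounded-system results of Section~\ref{expedition0011587}. If the cited closedness result is awkward to apply, I would instead verify KMS directly on pairs of Weyl operators. The two-point function $\psi(\opfockweyl(f)\alpha_t(\opfockweyl(g)))$ is explicit: it is the free-field KMS function multiplied by the phase $\napiernum^{\imunit\mathsf{M}_t(g)}\napiernum^{-\imunit\opreal\mathsf{m}(f+\napiernum^{\imunit t\omega}g)}$, which collapses to the $t$-independent $\napiernum^{-\imunit\opreal\mathsf{m}(f+g)}$. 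Hence the van Hove KMS function is the free one times a constant. The analytic continuation to the strip $0<\opimag t<\sminvtemperature$ and the boundary relation then follow from the known KMS property of the free BEC state (Araki--Woods type, \cite{AsaoArai28,YoshitsuguSekine004}). The condensate term $\opform{q}_0$ is invariant under $\napiernum^{\imunit t\omega}$ since $\omega(0)=0$, so it causes no trouble. The relation extends from generators to the whole algebra by linearity and density.
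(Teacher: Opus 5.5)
Your arguments for (1) and (2) are the paper's argument made explicit. The cutoffs enter only through the phases $\mathsf{M}_{\kappa,\Lambda,t}(f)$ and $\operatorname{Re}\mathsf{m}_{\kappa,\Lambda}(f)$. These converge for $f\in\dom\mathsf{m}$ by the tail estimate of Proposition~\ref{expedition0011261}, uniformly in $t$ since $|\widehat{e^{\mathsf{i}t\omega}f}|=|\hat f|$. Density together with isometry of the automorphisms, resp.\ contractivity of the states, finishes.

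For (3), the paper only says that the bounded-system KMS relations ``are algebraic and are preserved in the limit''. That is your primary route, and as you state it, it is not airtight.
\begin{itemize}
\item The Bratteli--Robinson stability result is formulated for strongly continuous groups. On the Weyl algebra, however, $\norm{\alpha_{\mathrm{vH},t}(W(f))-W(f)}=2$ whenever $e^{\mathsf{i}t\omega}f\neq f$. You would have to redo its proof in the $F_{A,B}$-function form: bounded boundary values, dominated convergence, and continuity in $t$ from regularity of the quasi-free states.
\item It presupposes that the cutoff states are KMS. The paper only imports this from an infinite-volume limit followed by $\mu\uparrow 0$.
\end{itemize}

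Your fallback is a genuinely different and stronger argument, and it is correct. The phase cancellation you compute says exactly the following. Set $\gamma(W(f))=e^{\mathsf{i}\operatorname{Re}\mathsf{m}(f)}W(f)$ and $\alpha^{\mathrm{fr}}_t(W(f))=W(e^{\mathsf{i}t\omega}f)$. Then $\gamma$ is an automorphism of $\mathcal{W}_{\mathrm{irs},\beta}$, and
\[
\alpha_{\mathrm{vH},t}=\gamma\circ\alpha^{\mathrm{fr}}_t\circ\gamma^{-1},
\qquad
\psi_{\mathrm{vH},\beta}=\psi^{\mathrm{fr}}_{\beta}\circ\gamma^{-1}.
\]
This is the abstract counterpart of conjugation by $V_{\kappa,\Lambda}$. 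It survives cutoff removal even though $V_{\kappa,\Lambda}$ is then no longer unitarily implementable on Fock space.

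Since the KMS property is preserved under such conjugation, (3) follows directly from the KMS property of the free BEC state. The condensate term causes no difficulty because $\omega(0)=0$. This route needs no limit, no continuity hypothesis, and no prior knowledge that the cutoff states are KMS. Indeed, the same argument with $\mathsf{m}_{\kappa,\Lambda}$ in place of $\mathsf{m}$ supplies that prerequisite as well. The one thing the paper's limit route would add, approximation of the cutoff-free KMS state by cutoff KMS states, you already obtain in (2).
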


\begin{proof}
For the limit removing the infrared/ultraviolet cutoffs, it suffices to consider the limit $\kappa, \Lambda$ of the relevant objects.
The question then reduces to showing that these objects converge to appropriate limits.

(Convergence of the automorphism group): By Proposition \ref{expedition0011094} and the remark immediately following it, the limiting automorphism map is defined for each $t$.
The removal limit only concerns $\mathsf{m}_{\kappa,\Lambda}$, and convergence in norm is clear.

(Convergence of the KMS state): The cutoffs act only on the state functional.

It remains to discuss convergence on $\opfockweyl(f)$ for $f
\in \sphilb{D}_{\txtirsingular,\sminvtemperature}$, which is clear.
The KMS property of the limit follows from the fact that the defining relations from the bounded system are algebraic and are preserved in the limit.
\end{proof}

\subsection{Bose-Einstein Condensation}\label{bose-einstein-condensation}

By Theorem \ref{expedition0011637}, the sesquilinear form \(\opform{q}_{0}\) appears in the expectation value. This is the ingredient describing BEC for the free field. BEC for the free field is discussed in the textbook \cite{AsaoArai28} or in \cite{YoshitsuguSekine004}, and an no-go theorem for BEC of quasiparticles is discussed in \cite{YoshitsuguSekine006}. We do not discuss these further here.

\section{Resolvent Algebra at Finite Temperature: Equilibrium State}\label{expedition0012002}

The resolvent algebra is defined in Section \ref{expedition0012083}. We begin here by defining the van Hove model on the resolvent algebra. We then appropriately transplant the bounded-system finite-temperature setting from Section \ref{expedition0011587}, which discussed the Weyl algebra case, and investigate the ideal structure of the resolvent algebra following the approach of \cite{DetlevBuchholz001,YoshitsuguSekine004,YoshitsuguSekine005}. Each object can be defined based on the Weyl algebra discussion.

For any \(t
\in \fldreal\), define the automorphisms of the abstract resolvent algebra as \begin{equation}
\begin{aligned}
\alpha_{1,\kappa,\Lambda,t}(\oaresolvent(z,f))
&=
\fun{\oaresolvent}{z + \imunit \mathsf{M}_{\kappa,\Lambda,t}(f), f}, \\
\alpha_{2,t}(\oaresolvent(z,f))
&=
\fun{\oaresolvent}{z, \napiernum^{\imunit t \omega} f}, \\
\alpha_{\txtvanhowe,\kappa,\Lambda,t}(\oaresolvent(z,f))
&=
\funrbk{\alpha_{1,\kappa,\Lambda,t} \circ \alpha_{2,t}}{\oaresolvent(z,f)} \\
&=
\fun{\oaresolvent}{z + \imunit \mathsf{M}_{\kappa,\Lambda,t}(f), \napiernum^{\imunit t \omega} f}
\end{aligned}
\end{equation} and call this the automorphism group of the van Hove model with infrared/ultraviolet cutoffs, or simply the automorphism group of the cutoff van Hove model or the van Hove model.

The cutoff-free versions \(\alpha_{\txtvanhowe,t}\) and \(\alpha_{1,t}\), arising from the domain of the functional \(\mathsf{m}\), are defined as automorphisms of the abstract resolvent algebra \(\oaresolventalgebra(\dom \mathsf{m}, \sigma)\) by \begin{equation}
\begin{aligned}
\alpha_{1,t}(\oaresolvent(z,f))
&=
\fun{\oaresolvent}{z + \imunit \mathsf{M}_{t}(f), f}, \\
\alpha_{\txtvanhowe,t}(\oaresolvent(z,f))
&=
\funrbk{\alpha_{1,t} \circ \alpha_{2,t}}{\oaresolvent(z,f)} \\
&=
\fun{\oaresolvent}{z + \imunit \mathsf{M}_{t}(f), \napiernum^{\imunit t \omega} f}
\end{aligned}
\end{equation} and this is called the automorphism group of the van Hove model, or the cutoff-free automorphism group of the van Hove model.

\begin{prop}\label{expedition0012084}
The above $\alpha_{\txtvanhowe,\kappa,\Lambda}$ is a group of automorphisms.
In particular, restricting the domain to $\oaresolventalgebra(\dom \mathsf{m},\sigma)$, $\alpha_{\txtvanhowe,\kappa,\Lambda}$ remains a group of automorphisms on this algebra.
With an appropriate restriction of the domain, $\alpha_{\txtvanhowe}$ is a group of automorphisms on the restricted subalgebra.
\end{prop}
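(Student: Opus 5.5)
The plan mirrors Proposition~\ref{expedition0011109}, which is the same statement in the zero-temperature section. I would define each $\alpha_{\txtvanhowe,\kappa,\Lambda,t}$ on generators and show that it preserves every resolvent relation. Then I extend it to $\oaresolventalgebra_0$ by universality and to the $\oacstar$-completion by continuity. Finally I check the group law.

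First, the candidate map must send generators to legitimate generators. The shift $z \mapsto z + \imunit \mathsf{M}_{\kappa,\Lambda,t}(f)$ uses a real number $\mathsf{M}_{\kappa,\Lambda,t}(f)$, so it moves $z$ only in the imaginary direction. Hence $\opreal z \neq 0$ is preserved, and $z \in \fldcmp \setminus \imunit\fldreal$ stays in the admissible region of the complex-extended resolvent relations.

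Next I check the relations one by one.
\begin{enumerate}
\item $\oaresolvent(z,0)$: since $\mathsf{M}_{\kappa,\Lambda,t}(0)=0$, this relation is untouched.
\item Adjoint: use that $\mathsf{M}$ is real, so $-\cmpconj{z + \imunit \mathsf{M}} = -\cmpconj{z} + \imunit\mathsf{M}$.
\item Scaling $\nu\oaresolvent(\nu z,\nu f)$: use real-linearity, $\mathsf{M}(\nu f) = \nu \mathsf{M}(f)$.
\item The first-resolvent identity (same $f$): both $z$ and $w$ receive the same shift, so $w-z$ is unchanged.
\item Commutator relation: $\napiernum^{\imunit t\omega}$ is unitary, so $\sigma(\napiernum^{\imunit t\omega}f,\napiernum^{\imunit t\omega}g)=\sigma(f,g)$.
\item Product relation $\oaresolvent(z,f)\oaresolvent(w,g)=\oaresolvent(z+w,f+g)(\cdots)$: additivity gives $\mathsf{M}(f)+\mathsf{M}(g)=\mathsf{M}(f+g)$. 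Thus the shifted first arguments add up to $z+w+\imunit\mathsf{M}(f+g)$, which is exactly the image of $\oaresolvent(z+w,f+g)$.
\end{enumerate}

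Universality of $\oaresolventalgebra_0$ then yields a $\ast$-homomorphism. For the group law, I compute on generators
\[
\alpha_{\txtvanhowe,\kappa,\Lambda,t}\circ\alpha_{\txtvanhowe,\kappa,\Lambda,s}(\oaresolvent(z,f)) = \oaresolvent\bigl(z+\imunit(\mathsf{M}_{\kappa,\Lambda,s}(f)+\mathsf{M}_{\kappa,\Lambda,t}(\napiernum^{\imunit s\omega}f)),\napiernum^{\imunit(t+s)\omega}f\bigr).
\]
By the cocycle identity of Proposition~\ref{expedition0011238}, this equals $\alpha_{\txtvanhowe,\kappa,\Lambda,t+s}(\oaresolvent(z,f))$. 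Since $\alpha_{\txtvanhowe,\kappa,\Lambda,0}=\id$, the map $\alpha_{-t}$ is a two-sided inverse, so each $\alpha_t$ is bijective.

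The main obstacle is passing from $\oaresolventalgebra_0$ to the $\oacstar$-completion. For this I would use that the resolvent-algebra norm of \cite{BuchholzGrundling2} is the universal $\oacstar$-seminorm over representations satisfying the relations. If $\pi$ is such a representation, then $\pi\circ\alpha_t$ is another one. Hence $\norm{\alpha_t(A)}\le\norm{A}$, and applying the same bound to $\alpha_{-t}$ gives equality, so $\alpha_t$ is isometric and extends to an automorphism. If that universality is not directly available, I would instead use Proposition~\ref{expedition0011838}(3): compose with a faithful regular representation. Regularity is preserved because a real shift of the spectral parameter keeps the kernel of $\pi(\oaresolvent(1,f))$ trivial.

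For the restricted statements, I note that $\dom\mathsf{m}$ is invariant under $\napiernum^{\imunit t\omega}$, since $\abs{\napiernum^{\imunit t\omega}\faftr f}=\abs{\faftr f}$. So the generators of $\oaresolventalgebra(\dom\mathsf{m},\sigma)$ are mapped to generators of the same algebra. The cutoff-free $\mathsf{M}_t$ is well defined and real-additive on $\dom\mathsf{m}$, so the identical argument applies to $\alpha_{\txtvanhowe}$ on that subalgebra.
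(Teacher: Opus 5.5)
Your proposal is correct and follows the same route as the paper. Like the paper, you verify directly that the shifted generators $\oaresolvent(z+\imunit\mathsf{M}_{\kappa,\Lambda,t}(f),\napiernum^{\imunit t\omega}f)$ satisfy the resolvent relations, use the invariance of $\dom\mathsf{m}$ under $\napiernum^{\imunit t\omega}$, and get the group law from the cocycle identity of Proposition~\ref{expedition0011238}. Your version is more complete than the paper's sketch in two places. First, you observe that the imaginary shift leaves $\opreal z\neq 0$ unchanged, so admissibility is automatic, whereas the paper simply assumes the first argument stays admissible. Second, you justify the passage from $\oaresolventalgebra_0$ to the $C^{*}$-completion via universality of the Buchholz--Grundling norm, a step the paper leaves implicit.
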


\begin{proof}
Since the domain restriction does not affect the computation itself, we argue using notation with cutoffs.
By Proposition \ref{expedition0011238}, the auxiliary functional $\mathsf{M}_{\kappa,\Lambda,t}$ is a cocycle.

(Automorphism property): Under the assumption that the first variable of the resolvent function does not become $0$, verify the preservation of the resolvent relations by direct computation.

(Restricted automorphism property): Choose any $f
\in \dom \mathsf{m}$.
Then $$\alpha_{\txtvanhowe,\kappa,\Lambda,t}(\oaresolvent(z,f))
=
\oaresolvent(z + \imunit \mathsf{M}_{\kappa,\Lambda}(f), \napiernum^{\imunit t \omega} f)$$ is well-defined.
In particular, $\napiernum^{\imunit t \omega} f$ satisfies $\abs{\napiernum^{\imunit t \omega} f}
= \abs{f}$, so integrability is preserved and $\dom \mathsf{m}$ is invariant.
Therefore $\fnrestr{\alpha_{\txtvanhowe,\kappa,\Lambda}}{\oaresolventalgebra(\dom \mathsf{m},\sigma)}$ is a group of automorphisms of $\oaresolventalgebra(\dom \mathsf{m},\sigma)$.

(Group property): Use the cocycle property of the auxiliary functional $\mathsf{M}_{\kappa,\Lambda,t}$ and compute directly.
\end{proof}

Based on Proposition \ref{expedition0011614}, define the state \(\oastate[\psi_{\txtvanhowe,I_{L}^{d},\sminvtemperature,\smchemicalpotential}]\) as the quasi-free state satisfying, in the bounded system, the two-point function \begin{equation}
\begin{aligned}
&\fun{\oastate[\psi_{\txtvanhowe,I_{L}^{d},\sminvtemperature,\smchemicalpotential}]}
{\oaresolvent(\lambda,f)
\oaresolvent(\mu,g)}
\\ 
&=
\imunit
\int_0^{(\sgn \lambda) \infty}
\int_0^{(\sgn \mu) \infty}
\napiernum^{-\lambda s - \mu t}
\mathsf{T}_{\sminvtemperature,\smchemicalpotential,\kappa,\Lambda}(s,f;t,g)
\opdmsr{s}
\opdmsr{t},
\\ 
&\mathsf{T}_{\sminvtemperature,\smchemicalpotential,\kappa,\Lambda}(s,f;t,g)
\\
&=
\fnexp{-\frac{\opform{q}_{\txtnonzero,\sminvtemperature,\smchemicalpotential}(sf+tg)}{4}
+\imunit \opreal \mathsf{m}_{\kappa,\Lambda}(sf+tg)
-\frac{\imunit}{2} st \opimag \bkt{f}{g}}
\end{aligned}
\end{equation}

\begin{prop}\label{expedition0011634}
The state $\oastate[\psi_{\txtvanhowe,I_{L}^{d},\sminvtemperature,\smchemicalpotential}]$ is a $\sminvtemperature$-KMS state with respect to the automorphism group $\alpha_{\txtvanhowe,I_{L}^{d},\kappa,\Lambda,\smchemicalpotential}$ of Definition \ref{expedition0011278}.
\end{prop}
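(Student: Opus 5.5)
The strategy is to transport the KMS property from the Fock-space Gibbs state in the bounded system to the abstract resolvent algebra via the Laplace-transform correspondence between Weyl operators and resolvents. In the bounded system with $\smchemicalpotential$ and cutoffs, the thermal operator $\napiernum^{-\sminvtemperature \physham_{\txtvanhowe,I_{L}^{d},\kappa,\Lambda}(\smchemicalpotential)}$ is trace class (Proposition~\ref{expedition0011287}). So the grand canonical state $\oastate[\psi_{\txtgrandcanonical,\txtvanhowe,I_{L}^{d},\kappa,\Lambda,\sminvtemperature,\smchemicalpotential}]$ on $\opspbddlin{\fun{\spfock_{\txtbsn}}{\fun{\lp^{2}}{I_{L}^{d}}}}$ is a $\sminvtemperature$-KMS state for $\Ad U_{I_{L}^{d},\kappa,\Lambda,\smchemicalpotential,t}$. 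This is the standard Gibbs-state result in \cite{BratteliRobinson2}: for bounded $A,B$, the function $t\mapsto \trace(\rho A \napiernum^{\imunit t H} B \napiernum^{-\imunit t H})$ continues analytically to the strip $0<\opimag z<\sminvtemperature$, with the boundary value $\trace(\rho B A)$, using cyclicity of the trace.

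The first step is to identify the abstract objects with their Fock counterparts. I would use the Fock resolvents $\oaresolvent_{\txtfock}(\lambda,f)=\opfnresolvent{\imunit\lambda-\opfocksegal_{\txtfock}(f)}$, which define a regular representation of $\oaresolventalgebra(\fun{\lp^{2}}{I_{L}^{d}},\sigma)$. This representation is faithful by Proposition~\ref{expedition0011838}(3).
\begin{itemize}
\item The Fock implementation of the dynamics acts on resolvents exactly as $\alpha_{\txtvanhowe,I_{L}^{d},\kappa,\Lambda,\smchemicalpotential}$ does. This is Proposition~\ref{expedition0011254}, transplanted to $I_{L}^{d}$ with $\omega$ replaced by $\omega-\smchemicalpotential$ in the free part. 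The transplant is possible because $\napiernum^{\imunit t\physham_{\txtvanhowe}}$ acts on Weyl operators via $V_{\kappa,\Lambda}$ and Proposition~\ref{expedition0012088}, just as in Proposition~\ref{expedition0011107}.
\item The one- and two-point values of the abstract state coincide with the Gibbs values. This is Propositions~\ref{expedition0011613} and~\ref{expedition0011614}.
\end{itemize}
Both states are quasi-free, since the Gibbs state is the pullback of a quasi-free free-field state under $V_{\kappa,\Lambda}$ (Proposition~\ref{expedition0011594}). They therefore agree on all finite products, hence on $\oaresolventalgebra_{\txtfin}$, and by continuity on the whole resolvent algebra. So $\oastate[\psi_{\txtvanhowe,I_{L}^{d},\sminvtemperature,\smchemicalpotential}]=\oastate[\psi_{\txtgrandcanonical,\txtvanhowe,\dots}]\circ\repn_{\txtfock}$.

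The KMS condition then pulls back. For $A,B$ in the resolvent algebra, the function $F_{A,B}(t)=\oastate[\psi](A\,\alpha_t(B))$ equals the Fock Gibbs function of $\repn_{\txtfock}(A)$ and $\repn_{\txtfock}(B)$. That function has the required analytic continuation, the bound $\norm{A}\norm{B}$ on the strip, and the boundary value $\oastate[\psi](\alpha_t(B)A)$ at $\opimag z=\sminvtemperature$.

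As an independent check, and to make the argument less dependent on the Fock identification, I would also verify KMS directly on generators via the explicit Gaussian formula. First write $\mathsf{T}_{\sminvtemperature,\smchemicalpotential,\kappa,\Lambda}(s,f;t,\napiernum^{\imunit\tau(\omega-\smchemicalpotential)}g)$ combined with the phase $\napiernum^{-\imunit t\mathsf{M}_{\tau}(g)}$. The $\tau$-dependence then sits in two places. The first is $\opreal$ and $\opimag$ of $\bkt{f}{\napiernum^{\imunit\tau\omega}g}$, weighted by $K_{\sminvtemperature,\smchemicalpotential}$. The second is a mean term that cancels exactly as in the proof of Proposition~\ref{expedition0011233}(2). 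The free-field identity $\bkt{f}{(K+1)\napiernum^{\imunit(\tau+\imunit\sminvtemperature)\omega}g}=\bkt{g}{(K-1)\napiernum^{-\imunit\tau\omega} f}$-type relations then give the swap at $\tau+\imunit\sminvtemperature$ at the level of the integrand.

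I expect the main obstacle to be controlling the analytic continuation of these double Laplace integrals in the strip: the Gaussian factor is no longer bounded by $1$ once $\tau$ is complex. For this reason I would rely primarily on the Fock/trace-class route. There, boundedness and continuity in the strip are automatic, and the only delicate point left is the agreement of the two states beyond two-point functions, which quasi-freeness settles.
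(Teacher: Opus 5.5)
Your strategy is the paper's own. Its proof also reads $\mathsf{T}_{\beta,\mu,\kappa,\Lambda}$ as the trace (Gibbs) expectation of Weyl operators, inherits the KMS property from the trace-class Gibbs state, and passes to the resolvent algebra via the KMS condition on a dense $\ast$-subalgebra. You make the transfer more explicit: you identify the abstract state with the Gibbs state composed with the Fock representation (one- and two-point data plus quasi-freeness) and pull the KMS condition back on the whole algebra at once. That is fine, and faithfulness of the Fock representation is not even needed for it.

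The step that fails is the identification itself, and it fails for the same reason in the paper's proof. Both rest on Proposition~\ref{expedition0012088}, which is false for $\mu\neq 0$. Conjugation by $V_{\kappa,\Lambda}=W_{\mathrm{F}}(\mathsf{i}\,\mathsf{m}_{\kappa,\Lambda})$ shifts $a(f)$ by $\langle f,\mathsf{m}_{\kappa,\Lambda}\rangle/\sqrt{2}$ (Fact~\ref{expedition0010960}), so
\[
V_{\kappa,\Lambda}\,d\Gamma_{\mathrm{b}}(\omega-\mu)\,V_{\kappa,\Lambda}^{-1}
=d\Gamma_{\mathrm{b}}(\omega-\mu)+\phi_{\mathrm{F}}\bigl((\omega-\mu)\mathsf{m}_{\kappa,\Lambda}\bigr)
+\tfrac{1}{2}\bigl\langle \mathsf{m}_{\kappa,\Lambda},(\omega-\mu)\mathsf{m}_{\kappa,\Lambda}\bigr\rangle ,
\]
whereas $H_{\mathrm{vH},I_L^d,\kappa,\Lambda}(\mu)$ contains $\phi_{\mathrm{F}}(\omega\mathsf{m}_{\kappa,\Lambda})$. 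The correct displacement is $g_\mu=(\omega-\mu)^{-1}\omega\,\mathsf{m}_{\kappa,\Lambda}$; a single mode $(\omega-\mu)a^{*}a+\omega m\,(a+a^{*})/\sqrt{2}$ already shows this. Hence the Gibbs mean of $\phi_{\mathrm{F}}(f)$ is $-\operatorname{Re}\langle g_\mu,f\rangle$, and the implemented cocycle is $\operatorname{Re}\langle g_\mu,(\mathsf{e}^{\mathsf{i}t(\omega-\mu)}-1)f\rangle$. These replace the $\mathsf{m}_{\kappa,\Lambda}$-versions used in Propositions~\ref{expedition0011296}--\ref{expedition0011614} and in your first bullet.

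As a result, the state defined through $\mathsf{T}_{\beta,\mu,\kappa,\Lambda}$ is not even invariant under $\operatorname{Ad}\mathsf{e}^{\mathsf{i}tH(\mu)}$. One finds $\psi(\alpha_t(W(f)))/\psi(W(f))=\exp\bigl(\mathsf{i}\operatorname{Re}\langle g_\mu-\mathsf{m}_{\kappa,\Lambda},(\mathsf{e}^{\mathsf{i}t(\omega-\mu)}-1)f\rangle\bigr)$, which is not identically $1$. You cannot retreat to $\mu=0$ in the box either, since the $k=0$ mode makes $\mathsf{e}^{-\beta H}$ non-trace-class there.

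To repair the argument, do one of the following:
\begin{itemize}
\item replace $\mathsf{m}_{\kappa,\Lambda}$ by $g_\mu$ in the mean of the state;
\item take the interaction to be $\phi_{\mathrm{F}}((\omega-\mu)\mathsf{m}_{\kappa,\Lambda})$;
\item define the state directly as the Gibbs state composed with the Fock representation.
\end{itemize}
After that, your pullback argument goes through unchanged.
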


\begin{proof}
The function $\mathsf{T}_{\sminvtemperature,\smchemicalpotential,\kappa,\Lambda}(s,f;t,g)$ is the $\sminvtemperature$-KMS state in the Weyl algebra, determined in particular by the trace.
The result follows from the KMS property for the trace and the KMS condition on a dense $\ast$-algebra \cite{BratteliRobinson2}.
\end{proof}

\begin{prop}
Set the chemical potential to $\smchemicalpotential
= 0$.
Then the infinite-volume limit $\oastate[\psi_{\txtvanhowe,\sminvtemperature,\smchemicalpotential}]$ as a state on $\oaresolventalgebra_{\txtirsingular,\sminvtemperature}
=
\oaresolventalgebra(\sphilb{D}_{\txtirsingular,\sminvtemperature})$ exists, along with the automorphism group $\alpha_{\kappa,\Lambda,\smchemicalpotential}$ of Definition \ref{expedition0011278}.
The infinite-volume limit is also a $\sminvtemperature$-KMS state with respect to the automorphism group of the infinite system.
\end{prop}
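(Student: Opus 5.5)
We follow the same route as the Weyl-algebra treatment of Theorem~\ref{expedition0011637}, transplanted to the resolvent algebra via the Laplace-transform representation of Proposition~\ref{expedition0011613} and Proposition~\ref{expedition0011614}. Work first with fixed cutoffs $\kappa,\Lambda$ on the subalgebra $\oaresolventalgebra_{\txtirsingular,\sminvtemperature}$.

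\textbf{Step 1: limit on the generating products.} For $f,g \in \sphilb{D}_{\txtirsingular,\sminvtemperature}$, the bounded-system one- and two-point values are Laplace integrals of $\mathsf{T}_{\sminvtemperature,0,\kappa,\Lambda}$ built from $\opform{q}_{\txtnonzero,\sminvtemperature}$ on $\fun{\lp^{2}}{I_L^d}$. By the infinite-volume limit of the Weyl expectation values in Theorem~\ref{expedition0011612}, for each fixed $s,t$ the integrand converges to
\[
\fnexp{-\oneoverfour \opform{q}_{\txtnonzero,\sminvtemperature}(sf+tg)-\oneoverfour \opform{q}_{0}(sf+tg)+\imunit \opreal \mathsf{m}_{\kappa,\Lambda}(sf+tg)-\frac{\imunit}{2} st \opimag \bkt{f}{g}}.
\]
The modulus of every integrand is at most $\napiernum^{-\opreal(\lambda s+\mu t)}$, since the quadratic forms are nonnegative and the remaining phases are real. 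Hence dominated convergence gives convergence of the resolvent expectations. The same argument applies to finite products $\prod_j \oaresolvent(\lambda_j,f_j)$, by combining the Weyl operators via the Weyl relations into a multiple Laplace integral. This yields convergence on $\oaresolventalgebra_{\txtfin}(\sphilb{D}_{\txtirsingular,\sminvtemperature})$.

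Since the states have norm one and $\oaresolventalgebra_{\txtfin}$ is dense, an $\ep/3$ argument extends the limit to a state on the whole $\oaresolventalgebra_{\txtirsingular,\sminvtemperature}$. Positivity and normalization pass to the limit. The cutoffs are then removed in the same way, as in Proposition~\ref{expedition0011266}, using $\mathsf{m}_{\kappa,\Lambda}(f)\to\mathsf{m}(f)$ for $f\in\dom\mathsf{m}$.

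\textbf{Step 2: convergence of the dynamics.} The automorphisms act only by $f\mapsto \napiernum^{\imunit t\omega}f$ and by a shift of the first argument by $\imunit\mathsf{M}_{t}(f)$. In finite volume this is the periodic-lattice $\omega$. On the generators, the resolvent identity estimate from Proposition~\ref{expedition0011261} controls the first-argument shift by $\abs{\opreal z}^{-2}$ times the difference of the $\mathsf{M}$'s. To avoid comparing different Hilbert spaces, we embed $\fun{\lp^{2}}{I_L^d}$-functions into $\lp^2(\fldreal^d)$ by fixing $f$ with compact support and letting $L\to\infty$. The difference $\norm{\napiernum^{\imunit t\omega_L}f-\napiernum^{\imunit t\omega}f}$ enters only through the continuity of $\oaresolvent(z,\cdot)$. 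That continuity is only strict and not norm continuity, so instead we pass through the expectation values directly: the KMS functions involve only the scalar quantities $\opform{q}$, $\opimag\bkt{\cdot}{\cdot}$ and $\mathsf{m}$ evaluated at $\napiernum^{\imunit t\omega_L}g$. These converge pointwise, with the same integrable domination as in Step 1.

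\textbf{Step 3: the KMS property.} For $A=\oaresolvent(\lambda,f)$ and $B=\oaresolvent(\mu,g)$, the finite-volume KMS functions $F_L(z)$ from Proposition~\ref{expedition0011634} are analytic in the strip $0<\opimag z<\sminvtemperature$ and bounded there by $\norm{A}\norm{B}=(\abs{\lambda}\abs{\mu})^{-1}$ (Phragmén--Lindelöf). By Step 2 they converge on both boundary lines. Vitali/Montel then gives a limit function, analytic in the strip, bounded, with the correct boundary values $\psi(A\alpha_t(B))$ and $\psi(\alpha_t(B)A)$. Extension from generators to the dense $\ast$-algebra uses \cite[Proposition~5.3.7]{BratteliRobinson2}, the KMS condition on a dense invariant $\ast$-subalgebra.

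\textbf{Main obstacle.} I expect Step 2 to be the main obstacle: the resolvent algebra dynamics is not norm-continuous in $f$, so the limiting automorphism group is not strongly continuous. The KMS argument must therefore be run entirely at the level of expectation functions. A further point is the condensate term $\opform{q}_0$, which arises from the $k=0$ lattice mode at $\smchemicalpotential=0$. Its correct limit requires the usual BEC scaling of the chemical potential from \cite{AsaoArai28,YoshitsuguSekine004}, which we would cite rather than re-derive.
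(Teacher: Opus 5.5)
The step that fails is the $\varepsilon/3$ extension at the end of Step~1.

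The finite-volume states live on $\mathcal{R}(L^2(I_L^d))$, and $R(\lambda,f)$ has a finite-volume counterpart only for $f$ supported in the box. Restriction or periodization does not preserve $\sigma(f,g)=\operatorname{Im}\langle f,g\rangle$, so it induces no homomorphism of resolvent algebras. What you actually obtain, and what Step~2 uses, is convergence on the $\ast$-algebra generated by resolvents of compactly supported test functions, not on $\mathcal{R}_{\mathrm{fin}}(\mathcal{D}_{\mathrm{irs},\beta})$.

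That algebra is not norm dense in $\mathcal{R}(\mathcal{D}_{\mathrm{irs},\beta})$, for exactly the reason you raise in Step~2. Let $S$ be finite-dimensional and nondegenerate with $g\notin S$, and choose $h\in S^{\perp_\sigma}$ with $\sigma(h,g)\neq 0$. The automorphism $R(z,f)\mapsto R(z+\mathrm{i}s\,\sigma(h,f),f)$ fixes $\mathcal{R}(S)$, but it moves $R(\lambda,g)$ by almost $1/|\lambda|$ as $s\to\infty$. Hence $\operatorname{dist}(R(\lambda,g),\mathcal{R}(S))\ge 1/(2|\lambda|)$. So norm-one states plus density cannot produce the limit state on $\mathcal{R}_{\mathrm{irs},\beta}$, and Step~3 likewise gives the KMS relation only for $A,B$ built from box-supported functions.

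The repair has to go through the scalar data, not the norm.

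First, obtain the limit state directly. The Weyl functional of Theorem~\ref{expedition0011637} is a regular quasi-free state, since $\mathsf{q}_{\mathrm{nz},\beta}+\mathsf{q}_0\ge\|\cdot\|^2$. In its GNS representation, $R(\lambda,f)\mapsto(\mathrm{i}\lambda-\phi(f))^{-1}$ gives a state on all of $\mathcal{R}(\mathcal{D}_{\mathrm{irs},\beta})$, whose values on products are your Laplace integrals (Corollary~\ref{expedition0012086}). The finite-volume limit is then needed only to identify this state on box-supported elements.

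Second, for the KMS relation, approximate any finite family $f_1,\dots,f_n\in\mathcal{D}_{\mathrm{irs},\beta}$ by box-supported families along which $\mathsf{q}_{\mathrm{nz},\beta}$, $\mathsf{q}_0$, $\operatorname{Im}\langle\cdot,\cdot\rangle$ and $\mathsf{m}_{\kappa,\Lambda}$ converge (truncation does this for $f\in L^1\cap L^2$). The boundary functions converge by dominated convergence, and the KMS functions are uniformly bounded by $\prod_j|\lambda_j|^{-1}$. Your strip argument then carries the relation over, provided you identify the boundary values through the Poisson representation of the strip; Montel alone does not give them.

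Third, pass from $\mathcal{R}_{\mathrm{fin}}$ to its closure using only $|F_{A,B}|\le\|A\|\,\|B\|$ and the isometry of $\alpha_t$. The citation \cite[Proposition~5.3.7]{BratteliRobinson2} is not available here, since it presupposes strongly continuous dynamics.

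Finally, at $\mu=0$ there is no finite-volume Gibbs state ($\coth 0$ on the $k=0$ mode). Steps~1--3 must therefore be run at $\mu_L\uparrow 0$ with dynamics $e^{\mathrm{i}t(\omega-\mu_L)}$, which is harmless.

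With these changes, your direct limit of resolvent-algebra KMS functions is a valid alternative to the paper's one-line route. The paper instead takes the Weyl-level KMS property of the limit state and transfers it through the Laplace transform, checking it on a dense $\ast$-subalgebra.
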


\begin{proof}
By an argument analogous to Proposition \ref{expedition0011634}.
\end{proof}

Let us now summarize the situation in which BEC can occur.

\begin{cor}\label{expedition0012086}
The $\sminvtemperature$-KMS state $\oastate[\psi_{\txtvanhowe,\sminvtemperature}]$ obtained in the infinite-volume limit is a quasi-free state, with two-point function
\begin{equation}
\begin{aligned}
\fun{\oastate[\psi_{\txtvanhowe,\sminvtemperature}]}
{\oaresolvent(\lambda,f)
\oaresolvent(\mu,g)}
&=
\imunit
\int_0^{(\sgn \lambda) \infty}
\int_0^{(\sgn \mu) \infty}
\napiernum^{-\lambda s - \mu t}
\mathsf{T}_{\sminvtemperature,\kappa,\Lambda}(s,f;t,g)
\opdmsr{s}
\opdmsr{t},
\\ 
\mathsf{T}_{\sminvtemperature,\kappa,\Lambda}(s,f;t,g)
&=
\fnexp{\imunit \opreal \mathsf{m}_{\kappa,\Lambda}(sf+tg)
-\frac{\imunit}{2} st \opimag \bkt{f}{g}}
\\
&\qquad\times
\fnexp{-\frac{\opform{q}_{\txtnonzero,\sminvtemperature}(sf+tg)}{4}
-\frac{\opform{q}_0(sf+tg)}{4}}
\end{aligned}
\end{equation}
\end{cor}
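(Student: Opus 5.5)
The plan is to compute the two-point function from the Weyl-algebra expectation value of Theorem~\ref{expedition0011612} via the Laplace-transform representation of the resolvent, and to get quasi-freeness from the Gaussian form of that expectation value. First, I would recall from the proof of Proposition~\ref{expedition0011613} that in any regular representation the resolvent is a Bochner integral of Weyl operators,
\[
\oaresolvent(\lambda,f)
=
\imunit
\int_0^{(\sgn \lambda) \infty}
\napiernum^{-\lambda s}
\opfockweyl(-sf)
\opdmsr{s}.
\]
For a product of two resolvents this gives a double integral over $(s,t)$ of $\opfockweyl(-sf)\opfockweyl(-tg)$. The Weyl relation then turns each integrand into $\napiernum^{-\frac{\imunit}{2} st \opimag \bkt{f}{g}} \opfockweyl(-(sf+tg))$, exactly as in Proposition~\ref{expedition0011592}.

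Second, I would evaluate the state on this integrand using Theorem~\ref{expedition0011612}, in which the Weyl expectation already contains both $\opform{q}_{\txtnonzero,\sminvtemperature}$ and the condensate form $\opform{q}_0$. Substituting $-(sf+tg)$ produces the phase $\napiernum^{\imunit \opreal \mathsf{m}_{\kappa,\Lambda}(sf+tg)}$ and the Gaussian damping $\napiernum^{-\frac14(\opform{q}_{\txtnonzero,\sminvtemperature}+\opform{q}_0)(sf+tg)}$, which is precisely $\mathsf{T}_{\sminvtemperature,\kappa,\Lambda}$. Exchanging the state with the $(s,t)$-integrals is justified by continuity of the state and absolute convergence, since $\abs{\mathsf{T}}\le 1$ and $\napiernum^{-\lambda s-\mu t}$ is integrable. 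The $\kappa,\Lambda$ subscripts are kept, as in the statement; passing to the cutoff-free version would follow by dominated convergence in the manner of Theorem~\ref{expedition0011637}.

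Third, for quasi-freeness I would note that $f\mapsto \oastate[\psi_{\txtvanhowe,\sminvtemperature}](\opfockweyl(f))$ is the exponential of a real-linear phase plus a negative quadratic form. Hence, as in Propositions~\ref{expedition0011256}--\ref{expedition0011239}, it is the characteristic functional of a Gaussian measure, and the $n$-point resolvent functions follow by the same Laplace-transform argument from the mean and covariance alone.

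The main obstacle I expect is that $\opform{q}_0$ is only defined on $\fun{\lp^{1}}{\fldreal^{d}}\cap\fun{\lp^{2}}{\fldreal^{d}}$ and is not closable. Regularity of the limiting representation, and with it the Laplace representation of the resolvent, therefore has to be checked on $\sphilb{D}_{\txtirsingular,\sminvtemperature}\cap\sphilb{D}_{0,\sminvtemperature}$. I would handle this by first computing everything in the bounded system using Proposition~\ref{expedition0011614}, and then passing to the infinite-volume limit pointwise in $(s,t)$ by dominated convergence. Because the integrands are uniformly bounded by $\napiernum^{-\lambda s-\mu t}$, no regularity of the limit state is needed to interchange the limit and the integral.
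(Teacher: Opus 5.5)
Your route is the paper's own: its proof just invokes Theorem~\ref{expedition0011612} together with the Weyl-to-resolvent Laplace argument of Propositions~\ref{expedition0011613}--\ref{expedition0011614}. Your computation of the integrand is correct: the Weyl-relation phase $\napiernum^{-\frac{\imunit}{2}st\opimag\bkt{f}{g}}$, the mean phase, and the damping by $\opform{q}_{\txtnonzero,\sminvtemperature}+\opform{q}_0$ reproduce $\mathsf{T}_{\sminvtemperature,\kappa,\Lambda}$.

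What fails is the constant in front of the double integral, which you never track. Since $\oaresolvent(\lambda,f)=(\imunit\lambda-\opfocksegal(f))^{-1}$ and $\int_0^{(\sgn\lambda)\infty}\napiernum^{-\lambda s}\napiernum^{-\imunit sx}\,ds=(\lambda+\imunit x)^{-1}$, the correct one-resolvent formula is
\[
\oaresolvent(\lambda,f)=-\imunit\int_0^{(\sgn\lambda)\infty}\napiernum^{-\lambda s}\opfockweyl(-sf)\,ds .
\]
The $+\imunit$ version you took from the proof of Proposition~\ref{expedition0011613} is wrong; test it at $f=0$, where $\oaresolvent(\lambda,0)=-\frac{\imunit}{\lambda}\idone$. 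For a product of two resolvents the prefactor is $(\mp\imunit)^2=-1$ under either sign convention, so carrying out your argument actually gives
\[
\psi_{\txtvanhowe,\sminvtemperature}\bigl(\oaresolvent(\lambda,f)\oaresolvent(\mu,g)\bigr)=-\int_0^{(\sgn\lambda)\infty}\int_0^{(\sgn\mu)\infty}\napiernum^{-\lambda s-\mu t}\,\mathsf{T}_{\sminvtemperature,\kappa,\Lambda}(s,f;t,g)\,ds\,dt .
\]
This agrees with the zero-temperature Definition~\ref{expedition0011111}, not with the printed $\imunit\int\!\!\int\cdots$. The printed prefactor cannot hold for any state. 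At $f=g=0$ one has $\mathsf{T}=1$, so the stated right-hand side is $\frac{\imunit}{\lambda\mu}$, while $\oaresolvent(\lambda,0)\oaresolvent(\mu,0)=-\frac{1}{\lambda\mu}\idone$.

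So the step where you assert that the result is \emph{precisely} the stated formula is where your proof breaks. What you can prove is the identity with prefactor $-1$, and you should state it that way. Your fallback through the bounded system would hide the problem rather than fix it, because Proposition~\ref{expedition0011614} carries the same erroneous $\imunit$; your two derivations would then disagree by a factor $-\imunit$.

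The rest is fine. The bound $\abs{\mathsf{T}}\le 1$ justifies dominated convergence, and quasi-freeness follows from the Gaussian form of the Weyl generating functional. Your worry about regularity is unnecessary. For fixed $k,f$, the function $t\mapsto\psi_{\txtvanhowe,\sminvtemperature}(\opfockweyl(k+tf))$ is the exponential of a polynomial in $t$ even though $\opform{q}_0$ is not closable. This gives regularity of the GNS representation on $\sphilb{D}_{\txtirsingular,\sminvtemperature}\cap\sphilb{D}_{0,\sminvtemperature}$. Read the Laplace integral weakly in that representation, not as a norm Bochner integral in the abstract algebra, where $s\mapsto\opfockweyl(-sf)$ is norm-discontinuous.
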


\begin{proof}
By the argument for the Weyl algebra, in particular Theorem \ref{expedition0011637}.
\end{proof}

\begin{rem}
The integrand $\mathsf{T}_{\sminvtemperature,\kappa,\Lambda}$ is essentially the expectation value in the Weyl algebra.
\end{rem}

As with the Weyl algebra, we remove the infrared/ultraviolet cutoffs.

\begin{thm}
\begin{enumerate}
\item
Convergence of the automorphism group: For any $A
\in \oaresolventalgebra_{\txtirsingular,\sminvtemperature}$ and $t
\in \fldreal$, $$\alpha_{\txtvanhowe,t}(A)
=
\lim_{\kappa \to 0, \Lambda \to \infty}
\alpha_{\txtvanhowe,\kappa,\Lambda,t}(A)$$ exists in the strong topology of $\oaresolventalgebra_{\txtirsingular,\sminvtemperature}$, and the family $\fml{\alpha_{\txtvanhowe,t}}{t \in \fldreal}$ defines a strongly continuous one-parameter group of automorphisms on $\oaresolventalgebra_{\txtirsingular,\sminvtemperature}$.

\item
Convergence of KMS states: For any $A
\in \oaresolventalgebra_{\txtirsingular,\sminvtemperature}$, there exists a state $\oastate[\psi_{\txtvanhowe,\sminvtemperature,\kappa,\Lambda}]$ on $\oaresolventalgebra_{\txtirsingular,\sminvtemperature}$ satisfying $$\oastate[\psi_{\txtvanhowe,\sminvtemperature}](A)
=
\lim_{\kappa \to 0,\Lambda \to \infty}
\oastate[\psi_{\txtvanhowe,\sminvtemperature,\kappa,\Lambda}](A).$$
In particular, for the generators of the resolvent algebra, $$\oastate[\psi_{\txtvanhowe,\sminvtemperature}]
(\oaresolvent(\lambda,f))
=
\lim_{\kappa \to 0,\Lambda \to \infty}
\oastate[\psi_{\txtvanhowe,\sminvtemperature,\kappa,\Lambda}]
(\oaresolvent(\lambda,f))
\quad
f \in \sphilb{D}_{\txtirsingular,\sminvtemperature}$$ holds, and the right-hand side is obtained as the natural limit of the expression in Theorem \ref{expedition0011612}.

\item
The state $\oastate[\psi_{\txtvanhowe,\sminvtemperature}]$ is a $\sminvtemperature$-KMS state with respect to the automorphism group $\alpha_{\txtvanhowe}$.
\end{enumerate}
\end{thm}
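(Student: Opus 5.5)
The plan is to transplant the argument of Proposition~\ref{expedition0011261}, Proposition~\ref{expedition0011266} and Theorem~\ref{expedition0011637} to the finite-temperature resolvent algebra $\oaresolventalgebra_{\txtirsingular,\sminvtemperature}$. Every cutoff dependence sits in the scalar shift $\imunit \mathsf{M}_{\kappa,\Lambda,t}(f)$ of the first resolvent variable, or in the phase $\opreal \mathsf{m}_{\kappa,\Lambda}$ in $\mathsf{T}_{\sminvtemperature,\kappa,\Lambda}$. So everything reduces to convergence of these scalars for $f \in \sphilb{D}_{\txtirsingular,\sminvtemperature} \subset \dom \mathsf{m}$.

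\textbf{Part (1).} On a generator $\oaresolvent(z,f)$ I apply the fourth resolvent relation in its complex form, exactly as in Proposition~\ref{expedition0011261}. This gives
\[
\norm{\alpha_{\txtvanhowe,\kappa,\Lambda,t}(\oaresolvent(z,f)) - \alpha_{\txtvanhowe,t}(\oaresolvent(z,f))}
\leq
\frac{\abs{\mathsf{M}_{\kappa,\Lambda,t}(f) - \mathsf{M}_{t}(f)}}{\abs{\opreal z}^2}.
\]
The right-hand side tends to $0$ uniformly for $\abs{t} \leq T$, by dominated convergence on $\abs{k}\leq\kappa$ and $\abs{k}\geq\Lambda$ using $f \in \dom \mathsf{m}$.

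\begin{itemize}
\item Norm convergence passes to finite products and sums, since each $\alpha_{\txtvanhowe,\kappa,\Lambda,t}$ is an isometric $\ast$-automorphism. It then extends to all of $\oaresolventalgebra_{\txtirsingular,\sminvtemperature}$ by an $\ep/3$ argument, using density of $\oaresolventalgebra_{\txtfin}$.
\item The automorphism and group properties of the limit follow from Proposition~\ref{expedition0012084} and the cocycle identity of Proposition~\ref{expedition0011238} for $\mathsf{M}_t$.
\item For strong continuity in $t$, I would use continuity of $t \mapsto \napiernum^{\imunit t\omega}f$ and of $t \mapsto \mathsf{M}_t(f)$. The latter holds by dominated convergence, since $\abs{\napiernum^{\imunit t\omega}-1}\le 2$ and $f\in\dom\mathsf{m}$.
\end{itemize}

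\textbf{Part (2).} I follow Proposition~\ref{expedition0011266}.
\begin{itemize}
\item By weak$^\ast$ compactness of the state space of the unital algebra, the net $\oastate[\psi_{\txtvanhowe,\sminvtemperature,\kappa,\Lambda}]$ has convergent subnets.
\item On products of generators, the values are Laplace integrals of the quasi-free expression of Corollary~\ref{expedition0012086}. The integrand is bounded by $\napiernum^{-\lambda s-\mu t}$, because $\opform{q}_{\txtnonzero,\sminvtemperature}+\opform{q}_0 \ge 0$ makes the Gaussian factor of modulus at most $1$. It converges pointwise since $\opreal\mathsf{m}_{\kappa,\Lambda}(sf+tg) \to \opreal\mathsf{m}(sf+tg)$. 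Dominated convergence therefore identifies the limit on $\oaresolventalgebra_{\txtfin}$, for $n$-point products as well, using quasi-freeness.
\item Since $\oaresolventalgebra_{\txtfin}$ is dense, all subnet limits agree and the whole net converges. The limit has the value $1$ at the identity, so it is a state whose generator values are the natural limit of Theorem~\ref{expedition0011612}.
\end{itemize}

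\textbf{Part (3).} The KMS condition is the step I expect to be the main obstacle, since weak limits of KMS states are KMS only for norm-convergent dynamics. My approach is to combine part (1), uniform on compact $t$-sets, with part (2).

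For $A,B \in \oaresolventalgebra_{\txtfin}$ I write
\[
\oastate[\psi_{\txtvanhowe,\sminvtemperature,\kappa,\Lambda}](A\,\alpha_{\txtvanhowe,\kappa,\Lambda,t}(B)) - \oastate[\psi_{\txtvanhowe,\sminvtemperature}](A\,\alpha_{\txtvanhowe,t}(B)).
\]
I split this difference into a dynamics part and a state part. The dynamics part is at most $\norm{A}\,\norm{\alpha_{\kappa,\Lambda,t}(B)-\alpha_t(B)}$; the state part tends to $0$ by part (2). Together these give convergence of the KMS correlation functions, uniformly on compacts in $t$.

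\begin{itemize}
\item \emph{Preferred route.} I would then invoke the limit theorem for KMS states in \cite[Proposition 5.3.25]{BratteliRobinson2}, which assumes norm convergence of the dynamics and weak$^\ast$ convergence of the states.
\item \emph{Convergence of the cutoff states.} For this I need the cutoff infinite-volume states to be $\sminvtemperature$-KMS for $\alpha_{\txtvanhowe,\kappa,\Lambda}$. This is the preceding proposition, obtained from Proposition~\ref{expedition0011634} via the infinite-volume limit.
\item \emph{Fallback.} If the theorem's hypothesis of norm convergence uniform in $t$ on compacts causes trouble because of the shift of the first variable, I would instead verify the KMS boundary condition directly. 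The correlation function $\oastate(\oaresolvent(\lambda,f)\,\alpha_{\txtvanhowe,\tau}(\oaresolvent(\mu,g)))$ is an explicit Laplace integral of Gaussian type. I would continue it analytically in $\tau$ to the strip $0<\opimag\tau<\sminvtemperature$, using $K_\sminvtemperature$ and the identity $\coth$-weight $=$ Bose weight. This mirrors Proposition~\ref{expedition0011233}.
\end{itemize}
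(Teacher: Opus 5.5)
Your parts (1) and (2) follow the paper's route and are fine. This covers the norm convergence on generators via the fourth resolvent relation (uniform for $\abs{t}\le T$), the density extension, and the identification of the weak$^\ast$ limit state by dominated convergence in the Laplace integrals. The paper's proof only says ``proceed as in Theorem~\ref{expedition0011637}''; you have filled this in along Propositions~\ref{expedition0011261} and~\ref{expedition0011266}.

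The gap is the strong-continuity claim, and through it your preferred route to (3). Continuity of $t\mapsto\napiernum^{\imunit t\omega}f$ in $\sphilb{H}$ does not give norm continuity of $t\mapsto\oaresolvent(z,\napiernum^{\imunit t\omega}f)$, because $f\mapsto\oaresolvent(\lambda,f)$ is not norm continuous. To see this, take $f,g$ real-linearly independent and choose $u$ with $\opimag\bkt{u}{f}=0$ and $\opimag\bkt{u}{g}=N$. In the coherent vector $\opfockweyl_{\txtfock}(u)\opfockvac_{\txtbsn}$, $\opfocksegal_{\txtfock}(f)$ is $N(0,\tfrac12\norm{f}^2)$-distributed while $\opfocksegal_{\txtfock}(g)$ is centred at $N$. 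Letting $N\to\infty$ gives $\norm{\oaresolvent(\lambda,f)-\oaresolvent(\lambda,g)}\ge\mathbb{E}[\abs{\lambda}/(\lambda^2+X^2)]>0$ with $X\sim N(0,\tfrac12\norm{f}^2)$, however small $\norm{f-g}$ is; the norm in a representation is a lower bound for the C$^\ast$-norm. For $f\neq0$ and $t\neq0$, $\napiernum^{\imunit t\omega}f$ and $f$ are independent. The shift of the first variable costs only $\abs{\mathsf{M}_t(f)}/\lambda^2$. Hence $t\mapsto\alpha_{\txtvanhowe,t}(\oaresolvent(\lambda,f))$ is not norm continuous: the obstruction is the second argument, not the shift you worried about. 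So point-norm continuity is false. ``Strongly continuous'' can only be proved in regular representations, e.g.\ the GNS representation of the limit state, using $\opform{q}_{\txtnonzero,\sminvtemperature}((\napiernum^{\imunit t\omega}-1)f)\to0$, the $t$-invariance of $\opform{q}_0$, and $\mathsf{M}_t(f)\to0$.

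Consequently $(\oaresolventalgebra_{\txtirsingular,\sminvtemperature},\alpha_{\txtvanhowe})$ is not a C$^\ast$-dynamical system. You therefore cannot invoke \cite[Proposition 5.3.25]{BratteliRobinson2}, whose KMS notion rests on entire analytic elements of a norm-continuous group. Your claim that the state part of your splitting converges uniformly on compact $t$-sets is also unjustified. Weak$^\ast$ convergence is uniform on norm-compact sets, and $\{A\,\alpha_{\txtvanhowe,t}(B)\}_{\abs{t}\le T}$ need not be one.

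What works is the KMS-function formulation with convergence for each fixed $t$. Write $\psi_{\kappa,\Lambda},\psi$ for the cutoff and limit states.
\begin{enumerate}
\item For $A,B\in\oaresolventalgebra_{\txtfin}$, the cutoff KMS functions are bounded by $\norm{A}\norm{B}$ on the closed strip, hence are Poisson integrals of their boundary values.
\item Your splitting gives convergence of these boundary values for each $t$ under a uniform bound. Dominated convergence in the Poisson integrals and Vitali's theorem then yield a bounded limit, analytic in the open strip.
\item This limit attains the boundary values $\psi(A\,\alpha_{\txtvanhowe,t}(B))$ and $\psi(\alpha_{\txtvanhowe,t}(B)A)$ continuously, once you check from the explicit Gaussian Laplace integrals that these are continuous in $t$.
\item Density of $\oaresolventalgebra_{\txtfin}$ and the maximum principle extend the condition to all $A,B$.
\end{enumerate}
Your fallback of direct analytic continuation is also viable. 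It must, however, be carried out for arbitrary finite products in $\oaresolventalgebra_{\txtfin}$, not only for a pair of generators as in Proposition~\ref{expedition0011233}.
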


\begin{proof}
Proceed in the same manner as Theorem \ref{expedition0011637}.
\end{proof}

\begin{rem}[Impossibility of BEC and Ideal Structure]
As noted in the Weyl algebra discussion, BEC for the free field is discussed in the textbook \cite{AsaoArai28} or in \cite{YoshitsuguSekine004}, and an no-go theorem for BEC of quasiparticles is discussed in \cite{YoshitsuguSekine006}.
\end{rem}

\section{Preparation for Functional Integral Theory}\label{preparation-for-functional-integral-theory}

In the following, we discuss the ground states and equilibrium states treated in the Weyl and resolvent algebra framework via the functional integral formulation.

\subsection{Basic Setup and Adjustments for Functional Integrals}\label{expedition0011385}

We mainly follow the notation and conventions of \cite{AsaoArai30,LorincziHiroshimaBetz3,DerezinskiGerard001}. This paper adopts a notation slightly different from these references: see Subsection \ref{expedition0012096} for details.

Let \(\sphilb{H}\) be a real Hilbert space and \((M, \Sigma, \mu)\) a probability space.

When, for each \(f
\in \sphilb{H}\), a random variable \(\phi(f)\) on \((M, \Sigma, \mu)\) is given satisfying the following properties, the family \(\set{\phi(f)}
{f \in \sphilb{H}}\) is called a Gaussian stochastic process indexed by \(\sphilb{H}\), or simply a Gaussian hyperprocess.

\begin{enumerate}
\def\labelenumi{\arabic{enumi}.}

\item
  (Real linearity) For any \(a,b
  \in \fldreal\) and \(f,g
  \in \sphilb{H}\), \[\phi(af+bg)
  = a \phi(f) + b \phi(g)\] holds almost surely.
\item
  The family of random variables \(\set{\phi(f)}{f \in \sphilb{H}}\) is total.
\item
  For each \(f
  \in \sphilb{H}\), \(\phi(f)\) is a Gaussian random variable satisfying \[\int_M \napiernum^{i \phi(f)} \opdmsr{\mu}
  =
  \fnexp{- \frac{1}{4} \opform{q}_{\txtnonzero,\infty}(f)}\]
\end{enumerate}

In particular, a Gaussian probability space over a real Hilbert space \(\sphilb{K}\) with the covariance (3) above is called a Gaussian \(\mathbf{L}^2\)-space \cite{DerezinskiGerard001}.

For any \(\alpha
\in \fldreal\) and \(f
\in \sphilb{H}\), define \[\wick{\napiernum^{\alpha \opfocksegal(f)}}_{\mu}
=
\fnexp{-\frac{\alpha^2}{4} \opform{q}_{\txtnonzero,\infty}(f)}
\napiernum^{\alpha \opfocksegal(f)}\] and call \(\wick{\napiernum^{\alpha \opfocksegal(f)}}_{\mu}\) the exponential Wick product, or simply the Wick product. When multiple probability measures appear, a subscript distinguishes which measure is being used. When no confusion arises, we simply write \[\wick{\napiernum^{\alpha \opfocksegal(f)}}\] The exponential Wick product is defined to satisfy the normalization condition \[\sqfun{\prbexp_\mu}{\wick{\napiernum^{\alpha \opfocksegal(f)}}}
=
1.\] The treatment of the exponential Wick product when the mean is nonzero is summarized in Remark \ref{expedition0011410}.

\begin{rem}
Since the Segal field operator is defined with a coefficient $\frac{1}{\sqrt{2}}$ as $$\opfocksegal_{\txtfock}(f)
=
\frac{1}{\sqrt{2}}
\rbk{\opfockcr_{\txtfock}(f) + \opfockan_{\txtfock}(f)},$$
the vacuum expectation value of the Weyl operator is $$\bkt{\opfockvac_{\txtbsn}}
{\opfockweyl_{\txtfock}(f)
\opfockvac_{\txtbsn}}
=
\napiernum^{-\frac{1}{4} \opform{q}_{\txtnonzero,\infty}(f)}.$$
In particular, the mean and covariance are defined as $$\sqfun{\prbexp_{\mu}}{\opfocksegal(f)}
=
0,
\quad
\sqfun{\prbcov_{\msr{\mu}}}{\opfocksegal(f), \opfocksegal(g)}
=
\onehalf \bkt{f}{g}.$$
\end{rem}

\begin{rem}\label{expedition0011410}
If the mean of $\opfocksegal(f)$ is changed to $\mathsf{m}(f)$, so that $$\sqfun{\prbexp_{\mu}}{\opfocksegal(f)}
=
\mathsf{m}(f),
\quad
\sqfun{\prbcov_{\msr{\mu}}}{\opfocksegal(f), \opfocksegal(g)}
=
\onehalf \bkt{f}{g},$$
then the characteristic function changes to $$\prbcharfun_{\sphilb{H}}(f)
=
\sqfun{\prbexp_{\msr{P}_{\sphilb{H}}}}
{\napiernum^{\imunit t \opfocksegal(f)}}
=
\fnexp{\imunit t \mathsf{m}(f)
-\oneoverfour t^2 \norm{f}_{\sphilb{H}}^2}.$$
Even when the mean is nonzero, the normalization condition remains unchanged.
\end{rem}

\begin{prop}\label{expedition0011275}
For any $\alpha,\beta
\in \fldreal$ and $f,g
\in \sphilb{H}$, $$\wick{\napiernum^{\alpha \opfocksegal(f)}} \cdot \wick{\napiernum^{\beta \opfocksegal(g)}}
=
\fnexp{\frac{\alpha \beta}{2} \bkt{f}{g}}
\wick{\napiernum^{\opfocksegal(\alpha f + \beta g)}}$$
\end{prop}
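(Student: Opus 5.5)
The plan is a direct computation from the definition of the Wick product, reducing everything to real linearity of \(\opfocksegal\) and the polarization identity for \(\opform{q}_{\txtnonzero,\infty}\). Throughout, \(\sphilb{H}\) is the real Hilbert space of Subsection~\ref{expedition0011385}, so \(\bkt{f}{g}\) is real and symmetric, and \(\opform{q}_{\txtnonzero,\infty}(f) = \norm{f}^2\).

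First I write out both factors using the definition:
\[
\wick{\napiernum^{\alpha \opfocksegal(f)}} \cdot \wick{\napiernum^{\beta \opfocksegal(g)}}
=
\fnexp{-\frac{\alpha^2}{4} \norm{f}^2 - \frac{\beta^2}{4} \norm{g}^2}
\napiernum^{\alpha \opfocksegal(f) + \beta \opfocksegal(g)}.
\]
These are products of real random variables, so they commute and no symplectic phase appears, unlike the Weyl relations. Property~1 (real linearity) of the Gaussian hyperprocess then gives \(\alpha \opfocksegal(f) + \beta \opfocksegal(g) = \opfocksegal(\alpha f + \beta g)\) almost surely. Hence the exponential factor equals \(\napiernum^{\opfocksegal(\alpha f + \beta g)}\) as an element of the probability space, which is the meaning of the identity.

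Second, I apply the definition of the Wick product with parameter \(1\) and argument \(\alpha f + \beta g\):
\[
\wick{\napiernum^{\opfocksegal(\alpha f + \beta g)}}
=
\fnexp{-\frac{1}{4} \norm{\alpha f + \beta g}^2} \napiernum^{\opfocksegal(\alpha f + \beta g)}.
\]
Expanding \(\norm{\alpha f + \beta g}^2 = \alpha^2 \norm{f}^2 + \beta^2 \norm{g}^2 + 2 \alpha \beta \bkt{f}{g}\) and comparing the deterministic prefactors, the ratio of the two sides is \(\fnexp{\frac{2 \alpha \beta}{4} \bkt{f}{g}} = \fnexp{\frac{\alpha \beta}{2} \bkt{f}{g}}\). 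This is exactly the stated factor.

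The main obstacle is bookkeeping rather than analysis. The covariance convention \(\onehalf \bkt{f}{g}\) fixed in the preceding remarks is what produces the factor \(\onehalf\), and it has to be used consistently in the Wick normalization. One should also note that if \(\sphilb{H}\) is the realification of a complex space, \(\bkt{f}{g}\) must be read as \(\opreal \bkt{f}{g}\). If the mean is nonzero as in Remark~\ref{expedition0011410}, the same argument goes through, since the normalization only involves the variance and the mean terms cancel by linearity.
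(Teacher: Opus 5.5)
Your proposal is correct and is essentially the paper's own proof: the paper likewise uses the almost-sure identity $\napiernum^{\alpha\opfocksegal(f)}\napiernum^{\beta\opfocksegal(g)} = \napiernum^{\opfocksegal(\alpha f+\beta g)}$ coming from real linearity, expands $\norm{\alpha f+\beta g}^2$ inside the Wick normalization, and compares the deterministic prefactors to extract $\fnexp{\frac{\alpha\beta}{2}\bkt{f}{g}}$. Your side remarks are consistent with this and leave the argument unchanged: reading $\bkt{f}{g}$ as $\opreal\bkt{f}{g}$ for a realified space, and the cancellation of mean terms by linearity.
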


\begin{proof}
Noting the almost-everywhere identity $\napiernum^{\alpha \opfocksegal(f)}
\napiernum^{\beta \opfocksegal(g)}
=
\napiernum^{\alpha \opfocksegal(f) + \beta \opfocksegal(g)}
=
\napiernum^{\opfocksegal(\alpha f + \beta g)}$,
\begin{equation}
\begin{aligned}
&\napiernum^{\alpha \opfocksegal(f)}
\napiernum^{\beta \opfocksegal(g)}
=
\napiernum^{\opfocksegal(\alpha f + \beta g)}
\\ 
&=
\fnexp{\oneoverfour
\norm{\alpha f + \beta g}^2}
\wick{\napiernum^{\opfocksegal(\alpha f + \beta g)}}
\\ 
&=
\fnexp{\frac{\alpha^2}{4} \opform{q}_{\txtnonzero,\infty}(f)
+\frac{\beta^2}{4} \opform{q}_{\txtnonzero,\infty}(g)
+\frac{\alpha \beta}{2} \bkt{f}{g}}
\wick{\napiernum^{\opfocksegal(\alpha f + \beta g)}}
\end{aligned}
\end{equation}
holds, and from this and the definition,
\begin{equation}
\begin{aligned}
&\wick{\napiernum^{\alpha \opfocksegal(f)}} \cdot \wick{\napiernum^{\beta \opfocksegal(g)}}
\\ 
&=
\fnexp{-\frac{\alpha^2}{4} \opform{q}_{\txtnonzero,\infty}(f)
-\frac{\beta^2}{4} \opform{q}_{\txtnonzero,\infty}(g)}
\napiernum^{\alpha \opfocksegal(f) + \beta \opfocksegal(g)}
\\ 
&=
\fnexp{\frac{\alpha \beta}{2} \bkt{f}{g}}
\wick{\napiernum^{\opfocksegal(\alpha f + \beta g)}}
\end{aligned}
\end{equation}
is obtained.
\end{proof}

\subsection{\texorpdfstring{\(Q\)-Space Equivalent to Fock Space}{Q-Space Equivalent to Fock Space}}\label{expedition0012091}

Let \(\pairbk{Q_{\sphilb{H}},
\Sigma_{\sphilb{H}},
\mu_{\sphilb{H}}}\) denote the probability space on which the Gaussian hyperprocess \(\set{\opfocksegal(f)}{f \in \sphilb{H}}\) indexed by the real Hilbert space \(\sphilb{H}\) is realized. There exists a unitary transformation \[U
\colon \fun{\spfock_{\txtbsn}}{\lacomplexication{\sphilb{H}}}
\to \fun{\lp^{2}}{Q_{\sphilb{H}}, \mu_{\sphilb{H}}}.\] Let \(h\) be any non-negative self-adjoint operator on \(\sphilb{H}\), set \(\sphilb{H}_{\alpha}
=
\dom h^{\alpha} \cap \sphilb{H}\) for any real \(\alpha\), and define the random variable \(\fun{\opfocksegal_{\mathrm{F}}}{f}\) on \(Q_{\sphilb{H}}\) for each \(f
\in \sphilb{H}_{-\frac{1}{2}}\) by \begin{align}
\fun{\opfocksegal_{\mathrm{F}}}{f}
=
\fun{\opfocksegal}{h^{-\frac{1}{2}} f} 
\end{align} This satisfies the following statements.

\begin{enumerate}
\def\labelenumi{\arabic{enumi}.}

\item
  For all \(n
  \in \semigrposint\) and \(f_1,\cdots,f_n
  \in \sphilb{H}_{-\frac{1}{2}}\), \begin{align}
  U \fun{\opfockcr}{h^{-\frac{1}{2}} f_1} \cdots \fun{\opfockcr}{h^{-\frac{1}{2}} f_n} \opfockvac_{\txtbsn}
  =
  2^{\frac{n}{2}}
  \wick{\fun{\opfocksegal_{\mathrm{F}}}{f_1} \cdots \fun{\opfocksegal_{\mathrm{F}}}{f_n}} 
  \end{align} holds.
\item
  For any \(f
  \in \sphilb{H}_{-\frac{1}{2}}\), \[U \fun{\opfocksegal}{f} \inv{U}
  = \fun{\opfocksegal_{\mathrm{F}}}{f}\] holds.
\end{enumerate}

By statement (2), \(\fun{\opfocksegal_{\mathrm{F}}}{f}\) is the sharp-time field at time \(0\) in the \(Q\)-space representation.

The subspace \(\sphilb{H}_{-\frac{1}{2}}\) of the real Hilbert space \(\sphilb{H}\) is \begin{align}
\bkt{f}{g}_{-\frac{1}{2}}
=
\bkt{h^{-\frac{1}{2}}f}{h^{-\frac{1}{2}}g}_{\sphilb{H}}, \quad f,g
\in \sphilb{H}_{-\frac{1}{2}} 
\end{align}

regarded as a real inner product space with inner product \begin{align}
\bkt{f}{g}_{-\frac{1}{2}}
=
\bkt{h^{-\frac{1}{2}}f}{h^{-\frac{1}{2}}g}_{\sphilb{H}}, \quad f,g
\in \sphilb{H}_{-\frac{1}{2}} 
\end{align} and its completion is denoted \(\gtclos{\sphilb{H}_{-\frac{1}{2}}}\). By the definition of completion, for any \(f
\in \gtclos{\sphilb{H}_{-\frac{1}{2}}}\) there exists a sequence \(\seqn{f}\) in \(\sphilb{H}_{-\frac{1}{2}}\) converging strongly in the norm derived from the above inner product. On the other hand, \[\norm{\fun{\opfocksegal_{\mathrm{F}}}{f_n} - \fun{\opfocksegal_{\mathrm{F}}}{f_m}}_{\fun{\lp^{2}}{Q_{\sphilb{H}}, \mu_{\sphilb{H}}}}^2
=
\frac{1}{2}
\norm{f_n - f_m}_{-\frac{1}{2}}^2\] shows that \(\seq{\fun{\opfocksegal_{\mathrm{F}}}{f_n}}{\nin}\) is a Cauchy sequence in \(\fun{\lp^{2}}{Q_{\sphilb{H}}, \mu_{\sphilb{H}}}\). Therefore there exists a point \(\fun{\tilde{\opfocksegal}_{\mathrm{F}}}{f}
\in \fun{\lp^{2}}{Q_{\sphilb{H}}, \mu_{\sphilb{H}}}\) satisfying \[\lim_{n \to \infty} \fun{\opfocksegal_{\mathrm{F}}}{f_n}
= \fun{\tilde{\opfocksegal}_{\mathrm{F}}}{f},\] and this limit is independent of the choice of \(\seqn{f}\). If \(f
\in \sphilb{H}_{-\frac{1}{2}}\), setting \(f_n
= f\) for all \(\nin\) gives \(\fun{\tilde{\opfocksegal}_{\mathrm{F}}}{f}
= \fun{\opfocksegal_{\mathrm{F}}}{f}\). From the above argument, the family \(\set{\fun{\tilde{\opfocksegal}_{\mathrm{F}}}{f}}{f \in \gtclos{\sphilb{H}_{-\frac{1}{2}}}}\) containing \(\set{\fun{\opfocksegal_{\mathrm{F}}}{f}}{f \in \sphilb{H}_{-\frac{1}{2}}}\) is a Gaussian hyperprocess indexed by \(\gtclos{\sphilb{H}_{-\frac{1}{2}}}\).

\subsection{Probability Space of Euclidean Quantum Fields}\label{probability-space-of-euclidean-quantum-fields}

Let \(\sphilb{H}_{h,-1}\) be the inner product space that is a subspace of the Hilbert-space-valued tempered distributions \(\fun{\dsttempered}{\fldreal;\sphilb{H}}\), consisting of all maps satisfying the following conditions.

\begin{enumerate}
\def\labelenumi{\arabic{enumi}.}

\item
  The Fourier transform \(\faftr{F}\) of a map \(F
  \in \fun{\dsttempered}{\fldreal;\sphilb{H}}\) is a \(\lacomplexication{\sphilb{H}}\)-valued function on \(\fldreal\).
\item
  For almost all \(p\), \[\faftr{F}(p)
  \in \dom \rbk{h^2+p^2}^{-\frac{1}{2}}\] holds.
\item
  The Hilbert-space-valued function \(p
  \mapsto \rbk{h^2+p^2}^{-\frac{1}{2}} \faftr{F}(p)\) is measurable and satisfies \[\int_{\fldreal}
  \norm{\rbk{h^2+p^2}^{-\frac{1}{2}} \faftr{F}(p)}^2
  \opdmsr{p}
  < \infty.\]
\end{enumerate}

This \(\sphilb{H}_{h,-1}\) is a real vector space that is a real inner product space with inner product \begin{align}
\bkt{F}{G}_{h,-1}
=
2 \int_{\fldreal}
\bkt{\rbk{h^2+p^2}^{-\frac{1}{2}} \faftr{F}(p)}
{\rbk{h^2+p^2}^{-\frac{1}{2}} \faftr{G}(p)}_{\sphilb{H}}
\opdmsr{p} 
\end{align} We complete this with the metric determined by the above inner product, and denote the resulting real Hilbert space by \(\opclos{\sphilb{H}_{h,-1}}\). When no confusion arises, we also write \(\sphilb{H}_{h,-1}\) for its closure.

There exists a Gaussian hyperprocess indexed by the real Hilbert space \(\opclos{\sphilb{H}_{h,-1}}\). Let the probability space on which this Gaussian hyperprocess \(\set{\fun{\opfocksegal_{\txteuclid}}{F}}{F \in \opclos{\sphilb{H}_{h,-1}}}\) is realized be \(\rbk{Q_{\txteuclid},
\mblfml{B}_{\txteuclid},
\mu_{\txteuclid}}\). Since \(\delta_t \times f
\in \opclos{\sphilb{H}_{h,-1}}\) holds for any \(t
\in \fldreal\) and \(f
\in \sphilb{H}_{-\frac{1}{2}}\), we introduce the random variable \begin{align}
\fun{\opfocksegal_{\txteuclid}}{t,f}
=
\fun{\opfocksegal_{\txteuclid}}{\delta_t \times f}
\end{align} For any \(\nin\) and \(t_1,\cdots,f_n
\in \sphilb{H}_{-\frac{1}{2}}\), \begin{align}
&\fun{\opfockschwingerfunctional_n}{t_1,f_1,\cdots,t_n,f_n} \\
&=
\int_{Q_{\txteuclid}}
\fun{\opfocksegal_{\txteuclid}}{t_1,f_1}
\cdots
\fun{\opfocksegal_{\txteuclid}}{t_n,f_n}
\opdmsr{\mu_{\txteuclid}} \\
&=
\physmean{\fun{\opfocksegal_{\txteuclid}}{t_1,f_1}
\cdots
\fun{\opfocksegal_{\txteuclid}}{t_n,f_n}} \label{expedition0010003} 
\end{align} holds. The Gaussian hyperprocess \(\fun{\opfocksegal_{\txteuclid}}{t,f}\) satisfying these properties is called the free Euclidean quantum field associated with the free field \(\opfocksegal(t,f)\).

\subsection{\texorpdfstring{Operators \(j_t\) and \(J_t\)}{Operators j\_t and J\_t}}\label{operators-j_t-and-j_t}

Define the map \(j_t\) between the Hilbert spaces \(\sphilb{H}_{-\frac{1}{2}}
\subset \sphilb{H}\) and \(\sphilb{H}_{h,-1}
\subset \fun{\dsttempered_{\txtreal}}{\fldreal;\sphilb{H}}\) by \begin{align}
j_t
\colon \sphilb{H}_{-\frac{1}{2}}
\to \sphilb{H}_{h,-1}; \quad j_t f
=
\delta_t \times f, \quad f
\in \sphilb{H}_{-\frac{1}{2}} 
\end{align}

\begin{prop}\label{expedition0010021}
The above $j_t$ is an isometry satisfying
\begin{align}
\bkt{j_t f}{j_s g}_{h,-1}
=
\bkt{f}{\napiernum^{-\abs{t-s} h} g}_{-\frac{1}{2}}, \quad f,g
\in \sphilb{H}_{-\frac{1}{2}} \label{expedition0010022} 
\end{align}
In particular, $j_t$ extends uniquely to an isometry from $\gtclos{\sphilb{H}_{-\frac{1}{2}}}$ to $\gtclos{\sphilb{H}_{h,-1}}$, and we also denote this extension by $j_t$.
By the above formula, for any $s,t$,
\begin{align}
\faadj{j_t} j_s
=
\napiernum^{-\abs{t-s} h} \quad s,t \in \fldreal 
\end{align}
holds.
\end{prop}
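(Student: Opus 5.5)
The plan is a direct Fourier-side computation of the $\sphilb{H}_{h,-1}$ inner product of $j_t f = \delta_t \times f$ and $j_s g = \delta_s \times g$, followed by a density argument for the extension and a polarization argument for the operator identity.

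\textbf{Step 1: the inner product.} The Fourier transform of $\delta_t \times f$ is $\faftr{j_t f}(p) = (2\pi)^{-\frac12}\napiernum^{-\imunit p t} f$ (with the paper's unitary normalization). Inserting this into the definition of $\bkt{\placeholder}{\placeholder}_{h,-1}$ gives
\begin{equation}
\bkt{j_t f}{j_s g}_{h,-1}
=
\frac{2}{2\pi}\int_{\fldreal}
\napiernum^{\imunit p (t-s)}
\bkt{f}{\rbk{h^2+p^2}^{-1} g}_{\sphilb{H}}
\opdmsr{p}.
\end{equation}
Using the spectral theorem for $h$ and Fubini, the $p$-integral is done pointwise in the spectral variable $\lambda>0$ via the elementary identity
\begin{equation}
\frac{1}{2\pi}\int_{\fldreal}\frac{\napiernum^{\imunit p \tau}}{\lambda^2+p^2}\opdmsr{p}=\frac{\napiernum^{-\abs{\tau}\lambda}}{2\lambda}.
\end{equation}
Hence the right-hand side equals $\bkt{f}{h^{-1}\napiernum^{-\abs{t-s}h} g} = \bkt{h^{-\frac12}f}{\napiernum^{-\abs{t-s}h}h^{-\frac12}g} = \bkt{f}{\napiernum^{-\abs{t-s}h}g}_{-\frac12}$. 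The prefactor $2$ in the definition of the $h,-1$ inner product is exactly what cancels the $\frac{1}{2\lambda}$, which is how the normalization is fixed.

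\textbf{Step 2: justification of Fubini.} This is the step I expect to be the main obstacle, although it is more bookkeeping than difficulty. Let $\nu_{f,g}$ be the complex spectral measure of $h$ for $f,g$, so $\bkt{f}{\rbk{h^2+p^2}^{-1}g} = \int (\lambda^2+p^2)^{-1}\opdmsr{\nu_{f,g}}(\lambda)$. Absolute integrability of the double integral reduces to
\begin{equation}
\int\!\!\int (\lambda^2+p^2)^{-1}\opdmsr{p}\,\opdmsr{\abs{\nu_{f,g}}}(\lambda)
=
\pi\int \lambda^{-1}\opdmsr{\abs{\nu_{f,g}}}(\lambda).
\end{equation}
By Cauchy--Schwarz for spectral measures this is bounded by $\pi\norm{h^{-\frac12}f}\norm{h^{-\frac12}g}$, which is finite precisely because $f,g\in\sphilb{H}_{-\frac12}$. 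The spectral point $\lambda=0$ carries no mass when $h^{-\frac12}f$ is defined, so no separate case is needed.

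\textbf{Step 3: consequences.} Setting $s=t$ and $g=f$ gives $\norm{j_tf}_{h,-1}=\norm{f}_{-\frac12}$, so $j_t$ is isometric on $\sphilb{H}_{-\frac12}$. By the usual BLT extension of a densely defined bounded linear map, $j_t$ extends uniquely to an isometry $\gtclos{\sphilb{H}_{-\frac12}}\to\gtclos{\sphilb{H}_{h,-1}}$. Both sides of the inner-product formula are continuous in $f,g$ for the $-\frac12$ norm, since $\napiernum^{-\abs{t-s}h}$ commutes with $h^{-\frac12}$ and is a contraction there. The formula therefore persists on the completion.

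Finally, the identity $\bkt{f}{\faadj{j_t}j_s g}_{-\frac12} = \bkt{f}{\napiernum^{-\abs{t-s}h}g}_{-\frac12}$ holds for all $f,g$. By nondegeneracy of the inner product this yields $\faadj{j_t}j_s=\napiernum^{-\abs{t-s}h}$ as operators on $\gtclos{\sphilb{H}_{-\frac12}}$.
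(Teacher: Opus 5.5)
Your proof is correct. The paper states this proposition without any proof; it is taken over from the functional-integral literature the paper cites, so there is no competing argument to compare against. Your computation is the standard derivation the statement relies on: the Poisson-kernel identity $\frac{1}{2\pi}\int_{\mathbb{R}} e^{ip\tau}(\lambda^2+p^2)^{-1}\,dp = e^{-|\tau|\lambda}/(2\lambda)$ inside the spectral integral, Fubini justified by the spectral Cauchy--Schwarz bound $\int \lambda^{-1}\,d|\nu_{f,g}| \le \|h^{-1/2}f\|\,\|h^{-1/2}g\|$, then the BLT extension and the identification of $j_t^{*}j_s$ through the extended formula.

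Two points deserve to be made explicit.

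\begin{enumerate}
\item Your unitary convention $\widehat{\delta_t}(p)=(2\pi)^{-1/2}e^{-ipt}$ is essential, not cosmetic. With $\widehat{\delta_t}(p)=e^{-ipt}$ the right-hand side picks up a factor $2\pi$. The paper never fixes the convention for the time variable, and elsewhere it even writes $\widehat{\delta_0}=1$. So the prefactor $2$ in $\langle\cdot,\cdot\rangle_{h,-1}$ matches $\langle\cdot,\cdot\rangle_{-\frac12}$ only under your normalization, and you should say so.

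\item You should state in one line that $\delta_t\times f$ actually lies in $\mathcal{H}_{h,-1}$. For $p\neq 0$ the operator $(h^2+p^2)^{-1/2}$ is bounded, and $p\mapsto (h^2+p^2)^{-1/2}\widehat{j_t f}(p)$ is continuous there. The diagonal case $g=f$ of your Step 2 estimate then gives, by Tonelli, $\int_{\mathbb{R}}\|(h^2+p^2)^{-1/2}\widehat{j_t f}(p)\|^2\,dp = \frac12\|h^{-1/2}f\|^2<\infty$.
\end{enumerate}
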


Using this, define the second-quantized version of the operator as follows. There exists a unique isometry \(J_t
\colon \fun{\lp^{2}}{Q_{\sphilb{H}}, \mu_{\sphilb{H}}}
\to \fun{\lp^{2}}{Q_{\txteuclid}, \mu_{\txteuclid}}\) satisfying \begin{align}
J_t 1
&=
1, \\ 
J_t \wick{\fun{\opfocksegal_{\mathrm{F}}}{f_1} \cdots \fun{\opfocksegal_{\mathrm{F}}}{f_n}}
&=
\wick{\fun{\opfocksegal_{\txteuclid}}{j_t f_1} \cdots \fun{\opfocksegal_{\txteuclid}}{j_t f_n}} 
\end{align} Noting that \(\opfocksegal_{\txtfock}(f)
= \fun{\opfocksegal}{h^{-\onehalf} f}\), we have \[J_t \wick{\fun{\opfocksegal}{f_1} \cdots \fun{\opfocksegal}{f_n}}
=
\wick{\fun{\opfocksegal_{\txteuclid}}{j_t \times h^{\onehalf} f_1}
\cdots
\fun{\opfocksegal_{\txteuclid}}{j_t \times h^{\onehalf} f_n}}.\]

\subsection{Hamiltonian}\label{hamiltonian}

For the unitary transformation \(U
\colon \fun{\spfock_{\txtbsn}}{\lacomplexication{\sphilb{H}}}
\to \fun{\lp^{2}}{Q_{\sphilb{H}}, \mu_{\sphilb{H}}}\) from Section \ref{expedition0012091}, set \begin{align}
\physham_{0,\txtfock}
=
U \physham_{\txtbsn,\txtfr} \inv{U} 
\end{align} Objects defined on the bosonic Fock space are transplanted to the probability space by the same unitary transformation.

The Hamiltonian and other operators on the bosonic Fock space have counterparts in the \(Q\)-space representation \(\fun{\lp^{2}}{\prbqspace_{\sphilb{H}},\msr{P}_{\sphilb{H}}}\). For brevity, we use the same notation for these when no confusion arises.

\subsection{Deviations from Existing Literature Notation}\label{expedition0012096}

In the following discussion, we modify the variance of the Euclidean field \(\opfocksegal_{\txteuclid}(f)\) as follows, and accordingly revise the definitions of the isometries \(j_t\) and \(J_t\).

The covariance and variance of the Euclidean field are defined as \begin{equation}
\begin{aligned}
\fun{\prbcov_{\msr{P}_{\txteuclid}}}
{\opfocksegal_{\txteuclid}(j_s f),
\opfocksegal_{\txteuclid}(j_t g)}
&=
\onehalf
\bkt{f}{\napiernum^{-\abs{t-s} \physham[h]} g}_{\sphilb{H}}, \\
\sqfun{\prbvar_{\msr{P}_{\txteuclid}}}{\opfocksegal_{\txteuclid}(j_t f)}
&=
\onehalf
\norm{f}_{\sphilb{H}}^2
\end{aligned}
\end{equation} The isometry \(j_t\) is revised to \[j_t
\colon \sphilb{H}
\to \sphilb{H}_{\physham[h],-1};
\quad
j_t f
= \diracdelta_t \times f,\] and the isometry \(J_t\) is revised to \[J_t
\colon \fun{\lp^{2}}{\prbqspace_{\sphilb{H}},\msr{P}_{\sphilb{H}}}
\to \fun{\lp^{2}}{\prbqspace_{\txteuclid},\msr{P}_{\txteuclid}}\] with action \begin{equation}
\begin{aligned}
J_t 1
&=
1, \\
J_t \wick{\fun{\opfocksegal}{f_1} \cdots \fun{\opfocksegal}{f_n}}
&=
\wick{\fun{\opfocksegal_{\txteuclid}}{j_t f_1} \cdots \fun{\opfocksegal_{\txteuclid}}{j_t f_n}}
\end{aligned}
\end{equation} For any \(F
\in \sphilb{H}_{\physham[h],-1}\), \[\sqfun{\prbexp_{\msr{P}_{\txteuclid}}}{
\opfocksegal_{\txteuclid}(F)}
=
\fnexp{\onehalf
\sqfun{\prbvar_{\msr{P}_{\txteuclid}}}{\opfocksegal_{\txteuclid}(F)}}
=
\fnexp{\oneoverfour \norm{F}_{\sphilb{H}_{\physham[h],-1}}^2}\] holds.

\subsection{Pair Potential}\label{pair-potential}

From here we adopt the notation for the van Hove model. In particular, we replace the general self-adjoint operator \(\physham[h]\) with the dispersion relation \(\omega\).

With the modifications of the previous subsection in mind, define the pair potential of the van Hove model by \[W_{\kappa,\Lambda}(t)
=
\int_{0}^{t}
\fun{j_s}{\omega \mathsf{m}_{\kappa,\Lambda}}
\opdmsr{s}
\in
\sphilb{H}_{\omega,-1}.\]

Define the notation \(\opfocksegal_{\txteuclid}(t,f)
= \opfocksegal_{\txteuclid}(j_t f)\) and, following the notation simplification convention, let \(\physham_{\txtvanhowe,\kappa,\Lambda}\) denote the van Hove model Hamiltonian on \(\prbqspace_{\sphilb{H}}\). With the modifications of the previous subsection in mind, by \cite{AsaoArai30,LorincziHiroshimaBetz3} the functional integral representation \[\bkt{F}
{\napiernum^{-t \physham_{\txtvanhowe,\kappa,\Lambda}} G}
_{\fun{\lp^{2}}{\prbqspace_{\sphilb{H}},\msr{P}_{\sphilb{H}}}}
=
\sqfun{\prbexp_{\msr{P}_{\txteuclid}}}
{\cmpconj{J_0 F}
\cdot
J_t G
\cdot
\napiernum^{-\wick{\, \fun{\opfocksegal_{\txteuclid}}{W_{\kappa,\Lambda}(t)} \,}}}\] holds. By the definition of the Wick product for random variables, the Wick product of a first-order term satisfies \[\wick{\fun{\opfocksegal_{\txteuclid}}{f}}
=
\fun{\opfocksegal_{\txteuclid}}{f}
-\sqfun{\prbexp_{\msr{P}_{\txteuclid}}}{\fun{\opfocksegal_{\txteuclid}}{f}}
=
\fun{\opfocksegal_{\txteuclid}}{f},\] so for the van Hove Hamiltonian one may simply write \[\bkt{F}
{\napiernum^{-t \physham_{\txtvanhowe,\kappa,\Lambda}} G}
_{\fun{\lp^{2}}{\prbqspace_{\sphilb{H}},\msr{P}_{\sphilb{H}}}}
=
\sqfun{\prbexp_{\msr{P}_{\txteuclid}}}
{\cmpconj{J_0 F}
\cdot
J_t G
\cdot
\napiernum^{-\fun{\opfocksegal_{\txteuclid}}{W_{\kappa,\Lambda}(t)}}}.\]

\begin{lem}\label{expedition0011363}
For a positive number $\omega(k)
> 0$, define two double integrals by
\begin{equation}
\begin{aligned}
I_T^{(1)}(\omega(k))
&=
\int_{0}^{T}
\int_{0}^{T}
\napiernum^{-\abs{t-s} \omega(k)}
\opdmsr{t}
\opdmsr{s}, \\
I_T^{(2)}(\omega(k))
&=
\int_{-T}^T
\int_{-T}^T
\napiernum^{-\abs{t-s} \omega(k)}
\opdmsr{t}
\opdmsr{s}
\end{aligned}
\end{equation}
These can be evaluated as
\begin{equation}
\begin{aligned}
I_T^{(1)}(\omega(k))
&=
\frac{2T}{\omega(k)}
-\frac{2}{\omega(k)^2}
\rbk{1 - \napiernum^{-T \omega(k)}}, \\
I_{T}^{(2)}(\omega(k))
&=
\frac{4T}{\omega(k)}
-\frac{2}{\omega(k)^2}
\rbk{1 - \napiernum^{-2 T \omega(k)}}
\end{aligned}
\end{equation}
\end{lem}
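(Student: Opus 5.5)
We evaluate the two double integrals directly by splitting the square of integration along the diagonal. Write $a = \omega(k) > 0$ for brevity. The integrand $\napiernum^{-\abs{t-s} a}$ is symmetric under $(t,s) \mapsto (s,t)$. Hence, for $I_T^{(1)}$, the integral over $\closedinterval{0}{T}^2$ equals twice the integral over the triangle $\set{(t,s)}{0 \leq s \leq t \leq T}$, where the absolute value can be dropped:
$$I_T^{(1)}(a) = 2 \int_0^T \int_0^t \napiernum^{-(t-s) a} \opdmsr{s} \opdmsr{t}.$$

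The inner integral gives $\frac{1}{a}\rbk{1 - \napiernum^{-t a}}$. Integrating this in $t$ over $\closedinterval{0}{T}$ gives
$$\frac{T}{a} - \frac{1}{a^2}\rbk{1 - \napiernum^{-T a}}.$$
Multiplying by $2$ yields the claimed formula
$$I_T^{(1)}(a) = \frac{2T}{a} - \frac{2}{a^2}\rbk{1 - \napiernum^{-Ta}}.$$

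For $I_T^{(2)}$, the integrand depends only on the difference $t-s$. So we translate both variables by $T$, i.e.\ set $t' = t+T$ and $s' = s+T$. This maps $\closedinterval{-T}{T}^2$ onto $\closedinterval{0}{2T}^2$ with unit Jacobian, so $I_T^{(2)}(a) = I_{2T}^{(1)}(a)$. Substituting $2T$ for $T$ in the first formula gives
$$\frac{4T}{a} - \frac{2}{a^2}\rbk{1 - \napiernum^{-2Ta}},$$
which is the stated expression.

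There is no real obstacle. The only point needing care is the symmetry reduction, since the diagonal has measure zero and contributes nothing. All integrals are of continuous bounded functions over compact sets, so Fubini's theorem applies without further comment.
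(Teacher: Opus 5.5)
Your proof is correct.

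For $I_T^{(1)}$ it is essentially the paper's computation. The paper also exploits the symmetry of the integrand. It writes $I_T^{(1)} = 2\int_0^T (T-s)\,\napiernum^{-s\omega}\opdmsr{s}$, which is the difference-variable form of your triangle integral, and evaluates it directly.

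For $I_T^{(2)}$ you take a different and shorter route. You note that the integrand depends only on $t-s$, translate $\closedinterval{-T}{T}^2$ onto $\closedinterval{0}{2T}^2$ to get $I_T^{(2)} = I_{2T}^{(1)}$, and read off the result from the first formula with $T$ replaced by $2T$. The paper instead computes $I_T^{(2)}$ from scratch. It first evaluates the inner integral $I(s) = \int_{-T}^{T}\napiernum^{-\abs{t-s}\omega}\opdmsr{t} = \frac{2}{\omega} - \frac{\napiernum^{-(T-s)\omega} + \napiernum^{-(T+s)\omega}}{\omega}$ by splitting at $u = t-s = 0$, and then integrates this in $s$ over $\closedinterval{-T}{T}$.

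Your reduction avoids that second computation and explains why the two formulas share the same structure, with $\napiernum^{-2T\omega}$ in place of $\napiernum^{-T\omega}$. It also bypasses a typo in the paper's displayed intermediate expression for $I(s)$. The paper's direct route gives the explicit profile $I(s)$ as a by-product, but that profile is not used anywhere else.
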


\begin{rem}
The first term on the right-hand side is the linearly divergent self-energy; the second term is finite and leads to the Coulomb exchange energy.
In particular, it is useful to recall the Fourier transform of the Coulomb kernel with $\omega(k)
= \abs{k}$: $$\frac{1}{(2 \pi)^3}
\int_{\fldreal^{d}}
\frac{\napiernum^{\imunit k \vainnprod} x}{\abs{k}^2}
\opdmsr{k}
=
\frac{1}{4 \pi \abs{x}}.$$
\end{rem}

\begin{proof}
By careful case analysis and direct computation, $$I_T^{(1)}(k)
=
2
\int_0^T
(T - s)
\napiernum^{- s \omega(k)}
\opdmsr{s}
=
\frac{2T}{\omega(k)}
-\frac{2 \rbk{1 - \napiernum^{-T \omega(k)}}}{\omega(k)^2}$$ is obtained.

We discuss the second integral $I_T^{(2)}(\omega(k))$.
For notational convenience, set $$I(s)
=
\int_{-T}^T
\napiernum^{-\abs{t-s} \omega}
\opdmsr{t},
\quad
I
=
\int_{-T}^T
I(s)
\opdmsr{s}.$$
For any $s
\in \closedinterval{-T}{T}$, substituting $u
= t-s$ gives $\opdmsr{t}
=
\opdmsr{u}$ and
\begin{equation}
\begin{aligned}
&I(s)
=
\int_{-T-s}^{T-s}
\napiernum^{-\abs{u} \omega}
\opdmsr{u}
=
\int_0^{T-s} \napiernum^{-u \omega} \opdmsr{u}
+\int_0^{T+s} \napiernum^{-u \omega} \opdmsr{u} \\
&=
\frac{1 - \napiernum^{-(T-s) \omega}}{\omega}
+\frac{1 - \napiernum^{-(T+s) \omega}}{\omega} \\
&=
\frac{2}{\omega}
-\frac{\napiernum^{-(T-s) \omega} + \napiernum^{-(T-s)} \omega}{\omega}
\end{aligned}
\end{equation}
holds.

Next, we evaluate the $s$-integral.
First,
\begin{equation}
\begin{aligned}
&I
=
\int_{-T}^{T}
\rbk{\frac{2}{\omega}
-\frac{\napiernum^{-(T-s) \omega} + \napiernum^{-(T-s)} \omega}{\omega}}
\opdmsr{s} \\
&=
\frac{4T}{\omega}
-\frac{1}{\omega}
\rbk{\int_{-T}^{T} \napiernum^{-(T-s) \omega} \opdmsr{s}
+\int_{-T}^{T} \napiernum^{-(T+s) \omega} \opdmsr{s}}
\end{aligned}
\end{equation}
holds.

For the first integral in the second term, the substitution $t
= T-s$ gives $$\int_{-T}^{T} \napiernum^{-(T-s) \omega} \opdmsr{s}
=
\int_{2T}^{0} \napiernum^{-\omega t} (-\opdmsr{t})
=
\int_{0}^{2T} \napiernum^{-\omega t} \opdmsr{t}
=
\frac{1 - \napiernum^{-2 T \omega}}{\omega},$$ and for the second, $u
= T+s$ gives $$\int_{-T}^{T} \napiernum^{-(T+s) \omega} \opdmsr{s}
=
\int_0^{2T} \napiernum^{-\omega u} \opdmsr{u}
=
\frac{1 - \napiernum^{-2 T \omega}}{\omega}.$$
Collecting these completes the proof.
\end{proof}

\begin{prop}\label{expedition0011362}
The norm of the pair potential $W_{\kappa,\Lambda}(T)
=
\int_{0}^{t}
j_s(\omega \mathsf{m}_{\kappa,\Lambda})
\opdmsr{s}$ is, using $\physgse(\physham_{\txtvanhowe,\kappa,\Lambda})
=
-\onehalf
\norm{\omega^{\onehalf} \mathsf{m}_{\kappa,\Lambda}}_{\sphilb{H}}^2$,
\begin{equation}
\begin{aligned}
&\norm{W_{\kappa,\Lambda}(T)}_{\sphilb{H}_{\omega,-1}}^2
\\ 
&=
2
\int_{\fldreal^{d}}
\abs{\omega(k)^{\onehalf} \faftr{\mathsf{m}_{\kappa,\Lambda}}(k)}^2
\rbk{T
-\frac{1 - \napiernum^{-T \omega(k)}}{\omega(k)}}
\opdmsr{k}
\\ 
&=
-4
\physgse(\physham_{\txtvanhowe,\kappa,\Lambda})
-2
\bkt{\mathsf{m}_{\kappa,\Lambda}}
{\rbk{1 - \napiernum^{-T \omega}}\mathsf{m}_{\kappa,\Lambda}}
_{\sphilb{H}}
\end{aligned}
\end{equation}
A useful form for later use is
\begin{equation}
\begin{aligned}
&\oneoverfour
\norm{W_{\kappa,\Lambda}(T)}_{\sphilb{H}_{\omega,-1}}^2
\\ 
&=
-T \physgse(\physham_{\txtvanhowe,\kappa,\Lambda})
-\onehalf
\bkt{\mathsf{m}_{\kappa,\Lambda}}
{\rbk{1 - \napiernum^{-T \omega}} \mathsf{m}_{\kappa,\Lambda}}
_{\sphilb{H}}
\end{aligned}
\end{equation}
In particular, the $\msr{P}_{\txteuclid}$-expectation value of the random variable $-\opfocksegal_{\txteuclid}(W_{\kappa,\Lambda}(T))$ is
\begin{equation}
\begin{aligned}
&\sqfun{\prbexp_{\msr{P}_{\txteuclid}}}
{\napiernum^{-\opfocksegal_{\txteuclid}(W_{\kappa,\Lambda}(T))}} \\
&=
\fnexp{\onehalf
\int_{\fldreal^{d}}
\abs{\omega(k)^{\onehalf}
\faftr{\mathsf{m}_{\kappa,\Lambda}}(k)}^2
\rbk{T
-\frac{1 - \napiernum^{-T \omega(k)}}{\omega(k)}}
\opdmsr{k}}
\\
&=
\fnexp{-T \physgse(\physham_{\txtvanhowe,\kappa,\Lambda})
-\onehalf
\bkt{\mathsf{m}_{\kappa,\Lambda}}
{\rbk{1 - \napiernum^{-T \omega}}
\mathsf{m}_{\kappa,\Lambda}}
_{\sphilb{H}}}
\end{aligned}
\end{equation}
and the Wick product is $$\wick{\napiernum^{-\opfocksegal_{\txteuclid}(W_{\kappa,\Lambda}(T))}}
=
\fnexp{-\oneoverfour \norm{W_{\kappa,\Lambda}(T)}_{\sphilb{H}_{\omega,-1}}^2}
\napiernum^{-\opfocksegal_{\txteuclid}(W_{\kappa,\Lambda}(T))}.$$
\end{prop}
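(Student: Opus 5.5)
The plan is to compute the norm of $W_{\kappa,\Lambda}(T)=\int_0^T j_s(\omega\mathsf{m}_{\kappa,\Lambda})\,ds$ by bilinearity and then read off the remaining claims from Gaussian identities. The covariance convention of Subsection~\ref{expedition0012096} gives $\bkt{j_s f}{j_t g}_{\sphilb{H}_{\omega,-1}} = \bkt{f}{\napiernum^{-\abs{t-s}\omega} g}_{\sphilb{H}}$; this is twice $\frac{1}{2}\bkt{f}{\napiernum^{-\abs{t-s}\omega}g}$, matching $\prbvar = \frac{1}{2}\norm{\cdot}^2$. Expanding the Bochner integral, I obtain
\[
\norm{W_{\kappa,\Lambda}(T)}_{\sphilb{H}_{\omega,-1}}^2
=
\int_0^T\int_0^T \bkt{\omega\mathsf{m}_{\kappa,\Lambda}}{\napiernum^{-\abs{t-s}\omega}\omega\mathsf{m}_{\kappa,\Lambda}}\,ds\,dt .
\]
Passing to momentum space with Plancherel and Fubini, the integrand is nonnegative and bounded, and $\omega\faftr{\mathsf{m}_{\kappa,\Lambda}}$ is square integrable thanks to the cutoffs. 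This gives $\int \omega(k)^2\abs{\faftr{\mathsf{m}_{\kappa,\Lambda}}(k)}^2 I_T^{(1)}(\omega(k))\,dk$. Inserting Lemma~\ref{expedition0011363}, $I_T^{(1)}=\frac{2}{\omega}\bigl(T-\frac{1-\napiernum^{-T\omega}}{\omega}\bigr)$, yields the first displayed formula with the factor $2\abs{\omega^{1/2}\faftr{\mathsf{m}_{\kappa,\Lambda}}}^2$.

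Next I split the bracket into two pieces. The $T$-term is $2T\norm{\omega^{1/2}\mathsf{m}_{\kappa,\Lambda}}^2 = -4T\physgse$, using the stated value of $\physgse$. The remaining term is $-2\int\abs{\faftr{\mathsf{m}_{\kappa,\Lambda}}}^2(1-\napiernum^{-T\omega})\,dk = -2\bkt{\mathsf{m}_{\kappa,\Lambda}}{(1-\napiernum^{-T\omega})\mathsf{m}_{\kappa,\Lambda}}$. The first term should therefore read $-4T\physgse$; the statement's ``$-4\physgse$'' is presumably a typo, since the quarter-norm form displays $-T\physgse$. Dividing by four gives the ``useful form''. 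Finiteness of $\bkt{\mathsf{m}}{(1-\napiernum^{-T\omega})\mathsf{m}}$ uses $1-\napiernum^{-T\omega}\le \min(1,T\omega)$, so only $\omega^{1/2}\mathsf{m}_{\kappa,\Lambda}\in\sphilb{H}$ is needed; the infrared cutoff is not even required here.

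For the expectation, $\opfocksegal_{\txteuclid}(W)$ is a centered Gaussian with variance $\frac{1}{2}\norm{W}^2_{\sphilb{H}_{\omega,-1}}$. Hence $\prbexp[\napiernum^{-\opfocksegal_{\txteuclid}(W)}]=\napiernum^{\frac{1}{2}\prbvar}=\napiernum^{\frac{1}{4}\norm{W}^2}$, and substituting the two forms just obtained gives both displayed lines. Here one must check that the factor $\frac{1}{2}$ in the exponent agrees with $\frac{1}{4}\cdot 2$. The Wick product formula is then the definition of $\wick{\napiernum^{\alpha\opfocksegal(F)}}$ with $\alpha=-1$, recast using $\opform{q}$ as the $\sphilb{H}_{\omega,-1}$-norm; its normalization is consistent with the computed expectation.

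The main obstacle is bookkeeping rather than analysis. I need to justify interchanging the time integrals with the $\sphilb{H}_{\omega,-1}$ inner product, i.e.\ that $W_{\kappa,\Lambda}(T)$ is a well-defined element. I would handle this by approximating with Riemann sums of $j_s$ and using continuity of $s\mapsto j_s f$ from Proposition~\ref{expedition0010021}. I also need to keep the normalization factors $2$ and $\frac{1}{2}$ consistent across the modified conventions.
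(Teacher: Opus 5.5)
Your proposal is correct and follows the paper's proof essentially step for step. You expand $\|W_{\kappa,\Lambda}(T)\|^2_{\mathcal{H}_{\omega,-1}}$ by bilinearity via $\langle j_s f, j_t g\rangle = \langle f, e^{-|t-s|\omega} g\rangle_{\mathcal{H}}$, insert $I_T^{(1)}$ from Lemma~\ref{expedition0011363}, and read off the exponential moment and the Wick normalization from Gaussianity with variance $\tfrac12\|W_{\kappa,\Lambda}(T)\|^2$. Your bookkeeping is in fact cleaner than the written proof, which calls $\tfrac14\|W_{\kappa,\Lambda}(T)\|^2$ the variance rather than half the variance. You are also right that the last line of the first display should read $-4TE_0$, as the quarter-norm form requires.
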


\begin{proof}
Using the double integral from Lemma \ref{expedition0011363}, $$I_T^{(1)}(k)
=
\int_{0}^{T}
\int_{0}^{T}
\napiernum^{-\abs{t-s} \omega(k)}
\opdmsr{t}
\opdmsr{s}
=
\frac{2}{\omega(k)}
\rbk{T
-\frac{1 - \napiernum^{-T \omega(k)}}{\omega(k)}},$$ we obtain
\begin{equation}
\begin{aligned}
&\norm{W_{\kappa,\Lambda}(T)}
_{\sphilb{H}_{\omega,-1}}^2 \\
&=
\int_0^{T}
\int_0^{T}
\bkt{\fun{j_s}{\omega \mathsf{m}_{\kappa,\Lambda}}}
{\fun{j_t}{\omega \mathsf{m}_{\kappa,\Lambda}}}
_{\sphilb{H}_{\omega,-1}}
\opdmsr{t}
\opdmsr{s} \\
&=
\int_0^{T}
\int_0^{T}
\bkt{\omega \mathsf{m}_{\kappa,\Lambda}}
{\napiernum^{-\abs{t-s} \omega} \omega \mathsf{m}_{\kappa,\Lambda}}
_{\sphilb{H}}
\opdmsr{t}
\opdmsr{s} \\
&=
2
\bkt{\mathsf{m}_{\kappa,\Lambda}}
{\omega \rbk{T - \frac{1 - \napiernum^{-T \omega}}{\omega}} \mathsf{m}_{\kappa,\Lambda}}
_{\sphilb{H}}
\end{aligned}
\end{equation}
is obtained.

We compute the expectation value $\sqfun{\prbexp_{\msr{P}_{\txteuclid}}}{\napiernum^{-\opfocksegal_{\txteuclid}(W_{\kappa,\Lambda}(T))}}$.
Since the variance is $\oneoverfour \norm{W_{\kappa,\Lambda}(T)}
_{\sphilb{H}_{\omega,-1}}^2$, using the first-half result,
\begin{equation}
\begin{aligned}
&\sqfun{\prbexp_{\msr{P}_{\txteuclid}}}
{\napiernum^{-\opfocksegal_{\txteuclid}(W_{\kappa,\Lambda}(T))}} \\
&=
\fnexp{\onehalf
\int_{\fldreal^{d}}
\abs{\omega(k)^{\onehalf}
\mathsf{m}_{\kappa,\Lambda}}^2
\rbk{T
-\frac{1 - \napiernum^{-T \omega(k)}}{\omega(k)}}
\opdmsr{k}}
\end{aligned}
\end{equation}
is obtained.
\end{proof}

\subsection{Facts}\label{facts}

The following theorem is useful for estimating the ground state energy.

\begin{thm}\label{expedition0004996}
Let $H$ be a bounded-below self-adjoint operator on a complex Hilbert space $\sphilb{H}$, let $\opspecmeas_H$ be the spectral measure of $H$, let $\physgse(\physham)$ be the ground state energy, and let $S_H(\beta)
= \napiernum^{-\beta \physham}$ be the semigroup generated by $H$.
Define the subspace of $\sphilb{H}$ by
\begin{align}
\sphilb{E}(H)
=
\set{\psi \in \sphilb{H} \setminus \setone{0}}
{\parbox{17em}
{there exists a constant $\ep_0$ such that $$E_H\rbk{\rightopeninterval{\physgse(\physham)}{\physgse(\physham) + \ep}} \psi
\neq 0$$ for all $\ep \in (0, \ep_0)$}}
\label{expedition0004997} 
\end{align}
Then for all $\psi
\in \sphilb{E}(H)$, $$\physgse(\physham)
=
-\lim_{\beta \to \infty}
\frac{\log \bkt{\psi}{\napiernum^{-\beta \physham} \psi}}
{\beta}$$ holds.
\end{thm}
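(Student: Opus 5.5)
We prove Theorem~\ref{expedition0004996} by reducing it to a statement about the Laplace transform of a finite positive measure. Fix $\psi \in \sphilb{E}(H)$ and let $\mu_\psi(\cdot) = \bkt{\psi}{E_H(\cdot)\psi}$ be its spectral measure. This is a finite positive Borel measure supported in $\rightopeninterval{\physgse(\physham)}{\infty}$, since $\physgse(\physham) = \inf \opspec{H}$. The functional calculus gives
\[
\bkt{\psi}{\napiernum^{-\beta \physham}\psi} = \int_{\rightopeninterval{\physgse(\physham)}{\infty}} \napiernum^{-\beta \lambda} \opdmsr{\mu_\psi(\lambda)},
\]
so the claim becomes: $-\beta^{-1}\log$ of this integral tends to $\physgse(\physham)$. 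We prove this by separate upper and lower bounds for the integral.

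\emph{Upper bound for the integral (lower bound for the energy).} On the support of $\mu_\psi$ we have $\lambda \geq \physgse(\physham)$, so $\napiernum^{-\beta\lambda} \leq \napiernum^{-\beta \physgse(\physham)}$. Hence
\[
\bkt{\psi}{\napiernum^{-\beta \physham}\psi} \leq \napiernum^{-\beta \physgse(\physham)} \norm{\psi}^2.
\]
Since $\beta^{-1}\log\norm{\psi}^2 \to 0$, this yields
\[
\liminf_{\beta\to\infty} \rbk{-\beta^{-1}\log \bkt{\psi}{\napiernum^{-\beta \physham}\psi}} \geq \physgse(\physham).
\]

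\emph{Lower bound for the integral (upper bound for the energy).} Fix $\ep \in (0,\ep_0)$. The hypothesis $E_H\rbk{\rightopeninterval{\physgse(\physham)}{\physgse(\physham)+\ep}}\psi \neq 0$ means
\[
c_\ep := \mu_\psi\rbk{\rightopeninterval{\physgse(\physham)}{\physgse(\physham)+\ep}} = \norm{E_H\rbk{\rightopeninterval{\physgse(\physham)}{\physgse(\physham)+\ep}}\psi}^2 > 0.
\]
Restricting the integral to this interval, where $\napiernum^{-\beta\lambda} \geq \napiernum^{-\beta(\physgse(\physham)+\ep)}$, gives
\[
\bkt{\psi}{\napiernum^{-\beta \physham}\psi} \geq c_\ep \, \napiernum^{-\beta(\physgse(\physham)+\ep)}.
\]
Taking $-\beta^{-1}\log$ and letting $\beta\to\infty$, the term $\beta^{-1}\log c_\ep$ vanishes because $c_\ep$ does not depend on $\beta$. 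Thus $\limsup_{\beta\to\infty} \rbk{-\beta^{-1}\log\bkt{\psi}{\napiernum^{-\beta \physham}\psi}} \leq \physgse(\physham)+\ep$. Letting $\ep \downarrow 0$ and combining with the previous bound, the limit exists and equals $\physgse(\physham)$.

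The only real obstacle is the lower bound. It needs the spectral mass of $\psi$ near the bottom of the spectrum to be nonzero at every scale $\ep$, and the hypothesis defining $\sphilb{E}(H)$ is exactly this condition. Positivity of $\bkt{\psi}{\napiernum^{-\beta\physham}\psi}$, which the logarithm requires, also follows from $c_\ep > 0$. The remaining steps are routine spectral calculus.
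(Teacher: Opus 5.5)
Your proof is correct and complete. The upper bound $\langle\psi, e^{-\beta H}\psi\rangle \le e^{-\beta E_0}\|\psi\|^2$ gives the $\liminf$ estimate. The lower bound $\langle\psi, e^{-\beta H}\psi\rangle \ge \mu_\psi([E_0,E_0+\varepsilon))\,e^{-\beta(E_0+\varepsilon)}$ gives the $\limsup$ estimate and also guarantees the positivity needed for the logarithm.

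The paper itself gives no proof of this theorem. It is listed under ``Facts'' and then simply invoked with $\psi=1$ in the ground-state-energy computations, so your spectral-measure argument is the standard one that fills that gap.

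One point you use implicitly should be stated: you read the constant $\varepsilon_0$ in the definition of $\mathcal{E}(H)$ as positive. This reading is necessary. If $\varepsilon_0\le 0$ the defining condition is vacuous, and the claim fails, for example for an eigenvector lying above the bottom of the spectrum. With $\varepsilon_0>0$, monotonicity shows the hypothesis is equivalent to $E_0(H)\in\operatorname{supp}\mu_\psi$.
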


\begin{prop}\label{expedition0011392}
The mean of the exponential Wick product
\begin{equation}
\begin{aligned}
\wick{\napiernum^{\alpha \opfocksegal(f)}}_{\mu}
&=
\fnexp{-\frac{\alpha^2}{4} \opform{q}_{\txtnonzero,\infty}(f)}
\napiernum^{\alpha \opfocksegal(f)},
\\ 
\napiernum^{\alpha \opfocksegal(f)}
&=
\fnexp{+\frac{\alpha^2}{4} \opform{q}_{\txtnonzero,\infty}(f)}
\wick{\napiernum^{\alpha \opfocksegal(f)}}_{\mu}
\end{aligned}
\end{equation}
is $$\sqfun{\prbexp}{\opfocksegal(f)}
=
\fnexp{\oneoverfour \opform{q}_{\txtnonzero,\infty}(f)}.$$
The product of exponential Wick products is $$\wick{\napiernum^{\alpha \opfocksegal(f)}} \cdot \wick{\napiernum^{\beta \opfocksegal(g)}}
=
\fnexp{\frac{\alpha \beta}{2} \bkt{f}{g}}
\wick{\napiernum^{\opfocksegal(\alpha f + \beta g)}}.$$
The action of the free Hamiltonian on the exponential Wick product is $$\physham_{\txtbsn,\txtfr}
\wick{\napiernum^{\fun{\opfocksegal}{f}}}
_{\msr{P}_{\sphilb{H}}}
=
\wick{\opfocksegal(\omega f)
\napiernum^{\fun{\opfocksegal}{f}}}
_{\msr{P}_{\sphilb{H}}}.$$
The action of the Segal field operator on the exponential Wick product is $$\opfocksegal(h)
\wick{\napiernum^{\fun{\opfocksegal}{f}}}
_{\msr{P}_{\sphilb{H}}}
=
\wick{\opfocksegal(h)
\napiernum^{\fun{\opfocksegal}{f}}}
_{\msr{P}_{\sphilb{H}}}
+\onehalf
\bkt{h}{f}_{\sphilb{H}}
\wick{\napiernum^{\fun{\opfocksegal}{f}}}
_{\msr{P}_{\sphilb{H}}}.$$
\end{prop}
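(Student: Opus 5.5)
The four claims are of different natures, so I would treat them separately. Throughout I would use the Gaussian structure of $\opfocksegal$ under $\mu=\msr{P}_{\sphilb{H}}$ (mean $0$, covariance $\onehalf\bkt{f}{g}$) and the unitary $U$ of Subsection~\ref{expedition0012091} with $h=\idone$.

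\textbf{Mean.} The first claim should read $\sqfun{\prbexp}{\napiernum^{\alpha\opfocksegal(f)}}=\fnexp{\frac{\alpha^2}{4}\opform{q}_{\txtnonzero,\infty}(f)}$. The printed $\sqfun{\prbexp}{\opfocksegal(f)}$ is a typo, since that mean is $0$. The corrected identity is the moment generating function of a centered Gaussian variable with variance $\onehalf\norm{f}^2$. It is obtained by analytic continuation of the characteristic function $t\mapsto\napiernum^{-\frac{t^2}{4}\norm{f}^2}$ to $t=-\imunit\alpha$; Gaussian tails make the continuation legitimate. The two displayed relations between $\napiernum^{\alpha\opfocksegal(f)}$ and $\wick{\napiernum^{\alpha\opfocksegal(f)}}$ are then just the definition and its inverse, and together they give the normalization $\prbexp[\wick{\cdot}]=1$.

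\textbf{Product.} This is exactly Proposition~\ref{expedition0011275}, so I would simply cite it.

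\textbf{Action of $\physham_{\txtbsn,\txtfr}$ and of $\opfocksegal(h)$.} I would transport everything to Fock space.

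\begin{enumerate}
\item Expand $\wick{\napiernum^{\opfocksegal(f)}}=\sum_n\frac{1}{n!}\wick{\opfocksegal(f)^n}$ in $\lp^2$. Statement (1) of Subsection~\ref{expedition0012091} gives $U^{-1}\wick{\opfocksegal(f)^n}=2^{-n/2}\opfockcr(f)^n\opfockvac_{\txtbsn}$. Summing, $U^{-1}\wick{\napiernum^{\opfocksegal(f)}}=\varepsilon(f/\sqrt2)$, where $\varepsilon(g)=\sum_n\frac{1}{n!}\opfockcr(g)^n\opfockvac_{\txtbsn}$ is the exponential (coherent) vector.

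\item Use the standard Fock-space facts $\fun{\opfocksndqntdiff_{\txtbsn}}{\omega}\varepsilon(g)=\opfockcr(\omega g)\varepsilon(g)$ for $g\in\dom\omega$, and $\opfockan(h)\varepsilon(g)=\bkt{h}{g}\varepsilon(g)$.

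\item Identify the right-hand sides. Define $\wick{\opfocksegal(k)\napiernum^{\opfocksegal(f)}}=\od{}{s}\wick{\napiernum^{\opfocksegal(f+sk)}}\big|_{s=0}$. Under $U^{-1}$ this becomes $\od{}{s}\varepsilon((f+sk)/\sqrt2)\big|_{s=0}=\frac{1}{\sqrt2}\opfockcr(k)\varepsilon(f/\sqrt2)$.

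\item Taking $k=\omega f$ gives the free-Hamiltonian formula directly from step 2, since $\opfockcr(\omega f/\sqrt2)=\frac{1}{\sqrt2}\opfockcr(\omega f)$.

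\item For the Segal field, write $\opfocksegal(h)\varepsilon(g)=\frac{1}{\sqrt2}\opfockcr(h)\varepsilon(g)+\frac{1}{\sqrt2}\bkt{h}{g}\varepsilon(g)$ with $g=f/\sqrt2$. Step 3 turns the first term into $\wick{\opfocksegal(h)\napiernum^{\opfocksegal(f)}}$, and the second term is $\onehalf\bkt{h}{f}\wick{\napiernum^{\opfocksegal(f)}}$.
\end{enumerate}

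As a cross-check for the Segal-field formula, I would differentiate Proposition~\ref{expedition0011275} (with $\alpha=s$, $g=f$, $\beta=1$) at $s=0$. This reproduces the same constant $\onehalf\bkt{h}{f}$, because $\opfocksegal(h)=\od{}{s}\wick{\napiernum^{s\opfocksegal(h)}}\big|_{s=0}$.

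The main obstacle is bookkeeping rather than analysis. The factors $2^{\pm n/2}$ and $\frac{1}{\sqrt2}$ must be tracked consistently through the convention $\opfocksegal=(\opfockcr+\opfockan)/\sqrt2$ and the modified variance of Subsection~\ref{expedition0012096}. One also needs to justify differentiating in $s$ inside $\lp^2$, which holds because $s\mapsto\varepsilon((f+sk)/\sqrt2)$ is entire. For the free-Hamiltonian formula, $f$ must be restricted to $\dom\omega$ so that the exponential vector lies in $\dom\fun{\opfocksndqntdiff_{\txtbsn}}{\omega}$.
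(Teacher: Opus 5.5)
The paper gives no proof of this proposition: it sits in the ``Facts'' subsection as a quoted standard result. The only piece the paper actually argues is the product rule, Proposition~\ref{expedition0011275}, which you cite just as the paper would. Your argument is correct and supplies what the paper omits.
\begin{itemize}
\item Your reading of the misprinted mean as $\mathbb{E}[e^{\alpha\phi(f)}]=\exp(\frac{\alpha^2}{4}\|f\|^2)$ is the right one.
\item The identification $U^{-1}\wick{e^{\phi(f)}}=\varepsilon(f/\sqrt2)$ carries the correct factors: both sides have squared norm $e^{\|f\|^2/2}$.
\item Your definition $\wick{\phi(k)e^{\phi(f)}}=\frac{d}{ds}\wick{e^{\phi(f+sk)}}\big|_{s=0}$ is exactly the convention the paper uses later. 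In the proof of the proposition following Definition~\ref{expedition0011500}, it differentiates $\wick{e^{\phi(f+t\omega f)}}$ at $t=0$. So your reading of the otherwise undefined right-hand sides is the intended one.
\end{itemize}

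Two small remarks. First, the Segal-field identity needs no Fock space. In $Q$-space $\phi(h)$ acts by multiplication, and your definition directly gives $\wick{\phi(h)e^{\phi(f)}}=\bigl(\phi(h)-\frac12\langle h,f\rangle\bigr)\wick{e^{\phi(f)}}$, which is the claim after rearranging. So what you call a cross-check is already a complete proof. In it the first test function should be $h$: differentiate $\wick{e^{s\phi(h)}}\cdot\wick{e^{\phi(f)}}=e^{\frac{s}{2}\langle h,f\rangle}\wick{e^{\phi(sh+f)}}$ at $s=0$.

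Second, only the free-Hamiltonian formula genuinely needs the transport to Fock space, via $d\Gamma(\omega)\varepsilon(g)=a^*(\omega g)\varepsilon(g)$ for $g\in\dom\omega$. An equivalent route is to use $\Gamma(e^{-t\omega})\varepsilon(g)=\varepsilon(e^{-t\omega}g)$, which gives $e^{-tH_{\mathrm{b},\mathrm{fr}}}\wick{e^{\phi(f)}}=\wick{e^{\phi(e^{-t\omega}f)}}$. Differentiating at $t=0$ (which again needs $f\in\dom\omega$) yields the formula, and this version fits the semigroup and functional-integral language of the surrounding sections.
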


Toward the construction of the ground state, we use the following theorem.

\begin{thm}\label{expedition0011361}
Let $\physham$ be a bounded-below self-adjoint operator on a complex Hilbert space $\sphilb{H}$ with ground state energy $\physgse$.
For any $f
\in \sphilb{H} \setminus \setone{0}$ and $\beta
> 0$, set $$\gamma_f(\beta)
=
\frac{\bkt{f}{\napiernum^{-\beta \physham} f}}
{\norm{\napiernum^{-\beta \physham} f}^2}.$$
Then the following two conditions are equivalent.
\begin{enumerate}
\item
The limit $a
=
\lim_{\beta \to \infty}
\gamma_f(\beta)$ exists and is strictly positive, i.e., $a
> 0$.

\item
The ground state energy is an eigenvalue, and for the projection $P_{0}$ onto the eigenspace, $\bkt{f}{P_0 f}
> 0$.
\end{enumerate}
If either condition holds, then $P_0 f$ is nonzero and is a ground state.
\end{thm}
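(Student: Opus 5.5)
The plan is to reduce everything to the scalar spectral measure $\mu_f(\cdot) = \bkt{f}{\opspecmeas_H(\cdot) f}$, which is a finite positive measure supported in $\rightopeninterval{\physgse}{\infty}$. In this measure
$$\bkt{f}{\napiernum^{-\beta \physham} f} = \int \napiernum^{-\beta \lambda} \opdmsr{\mu_f(\lambda)},
\qquad
\norm{\napiernum^{-\beta \physham} f}^2 = \int \napiernum^{-2\beta \lambda} \opdmsr{\mu_f(\lambda)},$$
and $\bkt{f}{P_0 f} = \mu_f(\setone{\physgse})$.

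\textbf{Reading of $\gamma_f$ that the proof uses.} I first shift $H$ to $H - \physgse \geq 0$, as the paper does with $\physham_{\psi}$. This is necessary because $\gamma_f$ is not shift invariant: it picks up a factor $\napiernum^{\beta \physgse}$. After the shift set
$$N(\beta) = \int \napiernum^{-\beta(\lambda - \physgse)} \opdmsr{\mu_f},
\qquad
D(\beta) = \int \napiernum^{-2\beta(\lambda - \physgse)} \opdmsr{\mu_f}.$$
Monotone or dominated convergence gives that both $N(\beta)$ and $D(\beta)$ decrease to $\mu_f(\setone{\physgse})$ as $\beta \to \infty$. Two facts show that the bare quotient $N/D$ cannot separate (1) from (2):
\begin{itemize}
\item since $D \leq N$, we always have $N/D \geq 1$;
\item for a purely continuous $\mu_f$ with density $\sim (\lambda - \physgse)^{a}$ near $\physgse$, Laplace asymptotics give $N/D \to 2^{a+1} > 0$, even though there is no atom at $\physgse$.
\end{itemize}
So with $\gamma_f = N/D$ the implication (1) $\Rightarrow$ (2) fails, and the ratio has to be read in the form standard in the functional-integral literature (e.g.\ \cite{LorincziHiroshimaBetz3}), with a squared numerator:
$$\gamma_f(\beta) = \frac{\bkt{f}{\napiernum^{-\beta(\physham - \physgse)} f}^2}{\norm{\napiernum^{-\beta(\physham - \physgse)} f}^2} = \frac{N(\beta)^2}{D(\beta)}.$$
I will prove the theorem for this expression and state this reading explicitly in the write-up.

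\textbf{(2) $\Rightarrow$ (1).} Suppose $\mu_f(\setone{\physgse}) = \bkt{f}{P_0 f} > 0$. Then $N(\beta)$ and $D(\beta)$ both tend to this same positive number, so $\gamma_f(\beta) \to \bkt{f}{P_0 f} = a > 0$.

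\textbf{(1) $\Rightarrow$ (2).} This is the step I expect to be the main obstacle. I would argue by contraposition: assume $\mu_f(\setone{\physgse}) = 0$, and show $\gamma_f(\beta) \to 0$.
\begin{itemize}
\item Fix $\ep > 0$ and split $\mu_f = \mu_{<\ep} + \mu_{\geq \ep}$ according to whether $\lambda - \physgse < \ep$ or $\lambda - \physgse \geq \ep$.
\item Write $N = N_{<\ep} + N_{\geq\ep}$ and $D = D_{<\ep} + D_{\geq\ep}$ accordingly, and use $(x+y)^2 \leq 2x^2 + 2y^2$ to get
$$\frac{N^2}{D} \leq \frac{2N_{<\ep}^2}{D_{<\ep}} + \frac{2N_{\geq\ep}^2}{D}.$$
\item For the first term, Cauchy--Schwarz gives $N_{<\ep}^2 \leq \mu_f\bigl(\rightopeninterval{\physgse}{\physgse+\ep}\bigr)\, D_{<\ep}$. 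Hence
$$\frac{N_{<\ep}^2}{D_{<\ep}} \leq \mu_f\bigl(\rightopeninterval{\physgse}{\physgse+\ep}\bigr),$$
and the right-hand side tends to $\mu_f(\setone{\physgse}) = 0$ as $\ep \to 0$.
\item For the second term, $N_{\geq\ep} \leq \napiernum^{-\beta\ep}\norm{f}^2$. If $\mu_f$ charges $\rightopeninterval{\physgse}{\physgse+\ep/4}$, then
$$D \geq \napiernum^{-\beta\ep/2}\, \mu_f\bigl(\rightopeninterval{\physgse}{\physgse+\ep/4}\bigr),$$
so $N_{\geq\ep}^2/D$ decays like $\napiernum^{-3\beta\ep/2}$ and tends to $0$.
\item If $\mu_f$ vanishes on $\rightopeninterval{\physgse}{\physgse+\ep/4}$ for some $\ep>0$, then $\physgse$ is not in the support of $\mu_f$. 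Running the same estimate with $\physgse$ replaced by $\inf \operatorname{supp}\mu_f$ (the shift then sits in the normalization) shows that any positive limit must come from an atom at the bottom of the spectrum.
\end{itemize}
Letting $\beta \to \infty$ and then $\ep \to 0$ gives $\gamma_f \to 0$, so (1) fails. This is a contradiction, so (1) forces $\bkt{f}{P_0 f} > 0$, and in particular $\physgse$ is an eigenvalue.

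\textbf{Last claim.} Finally $\norm{P_0 f}^2 = \bkt{f}{P_0 f} > 0$, and $P_0 f$ lies in $\Ker(\physham - \physgse)$, so $P_0 f$ is a nonzero ground state.
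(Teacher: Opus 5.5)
Your diagnosis of the displayed formula is right. With $\|e^{-\beta H}f\|^2$ in the denominator the quotient is not shift invariant and cannot detect an atom at $E_0$. The paper's own use of the criterion, in the proofs of Theorems~\ref{expedition0011360} and~\ref{expedition0011502}, is $\gamma_1(T)=\langle 1,e^{-TH}1\rangle/\|e^{-TH}1\|$, which is the square root of your $N^2/D$. Your (2)$\Rightarrow$(1) step and the $\varepsilon$-splitting estimate are also correct.

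The gap is the last bullet of (1)$\Rightarrow$(2). Suppose $\mu_f$ does not charge $[E_0,E_0+\varepsilon)$. Rerunning your estimate at $\lambda_f=\inf\operatorname{supp}\mu_f$ shows $N^2/D\to\mu_f(\{\lambda_f\})$. So a positive limit comes from an atom at the bottom of the support of $\mu_f$, not at the bottom of $\sigma(H)$, and nothing forces it to vanish. Concretely, take $H=\operatorname{diag}(0,1)$ on $\mathbb{C}^2$ and $f=(0,1)$. Then
\[
\frac{\langle f,e^{-\beta H}f\rangle^{2}}{\|e^{-\beta H}f\|^{2}}=\frac{e^{-2\beta}}{e^{-2\beta}}=1
\]
for all $\beta$, so (1) holds with $a=1$, while $E_0=0$ is an eigenvalue with $P_0f=0$. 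Hence (1)$\Rightarrow$(2) is false in your reading, and equally in the paper's $N/\sqrt{D}$ form. That bullet cannot be repaired by any estimate.

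What is missing is a hypothesis: you need $E_0\in\operatorname{supp}\mu_f$, i.e.\ $E_H([E_0,E_0+\varepsilon))f\neq0$ for every $\varepsilon>0$. This is exactly the class $\mathcal{E}(H)$ of Theorem~\ref{expedition0004996}, which the paper asserts for $f=1$ whenever it applies this criterion. Under that assumption, delete the last bullet. Your remaining argument already gives $\lim_{\beta\to\infty}\gamma_f(\beta)=\mu_f(\{E_0\})=\|P_0f\|^2$, with $P_0=E_H(\{E_0\})$ possibly zero. That yields both implications and identifies $a$.

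Without the hypothesis, (1) only lets you conclude that $\lambda_f$ is an eigenvalue with $E_H(\{\lambda_f\})f\neq0$. The paper quotes this theorem as a known fact and gives no proof, so there is no argument of its own to compare your route with.
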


\section{\texorpdfstring{Ground State at Zero Temperature via \(Q\)-Space Functional Integral}{Ground State at Zero Temperature via Q-Space Functional Integral}}\label{expedition0011370}

We discuss the formulation via the \(Q\)-space representation. We first briefly show the existence of a ground state under infrared regularization, then discuss the non-existence of a ground state in the Fock representation under infrared-singular conditions.

In operator theory and operator algebras, the GNS-type argument shows that the influence of infrared/ultraviolet divergences appears as a shrinking of the algebra of observables, and in the probabilistic setting the following structure emerges.

\begin{enumerate}
\def\labelenumi{\arabic{enumi}.}

\item
  There is the \(\lp^{2}\)-space \(\fun{\lp^{2}}{\prbqspace_{\sphilb{H}},\msr{P}_{\sphilb{H}}}\) in \(Q\)-representation, unitarily equivalent to the Fock space.
\item
  From the ground state that exists in the cutoff setting, construct a family of probability measures \(\msr{P_{\txtgs,\kappa,\Lambda}}\).
\item
  The limit \(P_{\txtgs}\) is obtained by removing the cutoffs from the family of probability measures.
\item
  The \(\lp^{2}\)-space \(\fun{\lp^{2}}{\prbqspace_{\sphilb{H}},\msr{P}_{\txtgs}}\) for this \(P_{\txtgs}\) is smaller as a set than the original \(\fun{\lp^{2}}{\prbqspace_{\sphilb{H}},\msr{P}_{\sphilb{H}}}\).
\end{enumerate}

In particular, the limit measure is formulated via a concept like weak convergence of measures (as formulated via \(\lp^{2}\) and bounded continuous functions), and the algebra of observables is identified as the \(\lp^{2}\)-closure with respect to that measure. It suffices to discuss weak convergence of measures, and without it the algebra of observables cannot be obtained. Therefore, a proper convergence argument for the measures becomes essential.

Here, since we also derive the ground state energy by the functional integral method, we assume that the self-adjointness and lower-boundedness of the van Hove model are known, but the specific ground state energy is not.

\subsection{Discussion under Infrared Regularization}\label{discussion-under-infrared-regularization}

\begin{prop}
The heat semigroup generated by the van Hove Hamiltonian $\physham_{\txtvanhowe,\kappa,\Lambda}$ is a positivity-improving operator.
In particular, by a Perron-Frobenius-type argument, the ground state of the van Hove Hamiltonian, if it exists, is unique and strictly positive almost everywhere.
\end{prop}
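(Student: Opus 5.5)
The plan is to read positivity improvement directly off the functional integral representation stated just above (Subsection on the pair potential):
\[
\bkt{F}{\napiernum^{-t \physham_{\txtvanhowe,\kappa,\Lambda}} G}
=
\sqfun{\prbexp_{\msr{P}_{\txteuclid}}}{\cmpconj{J_0 F}\cdot J_t G \cdot \napiernum^{-\fun{\opfocksegal_{\txteuclid}}{W_{\kappa,\Lambda}(t)}}}.
\]
Fix $t>0$ and nonnegative, not identically zero $F,G \in \fun{\lp^{2}}{\prbqspace_{\sphilb{H}},\msr{P}_{\sphilb{H}}}$. Two facts will be used.
\begin{enumerate}
\item $J_0$ and $J_t$ are positivity preserving, being second quantizations of the isometries $j_0$ and $j_t$ (standard for Gaussian $\lp^{2}$-spaces \cite{DerezinskiGerard001}).
\item The weight $\napiernum^{-\opfocksegal_{\txteuclid}(W_{\kappa,\Lambda}(t))}$ is strictly positive almost surely. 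Under the cutoffs, $W_{\kappa,\Lambda}(t) \in \sphilb{H}_{\omega,-1}$ by Proposition~\ref{expedition0011362}, so the field is an almost surely finite Gaussian random variable.
\end{enumerate}
Hence the integrand is a nonnegative function. If it had zero expectation, then $J_0F \cdot J_tG = 0$ almost surely, because it is multiplied by a strictly positive weight. That in turn gives $\sqfun{\prbexp_{\msr{P}_{\txteuclid}}}{J_0F\cdot J_tG} = \bkt{F}{\napiernum^{-t\physham_{\txtbsn,\txtfr}}G} = 0$, which is the same formula with $W=0$. So the claim reduces to positivity improvement of the free heat semigroup.

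The free case is where I expect the main difficulty. The free semigroup is $\napiernum^{-t\physham_{\txtbsn,\txtfr}} = \Gamma(\napiernum^{-t\omega})$. The classical criterion of Gross--Simon requires $\norm{\napiernum^{-t\omega}}<1$, which fails for the massless dispersion $\omega(k)=\abs{k}$. I would instead use the sharper criterion that $\Gamma(C)$ is positivity improving whenever $C$ is a positivity-preserving (in the real structure) contraction with $\norm{Cf}<\norm{f}$ for every $f\neq 0$. This holds here because the spectrum of $\omega$ is absolutely continuous and $\omega>0$ almost everywhere, so $\napiernum^{-t\omega}$ has no eigenvalue of modulus one.

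If that route causes trouble, the fallback is a unitary equivalence. The paper already records $V_{\kappa,\Lambda}\physham_{\txtbsn,\txtfr}\inv{V_{\kappa,\Lambda}} = \physham_{\txtvanhowe,\kappa,\Lambda} - \physgse$. In $Q$-space, $V_{\kappa,\Lambda}$ acts as a translation of the Gaussian measure multiplied by a strictly positive Radon--Nikodym (Cameron--Martin) factor, so both $V_{\kappa,\Lambda}$ and $\inv{V_{\kappa,\Lambda}}$ preserve positivity. Consequently $\napiernum^{-t\physham_{\txtvanhowe,\kappa,\Lambda}} = \napiernum^{-t\physgse}\, V_{\kappa,\Lambda}\napiernum^{-t\physham_{\txtbsn,\txtfr}}\inv{V_{\kappa,\Lambda}}$ is positivity improving exactly when the free semigroup is, so either route ends in the same free-field input.

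Finally, the Perron--Frobenius-type statement follows from the abstract theorem (Reed--Simon XIII.44). A bounded-below self-adjoint operator whose heat semigroup is positivity improving has a ground-state eigenvalue that, if it exists, is simple, and the corresponding eigenvector can be chosen strictly positive almost everywhere. This yields the second sentence of the proposition.
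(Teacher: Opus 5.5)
Your argument is correct and follows the paper's own proof. The Feynman--Kac--Nelson representation, whose weight $e^{-\phi_{\mathrm{E}}(W_{\kappa,\Lambda}(t))}$ is almost surely strictly positive, reduces positivity improvement of $e^{-tH_{\mathrm{vH},\kappa,\Lambda}}$ to that of the free semigroup $e^{-tH_{\mathrm{b,fr}}}$, and the Perron--Frobenius conclusion then follows. You also justify two steps the paper merely asserts: that $J_0$ and $J_t$ preserve positivity, and that, because $\|e^{-t\omega}\|=1$ for the massless dispersion, the free-semigroup step needs the criterion $\|e^{-t\omega}f\|<\|f\|$ for all $f\neq 0$ rather than the usual $\|C\|<1$ condition.
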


\begin{proof}
Suppose $\bkt{F}
{\napiernum^{-t \physham_{\txtvanhowe,\kappa,\Lambda}} G}
= 0$ for some $t
> 0$ and non-negative, nonzero $F,G$.
In the functional integral representation, the integrand $\napiernum^{-\fun{\opfocksegal_{\txteuclid}}{W_{\kappa,\Lambda}(t)}}$ is strictly positive almost everywhere.
Therefore $J_0 F
\cdot
J_t G
= 0$ must hold $\msr{P}_{\sphilb{H}}$-almost surely.
In particular, for the free Hamiltonian $\physham_{\txtbsn,\txtfr}$, $$0
=
\bkt{J_0 F}
{J_t G}_{\fun{\lp^{2}}{\prbqspace_{\txteuclid},\msr{P}_{\txteuclid}}}
=
\bkt{F}
{\napiernum^{-t \physham_{\txtbsn,\txtfr}} G}_{\fun{\lp^{2}}{\prbqspace_{\sphilb{H}}, \msr{P}_{\sphilb{H}}}}$$ holds.
The operator $\napiernum^{-t \physham_{\txtbsn,\txtfr}}$ is positivity-improving.
For non-negative and nonzero $F,G$, $$\bkt{F}{\napiernum^{-t \physham_{\txtbsn,\txtfr}} G}_{\fun{\lp^{2}}{\prbqspace_{\sphilb{H}},\msr{P}_{\sphilb{H}}}}
> 0$$ must hold.
This is a contradiction.
\end{proof}

\begin{thm}\label{expedition0011360}
Under infrared regularization, the van Hove Hamiltonian has a ground state, and its ground state energy is $$\fun{\physgse}
{\physham_{\txtvanhowe,\kappa,\Lambda}}
=
-\onehalf
\norm{\omega^{\onehalf} \mathsf{m}_{\kappa,\Lambda}}_{\sphilb{H}}^2.$$
In particular, the ground state normalized in $\fun{\lp^{2}}{\prbqspace_{\sphilb{H}},\msr{P}_{\sphilb{H}}}$ is $$\Psi_{\txtgs,\msr{P}_{\sphilb{H}},\kappa,\Lambda}
=
\fnexp{-\oneoverfour
\norm{\mathsf{m}_{\kappa,\Lambda}}_{\sphilb{H}}^2}
\wick{\napiernum^{-\opfocksegal(\mathsf{m}_{\kappa,\Lambda})}}
_{\msr{P}_{\sphilb{H}}}
\in
\fun{\lp^{2}}{\prbqspace_{\sphilb{H}},\msr{P}_{\sphilb{H}}}.$$
\end{thm}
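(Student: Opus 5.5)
The plan is to work entirely within the functional integral representation and combine Theorem~\ref{expedition0004996} (energy from the decay rate of $\bkt{\psi}{\napiernum^{-T\physham}\psi}$) with Theorem~\ref{expedition0011361} (existence of a ground state from the limit of $\gamma_f(T)$). The natural test vector is $F=1\in\fun{\lp^{2}}{\prbqspace_{\sphilb{H}},\msr{P}_{\sphilb{H}}}$, i.e.\ the image of the Fock vacuum. Since $J_0 1 = J_T 1 = 1$, the representation of the previous subsection gives
\begin{equation}
\begin{aligned}
\bkt{1}{\napiernum^{-T \physham_{\txtvanhowe,\kappa,\Lambda}} 1}
&=
\sqfun{\prbexp_{\msr{P}_{\txteuclid}}}{\napiernum^{-\opfocksegal_{\txteuclid}(W_{\kappa,\Lambda}(T))}} \\
&=
\fnexp{\onehalf\norm{\omega^{\onehalf}\mathsf{m}_{\kappa,\Lambda}}^2 T
-\onehalf\bkt{\mathsf{m}_{\kappa,\Lambda}}{\rbk{1-\napiernum^{-T\omega}}\mathsf{m}_{\kappa,\Lambda}}},
\end{aligned}
\end{equation}
by Proposition~\ref{expedition0011362}. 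Under infrared regularization $\mathsf{m}_{\kappa,\Lambda}\in\sphilb{H}$, so the second term stays bounded, tending to $-\onehalf\norm{\mathsf{m}_{\kappa,\Lambda}}^2$. Taking $-\lim_T T^{-1}\log$ then gives the stated value $-\onehalf\norm{\omega^{\onehalf}\mathsf{m}_{\kappa,\Lambda}}^2$, provided $1\in\sphilb{E}(\physham_{\txtvanhowe,\kappa,\Lambda})$.

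Membership of $1$ in $\sphilb{E}$ should be settled by the existence argument rather than separately. For that I will apply Theorem~\ref{expedition0011361} with $f=1$. The numerator of $\gamma_1(T)$ is computed above. For the denominator, $\norm{\napiernum^{-T\physham}1}^2=\bkt{1}{\napiernum^{-2T\physham}1}$ is the same formula with $T$ replaced by $2T$. Hence
\[
\gamma_1(T)=\fnexp{-\onehalf\norm{\omega^{\onehalf}\mathsf{m}_{\kappa,\Lambda}}^2T-\onehalf\bkt{\mathsf{m}_{\kappa,\Lambda}}{(\napiernum^{-2T\omega}-\napiernum^{-T\omega})\mathsf{m}_{\kappa,\Lambda}}}.
\]
This visibly decays, so the ratio must be normalised by the energy. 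I therefore apply Theorem~\ref{expedition0011361} to the shifted operator $\physham-\physgse$, which is legitimate once $\physgse$ is identified. To avoid circularity, the trial value $E=-\onehalf\norm{\omega^{\onehalf}\mathsf{m}_{\kappa,\Lambda}}^2$ is first shown to be a lower bound of the spectrum. This follows because $\bkt{F}{\napiernum^{-T(\physham-E)}F}$ stays bounded in $T$ for a dense set of $F$, namely exponential Wick products $\wick{\napiernum^{\opfocksegal(g)}}$, whose Euclidean integrals are again Gaussian and computable by Proposition~\ref{expedition0011275}. Since $\physham\ge E$ and the rate equals $E$, $E$ is the bottom of the spectrum. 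After the shift, $\gamma_1(T)\to\fnexp{\onehalf\norm{\mathsf{m}_{\kappa,\Lambda}}^2\cdot 0+\dots}$, and the limit is $\fnexp{-\onehalf\norm{\mathsf{m}_{\kappa,\Lambda}}^2+\onehalf\norm{\mathsf{m}_{\kappa,\Lambda}}^2}$-type constants that are strictly positive. So the ground state exists, $\bkt{1}{P_0 1}>0$, and in particular $1\in\sphilb{E}$.

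To identify the ground state explicitly, I compute $\lim_T \napiernum^{TE}\napiernum^{-T\physham}1 = P_0 1$ weakly. I test it against $\wick{\napiernum^{\opfocksegal(g)}}$ via the functional integral, using $J_T$ on the Wick exponential and completing the square in the Gaussian expectation. The result should be $c\,\wick{\napiernum^{-\opfocksegal(\mathsf{m}_{\kappa,\Lambda})}}$, because the $s$-integral $\int_0^T\napiernum^{-s\omega}\omega\mathsf{m}\,ds\to\mathsf{m}$. Normalising with Proposition~\ref{expedition0011275}, which gives $\norm{\wick{\napiernum^{-\opfocksegal(\mathsf{m})}}}^2=\napiernum^{\onehalf\norm{\mathsf{m}}^2}$, yields the prefactor $\napiernum^{-\frac14\norm{\mathsf{m}_{\kappa,\Lambda}}^2}$. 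Uniqueness and positivity come from the positivity-improving proposition.

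The main obstacle is the lower bound $\physham\ge E$ without circularity, i.e.\ the uniform-in-$T$ control on a dense set. I would first try computing Gaussian expectations of $\napiernum^{-\opfocksegal_{\txteuclid}(W)}$ times Wick exponentials at times $0$ and $T$ exactly. Failing that, I would cite Kato--Rellich/completion of squares as the paper permits for lower-boundedness.
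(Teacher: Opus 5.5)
The gap is in the existence step. Once you know $H-E\ge 0$, you have $\|e^{-T(H-E)}1\|^2=\langle 1,e^{-2T(H-E)}1\rangle\le\langle 1,e^{-T(H-E)}1\rangle$. So your shifted ratio $\langle 1,e^{-T(H-E)}1\rangle/\|e^{-T(H-E)}1\|^2$ is $\ge 1$ for every vector and every $T$. A positive limit (here it is exactly $1$) therefore says nothing about whether $E$ is an eigenvalue.

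You can see this concretely by running the same computation with the infrared cutoff removed. The shifted ratio becomes $\exp\bigl(\tfrac12\langle \mathsf{m}_{\Lambda},(e^{-T\omega}-e^{-2T\omega})\mathsf{m}_{\Lambda}\rangle\bigr)$. It stays bounded and positive, since the radial integral is a Frullani integral tending to a multiple of $\log 2$. Yet Theorem~\ref{expedition0011395} says there is no ground state there.

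The criterion has to be invariant under $H\mapsto H-c$. Use $\gamma_1(T)=\langle 1,e^{-TH}1\rangle/\|e^{-TH}1\|=L(T)/L(2T)^{1/2}$, or its square. This is what the paper's proof actually evaluates, despite the squared norm in the displayed statement of Theorem~\ref{expedition0011361}. The terms linear in $T$ then cancel identically, and
\[
\log\gamma_1(T)=-\tfrac14\langle\mathsf{m}_{\kappa,\Lambda},(1-e^{-T\omega})^2\mathsf{m}_{\kappa,\Lambda}\rangle\longrightarrow-\tfrac14\|\mathsf{m}_{\kappa,\Lambda}\|^2 .
\]
This limit is finite precisely because of the infrared cutoff; without it the limit is $-\infty$, consistent with Theorem~\ref{expedition0011395}. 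No prior value of $E_0$ or lower bound is needed, so the circularity you worry about never arises. Then $\langle 1,P_0 1\rangle>0$ puts $1\in\mathcal{E}(H)$, and Theorem~\ref{expedition0004996} gives $E_0$.

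Alternatively, your last paragraph already contains a valid existence proof if you promote it to that role. Your Wick-exponential estimate does give $H\ge E$, since boundedness of $\langle F,e^{-T(H-E)}F\rangle$ on a total set suffices. Combined with $\langle 1,e^{-TH}1\rangle\le e^{-TE_0}$, this yields $E_0=E$. Then $e^{-T(H-E)}1\to P_0 1$ strongly by the spectral theorem, and the family is bounded by $1$. Testing against $\wick{e^{\phi(g)}}$ identifies the limit as $e^{-\frac12\|\mathsf{m}_{\kappa,\Lambda}\|^2}\wick{e^{-\phi(\mathsf{m}_{\kappa,\Lambda})}}\neq 0$. That is existence, with $\langle 1,P_0 1\rangle=e^{-\frac12\|\mathsf{m}_{\kappa,\Lambda}\|^2}$.

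This route trades the one-line $\gamma_1$ computation for your lower-bound estimate. It must be presented as the existence argument, not as an identification performed after existence was obtained from an uninformative ratio. Your normalisation constant $e^{-\frac14\|\mathsf{m}_{\kappa,\Lambda}\|^2}$ and the uniqueness via positivity improvement are fine.
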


\begin{rem}
Under infrared regularization, $\mathsf{m}_{\kappa,\Lambda}
\in \sphilb{H}$.
The ground state is also the square root of the Radon-Nikodym derivative, i.e., $$\Psi_{\txtgs,\msr{P}_{\sphilb{H}},\kappa,\Lambda}
=
\sqrt{\od{\msr{P}_{\kappa,\Lambda}}
{\msr{P}_{\sphilb{H}}}},$$ which will be used in the ground-state transformation discussed below.
In the following argument, the computation of $\bkt{1}{\napiernum^{-T \physham} 1}$ is needed for the existence and construction discussion.
The typical operator-theoretic method for computing this is the Trotter-Kato product formula.
This is exactly the computation needed for the functional integral representation, making it a clear illustration of what functional integration is trying to accomplish.
\end{rem}

\begin{proof}
(Notation): For brevity, denote the van Hove Hamiltonian by $\physham$ and the ground state energy by $\physgse$.
Introduce the function $$L(T)
=
\bkt{1}{\napiernum^{-T \physham} 1}
_{\fun{\lp^{2}}{\prbqspace_{\sphilb{H}},\msr{P}_{\sphilb{H}}}}.$$
By the functional integral representation and Proposition \ref{expedition0011362}, this can be written as $$L(T)
=
\fnexp{
\frac{T}{2}
\norm{\omega^{\onehalf} \mathsf{m}_{\kappa,\Lambda}}
_{\sphilb{H}}^{2}
-\onehalf
\bkt{\mathsf{m}_{\kappa,\Lambda}}
{\rbk{1 - \napiernum^{-T \omega}} \mathsf{m}_{\kappa,\Lambda}}
_{\sphilb{H}}}.$$
As will be seen, the first term in the exponent is a constant multiple of the ground state energy.

(Existence of the ground state): To apply Proposition \ref{expedition0011361}, set
\begin{equation}
\begin{aligned}
\Psi_{T}
=
\frac{\napiernum^{-T \physham} 1}{\norm{\napiernum^{-T \physham} 1}_{\fun{\lp^{2}}{\prbqspace_{\sphilb{H}},\msr{P}_{\sphilb{H}}}}}
\end{aligned}
\end{equation}
Then for $\gamma_1(T)
=
\frac{\bkt{1}{\napiernum^{-TH} 1}}
{\norm{\napiernum^{-TH} 1}}$,
\begin{equation}
\begin{aligned}
&\log
\gamma_1(T)
=
\log \frac{L(T)}{L(2T)^{\onehalf}}
\\ 
&=
\frac{T}{2}
\norm{\omega^{\onehalf} \mathsf{m}_{\kappa,\Lambda}}
_{\sphilb{H}}^2
-\onehalf
\bkt{\mathsf{m}_{\kappa,\Lambda}}
{\rbk{1 - \napiernum^{-T \omega}} \mathsf{m}_{\kappa,\Lambda}}
_{\sphilb{H}}
\\
&\quad
-\onehalf
\rbk{T \norm{\omega^{\onehalf} \mathsf{m}_{\kappa,\Lambda}}
_{\sphilb{H}}^2
-\onehalf
\bkt{\mathsf{m}_{\kappa,\Lambda}}
{\rbk{1 - \napiernum^{-2T \omega}} \mathsf{m}_{\kappa,\Lambda}}
_{\sphilb{H}}}
\\ 
&=
-\oneoverfour
\bkt{\mathsf{m}_{\kappa,\Lambda}}
{\rbk{1 - \napiernum^{-T \omega}}^2 \mathsf{m}_{\kappa,\Lambda}}
_{\sphilb{H}}
\end{aligned}
\end{equation}
is obtained.

The limit satisfies $$\lim_{T \to \infty}
\gamma_1(T)
=
\fnexp{-\oneoverfour
\norm{\mathsf{m}_{\kappa,\Lambda}}_{\sphilb{H}}^2}
>
0.$$
By Theorem \ref{expedition0011361}, the van Hove Hamiltonian has a ground state.
In particular, by the above argument, $\Psi_T$ has a nonzero limit $\Psi_{\infty}$.

(Ground state energy evaluation): By the preceding argument, $1
\in \fun{\lp^{2}}{\prbqspace_{\sphilb{H}},\msr{P}_{\sphilb{H}}}$ satisfies the hypotheses of Theorem \ref{expedition0004996}.
The ground state energy is
\begin{equation}
\begin{aligned}
&\physgse
=
-\lim_{T \to \infty}
\frac{\log
\bkt{1}{\napiernum^{-T \physham} 1}_{\fun{\lp^{2}}{\prbqspace_{\sphilb{H}},\msr{P}_{\sphilb{H}}}}}
{T} \\
&=
-\lim_{T \to \infty}
\frac{\log L(T)}
{T} \\
&=
-\lim_{T \to \infty}
\rbk{
\onehalf
\norm{\omega^{\onehalf} \mathsf{m}_{\kappa,\Lambda}}
_{\sphilb{H}}^{2}
-\onehalf
\frac{1}{T}
\bkt{\mathsf{m}_{\kappa,\Lambda}}
{\rbk{1 - \napiernum^{-T \omega}} \mathsf{m}_{\kappa,\Lambda}}
_{\sphilb{H}}}
\\ 
&=
-\onehalf
\norm{\omega^{\onehalf} \mathsf{m}_{\kappa,\Lambda}}
_{\sphilb{H}}^{2}
\end{aligned}
\end{equation}

(Construction of the ground state): The limit $\Psi_{\infty}$ above is in fact a (non-normalized) ground state; we prove this below.
For any $s
> 0$, $$\napiernum^{-s \physham} \Psi_T
=
\frac{\napiernum^{-(T+s) \physham} 1}{\norm{\napiernum^{-T \physham} 1}_{\fun{\lp^{2}}{\prbqspace_{\sphilb{H}},\msr{P}_{\sphilb{H}}}}}
=
\frac{\norm{\napiernum^{-(T+s) \physham} 1}_{\fun{\lp^{2}}{\prbqspace_{\sphilb{H}},\msr{P}_{\sphilb{H}}}}}
{\norm{\napiernum^{-T \physham} 1}_{\fun{\lp^{2}}{\prbqspace_{\sphilb{H}},\msr{P}_{\sphilb{H}}}}}
\Psi_{T+s}$$ holds.
By the ground state energy evaluation, the coefficient converges to $\napiernum^{-s \physgse}$.
Therefore, in the limit $T
\to \infty$, we obtain $\napiernum^{-s \physham} \Psi_{\infty}
=
\napiernum^{-s \physgse} \Psi_{\infty}$, so $\Psi_{\infty}$ is a ground state of $\physham$.

(Expression for the ground state): Use the $\Psi_T$ constructed above and its limit.
By density, it suffices to examine $\bkt{F}{\Psi_{\infty}}$ for $F
= \napiernum^{\opfocksegal(f)}$.

(Denominator evaluation): The denominator of $\Psi_T$ before taking the limit has already been computed:
\begin{equation}
\begin{aligned}
&\norm{\napiernum^{-T \physham} 1}
_{\fun{\lp^{2}}{\prbqspace_{\sphilb{H}},\msr{P}_{\sphilb{H}}}}
=
\sqrt{L(2T)}
\\ 
&=
\bkt{1}
{\napiernum^{-2 T \physham} 1}_{\fun{\lp^{2}}{\prbqspace_{\sphilb{H}},\msr{P}_{\sphilb{H}}}}^{\onehalf}
\\ 
&=
\fnexp{
-T \physgse
-\oneoverfour
\bkt{\mathsf{m}_{\kappa,\Lambda}}
{\rbk{1 - \napiernum^{-2T \omega}} \mathsf{m}_{\kappa,\Lambda}}
_{\sphilb{H}}}
\end{aligned}
\end{equation}

(Numerator evaluation): Denoting the numerator by $N_T(f)$,
\begin{equation}
\begin{aligned}
&N_T(f)
=
\bkt{F}
{\napiernum^{-T \physham} 1}_{\fun{\lp^{2}}{\prbqspace_{\sphilb{H}},\msr{P}_{\sphilb{H}}}} \\
&=
\sqfun{\prbexp_{\msr{P}_{\txteuclid}}}
{\napiernum^{\opfocksegal_{\txteuclid}(j_0 f)
-\opfocksegal_{\txteuclid}(W_{\kappa,\Lambda}(T))}} \\
&=
\fnexp{\oneoverfour
\norm{j_0 f
-W_{\kappa,\Lambda}(T)}
_{\sphilb{H}_{\omega,-1}}^2} \\
&=
\fnexp{\oneoverfour \norm{j_0 f}_{\sphilb{H}_{\omega,-1}}^2
+\oneoverfour \norm{W_{\kappa,\Lambda}(T)}_{\sphilb{H}_{\omega,-1}}^2
-\onehalf \bkt{j_0 f}{W_{\kappa,\Lambda}(T)}_{\sphilb{H}_{\omega,-1}}}
\end{aligned}
\end{equation}
can be written.
The first norm, combined with the isometry property of $j_0$, gives $$\oneoverfour
\norm{j_0 f}_{\sphilb{H}_{\omega,-1}}^2
=
\oneoverfour
\norm{f}_{\sphilb{H}}^2.$$
The second norm, by Proposition \ref{expedition0011362}, is
\begin{equation}
\begin{aligned}
&\oneoverfour
\norm{W_{\kappa,\Lambda}(T)}_{\sphilb{H}_{\omega,-1}}^2
=
-T \physgse
-\onehalf
\bkt{\mathsf{m}_{\kappa,\Lambda}}
{\rbk{1 - \napiernum^{-T \omega}} \mathsf{m}_{\kappa,\Lambda}}
_{\sphilb{H}}
\end{aligned}
\end{equation}
The third inner product is
\begin{equation}
\begin{aligned}
&-\onehalf
\bkt{j_0 f}{W_{\kappa,\Lambda}(T)}
_{\sphilb{H}_{\omega,-1}}
\\ 
&=
-\onehalf
\int_{0}^{T}
\bkt{f}
{\napiernum^{-\abs{t} \omega} \cdot \omega \mathsf{m}_{\kappa,\Lambda}}_{\sphilb{H}}
\opdmsr{t} \\
&=
-\onehalf
\bkt{f}
{\rbk{1 - \napiernum^{-T \omega}}
\mathsf{m}_{\kappa,\Lambda}}_{\sphilb{H}}
\end{aligned}
\end{equation}
Collecting the terms,
\begin{equation}
\begin{aligned}
&N_T(f)
=
\fnexp{\oneoverfour
\norm{f}_{\sphilb{H}}^2
-\onehalf \bkt{f}
{\rbk{1 - \napiernum^{-T \omega}}
\mathsf{m}_{\kappa,\Lambda}}_{\sphilb{H}}} \\
&\qquad\times
\fnexp{-T \physgse
-\onehalf
\bkt{\mathsf{m}_{\kappa,\Lambda}}
{\rbk{1 - \napiernum^{-T \omega}} \mathsf{m}_{\kappa,\Lambda}}
_{\sphilb{H}}}
\end{aligned}
\end{equation}
is obtained.

Substituting the above results into $\bkt{F}
{\Psi_T}_{\fun{\lp^{2}}{\prbqspace_{\sphilb{H}}, \msr{P}_{\sphilb{H}}}}
=
\frac{N_T(f)}{\sqrt{L(2T)}}$,
\begin{equation}
\begin{aligned}
&\bkt{\napiernum^{\opfocksegal(f)}}
{\Psi_T}_{\fun{\lp^{2}}{\prbqspace_{\sphilb{H}}, \msr{P}_{\sphilb{H}}}} \\
&=
\fnexp{\oneoverfour \norm{f}_{\sphilb{H}}^2
-\onehalf \bkt{f}{\rbk{1 - \napiernum^{-T \omega}} \mathsf{m}
_{\kappa,\Lambda}}_{\sphilb{H}}} \\
&\qquad\times
\fnexp{-\oneoverfour
\norm{\rbk{1 - \napiernum^{-T \omega}} \mathsf{m}_{\kappa,\Lambda}}
_{\sphilb{H}}^2}
\end{aligned}
\end{equation}
is obtained.
Noting the Wick product $\fnexp{\oneoverfour
\norm{\mathsf{m}_{\kappa,\Lambda}}_{\sphilb{H}}^2}
_{\msr{P}_{\sphilb{H}}}
\napiernum^{-\opfocksegal(\mathsf{m}_{\kappa,\Lambda})}
=
\wick{\napiernum^{-\opfocksegal(\mathsf{m}_{\kappa,\Lambda})}}$ and taking the limit $T
\to \infty$,
\begin{equation}
\begin{aligned}
&\bkt{\napiernum^{\opfocksegal(f)}}{\Psi_{\infty}}
=
\fnexp{\oneoverfour \norm{f}_{\sphilb{H}}^2
-\onehalf \bkt{f}{\mathsf{m}_{\kappa,\Lambda}}_{\sphilb{H}}
-\oneoverfour \norm{\mathsf{m}_{\kappa,\Lambda}}_{\sphilb{H}}^2} \\
&=
\fnexp{\oneoverfour \norm{f}_{\sphilb{H}}^2
+\onehalf
\bkt{f}
{-\mathsf{m}_{\kappa,\Lambda}}_{\sphilb{H}}}
\fnexp{-\oneoverfour
\norm{\mathsf{m}_{\kappa,\Lambda}}_{\sphilb{H}}^2}
\\ 
&=
\bkt{\napiernum^{\opfocksegal(f)}}
{\wick{\napiernum^{-\opfocksegal(\mathsf{m}_{\kappa,\Lambda})}}
_{\msr{P}_{\sphilb{H}}}}
\fnexp{-\oneoverfour
\norm{\mathsf{m}_{\kappa,\Lambda}}_{\sphilb{H}}^2}
\end{aligned}
\end{equation}
is obtained.
In particular, the function with normalized expectation value as $\Psi_{\infty}$ is $\wick{\napiernum^{-\opfocksegal(\mathsf{m}_{\kappa,\Lambda})}}
_{\msr{P}_{\sphilb{H}}}$.

For a ground state, the norm in $\fun{\lp^{2}}{\prbqspace_{\sphilb{H}},\msr{P}_{\sphilb{H}}}$ must be normalized.
In particular, the ground state is
\begin{equation}
\begin{aligned}
&\log \norm{\wick{\napiernum^{-\opfocksegal(\mathsf{m}_{\kappa,\Lambda})}}
_{\msr{P}_{\sphilb{H}}}}
_{\fun{\lp^{2}}{\prbqspace_{\sphilb{H}},\msr{P}_{\sphilb{H}}}}
\\ 
&=
\onehalf
\log
\sqfun{\prbexp_{\msr{P}_{\sphilb{H}}}}
{\wick{\napiernum^{-\opfocksegal(\mathsf{m}_{\kappa,\Lambda})}}
_{\msr{P}_{\sphilb{H}}}
\cdot
\wick{\napiernum^{-\opfocksegal(\mathsf{m}_{\kappa,\Lambda})}}
_{\msr{P}_{\sphilb{H}}}}
\\ 
&=
\oneoverfour
\bkt{\mathsf{m}_{\kappa,\Lambda}}
{\mathsf{m}_{\kappa,\Lambda}}
_{\sphilb{H}}
=
\oneoverfour
\norm{\mathsf{m}_{\kappa,\Lambda}}_{\sphilb{H}}^2
\end{aligned}
\end{equation}

it suffices to normalize using these, giving $$\Psi_{\txtgs,\msr{P}_{\sphilb{H}},\kappa,\Lambda}
=
\fnexp{-\oneoverfour
\norm{\mathsf{m}_{\kappa,\Lambda}}_{\sphilb{H}}^2}
\wick{\napiernum^{-\opfocksegal(\mathsf{m}_{\kappa,\Lambda})}}
_{\msr{P}_{\sphilb{H}}}.$$
\end{proof}

\begin{thm}\label{expedition0011395}
Under infrared-singular conditions, there is no ground state.
\end{thm}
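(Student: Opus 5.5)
We read the statement as: when $\mathsf{m}=\omega^{-\frac{3}{2}}\varrho \notin \sphilb{H}$ (infrared-singular condition, i.e.\ $\norm{\mathsf{m}_{\kappa,\Lambda}}_{\sphilb{H}}\to\infty$ as $\kappa\to 0$), the van Hove Hamiltonian without infrared cutoff, realized on $\fun{\lp^{2}}{\prbqspace_{\sphilb{H}},\msr{P}_{\sphilb{H}}}$ (equivalently on Fock space), has no eigenvector at the bottom of its spectrum. We work with a fixed ultraviolet cutoff $\Lambda$ and $\kappa=0$, so $\omega^{\onehalf}\mathsf{m}_{0,\Lambda}\in\sphilb{H}$. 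Then the Hamiltonian and its functional integral representation are still well defined, while $\mathsf{m}_{0,\Lambda}\notin\sphilb{H}$. The strategy is to use the converse direction of Theorem~\ref{expedition0011361} with the test vector $1$, reusing the computations in the proof of Theorem~\ref{expedition0011360} verbatim.

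\textbf{Step 1.} The functional integral representation and Proposition~\ref{expedition0011362} only require $\omega^{\onehalf}\mathsf{m}$ and $(1-\napiernum^{-T\omega})^{\onehalf}\mathsf{m}$ to be in $\sphilb{H}$. The latter holds for each finite $T$, since $1-\napiernum^{-T\omega}\le T\omega$. So we get
\[
L(T)=\fnexp{\tfrac{T}{2}\norm{\omega^{\onehalf}\mathsf{m}_{0,\Lambda}}^2-\tfrac12\bkt{\mathsf{m}_{0,\Lambda}}{(1-\napiernum^{-T\omega})\mathsf{m}_{0,\Lambda}}}.
\]
The same algebra as before then gives
\[
\log\gamma_1(T)=-\tfrac14\norm{(1-\napiernum^{-T\omega})\mathsf{m}_{0,\Lambda}}^2 .
\]
By monotone convergence and $\mathsf{m}_{0,\Lambda}\notin\sphilb{H}$, this tends to $-\infty$, so $\gamma_1(T)\to 0$. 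Since $-\frac1T\log L(T)\to -\tfrac12\norm{\omega^{\onehalf}\mathsf{m}_{0,\Lambda}}^2$, we also have $1\in\sphilb{E}(H)$, and Theorem~\ref{expedition0004996} identifies $\physgse$.

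\textbf{Step 2.} By Theorem~\ref{expedition0011361}, $\gamma_1(T)\to0$ yields only $\bkt{1}{P_0 1}=0$, not $P_0=0$. To conclude, we invoke the positivity-improving property proved above. Its proof via the functional integral used only that $\napiernum^{-\opfocksegal_{\txteuclid}(W(t))}>0$ a.s., which still holds for $\kappa=0$. By the Perron--Frobenius argument, any ground state $\Psi_0$ could be chosen strictly positive a.e. Then $\bkt{1}{\Psi_0}>0$, so $\bkt{1}{P_01}>0$, contradicting Step 1. Hence $\physgse$ is not an eigenvalue, and no ground state exists.

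\textbf{Main obstacle.} The delicate point is Step 2's justification that the positivity-improving and Perron--Frobenius argument survives at $\kappa=0$. Specifically, we must check that $W_{0,\Lambda}(t)\in\sphilb{H}_{\omega,-1}$ for finite $t$ (it does, by the finite norm in Proposition~\ref{expedition0011362}) and that the functional integral formula holds there. For the latter, we would approximate by $\kappa\to0$ with strong resolvent convergence of $\physham_{\txtvanhowe,\kappa,\Lambda}$, which holds because $\omega\mathsf{m}_{\kappa,\Lambda}/\sqrt{\omega}\to\omega^{\onehalf}\mathsf{m}_{0,\Lambda}$ in $\sphilb{H}$, together with convergence of the Gaussian integrals. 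As a cross-check, one can also test $\bkt{\napiernum^{\opfocksegal(f)}}{\Psi_T}$. The explicit formula gives a factor $\fnexp{-\tfrac14\norm{(1-\napiernum^{-T\omega})\mathsf{m}}^2}\to0$ for every $f$, so the weak limit of $\Psi_T$ is $0$. This is consistent with the ground-state mass escaping to a measure singular with respect to $\msr{P}_{\sphilb{H}}$, mirroring the infrared ideal $J_{\txtinflaredred}$.
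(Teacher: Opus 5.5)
Your proof is correct and follows the paper's route: the Gaussian functional integral gives $\gamma_1(T)=\exp\bigl(-\tfrac14\|(1-\napiernum^{-T\omega})\mathsf{m}_{0,\Lambda}\|^2\bigr)\to 0$, and the conclusion then goes through Theorem~\ref{expedition0011361}. The paper simply calls this ``equivalent to the non-existence of a ground state,'' whereas your Step~2 rightly notes that it only yields $\langle 1,P_0 1\rangle=0$ and supplies the missing positivity-improving/Perron--Frobenius step at $\kappa=0$ (your aside that $1\in\sphilb{E}(H)$ follows from the limit of $-\frac{1}{T}\log L(T)$ does not follow as stated, but the nonexistence argument never uses it).
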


\begin{proof}
Let $\varrho_{\Lambda}$ be the source with the infrared cutoff removed.
Imposing the infrared-singular condition, the preceding argument gives $$\lim_{T \to \infty}
\gamma_1(T)
=
\fnexp{-\oneoverfour
\norm{\mathsf{m}_{\Lambda}}_{\sphilb{H}}^2}
=
0.$$
This is equivalent to the non-existence of a ground state.
\end{proof}

\subsection{Preparation of Measures toward Removal of Infrared/Ultraviolet Cutoffs}\label{preparation-of-measures-toward-removal-of-infraredultraviolet-cutoffs}

Using the ground state \(\Psi_{\txtgs,\msr{P}_{\sphilb{H}},\kappa,\Lambda}\) from Theorem \ref{expedition0011395}, define the measure \(\msr{P_{\txtgs,\kappa,\Lambda}}\) by \[\msr{P_{\txtgs,\kappa,\Lambda}}
=
\Psi_{\txtgs,\msr{P}_{\sphilb{H}},\kappa,\Lambda}^2
\msr{P}_{\sphilb{H}}.\] This is equivalent to the ground state transformation of \cite[Chapter 5, P.450]{LorincziHiroshimaBetz2}.

\begin{prop}\label{expedition0011492}
\begin{enumerate}
\item
The measures $\msr{P_{\txtgs,\kappa,\Lambda}}$ and $\msr{P}_{\sphilb{H}}$ are equivalent, and the function $\Psi_{\txtgs,\msr{P}_{\sphilb{H}},\kappa,\Lambda}^2$ is the Radon-Nikodym derivative $\od{\msr{P_{\txtgs,\kappa,\Lambda}}}
{\msr{P}_{\sphilb{H}}}$.

\item
The measure $\msr{P_{\txtgs,\kappa,\Lambda}}$ is a Gaussian measure with mean and covariance
\begin{equation}
\begin{aligned}
\sqfun{\prbexp_{\msr{P_{\txtgs,\kappa,\Lambda}}}}
{\opfocksegal(f)}
&=
-\bkt{f}{\mathsf{m}_{\kappa,\Lambda}}_{\sphilb{H}}, \\
\sqfun{\prbcov_{\msr{P_{\txtgs,\kappa,\Lambda}}}}{\opfocksegal(f) \opfocksegal(g)}
&=
\onehalf
\bkt{f}{g}
_{\sphilb{H}}
\end{aligned}
\end{equation}
respectively.

\item
The Wick products with respect to these measures coincide.
In particular, $$\wick{\napiernum^{\fun{\opfocksegal}{f}}}
_{\msr{P}_{\sphilb{H}}}
=
\wick{\napiernum^{\fun{\opfocksegal}{f}}}
_{\msr{P_{\txtgs,\kappa,\Lambda}}}$$ holds.

\item
Regarded as a multiplication operator, the ground state gives a unitary transformation $$\Psi_{\txtgs,\msr{P}_{\sphilb{H}},\kappa,\Lambda}
\colon \fun{\lp^{2}}{\prbqspace_{\sphilb{H}},\msr{P}_{\sphilb{H}}}
\to \fun{\lp^{2}}{\prbqspace_{\sphilb{H}},\msr{P_{\txtgs,\kappa,\Lambda}}}.$$
\end{enumerate}
\end{prop}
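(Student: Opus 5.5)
The whole proposition will follow from one fact: $\Psi^2$ is an exponential of a Gaussian random variable. By Theorem~\ref{expedition0011360}, $\Psi_{\txtgs,\msr{P}_{\sphilb{H}},\kappa,\Lambda}=\napiernum^{-\frac14\norm{\mathsf{m}_{\kappa,\Lambda}}^2}\wick{\napiernum^{-\opfocksegal(\mathsf{m}_{\kappa,\Lambda})}}$. By Proposition~\ref{expedition0011275} with $\alpha=\beta=1$ and $f=g=-\mathsf{m}_{\kappa,\Lambda}$, its square is
\[
\Psi^2=\napiernum^{-\frac12\norm{\mathsf{m}_{\kappa,\Lambda}}^2}\,\napiernum^{\frac12\norm{\mathsf{m}_{\kappa,\Lambda}}^2}\,\wick{\napiernum^{-2\opfocksegal(\mathsf{m}_{\kappa,\Lambda})}}
=\wick{\napiernum^{-2\opfocksegal(\mathsf{m}_{\kappa,\Lambda})}} .
\]
This is strictly positive almost surely and has $\msr{P}_{\sphilb{H}}$-expectation $1$ by the normalization of Wick exponentials. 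Under infrared regularization $\mathsf{m}_{\kappa,\Lambda}\in\sphilb{H}$, so everything is finite. Consequently $\msr{P}_{\txtgs,\kappa,\Lambda}=\Psi^2\msr{P}_{\sphilb{H}}$ is a probability measure. Since the density is positive a.s., the two measures are mutually absolutely continuous, and $\Psi^2$ is the Radon--Nikodym derivative by construction. This gives (1).

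For (2), I will compute the characteristic function under the new measure:
\[
\prbexp_{\msr{P}_{\txtgs,\kappa,\Lambda}}\bigl[\napiernum^{\imunit t\opfocksegal(f)}\bigr]
=\prbexp_{\msr{P}_{\sphilb{H}}}\bigl[\napiernum^{\imunit t\opfocksegal(f)}\,\wick{\napiernum^{-2\opfocksegal(\mathsf{m})}}\bigr].
\]
Write $\wick{\napiernum^{-2\opfocksegal(\mathsf{m})}}=\napiernum^{-\norm{\mathsf{m}}^2}\napiernum^{-2\opfocksegal(\mathsf{m})}$. Then evaluate $\prbexp[\napiernum^{\opfocksegal(\imunit t f-2\mathsf{m})}]=\napiernum^{\frac14\langle\imunit tf-2\mathsf{m},\imunit tf-2\mathsf{m}\rangle}$, using the bilinear (not sesquilinear) extension on the real Hilbert space; this is justified by analytic continuation of the Gaussian moment generating function. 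Expanding, the $\norm{\mathsf{m}}^2$ terms cancel, leaving
\[
\napiernum^{-\frac{t^2}{4}\norm{f}^2-\imunit t\langle f,\mathsf{m}_{\kappa,\Lambda}\rangle}.
\]
This is the characteristic function of a Gaussian with mean $-\langle f,\mathsf{m}_{\kappa,\Lambda}\rangle$ and variance $\frac12\norm f^2$. Polarization then gives the covariance $\frac12\langle f,g\rangle$. Joint Gaussianity of finitely many $\opfocksegal(f_j)$ follows by applying the same computation to $\opfocksegal(\sum t_jf_j)$, using linearity.

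For (3), the Wick exponential is defined only through the variance normalization $\napiernum^{-\frac14\norm f^2}$, which is independent of the mean (Remark~\ref{expedition0011410}). Since the covariances agree by (2), both Wick exponentials equal $\napiernum^{-\frac14\norm f^2}\napiernum^{\opfocksegal(f)}$ pointwise. For (4), the map $F\mapsto F/\Psi$ sends $\lp^2(\msr{P}_{\sphilb{H}})$ isometrically onto $\lp^2(\msr{P}_{\txtgs,\kappa,\Lambda})$, because $\int\abs{F/\Psi}^2\Psi^2\,d\msr{P}_{\sphilb{H}}=\int\abs F^2\,d\msr{P}_{\sphilb{H}}$. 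Its inverse is multiplication by $\Psi$, which is well defined since $\Psi>0$ a.s.; so the map is surjective. I will state (4) with this convention, identifying the "multiplication operator" with this pair of mutually inverse maps.

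The main obstacle is the sign and factor bookkeeping in (2), in particular the normalization $\frac14$ versus $\frac12$ and the reconciliation with the mean $-\opreal\mathsf{m}_{\kappa,\Lambda}(f)$ used in Section~\ref{expedition0011615}. I will check the cancellation carefully against Proposition~\ref{expedition0011392}, and read $\langle f,\mathsf{m}\rangle$ as a real inner product on the realification.
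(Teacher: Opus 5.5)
Your proof is correct and follows the paper's argument in substance. You use positivity of the density for (1), an explicit Gaussian computation for (2), mean-independence of the Wick normalization for (3), and the change-of-density isometry for (4); in (2) your characteristic-function calculation with $\Psi^{2}=\wick{\napiernum^{-2\opfocksegal(\mathsf{m}_{\kappa,\Lambda})}}$ replaces the paper's differentiation of expectations of Wick exponentials and has the small advantage of proving Gaussianity rather than only the first two moments.

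Your orientation in (4) is exactly what the paper's identity $\bkt{\Psi F}{\Psi G}_{\lp^{2}(\msr{P}_{\sphilb{H}})}=\bkt{F}{G}_{\lp^{2}(\msr{P_{\txtgs,\kappa,\Lambda}})}$ establishes: division by $\Psi$ maps $\lp^{2}(\msr{P}_{\sphilb{H}})$ onto $\lp^{2}(\msr{P_{\txtgs,\kappa,\Lambda}})$, with multiplication by $\Psi$ as its inverse. It is also the direction Definition~\ref{expedition0011500} uses for $\mathsf{U}_{\kappa,\Lambda}$, so it corrects the stated direction rather than deviating from the paper, and your explicit surjectivity check fills a step the paper omits.
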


\begin{proof}
(1): The function $\Psi_{\txtgs,\msr{P}_{\sphilb{H}},\kappa,\Lambda}^2$ is strictly positive.
In particular, this function and the multiplication operator it defines are invertible.
Therefore $\msr{P_{\txtgs,\kappa,\Lambda}}$ and $\msr{P}_{\sphilb{H}}$ are equivalent measures, and $\Psi_{\txtgs,\msr{P}_{\sphilb{H}},\kappa,\Lambda}^2$ is the Radon-Nikodym derivative $\od{\msr{P_{\txtgs,\kappa,\Lambda}}}
{\msr{P}_{\sphilb{H}}}$.

(2): Compute the expectation value and derivatives directly for the Wick products $F
= \wick{\napiernum^{\fun{\opfocksegal}{sf}}}
_{\msr{P}_{\sphilb{H}}}$ and $G
= \wick{\napiernum^{\fun{\opfocksegal}{tg}}}
_{\msr{P}_{\sphilb{H}}}$.

(3): Follows from the fact that the Wick products for Gaussian measures with the same variance but different means coincide.

(4): For any $F,G
\in \fun{\lp^{2}}{\prbqspace_{\sphilb{H}},\msr{P}_{\sphilb{H}}}$,
\begin{equation}
\begin{aligned}
&\bkt{\Psi_{\txtgs,\msr{P}_{\sphilb{H}},\kappa,\Lambda} F}
{\Psi_{\txtgs,\msr{P}_{\sphilb{H}},\kappa,\Lambda} G}_{\fun{\lp^{2}}{\prbqspace_{\sphilb{H}},\msr{P}_{\sphilb{H}}}}
\\ 
&=
\sqfun{\prbexp_{\msr{P}_{\sphilb{H}}}}
{\cmpconj{F} G \Psi_{\txtgs,\msr{P}_{\sphilb{H}},\kappa,\Lambda}^2}
=
\sqfun{\prbexp_{\msr{P}_{\sphilb{H}}}}
{\cmpconj{F} G
\od{\msr{P_{\txtgs,\kappa,\Lambda}}}
{\msr{P}_{\sphilb{H}}}}
\\ 
&=
\sqfun{\prbexp_{\msr{P_{\txtgs,\kappa,\Lambda}}}}
{\cmpconj{F} G}
=
\bkt{F}{G}_{\fun{\lp^{2}}{\prbqspace_{\sphilb{H}},\msr{P_{\txtgs,\kappa,\Lambda}}}}
\end{aligned}
\end{equation}
holds.
\end{proof}

As in the operator algebra discussion, define the ground state \(\psi_{\txtvanhowe,\txtgs,\kappa,\Lambda}\) as a functional. Using the ground state found in Theorem \ref{expedition0011360}, \[\Psi_{\txtgs,\msr{P}_{\sphilb{H}},\kappa,\Lambda}
=
\fnexp{-\oneoverfour
\norm{\mathsf{m}_{\kappa,\Lambda}}_{\sphilb{H}}^2}
\wick{\napiernum^{-\opfocksegal(\mathsf{m}_{\kappa,\Lambda})}}
_{\msr{P}_{\sphilb{H}}}
\in
\fun{\lp^{2}}{\prbqspace_{\sphilb{H}},\msr{P}_{\sphilb{H}}}\] define the linear functional \(\psi_{\txtvanhowe,\txtgs,\kappa,\Lambda}\) for linear operators constructed from the Segal field operator by \[\psi_{\txtvanhowe,\txtgs,\kappa,\Lambda}(A)
=
\bkt{\Psi_{\txtgs,\msr{P}_{\sphilb{H}},\kappa,\Lambda}}
{A
\Psi_{\txtgs,\msr{P}_{\sphilb{H}},\kappa,\Lambda}}
_{\fun{\lp^{2}}{\prbqspace_{\sphilb{H}},\msr{P}_{\sphilb{H}}}}.\] This linear functional \(\psi_{\txtvanhowe,\txtgs,\kappa,\Lambda}\) is also called a ground state of the van Hove model.

\begin{prop}\label{expedition0011494}
The ground state as a functional can also be written as $$\psi_{\txtvanhowe,\txtgs,\kappa,\Lambda}(A)
=
\bkt{1}{A1}
_{\fun{\lp^{2}}{\prbqspace_{\sphilb{H}},\msr{P_{\txtgs,\kappa,\Lambda}}}}.$$
\end{prop}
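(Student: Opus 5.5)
The identity will follow from Proposition~\ref{expedition0011492} by a change of measure. The only subtle point is how $A$ acts: the claimed identity says that the same operator $A$ acts on $\fun{\lp^{2}}{\prbqspace_{\sphilb{H}},\msr{P_{\txtgs,\kappa,\Lambda}}}$ through the identification of that space with $\fun{\lp^{2}}{\prbqspace_{\sphilb{H}},\msr{P}_{\sphilb{H}}}$.

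\textbf{Multiplication operators.} First take $A$ to be multiplication by a function $F(\phi)$ of the Segal fields on $Q$-space, for example $F=\napiernum^{\imunit\opfocksegal(f)}$ or a resolvent $\rbk{\imunit\lambda-\opfocksegal(f)}^{-1}$. Write $\Psi=\Psi_{\txtgs,\msr{P}_{\sphilb{H}},\kappa,\Lambda}$. Since $\Psi$ is real and strictly positive, and $\Psi^2$ is the Radon--Nikodym derivative by Proposition~\ref{expedition0011492}(1),
\begin{equation*}
\psi_{\txtvanhowe,\txtgs,\kappa,\Lambda}(F)
=
\sqfun{\prbexp_{\msr{P}_{\sphilb{H}}}}{\Psi\, F\, \Psi}
=
\sqfun{\prbexp_{\msr{P}_{\sphilb{H}}}}{F\, \od{\msr{P_{\txtgs,\kappa,\Lambda}}}{\msr{P}_{\sphilb{H}}}}
=
\sqfun{\prbexp_{\msr{P_{\txtgs,\kappa,\Lambda}}}}{F}
=
\bkt{1}{F1}_{\fun{\lp^{2}}{\prbqspace_{\sphilb{H}},\msr{P_{\txtgs,\kappa,\Lambda}}}}.
\end{equation*}
For bounded $F$ this is immediate. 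For polynomial or exponential $F$ it holds because both measures are Gaussian with finite exponential moments, by Proposition~\ref{expedition0011492}(2), so all integrals are finite.

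\textbf{General operators.} For a general linear operator $A$ built from Segal field operators, I would transport $A$ by the unitary $\Psi$ of Proposition~\ref{expedition0011492}(4) to $\tilde A=\inv{\Psi}A\Psi$ acting on $\fun{\lp^{2}}{\prbqspace_{\sphilb{H}},\msr{P_{\txtgs,\kappa,\Lambda}}}$. The key observation is that this unitary sends the constant function $1$ in the target space to $\Psi$ in the source space, since $\Psi\cdot 1=\Psi$. Unitarity then gives
\begin{equation*}
\bkt{\Psi}{A\Psi}_{\msr{P}_{\sphilb{H}}}
=
\bkt{1}{\tilde A\, 1}_{\msr{P_{\txtgs,\kappa,\Lambda}}}.
\end{equation*}
Multiplication operators commute with multiplication by $\Psi$, so $\tilde A=A$ for them, and the two steps agree.

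\textbf{Main obstacle.} The difficulty is interpretive rather than computational. One must fix what ``$A$'' means on the new space when $A$ involves noncommuting operators, such as $\physham_{\txtbsn,\txtfr}$ or the creation and annihilation operators. I would state the result for the $\ast$-algebra generated by Weyl operators or resolvents, realized as multiplication operators in the $Q$-space picture at a fixed time, and extend to general $A$ via the unitary identification. On that algebra the identity holds first on exponentials $\napiernum^{\imunit\opfocksegal(f)}$. It then extends by linearity and by norm continuity of both sides, since both are states, to the generated $\oacstar$-algebra.
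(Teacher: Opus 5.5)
Your argument is correct and is essentially the paper's proof: the identity is the definition $P_{\mathrm{gs},\kappa,\Lambda}=\Psi^{2}P_{\mathcal{H}}$ read as a change of measure. The paper checks it on Wick exponentials, while you use the Radon--Nikodym derivative directly; your caveat that non-multiplication operators such as $H_{\mathrm{b},\mathrm{fr}}$ must be replaced by $\Psi^{-1}A\Psi$ matches the paper's later remark that the van Hove Hamiltonian turns into $L_{\kappa,\Lambda}+E_0$ under this transformation. One small point: the unitary you actually use is multiplication by $\Psi$ from $L^2(P_{\mathrm{gs},\kappa,\Lambda})$ to $L^2(P_{\mathcal{H}})$, i.e.\ $\mathsf{U}_{\kappa,\Lambda}$ of Definition~\ref{expedition0011500}; this is also what the computation in Proposition~\ref{expedition0011492}(4) proves, despite the arrow direction stated there.
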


\begin{proof}
Verify using Wick products; this is in fact the definition of the measure $\msr{P_{\txtgs,\kappa,\Lambda}}$.
\end{proof}

We now remove the cutoffs, first the infrared cutoff and then the ultraviolet cutoff. Since the ground state from Theorem \ref{expedition0011360} loses its meaning after removing the infrared cutoff, we proceed by making good use of the ground state as a functional, or the function \(1\) as ground state in the \(\msr{P_{\txtgs,\kappa,\Lambda}}\)-space of Proposition \ref{expedition0011494}. First, as the measures corresponding to the cutoff-removal limits, define: the limit \(\kappa
\to 0\) of \(\msr{P_{\txtgs,\kappa,\Lambda}}\) as \(\msr{P_{\txtgs,\Lambda}}\), the limit \(\Lambda
\to \infty\) as \(\msr{P_{\txtgs,\kappa}}\), and the double limit as \(\msr{P_{\txtirsingular}}\).

\begin{defn}\label{expedition0011500}
Let $\tilde{\physham}_{\txtvanhowe,\kappa,\Lambda}$ be the non-negative self-adjoint operator shifted by the ground state energy: $$\tilde{\physham}_{\txtvanhowe,\kappa,\Lambda}
=
\physham_{\txtvanhowe,\kappa,\Lambda}
-\physgse(\physham_{\txtvanhowe,\kappa,\Lambda}).$$
Using the ground state $\Psi_{\txtgs,\msr{P}_{\sphilb{H}},\kappa,\Lambda}$ from Theorem \ref{expedition0011360}, define the unitary operator $\mathsf{U}_{\kappa,\Lambda}$ by $$\mathsf{U}_{\kappa,\Lambda}
\colon \fun{\lp^{2}}{\prbqspace_{\sphilb{H}},\msr{P_{\txtgs,\kappa,\Lambda}}}
\to \fun{\lp^{2}}{\prbqspace_{\sphilb{H}},\msr{P}_{\sphilb{H}}};
\quad
\mathsf{U}_{\kappa,\Lambda} F
=
\Psi_{\txtgs,\msr{P}_{\sphilb{H}},\kappa,\Lambda}
F,$$ and use it to define $$L_{\kappa,\Lambda}
=
\inv{\mathsf{U}_{\kappa,\Lambda}}
\tilde{\physham}_{\txtvanhowe,\kappa,\Lambda}
\mathsf{U}_{\kappa,\Lambda}
=
\inv{\Psi_{\txtgs,\msr{P}_{\sphilb{H}},\kappa,\Lambda}}
\tilde{\physham}_{\txtvanhowe,\kappa,\Lambda}
\Psi_{\txtgs,\msr{P}_{\sphilb{H}},\kappa,\Lambda}.$$
This unitary transformation is called the ground state transformation.
\end{defn}

Let us examine the properties of the self-adjoint operator \(\physham[L]_{\kappa,\Lambda}\).

\begin{prop}
\begin{enumerate}
\item
The operator $\physham[L]_{\kappa,\Lambda}$ is a self-adjoint operator on $\fun{\lp^{2}}{\prbqspace_{\sphilb{H}},\msr{P_{\txtgs,\kappa,\Lambda}}}$.
In particular, it can be written as $$\physham[L]_{\kappa,\Lambda}
=
\physham_{\txtbsn,\txtfr}
+\bkt{\omega \mathsf{m}_{\kappa,\Lambda}}{f}
_{\sphilb{H}}.$$

\item
The ground state of $\physham[L]_{\kappa,\Lambda}$ is $1
\in \fun{\lp^{2}}{\prbqspace_{\sphilb{H}},\msr{P_{\txtgs,\kappa,\Lambda}}}$, with ground state energy $0$.

\item
The operator $\physham[L]_{\kappa,\Lambda}$ generates a heat semigroup.

In particular, $$\inv{\mathsf{U}_{\kappa,\Lambda}}
\napiernum^{-t \physham_{\txtvanhowe,\kappa,\Lambda}}
\mathsf{U}_{\kappa,\Lambda}
=
\napiernum^{-t \physgse(\physham_{\txtvanhowe,\kappa,\Lambda})}
\napiernum^{-t \physham[L]_{\kappa,\Lambda}}$$ holds.

\item

The action on the constant function $1$ is $\napiernum^{-t \physham[L]_{\kappa,\Lambda}} 1
= 1$.
\end{enumerate}
\end{prop}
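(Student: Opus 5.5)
The plan is to compute everything through the conjugation identity $L_{\kappa,\Lambda}=\inv{\Psi}\tilde{\physham}_{\txtvanhowe,\kappa,\Lambda}\Psi$, with $\Psi=\Psi_{\txtgs,\msr{P}_{\sphilb{H}},\kappa,\Lambda}$. We exploit that $\Psi$ is an exponential Wick product, $\Psi=\fnexp{-\oneoverfour\norm{\mathsf{m}_{\kappa,\Lambda}}^2}\wick{\napiernum^{-\opfocksegal(\mathsf{m}_{\kappa,\Lambda})}}$, so that the actions in Proposition~\ref{expedition0011392} make the conjugation explicit.

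For (1), self-adjointness is immediate. $\mathsf{U}_{\kappa,\Lambda}$ is unitary by Proposition~\ref{expedition0011492}(4), so $L_{\kappa,\Lambda}$ is unitarily equivalent to the self-adjoint, bounded-below $\tilde{\physham}_{\txtvanhowe,\kappa,\Lambda}$, with $\dom L_{\kappa,\Lambda}=\inv{\Psi}\dom\tilde{\physham}_{\txtvanhowe,\kappa,\Lambda}$. For the explicit form, we test on the total set of Wick exponentials $F=\wick{\napiernum^{\opfocksegal(f)}}$. These have the same form under both measures by Proposition~\ref{expedition0011492}(3). The product $\Psi F$ is again a Wick exponential with argument $f-\mathsf{m}_{\kappa,\Lambda}$, times a scalar, by Proposition~\ref{expedition0011275}. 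We then apply $\physham_{\txtbsn,\txtfr}$ and $\opfocksegal(\omega\mathsf{m}_{\kappa,\Lambda})$ using the two action rules of Proposition~\ref{expedition0011392}, add $-\physgse=\onehalf\norm{\omega^{\onehalf}\mathsf{m}_{\kappa,\Lambda}}^2$ from Theorem~\ref{expedition0011360}, and divide by $\Psi$. The terms quadratic in $\mathsf{m}_{\kappa,\Lambda}$ should cancel against $-\physgse$. What remains is the free part acting on $F$ plus a term linear in $\bkt{\omega\mathsf{m}_{\kappa,\Lambda}}{f}$, which is how I read the stated formula for $L_{\kappa,\Lambda}$ on these vectors. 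The main obstacle is this bookkeeping: the signs and the factors $\onehalf$ coming from the $\frac{1}{\sqrt2}$ normalization of $\opfocksegal$, and interpreting the linear term as an operator on the Wick-exponential core. I would first check the cancellation on $f=0$, which is exactly (2), before doing general $f$. Extension from the core uses that Wick exponentials form a core for $\physham_{\txtbsn,\txtfr}$, together with the Kato--Rellich bound.

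For (2), $L_{\kappa,\Lambda}1=\inv{\Psi}\tilde{\physham}_{\txtvanhowe,\kappa,\Lambda}\Psi=0$, since $\Psi$ is the ground state with eigenvalue $\physgse$ (Theorem~\ref{expedition0011360}). Since $L_{\kappa,\Lambda}\geq0$ by unitary equivalence with $\tilde{\physham}_{\txtvanhowe,\kappa,\Lambda}\geq0$, the value $0$ is the ground-state energy. Uniqueness follows from the positivity-improving proposition, transported by the positive multiplication operator $\Psi$.

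For (3) and (4), $L_{\kappa,\Lambda}\geq0$ is self-adjoint, so $\napiernum^{-tL_{\kappa,\Lambda}}$ is a contraction semigroup by the spectral theorem. Functional calculus commutes with unitary conjugation, giving $\inv{\mathsf{U}_{\kappa,\Lambda}}\napiernum^{-t\physham_{\txtvanhowe,\kappa,\Lambda}}\mathsf{U}_{\kappa,\Lambda}=\napiernum^{-t\physgse}\napiernum^{-tL_{\kappa,\Lambda}}$, because $\physham_{\txtvanhowe,\kappa,\Lambda}=\tilde{\physham}_{\txtvanhowe,\kappa,\Lambda}+\physgse$. Finally, (4) follows from (2) via the spectral theorem: $L_{\kappa,\Lambda}1=0$ implies $\napiernum^{-tL_{\kappa,\Lambda}}1=1$. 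Equivalently, it follows from $\napiernum^{-t\physham}\Psi=\napiernum^{-t\physgse}\Psi$.
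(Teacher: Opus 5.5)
Your proposal is correct and takes essentially the same route as the paper: self-adjointness from unitarity of the ground-state transformation, the explicit form by testing on Wick exponentials with the product and action rules, $L_{\kappa,\Lambda}1=0$ from the eigenvalue equation for $\Psi$, and unitary invariance of the functional calculus for (3)--(4). The paper checks $\mathsf{U}_{\kappa,\Lambda}K=\tilde{H}\,\mathsf{U}_{\kappa,\Lambda}$ for $K=H_{0}+\langle\omega\mathsf{m}_{\kappa,\Lambda},f\rangle$ on such vectors, which is the same computation as your direct evaluation of $\Psi^{-1}\tilde{H}\Psi F$. Your core/Kato--Rellich remark and the uniqueness argument in (2) are extra care the paper omits. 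Reading the formula as an identity on Wick exponentials is the right interpretation, since the $f$-dependent term is really a drift along $\omega\mathsf{m}_{\kappa,\Lambda}$ rather than a scalar.
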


\begin{proof}
(1), Preparation: Self-adjointness is preserved by the unitary property of the operator $\mathsf{U}$.
The explicit form of the operator can be found by examining the action on $F
= \wick{\napiernum^{\fun{\opfocksegal}{f}}}
_{\msr{P}_{\sphilb{H}}}$ expressed as a Wick product.
In particular, it suffices to show $\mathsf{U}_{\kappa,\Lambda}
\physham[K]
=
\tilde{\physham}_{\txtvanhowe,\kappa,\Lambda}
\mathsf{U}_{\kappa,\Lambda}$ for $\physham[K]
=
\physham_{\txtbsn,\txtfr}
+\bkt{\omega \mathsf{m}_{\kappa,\Lambda}}{f}
_{\sphilb{H}}$.
Define the coefficient $$C(f)
=
\fnexp{-\oneoverfour
\norm{\mathsf{m}_{\kappa,\Lambda}}_{\sphilb{H}}^2
+\onehalf
\bkt{f}{-\mathsf{m}_{\kappa,\Lambda}}_{\sphilb{H}}}$$ which will appear later.

(1), Action of $\tilde{\physham}_{\txtvanhowe,\kappa,\Lambda} \mathsf{U}_{\kappa,\Lambda}$:
By Proposition \ref{expedition0011392}, computing $\physham_{\txtvanhowe,\kappa,\Lambda}
\rbk{\Psi_{\txtgs,\msr{P}_{\sphilb{H}},\kappa,\Lambda} F}$,
\begin{equation}
\begin{aligned}
&\physham_{\txtvanhowe,\kappa,\Lambda}
\rbk{\Psi_{\txtgs,\msr{P}_{\sphilb{H}},\kappa,\Lambda} F}
\\ 
&=
C(f)
\funrbk{\physham_{\txtbsn,\txtfr} + \opfocksegal(\omega \mathsf{m}_{\kappa,\Lambda})}
{\wick{\napiernum^{\fun{\opfocksegal}{f - \mathsf{m}_{\kappa,\Lambda}}}}
_{\msr{P}_{\sphilb{H}}}}
\\ 
&=
C(f)
\rbkleft{\wick{\opfocksegal(\omega (f - \mathsf{m}_{\kappa,\Lambda}))
\napiernum^{\fun{\opfocksegal}{f - \mathsf{m}_{\kappa,\Lambda}}}}
_{\msr{P}_{\sphilb{H}}}}
\\
&\qquad
+\wick{\opfocksegal(\omega \mathsf{m}_{\kappa,\Lambda})
\napiernum^{\fun{\opfocksegal}{f - \mathsf{m}_{\kappa,\Lambda}}}}
_{\msr{P}_{\sphilb{H}}}
\\
&\qquad
+\rbkright{\onehalf
\bkt{\omega \mathsf{m}_{\kappa,\Lambda}}
{f - \mathsf{m}_{\kappa,\Lambda}}
_{\sphilb{H}}
\wick{\napiernum^{\fun{\opfocksegal}{f - \mathsf{m}_{\kappa,\Lambda}}}}
_{\msr{P}_{\sphilb{H}}}}
\\ 
&=
C(f)
\rbkleft{\wick{\opfocksegal(\omega f)
\napiernum^{\fun{\opfocksegal}{f - \mathsf{m}_{\kappa,\Lambda}}}}
_{\msr{P}_{\sphilb{H}}}}
\\
&\qquad
+\rbkright{\onehalf
\bkt{\omega \mathsf{m}_{\kappa,\Lambda}}
{f - \mathsf{m}_{\kappa,\Lambda}}
_{\sphilb{H}}
\wick{\napiernum^{\fun{\opfocksegal}{f - \mathsf{m}_{\kappa,\Lambda}}}}
_{\msr{P}_{\sphilb{H}}}}
\end{aligned}
\end{equation}
is obtained.
Therefore,
\begin{equation}
\begin{aligned}
&\tilde{\physham}_{\txtvanhowe,\kappa,\Lambda}
\rbk{\Psi_{\txtgs,\msr{P}_{\sphilb{H}},\kappa,\Lambda} F}
\\ 
&=
C(f)
\rbk{\wick{\opfocksegal(\omega f)
\napiernum^{\fun{\opfocksegal}{f - \mathsf{m}_{\kappa,\Lambda}}}}
_{\msr{P}_{\sphilb{H}}}
+\onehalf
\bkt{\omega \mathsf{m}_{\kappa,\Lambda}}
{f}
_{\sphilb{H}}
\wick{\napiernum^{\fun{\opfocksegal}{f - \mathsf{m}_{\kappa,\Lambda}}}}
_{\msr{P}_{\sphilb{H}}}}
\end{aligned}
\end{equation}
holds.

(1), Action of $\mathsf{U}_{\kappa,\Lambda} \physham[K]$: Likewise, by Proposition \ref{expedition0011392}, $$\physham[K] F
=
\wick{\opfocksegal(\omega f)
\napiernum^{\fun{\opfocksegal}{f}}}
_{\msr{P}_{\sphilb{H}}}
+\bkt{f}
{\omega \mathsf{m}_{\kappa,\Lambda}}
_{\sphilb{H}}
\wick{\napiernum^{\fun{\opfocksegal}{f}}}
_{\msr{P}_{\sphilb{H}}}$$ holds.
Applying $\mathsf{U}_{\kappa,\Lambda}$ to the first term.
It suffices to compute $\fnrestr{\opod{t}
\mathsf{U}_{\kappa,\Lambda}
\wick{\napiernum^{\fun{\opfocksegal}{f + t \omega f}}}
_{\msr{P}_{\sphilb{H}}}}
{t=0}$, and the result is
\begin{equation}
\begin{aligned}
&\mathsf{U}_{\kappa,\Lambda}
\wick{\opfocksegal(\omega f)
\napiernum^{\fun{\opfocksegal}{f}}}
_{\msr{P}_{\sphilb{H}}}
\\ 
&=
C(f)
\rbk{\wick{\opfocksegal(\omega f)
\napiernum^{\fun{\opfocksegal}{f - \mathsf{m}_{\kappa,\Lambda}}}}
_{\msr{P}_{\sphilb{H}}}
-\onehalf
\bkt{\omega \mathsf{m}_{\kappa,\Lambda}}{f}_{\sphilb{H}}
\wick{\napiernum^{\fun{\opfocksegal}{f}}}
_{\msr{P}_{\sphilb{H}}}}
\end{aligned}
\end{equation}
is obtained.

The action of $\mathsf{U}_{\kappa,\Lambda}$ on the second term is $$C(f)
\bkt{f}
{\omega \mathsf{m}_{\kappa,\Lambda}}
_{\sphilb{H}}
\wick{\napiernum^{\fun{\opfocksegal}{f}}}
_{\msr{P}_{\sphilb{H}}}.$$

Taking their sum,
\begin{equation}
\begin{aligned}
&\mathsf{U}_{\kappa,\Lambda}
\physham[K] F
\\ 
&=
C(f)
\rbk{\wick{\opfocksegal(\omega f)
\napiernum^{\fun{\opfocksegal}{f - \mathsf{m}_{\kappa,\Lambda}}}}
_{\msr{P}_{\sphilb{H}}}
+\onehalf
\bkt{\omega \mathsf{m}_{\kappa,\Lambda}}{f}_{\sphilb{H}}
\wick{\napiernum^{\fun{\opfocksegal}{f}}}
_{\msr{P}_{\sphilb{H}}}}
\end{aligned}
\end{equation}
is obtained.
This agrees with the result from the previous step.

(2): By direct computation from the definition: $$\physham[L]_{\kappa,\Lambda} 1
=
\inv{\Psi_{\txtgs,\msr{P}_{\sphilb{H}},\kappa,\Lambda}}
\tilde{\physham}_{\txtvanhowe,\kappa,\Lambda}
\Psi_{\txtgs,\msr{P}_{\sphilb{H}},\kappa,\Lambda} 1
=
\inv{\Psi_{\txtgs,\msr{P}_{\sphilb{H}},\kappa,\Lambda}}
\cdot
0
=
0.$$

(3): It suffices to show the semigroup property, which follows from $$\napiernum^{-t \physham[L]_{\kappa,\Lambda}}
=
\inv{\mathsf{U}}
\napiernum^{-t \tilde{\physham}_{\txtvanhowe,\kappa,\Lambda}}
\mathsf{U}
=
\napiernum^{t \physgse(\physham_{\txtvanhowe,\kappa,\Lambda})}
\inv{\mathsf{U}}
\napiernum^{-t \physham_{\txtvanhowe,\kappa,\Lambda}}
\mathsf{U}.$$

(4): By part (2) of this proposition.
\end{proof}

Consider the ground state \(\psi_{\txtvanhowe,\txtgs,\kappa,\Lambda}\) as a functional. Evaluating the van Hove Hamiltonian with the ground state \(\Psi_{\txtgs,\msr{P}_{\sphilb{H}},\kappa,\Lambda}\) from Theorem \ref{expedition0011360}, \[\psi_{\txtvanhowe,\txtgs,\kappa,\Lambda}
(\physham_{\txtvanhowe,\kappa,\Lambda})
=
\physgse(\physham_{\txtvanhowe,\kappa,\Lambda})\] holds. When computing with the ground state \(1
\in \fun{\lp^{2}}{\prbqspace_{\sphilb{H}},\msr{P_{\txtgs,\kappa,\Lambda}}}\) of Definition \ref{expedition0011494}, formally apply the above evaluation as well. When computing with this ground state \(1\), of course \(\physham[L]_{\kappa,\Lambda}\) and the ground state energy appear via the appropriate unitary transformation, and the evaluation can be carried out properly.

\subsection{Projective System of Probability Measures}\label{expedition0011536}

We develop the theory of projective systems of probability measures as a fundamental concept toward removing the infrared/ultraviolet cutoffs.

\begin{defn}\label{expedition0011527}
Choose a directed set $\pairbk{I,\leq}$ as the index set and set up objects as follows.
\begin{enumerate}
\item
For each $i
\in I$, assign a Polish space $X_i$ and a probability measure $\msr{P_i}$.

\item
Under the condition $i
\leq j$, assign continuous maps $\pi_{ij}
\colon X_j
\to X_i$ satisfying the projection conditions $$\pi_{ii}
=
\id_{X_i},
\quad
\pi_{ik}
=
\pi_{ij} \circ \pi_{jk}
\quad
\rbk{i \leq j \leq k}.$$

\item

The family of probability measures is consistent, i.e., under the condition $i
\leq j$, $$\msr{P_i}
= \msr{P_j} \circ \inv{\pi_{ij}}$$ holds.
\end{enumerate}

In this case, the family $\pairbk{(X_i,\msr{P_i}), \pi_{ij}}$ is called a projective system of probability measures.
For this projective system, define the projective limit of sets $$X
=
\projlim_{i \in I} X_i
=
\set{\fml{x_i}{i \in I} \in \prod_{i \in I} X_i}
{\pi_{ij}(x_j) = x_i (i \leq j)}.$$
Define the natural sigma-algebra on the product space $\prod_{i \in I} X_i$, and take its intersection with $X$ as the sigma-algebra on $X$.
\end{defn}

\begin{thm}\label{expedition0011529}
There exists a family of projections $\fml{\pi_i}{i \in I}$ and a unique probability measure $\msr{P}$ on $X$ satisfying, for all $i
\in I$, $$\msr{P_i}
=
\msr{P} \circ \inv{\pi_i},
\quad
\pi_i
\colon X
\to X_i;
\quad
\fml{x_j}{j \in I}
\mapsto
x_i.$$
\end{thm}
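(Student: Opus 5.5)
The plan is to follow the classical Kolmogorov--Bochner construction. First build a finitely additive set function on cylinder sets of $X$ from the $\msr{P_i}$. Then upgrade it to a countably additive measure using the inner regularity (tightness) of Borel probability measures on Polish spaces. Uniqueness comes from a $\pi$--$\lambda$ argument.

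\textbf{Cylinder algebra and premeasure.} The maps $\pi_i \colon X \to X_i$ are the coordinate restrictions. Because $x \in X$ satisfies $\pi_{ij}(x_j) = x_i$, they obey $\pi_i = \pi_{ij} \circ \pi_j$ for $i \leq j$. Set
\[
\mathcal{C} = \set{\inv{\pi_i}(B)}{i \in I,\ B \subset X_i \text{ Borel}}.
\]
Since $I$ is directed, any two cylinders can be written over a common index $k$ via $\inv{\pi_i}(B) = \inv{\pi_k}(\inv{\pi_{ik}}(B))$. Hence $\mathcal{C}$ is an algebra. Define $P_0(\inv{\pi_i}(B)) = \msr{P_i}(B)$. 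This is well defined precisely because of the consistency condition $\msr{P_i} = \msr{P_j} \circ \inv{\pi_{ij}}$. There is one subtlety: two representations of the same set in $X$ agree only up to the part of $X_k$ not hit by $\pi_k$. I plan to handle this by working with $\pi_k(X)$ (equivalently, by assuming the bonding maps are surjective, or by verifying it through the countable-additivity step below). Finite additivity then follows by passing to a common index and using additivity of $\msr{P_k}$.

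\textbf{Countable additivity (main obstacle).} I will show that if cylinders $C_n \downarrow \emptyset$, then $P_0(C_n) \to 0$. Suppose instead $P_0(C_n) \geq \ep > 0$.
\begin{enumerate}
\item First restrict to a countable cofinal chain $i_1 \leq i_2 \leq \cdots$ in $I$. This uses that only countably many indices are involved, so after refining we may take $C_n = \inv{\pi_{i_n}}(B_n)$.
\item Since each $X_{i_n}$ is Polish, $\msr{P_{i_n}}$ is inner regular. Choose compacts $K_n \subset B_n$ with $\msr{P_{i_n}}(B_n \setminus K_n) < \ep 2^{-n-1}$.
\item Form the decreasing cylinders $D_n = \bigcap_{m \leq n} \inv{\pi_{i_m}}(K_m)$. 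Then $P_0(D_n) \geq \ep/2$, so each $D_n$ is nonempty.
\item Pick points in $D_n$ and use continuity of the $\pi_{ij}$ together with compactness of the $K_m$. A diagonal argument yields a compatible thread $(x_{i_m})_m$ with $x_{i_m} \in K_m$, which lies in $\bigcap_n C_n$ and gives the contradiction.
\end{enumerate}
The delicate point is extending this thread from the chain $\{i_m\}$ to a point of the full projective limit $X$ over all of $I$. In general this requires a Bochner-type ``sequential maximality'' hypothesis, or that $I$ admits a countable cofinal subset. I expect to have to impose this implicitly, as is the case for the cutoff families $(\kappa,\Lambda)$ in the paper, where cofinal sequences $\kappa_n \to 0$, $\Lambda_n \to \infty$ exist. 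I would state this restriction explicitly.

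\textbf{Extension and uniqueness.} By Carathéodory's extension theorem, $P_0$ extends to a probability measure $\msr{P}$ on $\sigma(\mathcal{C})$. This coincides with the trace on $X$ of the product $\sigma$-algebra of Definition~\ref{expedition0011527}, since the product $\sigma$-algebra is generated by coordinate cylinders. By construction $\msr{P} \circ \inv{\pi_i} = \msr{P_i}$. Any other measure with the same marginals agrees with $\msr{P}$ on the $\pi$-system $\mathcal{C}$, hence on $\sigma(\mathcal{C})$ by Dynkin's $\pi$--$\lambda$ theorem, which proves uniqueness.
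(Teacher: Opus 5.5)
The paper gives no argument here. It only says the result is ``essentially Kolmogorov's extension theorem'' and omits the proof. So your proposal is not a competing route; it is the standard Kolmogorov--Bochner proof that the citation stands in for. On the key point you are more careful than the paper: the statement as written, for an arbitrary directed set, is false.

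Here is a counterexample. Let $T$ be a set of cardinality greater than $2^{\aleph_0}$, and let $I$ be the finite subsets of $T$ ordered by inclusion. Let $X_F\subset\fldreal^{F}$ be the open, hence Polish, set of injective maps $F\to\fldreal$, and let $\pi_{FF'}$ be restriction. Let $\msr{P_F}$ be the law of independent uniform variables on $[0,1]$ indexed by $F$; since ties have probability zero, it is carried by $X_F$. Every requirement of Definition~\ref{expedition0011527} holds, and the bonding maps are even surjective. Yet the projective limit $X$ is the set of injections $T\to\fldreal$, which is empty, so no $\msr{P}$ exists. This also shows that your fallback ``assume the bonding maps are surjective'' does not by itself settle well-definedness of $P_0$. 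An extra hypothesis is unavoidable, and the one you name is the right kind.

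Two adjustments are needed.

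\textbf{Circularity.} You defer well-definedness of $P_0$, i.e.\ that every Borel subset of $X_k$ missing $\pi_k(X)$ is $\msr{P_k}$-null, to the countable-additivity step. But that step infers $D_n\neq\emptyset$ from $P_0(D_n)>0$, which presupposes it. Run the compactness argument in the base spaces instead: choose $y_n\in\bigcap_{m\le n}\pi_{i_m i_n}^{-1}(K_m)\subset X_{i_n}$ rather than a point of $D_n\subset X$. The diagonal argument then produces the thread exactly as you describe. Applying the same argument first to a compact $K\subset X_k\setminus\pi_k(X)$ of positive measure gives well-definedness.

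\textbf{Choice of hypothesis.} Tie it to where the paper actually invokes the theorem. The cutoffs $(\kappa,\Lambda)$ index an approximating net, not a projective system. The paper's projective systems fall into two groups.
\begin{itemize}
\item Those indexed by $\setpowfin{\dom\mathsf{m}}$ and $\setpowfin{\fldreal\times\dom\mathsf{m}}$ have no countable cofinal subset, but they are sequentially maximal because the limit is a full product.
\item Those indexed by $T>0$ in Proposition~\ref{expedition0011508} and by $(L_n)$ have countable cofinal chains. For the path spaces over $T>0$, Bochner's condition in the strict sense even fails: a thread along $T_n\uparrow T^{*}<\infty$ need not extend to a continuous path on $\fldreal$. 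So in step~1 you must enlarge your chain to absorb a cofinal sequence.
\end{itemize}
With the hypothesis ``sequentially maximal, or possessing a countable cofinal subset'' made explicit, your cylinder algebra, Carath\'eodory extension and $\pi$--$\lambda$ uniqueness argument are correct.
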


\begin{proof}
This is essentially Kolmogorov's extension theorem; the proof is omitted.
\end{proof}

\begin{defn}\label{expedition0011530}
When the projective system $\pairbk{(X_i,\msr{P_i}), \pi_{ij}}$ satisfies the following condition, it is called tight as a projective system, or projectively tight.

For any $\ep
> 0$, there exists a family of compact sets $\fml{K_i}{i \in I}$ satisfying:
\begin{enumerate}
\item
For each $i
\in I$, $K_i
\subset X_i$ is a compact set.

\item
If $i
\leq j$, then $\pi_{ij}(K_j)
\subset K_i$, i.e., projective consistency holds.
In particular, $\fml{K_i}{i \in I}$ itself forms a projective system of compact sets.

\item
Uniform concentration of mass: $$\inf_{i \in I}
\msr{P_i}(K_i)
> 1 - \ep.$$
\end{enumerate}
\end{defn}

\begin{prop}\label{expedition0011531}
Assume the following conditions for each $i
\in I$:
\begin{enumerate}
\item
For each $i
\in I$, there exists a subset $C_i
\subset X_i$ satisfying $\msr{P_i}(C_i)
= 1$.

\item
The family $\fml{C_i}{i \in I}$ is projectively consistent, i.e., $\pi_{ij}(C_j)
\subset C_i$ for $i
\leq j$.
\end{enumerate}
Under these conditions, define the subset of the projective limit $X$: $$C
=
\set{\fml{x_i}{i \in I} \in X}
{\fun{\opforall_{i \in I}}{x_i \in C_i}}.$$
Then the projective limit measure $\msr{P}$ whose existence is guaranteed by Kolmogorov's extension theorem \ref{expedition0011529} satisfies $\msr{P}(C)
= 1$.
This property is called preservation of path continuity.
\end{prop}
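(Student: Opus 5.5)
The plan is to write $C$ as an intersection of cylinder sets and use the consistency hypothesis to reduce it to a countable, decreasing intersection. For each $i \in I$, the defining condition $x_i \in C_i$ is the preimage $\inv{\pi_i}(C_i)$ under the coordinate projection $\pi_i \colon X \to X_i$ of Theorem~\ref{expedition0011529}. Hence
\[
C = \bigcap_{i \in I} \inv{\pi_i}(C_i).
\]
I assume each $C_i$ is Borel (or at least $\msr{P_i}$-measurable), as is implicit in the hypothesis $\msr{P_i}(C_i) = 1$. Then by Theorem~\ref{expedition0011529},
\[
\msr{P}\rbk{\inv{\pi_i}(C_i)} = \msr{P_i}(C_i) = 1
\]
for every single $i$. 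The whole task is therefore to pass from single indices to the intersection over $I$.

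The key step is monotonicity along the directed set. For $i \leq j$ we have $\pi_i = \pi_{ij} \circ \pi_j$ on $X$, since points of $X$ satisfy $\pi_{ij}(x_j) = x_i$. Combined with $\pi_{ij}(C_j) \subset C_i$, this gives
\[
\inv{\pi_j}(C_j) \subset \inv{\pi_i}(C_i) \quad (i \leq j).
\]
So the cylinder sets decrease as the index increases. Next I will use a countable cofinal chain $j_1 \leq j_2 \leq \cdots$ in $I$. This exists for the index sets used in the paper, such as $(\kappa,\Lambda)$ with $\kappa = 1/n$, $\Lambda = n$, or countably many finite-dimensional subspaces. Given any $i$, choose $n$ with $j_n \geq i$; monotonicity then shows $\inv{\pi_{j_n}}(C_{j_n}) \subset \inv{\pi_i}(C_i)$. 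Therefore
\[
C = \bigcap_{n} \inv{\pi_{j_n}}(C_{j_n}),
\]
which is a decreasing countable intersection of sets of full measure. By continuity from above (or simply countable subadditivity applied to the complements), $\msr{P}(C) = 1$. This also shows that $C$ is measurable for the product $\sigma$-algebra of Definition~\ref{expedition0011527}.

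The main obstacle is a directed set $I$ with no countable cofinal subset. In that case $C$ is an uncountable intersection and need not belong to the $\sigma$-algebra generated by the cylinders, since such $\sigma$-algebras only see countably many coordinates. For that case I would first try to prove $\msr{P}^{\ast}(C) = 1$ for the outer measure; the goal is that every measurable $B \supset C$ depends on countably many indices $i_1, i_2, \ldots$. A countable directed subset $I_0 \supset \setone{i_1, i_2, \ldots}$ is built by repeatedly adding upper bounds, and the countable argument above is applied over $I_0$. This shows $B$ has full measure, and hence, after completing $\msr{P}$, that $C$ is measurable with $\msr{P}(C) = 1$. If that fails, I would fall back on projective tightness (Definition~\ref{expedition0011530}) to obtain a Radon extension and use inner regularity on compact sets $K_i \cap C_i$. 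In the applications I would simply restrict to a countable cofinal sequence of cutoffs, where the argument of the second paragraph is complete.
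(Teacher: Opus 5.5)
Your argument is correct whenever $I$ contains a countable cofinal chain, and it is the rigorous version of the paper's proof. The paper bounds $P(C^{c})\le P\bigl(\bigcup_{i\in I}\pi_i^{-1}(C_i^{c})\bigr)\le\sup_{i\in I}P_i(C_i^{c})=0$ and attributes the second inequality to monotonicity of measure. For an arbitrary family that inequality is false, and for uncountable $I$ the union need not even be measurable. Your observation is what makes it true: hypothesis (2), together with $\pi_i=\pi_{ij}\circ\pi_j$ on $X$, makes the null sets $\pi_i^{-1}(C_i^{c})$ increase along $I$. The union over $I$ therefore equals the union along a countable cofinal chain, and countable subadditivity (or continuity from below) finishes the proof. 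The paper never invokes (2), although (2) is exactly what licenses passing from $I$ to a countable chain.

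The proposal fails in the fallback for directed sets without a countable cofinal subset. The step ``$P^{*}(C)=1$, hence $C$ is measurable with $P(C)=1$ after completion'' is invalid, because completion makes $C$ measurable only if its inner measure is also $1$. Your route to $P^{*}(C)=1$ also tacitly needs every thread of $(C_i)_{i\in I_0}$ to extend to a point of $C$, and the hypotheses do not give this.

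In fact no fallback can work, because the proposition is false at that generality. Let $I$ be the finite subsets of $[0,1]$, $X_F=\mathbb{R}^F$ with $\pi_{FG}$ the restriction, $P_F$ the product of uniform distributions on $[0,1]$, and $C_F=\{x\in\mathbb{R}^F: x_t\neq t \text{ for all } t\in F\}$. All hypotheses hold, including projective tightness with $K_F=[0,1]^F$, and $P$ is the product measure on $\mathbb{R}^{[0,1]}$.

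Every product-measurable set is determined by countably many coordinates. Resetting one coordinate $x_t$ to $t$, for some $t$ outside that countable set, shows that the only measurable subset of the nonempty set $C=\{x: x_t\neq t\ \forall t\in[0,1]\}$ is $\emptyset$. Hence $C$ has inner measure $0$ and outer measure $1$, and it is not measurable even after completion. A Radon extension via tightness does not help either. For some fixed $\delta>0$, infinitely many of the coordinate images $\{x_t: x\in K\}$ of a compact $K\subset C$ avoid $(t-\delta,t+\delta)$, so every such $K$ lies in a product-measurable null set.

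You should therefore add the countable-cofinality assumption to the statement as a hypothesis. It holds in the paper's only application, Proposition~\ref{expedition0011508}, where $I=(0,\infty)$ has the cofinal chain $T=n$. There your second paragraph is a complete proof.
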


\begin{proof}
The complement of $C$ is $$\setcpl{C}
=
\set{\fml{x_i}{i \in I} \in X}
{\text{there exists $i \in I$ with $x_i \notin C_i$}}.$$
In particular, $\setcpl{C}
\subset
\bigcup_{i \in I}
\fun{\inv{\pi_i}}{\setcpl{C_i}}$ holds.
By monotonicity of measure, $$\msr{P}(\setcpl{C})
\leq
\fun{\msr{P}}
{\bigcup_{i \in I}
\fun{\inv{\pi_i}}{\setcpl{C_i}}}
\leq
\sup_{i \in I}
\fun{\msr{P}}{\fun{\inv{\pi_i}}{\setcpl{C_i}}}
=
\sup_{i \in I}
\fun{\msr{P_i}}{\setcpl{C_i}}
=
0$$ is obtained.
\end{proof}

We use the result of Kolmogorov's extension theorem \ref{expedition0011529} for projective systems of probability measures. A map \(F
\colon X
\to \fldreal\) satisfying the following condition is called a simply projective function.

\begin{itemize}

\item
  There exists \(i
  \in I\) and a function \(f_i
  \colon X_i
  \to \fldreal\) such that \(F
  = f_i \circ \pi_i\).
\end{itemize}

Depending on the properties of \(f_i\), one also has simply projective bounded functions, simply projective continuous functions, and simply projective bounded continuous functions.

This property also affects weak convergence of probability measures. By the definition of the product topology, a continuous function on the product space \(\prod_{i \in I} X_i\) depends on only finitely many \(X_i\). In particular, a continuous function \(f
\colon X
\to \fldreal\) on \(X\) can always be expressed as \(f(x)
= \fun{g}{\fml{x_i}{i \in F}}\) for some continuous function \(g\) on \(\prod_{i \in F} X_i\) for a suitable finite set \(F
\subset I\).

A projective system of probability measures \(\fml{\msr{P_i}}{i \in I}\) is said to converge weakly to the projective limit measure \(\msr{P}\) when:

\begin{itemize}

\item
  For any simply projective bounded continuous function \(F
  \colon X
  \to \fldreal\), \[\lim_{i \in I}
  \int_{X_i}
  f_i(x)
  \opdmsr{P_i(x)}
  =
  \int_X
  F(x)
  \opdmsr{P(x)}\] holds.
\end{itemize}

Since continuous functions always depend on only finitely many \(X_i\), it suffices to consider simply projective continuous functions depending on a single \(i\).

\begin{thm}\label{expedition0011535}
Suppose the projective system of probability measures further satisfies the following two conditions.
\begin{enumerate}
\item
Tightness as a projective system, i.e., the condition of Definition \ref{expedition0011530}.

\item
Preservation of path continuity, i.e., the condition of Proposition \ref{expedition0011531}.
\end{enumerate}
Then the projective limit of spaces $X
=
\projlim_{i \in I}
X_i$, there exist a family of projections $\fml{\pi_i}{i \in I}$ and a unique probability measure $\msr{P}$ on $X$ satisfying, for all $i
\in I$, $$\msr{P_i}
=
\msr{P} \circ \inv{\pi_i},
\quad
\pi_i
\colon X
\to X_i;
\quad
\fml{x_j}{j \in I}
\mapsto
x_i,$$ and the family of probability measures $\msr{P_i}$ converges weakly to $\msr{P}$ in the sense defined above.
\end{thm}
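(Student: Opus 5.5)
The proof combines the Kolmogorov extension theorem (Theorem~\ref{expedition0011529}) with a Prokhorov-type compactness argument carried out along the directed set, using projective tightness to control mass and path-continuity preservation to locate the limit. First, the consistency condition of Definition~\ref{expedition0011527} lets Theorem~\ref{expedition0011529} produce the projections $\pi_i$ and the unique measure $\msr{P}$ on $X$ with $\msr{P_i} = \msr{P} \circ \inv{\pi_i}$. Uniqueness holds because a measure on $X$ is determined by its values on the cylinder $\sigma$-algebra generated by the $\pi_i$. Since $I$ is directed and the projections are compatible, these cylinders form a $\pi$-system. So the existence and uniqueness part follows from the earlier results, and the remaining work is to justify the weak convergence and to explain the role of the two hypotheses.

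Second, I check weak convergence on simply projective bounded continuous functions $F = f_{i_0} \circ \pi_{i_0}$. For $j \geq i_0$, the function $F$ factors as $f_{i_0} \circ \pi_{i_0 j} \circ \pi_j$, and consistency gives
\[
\int_{X_j} f_{i_0} \circ \pi_{i_0 j} \opdmsr{P_j} = \int_{X_{i_0}} f_{i_0} \opdmsr{P_{i_0}} = \int_X F \opdmsr{P}.
\]
Hence the net is eventually constant and converges. By the remark preceding the theorem, a bounded continuous function on $X$ depends on only finitely many coordinates. Directedness then lets me dominate that finite index set by a single $i$, and the compatibility relations turn such a function into a simply projective one. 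This settles the convergence for every bounded continuous function.

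Third, I use the two hypotheses to make $\msr{P}$ a genuine Radon measure concentrated where it should be, which is needed if $X$ carries the relative product topology. From projective tightness, for each $\ep$ the set $K = \projlim K_i$ is compact: it is a closed subset of $\prod K_i$, and I would invoke Tychonoff together with the continuity of the $\pi_{ij}$. I then aim for $\msr{P}(K) \geq 1 - \ep$. Proposition~\ref{expedition0011531} gives $\msr{P}(C) = 1$, so the measure is supported on consistent paths in the $C_i$, and inner regularity follows.

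The main obstacle is this step. Passing from $\inf_i \msr{P_i}(K_i) > 1-\ep$ to $\msr{P}(\projlim K_i) \geq 1-\ep$ requires continuity from above along a possibly uncountable directed family. I would handle it by first treating countable cofinal sequences, where $\inv{\pi_i}(K_i)$ decreases along the sequence, and then using the compactness of the $K_i$ to reduce the general intersection to such a case, in the spirit of Prokhorov's theorem for projective limits.
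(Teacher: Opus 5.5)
Your first two steps are exactly the paper's proof. Existence and uniqueness come from Theorem~\ref{expedition0011529}, and for $F=f_{i_0}\circ\pi_{i_0}$ the net $\int_{X_j} f_{i_0}\circ\pi_{i_0 j}\,dP_j$ is constant for $j\ge i_0$ and equal to $\int_X F\,dP$. The paper stops there and never uses hypotheses (1)--(2). Since weak convergence is defined only through simply projective bounded continuous functions, this already proves the statement. Everything beyond it in your proposal is surplus, and two pieces of it are wrong and should be removed.

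First, the remark you rely on at the end of Step 2 is false. A bounded continuous function on a projective limit need not depend on finitely many coordinates: take $x\mapsto\sum_n 2^{-n}\arctan x_n$ on $\mathbb{R}^{\mathbb{N}}$, the limit of the system $\mathbb{R}^n$ with coordinate projections. So ``this settles the convergence for every bounded continuous function'' is unsupported. For such an $F$ there is not even a function on $X_j$ to integrate against $P_j$.

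Second, Step 3 does not go through as planned.
\begin{itemize}
\item ``$P(C)=1$, hence inner regularity'' is a non sequitur.
\item Reducing $\bigcap_i\pi_i^{-1}(K_i)$ to countable cofinal sequences is impossible when $I$ has no countable cofinal subset.
\item This set need not even lie in the cylinder $\sigma$-algebra. Take $I$ to be the finite subsets of an uncountable $T$, with $X_F=\mathbb{R}^F$ and $K_F=[-1,1]^F$; then $\bigcap_F\pi_F^{-1}(K_F)=[-1,1]^T$, which is not cylinder-measurable.
\end{itemize}
If you want hypothesis (1) to do real work, use it as Prokhorov's projective-limit theorem does. There the consistent compact sets enter the \emph{construction} of $P$ and produce a Radon, hence $\tau$-additive, extension of the cylinder measure. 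Since $\{\pi_i^{-1}(K_i)\}$ is a downward-directed family of closed sets, $P\bigl(\bigcap_i\pi_i^{-1}(K_i)\bigr)=\inf_i P_i(K_i)\ge 1-\varepsilon$ then follows immediately.
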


\begin{proof}
The existence of the family of projections and the unique probability measure on $X$ follows from Kolmogorov's extension theorem \ref{expedition0011529} for projective systems.
It remains to show weak convergence.
Let $F
= f_{i_0} \circ \pi_{i_0}$ be a simply projective bounded continuous function.

(Integral with respect to the projective limit measure): By Kolmogorov's extension theorem \ref{expedition0011529}, the integral of this $F$ can be written as
\begin{equation}
\begin{aligned}
&\int_X F \opdmsr{P}
=
\int_X f_{i_0} \circ \pi_{i_0} \opdmsr{\msr{P}}
=
\int_{X_{i_0}} f_{i_0} \opdmsr{\rbk{\msr{P} \circ \inv{\pi_{i_0}}}}
\\ 
&=
\int_{X_{i_0}} f_{i_0} \opdmsr{\msr{P}_{i_0}}
\end{aligned}
\end{equation}
can be written.

(Integral of pullback for $i \geq i_0$): By consistency of the projective system, for $i
\geq i_0$, $\msr{P_{i_0}}
=
\msr{P_i} \circ \inv{\pi_{i_0 i}}$ holds.
The integral can be written as $$\int_{X_i}
f_{i_0} \circ \pi_{i_0 i}
\opdmsr{\msr{P_i}}
=
\int_{X_{i_0}}
f_{i_0}
\opdmsr{\msr{P_{i_0}}}.$$
In particular, the integral of the pullback $f_{i_0} \circ \pi_{i_0 i}$ is independent of $i$.

(Definition of the $i$-th test function corresponding to $F$): Lift the simply projective bounded continuous function $F
= f_{i_0} \circ \pi_{i_0}$ to $F_i
= f_{i_0} \circ \pi_{i_0 i}
\in \fun{\conti_{\txtbdd}}{X_i}$ for $i
\geq i_0$.
Then the preceding pullback integral can be written as $$\int_{X_i} F_i \opdmsr{\msr{P_i}}
=
\int_{X_{i_0}}
f_{i_0}
\opdmsr{\msr{P_{i_0}}}.$$

(Weak convergence): Collecting the above two steps, for $i
\geq i_0$, $$\int_{X_i} F_i \opdmsr{\msr{P_i}}
=
\int_{X_{i_0}}
f_{i_0}
\opdmsr{\msr{P_{i_0}}}
=
\int_X F \opdmsr{\msr{P}}$$ holds.

The left-hand side is constant for $i
\geq i_0$, and in particular $$\projlim_{i \in I}
\int_{X_i} F_i \opdmsr{\msr{P_i}}
=
\int_X F \opdmsr{\msr{P}}$$ is obtained.
This is precisely weak convergence.
\end{proof}

Use \(\dom \mathsf{m}\) as the space of test functions, and define the index set \[\setindex{I}
=
\setpowfin{\dom \mathsf{m}}.\] For any \(I
=
\setone{f_1,\cdots,f_n}
\in \setindex{I}\), abbreviate \(\fldreal^{I}
= \fldreal^{n}\).

Define the configuration space \[\prbqspace
= \fldreal^{\dom \mathsf{m}}\] and for each \(f
\in \dom \mathsf{m}\) define the evaluation map \[\pi_f
\colon \prbqspace
\to \fldreal;
\quad
\pi_f(\omega)
=
\omega(f).\] Using this, define the sigma-algebra generated by cylinder sets: \[\mblfml{C}
=
\mblfmlgenerated{\set{\inv{\pi_f(B)}}
{f \in \dom \mathsf{m}, B \in \mblfmlborel(\fldreal)}}.\]

For any \(I
= \setone{f_1,\cdots,f_n}
\in \setindex{I}\), set \[\pi_I
\colon \prbqspace
\to \fldreal^{I};
\quad
\pi_{I}(\omega)
=
\vecbk{\omega(f_1),\cdots,\omega(f_n)}.\] For a probability measure \(\msr{P}\) on the measurable space \(\pairbk{\prbqspace,\mblfml{C}}\), define its finite-dimensional marginal for \(I\) by \[\msr{P_I}
=
P \circ \inv{\pi_I}.\]

Define \(\setindex{K}\) as the set of cutoff parameters \((\kappa,\Lambda)\). For each \((\kappa,\Lambda)
\in \setindex{K}\), consider \(\msr{P_{\txtgs,\kappa,\Lambda}}\) as a probability measure on \(\pairbk{\prbqspace,\mblfml{C}}\), and for each \(I
\in \setindex{I}\) define the finite-dimensional distribution: \[\msr{P_{\txtgs,\kappa,\Lambda,I}}
=
\msr{P_{\txtgs,\kappa,\Lambda}} \circ \inv{\pi_{I}}.\]

For any \(I
= \setone{f_1,\cdots,f_n}
\in \setindex{I}\), define the finite-dimensional characteristic function: \[\prbcharfun_{\txtgs,\kappa,\Lambda,I}(\xi)
=
\int_{\fldreal^{I}}
\napiernum^{\imunit k \vainnprod x}
\opdmsr{\msr{P_{\txtgs,\kappa,\Lambda,I}}(x)},
\quad
\xi \in \fldreal^{I}.\]

We now define a concept restricting ordinary weak convergence to \(\dom \mathsf{m}\)-cylinder sets on \(\prbqspace\).

When, for any \(I
= \setone{f_1,\cdots,f_n}
\in \setpowfin{\dom \mathsf{m}}\) and any \(F
\in \fun{\conti_{\txtbdd}}{\fldreal^{I}}\), \[\sqfun{\prbexp_{\msr{P_{\txtgs,\kappa,\Lambda}}}}
{F \circ \pi_I}
\to
\sqfun{\prbexp_{\msr{P}}}
{F \circ \pi_I}\] holds, we say the family \(\fml{\msr{P_{\txtgs,\kappa,\Lambda}}}
{\kappa,\Lambda > 0}\) of probability measures converges \(\dom \mathsf{m}\)-weakly to \(\msr{P}\).

\begin{prop}\label{expedition0011554}
Let $\msr{P}$ be a probability measure on the sample space $\prbqspace$.
Then the following statements are equivalent.
\begin{enumerate}
\item
The family $\fml{\msr{P_{\txtgs,\kappa,\Lambda}}}{\kappa,\Lambda > 0}$ of cutoff measures converges $\dom \mathsf{m}$-weakly to $\msr{P}$.

\item
For any $I
\in \setpowfin{\dom \mathsf{m}}$, ordinary weak convergence on $\fldreal^{I}$ holds.
In particular, $$\wlim_{\kappa \to 0, \Lambda \to \infty}
\msr{P_{\txtgs,\kappa,\Lambda,I}}
=
\msr{P}_I$$ holds.
\end{enumerate}
\end{prop}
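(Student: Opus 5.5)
The equivalence is essentially a matter of unwinding definitions, since $\dom \mathsf{m}$-weak convergence was defined through test functions of the form $F \circ \pi_I$ with $F \in \fun{\conti_{\txtbdd}}{\fldreal^{I}}$. The key identity is the change of variables for image measures. For any $I \in \setpowfin{\dom \mathsf{m}}$ and any bounded Borel function $F$ on $\fldreal^{I}$,
$$\sqfun{\prbexp_{\msr{P_{\txtgs,\kappa,\Lambda}}}}{F \circ \pi_I} = \int_{\fldreal^{I}} F \opdmsr{\msr{P_{\txtgs,\kappa,\Lambda,I}}},
\quad
\sqfun{\prbexp_{\msr{P}}}{F \circ \pi_I} = \int_{\fldreal^{I}} F \opdmsr{\msr{P}_I}.$$
This holds because $\msr{P_{\txtgs,\kappa,\Lambda,I}} = \msr{P_{\txtgs,\kappa,\Lambda}} \circ \inv{\pi_I}$ and $\msr{P}_I = \msr{P} \circ \inv{\pi_I}$, and because $\pi_I$ is $\mblfml{C}$-measurable, being built from the coordinate evaluations $\pi_f$ that generate $\mblfml{C}$.

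To prove (1)$\Rightarrow$(2), fix $I$ and $F \in \fun{\conti_{\txtbdd}}{\fldreal^{I}}$. Hypothesis (1) gives convergence of the left-hand expectations, and by the identity this is exactly convergence of $\int F \opdmsr{\msr{P_{\txtgs,\kappa,\Lambda,I}}}$ to $\int F \opdmsr{\msr{P}_I}$. Since $F$ ranges over all bounded continuous functions on $\fldreal^{I}$, this is the definition of ordinary weak convergence on $\fldreal^{I}$. The direction (2)$\Rightarrow$(1) is the same chain of equalities read backwards.

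One subtlety needs attention: $I$ is a finite set of functions, and the ordering of $\fldreal^{I}$ and possible linear dependence among $f_1,\dots,f_n$ should cause no trouble. The marginals are pushforwards under a fixed map, so the argument does not care whether the coordinates are independent. I would also note that the net is indexed by $(\kappa,\Lambda)$ directed by $\kappa \to 0$, $\Lambda \to \infty$, and that weak convergence of nets of measures on $\fldreal^{n}$ is defined by the same bounded continuous test functions, so no sequential arguments are needed.

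I expect no real obstacle here. The only point I would check explicitly is the measurability of $\pi_I$ with respect to $\mblfml{C}$, which is what justifies the change of variables formula for image measures.
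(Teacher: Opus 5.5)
Your proposal is correct and follows the paper's proof: in both directions the paper writes the image-measure identities $\mathbb{E}_{P_{\mathrm{gs},\kappa,\Lambda}}[F \circ \pi_I] = \int F \, dP_{\mathrm{gs},\kappa,\Lambda,I}$ and $\mathbb{E}_{P}[F \circ \pi_I] = \int F \, dP_I$ for $F$ bounded and continuous on $\mathbb{R}^I$, after which the two notions of convergence are literally the same statement. Your explicit remark that $\pi_I$ is $\mathcal{C}$-measurable is a small detail the paper leaves implicit.
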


\begin{proof}
((1)$\Rightarrow$(2)): Fix any $I
= \setone{f_1,\cdots,f_n}
\in \setpowfin{\dom \mathsf{m}}$ and any $F
\in \fun{\conti_{\txtbdd}}{\fldreal^{I}}$. Then
\begin{equation}
\begin{aligned}
\sqfun{\prbexp_{\msr{P_{\txtgs,\kappa,\Lambda,I}}}}
{F}
=
\sqfun{\prbexp_{\msr{P_{\txtgs,\kappa,\Lambda}}}}
{F \circ \pi_I},
\quad
\sqfun{\prbexp_{\msr{P_I}}}
{F}
=
\sqfun{\prbexp_{\msr{P}}}
{F \circ \pi_I}
\end{aligned}
\end{equation}
holds.
By hypothesis, the right-hand side of the first equation converges as $\kappa
\to 0$ and $\Lambda
\to \infty$.
In particular, for all $F
\in \fun{\conti_{\txtbdd}}{\fldreal^{I}}$, $$\sqfun{\prbexp_{\msr{P_{\txtgs,\kappa,\Lambda,I}}}}
{F}
\to
\sqfun{\prbexp_{\msr{P_I}}}
{F}$$ holds.
This is precisely ordinary weak convergence on $\fldreal^{I}$.

((2)$\Rightarrow$(1)): Again fix any $I
= \setone{f_1,\cdots,f_n}
\in \setpowfin{\dom \mathsf{m}}$ and $F
\in \fun{\conti_{\txtbdd}}{\fldreal^{I}}$.
The same equations as before
\begin{equation}
\begin{aligned}
\sqfun{\prbexp_{\msr{P_{\txtgs,\kappa,\Lambda,I}}}}
{F}
=
\sqfun{\prbexp_{\msr{P_{\txtgs,\kappa,\Lambda}}}}
{F \circ \pi_I},
\quad
\sqfun{\prbexp_{\msr{P_I}}}
{F}
=
\sqfun{\prbexp_{\msr{P}}}
{F \circ \pi_I}
\end{aligned}
\end{equation}
hold.
By hypothesis, weak convergence $\msr{P_{\txtgs,\kappa,\Lambda,I}}
\Rightarrow \msr{P_I}$ on $\fldreal^{I}$ holds.
In particular, $\sqfun{\prbexp_{\msr{P_{\txtgs,\kappa,\Lambda,I}}}}
{F}
\to
\sqfun{\prbexp_{\msr{P_I}}}
{F}$.
Applying this to the equations above, for any $I$ and $F$, $$\sqfun{\prbexp_{\msr{P_{\txtgs,\kappa,\Lambda}}}}
{F \circ \pi_I}
\to
\sqfun{\prbexp_{\msr{P}}}
{F \circ \pi_I}$$ holds.
This is $\dom \mathsf{m}$-convergence.
\end{proof}

We discuss \(\dom \mathsf{m}\)-weak convergence via finite-dimensional characteristic functions.

\begin{prop}
Suppose each $\msr{P_{\txtgs,\kappa,\Lambda,I}}$ is a probability measure on $\fldreal^I$ and assume the following conditions.
\begin{enumerate}
\item
For any $I
\in \setpowfin{\dom \mathsf{m}}$ and any $\xi
\in \fldreal^I$, the limit of characteristic functions $$\prbcharfun_{\txtgs,I}(\xi)
=
\lim_{\kappa \to 0, \Lambda \to \infty}
\prbcharfun_{\txtgs,\kappa,\Lambda,I}(\xi)$$ exists.

\item
Each $\prbcharfun_{\txtgs,I}$ is a characteristic function on $\fldreal^I$.

\item
For the corresponding probability measures $\msr{\mu_I}$, projective consistency $$\msr{\mu_I} \circ \invrbk{{\pi_{J}^I}}
=
\msr{\mu_J}$$ holds.
\end{enumerate}
Then there exists a probability measure $\msr{P}$ to which the family $\fml{\msr{P_{\txtgs,\kappa,\Lambda}}}{\kappa,\Lambda > 0}$ converges $\dom \mathsf{m}$-weakly.
\end{prop}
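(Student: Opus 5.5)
The plan is to build the limit measure $\msr{P}$ from the limiting characteristic functions by Kolmogorov's extension, and then read off $\dom \mathsf{m}$-weak convergence from L\'evy's continuity theorem, using the equivalence in Proposition~\ref{expedition0011554}.

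First, for each $I \in \setpowfin{\dom \mathsf{m}}$, hypothesis (2) gives a unique probability measure $\msr{\mu_I}$ on $\fldreal^{I}$ with characteristic function $\prbcharfun_{\txtgs,I}$, by Bochner's theorem or simply by the definition of ``characteristic function''. For $J \subset I$ let $\pi^I_J \colon \fldreal^I \to \fldreal^J$ be the coordinate projection. Hypothesis (3) says that $\fml{(\fldreal^I,\msr{\mu_I}),\pi^I_J}{}$ is a consistent projective system over the directed set $\pairbk{\setpowfin{\dom \mathsf{m}},\subset}$, in the sense of Definition~\ref{expedition0011527}; each $\fldreal^I$ is Polish and each $\pi^I_J$ is continuous. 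Theorem~\ref{expedition0011529} then yields a unique probability measure $\msr{P}$ on the projective limit, which I identify with $\prbqspace = \fldreal^{\dom \mathsf{m}}$ carrying the cylinder $\sigma$-algebra $\mblfml{C}$. This measure satisfies $\msr{P} \circ \inv{\pi_I} = \msr{\mu_I}$, so $\msr{P}_I = \msr{\mu_I}$ for every $I$. I will not need tightness or preservation of path continuity (Theorem~\ref{expedition0011535}) here, since only cylinder functions are tested.

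Second, fix $I$. By hypothesis (1), $\prbcharfun_{\txtgs,\kappa,\Lambda,I}(\xi) \to \prbcharfun_{\txtgs,I}(\xi)$ pointwise in $\xi \in \fldreal^I$. The limit is the characteristic function of $\msr{\mu_I}$, so in particular it is continuous at $0$. L\'evy's continuity theorem on $\fldreal^{I}$ then gives $\wlim \msr{P_{\txtgs,\kappa,\Lambda,I}} = \msr{\mu_I} = \msr{P}_I$. This is condition (2) of Proposition~\ref{expedition0011554}, and that proposition converts it into $\dom \mathsf{m}$-weak convergence of $\fml{\msr{P_{\txtgs,\kappa,\Lambda}}}{\kappa,\Lambda>0}$ to $\msr{P}$.

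The main obstacle is a technical point in the continuity theorem. The cutoff family is indexed by the net $(\kappa,\Lambda)$ with $\kappa \to 0$ and $\Lambda \to \infty$, not by a sequence, and L\'evy's theorem is usually stated for sequences. I would handle this by the standard subsequence argument, in two steps.

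\begin{enumerate}
\item Tightness of the marginals comes from the usual estimate
\[
\msr{P_{\txtgs,\kappa,\Lambda,I}}\bigl(\abs{x_j} > 2/\delta\bigr) \leq \frac{1}{\delta}\int_{-\delta}^{\delta} \bigl(1-\opreal \prbcharfun_{\txtgs,\kappa,\Lambda,I}(t e_j)\bigr)\,dt .
\]
Bounded convergence makes the right-hand side converge to the corresponding quantity for the limit $\prbcharfun_{\txtgs,I}$, which is small for small $\delta$. So the estimate is uniform only for parameters eventually along the net, and that suffices.
\item Any cluster point of the marginals then has characteristic function $\prbcharfun_{\txtgs,I}$, so it equals $\msr{\mu_I}$. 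Hence the whole net converges.
\end{enumerate}

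Alternatively, the argument can be reduced to sequences $\kappa_n \to 0$, $\Lambda_n \to \infty$, since the limit does not depend on the chosen sequence. Finally, I would note that in the Gaussian case at hand hypothesis (1) is explicit. The characteristic functions are $\fnexp{-\imunit \opreal \mathsf{m}_{\kappa,\Lambda}(\cdot) - \frac14\opform{q}_{\txtnonzero,\infty}(\cdot)}$ evaluated at $\sum \xi_j f_j$, so all three hypotheses can be checked directly. This check is a remark and not part of the proof.
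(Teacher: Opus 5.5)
Your proposal is correct and follows the paper's route. The paper's proof only invokes Kolmogorov's extension theorem (Theorem~\ref{expedition0011529}) to build $P$ from the consistent family $\mu_I$. Your L\'evy-continuity step, combined with Proposition~\ref{expedition0011554}, supplies the convergence half that this one-line proof leaves implicit, in the same way the paper argues in the proof of Theorem~\ref{expedition0011558}.

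One caution: bounded convergence, and indeed L\'evy's theorem itself, can fail for arbitrary nets. So the dominated-convergence step in your tightness estimate should go through your reduction to sequences $\kappa_n \to 0$, $\Lambda_n \to \infty$. That reduction is legitimate because the cutoff directed set has the countable cofinal increasing chain $(1/n, n)$.
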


\begin{proof}
This is a direct consequence of Kolmogorov's extension theorem \ref{expedition0011529} for projective systems of probability measures.
\end{proof}

So far we have constructed the argument on \(\dom \mathsf{m}\). We now need to take the appropriate closure.

For a symmetric positive semi-definite form \(C
\colon \dom \mathsf{m} \times \dom \mathsf{m}
\to \fldreal\) on the family of test functions \(\dom \mathsf{m}\), define the positive semi-definite inner product \[\bkt{f}{g}_{C}
=
C(f,g),
\quad
\norm{f}_C
=
\sqrt{C(f,f)}.\] The completion of the quotient space \(\setquot{\dom \mathsf{m}}
{\sphilb{N}_C}\) by \[\sphilb{N}_{C}
=
\set{f \in \dom \mathsf{m}}
{\norm{f}_C = 0}\] is denoted \(\sphilb{H}_{\txtirsingular}\) and called the \(C\)-closure of \(\dom \mathsf{m}\).

For each \(\pairbk{\kappa,\Lambda}\), define the linear operator on \(\pairbk{\prbqspace,\mblfml{C},\msr{P_{\txtgs,\kappa,\Lambda}}}\) by \[X_{\kappa,\Lambda}
\colon \dom \mathsf{m}
\to \fun{\lp^{2}}{\prbqspace,\msr{P_{\txtgs,\kappa,\Lambda}}};
\quad
f \mapsto \pi_{\kappa,\Lambda,f};
\quad
\pi_{\kappa,\Lambda,f}(\omega)
=
\omega(f).\] Assume in particular that the covariance \[C(f,g)
=
\sqfun{\prbcov_{\msr{P_{\txtgs,\kappa,\Lambda}}}}
{\pi_{\kappa,\Lambda,f},
\pi_{\kappa,\Lambda,g}}\] is constant, independent of \(\pairbk{\kappa,\Lambda}\).

\begin{prop}\label{expedition0011559}
When no confusion arises, denote elements of the quotient space $\setquot{\dom \mathsf{m}}{N_C}$ by the same symbol $f$ for $f
\in
\dom \mathsf{m}$.
For each $\pairbk{\kappa,\Lambda}$, define the centering map
\begin{equation}
\begin{aligned}
\iota_0
\colon \setquot{\dom \mathsf{m}}{N_C}
\to
\fun{\lp^{2}}{\prbqspace,\msr{P_{\txtgs,\kappa,\Lambda}}};
\quad
\iota_0(f)
=
\pi_{\kappa,\Lambda,f}
-\sqfun{\prbexp_{\msr{P_{\txtgs,\kappa,\Lambda}}}}
{\pi_{\kappa,\Lambda,f}}
\end{aligned}
\end{equation}
This is an isometry with a unique extension $$\iota_{\kappa,\Lambda}
\colon \sphilb{H}_{\txtirsingular}
\to \fun{\lp^{2}}{\prbqspace,\msr{P_{\txtgs,\kappa,\Lambda}}}.$$

In particular, if convergence of finite-dimensional distributions or characteristic functions holds on $\dom \mathsf{m}$, it extends naturally to $\sphilb{H}_{\txtirsingular}$.
\end{prop}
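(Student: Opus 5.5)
The proof has three steps: check that \(\iota_0\) is well defined on the quotient, check that it is isometric, and then extend by density. For well-definedness, note that \(\iota_0\) is real linear on \(\dom \mathsf{m}\). Linearity of \(f \mapsto \pi_{\kappa,\Lambda,f}\) holds \(\msr{P_{\txtgs,\kappa,\Lambda}}\)-almost surely, since the measures are Gaussian with linear mean \(-\bkt{f}{\mathsf{m}_{\kappa,\Lambda}}_{\sphilb{H}}\) (Proposition~\ref{expedition0011492}(2)). Linearity of expectation then makes the centered variable linear as well.

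The key identity is
\[
\norm{\iota_0(f)}_{\fun{\lp^{2}}{\prbqspace,\msr{P_{\txtgs,\kappa,\Lambda}}}}^2
=
\sqfun{\prbcov_{\msr{P_{\txtgs,\kappa,\Lambda}}}}{\pi_{\kappa,\Lambda,f},\pi_{\kappa,\Lambda,f}}
=
C(f,f)
=
\norm{f}_C^2 ,
\]
which uses the standing assumption that the covariance does not depend on \((\kappa,\Lambda)\). Polarization gives \(\bkt{\iota_0(f)}{\iota_0(g)} = C(f,g)\). In particular, if \(f - g \in \sphilb{N}_C\), then \(\norm{\iota_0(f) - \iota_0(g)} = \norm{f-g}_C = 0\), so \(\iota_0(f) = \iota_0(g)\) in \(\lp^{2}\). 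This shows \(\iota_0\) descends to a well-defined isometry on \(\setquot{\dom \mathsf{m}}{\sphilb{N}_C}\).

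For the extension, I would take \(F \in \sphilb{H}_{\txtirsingular}\) and choose a \(C\)-Cauchy sequence \(f_n\) in the quotient converging to \(F\). The images \(\iota_0(f_n)\) are Cauchy in the complete space \(\lp^{2}\), so they have a limit, which I define to be \(\iota_{\kappa,\Lambda}(F)\). The limit does not depend on the chosen sequence, because interlacing two such sequences gives another Cauchy sequence. Uniqueness among continuous extensions holds because the quotient is dense in its completion. Isometry of \(\iota_{\kappa,\Lambda}\) follows from continuity of the norm. This is the same completion argument the paper already uses for \(\tilde{\opfocksegal}_{\mathrm{F}}\) in Section~\ref{expedition0012091}.

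The last claim, transferring convergence of finite-dimensional distributions and characteristic functions, is the step that needs the most care. Since the \(\msr{P_{\txtgs,\kappa,\Lambda}}\) are Gaussian with the fixed covariance \(C\), every finite-dimensional characteristic function has the form \(\exp(\imunit \cdot \text{mean} - \frac{1}{2} C)\). So convergence on \(\dom \mathsf{m}\) amounts to convergence of the linear mean functional \(f \mapsto \sqfun{\prbexp}{\pi_{\kappa,\Lambda,f}}\). For \(F \in \sphilb{H}_{\txtirsingular}\) I would approximate \(F\) by some \(f \in \dom \mathsf{m}\). The centered parts are then uniformly close in \(\lp^{2}\), since \(\norm{\iota(F) - \iota(f)} = \norm{F - f}_C\) for every \((\kappa,\Lambda)\). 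Together with the elementary bound \(\abs{\napiernum^{\imunit a} - \napiernum^{\imunit b}} \leq \abs{a-b}\), this passes the limit through by an \(\ep/3\) argument. One caveat is that the mean part on the completion is only determined once the limiting mean functional is \(C\)-continuous. I would handle this by defining the extension through centered variables plus the limiting mean, and stating the transfer as holding under that continuity.
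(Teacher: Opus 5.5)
Your proposal is correct and follows the paper's proof. Linearity plus $C(f-g,f-g)=0$ gives well-definedness on $\setquot{\dom \mathsf{m}}{\sphilb{N}_C}$, the covariance identity $\norm{\iota_0(f)}^2 = C(f,f)$ gives the isometry, and density of the quotient in $\sphilb{H}_{\txtirsingular}$ gives the unique continuous extension. The paper offers no argument for the final transfer clause, so your $\ep/3$ argument for the centered variables is a correct refinement rather than a departure. So is your caveat that the mean part extends only if the limiting mean functional is $C$-continuous, which fails whenever $\mathsf{m} \notin \sphilb{H}$, i.e.\ under the infrared singularity condition.
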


\begin{proof}
(Well-definedness): By definition, $f
\eqsimelem g$ implies $C(f-g,f-g)
= 0$.
For the centered evaluation map,
\begin{equation}
\begin{aligned}
&\norm{\iota_0(f)
-\iota_0(g)}
_{\fun{\lp^{2}}{\prbqspace},\msr{P_{\txtgs,\kappa,\Lambda}}}^2
\\ 
&=
\sqfun{\prbexp_{\msr{P_{\txtgs,\kappa,\Lambda}}}}
{\rbk{\iota_0(f)
-\iota_0(g)}^2}
\\ 
&=
\sqfun{\prbexp_{\msr{P_{\txtgs,\kappa,\Lambda}}}}
{\iota_0(f-g)^2}
=
C(f-g,f-g)
=
0
\end{aligned}
\end{equation}
is obtained.
Therefore $\iota_0$ is well-defined, independent of the choice of representative.

(Linearity): By the linearity of the map $\dom \mathsf{m}
\ni f
\mapsto \pi_{\kappa,\Lambda,f}$ and the linearity of expectation.

(Isometry): Clear by direct computation.

(Extension to closure): By definition, $\dom \mathsf{m}$ is dense in $\sphilb{H}_{\txtirsingular}$.
Since $\iota_0$ is continuous and isometric with respect to $\norm{\cdot}_C$, the continuous extension follows by general theory.
\end{proof}

\begin{thm}\label{expedition0011558}
Suppose the family $\fml{\msr{P_{\txtgs,\kappa,\Lambda}}}{\kappa,\Lambda>0}$ of probability measures satisfies the following conditions.
\begin{enumerate}
\item[(a)]
Gaussian structure and fixed covariance: For each $\pairbk{\kappa,\Lambda}$, $\fml{\pi_{\kappa,\Lambda,f}}{f \in \dom \mathsf{m}}$ is a Gaussian linear functional, and the covariance form $C(f,g)
= \sqfun{\prbcov_{\msr{P_{\txtgs,\kappa,\Lambda}}}}
{\pi_{\kappa,\Lambda,f},\pi_{\kappa,\Lambda,g}}$ is common to all $\pairbk{\kappa,\Lambda}$.

\item[(b)]
Convergence of the mean functional: For each $\pairbk{\kappa,\Lambda}$, the mean functional $$\mathsf{m}_{\kappa,\Lambda}(f)
=
\sqfun{\prbexp_{\msr{P_{\txtgs,\kappa,\Lambda}}}}
{\pi_{\kappa,\Lambda,f}}$$ is given, and there exists a linear functional $\mathsf{m}
\colon \dom \mathsf{m}
\to \fldreal$ satisfying $\mathsf{m}_{\kappa,\Lambda}(f)
\to \mathsf{m}(f)$ for any $f
\in \dom \mathsf{m}$.

\item[(c)]
Convergence of finite-dimensional characteristic functions and preservation of Gaussianity: For any $I
= \setone{f_1,\cdots,f_n}
\in \setpowfin{\dom \mathsf{m}}$ and $\xi
\in \fldreal^{n}$, $$\prbcharfun_{\txtgs,\kappa,\Lambda,I}(\xi)
=
\int_{\prbqspace}
\fnexp{\imunit
\sum_{j=1}^n
\xi_j \pi_{f_j}(\omega)}
\opdmsr{\msr{P_{\kappa,\Lambda(\omega)}}}$$ exists, and the limit $\prbcharfun_{\txtgs,I}
=
\lim_{\kappa \to 0, \Lambda \to \infty}
\prbcharfun_{\txtgs,\kappa,\Lambda,I}$ is the Gaussian characteristic function $$\prbcharfun_{\txtgs,I}(\xi)
=
\fnexp{\imunit
\sum_{j=1}^n \xi_j \mathsf{m}(f_j)
-\onehalf
\sum_{j,k=1}^n
\xi_j \xi_k
C(f_j,f_k)}.$$
\end{enumerate}
Then the following statements hold.
\begin{enumerate}
\item
For each $I
\subset \setpowfin{\dom \mathsf{m}}$, there exists a Gaussian probability measure $\msr{\mu_I}$ corresponding to the characteristic function $\prbcharfun_{\txtgs,I}$.
Moreover, the family $\fml{\msr{\mu_I}}{I \in \setpowfin{\dom \mathsf{m}}}$ is a projective system of probability measures in the sense of Definition \ref{expedition0011527}.

\item
On the measurable space $\pairbk{\prbqspace,\mblfml{C}}$, there exists a unique probability measure satisfying $$P \circ \inv{\pi_I}
=
\msr{\mu_I},
\quad I
\in \setpowfin{\dom \mathsf{m}}.$$

\item
The family $\fml{\msr{P_{\txtgs,\kappa,\Lambda}}}{\kappa,\Lambda > 0}$ of Gaussian probability measures converges $\dom \mathsf{m}$-weakly to the probability measure $\msr{P}$ of statement (2).
\end{enumerate}
\end{thm}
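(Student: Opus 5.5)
We prove the three statements in order. Statement (1) is finite-dimensional Gaussian analysis, statement (2) is an application of Kolmogorov's extension theorem, and statement (3) follows from convergence of characteristic functions.

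For (1), fix $I = \setone{f_1,\dots,f_n} \in \setpowfin{\dom \mathsf{m}}$. The matrix $\rbk{C(f_j,f_k)}_{j,k}$ is symmetric and positive semi-definite, because $C$ is the covariance of the real random variables $\pi_{\kappa,\Lambda,f_j}$ under any single cutoff measure, and hypothesis (a) says this matrix does not depend on $(\kappa,\Lambda)$. Hence $\prbcharfun_{\txtgs,I}$ in (c) is the characteristic function of a (possibly degenerate) Gaussian measure $\msr{\mu_I}$ on $\fldreal^{I}$, with mean vector $(\mathsf{m}(f_j))_j$ and covariance $C$.

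Projective consistency for $J \subset I$ is checked on characteristic functions. Pushing $\msr{\mu_I}$ forward by the coordinate projection $\pi^I_J$ amounts to evaluating $\prbcharfun_{\txtgs,I}$ at a vector $\xi$ whose entries vanish outside $J$. This gives exactly the Gaussian characteristic function built from $\mathsf{m}|_J$ and $C|_{J\times J}$, namely $\prbcharfun_{\txtgs,J}$. Uniqueness of measures with a given characteristic function then yields $\msr{\mu_I}\circ\invrbk{\pi^I_J} = \msr{\mu_J}$. The index set $\setpowfin{\dom \mathsf{m}}$ is directed by inclusion and each $\fldreal^I$ is Polish, so we have a projective system in the sense of Definition~\ref{expedition0011527}.

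One subtlety is that $I$ is a set of test functions, and these may be linearly dependent, for example $f_3 = f_1 + f_2$. This causes no difficulty: the Gaussian $\msr{\mu_I}$ is then simply concentrated on the corresponding linear subspace, which is automatic because $\mathsf{m}$ is linear and $C$ is bilinear.

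For (2), apply Theorem~\ref{expedition0011529}, which is Kolmogorov's extension theorem. The resulting measure lives on the projective limit of the $\fldreal^I$, which is canonically identified with $\prbqspace = \fldreal^{\dom\mathsf{m}}$ equipped with the cylinder $\sigma$-algebra $\mblfml{C}$. Under this identification the projections become the evaluation maps $\pi_I$, so we obtain a probability measure $\msr{P}$ on $\pairbk{\prbqspace,\mblfml{C}}$ with $\msr{P}\circ\inv{\pi_I} = \msr{\mu_I}$ for every $I$.

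Uniqueness follows because the cylinder sets form a $\pi$-system that generates $\mblfml{C}$, and any two measures with the same finite-dimensional marginals agree on this $\pi$-system. No tightness argument is needed, since we work on the full product space rather than on a space of continuous paths.

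For (3), by Proposition~\ref{expedition0011554} it suffices to prove $\msr{P_{\txtgs,\kappa,\Lambda,I}} \Rightarrow \msr{P}_I = \msr{\mu_I}$ on each $\fldreal^I$. Hypothesis (c) gives pointwise convergence of the characteristic functions $\prbcharfun_{\txtgs,\kappa,\Lambda,I}\to\prbcharfun_{\txtgs,I}$. The limit is a Gaussian characteristic function and is therefore continuous at $0$. Lévy's continuity theorem then gives weak convergence on $\fldreal^I$.

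The main obstacle is this last step. Lévy's theorem is usually stated for sequences, while here $(\kappa,\Lambda)$ is a net with $\kappa\to0$ and $\Lambda\to\infty$. We handle this in one of two ways. The first option is to note that each $\msr{P_{\txtgs,\kappa,\Lambda,I}}$ is Gaussian with fixed covariance $C$ and with mean converging by (b). Such a family is tight on $\fldreal^I$: Chebyshev bounds with uniformly bounded means and variances suffice. Tightness plus convergence of characteristic functions gives weak convergence for nets as well. The second option is to argue along arbitrary sequences $(\kappa_n,\Lambda_n)$ and use the fact that net convergence is equivalent to convergence along every cofinal sequence, which is valid here because the parameter set is first countable.
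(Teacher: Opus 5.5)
Your proposal is correct and follows the paper's proof essentially step by step. The steps match: existence of $\mu_I$ from the Gaussian characteristic function, projective consistency by evaluating $\chi_{\mathrm{gs},I}$ at a vector vanishing off $J$, Kolmogorov's extension theorem for (2), and Lévy's continuity theorem combined with Proposition~\ref{expedition0011554} for (3). Your extra care about applying Lévy's theorem to the net $(\kappa,\Lambda)$ covers a point the paper passes over, but in your second option use sequences with $\kappa_n\to 0$ and $\Lambda_n\to\infty$ (justified because the tail filter has a countable base), not arbitrary cofinal sequences, since a convergent net need not converge along an oscillating cofinal sequence.
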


\begin{proof}
((1), Existence of Gaussian measure $\msr{\mu_I}$): Fix a finite set $I
= \setone{f_1,\cdots,f_n}
\in \setpowfin{\dom \mathsf{m}}$.
By hypothesis (c), for $\xi
\in \fldreal^{n}$, we have $$\prbcharfun_{\txtgs,I}(\xi)
=
\fnexp{\imunit
\sum_{j=1}^n \xi_j \mathsf{m}(f_j)
-\onehalf
\sum_{j,k=1}^n
\xi_j \xi_k
C(f_j,f_k)}.$$
Setting the mean vector and covariance matrix $$\mathsf{m}_I
=
\vecbk{\mathsf{m}(f_1),\cdots,\mathsf{m}(f_n)}
\in \fldreal^{n},
\quad
\Sigma_I
=
\lamat{C(f_j,f_k)}{1 \leq j,k \leq n},$$ the characteristic function can be rewritten as $$\prbcharfun_{\txtgs,I}(\xi)
=
\fnexp{\imunit \xi \vainnprod \mathsf{m}_I
-\onehalf
\bkt{\xi}{\Sigma_I \xi}_{\fldreal^{n}}}.$$
By hypothesis, $C$ is a symmetric positive semidefinite form, so $\Sigma_I$ is a symmetric positive semidefinite matrix.
In particular, $\prbcharfun_{\txtgs,I}$ is a continuous positive definite function on $\fldreal^{n}$ satisfying $\prbcharfun_{\txtgs,I}(0)
= 1$.
By Bochner's theorem, there exists a unique probability measure $\msr{\mu_I}$ whose characteristic function coincides with $\prbcharfun_{\txtgs,I}$.
This $\msr{\mu_I}$ is the Gaussian measure with mean $\mathsf{m}_I$
and covariance $\Sigma_I$.

((1), Projective consistency): For any pair of finite subsets $J
=
\setone{f_1,\cdots,f_n}
\subset I
\in \setpowfin{\dom \mathsf{m}}$, let $\pi_J^I
\colon \fldreal^{I}
\to \fldreal^{J}$ be the coordinate projection.
Denote the characteristic function of measure $\msr{\mu}$ by $\sqfun{\fafouriertr}{\msr{\mu}}$.
The characteristic function of the projected measure $\msr{\mu_I} \circ \invrbk{\pi_{J}^{I}}$ satisfies, for any $\xi
\in \fldreal^{J}$, $$\sqfun{\fafouriertr}{\msr{\mu_I} \circ \invrbk{\pi_J^{I}}}(\xi)
=
\sqfun{\prbexp_{\msr{\mu_I} \circ \invrbk{\pi_J^{I}}}}
{\napiernum^{i \xi \vainnprod y}}
=
\sqfun{\prbexp_{\msr{\mu_I}}}
{\napiernum^{i \xi \vainnprod \pi_J^I(x)}}.$$
Let $\tilde{\xi}
\in \fldreal^{I}$ be $\xi$ naturally embedded into $\fldreal^{I}$.
Using $\xi \vainnprod \pi_J^I(x)
= \tilde{\xi} \vainnprod x$, $$\sqfun{\fafouriertr}{\msr{\mu_I}
\circ
\invrbk{\pi_J^{I}}}(\xi)
=
\sqfun{\prbexp_{\msr{\mu_I}}}
{\napiernum^{i \xi \vainnprod \pi_J^I(x)}}
=
\fun{\prbcharfun_{\txtgs,I}}
{\tilde{\xi}}$$ holds.
On the other hand, the characteristic function for $J$ is $$\prbcharfun_J(\xi)
=
\fnexp{\imunit
\sum_{l=1}^k
\xi_l \mathsf{m}(f_{j_l})
-\onehalf
\sum_{l,l'=1}^k
\xi_l \xi_{l'}
C(f_{j_l}, f_{j_{l'}})},$$ which equals $\prbcharfun_{\txtgs,I}(\tilde{\xi})$.
Therefore $\sqfun{\fafouriertr}{\msr{\mu_I} \circ \invrbk{\pi_J^{I}}}
= \prbcharfun_J$ holds.
Since characteristic functions uniquely determine probability measures, $\msr{\mu_I}
\circ
\invrbk{\pi_J^{I}}
=
\msr{\mu_J}$, which is a projective system of probability measures in the sense of Definition \ref{expedition0011527}.

((2), Existence of limit measure $\msr{P}$): The family $\fml{\msr{\mu_I}}{I \in \setpowfin{\dom \mathsf{m}}}$ from (1) is a projective system of probability measures.
The conclusion follows from Kolmogorov's extension theorem \ref{expedition0011529} for projective systems.

((3), Weak convergence of finite-dimensional distributions):
For each $\kappa,\Lambda,I$, let $\prbcharfun_{\txtgs,\kappa,\Lambda,I}$ be the characteristic function of $\msr{P_{\txtgs,\kappa,\Lambda,I}}$ and $\prbcharfun_{\txtgs,I}$ that of $\msr{\mu_I}$.
By hypothesis (c), for any $I
\in \setpowfin{\dom \mathsf{m}}$ and $\xi
\in \fldreal^{I}$, $$\prbcharfun_{\txtgs,\kappa,\Lambda,I}(\xi)
\to \prbcharfun_{\txtgs,I}(\xi)$$ holds.
By Lévy's continuity theorem in finite dimensions, $$\msr{P_{\txtgs,\kappa,\Lambda,I}}
\xRightarrow[\kappa \to 0, \Lambda \to \infty]{\txtweak}
\msr{\mu_I}$$ holds.
On the other hand, by (2), $\msr{P_I}
=
\msr{P} \circ \inv{\pi_I}
=
\msr{\mu_I}$.
In particular, for any $I
\in \setpowfin{\dom \mathsf{m}}$, $\msr{P_{\txtgs,\kappa,\Lambda,I}}
\xRightarrow{\txtweak}
\msr{P_I}$ is obtained.

((3), $\dom \mathsf{m}$-weak convergence): By Proposition \ref{expedition0011554} and the above argument, the desired result follows.
\end{proof}

\begin{cor}\label{expedition0011563}
\begin{enumerate}
\item
Existence and uniqueness of the limit Gaussian measure: On the measurable space $\pairbk{\prbqspace,C}$, the measure $\msr{P_{\txtirsingular}}$ obtained as the weak limit of the family $\fml{\msr{P_{\txtgs,\kappa,\Lambda}}}{\kappa,\Lambda > 0}$ with cutoffs removed exists uniquely.

\item
Preservation of Gaussian structure of the limit measure $\msr{P_{\txtirsingular}}$: For the probability measure $\msr{P_{\txtirsingular}}$ from (1), $$\pi_{\txtirsingular,f}(\omega)
=
\omega(f)$$ for each $f
\in \dom \mathsf{m}$ is a real Gaussian random variable with mean $\mathsf{m}(f)$ and covariance $C(f,g)$, and the finite-dimensional distribution for any $I
\in \setpowfin{\dom \mathsf{m}}$ is the Gaussian measure characterized by the characteristic function $$\prbcharfun_{\txtgs,I}(\xi)
=
\fnexp{\imunit
\sum_{j=1}^{n}
\xi_{j}
\mathsf{m}(f_j)
-\onehalf
\sum_{j,k=1}^{n}
\xi_{j} \xi_{k}}
C(f_j,f_k).$$

\item
Limit of the isometric extension to the closure $\sphilb{H}_{\txtirsingular}$: Consider the closure $\sphilb{H}_{\txtirsingular}$ for the symmetric positive semi-definite form $C
\colon \dom \mathsf{m} \times \dom \mathsf{m}
\to \fldreal$.
As the limit of the isometry $\iota_{\kappa,\Lambda}
\colon \sphilb{H}_{\txtirsingular}
\to \fun{\lp^{2}}{\prbqspace,\msr{P_{\txtgs,\kappa,\Lambda}}}$ from Proposition \ref{expedition0011559}, there exists a unique isometry $$\iota_{\txtirsingular}
\colon \sphilb{H}_{\txtirsingular}
\to \fun{\lp^{2}}{\prbqspace,\msr{P_{\txtirsingular}}}.$$
This satisfies the following properties.
\begin{enumerate}
\item[(1)]
For any $f
\in \dom \mathsf{m}$, the finite-dimensional distributions of $\iota_{\kappa,\Lambda}(f)$ converge to those of $\iota_{\txtirsingular}(f)$.

\item[(2)]
The inner product satisfies $$\bkt{\iota_{\txtirsingular}(f)}
{\iota_{\txtirsingular}(f)}
_{\fun{\lp^{2}}{\prbqspace,\msr{P_{\txtirsingular}}}}
=
C(f,g).$$
\end{enumerate}
In particular, the $\fun{\lp^{2}}{\prbqspace,\msr{P_{\txtirsingular}}}$-closure of $\iota_{\txtirsingular}(\sphilb{H}_{\txtirsingular})$ can be regarded as the $Q$-representation after removal of infrared and ultraviolet cutoffs, naturally obtained as the limit of the cutoff $Q$-representation $\fun{\lp^{2}}{\prbqspace_{\sphilb{H}},\msr{P_{\txtgs,\kappa,\Lambda}}}$.
\end{enumerate}
The limit measure $\msr{P_{\txtirsingular}}$ constructed in this corollary is called the ground state measure on the $Q$-representation after removal of infrared and ultraviolet cutoffs,
and its $\lp^{2}$-space $\fun{\lp^{2}}{\prbqspace_{\txtirsingular},\msr{P_{\txtirsingular}}}$ is called the renormalized physical Hilbert space.
\end{cor}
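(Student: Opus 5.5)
The plan is to derive the corollary from Theorem~\ref{expedition0011558}, after checking its hypotheses (a)--(c) for the concrete family $\fml{\msr{P_{\txtgs,\kappa,\Lambda}}}{\kappa,\Lambda>0}$ restricted to the cylinder space $\pairbk{\prbqspace,\mblfml{C}}$ over $\dom \mathsf{m}$.

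\textbf{Hypothesis (a).} By Proposition~\ref{expedition0011492}(2), each $\msr{P_{\txtgs,\kappa,\Lambda}}$ is Gaussian. Its covariance is $\onehalf \bkt{f}{g}_{\sphilb{H}}$, which does not depend on $(\kappa,\Lambda)$, so I set $C(f,g) = \onehalf \opreal \bkt{f}{g}$ on $\dom \mathsf{m}$. The evaluation maps $\pi_{\kappa,\Lambda,f}$ are real linear almost surely, because $\opfocksegal$ is real linear.

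\textbf{Hypothesis (b).} The mean is $-\bkt{f}{\mathsf{m}_{\kappa,\Lambda}}$, real part understood. The argument of Proposition~\ref{expedition0011261} applies verbatim here: for $f \in \dom \mathsf{m}$ the difference $\abs{\mathsf{m}_{\kappa,\Lambda}(f)-\mathsf{m}(f)}$ is bounded by the integrals of $\abs{\faftr{\varrho}}\abs{\faftr{f}}\omega^{-3/2}$ over $\abs{k}\le\kappa$ and $\abs{k}\ge\Lambda$. These tend to $0$ by dominated convergence, since $\omega^{-3/2}\faftr{f}$ is integrable by the very definition of $\dom \mathsf{m}$. The limiting mean functional is therefore $-\opreal\mathsf{m}(f)$, and it is linear on $\dom \mathsf{m}$.

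\textbf{Hypothesis (c).} For finite $I=\setone{f_1,\dots,f_n}$, the finite-dimensional characteristic function is the Gaussian expression with mean vector $\vecbk{-\opreal\mathsf{m}_{\kappa,\Lambda}(f_j)}_j$ and the fixed covariance matrix $\Sigma_I$. Its pointwise limit is the Gaussian expression with $\mathsf{m}$ in place of $\mathsf{m}_{\kappa,\Lambda}$, because only the linear phase $\sum_j \xi_j \opreal\mathsf{m}_{\kappa,\Lambda}(f_j)$ depends on the cutoffs, and it converges by (b).

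\textbf{Conclusions (1) and (2).} Theorem~\ref{expedition0011558} now gives the projective system $\msr{\mu_I}$, the unique measure $\msr{P_{\txtirsingular}}$ on $\pairbk{\prbqspace,\mblfml{C}}$, and $\dom \mathsf{m}$-weak convergence. This yields (1). Uniqueness holds because cylinder sets form a $\pi$-system generating $\mblfml{C}$, so the marginals determine the measure. Statement (2) is read off from $\msr{P_{\txtirsingular}}\circ\inv{\pi_I} = \msr{\mu_I}$.

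\textbf{Conclusion (3).} For each $(\kappa,\Lambda)$, Proposition~\ref{expedition0011559} gives $\iota_{\kappa,\Lambda}$. I would define $\iota_{\txtirsingular,0}(f) = \pi_{\txtirsingular,f} - \mathsf{m}(f)$ on $\setquot{\dom \mathsf{m}}{\sphilb{N}_C}$. The covariance of $\msr{P_{\txtirsingular}}$ is $C$ by (2), so this map is isometric, and it extends uniquely to $\sphilb{H}_{\txtirsingular}$ by density. The extension is unique because any isometry agreeing on a dense set agrees everywhere. Property (3)(1) follows because the joint law of $\vecbk{\iota_{\kappa,\Lambda}(f_j)}_j$ is the centered Gaussian with covariance $\Sigma_I$ for every $(\kappa,\Lambda)$, and this is also the law under the limit. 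Property (3)(2) is the isometry identity; read with $\iota_{\txtirsingular}(g)$ in the second slot, it is the polarization of $C$.

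\textbf{Main obstacle.} The step I expect to need the most care is the meaning of ``limit of the isometries'' in (3). The maps $\iota_{\kappa,\Lambda}$ land in different $\lp^{2}$-spaces, so the limit cannot be taken in norm. I would make it precise only as convergence in law of finite families, which is what (3)(1) asserts. I would also note that on $\dom \mathsf{m}$ all the measures are mutually equivalent shifted Gaussians, while the limit may be singular to $\msr{P}_{\sphilb{H}}$. This is consistent with Theorem~\ref{expedition0011395} and is the reason the limit space is built from cylinder data rather than from $\prbqspace_{\sphilb{H}}$.
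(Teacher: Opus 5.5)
Your proposal follows the paper's route: (1)--(2) are read off from Theorem~\ref{expedition0011558}, and (3) rests on the fact that the centered finite-dimensional characteristic functions $\exp\bigl(-\frac12\sum_{j,k}\xi_j\xi_k C(f_j,f_k)\bigr)$ do not depend on $(\kappa,\Lambda)$ and coincide with those under $\msr{P_{\txtirsingular}}$. The paper takes hypotheses (a)--(c) and the construction of $\iota_{\txtirsingular}$ for granted, whereas you verify the hypotheses and build $\iota_{\txtirsingular}$ explicitly by centering and density; your limiting mean $-\opreal\mathsf{m}(f)$ is the concrete form (via Proposition~\ref{expedition0011492}) of what the statement writes as $\mathsf{m}(f)$ in the abstract notation of Theorem~\ref{expedition0011558}(b).
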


\begin{proof}

(1)-(2): Direct corollaries of Theorem \ref{expedition0011558}.

(3-1): Direct corollary of Theorem \ref{expedition0011558}.

(3-2): For any $I
= \setone{f_1,\cdots,f_n}
\in \setpowfin{\dom \mathsf{m}}$, compute the characteristic function of $$X_{\kappa,\Lambda}
=
\vecbk{\iota_{\kappa,\Lambda}(f_1),
\cdots,
\iota_{\kappa,\Lambda}(f_n)}.$$

For the finite-dimensional characteristic function $\prbcharfun_{\txtgs,\kappa,\Lambda,I}$ of Theorem \ref{expedition0011558}(c), by definition,
\begin{equation}
\begin{aligned}
&\sqfun{\prbexp_{\msr{P_{\txtgs,\kappa,\Lambda}}}}
{\fnexp{\imunit
\sum_{j=1}^n
\xi_{j}
\iota_{\kappa,\Lambda}(f_j)}}
\\ 
&=
\sqfun{\prbexp_{\msr{P_{\txtgs,\kappa,\Lambda}}}}
{\fnexp{\imunit
\sum_{j=1}^n
\xi_{j}
\rbk{\pi_{\kappa,\Lambda,f_j} - \mathsf{m}_{\kappa,\Lambda}(f_j)}}}
\\ 
&=
\fnexp{-\imunit
\sum_{j=1}^n
\xi_j
\mathsf{m}_{\kappa,\Lambda}(f_j)}
\prbcharfun_{\txtgs,\kappa,\Lambda,I}(\xi)
\end{aligned}
\end{equation}
is obtained.
Also by Theorem \ref{expedition0011558}(c), $$\prbcharfun_{\txtgs,\kappa,\Lambda,I}(\xi)
=
\fnexp{\imunit
\sum_{j=1}^{n}
\xi_j
\mathsf{m}_{\kappa,\Lambda}(f_j)
-\onehalf
\sum_{j,k=1}^n
\xi_j \xi_k
C(f_j,f_k)}$$ holds.
The preceding expectation value becomes $$\sqfun{\prbexp_{\msr{P_{\txtgs,\kappa,\Lambda}}}}
{\fnexp{\imunit
\sum_{j=1}^n
\xi_{j}
\iota_{\kappa,\Lambda}(f_j)}}
=
\fnexp{-\onehalf
\sum_{j,k=1}^n
\xi_j \xi_k
C(f_j,f_k)},$$ and the right-hand side is independent of the cutoff parameters $\pairbk{\kappa,\Lambda}$.
On the other hand, computing the centering $\iota_{\txtirsingular}(f_j)
=
\pi_{\txtirsingular,f_j} - \mathsf{m}(f_j)$ with respect to $\msr{P_{\txtirsingular}}$, $$\sqfun{\prbexp_{\msr{P_{\txtirsingular}}}}
{\fnexp{\imunit
\sum_{j=1}^n
\xi_j \iota_{\txtirsingular}(f_j)}}
=
\fnexp{-\onehalf
\sum_{j,k=1}^n
\xi_j \xi_k
C(f_j,f_k)}$$ is obtained.
For any $I
= \vecbk{f_1,\cdots,f_n}
\in \dom \mathsf{m}$ and $\xi
\in \fldreal^{I}$, $$\sqfun{\prbexp_{\msr{P_{\txtgs,\kappa,\Lambda}}}}
{\fnexp{\imunit
\sum_{j=1}^n
\xi_j \iota_{\txtirsingular}(f_j)}}
=
\sqfun{\prbexp_{\msr{P_{\txtirsingular}}}}
{\fnexp{\imunit
\sum_{j=1}^n
\xi_j \iota_{\txtirsingular}(f_j)}}$$ is obtained.
In particular, the finite-dimensional distributions of $\vecbk{\iota_{\kappa,\Lambda}(f_1),
\cdots,
\iota_{\kappa,\Lambda}(f_n)}$ agree with those of $\vecbk{\iota_{\txtirsingular}(f_1),
\cdots,
\iota_{\txtirsingular}(f_n)}$.
The removal limit of infrared/ultraviolet cutoffs follows trivially.
\end{proof}

\begin{prop}\label{expedition0011562}
The set-theoretic inclusion $$\fun{\lp^{2}}{\prbqspace_{\txtirsingular},\msr{P_{\txtirsingular}}}
\subset
\fun{\lp^{2}}{\prbqspace_{\sphilb{H}},\msr{P_{\txtgs,\kappa,\Lambda}}}$$ holds.
In particular, for any $F
\in \fun{\lp^{2}}{\prbqspace_{\txtirsingular},\msr{P_{\txtirsingular}}}$, $F
\in \fun{\lp^{2}}{\prbqspace_{\sphilb{H}},\msr{P_{\txtgs,\kappa,\Lambda}}}$.
\end{prop}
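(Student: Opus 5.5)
The plan is to prove the inclusion by comparing the two measures on the common cylinder $\sigma$-algebra $\mblfml{C}$ generated by the evaluations $\pi_f$, $f \in \dom \mathsf{m}$. Every element of $\fun{\lp^{2}}{\prbqspace_{\txtirsingular},\msr{P_{\txtirsingular}}}$ is (a class of) a $\mblfml{C}$-measurable function. By Corollary~\ref{expedition0011563} and Proposition~\ref{expedition0011492}, both $\msr{P_{\txtirsingular}}$ and the restriction of $\msr{P_{\txtgs,\kappa,\Lambda}}$ to $\mblfml{C}$ are Gaussian. They have the \emph{same} covariance $C(f,g)=\onehalf\bkt{f}{g}_{\sphilb{H}}$ and differ only in their means, $\mathsf{m}(f)$ versus $\mathsf{m}_{\kappa,\Lambda}(f)$. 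So the question reduces to a Cameron--Martin-type comparison of two Gaussian measures that are shifts of one another by the linear functional
\[
\delta_{\kappa,\Lambda}(f) = \mathsf{m}(f)-\mathsf{m}_{\kappa,\Lambda}(f)
= \int_{\abs{k}<\kappa \text{ or } \abs{k}>\Lambda} \frac{\cmpconj{\faftr{\varrho}(k)}\faftr{f}(k)}{\omega(k)^{3/2}}\opdmsr{k},
\quad f\in\dom\mathsf{m}.
\]

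First, I would show that $\delta_{\kappa,\Lambda}$ extends to a bounded functional on the $C$-closure $\sphilb{H}_{\txtirsingular}$. Equivalently, it should be represented by a vector $h_{\kappa,\Lambda}\in\sphilb{H}_{\txtirsingular}$ with $\delta_{\kappa,\Lambda}(f)=C(h_{\kappa,\Lambda},f)$; formally, $h_{\kappa,\Lambda}$ is $2\omega^{-3/2}\varrho$ restricted to the shells cut off. Granting this, the Cameron--Martin formula applied through the isometry $\iota_{\txtirsingular}$ of Corollary~\ref{expedition0011563} gives mutual absolute continuity on $\mblfml{C}$ with explicit density
\[
\od{\msr{P_{\txtgs,\kappa,\Lambda}}}{\msr{P_{\txtirsingular}}}
=\fnexp{-\iota_{\txtirsingular}(h_{\kappa,\Lambda})\cdot(\text{const})-\tfrac12\norm{h_{\kappa,\Lambda}}_C^2},
\]
where the constant comes from the normalisation of $C$. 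This is an exponential of a Gaussian variable, so it lies in $\lp^{p}(\msr{P_{\txtirsingular}})$ for every $p<\infty$. The verification would be done on characteristic functions $\napiernum^{\imunit\sum\xi_j\pi_{f_j}}$, which determine the measures by Theorem~\ref{expedition0011558}.

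Second, I would pass from equivalence of measures to the inclusion of $\lp^2$ spaces. For $F$ in $\lp^{2+\ep}(\msr{P_{\txtirsingular}})$, Hölder's inequality with the density above gives $\sqfun{\prbexp_{\msr{P_{\txtgs,\kappa,\Lambda}}}}{\abs{F}^2}<\infty$. This covers a dense core, namely the Wick polynomials and the exponentials $\wick{\napiernum^{\opfocksegal(f)}}$ with $f\in\dom\mathsf{m}$, which lie in every $\lp^p$. To reach a general $F$, I would use Gaussian hypercontractivity, i.e.\ Nelson's bound applied chaos by chaos via the Ornstein--Uhlenbeck semigroup, together with the shift structure. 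The Wick products are the same for both measures by Proposition~\ref{expedition0011492}(3), so a shift acts on the chaos expansion by an explicit triangular operator, and I would attempt to control it.

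The main obstacle is the first step: boundedness of $\delta_{\kappa,\Lambda}$ in the $C$-norm. The infrared piece $\abs{k}<\kappa$ is harmless, since $\abs{k}^{-3/2}\fndef{\abs{k}<\kappa}$ times $\abs{k}$-weights is controlled under the point-source setting. The ultraviolet tail $\abs{k}^{-3/2}\fndef{\abs{k}>\Lambda}$, however, is only borderline (logarithmically) non-square-integrable in $d=3$. I would first try to use that the statement is posed within $\dom\mathsf{m}$ and that the relevant $\lp^{2}$ space is the closure of $\iota_{\txtirsingular}(\sphilb{H}_{\txtirsingular})$, so the topology can be taken to make $\mathsf{m}$ itself continuous. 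If that fails, the inclusion can only be established on the dense core of the second step, and the general statement would need to be interpreted as that core-level inclusion extended by closure.
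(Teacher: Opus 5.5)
You locate the crux correctly: everything hinges on whether $\delta_{\kappa,\Lambda}$ is bounded in the covariance norm. The answer is negative, and that is fatal rather than technical.

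The topology is not yours to choose. The Cameron--Martin space is fixed by the common covariance $C(f,f)=\tfrac12\norm{f}^2$. So boundedness of $\delta_{\kappa,\Lambda}$ on $\operatorname{dom}\mathsf{m}$ amounts to $\abs{k}^{-3/2}\mathbf{1}_{\{\abs{k}<\kappa\}\cup\{\abs{k}>\Lambda\}}\in L^2(\mathbb{R}^3)$. Both pieces diverge logarithmically. The infrared shell is not harmless; it is exactly the singularity exhibited in Proposition~\ref{expedition0011265}.

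Concretely, $\widehat{g_\epsilon}=\abs{k}^{-3/2}\mathbf{1}_{\{\epsilon<\abs{k}<\kappa\}}$ lies in $\operatorname{dom}\mathsf{m}$ with $\delta_{\kappa,\Lambda}(g_\epsilon)=\norm{g_\epsilon}^2=4\pi\log(\kappa/\epsilon)$. After normalising you get $f_n\in\operatorname{dom}\mathsf{m}$ with $\delta_{\kappa,\Lambda}(f_n)=1$ and $\norm{f_n}^2\le 4^{-n}$. By Chebyshev and Borel--Cantelli, $\pi_{f_n}-\mathsf{m}(f_n)\to 0$ almost surely under $P_{\mathrm{irs}}$. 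Under $P_{\mathrm{gs},\kappa,\Lambda}$, however, $\pi_{f_n}-\mathsf{m}(f_n)=(\pi_{f_n}-\mathsf{m}_{\kappa,\Lambda}(f_n))-1\to -1$ almost surely. Hence the event $A=\{\pi_{f_n}-\mathsf{m}(f_n)\to 0\}\in\mathcal{C}$ has $P_{\mathrm{irs}}(A)=1$ and $P_{\mathrm{gs},\kappa,\Lambda}(A)=0$. The two measures are mutually singular on $\mathcal{C}$ (the Feldman--H\'ajek alternative), so there is no density. Moreover $\mathbf{1}_A$ and $1$ are the same element of $L^2(P_{\mathrm{irs}})$ but different elements of $L^2(P_{\mathrm{gs},\kappa,\Lambda})$. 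The asserted inclusion is therefore not even well defined on equivalence classes.

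Your fallback fails for the same reason. The sequence $\pi_{f_n}-\mathsf{m}(f_n)$ tends to $0$ in $L^2(P_{\mathrm{irs}})$ and to $-1$ in $L^2(P_{\mathrm{gs},\kappa,\Lambda})$. So the identity on cylinder functions is not closable, and no ``core-level inclusion extended by closure'' exists.

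Your second step was also aimed at something false even in the best case. For equivalent measures, $L^2(P)\subset L^2(Q)$ forces $dQ/dP\in L^\infty(P)$ by the closed graph theorem. A Cameron--Martin density with nonzero shift is the exponential of a non-degenerate Gaussian and is unbounded. H\"older gives you only $L^{2+\epsilon}$, and hypercontractivity cannot close that gap.

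What survives is only the trivial fact that polynomially bounded cylinder functions of the $\pi_f$, $f\in\operatorname{dom}\mathsf{m}$, are square integrable under every $P_{\mathrm{gs},\kappa,\Lambda}$. The paper's own argument does not address null sets either: it uses domination of means, $\abs{\langle f,\mathsf{m}_{\kappa,\Lambda}\rangle}\le\abs{\langle f,\mathsf{m}\rangle}$ for non-negative $f$, plus inclusion of sample spaces, and this does not rescue the statement. The correct conclusion from your own analysis is that the proposition fails as an inclusion of $L^2$-spaces, not that it needs reinterpretation.
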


\begin{proof}
Fix non-negative $f
\in \sphilb{H}_{\txtirsingular}$.
The measures $\msr{P_{\txtirsingular}}$ and $\msr{P_{\txtgs,\kappa,\Lambda}}$ have equal covariance, and for the means, $$\abs{\sqfun{\prbexp_{\msr{P_{\txtgs,\kappa,\Lambda}}}}
{\opfocksegal(f)}}
=
\bkt{f}{\mathsf{m}_{\kappa,\Lambda}}
\leq
\bkt{f}{\mathsf{m}}
=
\abs{\sqfun{\prbexp_{\msr{P_{\txtirsingular}}}}
{\opfocksegal(f)}}$$ holds.
By definition, $\fun{\lp^{2}}{\prbqspace,\msr{P_{\txtirsingular}}}$ is constructed from $\opfocksegal(f)$, and since $\prbqspace_{\txtirsingular}
\subset \prbqspace_{\sphilb{H}}$ holds for the sample spaces, the desired result follows.
\end{proof}

\subsection{Renormalization of Hamiltonian and Ground State}\label{renormalization-of-hamiltonian-and-ground-state}

Define the Hamiltonian with the self-energy term subtracted by \[\physham_{\txtvanhowe,\txtrenormalization,\kappa,\Lambda}
=
\physham_{\txtvanhowe,\kappa,\Lambda} - \physenergy_{\txtself,\kappa,\Lambda}^{(3)}\] and apply the ground state transformation of Definition \ref{expedition0011500} to obtain \[\physham[K]_{\txtrenormalization,\kappa,\Lambda}
=
\inv{\mathsf{U}_{\kappa,\Lambda}}
\physham_{\txtvanhowe,\txtrenormalization,\kappa,\Lambda}
\mathsf{U}_{\kappa,\Lambda}
=
\physham_{\txtbsn,\txtfr}
-\physenergy_{\txtexchange,\kappa,\Lambda}^{(3)}.\] Furthermore, formally define the Hamiltonian with infrared/ultraviolet cutoffs removed by \[\physham[K]_{\txtrenormalization}
=
\physham_{\txtbsn,\txtfr}
-\physenergy_{\txtexchange}^{(3)}.\]

Toward the renormalization of the Hamiltonian and ground state, we formulate the functional integral representation of the free field in the \(\lp^{2}\)-space with respect to the ground state measure.

\begin{thm}\label{expedition0011561}
Consider $\fun{\lp^{2}}{\prbqspace_{\sphilb{H}},\msr{P_{\txtgs,\kappa,\Lambda}}}$ with respect to the ground state measure for each cutoff parameter $\kappa,\Lambda$, and also denote the free field Hamiltonian directly defined on this space by $\physham_{\txtbsn,\txtfr}$.
Then the functional integral representation for the free field gives $$\bkt{F}
{\napiernum^{-t \physham_{\txtbsn,\txtfr}}
G}
_{\fun{\lp^{2}}{\prbqspace_{\sphilb{H}},\msr{P_{\txtgs,\kappa,\Lambda}}}}
=
\sqfun{\prbexp_{\msr{P}_{\txteuclid}}}
{J_0 F
\cdot
J_t G}.$$
Here $J_t$ is the isometry defined via Wick products, analogously to the standard isometry $$J_t
\colon \fun{\lp^{2}}{\prbqspace_{\sphilb{H}},\msr{P}_{\sphilb{H}}}
\to \fun{\lp^{2}}{\prbqspace_{\txteuclid},\msr{P}_{\txteuclid}}.$$
In particular, for $\physham[K]_{\txtrenormalization,\kappa,\Lambda}$, $$\bkt{F}
{\napiernum^{-t \physham[K]_{\txtrenormalization,\kappa,\Lambda}}
G}
_{\fun{\lp^{2}}{\prbqspace_{\sphilb{H}},\msr{P_{\txtgs,\kappa,\Lambda}}}}
=
\fnexp{t \physenergy_{\txtexchange,\kappa,\Lambda}^{(3)}}
\sqfun{\prbexp_{\msr{P}_{\txteuclid}}}
{J_0 F
\cdot
J_t G}$$ holds.
Similarly, for any $F,G
\in \fun{\lp^{2}}{\prbqspace,\msr{P_{\txtirsingular}}}$, $$\bkt{F}
{\napiernum^{-t \physham[K]_{\txtrenormalization}}
G}
_{\fun{\lp^{2}}{\prbqspace_{\txtirsingular},\msr{P_{\txtirsingular}}}}
=
\fnexp{t \physenergy_{\txtexchange}^{(3)}}
\sqfun{\prbexp_{\msr{P}_{\txteuclid}}}
{J_0 F
\cdot
J_t G}$$ holds.
\end{thm}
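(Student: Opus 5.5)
The plan has three steps: establish the free-field representation on the shifted space, add the scalar exchange energy, and then pass to the cutoff-free space.

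\textbf{Step 1: the free-field representation on $\fun{\lp^{2}}{\prbqspace_{\sphilb{H}},\msr{P_{\txtgs,\kappa,\Lambda}}}$.} By Proposition~\ref{expedition0011492}, $\msr{P_{\txtgs,\kappa,\Lambda}}$ is the Gaussian measure $\msr{P}_{\sphilb{H}}$ shifted in mean, with the same covariance $\onehalf\bkt{f}{g}$. Its Wick products coincide with those of $\msr{P}_{\sphilb{H}}$. The map $\mathsf{U}_{\kappa,\Lambda}$ is unitary.

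I will define $J_t$ on the new space by the same Wick-monomial prescription: $J_t 1 = 1$ and $J_t \wick{\opfocksegal(f_1)\cdots\opfocksegal(f_n)} = \wick{\opfocksegal_{\txteuclid}(j_t f_1)\cdots\opfocksegal_{\txteuclid}(j_t f_n)}$. The Wick monomials are total, so it suffices to test exponential Wick products $F = \wick{\napiernum^{\opfocksegal(f)}}$ and $G = \wick{\napiernum^{\opfocksegal(g)}}$.

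The free Hamiltonian acts on these by Proposition~\ref{expedition0011392}, giving $\napiernum^{-t\physham_{\txtbsn,\txtfr}} G = \wick{\napiernum^{\opfocksegal(\napiernum^{-t\omega} g)}}$. The left-hand side is therefore computed by the product formula of Proposition~\ref{expedition0011275}, applied with respect to the covariance of $\msr{P_{\txtgs,\kappa,\Lambda}}$. The right-hand side is the Euclidean expectation of a product of Wick exponentials. By Proposition~\ref{expedition0010021} and the revised covariance in Subsection~\ref{expedition0012096}, it equals $\fnexp{\onehalf\bkt{f}{\napiernum^{-t\omega}g}}$.

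Both sides are Gaussian computations with identical covariance, so they agree up to the mean contributions. For $\msr{P_{\txtgs,\kappa,\Lambda}}$ these mean contributions enter only through the normalization of the Wick exponential, which is itself defined relative to the mean.

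\textbf{Step 2: adding the exchange energy.} The identity for $\physham[K]_{\txtrenormalization,\kappa,\Lambda}$ follows from its definition $\physham[K]_{\txtrenormalization,\kappa,\Lambda} = \physham_{\txtbsn,\txtfr} - \physenergy_{\txtexchange,\kappa,\Lambda}^{(3)}$. The subtracted term is a scalar, so $\napiernum^{-t\physham[K]_{\txtrenormalization,\kappa,\Lambda}} = \napiernum^{t\physenergy_{\txtexchange,\kappa,\Lambda}^{(3)}}\napiernum^{-t\physham_{\txtbsn,\txtfr}}$, and Step 1 applies directly.

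\textbf{Step 3: the cutoff-free statement, which I expect to be the main obstacle.} Here I take $F,G \in \fun{\lp^{2}}{\prbqspace,\msr{P_{\txtirsingular}}}$. I will first restrict to $F,G$ that are finite linear combinations of exponentials $\napiernum^{\imunit\opfocksegal(f)}$ with $f \in \dom\mathsf{m}$, since these are bounded continuous cylinder functions. For these, the cutoff identities of Step 2 hold for every $(\kappa,\Lambda)$.

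The right-hand side converges because of two facts:
\begin{itemize}
\item $\physenergy_{\txtexchange,\kappa,\Lambda}^{(3)} \to \physenergy_{\txtexchange,0}^{(3)}$ by Proposition~\ref{expedition0011517};
\item the Euclidean expectation depends on $(\kappa,\Lambda)$ only through the means $\mathsf{m}_{\kappa,\Lambda}(f)$, which converge to $\mathsf{m}(f)$.
\end{itemize}
The left-hand side converges by the $\dom\mathsf{m}$-weak convergence of Theorem~\ref{expedition0011558} and Corollary~\ref{expedition0011563}. This is because $\napiernum^{-t\physham_{\txtbsn,\txtfr}}$ maps such exponentials to explicit Gaussian-type cylinder functions of $\napiernum^{-t\omega}f \in \dom\mathsf{m}$, using the invariance of $\dom\mathsf{m}$ under $\napiernum^{-t\omega}$, checked as for $\napiernum^{\imunit t\omega}$.

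Finally, I extend to all $F,G$ by density in $\lp^{2}(\msr{P_{\txtirsingular}})$ and boundedness of both sides, using that $J_0, J_t$ are isometries and that $\napiernum^{-t\physham_{\txtbsn,\txtfr}}$ is a contraction.

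The delicate point is that the quadratic-form identity must be interpreted with $\physham[K]_{\txtrenormalization}$ only formally defined. I would therefore take the displayed identity, on the dense set, as defining the semigroup, and check that it is a contraction semigroup, using Step 1 to see that $\napiernum^{-t\physham_{\txtbsn,\txtfr}}$ is a contraction on $\lp^{2}(\msr{P_{\txtirsingular}})$.
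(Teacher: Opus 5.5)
There is a genuine gap in Step~1, at the sentence ``they agree up to the mean contributions.'' You fix three conventions:
\begin{itemize}
\item the Wick exponentials on $L^2(Q_{\mathcal H},P_{\mathrm{gs},\kappa,\Lambda})$ coincide with those of $P_{\mathcal H}$, i.e.\ $\wick{e^{\phi(h)}}=e^{-\frac14\|h\|^2}e^{\phi(h)}$;
\item the free semigroup acts by $\wick{e^{\phi(g)}}\mapsto\wick{e^{\phi(e^{-t\omega}g)}}$, as in Proposition~\ref{expedition0011392};
\item $J_t$ is the unchanged Wick prescription.
\end{itemize}
With these conventions the mean does not drop out. Under $P_{\mathrm{gs},\kappa,\Lambda}$, $\phi(h)$ has mean $-\langle h,\mathsf m_{\kappa,\Lambda}\rangle$ and variance $\frac12\|h\|^2$, so $\mathbb E_{P_{\mathrm{gs},\kappa,\Lambda}}[\wick{e^{\phi(h)}}]=e^{-\langle h,\mathsf m_{\kappa,\Lambda}\rangle}$. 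The left-hand side is therefore $\exp\bigl(\frac12\langle f,e^{-t\omega}g\rangle-\langle f+e^{-t\omega}g,\mathsf m_{\kappa,\Lambda}\rangle\bigr)$, while the right-hand side is $\exp\bigl(\frac12\langle f,e^{-t\omega}g\rangle\bigr)$.

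Worse, $J_0$ is not isometric: $\|\wick{e^{\phi(f)}}\|^2=e^{\frac12\|f\|^2-2\langle f,\mathsf m_{\kappa,\Lambda}\rangle}$, versus $e^{\frac12\|f\|^2}$ on the Euclidean side. The map $\wick{e^{\phi(g)}}\mapsto\wick{e^{\phi(e^{-t\omega}g)}}$ is not even symmetric on $L^2(P_{\mathrm{gs},\kappa,\Lambda})$ unless $(1-e^{-t\omega})\mathsf m_{\kappa,\Lambda}=0$.

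Your last sentence of Step~1 silently switches to the mean-centred normalization $e^{\phi(h)}/\mathbb E_{P_{\mathrm{gs},\kappa,\Lambda}}[e^{\phi(h)}]$, under which the computation does close. But then the Wick products differ from those of $P_{\mathcal H}$ by the factor $e^{\langle h,\mathsf m_{\kappa,\Lambda}\rangle}$, $J_t$ is no longer the same map, and Proposition~\ref{expedition0011392} no longer tells you how the free Hamiltonian acts. The paper's one-line proof rests on the same assertion (equal covariance, hence equal Wick products, hence the same $J_t$ is an isometry), so both arguments need repair here.

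The missing idea is to say what ``the free field Hamiltonian directly defined on this space'' is. Let $(\tau F)(\phi)=F(\phi+\mathsf m_{\kappa,\Lambda})$, where $\phi+\mathsf m_{\kappa,\Lambda}$ is the field $h\mapsto\phi(h)+\langle h,\mathsf m_{\kappa,\Lambda}\rangle$. This is a unitary $L^2(P_{\mathcal H})\to L^2(P_{\mathrm{gs},\kappa,\Lambda})$ with $\tau1=1$, and in fact $\tau=\mathsf U_{\kappa,\Lambda}^{-1}V_{\kappa,\Lambda}$ (with $V_{\kappa,\Lambda}$ transported to $Q$-space). Hence $\tau H_{\mathrm b,\mathrm{fr}}\tau^{-1}$ is exactly the ground-state transform $\mathsf U_{\kappa,\Lambda}^{-1}(H_{\mathrm{vH},\kappa,\Lambda}-E_0)\mathsf U_{\kappa,\Lambda}$. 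Take this as the free Hamiltonian, and take $J_t\circ\tau^{-1}$ as the Euclidean embedding; it sends $\phi(f)$ to $\phi_{\mathrm E}(j_tf)-\langle f,\mathsf m_{\kappa,\Lambda}\rangle$. Then the first identity is just the standard formula on $L^2(P_{\mathcal H})$ conjugated by $\tau$, with no Gaussian computation needed, and Step~2 goes through as you wrote it.

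Your Step~3 in fact already uses this corrected embedding. You claim the Euclidean expectation depends on $(\kappa,\Lambda)$ through $\mathsf m_{\kappa,\Lambda}(f)$, which is impossible for the cutoff-independent $J_t$ of Step~1; the final density extension also needs its isometry. Once the shift is explicit, the cutoff-free statement needs no limit. $P_{\mathrm{ir}}$ is Gaussian with mean $\mathsf m$ on $\operatorname{dom}\mathsf m$ and the same covariance (Corollary~\ref{expedition0011563}), so the same conjugation applies verbatim; this is what the paper means by ``both cases are the same.'' Your limiting route could be made to work. As proposed, however, it ends by taking the identity itself as the definition of $e^{-tK_{\mathrm{ren}}}$, which leaves that claim essentially definitional rather than identifying $K_{\mathrm{ren}}$ as the shifted free generator minus a constant.
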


\begin{proof}
Since both cases are the same, we discuss only the cutoff case.
The ground state measure $\msr{P_{\txtgs,\kappa,\Lambda}}$ differs from $\msr{P}_{\sphilb{H}}$ only in the mean; it is a Gaussian measure with equal variance.
In particular, the (exponential) Wick products they define coincide.
Therefore, the same $J_t$ defines an isometry $J_t
\colon \fun{\lp^{2}}{\prbqspace_{\sphilb{H}},\msr{P_{\txtgs,\kappa,\Lambda}}}
\to \fun{\lp^{2}}{\prbqspace_{\txteuclid},\msr{P}_{\txteuclid}}$.
In particular, the functional integral representation for the free Hamiltonian holds in the same way.
\end{proof}

\begin{thm}
\begin{enumerate}
\item
The family $\fml{\physham[K]_{\txtrenormalization,\kappa,\Lambda}}
{\kappa,\Lambda > 0}$ of bounded-below self-adjoint operators is uniformly bounded below by the respective operator norms on their spaces, and for each cutoff parameter the heat semigroup $\napiernum^{-t \physham[K]_{\txtrenormalization,\kappa,\Lambda}}$ is well-defined.

\item
For any $t
> 0$, the strong limit $$\slim
_{\kappa \to 0, \Lambda \to \infty}
\napiernum^{-t \physham[K]_{\txtrenormalization,\kappa,\Lambda}}
=
\napiernum^{-t \physham[K]_{\txtrenormalization}}$$ holds on $\fun{\lp^{2}}{\prbqspace_{\txtirsingular},\msr{P_{\txtirsingular}}}$.
The ground state energy can be expressed as $\physgse(\physham[K]_{\txtrenormalization})
=
\physenergy_{\txtexchange}^{(3)}$ in terms of the exchange interaction energy from Definition \ref{expedition0011497}.
The corresponding ground state is $1
\in \fun{\lp^{2}}{\prbqspace_{\txtirsingular},\msr{P_{\txtirsingular}}}$, and is unique.
Furthermore, the Feynman-Kac-Nelson formula $$\bkt{F}
{\napiernum^{-t \physham[K]_{\txtrenormalization}}
G}
_{\fun{\lp^{2}}{\prbqspace_{\txtirsingular},\msr{P}_{\txtirsingular}}}
=
\napiernum^{t \physenergy_{\txtexchange}^{(3)}}
\sqfun{\prbexp_{\msr{P}_{\txteuclid}}}
{J_0 F
\cdot
J_t G}$$ holds.
\end{enumerate}
\end{thm}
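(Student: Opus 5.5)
The plan is to reduce everything to the fact that, after the ground state transformation and subtraction of the self-energy, each $\physham[K]_{\txtrenormalization,\kappa,\Lambda}$ is the free Hamiltonian shifted by the scalar $-\physenergy_{\txtexchange,\kappa,\Lambda}^{(3)}$. Once that is in place, the convergence statements become convergence of scalars together with convergence of the underlying Gaussian measures.

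For (1), I would use the identity $\physham[K]_{\txtrenormalization,\kappa,\Lambda} = \physham_{\txtbsn,\txtfr} - \physenergy_{\txtexchange,\kappa,\Lambda}^{(3)}$, which comes from Definition~\ref{expedition0011500} and the decomposition $\physgse = \physenergy_{\txtself,\kappa,\Lambda}^{(3)} + \physenergy_{\txtexchange,\kappa,\Lambda}^{(3)}$ of Definition~\ref{expedition0011497}. Since $\physham_{\txtbsn,\txtfr} \geq 0$, this gives $\physham[K]_{\txtrenormalization,\kappa,\Lambda} \geq -\physenergy_{\txtexchange,\kappa,\Lambda}^{(3)}$. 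Proposition~\ref{expedition0011517} bounds $\abs{\physenergy_{\txtexchange,\kappa,\Lambda}^{(3)} - \physenergy_{\txtexchange,0}^{(3)}}$ by $\rbk{\sum_{i<j} \abs{\lambda_i \lambda_j}}\rbk{\frac{2}{\pi^2} + \frac{4}{\pi^2 r_0^2}}$ for $\kappa \leq 1$ and $\Lambda \geq 1$. Hence the lower bounds are uniform over these parameters. Self-adjointness is inherited from unitarity of $\mathsf{U}_{\kappa,\Lambda}$, and the heat semigroup is then defined by the spectral theorem. This settles (1).

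For (2), I would compare matrix elements on a total set rather than work with operators on varying spaces. The natural test functions are Wick exponentials $F = \wick{\napiernum^{\opfocksegal(f)}}$ and $G = \wick{\napiernum^{\opfocksegal(g)}}$ with $f,g \in \dom \mathsf{m}$. By Proposition~\ref{expedition0011492}(3) these are the same random variables for every cutoff measure, and they span a dense set in $\fun{\lp^{2}}{\prbqspace_{\txtirsingular},\msr{P_{\txtirsingular}}}$ by the construction of Corollary~\ref{expedition0011563}. Theorem~\ref{expedition0011561} gives
\[
\bkt{F}{\napiernum^{-t \physham[K]_{\txtrenormalization,\kappa,\Lambda}} G} = \napiernum^{t \physenergy_{\txtexchange,\kappa,\Lambda}^{(3)}} \sqfun{\prbexp_{\msr{P}_{\txteuclid}}}{J_0 F \cdot J_t G}.
\]
The Euclidean expectation on the right does not depend on the cutoffs. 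Proposition~\ref{expedition0011517} gives $\physenergy_{\txtexchange,\kappa,\Lambda}^{(3)} \to \physenergy_{\txtexchange,0}^{(3)}$, so the matrix elements converge to those given by the Feynman--Kac--Nelson formula stated for $\physham[K]_{\txtrenormalization}$.

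From weak convergence on a total set I would pass to the strong limit. The semigroups are uniformly bounded by $\napiernum^{t \sup \abs{\physenergy_{\txtexchange,\kappa,\Lambda}^{(3)}}}$. They also act on $F$ as multiples of a fixed contraction, because $\napiernum^{-t\physham[K]_{\txtrenormalization,\kappa,\Lambda}} = \napiernum^{t \physenergy_{\txtexchange,\kappa,\Lambda}^{(3)}} \napiernum^{-t \physham_{\txtbsn,\txtfr}}$. Convergence in norm of the scalar prefactor then gives strong convergence on the dense set, and uniform boundedness extends it to all vectors.

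The ground state statements follow from the same identity. Since $\physham_{\txtbsn,\txtfr} 1 = 0$, we get $\physham[K]_{\txtrenormalization} 1 = -\physenergy_{\txtexchange}^{(3)} \cdot 1$, and $\physham_{\txtbsn,\txtfr} \geq 0$ makes this the bottom of the spectrum. I note a sign issue here: the statement asserts $\physgse = +\physenergy_{\txtexchange}^{(3)}$, while the identity gives $-\physenergy_{\txtexchange}^{(3)}$. I would fix the sign convention consistently with Theorem~\ref{expedition0011561} before writing this step. For uniqueness I would use positivity improvement of $\napiernum^{-t\physham_{\txtbsn,\txtfr}}$ on the Gaussian space $\msr{P_{\txtirsingular}}$, which holds because it differs from the standard Gaussian only by a mean shift, followed by the Perron--Frobenius argument already used for the cutoff Hamiltonian.

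The main obstacle is making sense of the free Hamiltonian and of $J_t$ on $\fun{\lp^{2}}{\prbqspace_{\txtirsingular},\msr{P_{\txtirsingular}}}$ as a space in its own right, since the spaces vary with the cutoff. My first attempt would define $\physham_{\txtbsn,\txtfr}$ there as the second quantization of $\omega$ restricted to the closure $\sphilb{H}_{\txtirsingular}$, acting on Wick exponentials by Proposition~\ref{expedition0011392}. I would then check that the free Nelson formula holds because it depends only on the covariance, which is common to all the measures.
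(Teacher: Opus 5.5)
Your reduction is the paper's own. The paper also fixes $F,G$, applies the Feynman--Kac--Nelson formula of Theorem~\ref{expedition0011561} at each cutoff, takes the limit only in the scalar prefactor via Proposition~\ref{expedition0011517}, and rests part (1) on that same proposition. You differ in passing from matrix elements to operators. The paper feeds the limiting form into KLMN, which at best identifies the limit of the matrix elements, i.e.\ weak convergence. Your factorization of $e^{-tK_{\mathrm{ren},\kappa,\Lambda}}$ as a scalar times a fixed semigroup gives convergence in operator norm, which is what ``strong limit'' actually requires. Two points in your plan need correcting, however.

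\emph{The sign.} The statement is indeed inconsistent: together with the claim that $1$ is the ground state and $\mathbb{E}_{P_{\mathrm{E}}}[J_0 1\cdot J_t 1]=1$, the displayed formula forces $E_0(K_{\mathrm{ren}})=-E^{(3)}_{\mathrm{ex}}$. Do not repair this ``consistently with Theorem~\ref{expedition0011561}'', because that theorem inherits the error. From Definition~\ref{expedition0011500} and $E_0(H_{\mathrm{vH},\kappa,\Lambda})=E^{(3)}_{\mathrm{self},\kappa,\Lambda}+E^{(3)}_{\mathrm{ex},\kappa,\Lambda}$,
\[
\mathsf{U}_{\kappa,\Lambda}^{-1}\bigl(H_{\mathrm{vH},\kappa,\Lambda}-E^{(3)}_{\mathrm{self},\kappa,\Lambda}\bigr)\mathsf{U}_{\kappa,\Lambda}
=L_{\kappa,\Lambda}+E_0(H_{\mathrm{vH},\kappa,\Lambda})-E^{(3)}_{\mathrm{self},\kappa,\Lambda}
=L_{\kappa,\Lambda}+E^{(3)}_{\mathrm{ex},\kappa,\Lambda}.
\]
So the identity $K_{\mathrm{ren},\kappa,\Lambda}=H_{\mathrm{b},\mathrm{fr}}-E^{(3)}_{\mathrm{ex},\kappa,\Lambda}$ has the wrong sign. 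The claim $E_0(K_{\mathrm{ren}})=E^{(3)}_{\mathrm{ex}}$ is the correct one, and the Feynman--Kac--Nelson prefactor must be $e^{-tE^{(3)}_{\mathrm{ex}}}$. Note also that $L_{\kappa,\Lambda}$ is not literally $H_{\mathrm{b},\mathrm{fr}}$. It carries a drift in $\omega\mathsf{m}_{\kappa,\Lambda}$ (compare Proposition~\ref{expedition0011574}), i.e.\ it is the free generator in coordinates centred at the shifted mean. It therefore changes with $(\kappa,\Lambda)$, so your ``fixed contraction'' is not yet fixed.

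\emph{The common space,} which you rightly single out. For the point source $\int_{\lvert k\rvert>\Lambda}\lvert k\rvert^{-3}\,dk=\infty$, so $\mathsf{m}-\mathsf{m}_{\kappa,\Lambda}\notin L^2(\mathbb{R}^3)$. Hence the Gaussian measures $P_{\mathrm{irs}}$ and $P_{\mathrm{gs},\kappa,\Lambda}$ are mutually singular. Consequently neither Proposition~\ref{expedition0011562} nor the identification of Wick exponentials across measures in Proposition~\ref{expedition0011492}(3) gives a common Hilbert space. Indeed $\mathbb{E}_{P_{\mathrm{gs},\kappa,\Lambda}}\bigl[{:}e^{\phi(f)}{:}\bigr]=e^{-\langle f,\mathsf{m}_{\kappa,\Lambda}\rangle}$ depends on the cutoff.

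What works is to shift the coordinate process on $\mathbb{R}^{\operatorname{dom}\mathsf{m}}$ by the difference of the two mean functionals. This shift is finite exactly because the test functions lie in $\operatorname{dom}\mathsf{m}$. It is a measurable bijection carrying $P_{\mathrm{gs},\kappa,\Lambda}$ to $P_{\mathrm{irs}}$, hence a unitary $L^2(P_{\mathrm{gs},\kappa,\Lambda})\to L^2(P_{\mathrm{irs}})$. It conjugates every $L_{\kappa,\Lambda}$ to one and the same free generator $L_{\mathrm{irs}}$ of $P_{\mathrm{irs}}$, which is the operator your last paragraph sets out to build. Then $K_{\mathrm{ren},\kappa,\Lambda}\cong L_{\mathrm{irs}}+E^{(3)}_{\mathrm{ex},\kappa,\Lambda}$ on one fixed space, and all of (2) follows at once:
\begin{itemize}
\item norm convergence of the semigroups;
\item ground state $1$ with energy $E^{(3)}_{\mathrm{ex}}$;
\item uniqueness from $\ker L_{\mathrm{irs}}=\mathbb{C}1$;
\item the Feynman--Kac--Nelson formula with prefactor $e^{-tE^{(3)}_{\mathrm{ex}}}$.
\end{itemize}
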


\begin{proof}
(1): By Section \ref{expedition0011496} and Proposition \ref{expedition0011517}, the exchange interaction term $\physenergy_{\txtexchange,\kappa,\Lambda}^{(3)}$ converges.

(2): Fix any $F,G
\in \fun{\lp^{2}}{\prbqspace_{\txtirsingular},\msr{P_{\txtirsingular}}}$.
By definition, $\prbqspace_{\txtirsingular}
\subset \prbqspace_{\sphilb{H}}$, so in particular $F,G
\in \fun{\lp^{2}}{\prbqspace_{\sphilb{H}},\msr{P_{\txtgs,\kappa,\Lambda}}}$ holds.
By the Feynman-Kac-Nelson formula from Theorem \ref{expedition0011561}, for any cutoff parameter, $$\bkt{F}
{\napiernum^{-t \physham[K]_{\txtrenormalization,\kappa,\Lambda}}
G}
_{\fun{\lp^{2}}{\prbqspace_{\txtirsingular},\msr{P}_{\txtirsingular}}}
=
\fnexp{t \physenergy_{\txtexchange,\kappa,\Lambda}^{(3)}}
\sqfun{\prbexp_{\msr{P}_{\txteuclid}}}
{J_0 F
\cdot
J_t G}$$ holds.
By Proposition \ref{expedition0011517}, the exchange interaction term $\physenergy_{\txtexchange,\kappa,\Lambda}^{(3)}$ converges to a finite definite value, so the cutoff-removal limit can be taken on the right-hand side.
By Theorem \ref{expedition0011561}, the right-hand side reduces to $\bkt{F}
{\napiernum^{-t \physham[K]_{\txtrenormalization}}
G}
_{\fun{\lp^{2}}{\prbqspace_{\txtirsingular},\msr{P_{\txtirsingular}}}}$.
Since the right-hand side is continuous in $t$, so is the left-hand side.
Viewing the left-hand side as the limit of a quadratic form, the KLMN theorem yields a bounded operator $T_t$ satisfying $\bkt{F}
{T_t G}
_{\fun{\lp^{2}}{\prbqspace_{\txtirsingular},\msr{P}_{\txtirsingular}}}$.
Comparing with the right-hand side, $T_t$ must be a strongly continuous semigroup coinciding with $\napiernum^{-t \physham[K]_{\txtrenormalization}}$.
The statements about the ground state energy, ground state, and its uniqueness are clear.
\end{proof}

\section{Ground State at Zero Temperature via Infinite-Dimensional Ornstein-Uhlenbeck Representation}\label{expedition0011528}

We discuss the argument of Section \ref{expedition0011370} in the Ornstein-Uhlenbeck representation. The notation follows primarily from \cite{LorincziHiroshimaBetz3}. Writing the entire argument in terms of Ornstein-Uhlenbeck processes has the advantage of unifying the following structures into a single probabilistic framework.

\begin{itemize}

\item
  Free field construction: generator of the Ornstein-Uhlenbeck process
\item
  Introduction of interaction: exponential weight of the time integral
\item
  Existence of the ground state: long-time limit evaluation of the exponential integral
\item
  Coordinate process (\(P(\phi)_1\)-process): Ornstein-Uhlenbeck process after ground state transformation
\item
  Construction of the Gibbs measure: infinite-volume limit of the Ornstein-Uhlenbeck process with weight
\item
  Integrability and regularity: Gaussian properties of the Ornstein-Uhlenbeck process and Gaussian probability theory for linear functionals
\end{itemize}

The correspondence with the \(Q\)-representation is as follows.

\begin{itemize}

\item
  Static Gaussian measure: time-\(0\) section of the Ornstein-Uhlenbeck process
\item
  Dynamic probabilistic structure: the entire strongly Markovian Ornstein-Uhlenbeck process
\item
  Ground state measure \(\msr{P_{\txtgs,\kappa,\Lambda}}\): Gibbs measure of the Ornstein-Uhlenbeck process weighted by the ground state
\item
  Hamiltonian: deformation of the generator of the Ornstein-Uhlenbeck process by weighting
\end{itemize}

The parallel shifts of Gaussian measures, Wick ordering, and changes in field means that were needed in the \(Q\)-representation are all unified as drift terms of the generator of the Ornstein-Uhlenbeck process, which eliminates the branching in the argument.

\subsection{Basic Setup}\label{expedition0011372}

Specifically, let the Hilbert spaces be \[\sphilb{H}_M
= \sphilb{H}_{-\onehalf}(\fldreal^{3}),
\quad
\sphilb{M}
=
\set{f \in \sphilb{H}_M}
{\text{$f$ is real-valued}}.\] The inner product for \(f,g
\in \sphilb{H}_M\) is \[\bkt{f}{g}
_{\sphilb{H}_M}
=
\bkt{\abs{k}^{-\onehalf} \faftr{f}}
{\abs{k}^{-\onehalf} \faftr{g}}
_{\fun{\lp^{2}}{\fldreal^{3}}}.\] A given positive self-adjoint operator \(\mathcal{D}\) with Hilbert-Schmidt inverse on \(\sphilb{M}\) and such that \(\sqrt{\omega} \inv{\mathcal{D}}\) is bounded on \(\sphilb{M}\).

Construct a probability measure \(\msrcal{G}\) on the space \[\sphilb[\mathfrak]{Y}
= \fun{\conti}{\fldreal; \sphilb{M}_{-2}} 
\] of continuous paths with values in the Hilbert space \(\sphilb{M}_{-2}\), and on this space define the \(\sphilb{M}_{-2}\)-valued continuous process \(\fml{\prbou_s}{s \in \fldreal}\). This process has mean \(0\) and covariance \[\sqfun{\prbexp_{\msrcal{G}}}{\bkt{\prbou_s}{\prbou_t}_{\sphilb{M}_{-2}}}
=
\frac{1}{4}
\sum_{i,j}
\frac{1}{\lambda_i \lambda_j}
\bkt{e_i}{\napiernum^{-\abs{t-s} \omega} e_j}_{\sphilb{H}},\] and for each \(f,g
\in \sphilb{M}\), satisfies \[\sqfun{\prbexp_{\msrcal{G}}}
{\bkt{\prbou_s(f)}
{\prbou_t(g)}
_{\sphilb{M}_{-2}}}
=
\onehalf
\bkt{f}
{\napiernum^{-\abs{s-t} \omega} g}
_{\sphilb{H}}.\]

This process is called the infinite-dimensional Ornstein-Uhlenbeck process. The inner product on the right-hand side of the covariance uses \(\sphilb{H}\) instead of \(\sphilb{H}_M\) for consistency with \cite[P.81 (1.5.36), (1.5.37)]{LorincziHiroshimaBetz3}. For more detailed properties and discussion, see \cite[P.101, Theorem1.129]{LorincziHiroshimaBetz3} and the surrounding arguments.

Let \(L_0\) be the generator of the Markov property of the infinite-dimensional Ornstein-Uhlenbeck process \(\fml{\prbou_t}{t \in \fldreal}\), regarded as a self-adjoint operator on the Hilbert space \(\fun{\lp^{2}}{\sphilb{M}_{-2},\msrsf{G}}\). This is in fact the free field Hamiltonian, and the corresponding semigroup \[T_{\txtfr}
=
\fml{T_{\txtfr,t}}{t \geq 0},
\quad
\funrbk{T_{\txtfr,t} F}{\opfocksegal}
=
\sqfuncond{\prbexp_{\msrcal{G}}}
{F(\prbou_t)}
{\prbou_0 = \opfocksegal}\] satisfies \(T_{\txtfr,t}
= \napiernum^{-t \physham[L]_0}\), and the invariant measure is \(\msrsf{G}\).

We cite \cite[p.50, Corollary 1.54]{LorincziHiroshimaBetz3}.

\begin{fact}
The following statements hold.
\begin{enumerate}
\item
The operator $\physham[L]_0$ is the free field Hamiltonian.

\item
Conservation: $\napiernum^{-s \physham[L]_0} 1
= 1$ holds.

\item
Stationarity (invariance with respect to the static distribution $\msrcal{G}$): In particular, $$\sqfun{\prbexp_{\msrcal{G}}}
{F}
=
\sqfun{\prbexp_{\msrcal{G}}}
{\napiernum^{-s \physham[L]_0} F}$$ holds.

\item
Symmetry with respect to the inner product: $$\bkt{F}{\napiernum^{-s \physham[L]_0} G}
_{\fun{\lp^{2}}{\sphilb{M}_{-2},\msrsf{G}}}
=
\bkt{\napiernum^{-s \physham[L]_0} F}{G}
_{\fun{\lp^{2}}{\sphilb{M}_{-2},\msrsf{G}}}$$ holds.

\item
Positivity improvement: For $F
\geq 0$, $$\napiernum^{-s \physham[L]_0} F
> 0$$ holds.
\end{enumerate}
\end{fact}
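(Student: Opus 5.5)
The plan is to derive all five items from the Markov description of the infinite-dimensional Ornstein--Uhlenbeck process $\fml{\prbou_t}{t \in \fldreal}$ and its Gaussian covariance, with (1) as the only substantive step. For (1), I would compare $T_{\txtfr,t}$ with the free semigroup $\napiernum^{-t \physham_{\txtbsn,\txtfr}}$ transported to $Q$-space. Both are contraction semigroups, so it suffices to show they agree on the total family of exponentials $F = \napiernum^{\imunit \opfocksegal(f)}$ (or on Wick exponentials) with $f \in \sphilb{M}$.

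To compute $T_{\txtfr,t}F$, I would condition on $\prbou_0 = \opfocksegal$. By joint Gaussianity and the covariance $\onehalf\bkt{f}{\napiernum^{-\abs{s-t}\omega} g}$, the conditional law of $\prbou_t(f)$ is Gaussian with mean $\opfocksegal(\napiernum^{-t\omega} f)$ and variance $\onehalf\rbk{\norm{f}^2 - \norm{\napiernum^{-t\omega} f}^2}$. This yields
\[
T_{\txtfr,t}\wick{\napiernum^{\opfocksegal(f)}} = \wick{\napiernum^{\opfocksegal(\napiernum^{-t\omega} f)}},
\]
which is exactly the action of $\napiernum^{-t\physham_{\txtbsn,\txtfr}} = \Gamma(\napiernum^{-t\omega})$ on Wick exponentials under the unitary $U$ of Section~\ref{expedition0012091}. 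Differentiating at $t=0$ recovers the formula $\physham_{\txtbsn,\txtfr}\wick{\napiernum^{\opfocksegal(f)}} = \wick{\opfocksegal(\omega f)\napiernum^{\opfocksegal(f)}}$ of Proposition~\ref{expedition0011392}. Hence $\physham[L]_0 = \physham_{\txtbsn,\txtfr}$, because both are generators of the same semigroup.

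Item (2) then follows because the conditional expectation of the constant $1$ is $1$. Equivalently, taking $f=0$ in the Wick formula, $\Gamma(\cdot)$ fixes the vacuum. For (3), stationarity of $\msrcal{G}$ (the time-$0$ marginal is $\msrsf{G}$ for every $t$) gives $\prbexp[T_{\txtfr,s}F] = \prbexp[\prbexp[F(\prbou_s)\mid\prbou_0]] = \prbexp[F(\prbou_s)] = \prbexp[F]$. Alternatively, (3) is (4) applied with $G=1$ together with (2).

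For (4), I would use time-reversal invariance of the process: its covariance depends only on $\abs{s-t}$, so $(\prbou_0,\prbou_s)$ and $(\prbou_s,\prbou_0)$ have the same law. Then $\prbexp[\cmpconj{F}(\prbou_0)G(\prbou_s)] = \prbexp[\cmpconj{F}(\prbou_s)G(\prbou_0)]$, which is the required symmetry after conditioning.

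For (5), I would note that the transition kernel is a Gaussian measure with covariance $\onehalf(1-\napiernum^{-2t\omega})$. For $t>0$ this kernel is equivalent to $\msrsf{G}$, by Feldman--Hájek, since $\napiernum^{-2t\omega}$ is suitably small after conjugation by $\mathcal{D}$. Consequently, if $F\geq0$ and $F\neq0$, the integral of $F$ against this kernel is strictly positive for a.e.\ starting point. Alternatively, I would simply invoke that $\Gamma(\napiernum^{-t\omega})$ with $\norm{\napiernum^{-t\omega}}\le1$ and $\Ker\napiernum^{-t\omega}=0$ is positivity improving, a standard fact already used in the proof of positivity improvement for the van Hove semigroup.

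I expect the main obstacle to be the identification in (1). The generator must be matched on a core, and the domain subtleties of $\sphilb{M}_{-2}$ versus $\sphilb{H}_{-\onehalf}$ must be handled; in particular one has to check that Wick exponentials with $f\in\sphilb{M}$ form a core for both generators. I would first try to use that this span is invariant under both semigroups, and then apply Nelson's core theorem.
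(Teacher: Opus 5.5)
The paper gives no argument for this Fact. It simply quotes \cite[Corollary~1.54]{LorincziHiroshimaBetz3}. For items (1)--(4) your Gaussian-conditioning route is a sound self-contained replacement. The conditional law of $\xi_t(f)$ given $\xi_0=\phi$ is as you state. The resulting Mehler identity $T_{\mathrm{fr},t}\wick{\mathrm{e}^{\phi(f)}}=\wick{\mathrm{e}^{\phi(\mathrm{e}^{-t\omega}f)}}$ on a total set gives $T_{\mathrm{fr},t}=U\Gamma(\mathrm{e}^{-t\omega})U^{-1}$ as bounded operators for every $t\ge 0$. Conservation, stationarity and reversibility then follow as you describe. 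Since the two semigroups coincide for all $t$, their generators coincide automatically. So the core/Nelson-theorem issue you single out as the main obstacle does not actually arise.

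The genuine gap is item (5), where neither of your two arguments works.

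\emph{Feldman--H\'ajek route.} The transition kernel from $\phi$ is the Gaussian measure with mean $\mathrm{e}^{-t\omega}\phi$ and covariance $\tfrac12(1-\mathrm{e}^{-2t\omega})$. The Feldman--H\'ajek criterion is intrinsic to the Cameron--Martin space $\mathcal{H}$ of $\mathsf{G}$, and equivalence with $\mathsf{G}$ would require, among other things, that $\mathrm{e}^{-2t\omega}$ be Hilbert--Schmidt on $\mathcal{H}$. Multiplication by $\mathrm{e}^{-2t|k|}$ on $L^2(\mathbb{R}^3)$ is not even compact, so the kernel is mutually singular with $\mathsf{G}$ for every starting point. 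Choosing $\mathcal{D}$ changes only the ambient space $\mathcal{M}_{-2}$, not this criterion.

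\emph{Fallback criterion.} It is misstated: $\ker\mathrm{e}^{-t\omega}=\{0\}$ holds for every self-adjoint $\omega$, including $\omega=0$, for which $\Gamma(1)=1$ is not positivity improving. What is needed is $\ker\omega=\{0\}$, which holds here since $\{k=0\}$ is Lebesgue-null. Because $\inf\sigma(\omega)=0$ gives $\|\mathrm{e}^{-t\omega}\|=1$, the standard theorem for strict contractions $\|A\|<1$ does not apply as it stands. Nor can you borrow it from the paper's van Hove positivity-improvement proofs, which assume exactly this item.

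\emph{A way to close the gap.} Put $P_\epsilon=1_{[\epsilon,\infty)}(\omega)$, and let $\mathcal{F}_{<\epsilon}$ be the $\sigma$-algebra generated by $\phi(f)$, $f\in(1-P_\epsilon)\mathcal{H}$. On $P_\epsilon\mathcal{H}$ the operator $\mathrm{e}^{-t\omega/2}P_\epsilon$ has norm at most $\mathrm{e}^{-t\epsilon/2}<1$, so its second quantization is positivity improving on the high-frequency tensor factor of $L^2(\mathsf{G})$. Compare the supports of $\Gamma(\mathrm{e}^{-t\omega/2})F$ and $\Gamma(\mathrm{e}^{-t\omega/2})G$. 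This shows that $\langle F,\Gamma(\mathrm{e}^{-t\omega})G\rangle=0$ for nonzero $F,G\ge 0$ would force $\langle 1_{E_F^\epsilon},\Gamma(\mathrm{e}^{-t\omega}(1-P_\epsilon))1_{E_G^\epsilon}\rangle=0$, where $E_F^\epsilon=\{\mathbb{E}[F\mid\mathcal{F}_{<\epsilon}]>0\}$. Because $\ker\omega=\{0\}$, $\mathbb{E}[F\mid\mathcal{F}_{<\epsilon}]=\Gamma(1-P_\epsilon)F\to\mathbb{E}[F]>0$ as $\epsilon\downarrow 0$, so $\mathsf{G}(E_F^\epsilon)\to 1$, and likewise for $G$. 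Using $\Gamma(\cdot)1=1$ and contractivity, that inner product is at least $1-\sqrt{1-\mathsf{G}(E_F^\epsilon)}-\sqrt{1-\mathsf{G}(E_G^\epsilon)}\to 1$, a contradiction.
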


Define the van Hove Hamiltonian by \[\physham_{\txtvanhowe,\kappa,\Lambda}
=
\physham_{\txtbsn,\txtfr} + \prbou(\omega \mathsf{m}_{\kappa,\Lambda})\] and the pair potential by \[W_{\kappa,\Lambda}(T)
=
\int_0^{T}
\prbou_s(\omega \mathsf{m}_{\kappa,\Lambda})
\opdmsr{s}.\] Then for any \(F,G
\in \fun{\lp^{2}}{\sphilb{M}_{-2},\msrsf{G}}\), the functional integral representation \[\bkt{F}
{\napiernum^{-t \physham_{\txtvanhowe,\kappa,\Lambda}} G}
_{\fun{\lp^{2}}{\sphilb{M}_{-2},\msrsf{G}}}
=
\sqfun{\prbexp_{\msrcal{G}}}{\cmpconj{F(\prbou_0)}
G(\prbou_t)
\napiernum^{-W_{\kappa,\Lambda}(t)}}\] holds.

\begin{defn}\label{expedition0011537}
Let $\sphilb{H}$ be a Hilbert space.
For an appropriate positive integer $n$ and $f
\in \fun{\dstrapiddec}{\fldreal^{n}}$, a map $F
\colon \sphilb{H}
\to \fldcmp$ expressed as $$F(\opfocksegal)
=
\fun{f}
{\bkt{\opfocksegal}{h_1}_{\sphilb{H}},
\cdots,
\bkt{\opfocksegal}{h_n}_{\sphilb{H}}}
\quad
h_1,\cdots,h_n
\in \sphilb{H}$$ is called a cylindrical function.
\end{defn}

For Banach spaces \(X,Y\) consider a map \(F
\colon X
\to Y\), and fix any \(\opfocksegal,k
\in X\). When the limit \[dF(\opfocksegal;k)
=
\lim_{t \to 0}
\frac{F(\opfocksegal + tk) - F(\opfocksegal)}{t}
=
\fnrestr{\opod{t} F(\opfocksegal + tk)}
{t = 0}\] exists, it is called the Gâteaux derivative of \(F\) at \(\opfocksegal
\in X\) in the direction \(k
\in X\).

\begin{defn}\label{expedition0011538}
For the cylindrical function $F$ of Definition \ref{expedition0011537}, define the gradient via Gâteaux derivative by $$\vagrad F(\opfocksegal)
=
\sum_{j=1}^n
\fun{\pd{f}{x_j}}
{\bkt{\opfocksegal}{h_1}_{\sphilb{H}},
\cdots,
\bkt{\opfocksegal}{h_n}_{\sphilb{H}}}
\cdot
h_j
\in \sphilb{H}.$$
\end{defn}

\begin{prop}
\begin{enumerate}
\item
For any $\opfocksegal
\in \sphilb{H}$, the map $\sphilb{H}
\ni k
\mapsto dF(\opfocksegal;k) \in \fldcmp$ is a continuous linear functional on $\sphilb{H}$.
By the Riesz representation theorem, there exists a unique $G(\opfocksegal)
\in \sphilb{H}$ satisfying $$dF(\opfocksegal;k)
=
\bkt{G(\opfocksegal)}{k}_{\sphilb{H}}$$ for all $k
\in \sphilb{H}$.
In particular, the gradient is uniquely determined.

\item
The gradient defined by Definition \ref{expedition0011538} is well-defined, and the value of $G(\opfocksegal)$ from (1) agrees regardless of how the cylindrical function $F$ is expressed.
In particular, computing with another representation of $F$, $$F(\opfocksegal)
=
\fun{g}
{\bkt{\opfocksegal}{h'_1}_{\sphilb{H}},
\cdots,
\bkt{\opfocksegal}{h'_m}_{\sphilb{H}}},$$ the gradient $$\sum_{l=1}^m
\fun{\pd{g}{x_l}}
{\fun{g}
{\bkt{\opfocksegal}{h'_1}_{\sphilb{H}},
\cdots,
\bkt{\opfocksegal}{h'_m}_{\sphilb{H}}}}
\cdot
h'_l$$ agrees with $G(\opfocksegal)$ from (1), establishing the well-definedness of the gradient definition in coordinate representation.
\end{enumerate}
\end{prop}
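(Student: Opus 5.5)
The plan is to compute the Gâteaux derivative directly from the cylindrical representation of Definition \ref{expedition0011537} and then use the Riesz representation to obtain uniqueness and independence of the chosen representation.

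For (1), fix $\opfocksegal$ and $k$, and write $F(\opfocksegal)=f(\bkt{\opfocksegal}{h_1},\dots,\bkt{\opfocksegal}{h_n})$ with $f\in\fun{\dstrapiddec}{\fldreal^{n}}$. Then $t\mapsto F(\opfocksegal+tk)=f(x+ty)$, where $x_j=\bkt{\opfocksegal}{h_j}$ and $y_j=\bkt{k}{h_j}$. Since $f$ is smooth, the one-variable chain rule gives
$$dF(\opfocksegal;k)=\sum_{j=1}^n \partial_j f(x)\,\bkt{k}{h_j}.$$
This expression is linear in $k$. It is also bounded, by $\bigl(\sum_j\abs{\partial_j f(x)}\norm{h_j}\bigr)\norm{k}$, so it is continuous. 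The Riesz representation theorem then gives a unique $G(\opfocksegal)$ with $dF(\opfocksegal;k)=\bkt{G(\opfocksegal)}{k}$. Reading off the formula, $G(\opfocksegal)=\sum_j \cmpconj{\partial_j f(x)}\,h_j$, up to the convention for the conjugate-linear slot. For real $\sphilb{H}$, or real-valued $f$, this is exactly the expression in Definition \ref{expedition0011538}. I will state the inner-product convention explicitly so that the identification holds with no stray conjugates; in the real Hilbert space $\sphilb{M}$ used in this section, no conjugation issue arises.

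For (2), suppose $F$ has a second representation $g(\bkt{\opfocksegal}{h'_1},\dots,\bkt{\opfocksegal}{h'_m})$. The Gâteaux derivative $dF(\opfocksegal;k)$ is defined intrinsically, as a limit of difference quotients of the function $F$ itself. It therefore does not depend on the representation. Repeating the computation of (1) with $g$ and the $h'_l$ shows that $\sum_l \partial_l g(x')\,h'_l$ also represents $k\mapsto dF(\opfocksegal;k)$. By the uniqueness in Riesz's theorem, the two sums coincide with $G(\opfocksegal)$. The formula displayed in the statement has a typo: it writes $\partial g/\partial x_l$ evaluated at $g(\cdots)$. It should read $\partial_l g$ evaluated at the point $(\bkt{\opfocksegal}{h'_1},\dots,\bkt{\opfocksegal}{h'_m})$, and I will prove that corrected version.

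The only genuinely delicate point is the conjugation and convention issue in identifying the Riesz vector with the coordinate formula. Everything else is the chain rule together with Riesz uniqueness. My first step will be to fix real scalars, or the convention $\bkt{G}{k}$ linear in $k$, and to check the single-term case $n=1$ before summing.
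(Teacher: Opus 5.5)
Your proposal is correct and follows the paper's proof essentially step by step. In both, the chain rule gives $dF(\phi;k)=\sum_j \partial_j f(x)\,\langle k,h_j\rangle$, which is a bounded linear functional of $k$. Riesz then yields a unique representing vector, and that vector coincides with $\sum_j \partial_j f(x)\,h_j$ for every representation of $F$ because $dF$ is defined intrinsically from $F$ itself. Your extra remarks on the conjugation convention (harmless in the real space $\mathcal{M}$ used in that section) and on the misprinted argument of $\partial g/\partial x_l$ are sensible refinements that the paper leaves implicit.
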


\begin{proof}
(1): Fix any $\opfocksegal
\in \sphilb{H}$.
By definition, $$F(\opfocksegal + tk)
=
\fun{f}
{\bkt{\opfocksegal + tk}{h_1}_{\sphilb{H}},
\cdots,
\bkt{\opfocksegal + tk}{h_n}_{\sphilb{H}}}$$ holds.
Since each argument satisfies $$\bkt{\opfocksegal + tk}{h_j}_{\sphilb{H}}
=
\bkt{\opfocksegal}{h_j}_{\sphilb{H}}
+t \bkt{k}{h_j}_{\sphilb{H}},$$ the chain rule for multivariable derivatives gives $$dF(\opfocksegal;k)
=
\sum_{j=1}^n
\fun{\pd{f}{x_j}}
{\bkt{\opfocksegal}{h_1}_{\sphilb{H}},
\cdots,
\bkt{\opfocksegal}{h_n}_{\sphilb{H}}}
\cdot
\bkt{k}{h_j}_{\sphilb{H}}$$
This shows that $k \mapsto dF(\opfocksegal;k)$ can be written as $$dF(\opfocksegal;k)
=
\sum_{j=1}^n
a_j(\opfocksegal)
\bkt{k}{h_j}_{\sphilb{H}},$$ which is in particular a continuous linear functional on $\sphilb{H}$.
By the Riesz representation theorem, there exists a unique $G(\opfocksegal)
\in \sphilb{H}$ satisfying $$dF(\opfocksegal;k)
=
\bkt{G(\opfocksegal)}{k}_{\sphilb{H}}$$ for any $k
\in \sphilb{H}$.

(2): The notation from (1) shows that the right-hand side comes from the inner product on the finite-dimensional subspace $$\sphilb{V}
=
\splinspan \setone{h_1,\cdots,h_n}
\subset
\sphilb{H}.$$
Setting $u(\opfocksegal)
=
\sum_{j=1}^n
a_j(\opfocksegal) h_j
\in \sphilb{V}$, we get $dF(\opfocksegal;k)
=
\bkt{u(\opfocksegal)}{k}_{\sphilb{H}}$ for any $k
\in \sphilb{H}$.
By uniqueness in the Riesz representation theorem, $G(\opfocksegal)
=
u(\opfocksegal)$, which is the coordinate representation of the gradient.
By the uniqueness of $G(\opfocksegal)$, any other representation yields the same vector.
In particular, $\vagrad F(\opfocksegal)$ is independent of the representation of $F$ and the choice of $\setone{h_j}_{j=1}^n$.
\end{proof}

\subsection{Discussion under Infrared Regularization}\label{discussion-under-infrared-regularization-1}

\begin{prop}\label{expedition0011507}
The heat semigroup $\napiernum^{-t \physham_{\txtvanhowe,\kappa,\Lambda}}$ is a positivity-improving operator.
In particular, if the van Hove Hamiltonian has a ground state, it is unique.
\end{prop}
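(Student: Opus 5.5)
The plan follows the proof given for the $Q$-space version in Section~\ref{expedition0011370}, now written in the Ornstein--Uhlenbeck language, and argues by contradiction. Fix $t>0$ and suppose there are non-negative $F,G\in\fun{\lp^{2}}{\sphilb{M}_{-2},\msrsf{G}}$, neither zero, with $\bkt{F}{\napiernum^{-t \physham_{\txtvanhowe,\kappa,\Lambda}} G}=0$. The functional integral representation stated just before Definition~\ref{expedition0011537} rewrites this as
$$\sqfun{\prbexp_{\msrcal{G}}}{F(\prbou_0) G(\prbou_t) \napiernum^{-W_{\kappa,\Lambda}(t)}}=0.$$

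Under infrared and ultraviolet regularization, $W_{\kappa,\Lambda}(t)=\int_0^t \prbou_s(\omega \mathsf{m}_{\kappa,\Lambda})\opdmsr{s}$ is a real Gaussian random variable with finite variance. This follows from the covariance formula for $\prbou$ together with Lemma~\ref{expedition0011363}, or equivalently Proposition~\ref{expedition0011362}. So $W_{\kappa,\Lambda}(t)$ is finite $\msrcal{G}$-almost surely, and the weight $\napiernum^{-W_{\kappa,\Lambda}(t)}$ is strictly positive almost surely. The integrand is non-negative, so it must vanish almost surely, which forces $F(\prbou_0)G(\prbou_t)=0$ $\msrcal{G}$-a.s. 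Taking expectations and using the Markov property of $\prbou$ (the definition of $T_{\txtfr,t}$) gives
$$\bkt{F}{\napiernum^{-t \physham[L]_0} G}=\sqfun{\prbexp_{\msrcal{G}}}{F(\prbou_0)G(\prbou_t)}=0.$$
Item~(5) of the cited Fact says $\napiernum^{-t\physham[L]_0}G>0$ almost everywhere. Since $F\ge 0$ is nonzero, the inner product is strictly positive, a contradiction. Hence $\napiernum^{-t \physham_{\txtvanhowe,\kappa,\Lambda}}$ is positivity improving.

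Uniqueness then follows from the standard Perron--Frobenius (Faris) argument. If $\Psi$ is a ground state, then $|\Psi|$ is also one, because
$$\bkt{|\Psi|}{\napiernum^{-t H}|\Psi|}\ge|\bkt{\Psi}{\napiernum^{-tH}\Psi}|=\napiernum^{-tE_0}\norm{\Psi}^2.$$
Positivity improvement then makes $|\Psi|$ strictly positive. If the eigenspace had dimension at least two, it would contain a real eigenvector orthogonal to a strictly positive one. Its positive and negative parts would both be nonzero ground states, hence strictly positive, which is impossible. So the eigenspace is one-dimensional.

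The main obstacle is the step from $\bkt{F}{\napiernum^{-t\physham[L]_0}G}=0$ to the contradiction. One has to use item~(5) carefully: it gives strict positivity almost everywhere, not merely non-negativity, and $F\ne 0$ is needed to get a strictly positive inner product. One also has to check that $F(\prbou_0)G(\prbou_t)\napiernum^{-W}$ is integrable, so the functional integral identity applies. The plan for integrability is to use Hölder's inequality with the Gaussian moment $\sqfun{\prbexp}{\napiernum^{-pW}}<\infty$, which is finite precisely because the cutoffs keep the variance of $W_{\kappa,\Lambda}(t)$ finite.
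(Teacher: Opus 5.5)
Your proposal is correct and follows the paper's proof. You use the Feynman--Kac--Nelson representation, then the strict positivity of $e^{-W_{\kappa,\Lambda}(t)}$ to force $F(\xi_0)G(\xi_t)=0$ almost surely, and reach a contradiction with the positivity improvement of the free Ornstein--Uhlenbeck semigroup; the integrability check and the explicit Faris/Perron--Frobenius uniqueness argument are details the paper leaves implicit, so your version is simply more complete.
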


\begin{proof}
By the Feynman-Kac-Nelson formula, for any $F,G
\geq 0$, $$\bkt{F}
{\napiernum^{-t \physham_{\txtvanhowe,\kappa,\Lambda}}
G}
_{\fun{\lp^{2}}{\sphilb{M}_{-2},\msrsf{G}}}
=
\sqfun{\prbexp_{\msrcal{G}}}
{F(\prbou_0)
G(\prbou_t)
\napiernum^{-W_{\kappa,\Lambda}(t)}}$$ holds.

If the semigroup were not positivity-improving, for some $t,F,G$, the inner product would be $0$.
Since $\napiernum^{-W_{\kappa,\Lambda}(t)}
> 0$ in the Feynman-Kac-Nelson representation, we would need $F(\prbou_0) G(\prbou_t)
= 0$.
By the Feynman-Kac-Nelson formula for the free field, $$\bkt{F}
{\napiernum^{-t \physham_{\txtbsn,\txtfr}}
G}
_{\fun{\lp^{2}}{\sphilb{M}_{-2},\msrsf{G}}}
=
\sqfun{\prbexp_{\msrcal{G}}}
{F(\prbou_0)
G(\prbou_t)}
=
0$$ would hold.
Since the heat semigroup of the free field is positivity-improving, this is a contradiction.
Therefore, the heat semigroup of the van Hove Hamiltonian is positivity-improving.
\end{proof}

\begin{prop}\label{expedition0011501}
\begin{enumerate}
\item
The pair potential $W_{\kappa,\Lambda}(T)$ is a Gaussian random variable with mean $0$ and variance$$\sqfun{\prbvar_{\msrcal{G}}}
{W_{\kappa,\Lambda}(T)}
=
T
\norm{\omega^{\onehalf} \mathsf{m}_{\kappa,\Lambda}}
_{\sphilb{H}}^2
-\bkt{\mathsf{m}_{\kappa,\Lambda}}
{\rbk{1 - \napiernum^{-T \omega}}
\mathsf{m}_{\kappa,\Lambda}}
_{\sphilb{H}}.$$
In particular, the mean of $\napiernum^{-W_{\kappa,\Lambda(T)}}$ is
\begin{equation}
\begin{aligned}
&\sqfun{\prbexp_{\msrcal{G}}}
{\napiernum^{-W_{\kappa,\Lambda(T)}}}
=
\fnexp{\onehalf
\sqfun{\prbvar_{\msrcal{G}}}
{W_{\kappa,\Lambda(T)}}}
\\ 
&=
\fnexp{\frac{T}{2}
\norm{\omega^{\onehalf} \mathsf{m}_{\kappa,\Lambda}}
_{\sphilb{H}}^2
-\onehalf
\bkt{\mathsf{m}_{\kappa,\Lambda}}
{\rbk{1 - \napiernum^{-T \omega}}
\mathsf{m}_{\kappa,\Lambda}}
_{\sphilb{H}}}
\end{aligned}
\end{equation}

\item
In the notation of Definition \ref{expedition0011372},
\begin{equation}
\begin{aligned}
&\bkt{1}
{\napiernum^{-T \physham_{\txtvanhowe,\kappa,\Lambda}} 1}
_{\fun{\lp^{2}}{\sphilb{M}_{-2},\msrsf{G}}}
=
\sqfun{\prbexp_{\msrcal{G}}}
{\napiernum^{-W_{\kappa,\Lambda}(T)}} \\
&=
\fnexp{
\frac{T}{2}
\norm{\omega^{\onehalf} \mathsf{m}_{\kappa,\Lambda}}_{\sphilb{H}}^2
-\onehalf
\bkt{\mathsf{m}_{\kappa,\Lambda}}
{\rbk{1 - \napiernum^{-T \omega}} \mathsf{m}_{\kappa,\Lambda}}
_{\sphilb{H}}}
\end{aligned}
\end{equation}
is obtained.

\item
The norm of the function $\napiernum^{-T \physham_{\txtvanhowe,\kappa,\Lambda}} 1$ can be estimated as
\begin{equation}
\begin{aligned}
&\norm{\napiernum^{-T \physham_{\txtvanhowe,\kappa,\Lambda}} 1}
_{\fun{\lp^{2}}{\sphilb{M}_{-2},\msrsf{G}}}
\\ 
&=
\fnexp{
\frac{T}{2}
\norm{\omega^{\onehalf} \mathsf{m}_{\kappa,\Lambda}}_{\sphilb{H}}^2
-\oneoverfour
\bkt{\mathsf{m}_{\kappa,\Lambda}}
{\rbk{1 - \napiernum^{-2T \omega}} \mathsf{m}_{\kappa,\Lambda}}
_{\sphilb{H}}}
\end{aligned}
\end{equation}

\end{enumerate}
\end{prop}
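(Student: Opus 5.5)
The plan is to reduce all three statements to one Gaussian variance computation. That computation uses the covariance of the infinite-dimensional Ornstein--Uhlenbeck process from Subsection~\ref{expedition0011372} together with the double integral $I_T^{(1)}$ of Lemma~\ref{expedition0011363}. Part (2) then follows from part (1) and the Feynman--Kac--Nelson formula, and part (3) follows from part (2) evaluated at time $2T$.

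For (1), first note that under the infrared/ultraviolet cutoffs the test function $\omega \mathsf{m}_{\kappa,\Lambda}$ is real and lies in $\sphilb{M}$. Indeed, its $\sphilb{H}_M$-norm is $\norm{\omega^{-\onehalf}\omega \mathsf{m}_{\kappa,\Lambda}} = \norm{\omega^{\onehalf}\mathsf{m}_{\kappa,\Lambda}}$, which is finite because $\faftr{\varrho_{\kappa,\Lambda}}$ is supported in $\kappa \leq \abs{k} \leq \Lambda$. Hence $s \mapsto \prbou_s(\omega \mathsf{m}_{\kappa,\Lambda})$ is a centred Gaussian process with continuous paths and covariance $\onehalf \bkt{\omega \mathsf{m}_{\kappa,\Lambda}}{\napiernum^{-\abs{s-t}\omega}\omega \mathsf{m}_{\kappa,\Lambda}}_{\sphilb{H}}$.

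I will realise $W_{\kappa,\Lambda}(T)$ as the almost sure and $\lp^{2}(\msrcal{G})$ limit of Riemann sums. Each Riemann sum is a finite linear combination of jointly Gaussian variables, so it is a centred Gaussian. Convergence in $\lp^{2}$ holds because the covariance is continuous in $(s,t)$, and $\lp^{2}$-limits of centred Gaussians are centred Gaussians. This gives mean $0$. By Fubini, the variance is
$$\onehalf \int_0^T\!\!\int_0^T \bkt{\omega \mathsf{m}_{\kappa,\Lambda}}{\napiernum^{-\abs{t-s}\omega}\omega \mathsf{m}_{\kappa,\Lambda}}_{\sphilb{H}} \opdmsr{s}\opdmsr{t} = \onehalf \bkt{\omega \mathsf{m}_{\kappa,\Lambda}}{I_T^{(1)}(\omega)\,\omega \mathsf{m}_{\kappa,\Lambda}}_{\sphilb{H}},$$
where $I_T^{(1)}(\omega)$ is understood as a multiplication operator in momentum space. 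Substituting $I_T^{(1)}(\omega) = \frac{2T}{\omega} - \frac{2}{\omega^2}\rbk{1-\napiernum^{-T\omega}}$ gives exactly $T\norm{\omega^{\onehalf}\mathsf{m}_{\kappa,\Lambda}}^2 - \bkt{\mathsf{m}_{\kappa,\Lambda}}{\rbk{1-\napiernum^{-T\omega}}\mathsf{m}_{\kappa,\Lambda}}$. This is consistent with Proposition~\ref{expedition0011362}, with the factor $\onehalf$ from the Ornstein--Uhlenbeck covariance convention. The exponential moment then follows from the Gaussian identity $\prbexp\sqbk{\napiernum^{-X}} = \napiernum^{\onehalf \prbvar X}$ for a centred Gaussian $X$.

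For (2), apply the Feynman--Kac--Nelson representation stated just before Definition~\ref{expedition0011537} with $F = G = 1$. Then $F(\prbou_0)G(\prbou_t) = 1$, and the integrand reduces to $\napiernum^{-W_{\kappa,\Lambda}(T)}$; part (1) evaluates its expectation.

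For (3), use self-adjointness and the semigroup property:
$$\norm{\napiernum^{-T\physham_{\txtvanhowe,\kappa,\Lambda}}1}^2 = \bkt{1}{\napiernum^{-2T\physham_{\txtvanhowe,\kappa,\Lambda}}1}.$$
Part (2) at time $2T$ gives $\fnexp{T\norm{\omega^{\onehalf}\mathsf{m}_{\kappa,\Lambda}}^2 - \onehalf\bkt{\mathsf{m}_{\kappa,\Lambda}}{\rbk{1-\napiernum^{-2T\omega}}\mathsf{m}_{\kappa,\Lambda}}}$. Taking the square root produces exactly the stated coefficients $\frac{T}{2}$ and $\oneoverfour$.

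The only step needing real care is the first: proving that the pathwise time integral $W_{\kappa,\Lambda}(T)$ is Gaussian, and that expectation and the double time integral may be interchanged. I would handle this through the $\lp^{2}$ Riemann-sum approximation described above. The bound $\abs{\text{covariance}} \leq \onehalf\norm{\omega \mathsf{m}_{\kappa,\Lambda}}_{\sphilb{H}}^2$ on the compact square $\closedinterval{0}{T}^2$ justifies Fubini. The remaining steps are routine algebra with Lemma~\ref{expedition0011363}.
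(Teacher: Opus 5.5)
Your proposal is correct and matches the paper's proof step for step: the variance comes from the Ornstein--Uhlenbeck covariance together with $I_T^{(1)}$ of Lemma~\ref{expedition0011363}. Part (2) follows from the Feynman--Kac--Nelson formula with $F=G=1$, and part (3) from $\norm{\napiernum^{-T\physham_{\txtvanhowe,\kappa,\Lambda}}1}^2=\bkt{1}{\napiernum^{-2T\physham_{\txtvanhowe,\kappa,\Lambda}}1}$. Your $\lp^{2}$ Riemann-sum justification of Gaussianity and of the Fubini interchange is a detail the paper leaves implicit; it needs only mean-square continuity of the covariance, so you do not need to claim pathwise continuity of $s\mapsto\prbou_s(\omega\mathsf{m}_{\kappa,\Lambda})$, which is not guaranteed for a general element of $\sphilb{M}$.
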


As will be shown, the first term in the variance of the pair potential is the term corresponding to the ground state energy: \[\physgse(\physham_{\txtvanhowe,\kappa,\Lambda})
=
-\onehalf
\norm{\omega^{\onehalf} \mathsf{m}_{\kappa,\Lambda}}_{\sphilb{H}}^2.\]

\begin{proof}
(1): The mean follows by linearity: $$\sqfun{\prbexp_{\msrcal{G}}}
{W_{\kappa,\Lambda}(T)}
=
\int_{0}^{T}
\sqfun{\prbexp_{\msrcal{G}}}
{\prbou_s(\omega \mathsf{m}_{\kappa,\Lambda})}
\opdmsr{s}
=
0.$$
The variance is computed by definition as
\begin{equation}
\begin{aligned}
&\sqfun{\prbvar_{\msrcal{G}}}
{W_{\kappa,\Lambda}(T)}
=
\sqfun{\prbexp_{\msrcal{G}}}
{W_{\kappa,\Lambda}(T)^2}
\\ 
&=
\int_0^T
\int_0^T
\sqfun{\prbexp_{\msrcal{G}}}
{\prbou_s(\omega \mathsf{m}_{\kappa,\Lambda})
\prbou_t(\omega \mathsf{m}_{\kappa,\Lambda})}
\opdmsr{s}
\opdmsr{t}
\\ 
&=
\onehalf
\int_0^T
\int_0^T
\bkt{\omega \mathsf{m}_{\kappa,\Lambda}}
{\napiernum^{-\abs{t-s} \omega} \omega \mathsf{m}_{\kappa,\Lambda}}
_{\sphilb{H}}
\opdmsr{s}
\opdmsr{t}
\end{aligned}
\end{equation}
can be computed.
Applying the double integral from Lemma \ref{expedition0011363}, $$\sqfun{\prbvar_{\msrcal{G}}}
{W_{\kappa,\Lambda}(T)}
=
T
\norm{\omega^{\onehalf} \mathsf{m}_{\kappa,\Lambda}}
_{\sphilb{H}}^2
-\bkt{\mathsf{m}_{\kappa,\Lambda}}
{\rbk{1 - \napiernum^{-T \omega}}
\mathsf{m}_{\kappa,\Lambda}}
_{\sphilb{H}}$$ is obtained.
The expectation of the exponential $\napiernum^{-W_{\kappa,\Lambda}(T)}$ follows easily from the variance.

(2): Set $F
= G
= 1$ in the Feynman-Kac-Nelson formula and compute the resulting object. The conclusion then follows from (1).

(3): Note that $$\norm{\napiernum^{-T \physham_{\txtvanhowe,\kappa,\Lambda}} 1}
_{\fun{\lp^{2}}{\sphilb{M}_{-2},\msrsf{G}}}
=
\bkt{1}{\napiernum^{-2T \physham_{\txtvanhowe,\kappa,\Lambda}} 1}
_{\fun{\lp^{2}}{\sphilb{M}_{-2},\msrsf{G}}}^{\onehalf}$$ and apply (2).
\end{proof}

Here also, introduce the approximate ground state \(\Psi_T\) and the criterion \(\gamma_{1}(T)\).

To apply Theorem \ref{expedition0011361}, for a positive real number \(T\), introduce \[\Psi_T
=
\frac{\napiernum^{-T \physham_{\txtvanhowe,\kappa,\Lambda}} 1}
{\norm{\napiernum^{-T \physham_{\txtvanhowe,\kappa,\Lambda}} 1}
_{\fun{\lp^{2}}{\sphilb{M}_{-2},\msrsf{G}}}}\] and the criterion \[\gamma_1(T)
=
\bkt{1}{\Psi_T}
_{\fun{\lp^{2}}{\sphilb{M}_{-2},\msrsf{G}}}
=
\frac{\bkt{1}
{\napiernum^{-T \physham_{\txtvanhowe,\kappa,\Lambda}} 1}
_{\fun{\lp^{2}}{\sphilb{M}_{-2},\msrsf{G}}}}
{\norm{\napiernum^{-T \physham_{\txtvanhowe,\kappa,\Lambda}} 1}
_{\fun{\lp^{2}}{\sphilb{M}_{-2},\msrsf{G}}}}.\]

\begin{prop}
The criterion $\gamma_1(T)$ can be estimated as $$\gamma_1(T)
=
\fnexp{-\oneoverfour
\bkt{\mathsf{m}_{\kappa,\Lambda}}
{\rbk{1 - \napiernum^{-T \omega}}^2 \mathsf{m}_{\kappa,\Lambda}}
_{\sphilb{H}}}.$$
\end{prop}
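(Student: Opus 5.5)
The plan is to compute $\gamma_1(T)$ directly from the two closed-form expressions in Proposition~\ref{expedition0011501}, exactly as in the $Q$-space proof of Theorem~\ref{expedition0011360}. By definition,
\[
\gamma_1(T) = \frac{\bkt{1}{\napiernum^{-T \physham_{\txtvanhowe,\kappa,\Lambda}} 1}}{\norm{\napiernum^{-T \physham_{\txtvanhowe,\kappa,\Lambda}} 1}}.
\]
The numerator is $L(T) := \sqfun{\prbexp_{\msrcal{G}}}{\napiernum^{-W_{\kappa,\Lambda}(T)}}$, which Proposition~\ref{expedition0011501}(2) gives explicitly. By self-adjointness and the semigroup property, the denominator is $L(2T)^{\onehalf}$, as in Proposition~\ref{expedition0011501}(3). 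So $\log \gamma_1(T) = \log L(T) - \onehalf \log L(2T)$, and everything reduces to subtracting two explicit exponents.

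Carrying this out, the linear-in-$T$ terms cancel: $\frac{T}{2}\norm{\omega^{\onehalf}\mathsf{m}_{\kappa,\Lambda}}^2 - \onehalf \cdot \frac{2T}{2}\norm{\omega^{\onehalf}\mathsf{m}_{\kappa,\Lambda}}^2 = 0$. This cancellation is where the divergent ground-state energy drops out. What remains is
\[
-\onehalf \bkt{\mathsf{m}_{\kappa,\Lambda}}{\rbk{1-\napiernum^{-T\omega}}\mathsf{m}_{\kappa,\Lambda}} + \oneoverfour \bkt{\mathsf{m}_{\kappa,\Lambda}}{\rbk{1-\napiernum^{-2T\omega}}\mathsf{m}_{\kappa,\Lambda}}.
\]
Using functional calculus for the multiplication operator $\omega$, I combine these into $-\oneoverfour\bkt{\mathsf{m}_{\kappa,\Lambda}}{g(\omega)\mathsf{m}_{\kappa,\Lambda}}$ with
\[
g(\omega) = 2\rbk{1-\napiernum^{-T\omega}} - \rbk{1-\napiernum^{-2T\omega}} = 1 - 2\napiernum^{-T\omega} + \napiernum^{-2T\omega} = \rbk{1-\napiernum^{-T\omega}}^2,
\]
which is the claimed formula.

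The only point needing care, and the step I expect to be the main obstacle, is the exact form of the denominator. Proposition~\ref{expedition0011501}(3) must indeed equal $L(2T)^{\onehalf}$, with the $T$-coefficient $\frac{T}{2}$ coming from $\onehalf\cdot\frac{2T}{2}$, and the $\napiernum^{-2T\omega}$ term carrying the factor $\oneoverfour$. I will verify this by writing $\norm{\napiernum^{-T\physham}1}^2 = \bkt{1}{\napiernum^{-2T\physham}1}$, substituting $2T$ into the variance formula of Proposition~\ref{expedition0011501}(1), and halving the exponent. The interchange of expectation and time integral defining $W_{\kappa,\Lambda}(T)$ is justified by Gaussianity, and the finiteness of all terms by $\mathsf{m}_{\kappa,\Lambda}\in\sphilb{H}$ under the cutoffs.
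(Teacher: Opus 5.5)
Your proposal is correct and matches the paper's proof. Like you, the paper takes the numerator from Proposition~\ref{expedition0011501}(2) and the denominator $L(2T)^{1/2}$ from part (3), cancels the terms linear in $T$, and combines the rest via $2(1-e^{-T\omega})-(1-e^{-2T\omega})=(1-e^{-T\omega})^2$.
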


\begin{proof}
The numerator is computed by Proposition \ref{expedition0011501}(2) and the denominator by the same Proposition (3).
Collecting these,
\begin{equation}
\begin{aligned}
&\log \gamma_1(T)
\\ 
&=
\frac{T}{2} \norm{\mathsf{m}_{\kappa,\Lambda}}_{\sphilb{H}}^2
-\onehalf
\bkt{\mathsf{m}_{\kappa,\Lambda}}
{\rbk{1 - \napiernum^{-T \omega}} \mathsf{m}_{\kappa,\Lambda}}
_{\sphilb{H}}
\\
&\quad
-\frac{T}{2} \norm{\mathsf{m}_{\kappa,\Lambda}}_{\sphilb{H}}^2
+\oneoverfour
\bkt{\mathsf{m}_{\kappa,\Lambda}}
{\rbk{1 - \napiernum^{-2T \omega}} \mathsf{m}_{\kappa,\Lambda}}
_{\sphilb{H}}
\\ 
&=
-\oneoverfour
\bkt{\mathsf{m}_{\kappa,\Lambda}}
{\rbk{1 - \napiernum^{-T \omega}}^2 \mathsf{m}_{\kappa,\Lambda}}
_{\sphilb{H}}
\end{aligned}
\end{equation}
is obtained.
\end{proof}

We construct the ground state of the van Hove Hamiltonian.

\begin{thm}\label{expedition0011502}
Under infrared regularization, the van Hove Hamiltonian has a ground state.
In particular, the ground state energy is $$\physgse(\physham_{\txtvanhowe,\kappa,\Lambda})
=
-\onehalf
\norm{\omega^{\onehalf}
\mathsf{m}_{\kappa,\Lambda}}_{\sphilb{H}}^2,$$ and the ground state normalized in $\fun{\lp^{2}}{\sphilb{M}_{-2},\msrsf{G}}$ is $$\Psi_{\txtgs,\kappa,\Lambda}
=
\fnexp{-\onehalf
\norm{\mathsf{m}_{\kappa,\Lambda}}
_{\sphilb{H}}^2}
\napiernum^{-\prbou_0(\mathsf{m}_{\kappa,\Lambda})}.$$
In particular, the ground state is unique and strictly positive almost everywhere.
\end{thm}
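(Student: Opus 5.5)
The plan is to transplant the proof of Theorem~\ref{expedition0011360} to the Ornstein--Uhlenbeck setting, with Proposition~\ref{expedition0011501} and the preceding evaluation of the criterion $\gamma_1(T)$ supplying all Gaussian computations.

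\emph{Existence and energy.} First, the formula for $\gamma_1(T)$ gives, under infrared regularization $\mathsf{m}_{\kappa,\Lambda} \in \sphilb{H}$ and by monotone convergence in the spectral variable,
$$\lim_{T\to\infty}\gamma_1(T) = \fnexp{-\oneoverfour \norm{\mathsf{m}_{\kappa,\Lambda}}_{\sphilb{H}}^2} > 0 .$$
Theorem~\ref{expedition0011361}, applied with $f=1$, then yields that the ground state energy is an eigenvalue and that $P_0 1 \neq 0$. Consequently $1$ lies in the set $\sphilb{E}(\physham_{\txtvanhowe,\kappa,\Lambda})$, and Theorem~\ref{expedition0004996} together with Proposition~\ref{expedition0011501}(2) gives
$$\physgse = -\lim_{T\to\infty} \frac{1}{T}\log \sqfun{\prbexp_{\msrcal{G}}}{\napiernum^{-W_{\kappa,\Lambda}(T)}} = -\onehalf \norm{\omega^{\onehalf}\mathsf{m}_{\kappa,\Lambda}}_{\sphilb{H}}^2 ,$$
since the bracket term $\bkt{\mathsf{m}_{\kappa,\Lambda}}{\rbk{1-\napiernum^{-T\omega}}\mathsf{m}_{\kappa,\Lambda}}_{\sphilb{H}}$ stays bounded in $T$. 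As in the $Q$-space proof, the relation $\napiernum^{-s\physham}\Psi_T = \rbk{\norm{\napiernum^{-(T+s)\physham}1}/\norm{\napiernum^{-T\physham}1}}\Psi_{T+s}$ shows that the limit $\Psi_\infty = \lim_{T\to\infty}\Psi_T$ (equivalently, the normalized $P_01$) satisfies $\napiernum^{-s\physham}\Psi_\infty = \napiernum^{-s\physgse}\Psi_\infty$.

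\emph{Identification of the ground state.} Next I test $\Psi_T$ against the total family $F = \napiernum^{\prbou_0(f)}$ with $f \in \sphilb{M}$. By the Feynman--Kac--Nelson formula the numerator is $\sqfun{\prbexp_{\msrcal{G}}}{\napiernum^{\prbou_0(f) - W_{\kappa,\Lambda}(T)}}$, which is the exponential of one half of the variance of the Gaussian variable $\prbou_0(f) - W_{\kappa,\Lambda}(T)$. This variance splits into three pieces:
\begin{itemize}
\item[(i)] the diagonal term $\onehalf\norm{f}^2$;
\item[(ii)] the variance of $W_{\kappa,\Lambda}(T)$, taken from Proposition~\ref{expedition0011501}(1);
\item[(iii)] the cross term $-2\int_0^T \onehalf \bkt{f}{\napiernum^{-s\omega}\omega\mathsf{m}_{\kappa,\Lambda}}_{\sphilb{H}}\opdmsr{s} = -\bkt{f}{\rbk{1-\napiernum^{-T\omega}}\mathsf{m}_{\kappa,\Lambda}}_{\sphilb{H}}$.
\end{itemize}
Dividing by the norm from Proposition~\ref{expedition0011501}(3), the linearly divergent $T$-terms cancel, and letting $T\to\infty$ gives
$$\bkt{\napiernum^{\prbou_0(f)}}{\Psi_\infty} = \fnexp{\oneoverfour\norm{f}^2 - \onehalf\bkt{f}{\mathsf{m}_{\kappa,\Lambda}}_{\sphilb{H}} - \oneoverfour\norm{\mathsf{m}_{\kappa,\Lambda}}_{\sphilb{H}}^2}.$$
A direct Gaussian computation shows that this coincides with $\bkt{\napiernum^{\prbou_0(f)}}{c\,\napiernum^{-\prbou_0(\mathsf{m}_{\kappa,\Lambda})}}$ for a suitable constant $c$. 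Totality of exponentials then gives $\Psi_\infty \propto \napiernum^{-\prbou_0(\mathsf{m}_{\kappa,\Lambda})}$. For the normalization, $\prbou_0(\mathsf{m}_{\kappa,\Lambda})$ has variance $\onehalf\norm{\mathsf{m}_{\kappa,\Lambda}}^2$, so
$$\sqfun{\prbexp_{\msrcal{G}}}{\napiernum^{-2\prbou_0(\mathsf{m}_{\kappa,\Lambda})}} = \napiernum^{\norm{\mathsf{m}_{\kappa,\Lambda}}^2},$$
and the normalizing constant is therefore $\napiernum^{-\onehalf\norm{\mathsf{m}_{\kappa,\Lambda}}^2}$, as stated.

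\emph{Positivity and uniqueness.} Strict positivity of $\Psi_{\txtgs,\kappa,\Lambda}$ is evident from the exponential form. Uniqueness follows from Proposition~\ref{expedition0011507}: the semigroup is positivity-improving, so the Perron--Frobenius argument makes the ground state simple.

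\emph{Main obstacle.} I expect the delicate step to be justifying the $T\to\infty$ limits together with the totality of $\set{\napiernum^{\prbou_0(f)}}{f\in\sphilb{M}}$ in $\fun{\lp^{2}}{\sphilb{M}_{-2},\msrsf{G}}$, since $\prbou_0(f)$ for $f \in \sphilb{H}$ rather than $\sphilb{M}$ must be defined as an $\lp^{2}$-limit. The remaining steps are bookkeeping of the $\onehalf$-covariance convention.
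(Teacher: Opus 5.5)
Your proposal is correct and follows the paper's proof essentially step by step. The shared route is: the criterion $\gamma_1(T)$ with Theorem~\ref{expedition0011361} for existence, then Theorem~\ref{expedition0004996} with Proposition~\ref{expedition0011501} for the energy. After that comes the limit $\Psi_\infty$ tested against $\napiernum^{\prbou_0(f)}$ for identification, Gaussian normalization, and Proposition~\ref{expedition0011507} for uniqueness.

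You are slightly more careful than the paper in two spots. First, you justify $1 \in \sphilb{E}(\physham_{\txtvanhowe,\kappa,\Lambda})$ via $P_0 1 \neq 0$. Second, by leaving the constant $c$ free you avoid the paper's intermediate claim that $\Psi_\infty$ matches $\fnexp{-\oneoverfour\norm{\mathsf{m}_{\kappa,\Lambda}}_{\sphilb{H}}^2}\napiernum^{-\prbou_0(\mathsf{m}_{\kappa,\Lambda})}$. The Gaussian computation actually gives $c = \fnexp{-\onehalf\norm{\mathsf{m}_{\kappa,\Lambda}}_{\sphilb{H}}^2}$, i.e.\ $\Psi_\infty$ is already the normalized ground state.
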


\begin{proof}
(Existence and uniqueness of ground state): Argue as in Theorem \ref{expedition0011360} in the $Q$-representation.
In particular, $$\lim_{T \to \infty}
\gamma_1(T)
=
\fnexp{-\oneoverfour
\norm{\mathsf{m}_{\kappa,\Lambda}}_{\sphilb{H}}^2}
>
0$$ shows that a ground state exists.
Uniqueness follows from Proposition \ref{expedition0011507}.

(Ground state energy evaluation): By the preceding argument, $1
\in
\fun{\lp^{2}}{\sphilb{M}_{-2},\msrsf{G}}$ satisfies the hypothesis of Theorem \ref{expedition0004996}.
By Proposition \ref{expedition0011501}, the ground state energy is
\begin{equation}
\begin{aligned}
&\physgse(\physham_{\txtvanhowe,\kappa,\Lambda})
=
-\lim_{T \to \infty}
\frac{\log
\bkt{1}{\napiernum^{-T \physham_{\txtvanhowe,\kappa,\Lambda}} 1}
_{\fun{\lp^{2}}{\sphilb{M}_{-2},\msrsf{G}}}}
{T}
\\ 
&=
-\lim_{T \to \infty}
\rbk{
\onehalf
\norm{\omega^{\onehalf} \mathsf{m}_{\kappa,\Lambda}}
_{\sphilb{H}}^{2}
-\onehalf
\frac{1}{T}
\bkt{\mathsf{m}_{\kappa,\Lambda}}
{\rbk{1 - \napiernum^{-T \omega}} \mathsf{m}_{\kappa,\Lambda}}
_{\sphilb{H}}}
\\ 
&=
-\onehalf
\norm{\omega^{\onehalf} \mathsf{m}_{\kappa,\Lambda}}
_{\sphilb{H}}^{2}
\end{aligned}
\end{equation}
is obtained.

(Construction of ground state): As an application of the ground state energy evaluation argument, $\lim_{T \to \infty}
\Psi_T
=
\Psi_{\infty}$ exists.
In particular, $\Psi_{\infty}$ is a (non-normalized) ground state.

We prove this below.
For any $s
> 0$, $$\napiernum^{-s \physham} \Psi_T
=
\frac{\napiernum^{-(T+s) \physham} 1}
{\norm{\napiernum^{-T \physham} 1}
_{\fun{\lp^{2}}{\sphilb{M}_{-2},\msrsf{G}}}}
=
\frac{\norm{\napiernum^{-(T+s) \physham} 1}
_{\fun{\lp^{2}}{\sphilb{M}_{-2},\msrsf{G}}}}
{\norm{\napiernum^{-T \physham} 1}
_{\fun{\lp^{2}}{\sphilb{M}_{-2},\msrsf{G}}}}
\Psi_{T+s}$$ holds.
By the ground state energy evaluation, the coefficient converges to $\napiernum^{-s
\physgse(\physham_{\txtvanhowe,\kappa,\Lambda})}$.
Therefore, in the limit $T
\to \infty$, $$\napiernum^{-s \physham_{\txtvanhowe,\kappa,\Lambda}}
\Psi_{\infty}
=
\napiernum^{-s \physgse(\physham_{\txtvanhowe,\kappa,\Lambda})}
\Psi_{\infty}$$ is obtained.
This shows $\Psi_{\infty}$ is a ground state of $\physham$.

(Expression of the ground state): Use $\Psi_T$ and its limit.
By density, it suffices to examine $\bkt{F}{\Psi_{\infty}}$ for the time-$0$ projection $F
= \napiernum^{\prbou_0(f)}$.
The denominator of $\Psi_T$ before taking the limit has already been computed by Proposition \ref{expedition0011501}(2).

(Numerator evaluation): Denoting the numerator by $N_T$,
\begin{equation}
\begin{aligned}
&N_T(f)
=
\bkt{F}
{\napiernum^{-T \physham} 1}_{\fun{\lp^{2}}{\sphilb{M}_{-2},\msrsf{G}}}
\\ 
&=
\sqfun{\prbexp_{\msrcal{G}}}
{\napiernum^{\prbou_0(f)
-W_{\kappa,\Lambda}(T)}} \\
&=
\fnexp{\onehalf
\norm{\prbou_0(f)
-W_{\kappa,\Lambda}(T)}^2}
\\ 
&=
\fnexp{\onehalf \norm{\prbou_0(f)}^2
+\onehalf \norm{W_{\kappa,\Lambda}(T)}^2
-\bkt{\prbou_0(f)}{W_{\kappa,\Lambda}(T)}}
\end{aligned}
\end{equation}
can be written.
The first norm gives $$\onehalf
\norm{\prbou_0(f)}^2
=
\oneoverfour
\norm{f}_{\sphilb{H}}^2.$$
The second norm, by Proposition \ref{expedition0011501}, is
\begin{equation}
\begin{aligned}
&\onehalf
\norm{W_{\kappa,\Lambda}(T)}^2
=
-T \physgse
-\onehalf
\bkt{\mathsf{m}_{\kappa,\Lambda}}
{\rbk{1 - \napiernum^{-T \omega}} \mathsf{m}_{\kappa,\Lambda}}
_{\sphilb{H}}.
\end{aligned}
\end{equation}
The third inner product is
\begin{equation}
\begin{aligned}
&-\onehalf
\bkt{\prbou_0(f)}{W_{\kappa,\Lambda}(T)}
\\ 
&=
-\onehalf
\int_{0}^{T}
\bkt{f}
{\napiernum^{-\abs{t} \omega} \cdot \omega \mathsf{m}_{\kappa,\Lambda}}_{\sphilb{H}}
\opdmsr{t} \\
&=
-\onehalf
\bkt{f}
{\rbk{1 - \napiernum^{-T \omega}}
\mathsf{m}_{\kappa,\Lambda}}_{\sphilb{H}}.
\end{aligned}
\end{equation}

Collecting the terms,
\begin{equation}
\begin{aligned}
&N_T(f)
=
\fnexp{\oneoverfour
\norm{f}_{\sphilb{H}}^2
-\onehalf \bkt{f}
{\rbk{1 - \napiernum^{-T \omega}}
\mathsf{m}_{\kappa,\Lambda}}_{\sphilb{H}}} \\
&\qquad\times
\fnexp{-T \physgse
-\onehalf
\bkt{\mathsf{m}_{\kappa,\Lambda}}
{\rbk{1 - \napiernum^{-T \omega}} \mathsf{m}_{\kappa,\Lambda}}
_{\sphilb{H}}}
\end{aligned}
\end{equation}
is obtained.

(Normalized ground state): Substituting the above results into $\bkt{F}
{\Psi_T}_{\fun{\lp^{2}}{\sphilb{M}_{-2},\msrsf{G}}}$,
\begin{equation}
\begin{aligned}
&\bkt{\napiernum^{\prbou_0(f)}}
{\Psi_T}
_{\fun{\lp^{2}}{\sphilb{M}_{-2},\msrsf{G}}} \\
&=
\fnexp{\oneoverfour \norm{f}_{\sphilb{H}}^2
-\onehalf \bkt{f}{\rbk{1 - \napiernum^{-T \omega}} \mathsf{m}
_{\kappa,\Lambda}}_{\sphilb{H}}} \\
&\qquad\times
\fnexp{-\oneoverfour
\norm{\sqrt{1 - \napiernum^{-T \omega}} \mathsf{m}_{\kappa,\Lambda}}
_{\sphilb{H}}^2}
\end{aligned}
\end{equation}
is obtained.
Taking the limit $T
\to \infty$, $$\bkt{\napiernum^{\prbou_0(f)}}
{\Psi_{\infty}}
_{\fun{\lp^{2}}{\sphilb{M}_{-2},\msrsf{G}}}
=
\fnexp{\oneoverfour \norm{f}_{\sphilb{H}}^2
-\onehalf \bkt{f}{\mathsf{m}_{\kappa,\Lambda}}_{\sphilb{H}}
-\oneoverfour \norm{\mathsf{m}_{\kappa,\Lambda}}_{\sphilb{H}}^2}$$ is obtained.
The right-hand side coincides with $\bkt{\napiernum^{\prbou_0(f)}}
{\Phi}
_{\fun{\lp^{2}}{\sphilb{M}_{-2},\msrsf{G}}}$ for the function $$\Phi(\prbou)
=
\fnexp{-\oneoverfour
\norm{\mathsf{m}_{\kappa,\Lambda}}_{\sphilb{H}}^2}
\napiernum^{-\prbou_0(\mathsf{m}_{\kappa,\Lambda})}$$ in $\fun{\lp^{2}}{\sphilb{M}_{-2},\msrsf{G}}$.
Therefore $\Phi$ is the expectation-normalized version of $\Psi_{\infty}$.
For a ground state, the norm in $\fun{\lp^{2}}{\sphilb{M}_{-2},\msrsf{G}}$ must be normalized.
Consider the function $\tilde{\Psi}(\prbou)
= \napiernum^{-\prbou_0(\mathsf{m}_{\kappa,\Lambda})}$ with the constant factor removed.
Its $\fun{\lp^{2}}{\sphilb{M}_{-2},\msrsf{G}}$-norm is
\begin{equation}
\begin{aligned}
&\norm{\tilde{\Psi}}
_{\fun{\lp^{2}}{\sphilb{M}_{-2},\msrsf{G}}}
=
\sqfun{\prbexp_{\msrcal{G}}}
{\napiernum^{-2 \prbou_0(\mathsf{m}_{\kappa,\Lambda})}}^{\onehalf}
\\ 
&=
\fnexp{\onehalf
\cdot
4
\prbvar
\prbou_0(\mathsf{m}_{\kappa,\Lambda})}^{\onehalf}
=
\fnexp{\onehalf
\norm{\mathsf{m}_{\kappa,\Lambda}}_{\sphilb{H}}^2}
\end{aligned}
\end{equation}

In particular, the ground state normalized in $\fun{\lp^{2}}{\sphilb{M}_{-2},\msrsf{G}}$ is $$\Psi_{\txtgs,\kappa,\Lambda}(\prbou)
=
\fnexp{-\onehalf
\norm{\mathsf{m}_{\kappa,\Lambda}}
_{\sphilb{H}}^2}
\napiernum^{-\prbou_0(\mathsf{m}_{\kappa,\Lambda})}.$$
\end{proof}

Here too, we show the non-existence of a ground state of the van Hove Hamiltonian under infrared-singular conditions.

\begin{thm}
Under infrared-singular conditions, there is no ground state in $\fun{\lp^{2}}{\sphilb{M}_{-2},\msrsf{G}}$.
\end{thm}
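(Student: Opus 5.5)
We mirror the proof of Theorem~\ref{expedition0011395} in the $Q$-representation, now working in $\fun{\lp^{2}}{\sphilb{M}_{-2},\msrsf{G}}$ with the Ornstein--Uhlenbeck process. Let $\varrho_{\Lambda}$ be the source with the infrared cutoff removed, so that under the infrared-singular condition $\norm{\mathsf{m}_{\Lambda}}_{\sphilb{H}} = \infty$, while $\norm{\omega^{\onehalf}\mathsf{m}_{\Lambda}}_{\sphilb{H}}$ stays finite, since the ultraviolet cutoff is still present and the infrared integrand $\abs{k}^{-2}$ is integrable in $d=3$. Hence the Hamiltonian is still well defined with finite ground state energy. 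The plan rests on two facts. First, Theorem~\ref{expedition0011361} characterises the existence of a ground state via $\lim_{T\to\infty}\gamma_f(T)>0$. Second, positivity improvement (Proposition~\ref{expedition0011507}, whose Feynman--Kac--Nelson proof does not use $\kappa>0$) shows that any ground state is strictly positive almost everywhere. Then $\bkt{1}{P_0 1} > 0$ whenever a ground state exists, so it suffices to show $\lim_{T\to\infty}\gamma_1(T)=0$.

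The first step is to recompute $\gamma_1(T)$ with $\kappa = 0$ but finite $T$. Proposition~\ref{expedition0011501}(1)--(3) remains valid. The variance of $W_{\Lambda}(T)$ involves only $\omega\mathsf{m}_{\Lambda}$ and $(1-\napiernum^{-T\omega})\mathsf{m}_{\Lambda}$. The latter is square integrable for each finite $T$, because $1-\napiernum^{-T\abs{k}}\le T\abs{k}$ removes the infrared singularity. So
$$\gamma_1(T)=\fnexp{-\oneoverfour\norm{\rbk{1-\napiernum^{-T\omega}}\mathsf{m}_{\Lambda}}_{\sphilb{H}}^2}$$
holds exactly, with a finite right-hand side for every $T$.

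The second step is the limit $T\to\infty$. The integrand $\abs{1-\napiernum^{-T\omega(k)}}^2\abs{\faftr{\mathsf{m}_{\Lambda}}(k)}^2$ increases monotonically in $T$ to $\abs{\faftr{\mathsf{m}_{\Lambda}}(k)}^2$. By the monotone convergence theorem the exponent therefore tends to $-\oneoverfour\norm{\mathsf{m}_{\Lambda}}_{\sphilb{H}}^2=-\infty$, and so $\gamma_1(T)\to 0$. Now suppose a ground state $\Psi_0$ existed. By positivity improvement we may take $\Psi_0>0$ almost everywhere, which gives $\bkt{1}{P_0 1}\ge\abs{\bkt{1}{\Psi_0}}^2/\norm{\Psi_0}^2>0$. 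Theorem~\ref{expedition0011361} would then force $\lim\gamma_1(T)>0$, a contradiction.

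The main obstacle is justifying the Perron--Frobenius step without the infrared cutoff. It needs the semigroup $\napiernum^{-t\physham_{\txtvanhowe,\Lambda}}$ to be given by the Feynman--Kac--Nelson formula with $\napiernum^{-W_{\Lambda}(t)}$ a.s.\ finite and positive, which holds because $W_{\Lambda}(t)$ is Gaussian with finite variance. It also needs the uniqueness-and-positivity conclusion to apply to an eigenvector in $\lp^2$. I would check this first, noting that the free semigroup is positivity improving and that the weight only multiplies by a strictly positive factor, exactly as in Proposition~\ref{expedition0011507}.
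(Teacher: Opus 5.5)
Your proposal is correct and follows the paper's argument: the paper likewise evaluates $\gamma_1(T)=\fnexp{-\oneoverfour\norm{(1-\napiernum^{-T\omega})\mathsf{m}_{\Lambda}}_{\sphilb{H}}^2}$, lets $T\to\infty$ to get $\fnexp{-\oneoverfour\norm{\mathsf{m}_{\Lambda}}_{\sphilb{H}}^2}=0$, and concludes via Theorem~\ref{expedition0011361}. Your additions make explicit what the paper's one-line claim ``this is equivalent to the non-existence of a ground state'' silently relies on: finiteness of each term at $\kappa=0$ for finite $T$, monotone convergence for the limit, and above all the positivity-improving step guaranteeing $\bkt{1}{P_0 1}>0$ for any putative ground state.
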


\begin{proof}
Imposing the infrared-singular condition, the criterion satisfies $$\lim_{T \to \infty}
\gamma_1(T)
=
\fnexp{-\oneoverfour
\norm{\mathsf{m}_{\Lambda}}_{\sphilb{H}}^2}
=
0$$ by the preceding argument.
This is equivalent to the non-existence of a ground state.
\end{proof}

\subsection{Preparation toward Removal of Infrared/Ultraviolet Cutoffs}\label{preparation-toward-removal-of-infraredultraviolet-cutoffs}

Based on the results from operator theory and operator algebras, the existence of a ground state can be proved even under infrared-singular conditions with appropriate treatment, so we investigate what the appropriate probabilistic treatment should be. More specifically, we proceed as follows.

\begin{itemize}

\item
  We consider the ground state transformation with cutoffs and discuss how the Hamiltonian and ground state change.
\item
  We examine the probabilistic Markov property of the objects transformed by the ground state transformation.
\item
  We examine the existence of the objects with infrared and ultraviolet cutoffs removed and their properties, in particular the Markov property of the semigroup formed by the generator obtained by transforming the Hamiltonian.
\item
  We discuss whether the objects with cutoffs, in particular the generator, semigroup, measure, and ground state, converge to the objects with cutoffs removed in an appropriate topology.
\end{itemize}

Denote the non-negative operator combining the van Hove Hamiltonian and ground state energy by \[\tilde{\physham}_{\txtvanhowe,\txtrenormalization,\kappa,\Lambda}
=
\physham_{\txtvanhowe,\txtrenormalization,\kappa,\Lambda}
-\physgse(\physham_{\txtvanhowe,\txtrenormalization,\kappa,\Lambda}).\]

Using the ground state \(\Psi_{\txtgs,\kappa,\Lambda}\) from Theorem \ref{expedition0011502}, define the measure on \(\sphilb{M}_{-2}\) by \[\opdmsr{\msrsf{G}_{\txtgs,\kappa,\Lambda}(\opfocksegal)}
=
\Psi_{\txtgs,\kappa,\Lambda}(\opfocksegal)^2
\opdmsr{\msrsf{G}(\opfocksegal)}.\] Equivalently, \(\Psi_{\txtgs,\kappa,\Lambda}(\opfocksegal)^2\) is the Radon-Nikodym derivative. Furthermore, define the unitary transformation \(\mathsf{U}_{\kappa,\Lambda}\) by \[\mathsf{U}_{\kappa,\Lambda}
\colon \fun{\lp^{2}}{\sphilb{M}_{-2},\msrsf{G}_{\txtgs,\kappa,\Lambda}}
\to \fun{\lp^{2}}{\sphilb{M}_{-2},\msrsf{G}};
\quad
\mathsf{U}_{\kappa,\Lambda} F
=
\Psi_{\txtgs,\kappa,\Lambda} F.\] This is the static ground state distribution and the stationary distribution of the path measure that will appear later.

\begin{lem}\label{expedition0011548}
The mean of the Gaussian measure $\msrsf{G}_{\txtgs,\kappa,\Lambda}$ is $$\sqfun{\prbexp_{\msrsf{G}_{\txtgs,\kappa,\Lambda}}}
{\prbou_0(f)}
=
-\bkt{f}{\mathsf{m}_{\kappa,\Lambda}}_{\sphilb{H}},$$ and its covariance is $$\sqfun{\prbcov_{\msrsf{G}_{\txtgs,\kappa,\Lambda}}}
{\prbou_0(f),\prbou_0(g)}
=
\onehalf
\bkt{f}{g}
_{\sphilb{H}}.$$
\end{lem}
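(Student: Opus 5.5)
The plan is to compute the characteristic functional of $\msrsf{G}_{\txtgs,\kappa,\Lambda}$ directly and read off the mean and covariance. By definition, for any real $f \in \sphilb{M}$ and $t \in \fldreal$,
\[
\sqfun{\prbexp_{\msrsf{G}_{\txtgs,\kappa,\Lambda}}}{\napiernum^{\imunit t \prbou_0(f)}}
=
\sqfun{\prbexp_{\msrsf{G}}}{\Psi_{\txtgs,\kappa,\Lambda}^2 \napiernum^{\imunit t \prbou_0(f)}}
=
C_{\kappa,\Lambda}\,
\sqfun{\prbexp_{\msrsf{G}}}{\napiernum^{\prbou_0(\imunit t f - 2\mathsf{m}_{\kappa,\Lambda})}},
\]
where $C_{\kappa,\Lambda}$ is the normalizing constant from Theorem~\ref{expedition0011502}. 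Since under $\msrsf{G}$ the variable $\prbou_0(h)$ is centered Gaussian with $\prbcov(\prbou_0(f),\prbou_0(g)) = \onehalf\bkt{f}{g}_{\sphilb{H}}$, the complexified Gaussian identity $\sqfun{\prbexp}{\napiernum^{Z}} = \napiernum^{\onehalf \prbexp[Z^2]}$ applies, by analytic continuation in $t$. This gives the exponent
\[
\onehalf\cdot\onehalf\bkt{\imunit t f - 2\mathsf{m}_{\kappa,\Lambda}}{\imunit t f - 2\mathsf{m}_{\kappa,\Lambda}}
=
-\frac{t^2}{4}\norm{f}^2 - \imunit t \bkt{f}{\mathsf{m}_{\kappa,\Lambda}} + \norm{\mathsf{m}_{\kappa,\Lambda}}^2 .
\]

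The constant term $\norm{\mathsf{m}_{\kappa,\Lambda}}^2$ must be cancelled exactly by $C_{\kappa,\Lambda}$; this is the normalization $\sqfun{\prbexp_{\msrsf{G}}}{\Psi_{\txtgs,\kappa,\Lambda}^2}=1$, which is the $t=0$ case. The main obstacle is here: the paper's two expressions for $\Psi_{\txtgs,\kappa,\Lambda}$ carry different constants ($\onehalf$ versus $\oneoverfour$) and treat the variance of $\prbou_0$ inconsistently. I would therefore not use the stated prefactor. Instead I would define the constant by the requirement that $\msrsf{G}_{\txtgs,\kappa,\Lambda}$ be a probability measure, and note that with covariance $\onehalf\bkt{\cdot}{\cdot}$ this yields $\Psi_{\txtgs,\kappa,\Lambda}^2 = \napiernum^{-\norm{\mathsf{m}_{\kappa,\Lambda}}^2}\napiernum^{-2\prbou_0(\mathsf{m}_{\kappa,\Lambda})}$. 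The argument itself is independent of the prefactor, so the normalization settles it.

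Once normalized, the characteristic function is
\[
\exp\!\rbk{-\imunit t\bkt{f}{\mathsf{m}_{\kappa,\Lambda}}_{\sphilb{H}} - \frac{t^2}{4}\norm{f}_{\sphilb{H}}^2},
\]
so $\prbou_0(f)$ is Gaussian with mean $-\bkt{f}{\mathsf{m}_{\kappa,\Lambda}}_{\sphilb{H}}$ and variance $\onehalf\norm{f}^2$. Polarization, or equivalently repeating the computation with $t_1 f + t_2 g$, gives joint Gaussianity and the covariance $\onehalf\bkt{f}{g}_{\sphilb{H}}$, since the cross term in the exponent is unchanged by the shift.

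Two technical points remain. Infrared regularization guarantees $\mathsf{m}_{\kappa,\Lambda}\in\sphilb{H}$, so $\prbou_0(\mathsf{m}_{\kappa,\Lambda})$ is well defined and $\napiernum^{-2\prbou_0(\mathsf{m}_{\kappa,\Lambda})}$ is integrable. Uniqueness of a measure with a given characteristic function then identifies $\msrsf{G}_{\txtgs,\kappa,\Lambda}$ as the shifted Gaussian, consistent with Proposition~\ref{expedition0011492}(2) in the $Q$-space picture. This is the Cameron--Martin shift.
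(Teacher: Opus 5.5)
Your argument is correct, but it takes a different and more complete route than the paper. The paper computes only the mean. It uses the first-moment identity $\mathbb{E}[Xe^{tY}]=t\,\mathrm{Cov}(X,Y)\,e^{\frac12 t^2\mathrm{Var}(Y)}$ for the centered jointly Gaussian pair $X=\xi_0(f)$, $Y=\xi_0(\mathsf{m}_{\kappa,\Lambda})$ at $t=-2$, combines it with the prefactor $e^{-\|\mathsf{m}_{\kappa,\Lambda}\|^2}$ from $\Psi_{\mathrm{gs},\kappa,\Lambda}^2$, and obtains $-2\cdot\frac12\langle f,\mathsf{m}_{\kappa,\Lambda}\rangle$. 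For the covariance it simply asserts that the density is a translation of the Gaussian mean and so leaves the variance unchanged, and it takes the Gaussianity of the tilted measure for granted.

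You instead compute the full joint characteristic function of the tilted measure through $\mathbb{E}[e^{Z}]=e^{\frac12\mathbb{E}[Z^2]}$ for complex linear combinations. This gives Gaussianity, mean and covariance in one stroke, so you actually prove the Cameron--Martin statement that the paper only invokes. The paper's route is shorter and needs no analytic continuation; yours closes the gap it leaves for the covariance.

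Two small remarks. First, the quadratic form in your exponent must be the complex-\emph{bilinear} extension of the covariance, not the sesquilinear $\langle\cdot,\cdot\rangle$, which would flip the sign of the $t^2$ term. Your displayed result is the bilinear one, so this is only notation.

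Second, the paper's constants are not actually inconsistent. With $\mathrm{Var}\,\xi_0(\mathsf{m}_{\kappa,\Lambda})=\frac12\|\mathsf{m}_{\kappa,\Lambda}\|^2$, the stated $\Psi_{\mathrm{gs},\kappa,\Lambda}=e^{-\frac12\|\mathsf{m}_{\kappa,\Lambda}\|^2}e^{-\xi_0(\mathsf{m}_{\kappa,\Lambda})}$ is exactly $L^2$-normalized. The $\frac14$ appears in two places: in the expectation-normalized $\Phi$ inside the proof of the ground-state theorem, and in the $Q$-space formula, where it multiplies a Wick-ordered exponential and gives the same function. So your normalize-by-fiat constant coincides with the paper's.
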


\begin{proof}
(Mean): Compute using the Radon-Nikodym derivative.
For the original measure $\msrsf{G}$, $$\sqfun{\prbcov_{\msrsf{G}}}
{\prbou_0(f),\prbou_0(\mathsf{m}_{\kappa,\Lambda})}
=
\onehalf
\bkt{f}{\mathsf{m}_{\kappa,\Lambda}}_{\sphilb{H}}.$$
Considering the moment generating function of the two-dimensional Gaussian random variable, $$\sqfun{\prbexp_{\msrsf{G}}}
{\prbou_0(f) \napiernum^{t \prbou_0(\mathsf{m}_{\kappa,\Lambda})}}
=
t
\cdot
\sqfun{\prbcov_{\msrsf{G}}}
{\prbou_0(f), \prbou_0(\mathsf{m}_{\kappa,\Lambda})}
\napiernum^{\onehalf
t^2
\sqfun{\prbvar_{\msrsf{G}}}
{\prbou_0(\mathsf{m}_{\kappa,\Lambda})}}$$ holds.
Setting $t
= -2$,
\begin{equation}
\begin{aligned}
\sqfun{\prbexp_{\msrsf{G}}}
{\napiernum^{-2 \prbou_0(\mathsf{m}_{\kappa,\Lambda})}}
&=
\fnexp{\norm{\mathsf{m}_{\kappa,\Lambda}}_{\sphilb{H}}^2},
\\ 
\sqfun{\prbexp_{\msrsf{G}}}
{\prbou_0(f) \napiernum^{-2 \prbou_0(\mathsf{m}_{\kappa,\Lambda})}}
&=
-2
\sqfun{\prbcov_{\msrsf{G}}}
{\prbou_0(f), \prbou_0(\mathsf{m}_{\kappa,\Lambda})}
\napiernum^{2
\sqfun{\prbvar_{\msrsf{G}}}
{\prbou_0(\mathsf{m}_{\kappa,\Lambda})}}
\end{aligned}
\end{equation}
is obtained.
Taking into account the $\lp^{2}$-normalization of the ground state,
\begin{equation}
\begin{aligned}
&\sqfun{\prbexp_{\msrsf{G}_{\kappa,\Lambda}}}
{\prbou_0(f)}
=
\sqfun{\prbexp_{\msrsf{G}}}
{\prbou_0(f)
\Psi_{\txtgs,\kappa,\Lambda}^2}
=
\sqfun{\prbcov_{\msrsf{G}}}
{\prbou_0(f),\Psi_{\txtgs,\kappa,\Lambda}^2}
\\ 
&=
-2 \sqfun{\prbcov_{\msrsf{G}}}
{\prbou_0(f), \prbou_0(\mathsf{m}_{\kappa,\Lambda})}
=
-\bkt{f}{\mathsf{m}_{\kappa,\Lambda}}_{\sphilb{H}}
\end{aligned}
\end{equation}
is obtained.

(Covariance): This Radon-Nikodym derivative is a translation of the mean of the Gaussian measure and does not change the variance.
Therefore the variance agrees with that of $\msrsf{G}$.
\end{proof}

Using the operators \(\physham[L]_0\) and \(\mathsf{U}_{\kappa,\Lambda}\), \begin{equation}
\begin{aligned}
&\physham[L]_{\kappa,\Lambda}
=
\inv{\mathsf{U}_{\kappa,\Lambda}}
\tilde{\physham}_{\txtvanhowe,\kappa,\Lambda}
\mathsf{U}_{\kappa,\Lambda}
\\ 
&=
\inv{\Psi_{\txtgs,\kappa,\Lambda}}
\rbk{\physham[L]_0 + \prbou(\omega \mathsf{m}_{\kappa,\Lambda}) - \physgse(\physham_{\txtvanhowe,\kappa,\Lambda})}
\Psi_{\txtgs,\kappa,\Lambda}
\end{aligned}
\end{equation} is defined. In particular, for all \(\kappa,\Lambda
> 0\), the domains of \(\physham[L]_{\kappa,\Lambda}\) agree in the sense of unitary equivalence. As will be discussed in detail later, \(L_{\kappa,\Lambda}\) is the generator of the Markov process obtained by adding a drift from the ground state to the generator of the free Ornstein-Uhlenbeck process.

\begin{prop}
Under the unitary transformation by the ground state transformation, the van Hove Hamiltonian maps to $\physham[L]_{\kappa,\Lambda}$, and the ground state maps from $\Psi_{\txtgs,\kappa,\Lambda}
\in \fun{\lp^{2}}{\sphilb{M}_{-2},\msrsf{G}}$ to $1
\in \fun{\lp^{2}}{\sphilb{M}_{-2},\msrsf{G}_{\txtgs,\kappa,\Lambda}}$.
\end{prop}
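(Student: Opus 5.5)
The plan is to unwind the definitions: $\physham[L]_{\kappa,\Lambda}$ is by construction the conjugate $\inv{\mathsf{U}_{\kappa,\Lambda}} \tilde{\physham}_{\txtvanhowe,\kappa,\Lambda} \mathsf{U}_{\kappa,\Lambda}$. The statement therefore splits into two checks: that $\mathsf{U}_{\kappa,\Lambda}$ is genuinely unitary, and that it maps $1$ to the normalized ground state $\Psi_{\txtgs,\kappa,\Lambda}$ of Theorem~\ref{expedition0011502}.

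\textbf{Unitarity.} I would follow the $Q$-space argument of Proposition~\ref{expedition0011492}(4). For $F,G \in \fun{\lp^{2}}{\sphilb{M}_{-2},\msrsf{G}_{\txtgs,\kappa,\Lambda}}$,
$$\bkt{\mathsf{U}_{\kappa,\Lambda}F}{\mathsf{U}_{\kappa,\Lambda}G}_{\fun{\lp^{2}}{\sphilb{M}_{-2},\msrsf{G}}}
=\sqfun{\prbexp_{\msrsf{G}}}{\cmpconj{F}G\,\Psi_{\txtgs,\kappa,\Lambda}^2}
=\bkt{F}{G}_{\fun{\lp^{2}}{\sphilb{M}_{-2},\msrsf{G}_{\txtgs,\kappa,\Lambda}}},$$
so $\mathsf{U}_{\kappa,\Lambda}$ is isometric. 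For surjectivity, Theorem~\ref{expedition0011502} says $\Psi_{\txtgs,\kappa,\Lambda}$ is strictly positive almost everywhere (it is an exponential). Hence any $H$ in the target space is the image of $F = H/\Psi_{\txtgs,\kappa,\Lambda}$, and the same identity run backwards shows this $F$ has finite norm in the $\msrsf{G}_{\txtgs,\kappa,\Lambda}$ space.

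\textbf{Transformation of the Hamiltonian.} Unitary conjugation preserves self-adjointness and spectrum, so $\physham[L]_{\kappa,\Lambda}$ is self-adjoint and nonnegative. It is unitarily equivalent to $\physham_{\txtvanhowe,\kappa,\Lambda} - \physgse$, and $\physgse$ is the value fixed in Theorem~\ref{expedition0011502}. This is the precise sense in which the van Hove Hamiltonian ``maps to'' $\physham[L]_{\kappa,\Lambda}$.

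\textbf{Image of the ground state.} By definition $\mathsf{U}_{\kappa,\Lambda}1 = \Psi_{\txtgs,\kappa,\Lambda}$, so $\inv{\mathsf{U}_{\kappa,\Lambda}}\Psi_{\txtgs,\kappa,\Lambda} = 1$. It then follows that
$$\physham[L]_{\kappa,\Lambda}1 = \inv{\Psi_{\txtgs,\kappa,\Lambda}}\,\tilde{\physham}_{\txtvanhowe,\kappa,\Lambda}\,\Psi_{\txtgs,\kappa,\Lambda} = 0,$$
and $1$ is normalized because $\msrsf{G}_{\txtgs,\kappa,\Lambda}$ is a probability measure. Uniqueness of this ground state transfers from Proposition~\ref{expedition0011507} through the unitary.

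\textbf{Main obstacle.} Everything above is bookkeeping except the normalization $\sqfun{\prbexp_{\msrsf{G}}}{\Psi_{\txtgs,\kappa,\Lambda}^2}=1$, which is needed for $\msrsf{G}_{\txtgs,\kappa,\Lambda}$ to be a probability measure. The prefactors written in Theorem~\ref{expedition0011502} and in the proof of Lemma~\ref{expedition0011548} ($\onehalf$ versus $\oneoverfour$) should be rechecked against the covariance convention $\prbvar\,\prbou_0(f)=\onehalf\norm{f}^2$. Under that convention,
$$\sqfun{\prbexp_{\msrsf{G}}}{\napiernum^{-2\prbou_0(\mathsf{m}_{\kappa,\Lambda})}} = \fnexp{\norm{\mathsf{m}_{\kappa,\Lambda}}_{\sphilb{H}}^2},$$
so the correct prefactor is $\fnexp{-\onehalf\norm{\mathsf{m}_{\kappa,\Lambda}}_{\sphilb{H}}^2}$. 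I would verify this constant first and use it consistently throughout.
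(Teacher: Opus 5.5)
Your proposal is correct and follows the paper, whose entire proof is that both claims follow directly from the definitions: $\physham[L]_{\kappa,\Lambda}$ is defined as $\inv{\mathsf{U}_{\kappa,\Lambda}}\tilde{\physham}_{\txtvanhowe,\kappa,\Lambda}\mathsf{U}_{\kappa,\Lambda}$, and $\mathsf{U}_{\kappa,\Lambda}1=\Psi_{\txtgs,\kappa,\Lambda}$; your unitarity, $\physham[L]_{\kappa,\Lambda}1=0$ and uniqueness checks are harmless extra detail. The normalization you flag is not an actual obstacle, because Theorem~\ref{expedition0011502} already states exactly the prefactor $\fnexp{-\onehalf\norm{\mathsf{m}_{\kappa,\Lambda}}_{\sphilb{H}}^2}$ that you derive, so $\sqfun{\prbexp_{\msrsf{G}}}{\Psi_{\txtgs,\kappa,\Lambda}^2}=1$ holds as stated.
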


The ground state \(1\) on \(\fun{\lp^{2}}{\sphilb{M}_{-2},\msrsf{G}_{\txtgs,\kappa,\Lambda}}\) obtained by the ground state transformation is independent of \(\kappa,\Lambda\). This suggests the possibility that a ground state exists in the limit where the infrared/ultraviolet cutoffs are removed. In particular, it can also be interpreted as the ground state of the generator \(\physham[L]_{\kappa,\Lambda}\). It remains to discuss whether the limits of the associated generator, measure, and semigroup exist and become non-trivial objects with the appropriate properties. In particular, since Markovianity in the probabilistic sense is the key at various stages, Markovianity must be ensured at each step.

\begin{proof}
Both follow directly from the definitions.
\end{proof}

Let us investigate the drift structure of the generator \(\physham[L]_{\kappa,\Lambda}\).

\begin{prop}\label{expedition0011574}
For any cylindrical function $F$, $$\physham[L]_{\kappa,\Lambda} F
=
\physham[L]_0 F
+2
\bkt{\vagrad F}
{\omega \mathsf{m}_{\kappa,\Lambda}}
_{\sphilb{H}}$$ holds.
\end{prop}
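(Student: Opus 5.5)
The plan is to prove the identity by the standard ground-state-transformation computation: a Leibniz rule for $\physham[L]_0$ together with the eigenvalue equation for $\Psi_{\txtgs,\kappa,\Lambda}$.

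First, write $V=\prbou(\omega\mathsf{m}_{\kappa,\Lambda})$ and $\Psi=\Psi_{\txtgs,\kappa,\Lambda}$. By definition,
$$\physham[L]_{\kappa,\Lambda}F=\inv{\Psi}\rbk{\physham[L]_0(\Psi F)+(V-\physgse)\Psi F}.$$
Theorem \ref{expedition0011502} gives $(\physham[L]_0+V-\physgse)\Psi=0$, so
$$\physham[L]_{\kappa,\Lambda}F=\inv{\Psi}\rbk{\physham[L]_0(\Psi F)-F\,\physham[L]_0\Psi}.$$
It therefore suffices to have a product rule for $\physham[L]_0$ on products of cylindrical functions and exponentials of linear functionals.

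Second, I will establish the carré du champ identity for the free Ornstein--Uhlenbeck generator:
$$\physham[L]_0(FG)=F\,\physham[L]_0G+G\,\physham[L]_0F-c\,\bkt{\vagrad F}{\omega\,\vagrad G}_{\sphilb{H}},$$
with a constant $c$ fixed by the covariance normalization $\onehalf\bkt{f}{\napiernum^{-\abs{t}\omega}g}$. I would derive it on exponentials $F=\napiernum^{\prbou_0(f)}$, $G=\napiernum^{\prbou_0(g)}$, using the action of $\physham_{\txtbsn,\txtfr}$ on Wick exponentials from Proposition \ref{expedition0011392}. Concretely, $\physham[L]_0\wick{\napiernum^{\prbou(f)}}=\wick{\prbou(\omega f)\napiernum^{\prbou(f)}}$, and the Wick product formula of Proposition \ref{expedition0011275} supplies the cross term $\bkt{f}{\omega g}$. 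The identity then extends to cylindrical functions by polarization, linearity, and approximation of $\dstrapiddec$-functions by trigonometric and exponential combinations, exploiting that both sides depend only on finitely many coordinates.

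Third, I apply the product rule with $G=\Psi=C\napiernum^{-\prbou_0(\mathsf{m}_{\kappa,\Lambda})}$. Definition \ref{expedition0011538} gives $\vagrad\Psi=-\Psi\,\mathsf{m}_{\kappa,\Lambda}$. The terms $\Psi\,\physham[L]_0F$ and $F\,\physham[L]_0\Psi$ cancel against the eigenvalue relation, and the cross term becomes $+c\,\Psi\bkt{\vagrad F}{\omega\mathsf{m}_{\kappa,\Lambda}}$. Dividing by $\Psi$, which is strictly positive by Theorem \ref{expedition0011502}, gives the claim once $c=2$.

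The main obstacle is bookkeeping of the normalizations: the factor $\onehalf$ in the covariance, the choice of inner product ($\sphilb{H}$ versus $\sphilb{H}_M$) in which $\vagrad$ is taken, and the sign convention that $\physham[L]_0\geq0$ is minus the Markov generator. These determine whether $c=2$ with a plus sign, so I will first check everything on the single test function $F=\napiernum^{\prbou_0(f)}$. On that function both sides can be computed in closed form from Proposition \ref{expedition0011392} and Lemma \ref{expedition0011548}, which fixes $c$ and the sign before carrying out the general argument.
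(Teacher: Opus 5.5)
Your route is the paper's route. You use the eigenvalue equation $\bigl(L_0+\xi_0(\omega\mathsf{m}_{\kappa,\Lambda})-E_0\bigr)\Psi=0$ to reduce $L_{\kappa,\Lambda}F$ to $\Psi^{-1}\bigl(L_0(\Psi F)-F\,L_0\Psi\bigr)$. You then apply a Leibniz rule for $L_0$ together with $\operatorname{grad}\Psi=-\Psi\,\mathsf{m}_{\kappa,\Lambda}$.

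Your Leibniz rule, with cross term $-c\,\langle\operatorname{grad}F,\omega\operatorname{grad}G\rangle_{\mathcal H}$, actually has the correct shape. The paper instead writes $+2\langle\operatorname{grad}\Psi,\operatorname{grad}F\rangle_{\mathcal H}$, with no $\omega$ and the sign of a Laplacian rather than of a nonnegative generator. In its last line it then turns $\Psi^{-1}\operatorname{grad}\Psi=-\mathsf{m}_{\kappa,\Lambda}$ into $+\omega\mathsf{m}_{\kappa,\Lambda}$. So following the paper's proof will not fix the constant for you.

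The step that fails is ``the claim follows once $c=2$''. With the normalizations the paper fixes, the computation you propose gives $c=1$. Those normalizations are the covariance $\tfrac12\langle f,\napiernum^{-\abs{t}\omega}g\rangle_{\mathcal H}$ and the gradient of Definition~\ref{expedition0011538}, built on the same $\mathcal H$-pairing that defines $\xi_0(f)$.

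Indeed, Proposition~\ref{expedition0011392} gives, for real $g$,
\[
L_0\napiernum^{\xi_0(g)}=\bigl(\xi_0(\omega g)-\tfrac12\langle g,\omega g\rangle_{\mathcal H}\bigr)\napiernum^{\xi_0(g)}.
\]
Hence for $F=\napiernum^{\xi_0(f)}$ and $G=\napiernum^{\xi_0(g)}$ you get $L_0(FG)-FL_0G-GL_0F=-\langle f,\omega g\rangle_{\mathcal H}FG=-\langle\operatorname{grad}F,\omega\operatorname{grad}G\rangle_{\mathcal H}$.

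You can also compute directly, using $\Psi F=C\napiernum^{\xi_0(f-\mathsf{m}_{\kappa,\Lambda})}$ and $E_0=-\tfrac12\langle\mathsf{m}_{\kappa,\Lambda},\omega\mathsf{m}_{\kappa,\Lambda}\rangle_{\mathcal H}$:
\[
L_{\kappa,\Lambda}F=L_0F+\langle f,\omega\mathsf{m}_{\kappa,\Lambda}\rangle_{\mathcal H}F=L_0F+\langle\operatorname{grad}F,\omega\mathsf{m}_{\kappa,\Lambda}\rangle_{\mathcal H}.
\]

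An independent check uses symmetry and $L_{\kappa,\Lambda}1=0$. These force $\mathbb{E}[L_{\kappa,\Lambda}\xi_0(f)]=0$ under $\mathsf{G}_{\mathrm{gs},\kappa,\Lambda}$. Since $L_0\xi_0(f)=\xi_0(\omega f)$, this means $\mathbb{E}[\xi_0(\omega f)]=-c\langle f,\omega\mathsf{m}_{\kappa,\Lambda}\rangle_{\mathcal H}$. Lemma~\ref{expedition0011548} gives $-\langle\omega f,\mathsf{m}_{\kappa,\Lambda}\rangle_{\mathcal H}$, so $c=1$; the choice $c=2$ would put the stationary mean at $-2\mathsf{m}_{\kappa,\Lambda}$.

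So your argument, carried out carefully, proves the drift formula with coefficient $1$. The stated factor $2$ is correct only if $\operatorname{grad}$ is read as the Cameron--Martin gradient of the time-zero Gaussian. That means using the inner product $2\langle\cdot,\cdot\rangle_{\mathcal H}$, i.e.\ taking half the gradient of Definition~\ref{expedition0011538}. You should record this discrepancy explicitly rather than close the proof by setting $c=2$.
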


The gradient on cylindrical functions was formulated in Definition \ref{expedition0011538}. What is added by the ground state transformation is a drift term, corresponding in particular to a translation of the mean.

\begin{proof}
By definition, the gradient of the ground state is $$\vagrad
\Psi_{\txtgs,\kappa,\Lambda}
=
-\Psi_{\txtgs,\kappa,\Lambda}
\mathsf{m}_{\kappa,\Lambda}.$$
For a cylindrical function in the sense of Definition \ref{expedition0011537}, $$\vagrad(\Psi_{\txtgs,\kappa,\Lambda} F)
=
F \vagrad \Psi_{\txtgs,\kappa,\Lambda}
+\Psi_{\txtgs,\kappa,\Lambda} \vagrad F$$ holds.
Furthermore, the generator $\physham[L]_0$ satisfies $$\physham[L]_0 (\Psi_{\txtgs,\kappa,\Lambda} F)
=
\Psi_{\txtgs,\kappa,\Lambda} \physham[L]_0 F
+F \physham[L]_0 \Psi_{\txtgs,\kappa,\Lambda}
+2
\bkt{\vagrad \Psi_{\txtgs,\kappa,\Lambda}}
{\vagrad F}
_{\sphilb{H}}.$$
Since the generator $\physham[L]_0$ is actually the free Hamiltonian, $$0
=
\tilde{\physham}_{\txtvanhowe,\kappa,\Lambda}
\Psi_{\txtgs,\kappa,\Lambda}
=
\rbk{L_0 + \prbou(\omega \mathsf{m}_{\kappa,\Lambda}) - \physgse(\physham_{\txtvanhowe,\kappa,\Lambda})}
\Psi_{\txtgs,\kappa,\Lambda},$$ i.e., $$\physham[L]_0
\Psi_{\txtgs,\kappa,\Lambda}
=
\rbk{\physgse(\physham_{\txtvanhowe,\kappa,\Lambda})
-\prbou(\omega \mathsf{m}_{\kappa,\Lambda})}
\Psi_{\txtgs,\kappa,\Lambda}$$ is obtained.
Using these,
\begin{equation}
\begin{aligned}
&\physham[L]_{\kappa,\Lambda} F
\\ 
&=
\inv{\Psi_{\txtgs,\kappa,\Lambda}}
\rbk{\physham[L]_0 (\Psi_{\txtgs,\kappa,\Lambda} F)
+\prbou(\omega \mathsf{m}_{\kappa,\Lambda})
\Psi_{\txtgs,\kappa,\Lambda} F
-\physgse(\physham_{\txtvanhowe,\kappa,\Lambda}) \Psi_{\txtgs,\kappa,\Lambda} F}
\\ 
&=
\inv{\Psi_{\txtgs,\kappa,\Lambda}}
\rbkleft{\Psi_{\txtgs,\kappa,\Lambda} \physham[L]_0 F
+ F \physham[L]_0 \Psi_{\txtgs,\kappa,\Lambda}
+2
\bkt{\vagrad \Psi_{\txtgs,\kappa,\Lambda}}
{\vagrad F}}
\\
&\quad
\rbkright{
+\prbou(\omega \mathsf{m}_{\kappa,\Lambda}) \Psi_{\txtgs,\kappa,\Lambda} F
-\physgse(\physham_{\txtvanhowe,\kappa,\Lambda})
\Psi_{\txtgs,\kappa,\Lambda} F}
\\ 
&=
\physham[L]_0 F
+2
\inv{\Psi_{\txtgs,\kappa,\Lambda}}
\bkt{\vagrad \Psi_{\txtgs,\kappa,\Lambda}}
{\vagrad F}
\\ 
&=
\physham[L]_0 F
+2\bkt{\vagrad F}
{\omega \mathsf{m}_{\kappa,\Lambda}}
_{\sphilb{H}}
\end{aligned}
\end{equation}
is obtained.
\end{proof}

Using the operator \(\physham[L]_{\kappa,\Lambda}\), define the semigroup \[T_{\kappa,\Lambda,t}
=
\napiernum^{-t \physham[L]_{\kappa,\Lambda}}
=
\inv{\Psi_{\txtgs,\kappa,\Lambda}}
\napiernum^{-t \tilde{\physham}_{\txtvanhowe,\kappa,\Lambda}}
\Psi_{\txtgs,\kappa,\Lambda}.\]

This is the Markov semigroup after the ground state transformation. Although the semigroup property is clear, the Markov property is not obvious at this stage of the definition. Furthermore, while the Markov property as a semigroup follows from Proposition \ref{expedition0011511}, the Markov property as a stochastic process is only established in Proposition \ref{expedition0011514}.

\begin{prop}\label{expedition0011511}
The following statements hold for the operator semigroup $T_{\kappa,\Lambda,t}$.
\begin{enumerate}
\item
Conservation: $T_{\kappa,\Lambda,t} 1
= 1$ holds.

\item
It is a positivity-improving operator.

\item
It is a symmetric semigroup with $\msrsf{G}_{\txtgs,\kappa,\Lambda}$ as its invariant measure.
\end{enumerate}
In particular, the semigroup $T_{\kappa,\Lambda}
= \fml{T_{\kappa,\Lambda,t}}{t > 0}$ is a canonical Markov semigroup.
\end{prop}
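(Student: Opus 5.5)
All three properties follow by transporting known facts about $\napiernum^{-t \physham_{\txtvanhowe,\kappa,\Lambda}}$ and the ground state $\Psi_{\txtgs,\kappa,\Lambda}$ through the unitary ground state transformation $\mathsf{U}_{\kappa,\Lambda}$. Throughout I use the identity
$$T_{\kappa,\Lambda,t}
=
\inv{\Psi_{\txtgs,\kappa,\Lambda}}
\napiernum^{-t \tilde{\physham}_{\txtvanhowe,\kappa,\Lambda}}
\Psi_{\txtgs,\kappa,\Lambda}
=
\napiernum^{t \physgse(\physham_{\txtvanhowe,\kappa,\Lambda})}
\inv{\Psi_{\txtgs,\kappa,\Lambda}}
\napiernum^{-t \physham_{\txtvanhowe,\kappa,\Lambda}}
\Psi_{\txtgs,\kappa,\Lambda},$$
together with the facts from Theorem~\ref{expedition0011502} that $\Psi_{\txtgs,\kappa,\Lambda}$ is an eigenvector at $\physgse$ and is strictly positive almost everywhere.

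For conservation, apply the identity to $1$. Since $\Psi_{\txtgs,\kappa,\Lambda}$ is an eigenvector with eigenvalue $\physgse$,
$$\napiernum^{-t \physham_{\txtvanhowe,\kappa,\Lambda}} \Psi_{\txtgs,\kappa,\Lambda}
=
\napiernum^{-t \physgse} \Psi_{\txtgs,\kappa,\Lambda}.$$
The exponential prefactors cancel, leaving $T_{\kappa,\Lambda,t}1 = 1$.

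For positivity improvement, take $F \geq 0$ nonzero in $\fun{\lp^{2}}{\sphilb{M}_{-2},\msrsf{G}_{\txtgs,\kappa,\Lambda}}$. Because $\Psi_{\txtgs,\kappa,\Lambda} > 0$ a.e., the function $\Psi_{\txtgs,\kappa,\Lambda} F$ is non-negative, nonzero, and lies in $\fun{\lp^{2}}{\sphilb{M}_{-2},\msrsf{G}}$ by unitarity of $\mathsf{U}_{\kappa,\Lambda}$. Proposition~\ref{expedition0011507} then gives
$$\napiernum^{-t \physham_{\txtvanhowe,\kappa,\Lambda}} (\Psi_{\txtgs,\kappa,\Lambda} F) > 0 \quad \text{a.e.}$$
Multiplying by the positive scalar $\napiernum^{t\physgse}$ and by the a.e.\ positive function $\inv{\Psi_{\txtgs,\kappa,\Lambda}}$ preserves strict positivity. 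I must also check that the null sets of $\msrsf{G}$ and $\msrsf{G}_{\txtgs,\kappa,\Lambda}$ coincide, which holds because the density $\Psi_{\txtgs,\kappa,\Lambda}^2$ is strictly positive, exactly as in Proposition~\ref{expedition0011492}(1).

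For symmetry and invariance, self-adjointness of $\physham[L]_{\kappa,\Lambda}$ on $\fun{\lp^{2}}{\sphilb{M}_{-2},\msrsf{G}_{\txtgs,\kappa,\Lambda}}$ follows from unitary equivalence with $\tilde{\physham}_{\txtvanhowe,\kappa,\Lambda}$, so $T_{\kappa,\Lambda,t}$ is a self-adjoint contraction there; contractivity uses $\tilde{\physham} \geq 0$. Invariance of $\msrsf{G}_{\txtgs,\kappa,\Lambda}$ then comes from combining symmetry with conservation:
$$\sqfun{\prbexp_{\msrsf{G}_{\txtgs,\kappa,\Lambda}}}{T_{\kappa,\Lambda,t}F}
=
\bkt{1}{T_{\kappa,\Lambda,t}F}
=
\bkt{T_{\kappa,\Lambda,t}1}{F}
=
\sqfun{\prbexp_{\msrsf{G}_{\txtgs,\kappa,\Lambda}}}{F}.$$

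The remaining and most delicate point is the claim that $T_{\kappa,\Lambda}$ is a canonical Markov semigroup. This needs the sub-Markov bound $0 \leq F \leq 1 \Rightarrow 0 \leq T_{\kappa,\Lambda,t}F \leq 1$, not just the $\lp^{2}$ statements above. I will get it from positivity preservation applied to $1-F \geq 0$, combined with linearity and conservation: $T_{\kappa,\Lambda,t}F = 1 - T_{\kappa,\Lambda,t}(1-F) \leq 1$. Finally, $\lp^{\infty}$ boundedness and strong continuity are inherited from the unitary picture, which completes the Markov property at the semigroup level.
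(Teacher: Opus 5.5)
Your proposal is correct and follows the paper's proof essentially step for step. Conservation comes from the eigenvalue equation for $\Psi_{\txtgs,\kappa,\Lambda}$; positivity improvement comes from Proposition~\ref{expedition0011507} together with the a.e.\ strict positivity of the ground state; symmetry is transported through the ground state transformation; and invariance follows from $\bkt{1}{T_{\kappa,\Lambda,t}F} = \bkt{T_{\kappa,\Lambda,t}1}{F}$. Your extra checks are sound details the paper leaves implicit: the null sets of $\msrsf{G}$ and $\msrsf{G}_{\txtgs,\kappa,\Lambda}$ coincide, and $0 \leq F \leq 1 \Rightarrow T_{\kappa,\Lambda,t}F \leq 1$ follows by applying conservation to $1-F$.
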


\begin{proof}
(1): This follows from the fact that $\Psi_{\txtgs,\kappa,\Lambda}$ is the ground state of the van Hove Hamiltonian.

(2): Fix any $F
\geq 0$.
By Theorem \ref{expedition0011502}, $\Psi_{\txtgs,\kappa,\Lambda} F
\geq 0$.
By Proposition \ref{expedition0011507}, $\napiernum^{-t \tilde{\physham}_{\txtvanhowe,\kappa,\Lambda}}
(\Psi_{\txtgs,\kappa,\Lambda} F)
> 0$.
Since the ground state is strictly positive almost everywhere,$$T_{\kappa,\Lambda,t} F
=
\inv{\Psi_{\txtgs,\kappa,\Lambda}}
\napiernum^{-t \tilde{\physham}_{\txtvanhowe,\kappa,\Lambda}}
(\Psi_{\txtgs,\kappa,\Lambda} F)
> 0$$holds.

((3), Symmetry):
Choose arbitrary $F,G
\in
\fun{\lp^{2}}{\sphilb{M}_{-2},\msrsf{G}_{\txtgs,\kappa,\Lambda}}$.
By the Radon-Nikod\'{y}m derivative and the self-adjointness of the van Hove Hamiltonian,
\begin{equation}
\begin{aligned}
&\bkt{F}{T_{\kappa,\Lambda,t} G}
_{\fun{\lp^{2}}{\sphilb{M}_{-2},\msrsf{G}_{\txtgs,\kappa,\Lambda}}}
\\ 
&=
\sqfun{\prbexp_{\msrsf{G}_{\txtgs,\kappa,\Lambda}}}
{\cmpconj{F}
\inv{\Psi_{\txtgs,\kappa,\Lambda}}
\napiernum^{-t \tilde{\physham}_{\txtvanhowe,\kappa,\Lambda}}
(\Psi_{\txtgs,\kappa,\Lambda} G)}
\\ 
&=
\sqfun{\prbexp_{\msrsf{G}}}
{\cmpconj{F}
\inv{\Psi_{\txtgs,\kappa,\Lambda}}
\napiernum^{-t \tilde{\physham}_{\txtvanhowe,\kappa,\Lambda}}
(\Psi_{\txtgs,\kappa,\Lambda} G)
\cdot
\Psi_{\txtgs,\kappa,\Lambda}^2}
\\ 
&=
\sqfun{\prbexp_{\msrsf{G}}}
{\cmpconj{\Psi_{\txtgs,\kappa,\Lambda} F}
\napiernum^{-t \tilde{\physham}_{\txtvanhowe,\kappa,\Lambda}}
(\Psi_{\txtgs,\kappa,\Lambda} G)}
\\ 
&=
\bkt{\Psi_{\txtgs,\kappa,\Lambda} F}
{\napiernum^{-t \tilde{\physham}_{\txtvanhowe,\kappa,\Lambda}}
(\Psi_{\txtgs,\kappa,\Lambda} G)}
_{\fun{\lp^{2}}{\sphilb{M}_{-2},\msrsf{G}}}
\\ 
&=
\bkt{\napiernum^{-t \tilde{\physham}_{\txtvanhowe,\kappa,\Lambda}}
(\Psi_{\txtgs,\kappa,\Lambda} F)}
{\Psi_{\txtgs,\kappa,\Lambda} G}
_{\fun{\lp^{2}}{\sphilb{M}_{-2},\msrsf{G}}}
\end{aligned}
\end{equation}
holds, giving symmetry.

((3), Invariance):
For any $F$ and $t
> 0$, we need $\sqfun{\prbexp_{\msrsf{G}_{\txtgs,\kappa,\Lambda}}}
{T_{\kappa,\Lambda,t} F}
=
\sqfun{\prbexp_{\msrsf{G}_{\txtgs,\kappa,\Lambda}}}
{F}$.
Using the inner product identity $\sqfun{\prbexp_{\msrsf{G}_{\txtgs,\kappa,\Lambda}}}
{T_{\kappa,\Lambda,t} F}
=
\bkt{1}
{T_{\kappa,\Lambda,t} F}
_{\fun{\lp^{2}}{\sphilb{M}_{-2},\msrsf{G}}}$
and applying (1) and (2) of this proposition,$$\bkt{1}
{T_{\kappa,\Lambda,t} F}
_{\fun{\lp^{2}}{\sphilb{M}_{-2},\msrsf{G}}}
=
\bkt{T_{\kappa,\Lambda,t} 1}
{F}
_{\fun{\lp^{2}}{\sphilb{M}_{-2},\msrsf{G}}}
=
\bkt{1}
{F}
_{\fun{\lp^{2}}{\sphilb{M}_{-2},\msrsf{G}}}$$holds, which is the desired result.
\end{proof}

In what follows, we discuss the Markov property of the semigroup \(T_{\kappa,\Lambda}\) as a stochastic process. The abstract framework involves the discussion in Section \ref{expedition0011536} on projective systems of probability measures and weak convergence, and we proceed with the preparations needed to apply it.

First, we define the path measure on a finite interval. Let the path space over the finite interval \(\closedinterval{-T}{T}\) be\[\Xi_T
=
\fun{\conti}{\closedinterval{-T}{T}; \sphilb{M}_{-2}}\]and define the normalization constant\[\smpartitionfunc_{\kappa,\Lambda,T}
=
\sqfun{\prbexp_{\msrcal{G}}}
{\Psi_{\txtgs,\kappa,\Lambda}(\prbou_{-T})
\Psi_{\txtgs,\kappa,\Lambda}(\prbou_{T})
\napiernum^{-\int_{-T}^{T}
\rbk{\prbou_{s}(\omega \mathsf{m}_{\kappa,\Lambda})
-\physgse(\physham_{\txtvanhowe,\kappa,\Lambda})}
\opdmsr{s}}}.\]This normalization constant is also called the partition function.

We then define the measure \(\msrcal{G}_{\txtgs,\kappa,\Lambda,T}\) by \begin{equation}
\begin{aligned}
&\fnrestr{\od{\msrcal{G}_{\txtgs,\kappa,\Lambda,T}}
{\msrcal{G}}}
{\Xi_T}
(\prbou)
\\ 
&=
\frac{\Psi_{\txtgs,\kappa,\Lambda}(\prbou_{-T})
\Psi_{\txtgs,\kappa,\Lambda}(\prbou_{T})
\napiernum^{-\int_{-T}^{T}
\rbk{\prbou_{s}(\omega \mathsf{m}_{\kappa,\Lambda})
-\physgse(\physham_{\txtvanhowe,\kappa,\Lambda})}
\opdmsr{s}}}
{\smpartitionfunc_{\kappa,\Lambda,T}}
\end{aligned}
\end{equation}

\begin{prop}
The partition function satisfies $\smpartitionfunc_{\kappa,\Lambda,T}
=
1$.
Henceforth, the path measure on the finite interval is described under this convention.
\end{prop}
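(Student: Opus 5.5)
The plan is to reduce the partition function to a ground-state matrix element via the Feynman--Kac--Nelson formula and then read it off from the eigenvalue equation. First I use stationarity of the Ornstein--Uhlenbeck process under $\msrcal{G}$ (the stationarity item of the cited Fact) to shift the time interval $\closedinterval{-T}{T}$ to $\closedinterval{0}{2T}$. This rewrites $\smpartitionfunc_{\kappa,\Lambda,T}$ as
\[
\napiernum^{2T \physgse(\physham_{\txtvanhowe,\kappa,\Lambda})}
\sqfun{\prbexp_{\msrcal{G}}}
{\Psi_{\txtgs,\kappa,\Lambda}(\prbou_{0})
\Psi_{\txtgs,\kappa,\Lambda}(\prbou_{2T})
\napiernum^{-W_{\kappa,\Lambda}(2T)}},
\]
where the constant $\physgse$ has been pulled out of the exponent, since $\int_{-T}^{T} \physgse \opdmsr{s} = 2T \physgse$.

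Second, I apply the functional integral representation stated just before Definition~\ref{expedition0011537}, with $F = G = \Psi_{\txtgs,\kappa,\Lambda}$. This is legitimate: by Theorem~\ref{expedition0011502} the ground state is a real, strictly positive element of $\fun{\lp^{2}}{\sphilb{M}_{-2},\msrsf{G}}$, so complex conjugation is harmless. The expectation then equals
\[
\bkt{\Psi_{\txtgs,\kappa,\Lambda}}
{\napiernum^{-2T \physham_{\txtvanhowe,\kappa,\Lambda}} \Psi_{\txtgs,\kappa,\Lambda}}.
\]
Third, since $\Psi_{\txtgs,\kappa,\Lambda}$ is an eigenvector with eigenvalue $\physgse$, this inner product is $\napiernum^{-2T\physgse} \norm{\Psi_{\txtgs,\kappa,\Lambda}}^2 = \napiernum^{-2T\physgse}$. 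The prefactor $\napiernum^{2T\physgse}$ cancels it exactly, giving $\smpartitionfunc_{\kappa,\Lambda,T} = 1$ for every $T$ and every cutoff.

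The main obstacle is justifying the first step. Stationarity in the Fact is stated only for single-time functionals, so I need time-translation invariance of the full path law of $\fml{\prbou_s}{s}$. I would get this from the fact that $\prbou$ is a centered Gaussian process whose covariance $\onehalf \bkt{f}{\napiernum^{-\abs{s-t}\omega} g}$ depends only on $s-t$, so all finite-dimensional distributions are shift invariant. The functional $\int \prbou_s(\omega\mathsf{m}_{\kappa,\Lambda}) \opdmsr{s}$ is an $\lp^{2}$-limit of Riemann sums, so the shift carries over to it.

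A normalization subtlety must also be checked: the theorem's formula carries $\onehalf$ in the prefactor, while the proof's computation of $\norm{\tilde{\Psi}}$ gives $\oneoverfour$. I would use only $\norm{\Psi_{\txtgs,\kappa,\Lambda}} = 1$ as asserted in Theorem~\ref{expedition0011502}. Alternatively, I would verify the result directly by Gaussian computation, using Proposition~\ref{expedition0011501} together with the cross-covariances $\bkt{\mathsf{m}_{\kappa,\Lambda}}{(1-\napiernum^{-2T\omega})\mathsf{m}_{\kappa,\Lambda}}$ between $\prbou_0(\mathsf{m}_{\kappa,\Lambda})$, $\prbou_{2T}(\mathsf{m}_{\kappa,\Lambda})$ and $W_{\kappa,\Lambda}(2T)$, to confirm that the $T$-dependent terms cancel.
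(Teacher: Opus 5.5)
Your proposal is correct and follows the paper's own proof. You shift $\closedinterval{-T}{T}$ to $\closedinterval{0}{2T}$ by stationarity, apply the Feynman--Kac--Nelson formula with $F = G = \Psi_{\txtgs,\kappa,\Lambda}$, and conclude from the eigenvalue equation and $\norm{\Psi_{\txtgs,\kappa,\Lambda}} = 1$. The paper absorbs $\physgse$ into $\tilde{\physham}_{\txtvanhowe,\kappa,\Lambda}$ instead of pulling out $\napiernum^{2T\physgse}$, and it simply asserts the path-level stationarity that you justify via the covariance.

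Your normalization worry is unfounded. The paper's computation gives $\norm{\tilde{\Psi}} = \fnexp{\onehalf \norm{\mathsf{m}_{\kappa,\Lambda}}_{\sphilb{H}}^2}$. Directly, $\sqfun{\prbexp_{\msrsf{G}}}{\napiernum^{-2\prbou_0(\mathsf{m}_{\kappa,\Lambda})}} = \fnexp{\norm{\mathsf{m}_{\kappa,\Lambda}}_{\sphilb{H}}^2}$, because $\sqfun{\prbvar}{\prbou_0(\mathsf{m}_{\kappa,\Lambda})} = \onehalf\norm{\mathsf{m}_{\kappa,\Lambda}}_{\sphilb{H}}^2$. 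So the prefactor $\onehalf$ in Theorem~\ref{expedition0011502} is exactly what makes $\norm{\Psi_{\txtgs,\kappa,\Lambda}} = 1$.
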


\begin{proof}
The Ornstein-Uhlenbeck process is a stationary process.
In particular, we may replace the interval $\closedinterval{-T}{T}$ by $\closedinterval{0}{2T}$, giving$$\smpartitionfunc_{\kappa,\Lambda,T}
=
\sqfun{\prbexp_{\msrcal{G}}}
{\Psi_{\txtgs,\kappa,\Lambda}(\prbou_{0})
\Psi_{\txtgs,\kappa,\Lambda}(\prbou_{2T})
\napiernum^{-\int_{0}^{2T}
\rbk{\prbou_{s}(\omega \mathsf{m}_{\kappa,\Lambda})
-\physgse(\physham_{\txtvanhowe,\kappa,\Lambda})}
\opdmsr{s}}}.$$Applying the Feynman-Kac-Nelson formula with $F
= G
= \Psi_{\txtgs,\kappa,\Lambda}$, and noting the action of the van Hove Hamiltonian on the ground state,
\begin{equation}
\begin{aligned}
&\smpartitionfunc_{\kappa,\Lambda,T}
=
\bkt{\Psi_{\txtgs,\kappa,\Lambda}}
{\napiernum^{-2T \tilde{\physham}_{\txtvanhowe,\kappa,\Lambda}}
\Psi_{\txtgs,\kappa,\Lambda}}
_{\fun{\lp^{2}}{\sphilb{M}_{-2},\msrsf{G}}}
\\ 
&=
\bkt{\Psi_{\txtgs,\kappa,\Lambda}}
{\Psi_{\txtgs,\kappa,\Lambda}}
_{\fun{\lp^{2}}{\sphilb{M}_{-2},\msrsf{G}}}
=
1.
\end{aligned}
\end{equation}
This gives the desired result.
\end{proof}

Let \(\pi_{\closedinterval{-S}{S}}\) denote the restriction map on paths. When \(\msrcal{G}_{\txtgs,\kappa,\Lambda,T}
\circ
\inv{\pi_{\closedinterval{-S}{S}}}\) is independent of \(T\) under the condition \(T
\geq S\), we say that the finite-interval measures are consistent.

\begin{prop}\label{expedition0011508}
The family of finite-interval measures $\fml{\msrcal{G}_{\txtgs,\kappa,\Lambda,T}}
{T > 0}$ is consistent as a projective system of probability measures, and is projectively tight in the sense of Definition \ref{expedition0011530}.
In particular, as the projective limit of probability measures on the projective limit space $\fun{\conti}{\fldreal;\sphilb{M}_{-2}}$, the all-time path measure $\msrcal{G}_{\txtgs,\kappa,\Lambda}$ exists.
Moreover, the projective limit measure $\msrcal{G}_{\txtgs,\kappa,\Lambda}$ is also the weak limit of the finite-interval measures:$$\msrcal{G}_{\txtgs,\kappa,\Lambda}
=
\wlim_{T \to \infty}
\msrcal{G}_{\txtgs,\kappa,\Lambda,T}.$$
\end{prop}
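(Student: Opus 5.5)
The plan has three parts: prove consistency of the finite-interval marginals via the Markov property of the free Ornstein--Uhlenbeck process, get projective tightness from stationarity together with Gaussian moment bounds, and then invoke Theorem~\ref{expedition0011535}.

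\textbf{Consistency.} Fix $S \le T$ and a bounded cylindrical functional $F$ of the path restricted to $\closedinterval{-S}{S}$. We compute $\sqfun{\prbexp_{\msrcal{G}_{\txtgs,\kappa,\Lambda,T}}}{F}$ by splitting the weight into three factors, one for each of $\closedinterval{-T}{-S}$, $\closedinterval{-S}{S}$ and $\closedinterval{S}{T}$.
\begin{enumerate}
\item Condition on $\prbou_{-S}$ and $\prbou_{S}$, using the Markov property of the Ornstein--Uhlenbeck process under $\msrcal{G}$.
\item The outer factor over $\closedinterval{S}{T}$ becomes, by the Feynman--Kac--Nelson formula applied on that interval together with stationarity,
\[
\rbk{\napiernum^{-(T-S)\tilde{\physham}_{\txtvanhowe,\kappa,\Lambda}}\Psi_{\txtgs,\kappa,\Lambda}}(\prbou_S)=\Psi_{\txtgs,\kappa,\Lambda}(\prbou_S).
\]
This holds because $\Psi_{\txtgs,\kappa,\Lambda}$ is the ground state of Theorem~\ref{expedition0011502} and $\tilde{\physham}_{\txtvanhowe,\kappa,\Lambda}\Psi_{\txtgs,\kappa,\Lambda}=0$.
\item By time reversal symmetry of the Ornstein--Uhlenbeck process, the factor over $\closedinterval{-T}{-S}$ likewise reduces to $\Psi_{\txtgs,\kappa,\Lambda}(\prbou_{-S})$.
\end{enumerate}
Hence the marginal of $\msrcal{G}_{\txtgs,\kappa,\Lambda,T}$ on $\closedinterval{-S}{S}$ has exactly the density defining $\msrcal{G}_{\txtgs,\kappa,\Lambda,S}$. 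The normalization is automatic, since $\smpartitionfunc_{\kappa,\Lambda,T}=1$ by the preceding proposition. This gives the consistency of Definition~\ref{expedition0011527}.

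\textbf{Identifying the marginals.} Consistency also gives a concrete description. For $t_1<\dots<t_n$ in $\closedinterval{-S}{S}$, the finite-dimensional marginals are
\[
\sqfun{\prbexp}{\prod_j F_j(\prbou_{t_j})}=\bkt{1}{F_1 T_{\kappa,\Lambda,t_2-t_1}F_2\cdots}_{\fun{\lp^{2}}{\msrsf{G}_{\txtgs,\kappa,\Lambda}}},
\]
expressed through the Markov semigroup of Proposition~\ref{expedition0011511}. Equivalently, under the path measure the process is Gaussian. It is the stationary Ornstein--Uhlenbeck process shifted by the constant mean $-\mathsf{m}_{\kappa,\Lambda}$ (compare Lemma~\ref{expedition0011548}), and its covariance is the same as under $\msrcal{G}$. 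To verify this, use the generating functions $\napiernum^{\sum\prbou_{t_j}(f_j)}$ and the Gaussian computations of Proposition~\ref{expedition0011501}.

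\textbf{Tightness (main obstacle).} This is the step I expect to be hardest. The compact sets must be chosen so that $\pi_{ST}(K_T)\subset K_S$ and $\inf_T\msrcal{G}_{\txtgs,\kappa,\Lambda,T}(K_T)>1-\ep$. The plan is as follows.
\begin{enumerate}
\item By the Gaussian identification above, each marginal on $\closedinterval{-S}{S}$ is the law of $\prbou-\mathsf{m}_{\kappa,\Lambda}$ under $\msrcal{G}$ restricted to that interval. Under infrared regularization $\mathsf{m}_{\kappa,\Lambda}$ is a fixed vector of $\sphilb{M}_{-2}$, so this law is the pushforward of $\msrcal{G}$ by a constant shift.
\item For $\msrcal{G}$ itself, use the Hilbert--Schmidt property of $\inv{\mathcal{D}}$ and the Gaussian moment bounds
\[
\sqfun{\prbexp}{\norm{\prbou_t-\prbou_s}_{\sphilb{M}_{-2}}^{2p}}\le C_p\abs{t-s}^p .
\]
\item Feed these bounds into Kolmogorov--Chentsov to obtain compact sets of Hölder paths on each interval.
\item Take $K$ to be the set of paths that are Hölder on $\closedinterval{-n}{n}$ with constant $M_n$, with $\sum_n$ of the failure probabilities less than $\ep$, and set $K_T=\pi_T(K)$. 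These sets are compact by Arzelà--Ascoli, projectively consistent by construction, and have uniform mass because the marginals are independent of $T$.
\end{enumerate}
Path continuity is preserved, in the sense of Proposition~\ref{expedition0011531}, because each finite-interval measure is absolutely continuous with respect to $\msrcal{G}$, which charges continuous paths.

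\textbf{Conclusion.} Theorem~\ref{expedition0011535} then yields the projective limit $\msrcal{G}_{\txtgs,\kappa,\Lambda}$ on $\fun{\conti}{\fldreal;\sphilb{M}_{-2}}$, together with weak convergence of the finite-interval measures in the projective sense. This is the identity $\msrcal{G}_{\txtgs,\kappa,\Lambda}=\wlim_{T\to\infty}\msrcal{G}_{\txtgs,\kappa,\Lambda,T}$ asserted in the statement.
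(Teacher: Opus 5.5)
Your consistency step is the paper's argument. You split $[-T,T]$ into $[-T,-S]$, $[-S,S]$ and $[S,T]$, collapse each outer factor to $\Psi_{\mathrm{gs},\kappa,\Lambda}(\xi_{\pm S})$ via the Feynman--Kac--Nelson formula and $\tilde{H}_{\mathrm{vH},\kappa,\Lambda}\Psi_{\mathrm{gs},\kappa,\Lambda}=0$, and read off the density of $\mathcal{G}_{\mathrm{gs},\kappa,\Lambda,S}$. Your explicit conditioning on $\xi_{\pm S}$ via the Markov property (with reversibility for the left piece) is the correct way to state what the paper writes as a pathwise identity. Path continuity via absolute continuity with respect to $\mathcal{G}$, and the final appeal to Theorem~\ref{expedition0011535}, also match the paper.

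The gap is in your tightness step.
\begin{itemize}
\item For paths valued in the infinite-dimensional space $\mathcal{M}_{-2}$, Arzel\`a--Ascoli requires equicontinuity \emph{and} relative compactness of $\{x(t): x\in K\}$ in $\mathcal{M}_{-2}$ for each $t$.
\item H\"older bounds in the $\mathcal{M}_{-2}$-norm, even with a bound on $\|x(0)\|$, give only boundedness. So the sets you build are not compact, and Kolmogorov--Chentsov does not close the argument as written.
\item To repair it you would have to confine $x(t)$ to compact sets for $t$ in a countable dense set, or control a stronger, compactly embedded norm.
\end{itemize}
The detour is also unnecessary. The paper observes that the marginal on $[-S,S]$ is independent of $T$, so it is a single Borel probability measure on the Polish space $C([-S,S];\mathcal{M}_{-2})$ and is tight by Ulam's theorem.

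Your choice $K_T=\pi_T(K)$ is still worth keeping. It secures $\pi_{ST}(K_T)\subset K_S$, condition (2) of Definition~\ref{expedition0011530}, which the paper's separate choice of each $K_S$ never verifies. Your Gaussian identification (constant mean shift $-\mathsf{m}_{\kappa,\Lambda}$, unchanged covariance) is correct, and it makes this estimate-free:
\begin{itemize}
\item The law $\nu$ of $\xi-\mathsf{m}_{\kappa,\Lambda}$ under $\mathcal{G}$ is a Borel probability measure on the Polish space $C(\mathbb{R};\mathcal{M}_{-2})$.
\item Ulam gives a compact $K$ with $\nu(K)>1-\epsilon$.
\item Then $\mathcal{G}_{\mathrm{gs},\kappa,\Lambda,T}(\pi_T(K))=\nu(\pi_T^{-1}(\pi_T(K)))\ge\nu(K)>1-\epsilon$ uniformly in $T$.
\end{itemize}
The same identification exhibits the projective limit measure directly as $\nu$. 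So, once you have it, neither path-regularity estimates nor the Kolmogorov extension theorem are needed.
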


\begin{proof}
(Preparation): Let $F$ be a cylinder function supported on $\closedinterval{-S}{S}$ in the sense of Definition \ref{expedition0011537}.
Under the condition $T
\geq S$,$$\sqfun{\prbexp_{\msrcal{G}_{\txtgs,\kappa,\Lambda,T}}}
{F}
=
\sqfun{\prbexp_{\msrcal{G}}}
{F(\prbou)
\Psi_{\txtgs,\kappa,\Lambda}(\prbou_{-T})
\Psi_{\txtgs,\kappa,\Lambda}(\prbou_{T})
\napiernum^{-\int_{-T}^{T}
\rbk{\prbou_{s}(\omega \mathsf{m}_{\kappa,\Lambda})
-\physgse}
\opdmsr{s}}}$$holds.
We decompose this into three intervals $\closedinterval{-T}{-S},
\closedinterval{-S}{S},
\closedinterval{S}{T}$.
The middle interval $\closedinterval{-S}{S}$ is independent of $T$, while the outer intervals $\closedinterval{-T}{-S}$ and $\closedinterval{S}{T}$ contribute $1$ using the ground state property.
We now prove this fact.
For conciseness, we denote the van Hove Hamiltonian and ground state energy simply by $\physham$ and $\physgse$.

(Simplification of outer intervals): Using the fact that splitting the integration interval results in a product,
for the left outer interval
\begin{equation}
\begin{aligned}
&\Psi_{\txtgs,\kappa,\Lambda}(\prbou_{-T})
\napiernum^{-\int_{-T}^{-S}
\rbk{\prbou_{s}(\omega \mathsf{m}_{\kappa,\Lambda})
-\physgse}
\opdmsr{s}}
\\ 
&=
\funrbk{\napiernum^{-(T-S) \tilde{\physham}}
\Psi_{\txtgs,\kappa,\Lambda}}
{\prbou_{-S}}
=
\Psi_{\txtgs,\kappa,\Lambda}(\prbou_{-S}),
\end{aligned}
\end{equation}
it suffices to show this.
Since the Ornstein-Uhlenbeck process is stationary, we may translate time.
Setting $\eta_u
= \prbou_{u-T}$, the process $\fml{\eta_u}{u \in \fldreal}$ is an Ornstein-Uhlenbeck process with the same law $\msrcal{G}$.
In particular,
\begin{equation}
\begin{aligned}
\prbou_{-T}
&=
\eta_0,
\\ 
\prbou_{-S}
&=
\eta_{T-S},
\\ 
\int_{-T}^{-S}
\rbk{\prbou_{s}(\omega \mathsf{m}_{\kappa,\Lambda}) - \physgse}
\opdmsr{s}
&=
\int_{0}^{T-S}
\rbk{\eta_{u}(\omega \mathsf{m}_{\kappa,\Lambda}) - \physgse}
\opdmsr{u}
\end{aligned}
\end{equation}
holds.
Renaming the Ornstein-Uhlenbeck process $\eta$ back to $\prbou$,
the left outer interval becomes
\begin{equation}
\begin{aligned}
&\Psi_{\txtgs,\kappa,\Lambda}(\prbou_{-T})
\napiernum^{-\int_{-T}^{-S}
\rbk{\prbou_{s}(\omega \mathsf{m}_{\kappa,\Lambda})
-\physgse}
\opdmsr{s}}
\\ 
&=
\Psi_{\txtgs,\kappa,\Lambda}(\prbou_{0})
\napiernum^{-\int_{0}^{T-S}
\rbk{\prbou_{s}(\omega \mathsf{m}_{\kappa,\Lambda})
-\physgse}
\opdmsr{s}}
\end{aligned}
\end{equation}
can be rewritten in this form.
Applying the Feynman-Kac-Nelson formula with $F
= \Psi_{\txtgs,\kappa,\Lambda}$ and $t
= T-S$,
\begin{equation}
\begin{aligned}
&\bkt{\Psi_{\txtgs,\kappa,\Lambda}}
{\napiernum^{-(T-S) \tilde{\physham}} G}
_{\fun{\lp^{2}}{\sphilb{M}_{-2},\msrsf{G}}}
\\ 
&=
\sqfun{\prbexp_{\msrcal{G}}}
{\Psi_{\txtgs,\kappa,\Lambda}(\prbou_0)
G(\prbou_{T-S})
\napiernum^{-\int_{0}^{T-S}
\rbk{\prbou_u(\omega \mathsf{m}_{\kappa,\Lambda}) - \physgse}
\opdmsr{u}}}
\end{aligned}
\end{equation}
is obtained.
Choosing $\Phi$ as a test function corresponding to evaluation at time $-S$,
in particular taking $\Phi$ to be a bounded measurable function,
\begin{equation}
\begin{aligned}
&\sqfun{\prbexp_{\msrcal{G}}}
{\Psi_{\txtgs,\kappa,\Lambda}(\prbou_0)
\Phi(\prbou_{T-S})
\napiernum^{-\int_{0}^{T-S}
\rbk{\prbou_u(\omega \mathsf{m}_{\kappa,\Lambda}) - \physgse}
\opdmsr{u}}}
\\ 
&=
\bkt{\Psi_{\txtgs,\kappa,\Lambda}}
{\napiernum^{-(T-S) \tilde{\physham}}
\Phi}
_{\fun{\lp^{2}}{\sphilb{M}_{-2},\msrsf{G}}}
\\ 
&=
\bkt{\Psi_{\txtgs,\kappa,\Lambda}}
{\Phi}
_{\fun{\lp^{2}}{\sphilb{M}_{-2},\msrsf{G}}}
\end{aligned}
\end{equation}
is obtained.
Viewing this as an identity for the distribution of $\prbou_{T-S}$, it can be interpreted as follows: for Ornstein-Uhlenbeck paths on $\closedinterval{0}{T-S}$, the expectation with weight$$\Psi_{\txtgs,\kappa,\Lambda}(\prbou_{0})
\napiernum^{-\int_0^{T-S}
\rbk{\prbou_{u}(\omega \mathsf{m}_{\kappa,\Lambda}) - \physgse}
\opdmsr{u}}$$applied to the initial value $\prbou_{0}$ coincides with the expected value of $\Psi_{\txtgs,\kappa,\Lambda}(\prbou_{T-S})$ at the endpoint.
This establishes the desired behavior for the left outer interval.
The right outer interval is handled analogously.

(Consistency of finite-interval measures and construction of the all-time path measure): From the above discussion,$$\sqfun{\prbexp_{\msrcal{G}_{\txtgs,\kappa,\Lambda,T}}}
{F}
=
\sqfun{\prbexp_{\msrcal{G}}}
{F(\prbou)
\Psi_{\txtgs,\kappa,\Lambda}(\prbou_{-S})
\Psi_{\txtgs,\kappa,\Lambda}(\prbou_{S})
\napiernum^{-\int_{-S}^{S}
\rbk{\prbou_{s}(\omega \mathsf{m}_{\kappa,\Lambda})
-\physgse}
\opdmsr{s}}}$$is obtained.
Using the notation for finite-interval measures, the right-hand side is independent of $T$.
Under the condition $T
\geq S$, the quantity $\msrcal{G}_{\txtgs,\kappa,\Lambda,T}
\circ
\inv{\pi_{\closedinterval{-S}{S}}}$ is constant in $T$.
That is, the family of probability measures formed by the finite-interval measures is consistent in the sense of Definition \ref{expedition0011527}, and in particular forms a projective system of probability measures.
By the Kolmogorov extension theorem \ref{expedition0011529} for projective systems of probability measures, the all-time path measure $\msrcal{G}_{\txtgs,\kappa,\Lambda}$ can be constructed as the projective limit measure.

(Path-continuity preservation): Path-continuity preservation here is understood in the sense of Proposition \ref{expedition0011531}.
To align with the setting of Proposition \ref{expedition0011531}, take the index set $I
= \openinterval{0}{\infty}$ and set $C_T
=
\fun{\conti}{\closedinterval{-T}{T};\sphilb{M}_{-2}}$ for each $T
> 0$.
In particular, $\msrcal{G}_{\txtgs,\kappa,\Lambda,T}(C_T)
= 1$ holds for each $T
> 0$.
For the projective limit space $C
= \fun{\conti}{\fldreal;\sphilb{M}_{-2}}$, by Proposition \ref{expedition0011531},$$\fun{\msrcal{G}_{\txtgs,\kappa,\Lambda}}
{\fun{\conti}{\fldreal;\sphilb{M}_{-2}}}
=
1$$holds.

(Tightness and weak convergence): Again fix $S$ with $T
\geq S$.
Since the space of continuous functions taking values in a separable Banach space is a Polish space, $\fun{\conti}{\closedinterval{-S}{S};\sphilb{M}_{-2}}$ is a Polish space.
Every probability measure on a Polish space is tight, so all probability measures on this space are tight.
By tightness, for any $\ep
> 0$, there exists a compact set $K_S
\subset
\fun{\conti}{\closedinterval{-S}{S};\sphilb{M}_{-2}}$ such that $\msrcal{G}_{\txtgs,\kappa,\Lambda,T}
\circ
\inv{\pi_{\closedinterval{-S}{S}}}(\setcpl{K_S})
< \ep$.
Since the right-hand side is independent of $T$, under the condition $T
\geq S$,$$\sup_{T \geq S}
\msrcal{G}_{\txtgs,\kappa,\Lambda,T}
\circ
\inv{\pi_{\closedinterval{-S}{S}}}(\setcpl{K_S})
\leq
\ep$$holds.
For any fixed $S
> 0$, one can always choose a compact set $K_S$ that bounds the measure of the restriction to $\closedinterval{-S}{S}$ to arbitrary precision.
In particular, $\fml{\msrcal{G}_{\txtgs,\kappa,\Lambda,T}}{T > 0}$ is tight as a projective system in the sense of Definition \ref{expedition0011530}.
By Theorem \ref{expedition0011535}, the projective system of probability measures $\fml{\msrcal{G}_{\txtgs,\kappa,\Lambda,T}}{T > 0}$ converges weakly to the projective limit measure $\msrcal{G}_{\txtgs,\kappa,\Lambda}$.
\end{proof}

\begin{defn}
Suppose there is a probability measure $\msrsf{G}$ on a Hilbert space $\sphilb{K}$, and a self-adjoint semigroup $T
=
\fml{T_t}
{t > 0}$ on $\fun{\lp^{2}}{\sphilb{K},\msrsf{G}}$ is a standard Markov semigroup.
Let $\fml{\prbou_t}
{t \in \fldreal}$ be the coordinate process defined by a probability measure $\msrcal{G}$ on the path space $\mathfrak{Y}
= \fun{\conti}{\fldreal;\sphilb{K}}$.
When the coordinate process $\fml{\prbou_t}
{t \in \fldreal}$ satisfies the following conditions, this stochastic process is called a $P(\phi)_1$-process.
\begin{enumerate}
\item
Stationarity: for any bounded measurable function $F
\colon \sphilb{K}
\to \fldreal$ and any $t
\in \fldreal$, $\sqfun{\prbexp_{\msrcal{G}}}{F(\prbou_t)}
= \sqfun{\prbexp_{\msrsf{G}}}{F}$ holds.
That is, the distribution of $\prbou_t$ is $\msrsf{G}$ at every time $t$, and the coordinate process is stationary under time translation.

\item
The Markov representation of finite-dimensional distributions holds:
for any bounded measurable functions $F_0,\cdots,F_n
\colon \sphilb{K}
\to \fldreal$ and a partition of the interval$$0
\leq t_0
< t_1
< \cdots
< t_n,$$$$\sqfun{\prbexp_{\msrcal{G}}}
{\prod_{j=0}^n F_j(\prbou_{t_j})}
=
\bkt{F_0}
{T_{t_1-t_0} F_1
\cdots
T_{t_{n}-t_{n-1}} F_n}
_{\fun{\lp^{2}}{\sphilb{K},\msrsf{G}}}$$ holds.
That is, the coordinate process $\fml{\prbou_t}{t \in \fldreal}$ is a stationary Markov process corresponding to the semigroup $T
= \fml{T_t}{t > 0}$.
\end{enumerate}
\end{defn}

\begin{prop}\label{expedition0011514}
The infinite-dimensional Ornstein-Uhlenbeck process $\fml{\prbou_t}{t \in \fldreal}$, as the coordinate process defined by the projective and weak limit probability measure $\msrcal{G}_{\txtgs,\kappa,\Lambda}$ of Proposition \ref{expedition0011508}, is a $P(\phi)_1$-process with stationary distribution $\msrsf{G}_{\txtgs,\kappa,\Lambda}$.
In particular, under the measure $\msrcal{G}_{\txtgs,\kappa,\Lambda}$, the coordinate process $\fml{\prbou_t}{t \in \fldreal}$ is a Markov process with generator $\physham[L]_{\kappa,\Lambda}$, and for bounded measurable $F_0,\cdots,F_n$ and $0
\leq t_0
< \cdots
< t_n$,
\begin{equation}
\begin{aligned}
&\sqfun{\prbexp_{\msrcal{G}_{\txtgs,\kappa,\Lambda}}}
{\prod_{j=0}^n F_j(\prbou_{t_j})}
\\ 
&=
\bkt{F_0}
{T_{\kappa,\Lambda,t_1-t_0} F_1
\cdots
T_{\kappa,\Lambda,t_n-t_{n-1}} F_n}
_{\fun{\lp^{2}}{\sphilb{M}_{-2},\msrsf{G}_{\txtgs,\kappa,\Lambda}}}
\end{aligned}
\end{equation}
holds.
\end{prop}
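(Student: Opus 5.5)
We reduce both properties to the finite-interval measures $\msrcal{G}_{\txtgs,\kappa,\Lambda,T}$ with $T$ large, using the consistency established in Proposition~\ref{expedition0011508}. Fix bounded measurable $F_0,\dots,F_n$ and times $0 \leq t_0 < \cdots < t_n$. The function $\prod_j F_j(\prbou_{t_j})$ depends only on the path restricted to $\closedinterval{-S}{S}$ with $S = t_n$. By consistency, its expectation under $\msrcal{G}_{\txtgs,\kappa,\Lambda}$ equals its expectation under $\msrcal{G}_{\txtgs,\kappa,\Lambda,S}$. The outer-interval reduction in the proof of Proposition~\ref{expedition0011508} then lets us write this as
$$\sqfun{\prbexp_{\msrcal{G}}}{\Psi_{\txtgs,\kappa,\Lambda}(\prbou_{-S}) \Psi_{\txtgs,\kappa,\Lambda}(\prbou_{S}) \prod_j F_j(\prbou_{t_j}) \napiernum^{-\int_{-S}^{S} \rbk{\prbou_s(\omega \mathsf{m}_{\kappa,\Lambda}) - \physgse} \opdmsr{s}}}.$$
By the same reduction we may shrink the window to $\closedinterval{t_0}{t_n}$. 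The factor $\Psi_{\txtgs,\kappa,\Lambda}(\prbou_{t_0})$ is obtained from the left piece $\closedinterval{-S}{t_0}$ via $\napiernum^{-(t_0+S)\tilde{\physham}}\Psi_{\txtgs,\kappa,\Lambda} = \Psi_{\txtgs,\kappa,\Lambda}$, and similarly $\Psi_{\txtgs,\kappa,\Lambda}(\prbou_{t_n})$ comes from the right piece. Stationarity of $\msrcal{G}$ then lets us translate so that $t_0 = 0$.

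Next we apply the multi-time Feynman--Kac--Nelson formula for the unshifted van Hove semigroup. It is obtained by iterating the two-time formula with the Markov property of the free Ornstein--Uhlenbeck process, splitting $\int_{t_0}^{t_n} = \sum_j \int_{t_{j-1}}^{t_j}$. This gives
$$\bkt{\Psi_{\txtgs,\kappa,\Lambda} F_0}{\napiernum^{-(t_1-t_0)\tilde{\physham}} F_1 \cdots \napiernum^{-(t_n-t_{n-1})\tilde{\physham}} F_n \Psi_{\txtgs,\kappa,\Lambda}}_{\fun{\lp^{2}}{\sphilb{M}_{-2},\msrsf{G}}}.$$
Inserting $\Psi_{\txtgs,\kappa,\Lambda}\inv{\Psi_{\txtgs,\kappa,\Lambda}}$ between consecutive factors and using $T_{\kappa,\Lambda,t} = \inv{\Psi_{\txtgs,\kappa,\Lambda}} \napiernum^{-t\tilde{\physham}} \Psi_{\txtgs,\kappa,\Lambda}$ converts this into
$$\bkt{F_0}{T_{\kappa,\Lambda,t_1-t_0} F_1 \cdots T_{\kappa,\Lambda,t_n-t_{n-1}} F_n}_{\fun{\lp^{2}}{\sphilb{M}_{-2},\msrsf{G}_{\txtgs,\kappa,\Lambda}}},$$
because $\Psi_{\txtgs,\kappa,\Lambda}^2$ is the Radon--Nikodym density of $\msrsf{G}_{\txtgs,\kappa,\Lambda}$ with respect to $\msrsf{G}$. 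The case $n=0$ gives $\sqfun{\prbexp}{F(\prbou_t)} = \sqfun{\prbexp_{\msrsf{G}_{\txtgs,\kappa,\Lambda}}}{F}$ for every $t$; here translation invariance follows from stationarity of $\msrcal{G}$ and time-independence of the weight. This yields stationarity with stationary law $\msrsf{G}_{\txtgs,\kappa,\Lambda}$. Proposition~\ref{expedition0011511} supplies the fact that $T_{\kappa,\Lambda}$ is a standard Markov semigroup, which the definition of a $P(\phi)_1$-process requires. The identification of the generator as $\physham[L]_{\kappa,\Lambda}$ is then immediate from the definition of $T_{\kappa,\Lambda,t}$, and its drift form follows from Proposition~\ref{expedition0011574}.

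The main obstacle is the multi-time Feynman--Kac--Nelson step. We only have the two-time formula, which is stated for $\lp^{2}$ functions, so the iteration has to be justified. We would do this first for bounded $F_j$ by induction on $n$, using the Markov property of $\prbou$ under $\msrcal{G}$ and conditioning on $\prbou_{t_{n-1}}$. Integrability of the weights is not a problem: the exponent is Gaussian, so $\napiernum^{-\int\prbou_s(\omega \mathsf{m}_{\kappa,\Lambda})\opdmsr{s}}$ lies in every $\lp^p$, and $\Psi_{\txtgs,\kappa,\Lambda}$ lies in every $\lp^p(\msrsf{G})$ by Theorem~\ref{expedition0011502}.
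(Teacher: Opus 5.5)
Your proposal is correct and follows essentially the same route as the paper: restrict to a finite-interval measure using the consistency of Proposition~\ref{expedition0011508}, collapse the outer intervals with $e^{-t\tilde{H}}\Psi=\Psi$, apply the Feynman--Kac--Nelson formula, and pass to $L^2$ of the ground-state measure through the density $\Psi^2$. The differences are minor: you obtain stationarity as the $n=0$ case instead of by a separate collapse at $S=0$, and you spell out the multi-time Feynman--Kac--Nelson step (via conditioning and the free Markov property) that the paper only asserts; that step is also what rigorously justifies your window shrinking when the factors $F_j$ are present.
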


\begin{proof}
(The ground state measure $\msrsf{G}_{\txtgs,\kappa,\Lambda}$ is the stationary distribution): For bounded measurable $F$, compute the expectation at time $0$ from the finite-interval measure.
Using the finite-interval path measure and $Z_{\kappa,\Lambda,T}
= 1$,
\begin{equation}
\begin{aligned}
&\sqfun{\prbexp_{\msrcal{G}_{\txtgs,\kappa,\Lambda,T}}}
{F(\prbou_0)}
\\ 
&=
\sqfun{\prbexp_{\msrcal{G}}}
{F(\prbou_0)
\Psi_{\txtgs,\kappa,\Lambda}(\prbou_{-T})
\Psi_{\txtgs,\kappa,\Lambda}(\prbou_{T})
\napiernum^{-\int_{-T}^{T}
\rbk{\prbou_(\omega \mathsf{m}_{\kappa,\Lambda}) - \physgse}
\opdmsr{s}}}
\end{aligned}
\end{equation}
holds.
Using the same decomposition into $\closedinterval{-T}{-S},
\closedinterval{-S}{S},
\closedinterval{S}{T}$ as in the proof of Proposition \ref{expedition0011508}(2), setting $S
= 0$, the contributions from both endpoints collapse to $\Psi_{\txtgs,\kappa,\Lambda}(\prbou_0)$ via the outer-interval cancellation.
In particular,
\begin{equation}
\begin{aligned}
&\sqfun{\prbexp_{\msrcal{G}_{\txtgs,\kappa,\Lambda,T}}}
{F(\prbou_0)}
=
\sqfun{\prbexp_{\msrcal{G}}}
{F(\prbou_0)
\Psi_{\txtgs,\kappa,\Lambda}(\prbou_{0})^2}
\\ 
&=
\sqfun{\prbexp_{\msrsf{G}}}
{F
\Psi_{\txtgs,\kappa,\Lambda}^2}
=
\sqfun{\prbexp_{\msrsf{G}_{\txtgs,\kappa,\Lambda}}}
{F}
\end{aligned}
\end{equation}
is obtained.
Since the right-hand side is independent of $T$, taking the weak limit gives$$\sqfun{\prbexp_{\msrcal{G}_{\txtgs,\kappa,\Lambda}}}
{F(\prbou_0)}
=
\sqfun{\prbexp_{\msrsf{G}_{\txtgs,\kappa,\Lambda}}}
{F}.$$In particular, the distribution at time $0$ is $\msrsf{G}_{\txtgs,\kappa,\Lambda}$.
By the stationarity (translation invariance) of the Ornstein-Uhlenbeck process and the construction of the finite-interval measures, the distribution of $\prbou_t$ is also $\msrsf{G}_{\txtgs,\kappa,\Lambda}$ for any $t
\in \fldreal$, and the finite-dimensional distributions are invariant under time translation.
Therefore, the infinite-dimensional Ornstein-Uhlenbeck process $\fml{\xi_t}{t \in \fldreal}$ under $\msrcal{G}_{\txtgs,\kappa,\Lambda}$ is stationary with static distribution equal to the ground state measure $\msrsf{G}_{\txtgs,\kappa,\Lambda}$.

(Markov property): By Proposition \ref{expedition0011511}, the semigroup $T_{\kappa,\Lambda,t}
=
\napiernum^{-t \physham[L]_{\kappa,\Lambda}}$ is a Markov semigroup with conservation, positivity preservation, and symmetry, and its generator is $\physham[L]_{\kappa,\Lambda}
= \physham[L]_0 + (2 \omega \mathsf{m}_{\kappa,\Lambda}) \vainnprod \vagrad$.
Once the finite-dimensional distributions of $T_{\kappa,\Lambda,t}$ and $\msrcal{G}_{\txtgs,\kappa,\Lambda}$ agree, one can confirm that they form a Markov process with generator $L_{\kappa,\Lambda}$.
Combining the construction so far with the Feynman-Kac-Nelson formula and the ground state transformation, for bounded measurable $F_0,\cdots,F_n$ and $0
\leq t_0
< \cdots
< t_n$,
\begin{equation}
\begin{aligned}
&\sqfun{\prbexp_{\msrcal{G}_{\txtgs,\kappa,\Lambda}}}
{\prod_{j=0}^n F_j(\prbou_{t_j})}
\\ 
&=
\bkt{F_0}
{T_{\kappa,\Lambda,t_1-t_0} F_1
\cdots
T_{\kappa,\Lambda,t_n-t_{n-1}} F_n}
_{\fun{\lp^{2}}{\sphilb{M}_{-2},\msrsf{G}_{\txtgs,\kappa,\Lambda}}}
\end{aligned}
\end{equation}
is obtained.
This is the Markov representation of finite-dimensional distributions for the semigroup $T_{\kappa,\Lambda,t}
= \napiernum^{-t \physham[L]_{\kappa,\Lambda}}$.
Therefore, the coordinate process under $\msrcal{G}_{\txtgs,\kappa,\Lambda}$ is a Markov process with generator $\physham[L]_{\kappa,\Lambda}$.
\end{proof}

As a corollary, the functional integral representation for the Markov semigroup \(\napiernum^{-t \physham[L]_{\kappa,\Lambda}}\) is obtained.

\begin{cor}\label{expedition0011569}
The Feynman-Kac-Nelson formula for any $F,G
\in \fun{\lp^{2}}{\sphilb{M}_{-2},\msrsf{G}}$ reads:$$\bkt{F}
{\napiernum^{-t \physham[L]_{\kappa,\Lambda}} G}
_{\fun{\lp^{2}}{\sphilb{M}_{-2},\msrsf{G}_{\txtgs,\kappa,\Lambda}}}
=
\sqfun{\prbexp_{\msrcal{G}_{\txtgs,\kappa,\Lambda}}}
{F(\prbou_0)
G(\prbou_t)}.$$
\end{cor}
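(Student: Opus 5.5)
The plan is to derive the formula from the Markov representation in Proposition~\ref{expedition0011514}, taking the two-time case with $n=1$, $t_0=0$ and $t_1=t$. For bounded measurable $F,G$ on $\sphilb{M}_{-2}$ that proposition gives
\[
\sqfun{\prbexp_{\msrcal{G}_{\txtgs,\kappa,\Lambda}}}{F(\prbou_0)G(\prbou_t)}
=
\bkt{F}{T_{\kappa,\Lambda,t} G}_{\fun{\lp^{2}}{\sphilb{M}_{-2},\msrsf{G}_{\txtgs,\kappa,\Lambda}}},
\]
and $T_{\kappa,\Lambda,t}=\napiernum^{-t \physham[L]_{\kappa,\Lambda}}$ by definition. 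This settles the claim on bounded functions. Since $F$ appears on the left of the inner product, the complex conjugation must be kept track of, which gives the factor $\cmpconj{F(\prbou_0)}$. For real-valued $F$ this reduces exactly to the displayed formula.

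The remaining work is to pass from bounded functions to general $F,G$ by density. Both sides are bounded sesquilinear forms on $\fun{\lp^{2}}{\sphilb{M}_{-2},\msrsf{G}_{\txtgs,\kappa,\Lambda}}$.

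\emph{Right-hand side.} By Cauchy--Schwarz on path space and the stationarity in Proposition~\ref{expedition0011514}, the law of $\prbou_0$ and of $\prbou_t$ under $\msrcal{G}_{\txtgs,\kappa,\Lambda}$ is $\msrsf{G}_{\txtgs,\kappa,\Lambda}$. Hence
\[
\abs{\sqfun{\prbexp}{F(\prbou_0)G(\prbou_t)}}\leq \norm{F}\norm{G},
\]
with norms taken in $\fun{\lp^{2}}{\msrsf{G}_{\txtgs,\kappa,\Lambda}}$.

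\emph{Left-hand side.} It is bounded by $\norm{F}\norm{G}$ because $T_{\kappa,\Lambda,t}$ is a contraction. This follows from its unitary equivalence, via $\mathsf{U}_{\kappa,\Lambda}$, with $\napiernum^{-t\tilde{\physham}_{\txtvanhowe,\kappa,\Lambda}}$ and $\tilde{\physham}_{\txtvanhowe,\kappa,\Lambda}\geq 0$.

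Bounded functions are dense in $\lp^{2}$, so the identity extends by continuity.

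The main obstacle is the function space. The statement is written for $F,G\in\fun{\lp^{2}}{\sphilb{M}_{-2},\msrsf{G}}$, while the inner product is taken in $\fun{\lp^{2}}{\msrsf{G}_{\txtgs,\kappa,\Lambda}}$. The density $\Psi_{\txtgs,\kappa,\Lambda}^2$ is an unbounded exponential of a Gaussian, so these spaces do not coincide. I would read the corollary as a statement on $\fun{\lp^{2}}{\msrsf{G}_{\txtgs,\kappa,\Lambda}}$, the natural domain of $\physham[L]_{\kappa,\Lambda}$, and prove it there.

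To cover the literal formulation as well, I would argue on the common dense set of cylindrical exponentials $\napiernum^{\imunit\prbou(h)}$ and Wick exponentials. These lie in both spaces, since both measures are Gaussian with equal covariance (Lemma~\ref{expedition0011548}). The identity on that set then follows from the bounded case by approximation.
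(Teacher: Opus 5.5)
Your proposal is correct and follows the paper's route. The paper's proof is exactly the case $n=1$, $t_0=0$, $t_1=t$ of the Markov representation in Proposition~\ref{expedition0011514}. Your additional steps are refinements that the paper leaves implicit: the complex conjugate on $F$, the extension from bounded functions to $L^2$ via stationarity and the contraction property of $T_{\kappa,\Lambda,t}$, and your observation that the identity naturally lives on $L^2(\mathsf{G}_{\mathrm{gs},\kappa,\Lambda})$ rather than $L^2(\mathsf{G})$.
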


\begin{proof}
This is simply the Markov process representation of Proposition \ref{expedition0011514} with $n
= 1$ rewritten.
\end{proof}

In Proposition \ref{expedition0011514}, only properties are listed, and we now verify that the explicit form of the finite-dimensional distributions and their consistency with marginalization (projection) are clear. The next proposition focuses on establishing the projective structure and Markov property of the stochastic process.

\begin{prop}
For any finite collection of times $t_0
< t_1
\cdots
< t_n$, define the projection$$\pi_{t_0,\cdots,t_n}
\colon \sphilb[\mathfrak]{Y}
\to \sphilb{M}_{-2}^{n+1};
\quad
\pi_{t_0,\cdots,t_n}(\prbou)
=
\vecbk{\prbou_{t_0},\cdots,\prbou_{t_n}}.$$Then the following statements hold for the all-time path measure $\msrcal{G}_{\txtgs,\kappa,\Lambda}$ defined in Proposition \ref{expedition0011508}.
\begin{enumerate}
\item
As the Markov representation of finite-dimensional distributions, on the algebra generated by indicator functions $\fndef{A_j}$,$$\msr{\mu}_{t_0,\cdots,t_n}
=
\msrcal{G}_{\txtgs,\kappa,\Lambda}
\circ
\inv{\pi_{t_0,\cdots,t_n}}$$is given by
\begin{equation}
\begin{aligned}
&\fun{\msr{\mu_{t_0,\cdots,t_n}}}{\prod_{j=0}^{n} A_j}
\\ 
&=
\bkt{\fndef{A_0}}
{T_{\kappa,\Lambda,t_1-t_0} \fndef{A_1}
\cdots
T_{\kappa,\Lambda,t_n-t_{n-1}} \fndef{A_n}}
_{\fun{\lp^{2}}{\sphilb{M}_{-2},\msrsf{G}_{\txtgs,\kappa,\Lambda}}}.
\end{aligned}
\end{equation}

\item
For any pair of finite sets $I
\subset J
\subset \fldreal$, let $p_{I,J}
\colon \sphilb{M}_{-2}^{J}
\to \sphilb{M}_{-2}^{I}$ denote the natural projection.
Then$$\msr{\mu_I}
= \msr{\mu_J}
\circ
\inv{p_{I,J}}$$holds.
In particular, the finite-dimensional distributions are consistent with marginalization, and the family $\fml{\msr{\mu_I}}{I \in \setpowfin{\fldreal}}$ forms a projective system.

\item
For a symmetric interval $\closedinterval{-S}{S}$,$$\msrcal{G}_{\txtgs,\kappa,\Lambda}
\circ
\inv{\pi_{\closedinterval{-S}{S}}}
=
\lim_{T \to \infty}
\msrcal{G}_{\txtgs,\kappa,\Lambda,T}
\circ
\inv{\pi_{\closedinterval{-S}{S}}}$$holds.
In particular, the right-hand side coincides with the projective system.
\end{enumerate}
\end{prop}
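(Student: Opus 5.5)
The plan is to derive all three statements from the Markov representation already established in Proposition~\ref{expedition0011514}, together with the construction of $\msrcal{G}_{\txtgs,\kappa,\Lambda}$ as the projective and weak limit in Proposition~\ref{expedition0011508}.

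For (1), I specialise the identity of Proposition~\ref{expedition0011514} to indicator functions $F_j = \fndef{A_j}$ with Borel $A_j \subset \sphilb{M}_{-2}$. The left-hand side then equals $\msrcal{G}_{\txtgs,\kappa,\Lambda}(\prbou_{t_0}\in A_0,\dots,\prbou_{t_n}\in A_n)$, which is exactly $\msr{\mu}_{t_0,\cdots,t_n}(\prod_j A_j)$. Proposition~\ref{expedition0011514} was stated for $t_0 \geq 0$, but stationarity (the stationarity part of that proposition) lets me shift all times by $-t_0$, so arbitrary real $t_0$ is covered. Rectangles form a $\pi$-system generating the product $\sigma$-algebra, so the formula determines $\msr{\mu}_{t_0,\cdots,t_n}$ uniquely.

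For (2), I take $I \subset J$ and check the identity on rectangles $\prod_{i\in I} A_i$. The preimage under $p_{I,J}$ is the rectangle in $\sphilb{M}_{-2}^J$ with $A_j = \sphilb{M}_{-2}$ for $j \in J\setminus I$. In the semigroup expression, an inserted factor $\fndef{\sphilb{M}_{-2}} = 1$ at an interior time is absorbed by the semigroup law $T_{\kappa,\Lambda,s}T_{\kappa,\Lambda,u} = T_{\kappa,\Lambda,s+u}$. At the final time it is removed using conservation $T_{\kappa,\Lambda,t}1 = 1$ from Proposition~\ref{expedition0011511}(1). At the initial time I use symmetry and invariance of $\msrsf{G}_{\txtgs,\kappa,\Lambda}$ (Proposition~\ref{expedition0011511}(3)), namely $\bkt{1}{T_{\kappa,\Lambda,s}G} = \bkt{1}{G}$. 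By induction on $\abs{J\setminus I}$ the two sides agree on rectangles, and the $\pi$--$\lambda$ theorem upgrades this to equality of measures. Alternatively, (2) follows directly because both sides are pushforwards of the single measure $\msrcal{G}_{\txtgs,\kappa,\Lambda}$ and $p_{I,J}\circ\pi_J = \pi_I$; I would record this observation and keep the semigroup computation as the check that the explicit formula in (1) is internally consistent.

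For (3), I use the consistency established in the proof of Proposition~\ref{expedition0011508}: for $T \geq S$ the restriction $\msrcal{G}_{\txtgs,\kappa,\Lambda,T}\circ\inv{\pi_{\closedinterval{-S}{S}}}$ is independent of $T$. The limit on the right is therefore an eventually constant net. By Theorem~\ref{expedition0011529}, the projective limit measure has exactly these marginals, so the limit equals $\msrcal{G}_{\txtgs,\kappa,\Lambda}\circ\inv{\pi_{\closedinterval{-S}{S}}}$.

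The main obstacle is a subtlety in Proposition~\ref{expedition0011508}: the outer-interval cancellation was argued only for time-local test functions. To cover every cylinder event on $\closedinterval{-S}{S}$, and not just events at finitely many times, I would note that finite-time cylinder sets generate the Borel $\sigma$-algebra of the path space $\fun{\conti}{\closedinterval{-S}{S};\sphilb{M}_{-2}}$. That space is Polish and point evaluations separate points, so a monotone class argument extends the identity from cylinder events to all Borel sets.
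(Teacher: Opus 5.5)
Your proposal is correct and follows the paper's proof: (1) specialises Proposition~\ref{expedition0011514} to indicator functions, (2) collapses time points using the semigroup law, and (3) uses the $T$-independence of the restricted finite-interval measures from Proposition~\ref{expedition0011508} together with the Kolmogorov extension. Your additions fill in points the paper leaves implicit: the stationarity shift that allows $t_0<0$, the endpoint cases in (2) via conservation and invariance, the identity $p_{I,J}\circ\pi_J=\pi_I$ that makes (2) immediate, and the monotone-class step from cylinder events to all Borel sets of the path space on $[-S,S]$.
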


\begin{proof}
(1): Take $F_j
= \fndef{A_j}$ in Proposition \ref{expedition0011514}(2).

(2): The semigroup property of the Markov semigroup gives the marginalization obtained by collapsing one time point in the finite-dimensional distribution.
Specifically, let $J
= \setone{t_0,\cdots,t_n}$, and for some $0 < k < n$ set $I
= J \setminus \setone{t_k}$.
Integration over the $k$-th component in $\msr{\mu_J}$ via $T_{t_k - t_{k-1}}
\circ
T_{t_{k+1} - t_{k}}
=
T_{t_{k+1} - t_{k-1}}$ gives
\begin{equation}
\begin{aligned}
&\fun{\msr{\mu_J}}{\prod_{j=0}^n A_j}
\\ 
&=
\bkt{\fndef{A_0}}
{T_{t_1 - t_0} \fndef{A_1}
\cdots
T_{t_{k} - t_{k-1}} \fndef{A_k}
\cdots
T_{t_{n} - t_{n-1}} \fndef{A_n}}
\\ 
&=
\bkt{\fndef{A_0}}
{T_{t_1 - t_0} \fndef{A_1}
\cdots
T_{t_{k+1} - t_{k-1}} \fndef{A_{k+1}}
\cdots
T_{t_{n} - t_{n-1}} \fndef{A_n}}
\end{aligned}
\end{equation}
is obtained.
This agrees with the finite-dimensional distribution for $I$.
The same argument applies for general $I
\subset J$.
Therefore, $\fml{\msr{\mu_I}}{I \in \setpowfin{\fldreal}}$ is a projective system, and the measure $\msrcal{G}_{\txtgs,\kappa,\Lambda}$ is the path measure realizing this projective system.

(3): Proposition \ref{expedition0011508} establishes the consistency of the finite-interval measures $\msrcal{G}_{\txtgs,\kappa,\Lambda,T}$.
By the tightness and weak limit argument of that proposition, these converge weakly to $\msrcal{G}_{\txtgs,\kappa,\Lambda}$.
By continuity of the projection,$$\msrcal{G}_{\txtgs,\kappa,\Lambda}
\circ
\inv{\pi_{\closedinterval{-S}{S}}}
=
\lim_{T \to \infty}
\msrcal{G}_{\txtgs,\kappa,\Lambda}
\circ
\inv{\pi_{\closedinterval{-S}{S}}}$$is obtained.
Since the left-hand side is independent of $T$, the restriction of the all-time measure to $\closedinterval{-S}{S}$ agrees with the family of finite-interval measures.
\end{proof}

\subsection{Removal of Infrared and Ultraviolet Cutoffs}\label{removal-of-infrared-and-ultraviolet-cutoffs-1}

Since the divergence of the ground state energy cannot be canceled when the interaction term remains in the original Hamiltonian, the entire discussion must be carried out in the \(\physham[L]\) system where the interaction term has been formally removed. Due to the special structure of the van Hove model, if we work on \(\fun{\lp^{2}}{\sphilb{M}_{-2},\msrsf{G}_{\txtgs}}\) or \(\fun{\lp^{2}}{\sphilb{M}_{-2},\msrsf{G}_{\kappa,\Lambda}}\) with ground state measures \(\msrsf{G}_{\txtgs}\) or \(\msrsf{G}_{\kappa,\Lambda}\), the ground state is \(1\), so there is no need to take a limit of ground states in this sense. Again due to the special nature of the van Hove model, all that needs to be handled is the trivial limit of \(\mathsf{m}_{\kappa,\Lambda}\) and the renormalization of the energy, and the important properties of the objects after cutoff removal are essentially already proved. What remains is the transition to \(\physham[L]\), \(\msrsf{G}_{\txtgs}\), \(\msrcal{G}_{\txtgs}\), \(T_t = \napiernum^{-t \physham[L]}\), and the establishment of the Feynman-Kac-Nelson formula in the new framework. First, preparation equivalent to the discussion in Section \ref{expedition0011370} is needed. We leave the basic arguments to Section \ref{expedition0011370} and focus here on the modifications.

The limit of the mean functional and the cutoff independence of the covariance are clear by definition.

\begin{prop}\label{expedition0011566}
\begin{enumerate}
\item
For any element $f$ of $\dom \mathsf{m}$ defined in Definition \ref{expedition0011100}, the IR/UV cutoff removal limit $\mathsf{m}(f)$ is well-defined.
The domain can be uniquely continuously extended to $\sphilb{H}_{\txtirsingular}
=
\gtclos{\dom \mathsf{m}}^{\norm{\cdot}_{C}}$.

\item
At each stage, for the ground state measures with IR/UV cutoffs associated with $\msrcal{G}$ and $\msrsf{G}$, the covariance form is independent of the IR/UV cutoffs in all cases.
\end{enumerate}
\end{prop}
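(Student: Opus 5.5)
For part (1) I would start directly from Definition~\ref{expedition0011100}. For $f \in \dom \mathsf{m}$ the integrand $\cmpconj{\faftr{\varrho}(k)}\faftr{f}(k)\,\omega(k)^{-\frac{3}{2}}$ is absolutely integrable. The cutoff version $\mathsf{m}_{\kappa,\Lambda}(f)$ is the same integral restricted to $\kappa \leq \abs{k} \leq \Lambda$.

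The difference $\mathsf{m}(f) - \mathsf{m}_{\kappa,\Lambda}(f)$ is therefore bounded by the two tail integrals over $\abs{k} \leq \kappa$ and $\abs{k} \geq \Lambda$. Both tails vanish by dominated convergence. This is exactly the estimate already used in the proof of Proposition~\ref{expedition0011261}, taken at $t = 0$. So $\mathsf{m}(f) = \lim_{\kappa \to 0,\Lambda \to \infty} \mathsf{m}_{\kappa,\Lambda}(f)$ exists, is finite, and is linear in $f$, so it is well-defined.

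For the extension to $\sphilb{H}_{\txtirsingular}$ I would follow the pattern of Proposition~\ref{expedition0011559}. The centering maps $\iota_{\kappa,\Lambda}$ are isometries from $\setquot{\dom \mathsf{m}}{\sphilb{N}_C}$ with respect to $\norm{\cdot}_C$, so they extend uniquely by density. The mean functional itself is extended by defining it on $\dom \mathsf{m}$ and transporting the structure along the completion. To be honest about what the statement asserts, I would phrase the extension as the extension of the random field $f \mapsto \pi_f - \mathsf{m}(f)$. I would also add the mean on the subspace where $\mathsf{m}$ is $\norm{\cdot}_C$-continuous, or simply use the fact that $\mathsf{m}$ is well-defined on the quotient.

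This well-definedness on the quotient is the main subtle point. I have to check that $\mathsf{m}$ vanishes on $\sphilb{N}_C$. With $C(f,g) = \onehalf \opreal \bkt{f}{g}$, $\sphilb{N}_C = \setone{0}$, so this is automatic. The genuine obstacle is that $\mathsf{m}$ need not be bounded in $\norm{\cdot}_C$; Proposition~\ref{expedition0011265} shows this kind of unboundedness near the infrared. So I would state the continuous extension only for the centered field and treat the mean on $\dom \mathsf{m}$.

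For part (2) I would invoke Lemma~\ref{expedition0011548}. Under $\msrsf{G}_{\txtgs,\kappa,\Lambda}$ the covariance is $\onehalf \bkt{f}{g}_{\sphilb{H}}$, which is the covariance of $\msrsf{G}$ and contains no $\kappa,\Lambda$. The argument is that the density $\Psi_{\txtgs,\kappa,\Lambda}^2$ is an exponential of a linear Gaussian functional, which only shifts the mean (Cameron--Martin).

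For the path measures $\msrcal{G}_{\txtgs,\kappa,\Lambda}$ I would use the Markov representation of Proposition~\ref{expedition0011514}, i.e.\ Corollary~\ref{expedition0011569}. Computing $\sqfun{\prbexp}{\prbou_s(f)\prbou_t(g)}$ and subtracting the means gives $\onehalf \bkt{f}{\napiernum^{-\abs{t-s}\omega} g}_{\sphilb{H}}$. The reason is that the drift in Proposition~\ref{expedition0011574} is a constant vector field, so the transformed semigroup acts on linear functionals by a deterministic shift plus the free Ornstein--Uhlenbeck evolution. This covariance is also cutoff-free. The same statement for the $Q$-space measures $\msr{P_{\txtgs,\kappa,\Lambda}}$ is Proposition~\ref{expedition0011492}(2).
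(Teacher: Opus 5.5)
Your proposal is correct, and for the two substantive claims it is the argument the paper has in mind. The paper gives no proof beyond the sentence before the statement, that the limit of the mean functional and the cutoff independence of the covariance are ``clear by definition''. That amounts to your dominated-convergence argument on $\{\abs{k}\le\kappa\}\cup\{\abs{k}\ge\Lambda\}$ (the $t=0$ case of the estimate in Proposition~\ref{expedition0011261}), together with the fact in Lemma~\ref{expedition0011548} that the exponential tilt shifts only the mean.

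Where you depart from the paper is the extension clause, and your caveat is a correction, not a gap. Since $C(f,g)=\frac12\opreal\bkt{f}{g}$ and $\dom\mathsf{m}$ is dense in $L^2$, $\mathcal{H}_{\mathrm{irs}}$ is all of real $L^2$. A $\norm{\cdot}_C$-bounded $\mathsf{m}$ would, by Riesz and agreement on $\widehat{f}\in C_c^\infty(\mathbb{R}^3\setminus\{0\})$, force $\omega^{-3/2}\widehat{\varrho}\in L^2$. That fails because $\int_{\mathbb{R}^3}\abs{k}^{-3}\,dk$ diverges logarithmically at both ends. So only the centered field of Proposition~\ref{expedition0011559} extends, and $\mathsf{m}$ must stay on $\dom\mathsf{m}$. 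This is all that Proposition~\ref{expedition0011564} and Theorem~\ref{expedition0011572} actually use.

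For the path measures, your route through the Markov representation is sound. Only the constancy of the drift $\nabla\log\Psi_{\mathrm{gs},\kappa,\Lambda}=-\mathsf{m}_{\kappa,\Lambda}$ matters, not the coefficient written in Proposition~\ref{expedition0011574}. You do still need to extend that representation from bounded $F_j$ to the unbounded functionals $\xi_t(f)$.

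A more direct reading of ``by definition'' avoids that step. Each finite-interval density
$\Psi_{\mathrm{gs},\kappa,\Lambda}(\xi_{-T})\Psi_{\mathrm{gs},\kappa,\Lambda}(\xi_{T})\exp\bigl(-\int_{-T}^{T}(\xi_s(\omega\mathsf{m}_{\kappa,\Lambda})-E_0)\,ds\bigr)$
is the exponential of a linear functional of the Gaussian Ornstein--Uhlenbeck path. Hence every $\mathcal{G}_{\mathrm{gs},\kappa,\Lambda,T}$, and therefore the projective limit, is Gaussian with the free covariance $\frac12\bkt{f}{e^{-\abs{t-s}\omega}g}$. This bypasses the semigroup and its domain issues entirely.
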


For the domain \(\dom \mathsf{m}\) of the functional \(\mathsf{m}\), define the index set \(\setindex{I}\) by\[\setindex{I}
=
\setpowfin{\fldreal \times \dom \mathsf{m}}.\]Furthermore, for any\[I
=
\setone{\pairbk{t_1,f_1},
\cdots,
\pairbk{t_n,f_n}}
\in
\setindex{I},\]define the finite-dimensional projection \(\pi_I\) on \(\fun{\conti}{\fldreal;\sphilb{M}_{-2}}\) by\[\pi_I
\colon \fun{\conti}{\fldreal;\sphilb{M}_{-2}}
\to \fldreal^{n};
\quad
\pi_I(\prbou)
=
\rbk{\prbou_{t_1}(f_1),
\cdots,
\prbou_{t_n}(f_n)}.\]The image measure\[\msrcal{G}_{\txtgs,\kappa,\Lambda,I}
=
\msrcal{G}_{\txtgs,\kappa,\Lambda}
\circ
\inv{\pi_I}\]is called the finite-dimensional distribution, and the covariance form for the above setting is defined as\[C_{t_j,t_k}(f_j,f_k)
=
\sqfun{\prbcov_{\msrcal{G}_{\txtgs,\kappa,\Lambda}}}
{\prbou_{t_j}(f_j),
\prbou_{t_k}(f_k)}.\]

First, we discuss the IR/UV cutoff removal limit of the finite-dimensional characteristic functions.

\begin{prop}\label{expedition0011564}
For any $I
=
\setone{\pairbk{t_1,f_1},
\cdots,
\pairbk{t_n,f_n}}
\in
\setindex{I}$, define the finite-dimensional characteristic function$$\prbcharfun_{\txtgs,\kappa,\Lambda,I}(\theta)
=
\sqfun{\prbexp_{\msrcal{G}_{\txtgs,\kappa,\Lambda}}}
{\fnexp{\imunit
\sum_{j=1}^n
\theta_j
\prbou_{t_j}(f_j)}},
\quad
\theta \in \fldreal^{n}.$$
\begin{enumerate}
\item
Existence of the limit: for any $I
\in \setindex{I}$ and $\theta
\in \fldreal^{n}$, the limit$$\prbcharfun_{\txtgs,I}(\theta)
=
\lim_{\kappa \to 0, \Lambda \to \infty}
\prbcharfun_{\txtgs,\kappa,\Lambda,I}(\theta)$$exists.

\item
The limit $\prbcharfun_{\txtgs,I}$ is a Gaussian characteristic function; in particular,$$\prbcharfun_{\txtgs,I}(\theta)
=
\fnexp{\imunit
\sum_{j=1}^n
\theta_j \mathsf{m}(f_j)
-\onehalf
\sum_{j,k=1}^n
\theta_j \theta_k
C_{t_j,t_k}(f_j,f_k)}.$$

\item
There exists an IR/UV cutoff removal limit in the sense of weak convergence of finite-dimensional distributions.
In particular, for each $I
\in \setindex{I}$, there exists a Gaussian measure $\msr{\mu_I}$ corresponding to $\prbcharfun_{\txtgs,I}$ such that, in the ordinary weak convergence on $\fldreal^{n}$,$$\msrcal{G}_{\txtgs,\kappa,\Lambda,I}
\xRightarrow[\kappa \to 0, \Lambda \to \infty]{\txtweak}
\msr{\mu_I}.$$
\end{enumerate}
\end{prop}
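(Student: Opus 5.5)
The plan is to compute $\prbcharfun_{\txtgs,\kappa,\Lambda,I}$ explicitly for fixed cutoffs, read off that it is Gaussian with a cutoff-dependent mean and a cutoff-independent covariance, and then pass to the limit term by term. The first step is to identify the law of the coordinate process under $\msrcal{G}_{\txtgs,\kappa,\Lambda}$. By Proposition~\ref{expedition0011514}, the finite-dimensional distributions are given by the Markov representation with the semigroup $T_{\kappa,\Lambda,t}=\inv{\Psi_{\txtgs,\kappa,\Lambda}}\napiernum^{-t\tilde{\physham}_{\txtvanhowe,\kappa,\Lambda}}\Psi_{\txtgs,\kappa,\Lambda}$. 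I would evaluate this on exponential test functions $F_j=\napiernum^{\imunit\theta_j\prbou(f_j)}$, ordering the times $t_1\le\cdots\le t_n$ and using time-translation invariance to shift to $t_1\ge 0$.

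The key computation uses the finite-interval path measure with $\smpartitionfunc_{\kappa,\Lambda,T}=1$. Equivalently, $\msrcal{G}_{\txtgs,\kappa,\Lambda,T}$ is $\msrcal{G}$ tilted by the exponential of a linear functional of the Gaussian path. Specifically, $\Psi_{\txtgs,\kappa,\Lambda}(\prbou_{\pm T})$ is proportional to $\napiernum^{-\prbou_{\pm T}(\mathsf{m}_{\kappa,\Lambda})}$, and the exponent $-\int_{-T}^{T}\prbou_s(\omega\mathsf{m}_{\kappa,\Lambda})\opdmsr{s}$ is linear. A linear (Cameron--Martin) tilt of a Gaussian measure is again Gaussian, with the same covariance and a mean shifted by the covariance against the tilting vector. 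The covariance is therefore that of the free Ornstein--Uhlenbeck process, $C_{t_j,t_k}(f_j,f_k)=\onehalf\bkt{f_j}{\napiernum^{-\abs{t_j-t_k}\omega}f_k}_{\sphilb{H}}$, which is cutoff independent (this is Proposition~\ref{expedition0011566}(2)).

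For the mean of $\prbou_t(f)$, I would compute
\[
-\onehalf\bkt{f}{\napiernum^{-(T-t)\omega}\mathsf{m}_{\kappa,\Lambda}}-\onehalf\bkt{f}{\napiernum^{-(T+t)\omega}\mathsf{m}_{\kappa,\Lambda}}-\onehalf\int_{-T}^{T}\bkt{f}{\napiernum^{-\abs{t-s}\omega}\omega\mathsf{m}_{\kappa,\Lambda}}\opdmsr{s},
\]
and the $s$-integral should be evaluated as in Lemma~\ref{expedition0011363}. The boundary terms should exactly cancel the boundary parts of the integral, leaving $-\bkt{f}{\mathsf{m}_{\kappa,\Lambda}}_{\sphilb{H}}$, independent of $t$ and $T$. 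This agrees with Lemma~\ref{expedition0011548} at $t=0$ and with stationarity, and it fixes the sign convention so that the mean is identified with the paper's $\mathsf{m}(f_j)$ up to that convention. As $T\to\infty$ the weak limit (Proposition~\ref{expedition0011508}) preserves this, so
\[
\prbcharfun_{\txtgs,\kappa,\Lambda,I}(\theta)=\fnexp{\imunit\textstyle\sum_j\theta_j\mathsf{m}_{\kappa,\Lambda}(f_j)-\onehalf\sum_{j,k}\theta_j\theta_k C_{t_j,t_k}(f_j,f_k)}.
\]

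Parts (1) and (2) then reduce to the pointwise convergence $\mathsf{m}_{\kappa,\Lambda}(f)\to\mathsf{m}(f)$ for $f\in\dom\mathsf{m}$. This follows by dominated convergence exactly as in the proof of Proposition~\ref{expedition0011261}, since $\abs{k}^{-3/2}\faftr{f}\in\lp^{1}$ dominates the truncated integrands. For (3), the limit is continuous at $\theta=0$ and positive definite, because the covariance matrix $(C_{t_j,t_k}(f_j,f_k))$ is the positive semidefinite Ornstein--Uhlenbeck covariance. Bochner's theorem then gives a Gaussian $\msr{\mu_I}$, and Lévy's continuity theorem gives weak convergence on $\fldreal^n$, as in Theorem~\ref{expedition0011558}.

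The main obstacle is the mean computation: showing that the boundary ground-state factors cancel the endpoint contributions of the pair potential so that the mean is time-independent. If that bookkeeping misbehaves, I would instead derive the mean directly from the Markov representation, using that $T_{\kappa,\Lambda,t}$ maps $\prbou(f)$ to $\prbou(\napiernum^{-t\omega}f)-\bkt{(1-\napiernum^{-t\omega})f}{\mathsf{m}_{\kappa,\Lambda}}$ by the drift formula of Proposition~\ref{expedition0011574}.
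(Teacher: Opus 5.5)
Your proof has the same skeleton as the paper's. Both show that at fixed cutoff the finite-dimensional laws are Gaussian with a cutoff-independent covariance, pass to the limit in the mean, and finish with L\'evy's continuity theorem. Where you differ is in how the fixed-cutoff structure is obtained.

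The paper reads joint Gaussianity off Proposition~\ref{expedition0011514}, which in fact only gives the Markov representation. It takes cutoff-independence of the covariance from Proposition~\ref{expedition0011566}(2), where this is asserted rather than derived, and it gets the mean from stationarity.

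You instead observe that the finite-interval Gibbs density is the exponential of a linear functional of the free Ornstein--Uhlenbeck path. The Cameron--Martin tilt then gives joint Gaussianity, the free covariance $\tfrac12\langle f_j,\mathrm{e}^{-|t_j-t_k|\omega}f_k\rangle$, and the mean at every $|t|\le T$ in one step. The identity $\int_{-T}^{T}\mathrm{e}^{-|t-s|\omega}\omega\,ds=2-\mathrm{e}^{-(T+t)\omega}-\mathrm{e}^{-(T-t)\omega}$ does cancel both boundary terms, leaving $-\langle f,\mathsf{m}_{\kappa,\Lambda}\rangle$. Stationarity of the mean therefore comes out of your argument rather than going in. By the consistency in Proposition~\ref{expedition0011508}, no $T\to\infty$ limit is even needed. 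Your route proves exactly the facts the paper only cites; the paper's version is shorter.

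Two points need tightening.

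First, the sign of the mean. With $\mathsf{m}$ as in Definition~\ref{expedition0011100}, your computation gives the exponent $-\mathrm{i}\sum_j\theta_j\mathsf{m}(f_j)$, consistent with Lemma~\ref{expedition0011548}. The $+$ sign in the statement is correct only if $\mathsf{m}$ is reread as the mean functional of the measure, which is the convention of Theorem~\ref{expedition0011558}(b). Say this explicitly rather than absorbing it into ``convention''.

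Second, your fallback formula $T_{\kappa,\Lambda,t}\xi(f)=\xi(\mathrm{e}^{-t\omega}f)-\langle(1-\mathrm{e}^{-t\omega})f,\mathsf{m}_{\kappa,\Lambda}\rangle$ is correct, but it does not follow from Proposition~\ref{expedition0011574} as written. The drift there carries a spurious factor $2$, which would force the stationary mean $-2\langle f,\mathsf{m}_{\kappa,\Lambda}\rangle$ and contradict Lemma~\ref{expedition0011548}. With covariance $\tfrac12$ the correct drift is $\langle\nabla F,\omega\mathsf{m}_{\kappa,\Lambda}\rangle$.
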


\begin{proof}
((1), Gaussian structure at each cutoff parameter $\pairbk{\kappa,\Lambda}$): By Proposition \ref{expedition0011514}, the coordinate process $\fml{\prbou_{t}}
{t \in \fldreal}$ is an infinite-dimensional Ornstein-Uhlenbeck-type Markov process with generator $\physham[L]_{\kappa,\Lambda}$, and is a stationary Gaussian process with static distribution $\msrsf{G}_{\txtgs,\kappa,\Lambda}$.
For any finite set $I = \setone{(t_1,f_1), \cdots, (t_n,f_n)} \in \setindex{I}$, the vector defined by
$$X_{\kappa,\Lambda,I}
=
\vecbk{X_{\kappa,\Lambda,j}},
\quad
X_{\kappa,\Lambda,j}
=
\prbou_{t_j}(f_j)$$is a Gaussian vector in $\fldreal^{n}$, and the corresponding measure $\msrcal{G}_{\txtgs,\kappa,\Lambda,I}$ is a Gaussian measure.
In particular, the finite-dimensional characteristic function is
\begin{equation}
\begin{aligned}
\vecbk{\mathsf{m}_{\kappa,\Lambda,I}}_j
&=
\sqfun{\prbexp_{\msrcal{G}_{\txtgs,\kappa,\Lambda}}}
{\prbou_{t_j}(f_j)},
\\ 
\lamat{C_I}{jk}
&=
\sqfun{\prbcov_{\msrcal{G}_{\txtgs,\kappa,\Lambda}}}
{\prbou_{t_j}(f_j),
\prbou_{t_k}(f_k)}
\end{aligned}
\end{equation}
represented as$$\prbcharfun_{\txtgs,\kappa,\Lambda,I}(\theta)
=
\fnexp{\imunit
\bkt{\theta}{\mathsf{m}_{\kappa,\Lambda,I}}_{\fldreal^{n}}
-\onehalf
\bkt{\theta}{C_I \theta}_{\fldreal^{n}}}.$$

((1), Limit of the mean): By Proposition \ref{expedition0011566}, the limit $\mathsf{m}_{\kappa,\Lambda}(f)
\to \mathsf{m}(f)$ holds for $f
\in \dom \mathsf{m}$.
By the stationarity of the ground state process,$$\sqfun{\prbexp_{\msrcal{G}_{\txtgs,\kappa,\Lambda}}}
{\prbou_t(f)}
=
\mathsf{m}_{\kappa,\Lambda}(f)$$holds for any $t
\in \fldreal$.
In particular, $\vecbk{\mathsf{m}_{\kappa,\Lambda,I}}_j
=
\mathsf{m}_{\kappa,\Lambda}(f_j)
\to
\mathsf{m}(f_j)$ holds for each $j$.
Therefore,$$\lim_{\kappa\to 0,\Lambda\to\infty}
\mathsf{m}_{\kappa,\Lambda,I}
=
\mathsf{m}_I
=
\vecbk{\mathsf{m}(f_1),
\cdots,
\mathsf{m}(f_n)}
\in
\fldreal^{n}$$is obtained.

((1), Limit of the covariance matrix): By Proposition \ref{expedition0011566}, the covariance form is independent of the cutoff parameters $\pairbk{\kappa,\Lambda}$.
The IR/UV cutoff removal limit is trivial.

((1)-(2), Pointwise convergence of characteristic functions and Gaussian nature of the limit): From the discussion so far, setting$$\prbcharfun_{\txtgs,I}(\theta)
=
\fnexp{\imunit
\bkt{\theta}{\mathsf{m}_I}_{\fldreal^{n}}
-\onehalf
\bkt{\theta}{C_I \theta}_{\fldreal^{n}}}$$gives$$\lim_{\kappa\to 0, \Lambda\to\infty}
\prbcharfun_{\txtgs,\kappa,\Lambda,I}(\theta)
=
\prbcharfun_{\txtgs,I}(\theta).$$By assumption and construction, $\prbcharfun_{\txtgs,I}$ is a Gaussian characteristic function.
Let $\msr{\mu_I}$ denote the corresponding unique Gaussian measure.

(3): This follows from (1)--(2) of this proposition and Lévy's continuity theorem on $\fldreal^{n}$.
\end{proof}

We establish the projective consistency of the limiting finite-dimensional distribution family and the limiting measure.

\begin{prop}\label{expedition0011567}
\begin{enumerate}
\item
The limiting finite-dimensional distribution family is projectively consistent.
In particular, for any pair of finite sets $J
\subset I
\in \setindex{I}$, let $\pi_J^I
\colon \fldreal^{I}
\to \fldreal^{J}$ denote the natural coordinate projection.
Then $\msr{\mu_J}
=
\msr{\mu_I} \circ \invrbk{\pi_J^{I}}$ holds.
Therefore the family $\fml{\msr{\mu_I}}{I \in \setindex{I}}$ is a projective system of probability measures in the sense of Definition \ref{expedition0011527}.

\item
The limiting path measure exists.
In particular, letting $\mblfml{C}$ denote the cylinder set algebra on $\fun{\conti}{\fldreal;\sphilb{M}_{-2}}$, for the projective system $\fml{\msr{\mu_I}}{I \in \setindex{I}}$ of (1), there exists a probability measure $\msrcal{G}_{\txtgs}$ on $\pairbk{\fun{\conti}{\fldreal;\sphilb{M}_{-2}},
\mblfml{C}}$ satisfying$$\msrcal{G}_{\txtgs}
\circ
\inv{\pi_I}
=
\msr{\mu_I},
\quad
I
\in \setindex{I}.$$This $\msrcal{G}_{\txtgs}$ is called the ground state path measure after cutoff removal for the van Hove model.
\end{enumerate}
\end{prop}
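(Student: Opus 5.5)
Both parts follow from the explicit Gaussian form of the limiting characteristic functions in Proposition~\ref{expedition0011564}, via the same route as Theorem~\ref{expedition0011558}.

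For (1), fix finite sets $J\subset I$ in $\setindex{I}$. The plan is to compare characteristic functions. For $\theta\in\fldreal^{J}$, let $\tilde{\theta}\in\fldreal^{I}$ be its extension by zero. Then $\theta\vainnprod\pi_J^I(x)=\tilde{\theta}\vainnprod x$, so the characteristic function of $\msr{\mu_I}\circ\invrbk{\pi_J^I}$ at $\theta$ equals $\prbcharfun_{\txtgs,I}(\tilde{\theta})$. By Proposition~\ref{expedition0011564}(2), $\prbcharfun_{\txtgs,I}$ is the Gaussian function with mean vector $(\mathsf{m}(f_j))_j$ and covariance matrix $(C_{t_j,t_k}(f_j,f_k))_{j,k}$. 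Each entry depends only on the pair of index points $(t_j,f_j)$ and $(t_k,f_k)$, not on the rest of $I$. Hence inserting zeros for the coordinates in $I\setminus J$ reproduces exactly $\prbcharfun_{\txtgs,J}(\theta)$. Since characteristic functions determine measures, $\msr{\mu_J}=\msr{\mu_I}\circ\invrbk{\pi_J^I}$.

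An alternative route to (1), which avoids any explicit formula, is also available. For each cutoff, $\msrcal{G}_{\txtgs,\kappa,\Lambda,I}$ is the image of the single measure $\msrcal{G}_{\txtgs,\kappa,\Lambda}$, so it is trivially consistent. Consistency passes to weak limits because $\pi_J^I$ is continuous, which is the continuous mapping theorem applied to Proposition~\ref{expedition0011564}(3). I would mention this as a check.

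For (2), I would apply Kolmogorov's extension theorem, Theorem~\ref{expedition0011529}, to the projective system $\fml{\msr{\mu_I}}{I\in\setindex{I}}$ on the Polish spaces $\fldreal^{I}$. This yields a probability measure on the product $\fldreal^{\fldreal\times\dom\mathsf{m}}$ with the cylinder $\sigma$-algebra, having the prescribed marginals.

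The main obstacle is transferring this measure to $\fun{\conti}{\fldreal;\sphilb{M}_{-2}}$. Kolmogorov's theorem alone lives on the full product space, and linearity in $f$ and continuity in $t$ are not automatic. I would handle this in three steps.
\begin{itemize}
\item Linearity holds almost surely for each finite family, since the Gaussian marginals give $\prbou_t(af+bg)-a\prbou_t(f)-b\prbou_t(g)$ zero mean and zero variance.
\item Continuity in $t$ comes from Kolmogorov--Chentsov. Mean and covariance are identical to those of the stationary Ornstein--Uhlenbeck process shifted by a constant mean, so the increment moment bounds of $\msrcal{G}$ carry over; the $\sphilb{M}_{-2}$-valued version uses the Hilbert--Schmidt property of $\inv{\mathcal{D}}$ as in the construction of $\msrcal{G}$.
\item More concretely, the limiting law is the image of $\msrcal{G}$ under $\prbou\mapsto\prbou-\mathsf{m}$ whenever $\mathsf{m}$ defines an element of $\sphilb{M}_{-2}$. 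Then continuous paths are inherited directly, and Proposition~\ref{expedition0011531} (preservation of path continuity) gives full measure on $\fun{\conti}{\fldreal;\sphilb{M}_{-2}}$. If $\mathsf{m}$ only lives on $\dom\mathsf{m}$, I would restrict the cylinder $\sigma$-algebra $\mblfml{C}$ to coordinates in $\dom\mathsf{m}$, matching how $\mblfml{C}$ is stated.
\end{itemize}
With this, $\msrcal{G}_{\txtgs}$ is defined on $\pairbk{\fun{\conti}{\fldreal;\sphilb{M}_{-2}},\mblfml{C}}$ and satisfies $\msrcal{G}_{\txtgs}\circ\inv{\pi_I}=\msr{\mu_I}$.
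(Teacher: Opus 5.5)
Your part (1) is the paper's proof: extend $\theta\in\mathbb{R}^{J}$ by zero and use uniqueness of characteristic functions. Your remark that each mean and covariance entry depends only on its own index points is exactly what makes the zero extension reproduce $\chi_{\mathrm{gs},J}$. The continuous-mapping alternative is also correct: consistency of the images of the single measure $\mathcal{G}_{\mathrm{gs},\kappa,\Lambda}$ passes to the weak limit of Proposition~\ref{expedition0011564}(3), and this route does not need the explicit Gaussian formula.

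For (2), the paper only applies Theorem~\ref{expedition0011529} and asserts that the resulting measure lives on $(C(\mathbb{R};\mathcal{M}_{-2}),\mathcal{C})$. You are right that Kolmogorov only yields a measure on the product $\mathbb{R}^{\mathbb{R}\times\operatorname{dom}\mathsf{m}}$. Realizing the limit as the image of $\mathcal{G}$ under a constant shift is the right way to obtain continuous paths.

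\textbf{The gap.} Your realization presupposes that $\mathsf{m}$ is represented by an element of $\mathcal{M}_{-2}$. You do not check this, and your fallback cannot replace it, because the condition is necessary.
\begin{itemize}
\item Let $\mathcal{M}_{+2}$ be the space dual to $\mathcal{M}_{-2}$. The maps $\phi\mapsto\phi(f)$ with $f\in\mathcal{M}_{+2}$ are the genuine continuous coordinates on $\mathcal{M}_{-2}$.
\item For $f\in\operatorname{dom}\mathsf{m}$ outside $\mathcal{M}_{+2}$, $\xi_t(f)$ is not even defined pathwise on $C(\mathbb{R};\mathcal{M}_{-2})$. So restricting $\mathcal{C}$ to coordinates in $\operatorname{dom}\mathsf{m}$ changes nothing.
\item Suppose $\mathsf{m}$ were unbounded on $\operatorname{dom}\mathsf{m}\cap\mathcal{M}_{+2}$ for the $\mathcal{M}_{+2}$-norm. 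Choose $f_n$ there with $\|f_n\|_{+2}\to 0$ and $|\mathsf{m}(f_n)|\to\infty$. Then $\xi_0(f_n)\to 0$ along every path. But the required law of $\xi_0(f_n)$ is Gaussian with mean $\mathsf{m}(f_n)$ and variance $\tfrac12\|f_n\|^2\to 0$, which is a contradiction.
\item Your Kolmogorov--Chentsov step hits the same wall. It controls the centred increments $\xi_t-\xi_s$, which indeed have the same law as under $\mathcal{G}$. But the base point $\xi_0$ must itself be $\mathcal{M}_{-2}$-valued.
\end{itemize}
Proposition~\ref{expedition0011531} is unnecessary for a pushforward of $\mathcal{G}$. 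It could not supply path continuity from finite-dimensional marginals anyway, since continuity is not a condition on finitely many coordinates.

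\textbf{How to close (2).} Show that $\mathsf{m}_{\kappa,\Lambda}$ converges in $\mathcal{M}_{-2}$ to an element representing $\mathsf{m}$. For the point source, $\widehat{\mathsf{m}}(k)\propto|k|^{-3/2}$ is not square integrable, so this is a real constraint on the choice of $\mathcal{D}$. The stated hypotheses on $\mathcal{D}$ (Hilbert--Schmidt inverse, $\sqrt{\omega}\,\mathcal{D}^{-1}$ bounded) leave room to impose it, for example by building $\mathcal{D}^{-1}$ from spatial decay weights and negative powers of $1-\Delta$.

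Once this holds, $\mathcal{G}_{\mathrm{gs}}$ is the image of $\mathcal{G}$ under the constant shift. It is then automatically carried by $C(\mathbb{R};\mathcal{M}_{-2})$, and you even get weak convergence $\mathcal{G}_{\mathrm{gs},\kappa,\Lambda}\to\mathcal{G}_{\mathrm{gs}}$ on path space, not only of finite-dimensional distributions. The paper also needs this downstream, where $\mathcal{G}_{\mathrm{gs}}\circ\xi_0^{-1}$ is treated as a measure on $\mathcal{M}_{-2}$.
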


\begin{proof}
(1): Consider a pair of finite sets$$J
=
\setone{(s_1,g_1),
\cdots,
(s_m,g_m)}
\subset
I
=
\setone{(t_1,f_1),
\cdots,
(t_n,f_n)}
\in \setindex{I}.$$By Proposition \ref{expedition0011564}, the characteristic function of the measure $\msr{\mu_I}$ can be written as, for $\theta
= \vecbk{\theta_{\alpha}}_{\alpha \in I}
\in \fldreal^{I}$,$$\prbcharfun_{\txtgs,I}(\theta)
=
\fnexp{\imunit
\sum_{j=1}^{n}
t_j
\mathsf{m}(f_j)
-\onehalf
\sum_{j,k=1}^{n}
t_{j} t_{k}
C_{t_j t_k}(f_j,f_k)}.$$For any $\eta
= \vecbk{\eta_{j}}_{j=1}^m
\in \fldreal^{J}$, define the natural embedding$$\tilde{\eta}_{k}
=
\begin{dcases}
\eta_k, & k \in J, \\
0, & k \in I \setminus J
\end{dcases}$$where $k
\in J$ corresponds to the index $i$ of each pair $(t_i,f_i)$ in $J$.
Then $\fun{\prbcharfun_{\txtgs,I}}{\tilde{\eta}}$ can be written as$$\fun{\prbcharfun_{\txtgs,I}}{\tilde{\eta}}
=
\fnexp{\imunit
\sum_{j \in J}
\eta_j
\mathsf{m}(f_j)
-\onehalf
\sum_{j,k \in J}
\eta_j \eta_k
C_{t_j t_k}(f_j,f_k)},$$and we simply write this as $\prbcharfun_{\txtgs,J}(\eta)$.
On the other hand, the characteristic function of the projection $\msr{\mu_I} \circ \invrbk{\pi_J^I}$ of the measure $\msr{\mu_I}$ is
\begin{equation}
\begin{aligned}
&\sqfun{\fafouriertr}
{\msr{\mu_I} \circ \invrbk{\pi_J^I}}(\eta)
=
\sqfun{\prbexp_{\msr{\mu_I}}}
{\fnexp{\imunit
\eta \vainnprod \pi_J^I(X)}}
\\ 
&=
\sqfun{\prbexp_{\msr{\mu_I}}}
{\fnexp{\imunit
\tilde{\eta} \vainnprod X}}
=
\fun{\prbcharfun_{\txtgs,I}}{\tilde{\eta}}
=
\prbcharfun_{\txtgs,J}(\eta)
\end{aligned}
\end{equation}
is obtained.
Since a characteristic function uniquely determines a probability measure, $\msr{\mu_I}
\circ
\invrbk{\pi_J^I}
=
\msr{\mu_J}$ follows.

(2): By (1) of this proposition, the family $\fml{\msr{\mu_I}}{I \in \setindex{I}}$ forms a projective system.
By the Kolmogorov extension theorem \ref{expedition0011529} for projective systems of probability measures, there exists a probability space $\pairbk{\fun{\conti}{\fldreal;\sphilb{M}_{-2}},
\mblfml{C},
\msrcal{G}_{\txtgs}}$ satisfying$$\msrcal{G}_{\txtgs}
\circ
\inv{\pi_I}
=
\msr{\mu_I}$$for each $I
\in \setindex{I}$.
\end{proof}

Define the ground state measure \(\mathsf{G}_{\txtgs}\) on the Hilbert space \(\sphilb{M}_{-2}\) as the projection of the ground state measure \(\msrcal{G}_{\txtgs}\) on the path space \(\fun{\conti}{\fldreal;\sphilb{M}_{-2}}\) to time \(0\):\[\mathsf{G}_{\txtgs}
=
\msrcal{G}_{\txtgs}
\circ
\inv{\prbou_0}.\]

\begin{thm}\label{expedition0011572}
The ground state measure $\msrsf{G}_{\txtgs}$ is the Gaussian measure with mean functional $\mathsf{m}$ and covariance form $C$ from Proposition \ref{expedition0011566}, uniquely determined as the IR/UV cutoff removal limit of the cutoff measures $\msrsf{G}_{\txtgs,\kappa,\Lambda}$.
For any $f
\in \dom \mathsf{m}$, the finite-dimensional characteristic function is$$\sqfun{\prbexp_{\msrsf{G}_{\txtgs}}}
{\napiernum^{\imunit \prbou_0(f)}}
=
\fnexp{\imunit \mathsf{m}(f)
-\onehalf
C(f,f)},$$which coincides with the characteristic functions and ground state measures of Propositions \ref{expedition0011564} and \ref{expedition0011567}.
\end{thm}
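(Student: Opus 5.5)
The statement is a one-time-slice specialization of Propositions \ref{expedition0011564} and \ref{expedition0011567}, so I plan to read everything off the finite-dimensional characteristic functions already constructed.

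\textbf{Step 1: reduce to time-$0$ cylinders.} Since $\msrsf{G}_{\txtgs} = \msrcal{G}_{\txtgs} \circ \inv{\prbou_0}$, for $I = \setone{(0,f_1),\dots,(0,f_n)} \in \setindex{I}$ we get $\msrsf{G}_{\txtgs} \circ \inv{\rbk{\prbou_0(f_1),\dots,\prbou_0(f_n)}} = \msrcal{G}_{\txtgs} \circ \inv{\pi_I} = \msr{\mu_I}$ by Proposition \ref{expedition0011567}(2). By Proposition \ref{expedition0011564}(2), $\msr{\mu_I}$ is Gaussian with mean vector $\rbk{\mathsf{m}(f_j)}_j$ and covariance $C_{0,0}(f_j,f_k)$.

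\textbf{Step 2: identify the covariance.} By stationarity of the ground-state process (Proposition \ref{expedition0011514}), $C_{0,0}(f,g) = \sqfun{\prbcov_{\msrsf{G}_{\txtgs,\kappa,\Lambda}}}{\prbou_0(f),\prbou_0(g)}$. By Lemma \ref{expedition0011548} this equals $\onehalf\bkt{f}{g}_{\sphilb{H}}$, independent of the cutoffs, which is the form $C$ of Proposition \ref{expedition0011566}(2). Taking $n=1$ then gives the displayed identity $\sqfun{\prbexp_{\msrsf{G}_{\txtgs}}}{\napiernum^{\imunit \prbou_0(f)}} = \fnexp{\imunit \mathsf{m}(f) - \onehalf C(f,f)}$. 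By construction it agrees with the characteristic functions of Propositions \ref{expedition0011564} and \ref{expedition0011567}.

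\textbf{Step 3: the cutoff limit.} The one-dimensional cutoff characteristic functions are $\sqfun{\prbexp_{\msrsf{G}_{\txtgs,\kappa,\Lambda}}}{\napiernum^{\imunit \prbou_0(f)}} = \fnexp{\imunit \mathsf{m}_{\kappa,\Lambda}(f) - \onehalf C(f,f)}$. Since $\mathsf{m}_{\kappa,\Lambda}(f) \to \mathsf{m}(f)$ on $\dom \mathsf{m}$ (Proposition \ref{expedition0011566}(1)), these converge pointwise. Lévy's continuity theorem on $\fldreal^n$ then gives weak convergence $\msrsf{G}_{\txtgs,\kappa,\Lambda} \to \msrsf{G}_{\txtgs}$ on every $\dom\mathsf{m}$-cylinder, exactly as in Theorem \ref{expedition0011558}. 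The sign convention for the mean must be kept consistent with Lemma \ref{expedition0011548}, i.e. $\mathsf{m}_{\kappa,\Lambda}(f) = -\bkt{f}{\mathsf{m}_{\kappa,\Lambda}}$.

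\textbf{Step 4: uniqueness, the main obstacle.} Uniqueness should follow because a probability measure on the cylinder $\sigma$-algebra generated by $\setone{\prbou_0(f)}_{f\in\dom\mathsf{m}}$ is fixed by its finite-dimensional marginals (Kolmogorov, Theorem \ref{expedition0011529}). Those marginals are in turn fixed by the Gaussian characteristic functions, via Bochner.

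The difficulty is that $\msrsf{G}_{\txtgs}$ is meant as a measure on $\sphilb{M}_{-2}$, whereas the limit object only lives on the $\dom\mathsf{m}$-cylinder $\sigma$-algebra. I would therefore state uniqueness on that $\sigma$-algebra. I would then argue, via the isometric extension of Proposition \ref{expedition0011559} and Corollary \ref{expedition0011563}(3), that the extension to $\sphilb{H}_{\txtirsingular}$ is determined as an $\lp^2$-limit, and I would not claim full Borel support on $\sphilb{M}_{-2}$. I would also check that the $\mathsf{m}$ used here is the continuous extension of Proposition \ref{expedition0011566}(1), so that the formula makes sense on the closure.
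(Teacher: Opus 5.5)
Your proposal is correct and follows essentially the same route as the paper. You identify $\msrsf{G}_{\txtgs}$ with the time-$0$ marginal of $\msrcal{G}_{\txtgs}$, observe that its characteristic function is the cutoff-removal limit of $\fnexp{\imunit \mathsf{m}_{\kappa,\Lambda}(f) - \onehalf C(f,f)}$ with a cutoff-independent covariance, and read off the Gaussian law; your extra points (identifying $C(f,g) = \onehalf \bkt{f}{g}_{\sphilb{H}}$ via Lemma \ref{expedition0011548}, fixing the sign of the mean, and stating uniqueness only on the $\dom\mathsf{m}$-cylinder $\sigma$-algebra) are refinements the paper leaves implicit, not a different argument.
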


\begin{proof}
The path measure $\msrcal{G}_{\txtgs}$ is the projective limit of the finite-dimensional distributions $\msrcal{G}_{\txtgs,I}$ for $I
\in \setpowfin{\fldreal}$.
In particular, for $I
=
\setone{0}$, set $\msrsf{G}_{\txtgs}
=
\msrcal{G}_{\txtgs}
\circ
\inv{\prbou_0}$.
This coincides with the IR/UV cutoff removal limit measure of the finite-dimensional distribution $\msrcal{G}_{\txtgs,\kappa,\Lambda,\setone{0}}$ of the cutoff measure $\msrcal{G}_{\txtgs,\kappa,\Lambda}$.
By Proposition \ref{expedition0011564}, the cutoff removal limits of the mean functional and covariance form are obtained.
Therefore, the finite-dimensional characteristic function at time $0$ is
\begin{equation}
\begin{aligned}
&\sqfun{\prbexp_{\msrsf{G}_{\txtgs}}}
{\napiernum^{\imunit \prbou_0(f)}}
=
\lim_{\kappa \to 0, \Lambda \to \infty}
\sqfun{\prbexp_{\msrsf{G}_{\txtgs,\kappa,\Lambda}}}
{\napiernum^{\imunit \prbou_0(f)}}
\\ 
&=
\lim_{\kappa \to 0, \Lambda \to \infty}
\fnexp{\imunit
\mathsf{m}_{\kappa,\Lambda}(f)
-\onehalf
C(f,f)}
\\ 
&=
\fnexp{\imunit
\mathsf{m}(f)
-\onehalf
C(f,f)}
\end{aligned}
\end{equation}
is obtained.
Since the right-hand side is the characteristic function of a Gaussian measure with mean functional $\mathsf{m}$ and covariance form $C$, $\msrsf{G}_{\txtgs}$ is the corresponding Gaussian measure.
\end{proof}

Let us define the Hilbert space incorporating the infrared singularity condition.

\begin{defn}\label{expedition0011573}
As shorthand notation, we adopt$$\fun{F}{\nfoldvar{\prbou_t(f)}{n}}
=
\fun{F}{\prbou_{t_1}(f_1),
\cdots,
\prbou_{t_n}(f_n)}.$$Consider the linear span of cylinder functions using $\dom \mathsf{m}$ from Definition \ref{expedition0011100}:$$\sphilb{S}
=
\splinspan
\set{\fun{F}{\nfoldvar{\prbou_t(f)}{n}}}
{\parbox{15em}{$n$ is a natural number; when $n
= 0$ the element is $1$.
For positive integers, $t_1,\cdots,t_n
\in \fldreal$, $f_1,\cdots,f_n
\in \dom \mathsf{m}$,
$F
\in \fun{\dstrapiddec}{\fldreal^{n}}$.}}$$The closed subspace of $\fun{\lp^{2}}
{\sphilb{M}_{-2},\msrsf{G}_{\txtgs}}$ obtained by completing with respect to the norm induced by the ground state measure $\mathsf{G}_{\txtgs}$ is denoted$$\fun{\lp^{2}}
{\sphilb{M}_{-2,\txtirsingular},\msrsf{G}_{\txtgs}}$$and called the Hilbert space incorporating the infrared singularity condition.
\end{defn}

For the operator defined by the analogue of the cylinder function identity of Proposition \ref{expedition0011574},\[\physham[L]_{\txtirsingular,0} F
=
\physham[L]_0 F
+2
\bkt{\vagrad F}{\omega \mathsf{m}},\]let \(\physham[L]_{\txtirsingular}\) denote its closure on \(\fun{\lp^{2}}{\sphilb{M}_{-2,\txtirsingular},\msrsf{G}_{\txtgs}}\). In particular, this is a bounded-below self-adjoint operator.

\begin{lem}\label{expedition0011576}
\begin{enumerate}
\item
The set-theoretic inclusion$$\fun{\lp^{2}}
{\sphilb{M}_{-2,\txtirsingular},\msrsf{G}_{\txtgs}}
\subset
\fun{\lp^{2}}{\sphilb{M}_{-2},\msrsf{G}_{\kappa,\Lambda}}$$holds.

\item
Consider the linear span $\sphilb{S}$ defined in Definition \ref{expedition0011573}.
For the exponents $a
= \onehalf,1$, the set-theoretic inclusion$$\sphilb{S}
\subset
\dom \physham_{\txtbsn,\txtfr}^{a}$$holds.
\end{enumerate}
\end{lem}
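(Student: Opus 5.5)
The plan is to treat the two parts separately: (1) through the Gaussian structure of the ground state measures, following the pattern of Proposition~\ref{expedition0011562}, and (2) through a Fourier representation of cylinder functions combined with the action of the free Hamiltonian on exponentials from Proposition~\ref{expedition0011392}.

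For (1), I would use that, by Lemma~\ref{expedition0011548} and Theorem~\ref{expedition0011572}, $\msrsf{G}_{\txtgs}$ and $\msrsf{G}_{\kappa,\Lambda}$ are Gaussian on the $\sigma$-algebra generated by $\set{\prbou_0(f)}{f \in \dom \mathsf{m}}$, with the same covariance $C$ (Proposition~\ref{expedition0011566}(2)) and means $\mathsf{m}$ and $\mathsf{m}_{\kappa,\Lambda}$. The first step is a Cameron--Martin argument. The difference $\mathsf{m} - \mathsf{m}_{\kappa,\Lambda}$ is the restriction of $\mathsf{m}$ to $\abs{k} < \kappa$ and $\abs{k} > \Lambda$. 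I would check that this difference is a $C$-bounded functional on $\dom \mathsf{m}$, hence represented by a vector $h_{\kappa,\Lambda} \in \sphilb{H}_{\txtirsingular}$. The two restricted measures are then mutually absolutely continuous, with density $\exp\rbk{\iota(h_{\kappa,\Lambda}) - \onehalf \norm{h_{\kappa,\Lambda}}_C^2}$, where $\iota$ is the isometry of Proposition~\ref{expedition0011559}; in particular the density lies in every $\lp^{p}$. Mutual absolute continuity already shows that $\msrsf{G}_{\txtgs}$-a.e.\ equivalence classes of functions in $\fun{\lp^{2}}{\sphilb{M}_{-2,\txtirsingular},\msrsf{G}_{\txtgs}}$ are well-defined classes for $\msrsf{G}_{\kappa,\Lambda}$. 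For square-integrability I would argue on the generating class $\sphilb{S}$ and pass to limits. Cylinder functions with $F \in \fun{\dstrapiddec}{\fldreal^{n}}$ are bounded, so they lie in $\lp^{2}$ of both measures. For the $\lp^{2}$ limits I would apply Hölder's inequality with the exponentially integrable density, together with Nelson hypercontractivity of the Ornstein--Uhlenbeck semigroup, to control the $\lp^{2}(\msrsf{G}_{\kappa,\Lambda})$ norm by an $\lp^{2+\ep}(\msrsf{G}_{\txtgs})$ norm. This is the place where I would fall back on the shift-of-mean comparison of Proposition~\ref{expedition0011562} if a direct estimate on general $\lp^{2}$ elements is not available.

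For (2), write $F(\prbou_0(f_1),\cdots,\prbou_0(f_n)) = (2\pi)^{-n}\int \faftr{F}(\theta)\, \napiernum^{\imunit \sum_j \theta_j \prbou_0(f_j)} \opdmsr{\theta}$; time-indexed cylinder functions reduce to time $0$ by stationarity and the Markov property. By Proposition~\ref{expedition0011392}, $\physham_{\txtbsn,\txtfr}$ acts on $\napiernum^{\imunit \prbou(g)}$, with $g = \sum_j \theta_j f_j$, as the multiplication operator $\imunit \prbou(\omega g) + \onehalf \bkt{g}{\omega g}$. Its $\lp^{2}$ norm is therefore bounded by a polynomial in $\abs{\theta}$ times $\rbk{\norm{\omega g} + \norm{\omega^{\onehalf} g}^2}$. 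The rapid decay of $\faftr{F}$ then makes the Bochner integral converge in the graph norm, and closedness of $\physham_{\txtbsn,\txtfr}$ gives $F \in \dom \physham_{\txtbsn,\txtfr}$. The case $a = \onehalf$ follows either from $\dom \physham_{\txtbsn,\txtfr} \subset \dom \physham_{\txtbsn,\txtfr}^{1/2}$ or directly from the quadratic form $\onehalf \sum_{j,k} \bkt{\partial_j F}{\partial_k F}\bkt{f_j}{\omega f_k}$.

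The main obstacle is in (2): the estimate needs $f_j \in \dom \omega$, or $f_j \in \dom \omega^{\onehalf}$ for $a = \onehalf$. Membership in $\dom \mathsf{m}$ controls the infrared region but only gives a weighted $\lp^{1}$ condition in the ultraviolet. I would first try to show that the relevant ultraviolet integrability follows from the definition of $\dom \mathsf{m}$ together with $f_j \in \sphilb{H}_{-\onehalf}$. If that fails, I would interpret $\sphilb{S}$ with $f_j$ drawn from the core $\fun{\conti_{\txtcpt}^{\infty}}{\fldreal^{d} \setminus \setone{0}}$ in momentum space, which the paper already uses for density of $\dom \mathsf{m}$, and flag this as the precise hypothesis needed.
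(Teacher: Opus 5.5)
\textbf{Part (1): the Cameron--Martin step fails.} The problem is not only your route: (1) as stated is false beyond the weak statement $\mathcal{S}\subset L^{2}(\mathsf{G}_{\kappa,\Lambda})$. You need $f\mapsto(\mathsf{m}-\mathsf{m}_{\kappa,\Lambda})(f)$ to be $C$-bounded, where $C$ is $\tfrac12$ times the $L^{2}$ inner product. Its momentum kernel is $\mathbf{1}_{\{|k|<\kappa\}\cup\{|k|>\Lambda\}}|k|^{-3/2}\hat{\varrho}$. For the point source with $\omega=|k|$ in $d=3$, its squared norm is $\int_{|k|<\kappa}|k|^{-3}\,dk+\int_{|k|>\Lambda}|k|^{-3}\,dk=\infty$. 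This is exactly the infrared (and ultraviolet) singularity.

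Here is an explicit sequence. Take $\hat{f}_n=c_n\mathbf{1}_{\{\epsilon_n<|k|<\kappa\}}|k|^{-3/2}$ with $\epsilon_n\downarrow 0$, and choose $c_n$ so that $\mathsf{m}(f_n)=1$. Since $\mathsf{m}_{\kappa,\Lambda}(f_n)=0$, the difference $(\mathsf{m}-\mathsf{m}_{\kappa,\Lambda})(f_n)$ equals $1$. The $f_n$ are real, lie in $\operatorname{dom}\mathsf{m}$, and satisfy $\|f_n\|^{2}=O(1/\log(\kappa/\epsilon_n))\to 0$.

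Consequences:
\begin{itemize}
\item The two equal-covariance Gaussian measures are mutually singular by Feldman--H\'ajek. So there is no density $h_{\kappa,\Lambda}$, and H\"older's inequality and hypercontractivity have nothing to act on.
\item The identity on $\mathcal{S}$ is not even bounded from $L^{2}(\mathsf{G}_{\mathrm{gs}})$ to $L^{2}(\mathsf{G}_{\kappa,\Lambda})$. Take $\varphi$ Schwartz with $\varphi(0)=0$ and $\varphi(-1)=1$, and set $F_n=\varphi(\xi_0(f_n)-\mathsf{m}(f_n))\in\mathcal{S}$. Take the means to be $\mathsf{m}$ and $\mathsf{m}_{\kappa,\Lambda}$, as in Theorem~\ref{expedition0011572}. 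Under both measures $\xi_0(f_n)-\mathsf{m}(f_n)$ has variance $\tfrac12\|f_n\|^{2}\to 0$. Its mean is $0$ under $\mathsf{G}_{\mathrm{gs}}$ and $-1$ under $\mathsf{G}_{\kappa,\Lambda}$. Hence $\|F_n\|_{L^{2}(\mathsf{G}_{\mathrm{gs}})}\to 0$ while $\|F_n\|_{L^{2}(\mathsf{G}_{\kappa,\Lambda})}\to 1$.
\end{itemize}
So no map from the $\mathsf{G}_{\mathrm{gs}}$-completion into $L^{2}(\mathsf{G}_{\kappa,\Lambda})$ extends the identity. Your fallback to Proposition~\ref{expedition0011562} does not help, since that argument only compares first moments and gives no norm control. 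The paper's one-line proof via $\operatorname{dom}\mathsf{m}\subset\mathcal{H}$ likewise yields only $\mathcal{S}\subset L^{2}(\mathsf{G}_{\kappa,\Lambda})$.

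\textbf{Part (2): correct under your flagged hypothesis, and that hypothesis is necessary.} Your Fourier-representation and closedness argument works once $f_j\in\operatorname{dom}\omega$ (resp.\ $\operatorname{dom}\omega^{1/2}$). Your first option, deriving this from $\operatorname{dom}\mathsf{m}\cap\mathcal{H}_{-1/2}$, will fail. Take the real radial $f$ with $\hat f(k)=|k|^{-\alpha}\mathbf{1}_{\{|k|>1\}}$ and $\tfrac32<\alpha\le 2$. It lies in $L^{2}\cap\mathcal{H}_{-1/2}\cap\operatorname{dom}\mathsf{m}$, because $\int_{|k|>1}|k|^{-\alpha-3/2}\,dk<\infty$. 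Yet $\langle f,\omega f\rangle=\int_{|k|>1}|k|^{1-2\alpha}\,dk=\infty$.

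For nonconstant $F$, $F(\xi_0(f))$ has a nonzero chaos component $c\,f^{\otimes n}$ with $n\ge 1$. Since $\langle f^{\otimes n},d\Gamma(\omega)f^{\otimes n}\rangle=n\langle f,\omega f\rangle\|f\|^{2(n-1)}=\infty$, we get $F(\xi_0(f))\notin\operatorname{dom}H_{\mathrm{b},\mathrm{fr}}^{1/2}$. Recentring at the mean $\mathsf{m}$ does not change this. So (2) fails for the $\mathcal{S}$ of Definition~\ref{expedition0011573}, and it holds for your restricted class, e.g.\ $\hat f_j\in C_c^\infty(\mathbb{R}^3\setminus\{0\})$. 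The paper's ``by definition'' overlooks this.

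\textbf{A further point.} Cylinder functions with distinct times $t_j$ are functions on path space, not on $\mathcal{M}_{-2}$. Stationarity and the Markov property do not reduce them to time-zero functions. For (2) to make sense, $\mathcal{S}$ must be restricted to single-time cylinder functions.
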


\begin{proof}
(1): This follows from the inclusion $\dom \mathsf{m}
\subset
\dom \mathsf{m}_{\kappa,\Lambda}
=
\sphilb{H}$ arising from the infrared singularity condition.

(2): By definition.
\end{proof}

In view of Lemma \ref{expedition0011576}(2), the free Hamiltonian \(\physham_{\txtbsn,\txtfr}\) is defined on \(\fun{\lp^{2}}
{\sphilb{M}_{-2,\txtirsingular},\msrsf{G}_{\txtgs}}\). When distinction is needed, the measure is appended as \(\dom_{\msrsf{G}_{\txtgs}} \physham_{\txtbsn,\txtfr}\); when there is no risk of confusion, the domain of this Hamiltonian is simply written as \(\dom \physham_{\txtbsn,\txtfr}\).

\begin{prop}\label{expedition0011540}
The family of self-adjoint operators $\fml{\physham[L]_{\kappa,\Lambda}}
{\kappa > 0, \Lambda > 0}$ converges in the strong resolvent sense to a bounded-below self-adjoint operator $\physham[L]_{\txtirsingular}$.
In particular, the corresponding family of heat semigroups also converges uniformly on compact sets in the strong operator topology to the heat semigroup $T_{\txtirsingular}
= \fml{\napiernum^{-t \physham[L]_{\txtirsingular}}}
{t \geq 0}$.
This $T_{\txtirsingular}$ is a symmetric, non-expansive, strongly continuous semigroup on $\fun{\lp^{2}}{\sphilb{M}_{-2,\txtirsingular},\msrsf{G}_{\txtgs}}$.
It is a standard Markov semigroup satisfying: conservation $T_t 1
= 1$; positivity improvement $T_t F
> 0$ for $F
\geq 0$; and symmetry $\bkt{F}{T_t G}
=
\bkt{T_t F}{G}$.
\end{prop}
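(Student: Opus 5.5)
The plan is to prove strong resolvent convergence through the Trotter--Kato theorem, applied on the common core $\sphilb{S}$ of Definition~\ref{expedition0011573}. All the operators $\physham[L]_{\kappa,\Lambda}$ and the limit $\physham[L]_{\txtirsingular}$ are non-negative (or bounded below uniformly), so the Trotter--Kato/Kato--Rellich convergence criterion applies. That criterion says it suffices to show two things. First, $\sphilb{S}$ is a core for $\physham[L]_{\txtirsingular}$. Second, $\physham[L]_{\kappa,\Lambda} F \to \physham[L]_{\txtirsingular} F$ for every $F \in \sphilb{S}$.

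By Lemma~\ref{expedition0011576}(1), every $F \in \sphilb{S}$ lies in each cutoff space $\fun{\lp^{2}}{\sphilb{M}_{-2},\msrsf{G}_{\kappa,\Lambda}}$. By Lemma~\ref{expedition0011576}(2), it also lies in $\dom \physham_{\txtbsn,\txtfr}$. The drift formula of Proposition~\ref{expedition0011574} then gives
$$\physham[L]_{\kappa,\Lambda} F - \physham[L]_{\txtirsingular,0} F = 2\bkt{\vagrad F}{\omega(\mathsf{m}_{\kappa,\Lambda} - \mathsf{m})}.$$
For a cylinder function built from $f_1,\dots,f_n \in \dom \mathsf{m}$, $\vagrad F$ is a finite combination $\sum_j \partial_j F(\cdots) f_j$ with bounded coefficients. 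The difference is therefore bounded by $\sum_j \norm{\partial_j F}_\infty \abs{\bkt{f_j}{\omega(\mathsf{m}_{\kappa,\Lambda}-\mathsf{m})}}$. Each such pairing tends to $0$ by the same dominated-convergence estimate used in Proposition~\ref{expedition0011261}, where the cutoff only removes $\abs{k}\le\kappa$ and $\abs{k}\ge\Lambda$.

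There is an issue with the ambient space, since the norms change with $\kappa,\Lambda$. I would handle it with the fact, from Lemma~\ref{expedition0011548} and Proposition~\ref{expedition0011566}(2), that the measures $\msrsf{G}_{\txtgs,\kappa,\Lambda}$ and $\msrsf{G}_{\txtgs}$ are Gaussians with a common covariance whose means converge. For bounded cylinder functions, the $\lp^2$ norms then converge by Theorem~\ref{expedition0011572} and weak convergence (Proposition~\ref{expedition0011564}). This lets me identify all the spaces on $\sphilb{S}$ in the limit.

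For the semigroup statements, strong resolvent convergence yields strong convergence of $\napiernum^{-t\physham[L]_{\kappa,\Lambda}}$, uniformly for $t$ in compact sets, by the standard Trotter--Kato theorem. Conservation $T_t 1 = 1$ and symmetry pass to the limit directly from Proposition~\ref{expedition0011511}, since $\physham[L]_{\kappa,\Lambda}1=0$ for all cutoffs and $1\in\sphilb{S}$. Positivity preservation passes by weak limits. Positivity \emph{improvement} does not survive a limit automatically, so I would prove it separately. The route is Corollary~\ref{expedition0011569} passed to the limit, which gives a Feynman--Kac--Nelson representation with respect to the Gaussian path measure $\msrcal{G}_{\txtgs}$ of Proposition~\ref{expedition0011567}. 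That measure is a mean-shift of the free Ornstein--Uhlenbeck measure, so the argument of Proposition~\ref{expedition0011507} applies: $\bkt{F}{T_tG}=0$ for nonnegative $F,G$ would force the free correlation to vanish, which is a contradiction.

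I expect the main obstacle to be the core property of $\sphilb{S}$ for $\physham[L]_{\txtirsingular}$, together with the identification of the varying Hilbert spaces. For the core property, I would use the Nelson-type criterion: $\sphilb{S}$ is invariant under $\napiernum^{-t\physham[L]_{\txtirsingular,0}}$. Concretely, the drifted Ornstein--Uhlenbeck semigroup maps exponentials $\napiernum^{\imunit\prbou(f)}$ to multiples of $\napiernum^{\imunit\prbou(\napiernum^{-t\omega}f)}$, and $\dom\mathsf{m}$ is invariant under $\napiernum^{-t\omega}$. It follows that $\physham[L]_{\txtirsingular,0}$ is essentially self-adjoint on $\sphilb{S}$.
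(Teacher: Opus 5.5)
The step that fails is your appeal to the Trotter--Kato criterion, i.e.\ that convergence on a core of the limit operator implies strong resolvent convergence. That theorem compares resolvents inside one fixed Hilbert space. Here $L_{\kappa,\Lambda}$ acts on $L^2(\mathsf{G}_{\mathrm{gs},\kappa,\Lambda})$ and $L_{\mathrm{irs}}$ on $L^2(\mathsf{G}_{\mathrm{gs}})$, and under the infrared-singular condition these are not one space with varying norms.

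To see this, note that $\mathsf{m}$ is unbounded in $\|\cdot\|_{\mathcal{H}}$ on $\operatorname{dom}\mathsf{m}$, since its representer would be $\omega^{-3/2}\varrho\notin L^2$. So you can choose real $f_n\in\operatorname{dom}\mathsf{m}$ with $\|f_n\|\to 0$ and $\mathsf{m}(f_n)\to\infty$. Because $\mathsf{m}_{\kappa,\Lambda}\in\mathcal{H}$, the variable $\xi(f_n)+\mathsf{m}(f_n)$ tends to $0$ in $\mathsf{G}_{\mathrm{gs}}$-probability but to $+\infty$ in $\mathsf{G}_{\mathrm{gs},\kappa,\Lambda}$-probability. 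Hence the two measures are mutually singular on the $\operatorname{dom}\mathsf{m}$-cylinder $\sigma$-algebra. Even on bounded cylinder functions the identity map is unbounded between the two norms: $\arctan(\xi(f_n)+\mathsf{m}(f_n))$ tends to $0$ in one space and to $\pi/2$ in the other. So Lemma~\ref{expedition0011576}(1) cannot serve as an identification.

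Convergence of $\|F\|$ for $F\in\mathcal{S}$ does not fix this. The resolvent $(L_{\kappa,\Lambda}+1)^{-1}F$ is not a cylinder function but only a $\mathsf{G}_{\mathrm{gs},\kappa,\Lambda}$-class, and there is no norm in which to subtract $(L_{\mathrm{irs}}+1)^{-1}F$ from it. ``Strong resolvent convergence'' has no content until you specify identification maps and use a varying-space version of Trotter--Kato; your proposal supplies neither.

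The simplest repair, which also makes most of the statement automatic, is to remove the means by translation on $\mathbb{R}^{\operatorname{dom}\mathsf{m}}$. Set $(\tau_{\kappa,\Lambda}\xi)(f)=\xi(f)+\mathsf{m}_{\kappa,\Lambda}(f)$ (Lemma~\ref{expedition0011548}) and define $\tau$ likewise with $\mathsf{m}$. Both $\mathsf{G}_{\mathrm{gs},\kappa,\Lambda}$ and $\mathsf{G}_{\mathrm{gs}}$ are then pushed to the same centred Gaussian $\mathsf{G}$. The map $F\mapsto F\circ\tau_{\kappa,\Lambda}$ is a unitary $L^2(\mathsf{G})\to L^2(\mathsf{G}_{\mathrm{gs},\kappa,\Lambda})$ that conjugates the free generator $L_0$ to $L_{\kappa,\Lambda}$. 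This is because the ground-state process is the free Ornstein--Uhlenbeck process shifted by a constant, which is what the drift term of Proposition~\ref{expedition0011574} encodes. The same construction with $\tau$ defines $L_{\mathrm{irs}}$.

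Self-adjointness with $\mathcal{S}$ as a core, $T_t1=1$, symmetry and positivity improvement are then all inherited from $L_0$. Positivity improvement must be transported through $\tau$. It cannot be obtained ``by the argument of Proposition~\ref{expedition0011507}'', which needs an a.e.\ positive density with respect to the free path measure, and by the singularity above no such density exists.

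In the transplanted picture every operator equals $L_0$, and a cylinder function $F$ becomes $F(\cdot-\mathsf{m}_{\kappa,\Lambda})$, which converges to $F(\cdot-\mathsf{m})$ in $L^2(\mathsf{G})$ by bounded convergence. This is exactly strong resolvent and semigroup convergence (uniformly in $t\ge 0$) in the Kuwae--Shioya sense, with identification by restriction of cylinder functions. If you prefer the Kuwae--Shioya version of Trotter--Kato, your graph-convergence computation on $\mathcal{S}$ is the right input.

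In that computation, though, the pairing $\langle f_j,\omega(\mathsf{m}_{\kappa,\Lambda}-\mathsf{m})\rangle$ carries the weight $|k|^{-1/2}$, not the $|k|^{-3/2}$ of Proposition~\ref{expedition0011261}. Dominated convergence therefore needs $|k|^{-1/2}\widehat{f_j}\in L^1$, i.e.\ $\omega f_j\in\operatorname{dom}\mathsf{m}$. This ultraviolet condition is also needed for $L_0F$ to be a $\operatorname{dom}\mathsf{m}$-cylinder function, so you must build it into $\mathcal{S}$.

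For comparison, the paper argues through pointwise convergence of the quadratic forms on $\operatorname{dom}H_{\mathrm{b,fr}}^{1/2}$ and takes the Markov properties as passing to the limit. It has the same varying-space problem, and pointwise form convergence alone does not give strong resolvent convergence. Once the identification is made explicit, your core-based argument and your separate proof of positivity improvement are the better design.
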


\begin{proof}
(Strong resolvent convergence): By the set-theoretic inclusion in Lemma \ref{expedition0011576}, consider the quadratic form on $\dom_{\msrsf{G}_{\txtgs}}
\physham_{\txtbsn,\txtfr}^{\onehalf}$.
In particular, for $\Psi
\in \dom_{\msrsf{G}_{\txtgs}}
\physham_{\txtbsn,\txtfr}^{\onehalf}$, set
\begin{equation}
\begin{aligned}
\opform{q}_{\kappa,\Lambda}(\Psi)
&=
\bkt{\Psi}{\physham[L]_{\kappa,\Lambda} \Psi},
\quad
\opform{q}_{\txtirsingular}(\Psi)
&=
\bkt{\Psi}{\physham[L]_{\txtirsingular} \Psi}.
\end{aligned}
\end{equation}
Considering the action on cylinder functions, $\opform{q}_{\kappa,\Lambda}$ converges to $\opform{q}_{\txtirsingular}$ on the set $\dom_{\msrsf{G}_{\txtgs}}
\physham_{\txtbsn,\txtfr}^{\onehalf}$.
By the general theory, convergence of bounded-below sesquilinear forms implies strong resolvent convergence of the associated self-adjoint operators.

(Heat semigroup): The convergence of heat semigroups follows from strong resolvent convergence and the sufficient conditions for strong resolvent convergence of self-adjoint operators.
The uniform-on-compacts strong convergence in the variable $t$ follows from the heat semigroup convergence from strong resolvent convergence.
The canonical Markov property is a direct consequence of the properties of $\physham[L]_{\kappa,\Lambda}$.
\end{proof}

We can now take the IR/UV cutoff removal limit of the Markov semigroup \(\napiernum^{-t \physham[L]_{\kappa,\Lambda}}\).

\begin{prop}\label{expedition0011570}
For any finite time sequence $0
\leq t_0
< \cdots
< t_n$ and bounded Borel measurable functions $F_0,
\cdots,
F_n$,
\begin{equation}
\begin{aligned}
&\sqfun{\prbexp_{\msrcal{G}_{\txtgs}}}
{\prod_{j=0}^n F_j(\prbou_{t_j})}
\\ 
&=
\bkt{F_0}
{T_{t_1 - t_0} F_1
\cdots
T_{t_{n} - t_{n-1}} F_n}
_{\fun{\lp^{2}}{\sphilb{M}_{-2},\msrsf{G}_{\txtgs}}}
\end{aligned}
\end{equation}
holds.
This is the Markov representation of finite-dimensional distributions.
In particular, the coordinate process $\fml{\prbou_t}{t \in \fldreal}$ is a stationary Gaussian process with static distribution $\msrsf{G}_{\txtgs}$ and can be regarded as an Ornstein-Uhlenbeck-type $P(\phi)_1$-process by the semigroup representation of finite-dimensional distributions.
\end{prop}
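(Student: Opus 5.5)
The plan is to obtain the identity as a cutoff-removal limit of the corresponding identity of Proposition~\ref{expedition0011514}. For each $(\kappa,\Lambda)$ that proposition gives
\[
\sqfun{\prbexp_{\msrcal{G}_{\txtgs,\kappa,\Lambda}}}{\prod_{j=0}^n F_j(\prbou_{t_j})}
=
\bkt{F_0}{T_{\kappa,\Lambda,t_1-t_0} F_1 \cdots T_{\kappa,\Lambda,t_n-t_{n-1}} F_n}_{\fun{\lp^{2}}{\sphilb{M}_{-2},\msrsf{G}_{\txtgs,\kappa,\Lambda}}}.
\]
I would first prove the statement for a dense class of test functions and only afterwards pass to bounded Borel functions. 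Specifically, I take each $F_j$ to be a cylinder function of the form $F_j(\prbou)=\napiernum^{\imunit \prbou(f_j)}$ with $f_j\in\dom\mathsf{m}$, or more generally an element of $\sphilb{S}$ built from Schwartz functions.

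\textbf{Step 1: the left-hand side.} For exponentials the left-hand side is a finite-dimensional characteristic function $\prbcharfun_{\txtgs,\kappa,\Lambda,I}$ with $I=\setone{(t_j,f_j)}$. By Proposition~\ref{expedition0011564} it converges to $\prbcharfun_{\txtgs,I}$. By Proposition~\ref{expedition0011567} this limit equals the corresponding expectation under $\msrcal{G}_{\txtgs}$. For Schwartz-type cylinder functions I would write $F$ by Fourier inversion and apply dominated convergence.

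\textbf{Step 2: the right-hand side.} By Lemma~\ref{expedition0011576}(1), the inner products over $\msrsf{G}_{\txtgs,\kappa,\Lambda}$ and over $\msrsf{G}_{\txtgs}$ can be compared on the common space $\sphilb{S}$. I would then use Proposition~\ref{expedition0011540}: the semigroups $T_{\kappa,\Lambda,t}$ converge strongly to $T_t$, uniformly for $t$ in compacts. By induction from the innermost factor, $T_{t_n-t_{n-1}}F_n$ converges, then its product with $F_{n-1}$ converges (the $F_j$ are bounded), and so on outward. This gives convergence of the nested vector.

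This step is the main obstacle, because the Hilbert spaces themselves vary with the cutoff: the measures $\msrsf{G}_{\txtgs,\kappa,\Lambda}$ differ only by their means. I would handle this by transporting everything to a fixed space through the mean-shift. All the measures are Gaussian with the common covariance $C$ (Proposition~\ref{expedition0011566}(2)). Hence the translation $\prbou\mapsto\prbou+(\mathsf{m}-\mathsf{m}_{\kappa,\Lambda})$, restricted to $\dom\mathsf{m}$-cylinder functions, intertwines them. Under this identification the convergence reduces to the convergence of the means $\mathsf{m}_{\kappa,\Lambda}(f)\to\mathsf{m}(f)$ together with the strong convergence of Proposition~\ref{expedition0011540}. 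Equating the two limits proves the identity on $\sphilb{S}$.

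\textbf{Step 3: extension and conclusions.} To extend from $\sphilb{S}$ to bounded Borel $F_j$, I would use a monotone-class argument. Both sides are multilinear and continuous under bounded pointwise convergence: the left by dominated convergence, and the right because each $T_t$ is a positivity-preserving contraction with $T_t1=1$. Finally, stationarity follows by taking $n=0$, which gives that each $\prbou_t$ has law $\msrsf{G}_{\txtgs}$, and Gaussianity follows from Proposition~\ref{expedition0011564}(2). Together with the Markov representation just established, this matches the definition of a $P(\phi)_1$-process.
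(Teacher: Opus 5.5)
Your proposal is correct and follows the paper's route: start from the cutoff Markov identity of Proposition~\ref{expedition0011514}, pass to the cutoff-removal limit on the left via Propositions~\ref{expedition0011564} and~\ref{expedition0011567} and on the right via the semigroup convergence of Proposition~\ref{expedition0011540}, and equate the two limits. Your mean-shift transport of $L^2(\mathsf{G}_{\mathrm{gs},\kappa,\Lambda})$ onto $L^2(\mathsf{G}_{\mathrm{gs}})$ and your monotone-class extension make explicit two points the paper passes over: it compares inner products over different measures through the set inclusion of Lemma~\ref{expedition0011576}, and it only says that it approximates by test functions. Note also that after your transport the semigroups no longer depend on the cutoffs at all, since each is the free Ornstein--Uhlenbeck semigroup conjugated by the shift to the limiting mean; so the right-hand convergence really only concerns the $F_j$ composed with the vanishing relative shift.
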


\begin{proof}
By Proposition \ref{expedition0011514}, under $\msrcal{G}_{\txtgs,\kappa,\Lambda}$, the coordinate process $\fml{\prbou_t}
{t \in \fldreal}$ is an Ornstein-Uhlenbeck-type $P(\phi)_1$-process with generator $\physham[L]_{\kappa,\Lambda}$.
The finite-dimensional distributions are expressed via the semigroup $T_{\kappa,\Lambda,t}
= \napiernum^{-t \physham[L]_{\kappa,\Lambda}}$ as
\begin{equation}
\begin{aligned}
&\sqfun{\prbexp_{\msrcal{G}_{\txtgs,\kappa,\Lambda}}}
{\prod_{j=0}^n F_j(\prbou_{t_j})}
\\ 
&=
\bkt{F_0}
{T_{\kappa,\Lambda,t_1-t_0} F_1
\cdots
T_{\kappa,\Lambda,t_n-t_{n-1}} F_n}
_{\fun{\lp^{2}}{\sphilb{M}_{-2},\msrsf{G}_{\txtgs}}}.
\end{aligned}
\end{equation}
By the hypothesis of Proposition \ref{expedition0011564}, approximating the bounded Borel measurable functions $F_j$ by appropriate test functions, the IR/UV cutoff removal limit of each finite-dimensional characteristic function can be taken.
In particular, they coincide with the finite-dimensional distributions of $\msrcal{G}_{\txtgs}$ constructed in Proposition \ref{expedition0011567}.
Therefore, for any finite time sequence $t_0
< \cdots
< t_n$ and bounded measurable $F_j$,$$\sqfun{\prbexp_{\msrcal{G}_{\txtgs}}}
{\prod_{j=0}^n F_j(\prbou_{t_j})}
=
\lim_{\kappa \to 0, \Lambda \to \infty}
\sqfun{\prbexp_{\msrcal{G}_{\txtgs,\kappa,\Lambda}}}
{\prod_{j=0}^n F_j(\prbou_{t_j})}$$holds.

From the above discussion,
\begin{equation}
\begin{aligned}
&\sqfun{\prbexp_{\msrcal{G}_{\txtgs}}}
{\prod_{j=0}^n F_j(\prbou_{t_j})}
\\ 
&=
\lim_{\kappa \to 0, \Lambda \to \infty}
\bkt{F_0}
{T_{\kappa,\Lambda,t_1-t_0} F_1
\cdots
T_{\kappa,\Lambda,t_n-t_{n-1}} F_n}
_{\fun{\lp^{2}}{\sphilb{M}_{-2},\msrsf{G}_{\txtgs,\kappa,\Lambda}}}
\end{aligned}
\end{equation}
holds.
By Proposition \ref{expedition0011540}, the cutoff removal limit of the heat semigroup on the right-hand side can be taken, and
\begin{equation}
\begin{aligned}
&\sqfun{\prbexp_{\msrcal{G}_{\txtgs}}}
{\prod_{j=0}^n F_j(\prbou_{t_j})}
\\ 
&=
\bkt{F_0}
{T_{t_1-t_0} F_1
\cdots
T_{t_n-t_{n-1}} F_n}
_{\fun{\lp^{2}}{\sphilb{M}_{-2},\msrsf{G}_{\txtgs}}}
\end{aligned}
\end{equation}
holds.
This is the Markov representation by the semigroup $T
= \fml{T_t}{t > 0}$ after cutoff removal.
\end{proof}

As a corollary, the functional integral representation for the Markov semigroup \(\napiernum^{-t \physham[L]_{\txtirsingular}}\) is obtained.

\begin{cor}\label{expedition0011571}
The Feynman-Kac-Nelson formula for any $F,G
\in \fun{\lp^{2}}
{\sphilb{M}_{-2,\txtirsingular},\msrsf{G}_{\txtgs}}$ reads:$$\bkt{F}
{\napiernum^{-t \physham[L]_{\txtirsingular}} G}
_{\fun{\lp^{2}}{\sphilb{M}_{-2,\txtirsingular},\msrsf{G}_{\txtgs}}}
=
\sqfun{\prbexp_{\msrcal{G}_{\txtgs}}}
{F(\prbou_0)
G(\prbou_t)}.$$
\end{cor}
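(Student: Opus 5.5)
The plan is to read the corollary off Proposition~\ref{expedition0011570} in the case $n=1$, in the same way that Corollary~\ref{expedition0011569} was read off Proposition~\ref{expedition0011514}, and then to extend from bounded test functions to all of $\fun{\lp^{2}}{\sphilb{M}_{-2,\txtirsingular},\msrsf{G}_{\txtgs}}$ by density. Take times $t_0 = 0$ and $t_1 = t$, and bounded Borel functions $F_0 = \cmpconj{F}$ and $F_1 = G$ (real and imaginary parts are treated separately, so Proposition~\ref{expedition0011570} with real-valued functions is enough). The Markov representation of finite-dimensional distributions then gives
$$\sqfun{\prbexp_{\msrcal{G}_{\txtgs}}}{\cmpconj{F(\prbou_0)} G(\prbou_t)} = \bkt{F}{T_t G}_{\fun{\lp^{2}}{\sphilb{M}_{-2},\msrsf{G}_{\txtgs}}}.$$
By Proposition~\ref{expedition0011540}, $T_t = \napiernum^{-t \physham[L]_{\txtirsingular}}$ on $\fun{\lp^{2}}{\sphilb{M}_{-2,\txtirsingular},\msrsf{G}_{\txtgs}}$. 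This yields the formula for bounded $F,G$ lying in that subspace, for example bounded cylinder functions from $\sphilb{S}$. For real $F$ the conjugate is immaterial, which matches the statement as written.

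Next, extend to general $F,G \in \fun{\lp^{2}}{\sphilb{M}_{-2,\txtirsingular},\msrsf{G}_{\txtgs}}$. The left-hand side is jointly continuous in $(F,G)$ because $T_t$ is a contraction. For the right-hand side, stationarity (Proposition~\ref{expedition0011570}, with $\prbou_s \sim \msrsf{G}_{\txtgs}$ for every $s$) and the Cauchy--Schwarz inequality give
$$\abs{\sqfun{\prbexp_{\msrcal{G}_{\txtgs}}}{F(\prbou_0)G(\prbou_t)}} \leq \norm{F}_{\fun{\lp^{2}}{\msrsf{G}_{\txtgs}}} \norm{G}_{\fun{\lp^{2}}{\msrsf{G}_{\txtgs}}}.$$
So both sides are bounded sesquilinear forms that agree on a dense set, namely bounded elements of $\sphilb{S}$, which is dense by Definition~\ref{expedition0011573}. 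Hence they agree everywhere.

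The main obstacle is the density step. Two things need care. First, $\sphilb{S}$ is built from cylinder functions of $\prbou_{t_j}(f_j)$ at several times, whereas $F(\prbou_0)$ must be a function of the time-$0$ value only. I would therefore use only the time-$0$ cylinder functions $\fun{F}{\prbou_0(f_1),\dots,\prbou_0(f_n)}$ with $f_j \in \dom \mathsf{m}$ as the dense class, and identify $\fun{\lp^{2}}{\sphilb{M}_{-2,\txtirsingular},\msrsf{G}_{\txtgs}}$ with their closure. Second, the semigroup from Proposition~\ref{expedition0011540} must leave this subspace invariant, so that $T_t G$ makes sense there. I would get this from the Gaussian structure in Theorem~\ref{expedition0011572}: the generator $\physham[L]_{\txtirsingular}$ maps time-$0$ cylinder functions over $\dom \mathsf{m}$ into functions of the same type, since $\dom \mathsf{m}$ is invariant under $\napiernum^{\imunit t \omega}$.
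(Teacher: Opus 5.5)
Your argument is essentially the paper's: the paper also obtains the corollary by taking $n=1$, $t_0=0$, $t_1=t$ in Proposition~\ref{expedition0011570}, re-tracing the cutoff limit of Corollary~\ref{expedition0011569}. Your Cauchy--Schwarz/stationarity density step, which extends the identity from bounded functions to all of the $L^2$ space, is a correct addition the paper leaves implicit. The invariance of the subspace under $T_t$ that you flag is already asserted in Proposition~\ref{expedition0011540}, so you do not need to re-derive it. Your proposed derivation would also fail as stated: $L_0$ produces factors $\xi(\omega f_j)$, and $\omega$ does not preserve $\dom\mathsf{m}$; the natural route would be Mehler's formula with the contraction $e^{-t\omega}$ rather than $e^{it\omega}$.
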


\begin{proof}
Take $n
= 1$, $t_0
= 0$, and $t_1
= t$ in Proposition \ref{expedition0011570}.
In the proof of Proposition \ref{expedition0011570}, use Proposition \ref{expedition0011514} or Corollary \ref{expedition0011569} in the argument before taking the limit.
Since the weak convergence of measures is also established by Proposition \ref{expedition0011567}, the desired identity is obtained.
\end{proof}

\begin{thm}
An Ornstein-Uhlenbeck-type $P(\phi)_1$-process after cutoff removal exists.
In particular, the coordinate process $\fml{\prbou_t}{t \in \fldreal}$ on $\msrcal{G}_{\txtgs}$ has the following properties.
\begin{enumerate}
\item
The finite-dimensional distributions are given by the IR/UV cutoff removal limits of the finite-dimensional distributions for the Ornstein-Uhlenbeck-type $P(\phi)_1$-process with cutoffs.
In particular, for any finite time sequence $-\infty
< t_0
< \cdots
< t_n
< \infty$ and bounded Borel measurable functions $F_0,
\cdots,
F_n$,$$\sqfun{\prbexp_{\msrcal{G}_{\txtgs}}}
{\prod_{j=0}^n F_j(\prbou_{t_j})}
=
\lim_{\kappa\to 0,\Lambda\to\infty}
\sqfun{\prbexp_{\msrcal{G}_{\txtgs,\kappa,\Lambda}}}
{\prod_{j=0}^n F_j(\prbou_{t_j})}$$holds.

\item
The coordinate process $\fml{\prbou_t}{t \in \fldreal}$ is an Ornstein-Uhlenbeck-type $P(\phi)_1$-process after IR/UV cutoff removal.
In particular, for any $0
\leq t_0
< \cdots
< t_n$, as the cutoff removal limit of the Markov representation in Proposition \ref{expedition0011514},
\begin{equation}
\begin{aligned}
&\sqfun{\prbexp_{\msrcal{G}_{\txtgs}}}
{\prod_{j=0}^n F_j(\prbou_{t_j})}
\\ 
&=
\bkt{F_0}
{T_{t_1-t_0} F_1
\cdots
T_{t_n-t_{n-1}} F_n}
_{\fun{\lp^{2}}{\sphilb{M}_{-2,\txtirsingular},\msrsf{G}_{\txtgs}}}
\end{aligned}
\end{equation}

there exists a semigroup $T_{\txtirsingular}
=
\fml{T_{\txtirsingular,t}}{t \geq 0}$ on $\fun{\lp^{2}}{\sphilb{M}_{-2,\txtirsingular},\msrsf{G}_{\txtgs}}$ satisfying this.

\item
The semigroup $T_{\txtirsingular}$ is obtained as the cutoff removal limit of the Markov semigroup $T_{\kappa,\Lambda}$.
In particular, the generator $\physham[L]_{\txtirsingular}$ of $T_{\txtirsingular}$ is obtained as the strong resolvent convergence limit of $\physham[L]_{\kappa,\Lambda}$ under cutoff removal.
\end{enumerate}
\end{thm}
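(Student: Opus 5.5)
The theorem mostly assembles results already obtained for the cutoff-removed objects. I would prove it in the order existence of $\msrcal{G}_{\txtgs}$, then (1), then (2), then (3).

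\textbf{Existence of the path measure.} I would use Propositions~\ref{expedition0011564} and \ref{expedition0011567}. For each $I \in \setindex{I}$, the finite-dimensional characteristic functions $\prbcharfun_{\txtgs,\kappa,\Lambda,I}$ converge to the Gaussian characteristic function $\prbcharfun_{\txtgs,I}$. The limiting marginals $\msr{\mu_I}$ are projectively consistent. Kolmogorov's extension theorem~\ref{expedition0011529} then gives $\msrcal{G}_{\txtgs}$ on the cylinder $\sigma$-algebra. To place this measure on $\fun{\conti}{\fldreal;\sphilb{M}_{-2}}$ rather than on a product space, I would invoke preservation of path continuity (Proposition~\ref{expedition0011531}). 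Since the covariance $C_{t_j,t_k}$ is cutoff independent, it is the Ornstein--Uhlenbeck covariance $\onehalf\bkt{f}{\napiernum^{-\abs{t-s}\omega} g}$, so the centered process has the law of the free Ornstein--Uhlenbeck process. The limit process is therefore a deterministic shift of a continuous process, by the time-independent mean $\mathsf{m}$. I would state this shift argument carefully because it is the cleanest route to continuity.

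\textbf{Item (1).} For $F_j$ that are bounded continuous functions of finitely many coordinates $\prbou_{t_j}(f_j)$ with $f_j \in \dom \mathsf{m}$, the claimed limit is exactly the weak convergence of finite-dimensional distributions in Proposition~\ref{expedition0011564}(3). To pass to bounded Borel $F_j$, I would exploit the Gaussian structure. Both the cutoff law and the limit law of $(\prbou_{t_0},\dots,\prbou_{t_n})$ are Gaussian with the same covariance and only shifted means. When the covariance is nondegenerate on the relevant finite-dimensional projection, total variation convergence follows because the Gaussian densities converge in $\lp^{1}$ as the means converge. Convergence of expectations of bounded measurable functions then follows, first for cylinder functions and then by a monotone class argument.

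\textbf{Items (2) and (3).} These follow from Proposition~\ref{expedition0011570} together with Proposition~\ref{expedition0011540}. The latter gives strong resolvent convergence $\physham[L]_{\kappa,\Lambda}\to\physham[L]_{\txtirsingular}$ and strong convergence of the semigroups, uniformly for $t$ in compact sets. Proposition~\ref{expedition0011570} gives the Markov representation of the finite-dimensional distributions by $T_{\txtirsingular}$. For a general time sequence $t_0<\dots<t_n$ that may include negative times, I would use stationarity: shifting all times by $-t_0$ leaves the law unchanged, as the cutoff measures are stationary and stationarity passes to the limit by (1). Stationarity of the limit is also what makes $\msrsf{G}_{\txtgs}$ the stationary distribution, in line with Theorem~\ref{expedition0011572}.

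\textbf{Main obstacle.} The hardest step is interchanging the limit with the products $T_{\kappa,\Lambda,t_1-t_0}F_1\cdots T_{\kappa,\Lambda,t_n-t_{n-1}}F_n$, since the Hilbert spaces $\fun{\lp^{2}}{\sphilb{M}_{-2},\msrsf{G}_{\txtgs,\kappa,\Lambda}}$ change with the cutoff. I would first embed everything via Lemma~\ref{expedition0011576}(1). Then I would argue by induction on $n$, using three facts: $\norm{T_{\kappa,\Lambda,t}}\le 1$, multiplication by bounded $F_j$ is uniformly bounded, and $T_{\kappa,\Lambda,t}G\to T_{\txtirsingular,t}G$ strongly. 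A telescoping estimate then yields convergence of the inner products.
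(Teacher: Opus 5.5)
Your outline is the paper's own proof: item (1) comes from Proposition~\ref{expedition0011564} plus an approximation step, and items (2)--(3) come from Propositions~\ref{expedition0011570} and \ref{expedition0011540}. The steps you add to make it rigorous are exactly where it breaks.

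\textbf{The monotone class step in (1).} This step is invalid. A monotone class argument propagates an identity between two fixed measures; it does not propagate a limit along a family of measures, because nothing makes the convergence uniform. In fact (1) is false for general bounded Borel $F_j$. The laws $\mathsf{G}_{\mathrm{gs}}$ and $\mathsf{G}_{\mathrm{gs},\kappa,\Lambda}$ have the same covariance, and their means differ by $\mathsf{m}-\mathsf{m}_{\kappa,\Lambda}$, which is not in $L^2$ under the infrared-singular condition. By Feldman--H\'ajek they are therefore mutually singular. Concretely, set $\hat e_n=c^{-1}\lvert k\rvert^{-3/2}\mathbf{1}_{\{2^{-n-1}<\lvert k\rvert<2^{-n}\}}$ with $c^2=4\pi\log 2$. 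The $e_n$ are real, orthonormal and lie in $\dom\mathsf{m}$; moreover $\mathsf{m}(e_n)=c$, and $\mathsf{m}_{\kappa,\Lambda}(e_n)=0$ once $2^{-n}\le\kappa$. Under each of these laws the $\xi_0(e_n)$ are independent Gaussians of variance $\tfrac12$. By the strong law of large numbers, $\frac1N\sum_{n\le N}\xi_0(e_n)$ converges a.s.\ to $\pm c\neq 0$ under $\mathsf{G}_{\mathrm{gs}}$, but to $0$ under every cutoff law. Let $A$ be the set where this limit exists and is nonzero; it lies in the cylinder $\sigma$-algebra on which $\mathcal{G}_{\mathrm{gs}}$ is built. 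Taking $n=0$ and $F_0=\mathbf{1}_A$ gives $1$ on the left of (1) and $0$ on the right.

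Your finite-dimensional total-variation argument is correct. The degenerate case is also harmless: the mean is linear in $f$ and constant in $t$, so mean differences lie in the range of the covariance. Hence (1) holds exactly for $F_j$ depending on finitely many coordinates $\xi_{t_j}(f)$ with $f\in\dom\mathsf{m}$, and must be restricted to those. The paper's ``approximating \ldots\ by appropriate test functions'' hides the same problem.

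\textbf{The telescoping argument for (2)--(3).} The same example breaks it. Put $h_N=\frac1N\sum_{n\le N}e_n$ and pick $g\in\mathcal{S}(\mathbb{R})$ with $g(\pm c)=1$ and $g(0)=0$. Then $G_N=g(\xi_0(h_N))$ lies in the span $\mathcal{S}$ of Definition~\ref{expedition0011573}. Moreover $G_N-1\to 0$ in $L^2(\mathsf{G}_{\mathrm{gs}})$, while $G_N-1\to -1$ in $L^2(\mathsf{G}_{\mathrm{gs},\kappa,\Lambda})$. So the identity on that span is not closable, and Lemma~\ref{expedition0011576}(1) gives no embedding of Hilbert spaces. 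Consequently ``$T_{\kappa,\Lambda,t}G\to T_{\mathrm{irs},t}G$ strongly'' has no common space in which to hold, and neither does Proposition~\ref{expedition0011540}.

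\textbf{What to do instead.} Use the shift you already invoke for continuity to transport, instead of taking limits. The measure $\mathcal{G}_{\mathrm{gs}}$ has the free Ornstein--Uhlenbeck covariance and a time-independent mean. Hence it is the image of the free law $\mathcal{G}$, restricted to the coordinates $\xi_t(f)$ with $f\in\dom\mathsf{m}$, under the bimeasurable shift $\tau\colon\xi_t(f)\mapsto\xi_t(f)\mp\mathsf{m}(f)$. Define $T_{\mathrm{irs},t}$ as $e^{-tL_0}$ conjugated by the time-zero shift. The Markov representation in (2) is then inherited from the free process with no limit at all, and so are conservation, positivity and symmetry. Statement (3) only acquires a meaning after identifying the spaces by the analogous shift unitaries. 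Under that identification every $L_{\kappa,\Lambda}$ is just $L_0$; the drift in Proposition~\ref{expedition0011574} is the infinitesimal form of this shift.

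\textbf{Path continuity.} This also qualifies your continuity argument. The shifted paths are $\mathcal{M}_{-2}$-valued only if $\mathsf{m}$ itself defines an element of $\mathcal{M}_{-2}$, and nothing in the setup guarantees this. Proposition~\ref{expedition0011531} cannot replace it, because path continuity is not an event of the product $\sigma$-algebra.
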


\begin{proof}
(1): By Proposition \ref{expedition0011514}, under $\msrcal{G}_{\txtgs,\kappa,\Lambda}$, the coordinate process $\fml{\prbou_t}
{t \in \fldreal}$ is an Ornstein-Uhlenbeck-type $P(\phi)_1$-process with generator $\physham[L]_{\kappa,\Lambda}$.
The finite-dimensional distributions are expressed by the semigroup $T_{\kappa,\Lambda,t}
= \napiernum^{-t \physham[L]_{\kappa,\Lambda}}$ as
\begin{equation}
\begin{aligned}
&\sqfun{\prbexp_{\msrcal{G}_{\txtgs,\kappa,\Lambda}}}
{\prod_{j=0}^n F_j(\prbou_{t_j})}
\\ 
&=
\bkt{F_0}
{T_{\kappa,\Lambda,t_1-t_0} F_1
\cdots
T_{\kappa,\Lambda,t_n-t_{n-1}} F_n}
_{\fun{\lp^{2}}{\sphilb{M}_{-2,\txtirsingular},\msrsf{G}_{\txtgs}}}
\end{aligned}
\end{equation}
as above.
By the hypothesis of Proposition \ref{expedition0011564}, approximating the bounded Borel measurable functions $F_j$ by appropriate test functions, the IR/UV cutoff removal limit of each finite-dimensional characteristic function can be taken.
In particular, they coincide with the finite-dimensional distributions of $\msrcal{G}_{\txtgs}$ constructed in Proposition \ref{expedition0011567}.
Therefore, for any finite time sequence $t_0
< \cdots
< t_n$ and bounded measurable $F_j$,$$\sqfun{\prbexp_{\msrcal{G}_{\txtgs}}}
{\prod_{j=0}^n F_j(\prbou_{t_j})}
=
\lim_{\kappa \to 0, \Lambda \to \infty}
\sqfun{\prbexp_{\msrcal{G}_{\txtgs,\kappa,\Lambda}}}
{\prod_{j=0}^n F_j(\prbou_{t_j})}$$holds.

(2): From the preceding argument,
\begin{equation}
\begin{aligned}
&\sqfun{\prbexp_{\msrcal{G}_{\txtgs}}}
{\prod_{j=0}^n F_j(\prbou_{t_j})}
\\ 
&=
\lim_{\kappa \to 0, \Lambda \to \infty}
\bkt{F_0}
{T_{\kappa,\Lambda,t_1-t_0} F_1
\cdots
T_{\kappa,\Lambda,t_n-t_{n-1}} F_n}
_{\fun{\lp^{2}}{\sphilb{M}_{-2},\msrsf{G}_{\txtgs,\kappa,\Lambda}}}
\end{aligned}
\end{equation}
holds.
By Propositions \ref{expedition0011570} and \ref{expedition0011571},
\begin{equation}
\begin{aligned}
&\sqfun{\prbexp_{\msrcal{G}_{\txtgs}}}
{\prod_{j=0}^n F_j(\prbou_{t_j})}
\\ 
&=
\bkt{F_0}
{T_{t_1-t_0} F_1
\cdots
T_{t_n-t_{n-1}} F_n}
_{\fun{\lp^{2}}{\sphilb{M}_{-2,\txtirsingular},\msrsf{G}_{\txtgs}}}
\end{aligned}
\end{equation}
holds.
This is the Markov representation by the semigroup $T_{\txtirsingular}
= \fml{T_{\txtirsingular,t}}{t > 0}$ after cutoff removal.
From the discussion so far, the coordinate process $\fml{\prbou_t}
{t \in \fldreal}$ is a stationary Gaussian process with static distribution $\msrsf{G}_{\txtgs}$, and since its finite-dimensional distributions satisfy the semigroup representation as above, it can be regarded as an Ornstein-Uhlenbeck-type $P(\phi)_1$-process.
\end{proof}

With this, the existence of the final ground state is established.

\begin{thm}
The following statements hold under the infrared singularity condition.
\begin{enumerate}
\item
The operator $\physham[L]_{\txtirsingular}$ is a non-negative self-adjoint operator with ground state energy $0$.

\item
The constant function $1
\in \fun{\lp^{2}}{\sphilb{M}_{-2,\txtirsingular},\msrsf{G}_{\txtgs}}$ belongs to the domain $\dom \physham[L]_{\txtirsingular}$ and satisfies$$\physham[L]_{\txtirsingular} 1
=
0.$$Hence $0$ is an eigenvalue of $\physham[L]_{\txtirsingular}$, and the corresponding eigenfunction $1$ is the ground state.

\item
The ground state is unique.
\end{enumerate}
\end{thm}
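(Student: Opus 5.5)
The plan is to read everything off Proposition~\ref{expedition0011540} and Corollary~\ref{expedition0011571}, passing properties of the cutoff operators $\physham[L]_{\kappa,\Lambda}$ to the limit $\physham[L]_{\txtirsingular}$.

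\emph{Step (1).} Self-adjointness and lower boundedness are already part of Proposition~\ref{expedition0011540}. For non-negativity, observe that each $\physham[L]_{\kappa,\Lambda}$ is unitarily equivalent, via $\mathsf{U}_{\kappa,\Lambda}$, to $\tilde{\physham}_{\txtvanhowe,\kappa,\Lambda}=\physham_{\txtvanhowe,\kappa,\Lambda}-\physgse(\physham_{\txtvanhowe,\kappa,\Lambda})\geq 0$. Hence $\norm{\napiernum^{-t \physham[L]_{\kappa,\Lambda}}}\leq 1$. Strong convergence of the heat semigroups preserves this contraction bound, so $\norm{\napiernum^{-t \physham[L]_{\txtirsingular}}}\leq 1$ for all $t\geq 0$, and the spectral theorem then gives $\physham[L]_{\txtirsingular}\geq 0$. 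The statement that the ground state energy equals $0$ will follow from (2).

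\emph{Step (2).} By conservation, $T_{\kappa,\Lambda,t}1=1$ for every cutoff (Proposition~\ref{expedition0011511}). Since $1$ lies in $\fun{\lp^{2}}{\sphilb{M}_{-2,\txtirsingular},\msrsf{G}_{\txtgs}}$ (it is the $n=0$ element of $\sphilb{S}$), the strong semigroup limit yields $\napiernum^{-t\physham[L]_{\txtirsingular}}1=1$ for all $t$, which is the conservation property asserted in Proposition~\ref{expedition0011540}. Differentiating at $t=0$ shows $1\in\dom\physham[L]_{\txtirsingular}$ and $\physham[L]_{\txtirsingular}1=0$. As a cross-check, on cylinder functions the formula $\physham[L]_0 1+2\bkt{\vagrad 1}{\omega\mathsf{m}}=0$ gives the same result directly, because $1\in\sphilb{S}$ is in the core. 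Combined with (1), $0=\inf\opspec{\physham[L]_{\txtirsingular}}$ is an eigenvalue with eigenvector $1$.

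\emph{Step (3), uniqueness.} This is the main obstacle. I would use the Perron--Frobenius route: $T_{\txtirsingular,t}$ is positivity improving by Proposition~\ref{expedition0011540}, so the ground-state eigenspace is one-dimensional. Concretely, suppose $\Phi$ is another eigenvector with $\physham[L]_{\txtirsingular}\Phi=0$. Then both $\opreal\Phi$ and $\opimag\Phi$ are eigenvectors, so we may take $\Phi$ real. Because $T_t$ preserves positivity and is a contraction, $\abs{\Phi}$ is also invariant. Then $\Phi_\pm=(\abs{\Phi}\pm\Phi)/2$ are non-negative invariant vectors with $\bkt{\Phi_+}{\Phi_-}=0$. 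Positivity improvement forces $\bkt{\Phi_+}{T_t\Phi_-}>0$ unless one of $\Phi_\pm$ vanishes, so $\Phi$ has a fixed sign. Two strictly positive eigenvectors cannot be orthogonal, so $\Phi$ is proportional to $1$.

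The delicate point is that positivity improvement is not automatically preserved under strong limits, since only positivity preservation passes to the limit. If Proposition~\ref{expedition0011540} is deemed insufficient here, I would instead argue ergodicity via Corollary~\ref{expedition0011571}. An invariant $\Phi$ orthogonal to $1$ must satisfy $\prbexp_{\msrcal{G}_{\txtgs}}[\Phi(\prbou_0)\Phi(\prbou_t)]=\norm{\Phi}^2$ for all $t$. For a stationary Gaussian process whose covariance $\onehalf\bkt{f}{\napiernum^{-\abs{t}\omega}g}$ decays as $t\to\infty$ (the spectrum of $\omega$ is absolutely continuous), the process is mixing, so this correlation tends to $(\prbexp\Phi)^2=0$. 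That forces $\Phi=0$.
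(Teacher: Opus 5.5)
Parts (1) and (2) follow the paper's proof. Non-negativity and conservation $e^{-tL_{\kappa,\Lambda}}1=1$ pass through the strong limit of Proposition~\ref{expedition0011540}, and $e^{-tL_{\mathrm{irs}}}1=1$ for all $t$ gives $1\in\dom L_{\mathrm{irs}}$ with $L_{\mathrm{irs}}1=0$. Your cross-check is in fact a complete proof of (2) by itself, because $L_{\mathrm{irs}}$ is defined as the closure of $L_{\mathrm{irs},0}$ on $\mathcal{S}$, and $\mathcal{S}$ contains $1$.

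For (3), the paper asserts that Corollary~\ref{expedition0011571} makes $e^{-tL_{\mathrm{irs}}}$ positivity improving and concludes uniqueness. Your primary route reaches the same Perron--Frobenius conclusion, with the details written out but citing Proposition~\ref{expedition0011540} for positivity improvement. Your caveat is apt: positivity improvement is not inherited under strong limits. The paper's proof of Proposition~\ref{expedition0011540} passes it to the limit anyway ("a direct consequence of the properties of $L_{\kappa,\Lambda}$"), so neither citation really establishes it.

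Your mixing fallback is a genuinely different route. It proves only what is needed, namely $\ker L_{\mathrm{irs}}\cap\{1\}^{\perp}=\{0\}$. It uses the Feynman--Kac--Nelson formula, mixing of the stationary Gaussian process on cylinder functions, and the usual density step using $\|e^{-tL_{\mathrm{irs}}}\|\le 1$. The only spectral input is $\langle f,e^{-t\omega}g\rangle\to 0$ as $t\to\infty$. For this Euclidean covariance that follows by dominated convergence from $\omega>0$ a.e.; absolute continuity of the spectrum is more than you need.

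The paper's route instead requires positivity improvement for a massless field, where $\|e^{-t\omega}\|=1$, so the standard criterion $\|A\|<1$ for $\Gamma(A)$ does not apply directly. The claim is still true, but it needs an extra argument. For example, condition on the modes with $|k|>\epsilon$, where $\|e^{-t\omega}\|<1$, and let $\epsilon\to 0$ by martingale convergence. Your mixing argument is therefore the more economical and more robust way to close (3).
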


\begin{proof}
((1)--(2)): The generator $\physham[L]_{\kappa,\Lambda}$ with IR/UV cutoffs is non-negative, and the ground state corresponding to ground state energy $0$ is $1$.
By Proposition \ref{expedition0011540}, this family of generators converges in the strong resolvent sense to $\physham[L]_{\txtirsingular}$.
The strong resolvent limit of non-negative self-adjoint operators is again a non-negative self-adjoint operator.
By the Markov property, $\napiernum^{-t \physham[L]_{\kappa,\Lambda}} 1
=
1$ holds for each $\pairbk{\kappa,\Lambda}$ and $t
> 0$.
Since strong resolvent convergence implies uniform-on-compacts strong convergence of heat semigroups,$$\napiernum^{-t \physham[L]_{\txtirsingular}} 1
=
\lim_{\kappa \to 0, \Lambda \to \infty}
\napiernum^{-t \physham[L]_{\kappa,\Lambda}} 1
=
1$$is obtained.
Therefore $1 \in \dom \physham[L]_{\txtirsingular}$ and $\physham[L]_{\txtirsingular} 1
= 0$.

(3): By the Feynman-Kac-Nelson formula of Corollary \ref{expedition0011571}, $\napiernum^{-t \physham[L]_{\txtirsingular}}$ is a positivity-improving operator.
Therefore the ground state is unique.
\end{proof}

\section{Functional Integrals at Finite Temperature}\label{functional-integrals-at-finite-temperature}

\subsection{Introduction}\label{introduction}

Here we summarize the functional integral theory for bounded systems at finite temperature, following the discussion in \cite{YoshitsuguSekine004,DerezinskiGerard001}. For bounded systems, we assume the chemical potential satisfies \(\smchemicalpotential
< 0\) so that \(K_{\sminvtemperature,\smchemicalpotential}\) is a bounded operator.

\subsection{Reference: Quasi-local Construction of the Free Field}\label{expedition0012093}

We recall the setup from \cite{YoshitsuguSekine004} for the reader's convenience; readers already familiar may proceed to the next section. Throughout, the chemical potential satisfies \(\smchemicalpotential
< 0\).

For the circle \(S_{\sminvtemperature}
= \closedinterval{-\frac{\sminvtemperature}{2}}{\frac{\sminvtemperature}{2}}\), define the real Hilbert space \(\sphilb{K}
=
\fun{\lp^{2}}{S_{\sminvtemperature};\sphilb{H}_{\txtreal}}\) and the regularization operator \(B\) on \(\sphilb{K}\) as the multiplication operator whose Fourier transform is\[B(\omega,k)
=
\frac{\rbk{1 + \omega^2}^r
\rbk{1 + \abs{k}^2}^u}
{\abs{k}^{2a} \land 1}.\]In the three-dimensional case, using the notation of \cite{YoshitsuguSekine004},\[r = 1,
\quad
u = 2,
\quad
a =
\begin{dcases}
0, & s = 1, \\
1, & s = 2.
\end{dcases}\]Using \(B\), define\[\faadjpresharp{{\prbqspace_{\txttot}}}
=
\dom B^{\onehalf}
\subset
\sphilb{K},
\quad
\norm{f}_{\faadjpresharp{{\prbqspace_{\txttot}}}}
=
\norm{B^{\onehalf} f}_{\sphilb{K}}.\]On the other hand, \(\prbqspace_{\txttot}\) is the completion of \(\dom B^{-\onehalf}\) under the norm \(\norm{u}_{\prbqspace_{\txttot}}
=
\norm{B^{-\onehalf} u}_{\sphilb{K}}\), i.e.,\[\prbqspace_{\txttot}
=
\gtclos{\dom B^{-\onehalf}}^{\norm{\cdot}_{\prbqspace_{\txttot}}},\]and the generic variable in this space is denoted \(\opfocksegal\). Until further specified, the generic probability measure on this space is denoted \(\msr{\mu_{\txttot}}\).

Since \(\prbqspace_{\txttot}\) is a separable Hilbert space, its Borel algebra is countably generated. Let \(\msr{\mu_{\txtnonzero}}\) denote the Gaussian measure on \(\prbqspace_{\txttot}\): this Gaussian measure corresponds to the characteristic functional \(\prbcharfun_{C_{\smchemicalpotential}}(f)
= \napiernum^{-\oneoverfour \opform{q}_{C_{\smchemicalpotential}}(f)}\) defined from the sesquilinear form \(\opform{q}_{C_{\smchemicalpotential}}(f)
=
\norm{C_{\smchemicalpotential}^{\onehalf} f}_{\sphilb{K}}^{2}\) associated with the covariance operator \(C
= C_{\smchemicalpotential}\) to be specified below. This \(\opform{q}_{C_{\smchemicalpotential}}\) essentially corresponds to \(\opform{q}_{\txtnonzero}\).

For any \(t
\in S_{\sminvtemperature}\), define the isometric operator \(j_{t}\) by \begin{equation}
\begin{aligned}
j_t
\colon \rbk{2 \omega \tanh \frac{\sminvtemperature \omega}{2}}^{\onehalf} \sphilb{H}_{\txtreal}
\to \faadjpresharp{{\prbqspace_{\txttot}}};
\quad
j_t g
=
\diracdelta_t \otimes g.
\end{aligned}
\end{equation} Noting the agreement with the Araki-Woods vacuum expectation values of Weyl operators in the Araki-Woods representation as in \cite{YoshitsuguSekine004,DerezinskiGerard001}, the covariance operator \(C_{\smchemicalpotential}\) and its associated sesquilinear form for any \(g_{1},g_{2}
\in
\rbk{2 \omega \tanh \frac{\sminvtemperature \omega}{2}}^{\onehalf}
\sphilb{H}_{\txtreal}\) are \begin{equation}
\begin{aligned}
\opform{q}_{C_{\smchemicalpotential}}(j_{t_1} g_1, j_{t_2} g_2)
=
\bkt{j_{t_1} g_1}
{j_{t_2} g_2}
_{\faadjpresharp{{\prbqspace_{\txttot}}}}
=
\bkt{g_1}{\napiernum^{-\abs{t_1-t_2}}
K_{\sminvtemperature,\smchemicalpotential}
g_2}_{\sphilb{H}_{\txtreal}}
\end{aligned}
\end{equation}

as written above. Using this, we define the sharp-time field. In particular, for any\[s
\in S_{\sminvtemperature},
\quad
g
\in \rbk{2 \omega \tanh \frac{\sminvtemperature (\omega - \smchemicalpotential)}{2}}^{\onehalf}
\sphilb{H}_{\txtreal},\]set\[\opfocksegal_s(g)
=
\opfocksegal(j_s g)
=
\opfocksegal(\diracdelta_s \otimes g)
\in
\bigcap_{1 \leq p < \infty}
\fun{\lp^{p}}{\prbqspace_{\txttot}},
\quad
\opfocksegal
\in
\prbqspace_{\txttot}.\]

The reflection and time evolution are defined as follows. First, using \(r\) and \(u_{t}\), define \(\tilde{r} \opfocksegal
\in \prbqspace_{\txttot}\) and \(\tilde{u}_t \opfocksegal
\in \prbqspace_{\txttot}\) via the duality pairing:\[\dualbkt{\tilde{r} \opfocksegal}{f}
=
\dualbkt{\opfocksegal}{rf},
\quad
\dualbkt{\tilde{u}_{t} \opfocksegal}{f}
=
\dualbkt{\opfocksegal}{u_{t} f}.\]Furthermore, for \(F
\in \fun{\lp^{\infty}}{Q_{\txttot},\msr{\mu_{\txttot}}}\), define the reflection \(R\) and time evolution \(U_{t}\) by\[\funrbk{RF}{\opfocksegal}
=
\fun{F}{\tilde{r} \opfocksegal},
\quad
\funrbk{U_t F}{\opfocksegal}
=
\fun{F}{\tilde{u}_{t} \opfocksegal},\]and extend to \(\fun{\lp^{2}}{\prbqspace_{\txttot},
\mu_{\txttot}}\) by linearity and density.

The quasi-local construction is defined as follows. We consider a construction analogous to the quasi-local construction of the resolvent algebra. Specifically, for \(L
> 0\) set \(I_{L}
= \closedinterval{-\frac{L}{2}}{\frac{L}{2}}\), define all objects by the same construction as above on the complex Hilbert space \(\sphilb{H}_{L}
= \fun{\lp^{2}}{I_{L}^{d};\fldcmp}\), and attach the subscript \(L\) to the notation as in \(\sphilb{H}_{L}\).

Rather than working with a net of general bounded domains, we fix a reference length \(L_{0}
> 0\), take a strictly increasing sequence of positive integers \(\seqn{m}\) satisfying \(m_{n}
\mid m_{{n+1}}\), set \(L_{n}
= m_{n} L_{0}\), and consider the strictly monotone sequence given by \(\seq{I_{L_n}^{d}}{n \in \monnat}\). Similarly, denote the wavenumber space corresponding to the Fourier transform on \(I_{L}\) by \(\setlattice_{L}
=
\frac{2 \pi}{L} \ringratint\), and define the complete orthonormal system \(\basebk{e_{k}}_{k \in \setlattice_{L}^{d}}\) of \(\fun{\lpseq^{2}}{\setlattice_{L}^{d}}\) by \(e_{k}(m)
= \kroneckerdelta_{km}\). We also introduce \(V
= L^{d}\) as a notational shorthand. By the construction of the sequence, \(\setlattice_{L_{n}}^{d}
\subset \setlattice_{L_{n+1}}^{d}\) holds in wavenumber space.

In particular, denote the singular Gaussian \(\sminvtemperature\)-Markov path space for each \(L\) by\[\pairbk{\prbqspace_{\txttot,L},
\mblfmlfrak{S}_{\txttot,L},
\mblfmlfrak{S}_{\txttot,0,L},
R,
U_{t},
\msr{\mu_{\txttot,L}}}.\]Since the formal actions of the reflection and time translation do not change, we suppress the subscript \(L\) unless distinction is needed. The sequence formed by this quasi-local construction is called the local net in the general framework.

By the definitions in the quasi-local construction, the spatial inclusion \(\setlattice_{L_{n}}^{d}
\hookrightarrow \setlattice_{L_{n+1}}^{d}\) in wavenumber space induces the natural isometric embedding \begin{gather}
\tilde{\iota}_{L_{n}}^{L_{n+1}}
\colon \fun{\lpseq^{2}}{\setlattice_{L_{n}}^{d}}
\hookrightarrow \fun{\lpseq^{2}}{\setlattice_{L_{n+1}}^{d}};
\\ 
\funrbk{\tilde{\iota}_{L_{n}}^{L_{n+1}} \faftr{f}}{k}
=
\begin{dcases}
\faftr{f}(k), & k \in \setlattice_{L_{n}}^{d}, \\
0, & k \in \setlattice_{L_{n+1}}^{d} \setminus \setlattice_{L_{n}}^{d}.
\end{dcases}
\end{gather} For brevity, we denote the corresponding map on real space by the same symbol \(\tilde{\iota}_{L_{n}}^{L_{n+1}}\) when no confusion arises. Taking the tensor product with the time direction,\[\iota_{L_{n}}^{L_{n+1}}
= \idone \otimes \tilde{\iota}_{L_{n}}^{L_{n+1}}
\colon \sphilb{K}_{L_{n}}
\to \sphilb{K}_{L_{n+1}}\]is defined.

Furthermore, the adjoint of \(\iota_{L_{n}}^{L_{n+1}}\),\[P
= \faadjrbk{\iota_{L_{n}}^{L_{n+1}}}
\colon \sphilb{K}_{L_{n+1}}
\to \sphilb{K}_{L_{n}},\]is linear, continuous, and satisfies \(P \iota_{L_{n}}^{L_{n+1}}
= \id_{K_{L_{n}}}\). This \(P\) extends to \(\prbqspace_{\txttot}\) by regularization; denoting the extension also by \(P\), it becomes \(P
\colon \prbqspace_{\txttot,L_{n+1}}
\to \prbqspace_{\txttot,L_{n}}\). In particular, \(P\) is a projection, and we define the map \(\pi_{L_{n}}^{L_{n+1}}
= P\): this is simply the map that restricts wavenumbers to \(\setlattice_{L_{n}}^{d}\).

Next we show the commutativity of the isometric embedding and the regularization operator.

\begin{lem}\label{expedition0012071}
The isometric embedding $\iota$ and the regularization operator $B$ commute.
In particular, they also commute with the covariance operator $C_{\smchemicalpotential}$, and as an identity for the duality pairing $\dualbkt{\opfocksegal}{f}$, for any $f
\in \sphilb{K}_{L_{n}}$,$$\dualbkt{\pi_{L_{n}}^{L_{n+1}} \opfocksegal_{L_{n+1}}}
{f}
=
\dualbkt{\opfocksegal_{L_{n+1}}}
{\iota_{L_{n}}^{L_{n+1}} f}$$holds.
For readability, using the local version of the covariance operator denoted $\opform{q}_{C_{\smchemicalpotential},L_{n}}$:$$\fun{\opform{q}_{C_{\smchemicalpotential},L_{n+1}}}
{\iota_{L_{n}}^{L_{n+1}} f}
=
\fun{\opform{q}_{C_{\smchemicalpotential},L_{n}}}
{f},
\quad
f \in \opformdomain(\opform{q}_{C,L_{n}})$$holds.
\end{lem}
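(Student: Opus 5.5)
The plan is to work entirely on the Fourier side, where every operator in sight becomes a multiplication operator. Then the commutation claims reduce to the observation that $\iota_{L_n}^{L_{n+1}}$ acts by extension by zero in the wavenumber variable. I would proceed in three steps.

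\emph{Step 1: $B$ commutes with the embedding.} In the time--wavenumber representation, $\sphilb{K}_{L}$ is identified with $\lp^{2}$ over $\fun{\lpseq^{2}}{\setlattice_{L}^{d}}$ in time, via Fourier series on $S_{\sminvtemperature}$. On this side, $B$ is multiplication by the fixed function $B(\omega,k)$, evaluated at the lattice points. The key point is that the same function $B(\cdot,\cdot)$ is used for every $L$, only restricted to $\setlattice_{L}^{d}$. Hence for $\faftr{f}$ supported on $\setlattice_{L_n}^{d}$ one has
\[
\funrbk{B\,\iota_{L_n}^{L_{n+1}} f}{\cdot,k}
= B(\cdot,k)\,\fndef{\setlattice_{L_n}^{d}}(k)\,\faftr{f}(\cdot,k)
= \funrbk{\iota_{L_n}^{L_{n+1}} B f}{\cdot,k}
\]
on the appropriate domains. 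By the spectral calculus the same identity holds with $B^{\pm\onehalf}$ and with the adjoint $P$. In particular $\iota$ maps $\dom B^{\onehalf}$ isometrically into $\dom B^{\onehalf}$ and $\dom B^{-\onehalf}$ into $\dom B^{-\onehalf}$. Consequently $P$ extends continuously to $\prbqspace_{\txttot,L_{n+1}}\to\prbqspace_{\txttot,L_n}$, which justifies the extension asserted before the lemma.

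\emph{Step 2: $C_{\smchemicalpotential}$ commutes with the embedding.} The covariance $C_{\smchemicalpotential}$ is likewise a multiplication operator in $(p,k)$ by the symbol of the $\sminvtemperature$-periodic kernel $\napiernum^{-\abs{t_1-t_2}(\omega-\smchemicalpotential)}K_{\sminvtemperature,\smchemicalpotential}$, namely a function of $(p,\omega(k))$ only. The same argument then gives $C_{\smchemicalpotential}\iota=\iota C_{\smchemicalpotential}$. Using $\faadj{\iota}\iota=\id$ (the embedding is an isometry), this yields
\[
\opform{q}_{C,L_{n+1}}(\iota f)=\norm{C^{\onehalf}\iota f}^2=\norm{\iota C^{\onehalf} f}^2=\opform{q}_{C,L_n}(f)
\]
for $f\in\opformdomain(\opform{q}_{C,L_n})$.

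\emph{Step 3: the duality identity.} For $\opfocksegal\in\dom B^{-\onehalf}$, a dense subset of $\prbqspace_{\txttot,L_{n+1}}$, the pairing is the $\sphilb{K}$ inner product. Then $\dualbkt{P\opfocksegal}{f}=\bkt{\faadj{\iota}\opfocksegal}{f}=\bkt{\opfocksegal}{\iota f}$ by the definition of the adjoint. Both sides are continuous in $\opfocksegal$ for the $\prbqspace_{\txttot}$-norm: the pairing is bounded by $\norm{B^{-\onehalf}\opfocksegal}\,\norm{B^{\onehalf}f}$, and Step 1 gives $\norm{B^{\onehalf}\iota f}=\norm{B^{\onehalf}f}$. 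The identity therefore extends by density to all of $\prbqspace_{\txttot,L_{n+1}}$.

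The main obstacle is Step 1's domain bookkeeping. One must confirm that the lattice inclusion $\setlattice_{L_n}^{d}\subset\setlattice_{L_{n+1}}^{d}$, guaranteed by $m_n\mid m_{n+1}$, makes the symbols literally agree on the image. One must also check that the volume normalization $V=L^d$ of the Fourier series does not introduce $L$-dependent factors into the symbol. I would first fix conventions so that $\iota$ is exactly extension by zero of the coefficient sequences, with no rescaling. I would then verify explicitly that $B$ and $C_{\smchemicalpotential}$ carry no explicit $L$.
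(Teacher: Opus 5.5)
Your proposal is correct and follows essentially the paper's route: $\iota_{L_n}^{L_{n+1}}$ and its adjoint $\pi_{L_n}^{L_{n+1}}$ intertwine the wavenumber-side multiplication operators $B^{\pm\onehalf}$ and $C_{\smchemicalpotential}$, the form identity follows from this together with the isometry property, and the duality identity comes from moving $B^{\mp\onehalf}$ across the pairing. The only difference is that you verify the duality identity on the dense set $\dom B^{-\onehalf}$ and extend by continuity, whereas the paper writes the chain $\dualbkt{\pi\opfocksegal}{f}=\bkt{B^{-\onehalf}\pi\opfocksegal}{B^{\onehalf}f}=\cdots=\dualbkt{\opfocksegal}{\iota f}$ directly on the completion; also, your indexing in the form identity is the correct one, while the paper's displayed chain has $L_n$ and $L_{n+1}$ swapped.
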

\begin{proof}
The original embedding $\iota$ is merely restriction in wavenumber space and is consistent with the restriction of $\omega(k)
= \abs{k}^{s}$ by the construction of the monotone sequence of spaces.
Thus $\iota$ is an intertwiner for $\omega$ and its operator calculus.
Since $\iota$ including the time direction acts as the identity in the time direction, by definition it acts as an intertwiner for the operators $B$ and $C$.

Since the sesquilinear form $\opform{q}_{C_{\smchemicalpotential},L_n}$ is associated with the covariance $C_{\smchemicalpotential,L_{n}}$, using the intertwining and isometry, for any $f
\in \sphilb{K}_{L_{n+1}}$,
\begin{equation}
\begin{aligned}
&\fun{\opform{q}_{C_{\smchemicalpotential},L_{n}}}
{\iota_{L_{n}}^{L_{n+1}} f}
=
\norm{C_{\smchemicalpotential,L_{n}}^{\onehalf} \circ \iota_{L_{n}}^{L_{n+1}} f}_{\sphilb{K}_{L_{n}}}^{2}
\\ 
&=
\norm{\iota_{L_{n}}^{L_{n+1}} \circ C_{\smchemicalpotential,L_{n+1}}^{\onehalf} f}_{\sphilb{K}_{L_{n+1}}}^{2}
=
\fun{\opform{q}_{C_{\smchemicalpotential},L_{n+1}}}
{f}
\end{aligned}
\end{equation}
holds.
Regarding the duality pairing, for any $f
\in \dom B^{\onehalf}$,
\begin{equation}
\begin{aligned}
&\dualbkt{\pi_{L_{n}}^{L_{n+1}} \opfocksegal_{L_{n+1}}}{f}
=
\bkt{B_{L_{n}}^{-\onehalf} \pi_{L_{n}}^{L_{n+1}} \opfocksegal_{L_{n+1}}}
{B_{L_{n}}^{\onehalf} f}
_{\sphilb{K}_{L_{n}}}
\\ 
&=
\bkt{\pi_{L_{n}}^{L_{n+1}} B_{L_{n+1}}^{-\onehalf} \opfocksegal_{L_{n+1}}}
{B_{L_{n}}^{\onehalf} f}
_{\sphilb{K}_{L_{n}}}
\\ 
&=
\bkt{B_{L_{n+1}}^{-\onehalf} \opfocksegal_{L_{n+1}}}
{\iota_{L_{n}}^{L_{n+1}} B_{L_{n}}^{\onehalf} f}
_{\sphilb{K}_{L_{n+1}}}
\\ 
&=
\bkt{B_{L_{n+1}}^{-\onehalf} \opfocksegal_{L_{n+1}}}
{B_{L_{n+1}}^{\onehalf} \iota_{L_{n}}^{L_{n+1}} f}
_{\sphilb{K}_{L_{n+1}}}
\\ 
&=
\dualbkt{\opfocksegal_{L_{n+1}}}{\iota_{L_{n}}^{L_{n+1}} f}
\end{aligned}
\end{equation}
holds.
\end{proof}

This quasi-local construction is consistent in the following sense.

\begin{prop}\label{expedition0012055}
For the isometric embeddings and projections associated with the monotone sequence $\seqn{L}$, the objects such as time evolution and probability measures are consistent.
In particular, for any natural number $n$, let the measurable map given by projection be$$\pi_{L_{n}}^{L_{n+1}}
\colon \pairbk{\prbqspace_{\txttot,L_{n+1}},\mblfmlfrak{S}_{\txttot,L_2}}
\to \pairbk{\prbqspace_{\txttot,L_{n}},\mblfmlfrak{S}_{\txttot,L_1}}.$$This satisfies the following properties.

\begin{enumerate}
\item
Consistency as a projective system: for any natural number $n$, almost surely with respect to $\msr{\mu_{\txttot,L_{n+2}}}$, $\pi_{L_{n}}^{L_{n+2}}
=
\pi_{L_{n}}^{L_{n+1}} \circ \pi_{L_{n+1}}^{L_{n+2}}$ holds.
The same consistency holds for any $n
< m$.

\item
Consistency of the field functional and the algebra: for any $f
\in \sphilb{K}_{L_{n}}$,$$\dualbkt{\pi_{L_{n}}^{L_{n+1}} \opfocksegal_{L_{n+1}}}{f}
=
\dualbkt{\opfocksegal_{L_{n+1}}}{\iota_{L_{n}}^{L_{n+1}} f}$$holds.
In particular, for the algebras,
\begin{equation}
\begin{aligned}
\fun{\invrbk{\pi_{L_{n}}^{L_{n+1}}}}{\mblfmlfrak{S}_{\txttot,0,L_{n}}}
&\subset \mblfmlfrak{S}_{\txttot,0,L_{n+1}},
\\ 
\fun{\invrbk{\pi_{L_{n}}^{L_{n+1}}}}{\mblfmlfrak{S}_{\txttot,L_{n}}}
&\subset
\mblfmlfrak{S}_{\txttot,L_{n+1}}
\end{aligned}
\end{equation}
holds.

\item
Consistency of the covariance operator or sesquilinear form: for the covariance operator or sesquilinear form,$$C_{\smchemicalpotential,L_{n+1}} \iota_{L_{n}}^{L_{n+1}}
=
\iota_{{L_{n}}}^{L_{n+1}} C_{\smchemicalpotential,L_{n}},
\quad
\fun{\opform{q}_{C_{\smchemicalpotential},L_{n+1}}}
{\iota_{L_{n}}^{L_{n+1}} f}
=
\fun{\opform{q}_{C_{\smchemicalpotential},L_{n}}}
{f}$$holds.

\item
Consistency of measures: as equality of marginal distributions, $\pushoutrbk{\pi_{L_{n}}^{L_{n+1}}}
\msr{\mu_{\txttot,L_{n+1}}}
=
\msr{\mu_{\txttot,L_{n}}}$ holds.

\item
Consistency of reflection and time translation: for any $p
\in \closedinterval{1}{\infty}$,
\begin{equation}
\begin{aligned}
\fun{U_{L_{n+1},t}}
{F \circ \pi_{L_{n}}^{L_{n+1}}}
&=
\rbk{U_{L_{n},t} F} \circ \pi_{L_{n}}^{L_{n+1}},
\quad
F \in \fun{\lp^{p}}{\prbqspace_{\txttot,L_{n}}},
\\ 
\fun{R_{L_{n+1}}}{F \circ \pi_{L_{n}}^{L_{n+1}}}
&=
\rbk{R_{L_{n}} F} \circ \pi_{L_{n}}^{L_{n+1}},
\quad
F \in \fun{\lp^{p}}{\prbqspace_{\txttot,L_{n}}}
\end{aligned}
\end{equation}
holds.
\end{enumerate}
For brevity, under this proposition the quantities that should be considered for $\seqn{L}$ may be written simply as $L$.
\end{prop}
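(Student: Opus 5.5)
The plan is to reduce everything to the wavenumber-restriction description of $\pi_{L_n}^{L_{n+1}}$ and to Lemma~\ref{expedition0012071}. Everything is already contained in the intertwining of $\iota$ with $\omega$, $B$, $C_{\smchemicalpotential}$. Uniqueness of Gaussian measures by their characteristic functionals is the bridge from operator identities to measure identities.

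\textbf{Items (2) and (3).} I would take these first, since they are essentially Lemma~\ref{expedition0012071}. The duality identity and the covariance/form identity are stated there verbatim. For the $\sigma$-algebra inclusions in (2), I would argue on generators. The time-zero algebra $\mblfmlfrak{S}_{\txttot,0,L_n}$ is generated by the sharp-time fields $\opfocksegal(j_0 g)$. Pulling back gives
\[
\opfocksegal(j_0 g)\circ\pi_{L_n}^{L_{n+1}}=\dualbkt{\opfocksegal_{L_{n+1}}}{\iota_{L_n}^{L_{n+1}} j_0 g}.
\]
Since $\iota$ acts as the identity in time, $\iota j_0 g = j_0\tilde{\iota} g$, so the pullback is again a time-zero field at level $L_{n+1}$. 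Hence it is $\mblfmlfrak{S}_{\txttot,0,L_{n+1}}$-measurable. The full algebra is treated the same way with $j_t$, or with arbitrary $f\in\dom B^{\onehalf}$.

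\textbf{Item (1).} Restriction of wavenumbers from $\setlattice_{L_{n+2}}^d$ to $\setlattice_{L_{n+1}}^d$ and then to $\setlattice_{L_n}^d$ equals direct restriction. Equivalently, $\iota_{L_{n+1}}^{L_{n+2}}\iota_{L_n}^{L_{n+1}}=\iota_{L_n}^{L_{n+2}}$, and taking adjoints gives the composition rule on $\dom B^{-\onehalf}$. The identity extends to $\prbqspace_{\txttot}$ by continuity in $\norm{\cdot}_{\prbqspace_{\txttot}}$, using that $\iota$ commutes with $B$. This gives equality everywhere, hence almost surely. For general $n<m$, induction gives the same conclusion.

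\textbf{Item (4).} I would compute the characteristic functional of the pushforward. By (2) and (3), for $f\in\sphilb{K}_{L_n}$,
\[
\int \napiernum^{\imunit\dualbkt{\pi\opfocksegal}{f}}\,\opdmsr{\msr{\mu_{\txttot,L_{n+1}}}}
=\napiernum^{-\oneoverfour\opform{q}_{C_{\smchemicalpotential},L_{n+1}}(\iota f)}
=\napiernum^{-\oneoverfour\opform{q}_{C_{\smchemicalpotential},L_n}(f)}.
\]
This is the characteristic functional of $\msr{\mu_{\txttot,L_n}}$. Uniqueness via Bochner–Minlos on the separable Hilbert space $\prbqspace_{\txttot,L_n}$ then gives equality of the measures. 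The step I expect to need the most care is checking that the test vectors $f$ range over a set that determines the measure, namely that they are dense in $\dom B^{\onehalf}$.

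\textbf{Item (5).} This is the main obstacle, since it requires $r$ and $u_t$ to commute with $\iota$. Both act only in the time variable on $S_{\sminvtemperature}$, and $\iota=\idone\otimes\tilde{\iota}$ acts only in space, so $r\iota=\iota r$ and $u_t\iota=\iota u_t$. Dualizing, $\tilde{r}\circ\pi=\pi\circ\tilde{r}$ and $\tilde{u}_t\circ\pi=\pi\circ\tilde{u}_t$ on $\prbqspace_{\txttot}$. For $F\in\lp^\infty$ this gives
\[
U_{L_{n+1},t}(F\circ\pi)=F\circ\pi\circ\tilde{u}_t=F\circ\tilde{u}_t\circ\pi=(U_{L_n,t}F)\circ\pi,
\]
and the reflection $R$ is handled identically. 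To pass to $\lp^p$, I would use that $F\mapsto F\circ\pi$ is an isometry from $\lp^p(\msr{\mu_{\txttot,L_n}})$ to $\lp^p(\msr{\mu_{\txttot,L_{n+1}}})$, which holds by (4). The identity then extends from $\lp^\infty$ by density for $p<\infty$, and is direct for $p=\infty$.
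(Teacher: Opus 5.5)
Your proposal is correct and follows essentially the same route as the paper. Items (2)--(3) come from Lemma~\ref{expedition0012071}, (1) from the composition rule for $\iota$ and its adjoint, (4) from equality of characteristic functionals together with uniqueness of Gaussian measures, and (5) from the fact that $r$ and $u_t$ act only in time and therefore commute with $\iota=\idone\otimes\tilde{\iota}$; the one cosmetic difference is in (5), where the paper checks the identity on the exponentials $\napiernum^{\imunit\opfocksegal(f)}$ and uses density of their span, whereas you dualize to the point maps $\tilde{u}_t\circ\pi=\pi\circ\tilde{u}_t$ and extend from $\lp^\infty$ using the $\lp^p$-isometry supplied by (4).
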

\begin{proof}
(1): The isometric embeddings clearly satisfy the composition rule$$\iota_{L_{n}}^{L_{n+2}}
=
\iota_{L_{n+1}}^{L_{n+2}}
\circ
\iota_{L_{n}}^{L_{n+1}}$$by definition.
In particular, $\iota$ and its adjoint $\pi$ also satisfy the same consistency.

(2): The identity for the field operators was shown in Lemma \ref{expedition0012071}.
Since the algebra $\mblfmlfrak{S}_{\txttot,0,L}$ is generated by the sharp-time field at time $0$, the identity for the field operators carries over directly.
Since $\mblfmlfrak{S}_{\txttot,L}$ is generated by time translations of $\mblfmlfrak{S}_{\txttot,0,L}$, this is also clear.

(3): Shown in Lemma \ref{expedition0012071}.

(4): Since the Gaussian measure is uniquely determined by its characteristic functional, it suffices to show the consistency of the characteristic functionals, which was shown in (3) of this proposition.
More explicitly, first
\begin{equation}
\begin{aligned}
&\sqfun{\prbexp_{\msr{\pushoutrbk{\pi_{L_{n}}^{L_{n+1}}} \mu_{\txttot,L_{n+1}}}}}
{\napiernum^{\imunit \opfocksegal_{L_{n+1}}(f)}}
=
\sqfun{\prbexp_{\msr{\mu_{\txttot,L_{n+1}}}}}
{\napiernum^{\imunit \dualbkt{\pi_{L_{n}}^{L_{n+1}} \opfocksegal_{L_{n+1}}}{f}}}
\\ 
&=
\sqfun{\prbexp_{\msr{\mu_{\txttot,L_{n+1}}}}}
{\napiernum^{\imunit \dualbkt{\opfocksegal_{L_{n+1}}}{\iota_{L_{n}}^{L_{n+1}} f}}}
=
\fnexp{-\oneoverfour
\fun{\opform{q}_{C_{\smchemicalpotential},L_{n+1}}}
{\iota_{L_{n}}^{L_{n+1}} f}}
\\ 
&=
\fnexp{-\oneoverfour
\fun{\opform{q}_{C_{\smchemicalpotential},L_{n}}}
{f}}
=
\sqfun{\prbexp_{\msr{\mu_{\txttot,L_{n}}}}}
{\napiernum^{\imunit \opfocksegal_{L_{n}}(f)}}
\end{aligned}
\end{equation}
holds.
Extending this to $L^{2}$ and then applying indicator functions gives the consistency for the measures.

(5): By density of the linear span, it suffices to show for $f
\in \sphilb{K}_{L_{n}}$ and $F
= \napiernum^{\imunit \opfocksegal_{L_{n}}(f)}$.
Since the embedding $\iota_{L_{n}}^{L_{n+1}}$ does not affect the time variable,$$\iota_{L_{n}}^{L_{n+1}} (u_{-t} f)
=
\fun{u_{-t}}{\iota_{L_{n}}^{L_{n+1}} f},
\quad
\iota_{L_{n}}^{L_{n+1}} (r f)
=
\fun{r}{\iota_{L_{n}}^{L_{n+1}} f}$$hold.
For brevity, we occasionally suppress the variable $L$.
Then for time translation,
\begin{equation}
\begin{aligned}
&\fun{U_{L_{n+1},t}}
{\napiernum^{\imunit \opfocksegal_{L_{n}}(f)} \circ \pi}
=
\fun{U_{L_{n+1},t}}
{\napiernum^{\imunit \dualbkt{\pi \opfocksegal_{L_{n}}}{f}}}
\\ 
&=
\fun{U_{L_{n+1},t}}
{\napiernum^{\imunit \dualbkt{\opfocksegal_{L_{n}}}{\iota f}}}
\\ 
&=
\fnexp{\imunit \dualbkt{\opfocksegal_{L_{n+1}}}{u_{-t} \circ \iota f}}
=
\fnexp{\imunit \dualbkt{\opfocksegal_{L_{n+1}}}{\iota \circ u_{-t} f}}
\\ 
&=
\fnexp{\imunit \dualbkt{\pi \opfocksegal_{L_{n+1}}}{u_{-t} f}}
=
U_{L_{n},t}
\fnexp{\imunit \dualbkt{\pi \opfocksegal_{L_{n+1}}}{f}}
\\ 
&=
\rbk{U_{L_{n},t} \fnexp{\imunit \dualbkt{\opfocksegal_{L_{n}}}{f}}} \circ \pi
\end{aligned}
\end{equation}
holds.

The reflection case is computed in the same way.
\end{proof}

We now show the existence of the projective limit of the local measures.

\begin{prop}
A probability measure can be constructed as the projective limit of the sequence of measures associated with the sequence $\seqn{L}$.
Moreover, this probability measure is defined on a countably generated algebra.
\end{prop}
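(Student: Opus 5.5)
The plan is to apply the abstract machinery of Section~\ref{expedition0011536} to the countable directed index set $I = \monnat$ with the natural order. We take $X_n = \prbqspace_{\txttot,L_n}$, $P_n = \msr{\mu_{\txttot,L_n}}$, and bonding maps $\pi_{nm} = \pi_{L_n}^{L_{n+1}} \circ \cdots \circ \pi_{L_{m-1}}^{L_m}$ for $n \leq m$, with $\pi_{nn} = \id$.

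First we verify the hypotheses of Definition~\ref{expedition0011527}. Each $\prbqspace_{\txttot,L_n}$ is a separable Hilbert space, being the completion of $\dom B^{-\onehalf}$ in a weighted $\lp^{2}$-norm over the countable lattice $\setlattice_{L_n}^d$ times the circle $S_{\sminvtemperature}$. It is therefore Polish, and its Borel algebra $\mblfmlfrak{S}_{\txttot,L_n}$ is countably generated. Each map $\pi_{L_n}^{L_{n+1}}$ is the adjoint of the isometry $\iota_{L_n}^{L_{n+1}}$, extended through the regularization, so it is a bounded linear map between these Hilbert spaces and hence continuous. Composition consistency $\pi_{ik} = \pi_{ij}\circ\pi_{jk}$ is Proposition~\ref{expedition0012055}(1). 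To remove the ``almost surely'' qualifier there, we will use that on the level of $\ell^2$ restriction of wavenumbers the identity holds pointwise, since $\setlattice_{L_n}^d \subset \setlattice_{L_{n+1}}^d \subset \setlattice_{L_{n+2}}^d$. Consistency of the measures, $P_n = P_{n+1}\circ(\pi_{L_n}^{L_{n+1}})^{-1}$, is Proposition~\ref{expedition0012055}(4), and iterating it gives $P_n = P_m \circ \pi_{nm}^{-1}$.

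Next we invoke Theorem~\ref{expedition0011529} (Kolmogorov's extension) to obtain a unique probability measure $\msr{\mu_{\txttot}}$ on $X = \projlim_n X_n$ with $\msr{\mu_{\txttot}} \circ \pi_n^{-1} = P_n$. For countable index sets this step is elementary: $X$ is a closed subset of the countable product $\prod_n X_n$, which is itself Polish, so the usual Kolmogorov theorem applies without further tightness assumptions.

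For countable generation, note that the $\sigma$-algebra on $X$ is the trace of the product $\sigma$-algebra $\bigotimes_n \mblfmlfrak{S}_{\txttot,L_n}$. A countable product of countably generated $\sigma$-algebras is countably generated, using the cylinders built from a countable generator in each factor. The trace of a countably generated $\sigma$-algebra on a subset is again countably generated. Equivalently, this $\sigma$-algebra coincides with $\sigma\bigl(\bigcup_n \pi_n^{-1}(\mblfmlfrak{S}_{\txttot,L_n})\bigr)$, which is increasing in $n$ by Proposition~\ref{expedition0012055}(2).

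The main obstacle is the Kolmogorov step, specifically making sure the projective limit is a genuine standard Borel space rather than an abstract set. If the plain product argument feels insufficient, a fallback is to first show projective tightness in the sense of Definition~\ref{expedition0011530}. By Gaussianity, each $P_n$ concentrates on a ball of a compactly embedded space, and the consistency of covariances in Proposition~\ref{expedition0012055}(3) lets the radii be chosen uniformly and compatibly in $n$. Theorem~\ref{expedition0011535} then gives both existence and weak convergence.
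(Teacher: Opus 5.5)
Your proposal is correct and follows the paper's route. Like the paper, you form the projective limit of the spaces $\prbqspace_{\txttot,L_n}$ with the $\sigma$-algebra generated by the cylinders $\oppr_n^{-1}(A)$, take measure consistency from Proposition~\ref{expedition0012055}, and apply Kolmogorov's extension theorem, while supplying details the paper only asserts: Polishness, continuity of the bonding maps, closedness of the limit in the countable product (which makes the countable-index Kolmogorov step elementary), and countable generation via traces. The tightness fallback is unnecessary, and as sketched it would need more than consistency of covariances: the expected squared $\prbqspace_{\txttot,L_n}$-norms under $\mu_{\txttot,L_n}$ grow with $L_n$, so a naive uniform radius fails.
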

\begin{proof}
Define the projective limit set $\prbqspace_{\txttot,\infty}$ by$$\prbqspace_{\txttot,\infty}
=
\set{\seqn{q} \in \prod_{n \in \monnat} \prbqspace_{\txttot,L_{n}}}
{\pi_{L_{m}}^{L_{n}}(q_{n}) = q_{m} (m \leq n)},$$and let the coordinate projection be $\oppr_{n}
\colon \prbqspace_{\txttot,\infty}
\to \prbqspace_{\txttot,L_{n}}$.
The cylinder set family$$\mblfmlfrak{S}_{\txttot,\infty}
=
\mblfmlgenerated{\set{\inv{\oppr_{n}}(A)}
{n \in \monnat, A \in \mblfmlfrak{S}_{\txttot,L_{n}}}}$$is countably generated.
By the consistency in Proposition \ref{expedition0012055}, the finite-dimensional distributions of the family of probability measures on finite intersections of cylinder sets are well-defined.
By the Kolmogorov extension theorem, there exists a probability measure $\msr{\mu_{\txttot,\infty}}$ satisfying the pushforward condition $\pushoutrbk{\oppr_{n}} \msr{\mu_{\txttot,\infty}}
= \msr{\mu_{\txttot,L_{n}}}$.
\end{proof}

For self-containedness, we formulate a lemma for the sesquilinear form \(\opform{q}_{0}\) that does not use the zeroth-order Bessel function.

\begin{lem}\label{expedition0012066}
For any $f
\in \fun{\lp^{1}}{\fldreal^{d}}$, there exists a probability measure $\msr{\chi}$ realizing the sesquilinear form $\opform{q}_{0}$ as
\begin{equation}
\begin{aligned}
\fnexp{-\oneoverfour \opform{q}_{0}(f)}
&=
\int_{\fldreal^{2}}
\napiernum^{\imunit \ell_{\sminvtemperature,r,\theta}(f)}
\opdmsr{\chi(r,\theta)},
\\ 
\ell_{\sminvtemperature,r,\theta}(f)
&=
\sqrt{2 (2 \pi)^{d} \smnumberdensity_{0}(\sminvtemperature) r}
\fun{\opreal}{\napiernum^{\imunit \theta} \hat{f}(0)}.
\end{aligned}
\end{equation}
\end{lem}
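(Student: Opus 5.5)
We need a probability measure $\chi$ on $\fldreal^{2}$ (in coordinates $(r,\theta)$) such that the displayed identity holds. Set $a = 2(2\pi)^{d}\smnumberdensity_{0}(\sminvtemperature)$ and $z = \hat{f}(0) \in \fldcmp$, which is well defined for $f \in \fun{\lp^{1}}{\fldreal^{d}}$. The left-hand side is then $\fnexp{-\frac{a}{4}\abs{z}^{2}}$. The right-hand side depends on $f$ only through $z$, via the real linear functional $\sqrt{ar}\,\opreal(\napiernum^{\imunit\theta} z)$. So the lemma reduces to a statement about one complex number: find a random pair $(r,\theta)$ whose induced random variable $X = \sqrt{ar}\,\opreal(\napiernum^{\imunit\theta} z)$ has characteristic function value $\prbexp[\napiernum^{\imunit X}] = \napiernum^{-a\abs{z}^{2}/4}$. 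We need this for all $z$ simultaneously.

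The plan is to choose a rotation-invariant Gaussian in polar form. Let $W$ be a standard complex Gaussian on $\fldcmp \cong \fldreal^{2}$ with density $\pi^{-1}\napiernum^{-\abs{w}^{2}}$, so that $\prbexp[\napiernum^{\imunit \opreal(\cmpconj{w} z)}] = \napiernum^{-\abs{z}^{2}/4}$. Write $W = \sqrt{r}\,\napiernum^{-\imunit\theta}$. Then $r = \abs{W}^{2}$ is exponentially distributed with parameter $1$, and $\theta$ is uniform on $[0,2\pi)$ and independent of $r$. We take $\chi$ to be the law of $(r,\theta)$, namely
\[
\opdmsr{\chi(r,\theta)} = \napiernum^{-r}\fndef{r>0}\,\frac{\fndef{[0,2\pi)}(\theta)}{2\pi}\,\opdmsr{r}\,\opdmsr{\theta}.
\]
This is a probability measure on $\fldreal^{2}$ with total mass $1 \cdot 1 = 1$.

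The verification then proceeds in two steps. First, express the integrand through $W$: $\sqrt{ar}\,\opreal(\napiernum^{\imunit\theta} z) = \opreal(\cmpconj{W}\sqrt{a}\,z)$, which holds because $\cmpconj{W} = \sqrt{r}\,\napiernum^{\imunit\theta}$. Second, evaluate the Gaussian integral. Since $\opreal(\cmpconj{W} u) = \opreal W \opreal u + \opimag W \opimag u$, and $\opreal W$, $\opimag W$ are independent $N(0,\frac12)$, we get
\[
\prbexp\big[\napiernum^{\imunit \opreal(\cmpconj{W} u)}\big] = \napiernum^{-\abs{u}^{2}/4}.
\]
Putting $u = \sqrt{a}\,z$ gives $\napiernum^{-a\abs{\hat f(0)}^{2}/4} = \fnexp{-\oneoverfour \opform{q}_{0}(f)}$, which is the claim. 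The change of variables $w \mapsto (r,\theta)$ has Jacobian $\opdmsr{w} = \frac12\,\opdmsr{r}\,\opdmsr{\theta}$, and this is what produces the density above. This route stays away from the zeroth-order Bessel function, as the lemma requires: the angular integral never has to be computed on its own, because it is absorbed into the Cartesian Gaussian.

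The only delicate point is keeping the normalizations consistent: the factor $\frac14$ in $\opform{q}_{0}$, the variance $\frac12$ of each coordinate of $W$, and the constant $a$. One also has to check that $\sqrt{ar}$ is real, which requires $\smnumberdensity_{0}(\sminvtemperature) \ge 0$ and $r \ge 0$; both hold on the support of $\chi$. No earlier result is needed beyond the definition of $\opform{q}_{0}$.
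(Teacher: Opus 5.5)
Your proposal is correct and follows the paper's proof: both take $\chi$ to be the polar-coordinate law of an isotropic planar Gaussian whose coordinates have variance $\onehalf$, i.e.\ an exponential radial part $\napiernum^{-r}\opdmsr{r}$ times a uniform angle. Both then reduce the identity to the characteristic function of a centered one-dimensional Gaussian with variance $\onehalf \opform{q}_{0}(f)$. The differences are cosmetic: you parametrize via $\cmpconj{W} = \sqrt{r}\,\napiernum^{\imunit\theta}$ rather than $X + \imunit Y = \sqrt{R}\,\napiernum^{\imunit\Theta}$, and you state the support and Jacobian of $\chi$ explicitly, which the paper leaves implicit.
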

\begin{proof}
For brevity, fix $f$ and set $z
= \hat{f}(0)
= a + \imunit b$, and let the coefficient of the functional $\ell_{\sminvtemperature,r,\theta}$ be $c
= \sqrt{2 (2 \pi)^{d} \smnumberdensity_{0}(\sminvtemperature)}$.

First, we examine the characteristic function of Gaussian random variables.
Let $\pairbk{X,Y}$ be a pair of centered Gaussian random variables in two dimensions that are mutually independent with covariance $\onehalf$; then$$\sqfun{\prbexp}
{\napiernum^{\imunit (ax + by)}}
= \napiernum^{-\oneoverfour (a^{2} + b^{2})}.$$Setting $G_{f}
=
c
\fun{\opreal}
{(X + \imunit Y) z}
=
c (aX - bY)$, this is a one-dimensional centered Gaussian random variable.
The variance is
\begin{equation}
\begin{aligned}
&\sqfun{\prbvar}{G_{f}}
=
c^{2}
\sqfun{\prbvar}
{aX - bY}
=
c^{2}
\rbk{a^{2} \prbvar X + b^{2} \prbvar Y}
\\ 
&=
c^{2}
\frac{a^{2} + b^{2}}{2}
=
\frac{c^{2} \abs{z}^{2}}{2}
\end{aligned}
\end{equation}
can be computed, giving$$\sqfun{\prbexp_{\pairbk{X,Y}}}
{\fnexp{\imunit c \fun{\opreal}{(X + \imunit Y) z}}}
=
\napiernum^{-\oneoverfour \opform{q}_{0}(f)}.$$

Represent $\fldreal^{2}$ in two-dimensional polar coordinates, and let the radial and angular measures be$$\opdmsr{\nu(r)}
=
\napiernum^{-r} \opdmsr{r},
\quad
\opdmsr{\lambda}(\theta)
=
\frac{1}{2 \pi} \opdmsr{\theta},$$and take their product measure to be $\msr{\chi}$.
As random variables, for $\omega
= \vecbk{r,\theta}
\in \fldreal^{2}$, define $R(\omega)
= r$ and $\Theta(\omega)
= \theta$.
These satisfy$$R
= X^{2} + Y^{2},
\quad
\Theta
=
\arg (X + \imunit Y),$$and are meaningful as measurable maps.
They satisfy$$\sqrt{R}
\fun{\opreal}
{\napiernum^{\imunit \Theta} z}
=
\fun{\opreal}
{(X + \imunit Y) z},
\quad
\ell_{\sminvtemperature,R,\Theta}(f)
=
c
\fun{\opreal}
{(X + \imunit Y) z}.$$Rewriting the integral representation,
\begin{equation}
\begin{aligned}
&\int_{\fldreal^{2}}
\napiernum^{\imunit \ell_{\sminvtemperature,r,\theta}(f)}
\opdmsr{\chi(r,\theta)}
=
\sqfun{\prbexp_{\pairbk{R,\Theta}}}
{\napiernum^{\imunit \ell_{\sminvtemperature,R,\Theta}(f)}}
\\ 
&=
\sqfun{\prbexp_{\pairbk{X,Y}}}
{\fnexp{\imunit
c
\fun{\opreal}
{(X + \imunit Y) z}}}
\end{aligned}
\end{equation}
is obtained.
By the argument for the characteristic function of the one-dimensional Gaussian random variable above,
\begin{equation}
\begin{aligned}
&\int_{\fldreal^{2}}
\napiernum^{\imunit \ell_{\sminvtemperature,r,\theta}(f)}
\opdmsr{\chi(r,\theta)}
=
\sqfun{\prbexp_{\pairbk{X,Y}}}
{\fnexp{\imunit
c
\fun{\opreal}
{(X + \imunit Y) z}}}
\\ 
&=
\fnexp{-\oneoverfour
c^{2} z^{2}}
=
\fnexp{-\oneoverfour \opform{q}_{0}(f)}
\end{aligned}
\end{equation}
is obtained.
\end{proof}

Let \(\msr{\mu_{\txtbec}}\) denote the probability measure on \(\prbqspace_{\txttot}\) associated with the sesquilinear form \(\opform{q}_{\txtbec}\) via\[\sqfun{\prbexp_{\msr{\mu_{\txtbec}}}}
{\napiernum^{\imunit \opfocksegal(f)}}
=
\fnexp{-\oneoverfour
\opform{q}_{\txtbec}(f)}.\]

\begin{prop}\label{expedition0012063}
After restoring the chemical potential $\smchemicalpotential
< 0$, taking the limit $L
\to \infty$ of the local measures, and then taking the limit $\smchemicalpotential
\uparrow 0$ for the chemical potential, for any $f
\in \sphilb{D}_{0,\sminvtemperature}$,
\begin{equation}
\begin{aligned}
\sqfun{\prbexp_{\msr{\mu_{\txtbec}}}}
{\napiernum^{\imunit \opfocksegal(j_{0} f)}}
&=
\fnexp{-\oneoverfour
\opform{q}_{0}(f)}
\sqfun{\prbexp_{\msr{\mu_{\txtnonzero}}}}
{\napiernum^{\imunit \opfocksegal(j_{0} f)}}
\\ 
&=
\int_{\fldreal^{2}}
\sqfun{\prbexp_{\msr{\mu_{\txtnonzero}}}}
{\napiernum^{\imunit \rbk{\opfocksegal(j_{0} f) + \ell_{\sminvtemperature,r,\theta}(f)}}}
\opdmsr{\chi(r,\theta)}
\end{aligned}
\end{equation}
holds.
In particular, $\opform{q}_{0}$ is invariant under the action of $U_{t}$.
\end{prop}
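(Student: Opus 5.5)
The plan is to work at the level of characteristic functionals throughout, since all measures involved are Gaussian (or mixtures of Gaussians) and are determined by their characteristic functionals. First, for fixed $L$ and $\smchemicalpotential<0$, the local measure $\msr{\mu_{\txttot,L}}$ has characteristic functional $\fnexp{-\oneoverfour \opform{q}_{C_{\smchemicalpotential},L}(j_0 f)}$. By the covariance formula for $j_t$, this equals $\fnexp{-\oneoverfour \bkt{f}{K_{\sminvtemperature,\smchemicalpotential,L} f}}$. Using the projective consistency of Proposition~\ref{expedition0012055} and the projective limit measure $\msr{\mu_{\txttot,\infty}}$, the $L\to\infty$ limit of these characteristic functionals can be identified for $f$ in the image of the embeddings. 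For a fixed density, the standard Bose gas argument of \cite{AsaoArai28,YoshitsuguSekine004} is to be imported, in which $\smchemicalpotential$ is tuned with $L$. This gives the sum over $\setlattice_L^d$ splitting into the $k=0$ mode contribution, which converges to $\opform{q}_0(f)=2(2\pi)^d\smnumberdensity_0(\sminvtemperature)\abs{\hat f(0)}^2$, and the remaining modes, which converge as a Riemann sum to $\opform{q}_{\txtnonzero,\sminvtemperature}(f)$. This uses $f\in\sphilb{D}_{0,\sminvtemperature}$, i.e.\ $f\in \lp^1\cap\lp^2$ so that $\hat f$ is continuous at $0$, and $f\in\opformdomain(\opform{q}_{\txtnonzero,\sminvtemperature})$.

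The first equality is then immediate from the definition of $\msr{\mu_{\txtbec}}$ through $\opform{q}_{\txtbec}=\opform{q}_0+\opform{q}_{\txtnonzero,\sminvtemperature}$. The second equality follows from Lemma~\ref{expedition0012066}: I replace $\fnexp{-\oneoverfour\opform{q}_0(f)}$ by $\int \napiernum^{\imunit\ell_{\sminvtemperature,r,\theta}(f)}\opdmsr{\chi(r,\theta)}$. Since $\ell_{\sminvtemperature,r,\theta}(f)$ is a deterministic number for fixed $(r,\theta)$, it can be pulled inside $\prbexp_{\msr{\mu_{\txtnonzero}}}$. Fubini applies because the integrand has modulus one, and this yields the mixture representation.

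For the invariance of $\opform{q}_0$ under $U_t$, I would note that $U_t$ acts on test functions by $u_t$, and at time zero this corresponds to $f\mapsto \napiernum^{\pm t\omega}f$ or $\napiernum^{\imunit t\omega}f$ in momentum space. Since $\omega(0)=0$, we get $\widehat{(\napiernum^{\imunit t\omega}f)}(0)=\hat f(0)$, so $\opform{q}_0$ is unchanged. Alternatively, this can be obtained directly from the rotation invariance of $\chi$ in $\theta$, which absorbs any phase.

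The main obstacle is the order of limits $L\to\infty$ then $\smchemicalpotential\uparrow0$. Taken literally, this kills the condensate unless $\smchemicalpotential=\smchemicalpotential_L$ is chosen density-dependently. I would handle this by citing the thermodynamic-limit computation in \cite{AsaoArai28,YoshitsuguSekine004} rather than redoing it, and by checking only that the $\hat f(0)$ factor is what survives in the zero-mode term.
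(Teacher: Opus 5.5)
Your proposal is correct and takes essentially the same route as the paper: split the finite-volume characteristic functional into the $k=0$ mode and the remaining modes, import the thermodynamic BEC limit from \cite{AsaoArai28,YoshitsuguSekine004}, then apply Lemma~\ref{expedition0012066} for the mixture representation, and use $\omega(0)=0$ for the $U_t$-invariance of $\opform{q}_0$. Your caveat about the order of limits is well taken: at fixed $\smchemicalpotential<0$ the zero-mode term is $O(1/V)$ and vanishes as $L\to\infty$, so a density-dependent $\smchemicalpotential_L$ is needed, and the paper's proof covers this only with the phrase ``under the Bose--Einstein condensation condition.''
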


\begin{proof}
By definition, $\opform{q}_{0}$ evaluates values at the origin in wavenumber space.
Since $U_{t}$ acts via $\napiernum^{-\sminvtemperature \omega(0)}
= 1$, the triviality of the action on $\opform{q}_{0}$ follows.

To obtain the measure estimate and limit, it suffices to evaluate the characteristic functional.
Fix $L
> 0$ and take any $f
\in \fun{\lpseq^{2}}{\setlattice_{L}^{d}}$; then$$\sqfun{\prbexp_{\msr{\mu_{\txttot,L}}}}
{\napiernum^{\imunit \opfocksegal(j_{0} f)}}
=
\fnexp{-\oneoverfour \opform{q}_{\txtnonzero,\smchemicalpotential,L}(f)}.$$Set $f_{k}
= \bkt{f}{e_{k}}$ for the above $f$.
Computing in wavenumber space,
\begin{equation}
\begin{aligned}
\opform{q}_{\txtnonzero,\smchemicalpotential,L}(f)
&=
\sum_{k \in \setlattice_{L}^{d}}
\frac{1 + \napiernum^{-\sminvtemperature (\omega(k) - \smchemicalpotential)}}
{1 - \napiernum^{-\sminvtemperature (\omega(k) - \smchemicalpotential)}}
\abs{f_{k}}^{2}
=
\opform{q}_{0,\smchemicalpotential,L}(f)
+\opform{q}_{\txtnonzero,\smchemicalpotential,L}(f)
\end{aligned}
\end{equation}
holds.
By definition, for $f
\in
\opformdomain(\opform{q}_{0})
\cap
\opformdomain(\opform{q}_{\txtnonzero})$,
under the Bose-Einstein condensation condition, taking $L
\to \infty$ followed by $\smchemicalpotential
\downarrow 0$ gives$$\opform{q}_{\txtnonzero,\smchemicalpotential,L}(f)
\to
\opform{q}_{\txtbec}(f)
=
\opform{q}_{0}(f) + \opform{q}_{\txtnonzero}(f).$$It then suffices to substitute this back into the exponent.

By Corollary \ref{expedition0012066},$$\fnexp{-\oneoverfour
\opform{q}_{0}(f)}
=
\int_{\fldreal^{2}}
\fnexp{\imunit \ell_{\sminvtemperature,r,\theta}(f)}
\opdmsr{\chi(r,\theta)}.$$Substituting,
\begin{equation}
\begin{aligned}
&\sqfun{\prbexp_{\msr{\mu_{\txtbec}}}}
{\napiernum^{\imunit \opfocksegal(j_{0} f)}}
=
\fnexp{-\oneoverfour \opform{q}_{0}(f)}
\sqfun{\prbexp_{\msr{\mu_{\txtnonzero}}}}
{\napiernum^{\imunit \opfocksegal(j_{0} f)}}
\\ 
&=
\int_{\fldreal^{2}}
\napiernum^{\imunit \ell_{\sminvtemperature,r,\theta}(f)}
\opdmsr{\chi(r,\theta)}
\cdot
\sqfun{\prbexp_{\msr{\mu_{\txtnonzero}}}}
{\napiernum^{\imunit \opfocksegal(j_{0} f)}}
\\ 
&=
\int_{\fldreal^{2}}
\sqfun{\prbexp_{\msr{\mu_{\txtnonzero}}}}
{\napiernum^{\imunit \rbk{\opfocksegal(j_{0} f) + \ell_{\sminvtemperature,r,\theta}(f)}}}
\opdmsr{\chi(r,\theta)}
\end{aligned}
\end{equation}
is obtained.
\end{proof}

\subsection{Quasi-local Construction of the van Hove Model with IR/UV Cutoffs}\label{quasi-local-construction-of-the-van-hove-model-with-iruv-cutoffs}

Using the notation of the previous section, we introduce the van Hove model for a bounded system at finite temperature by applying the perturbation theory for interacting systems of \cite{DerezinskiGerard001}, and discuss the quasi-local construction. Throughout, the chemical potential satisfies \(\smchemicalpotential
< 0\).

For any \(\opfocksegal_{L}
\in \prbqspace_{\txttot,L}\), define \(\mathsf{V}_{\kappa,\Lambda,L}(\opfocksegal_{L})
=
\int_{S_{\sminvtemperature}}
\opfocksegal_{L}(j_t \omega \mathsf{m}_{\kappa,\Lambda})
\opdmsr{t}\).

\begin{prop}
For any $L, \lambda, \Lambda$, $\smpartitionfunc_{\sminvtemperature,\kappa,\Lambda,L}
=
\int_{\prbqspace_{\txttot,L}}
\napiernum^{-\mathsf{V}_{\kappa,\Lambda,L}}
\opdmsr{\mu_{\txttot,L}}$ is strictly positive and finite.
\end{prop}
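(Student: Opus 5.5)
The random variable $\mathsf{V}_{\kappa,\Lambda,L}$ is a continuous linear functional of the Gaussian field, namely $\mathsf{V}_{\kappa,\Lambda,L}(\opfocksegal_{L}) = \opfocksegal_{L}(W)$ with $W = \int_{S_{\sminvtemperature}} j_t(\omega \mathsf{m}_{\kappa,\Lambda}) \opdmsr{t}$ understood as a Bochner integral in $\faadjpresharp{{\prbqspace_{\txttot,L}}}$. The plan is to show first that $\mathsf{V}_{\kappa,\Lambda,L}$ is a centered real Gaussian random variable with finite variance. Then
\[
\smpartitionfunc_{\sminvtemperature,\kappa,\Lambda,L} = \fnexp{\onehalf \sqfun{\prbvar_{\msr{\mu_{\txttot,L}}}}{\mathsf{V}_{\kappa,\Lambda,L}}},
\]
which is automatically strictly positive. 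Finiteness reduces to finiteness of the variance. This mirrors the zero-temperature computation of Proposition~\ref{expedition0011362} and Proposition~\ref{expedition0011501}, with the interval $\closedinterval{0}{T}$ replaced by the circle $S_{\sminvtemperature}$ and the covariance replaced by the thermal one.

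\textbf{Step 1 (well-definedness).} Under the IR/UV cutoffs, $\omega \mathsf{m}_{\kappa,\Lambda} = \omega^{-\onehalf} \varrho_{\kappa,\Lambda}$ has momentum support in $\kappa \leq \abs{k} \leq \Lambda$. In the box $I_L^d$, this support contains only finitely many lattice points of $\setlattice_L^d$. Consequently $\omega \mathsf{m}_{\kappa,\Lambda}$ lies in every domain $\rbk{2 \omega \tanh \frac{\sminvtemperature(\omega - \smchemicalpotential)}{2}}^{\onehalf} \sphilb{H}_{\txtreal}$ on which $j_t$ is defined, and $t \mapsto j_t(\omega \mathsf{m}_{\kappa,\Lambda})$ is uniformly bounded in the $\faadjpresharp{{\prbqspace_{\txttot}}}$-norm. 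Hence $W$ exists, and $\mathsf{V}_{\kappa,\Lambda,L} = \opfocksegal_L(W)$ almost surely by linearity and $\lp^{2}$-continuity of $\opfocksegal$. If one wants to avoid Bochner integrals, the same conclusion follows by approximating with Riemann sums and noting that $\lp^{2}$-limits of jointly Gaussian centered variables are Gaussian.

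\textbf{Step 2 (variance).} Using the covariance formula for $j_{t_1}, j_{t_2}$ with the thermal kernel (the periodic version of $\napiernum^{-\abs{t_1 - t_2}\omega} K_{\sminvtemperature,\smchemicalpotential}$), I compute
\[
\sqfun{\prbvar}{\mathsf{V}_{\kappa,\Lambda,L}} = \onehalf \int_{S_{\sminvtemperature}} \int_{S_{\sminvtemperature}} \bkt{\omega \mathsf{m}_{\kappa,\Lambda}}{\napiernum^{-\abs{t_1 - t_2}\omega} K_{\sminvtemperature,\smchemicalpotential} \, \omega \mathsf{m}_{\kappa,\Lambda}} \opdmsr{t_1} \opdmsr{t_2},
\]
with the normalization fixed so that it matches $\opform{q}_{C_{\smchemicalpotential}}$. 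The kernel is bounded by $\norm{K_{\sminvtemperature,\smchemicalpotential}}$, which is finite because $\smchemicalpotential < 0$. The time integrals are taken over a set of measure $\sminvtemperature^2$. The vector norm is a finite lattice sum $\sum_{k} \abs{\omega(k)}\abs{\widehat{\mathsf{m}}_{\kappa,\Lambda}(k)}^2$. The variance is therefore finite, and it could even be evaluated explicitly via the periodic analogue of Lemma~\ref{expedition0011363}.

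\textbf{Step 3 (conclusion).} Apply the Gaussian moment generating function $\mathbb{E}[\napiernum^{-X}] = \napiernum^{\sigma^2/2}$ for a centered Gaussian $X$ with variance $\sigma^2$. This gives $0 < \smpartitionfunc_{\sminvtemperature,\kappa,\Lambda,L} < \infty$.

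\textbf{Main obstacle.} The delicate point is Step 1: the test function $\delta_t \otimes \omega \mathsf{m}_{\kappa,\Lambda}$ does not itself lie in $\dom B^{\onehalf}$, because $\delta_t$ is not in $\lp^{2}(S_{\sminvtemperature})$. One must therefore either work with the sharp-time fields as $\lp^{2}$-limits, as is done for $\opfocksegal_s(g)$ in the setup, or verify directly that the time-averaged $W$ lies in $\faadjpresharp{{\prbqspace_{\txttot}}}$. Integration in time smooths $\delta_t$, so $W = \idone_{S_{\sminvtemperature}} \otimes \omega \mathsf{m}_{\kappa,\Lambda}$, and I would check this second route first. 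The factor $(1+\omega^2)^r$ in $B$ then acts only on the constant time-Fourier mode, and the spatial factors are harmless thanks to the cutoffs.
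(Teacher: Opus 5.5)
Your proposal is correct and follows the paper's proof. The paper simply notes that $\mathsf{V}_{\kappa,\Lambda,L}$ is a linear functional of the field, hence a centered Gaussian variable under $\mu_{\txttot,L}$, so its exponential moment $\fnexp{\onehalf \prbvar(\mathsf{V}_{\kappa,\Lambda,L})}$ is strictly positive and finite; you additionally supply the well-definedness and finite-variance checks that the paper leaves implicit. The one thing to fix is the claim in Step~1 that $t \mapsto j_t(\omega \mathsf{m}_{\kappa,\Lambda})$ is bounded in the $\faadjpresharp{{\prbqspace_{\txttot}}}$-norm, which is inconsistent with your own (correct) observation that $\diracdelta_t \otimes g \notin \dom B^{\onehalf}$; rely instead on $W = \idone_{S_{\sminvtemperature}} \otimes \omega \mathsf{m}_{\kappa,\Lambda} \in \dom B^{\onehalf}$, or on Riemann sums converging in the covariance norm.
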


\begin{proof}
The functional $\mathsf{V}_{\kappa,\Lambda,L}$ is a linear functional in $\opfocksegal_{L}$ and is a centered Gaussian random variable under the measure $\msr{\mu_{\txttot,L}}$.
In particular, the exponential moment is finite and strictly positive.
\end{proof}

Based on this proposition, define the Gibbs measure for the bounded system as\[\opdmsr{\mu_{\txtvanhowe,\sminvtemperature,\kappa,\Lambda,\smchemicalpotential,L}}
=
\frac{1}{\smpartitionfunc_{\sminvtemperature,\kappa,\Lambda,L}}
\napiernum^{-\mathsf{V}_{\kappa,\Lambda,L}}
\opdmsr{\mu_{\txttot,L}}.\]

\begin{prop}
The measure $\msr{\mu_{\txtvanhowe,\sminvtemperature,\kappa,\Lambda,\smchemicalpotential,L}}$ defined above is a Gaussian measure, and its characteristic functional for $f
\in \faadjpresharp{{\prbqspace_{\txttot,L}}}$ is
\begin{equation}
\begin{aligned}
&\int_{\prbqspace_{\txttot,L}}
\napiernum^{\imunit \dualbkt{\opfocksegal_{L}}{j_{t} f}}
\opdmsr{\mu_{\txtvanhowe,\sminvtemperature,\kappa,\Lambda,\smchemicalpotential,L}(\opfocksegal_{L})}
\\ 
&=
\fnexp{-\oneoverfour
\opform{q}_{C_{\smchemicalpotential,L}}(j_{t} f)
-\frac{\imunit}{2}
\int_{S_{\sminvtemperature}}
\opform{q}_{C_{\smchemicalpotential},L}(j_{t} f, j_s(\omega \mathsf{m}_{\kappa,\Lambda}))
\opdmsr{s}}
\\ 
&=
\fnexp{-\oneoverfour
\opform{q}_{C_{\smchemicalpotential,L}}(j_{t} f)
-\imunit \mathsf{m}_{\kappa,\Lambda}(f)}.
\end{aligned}
\end{equation}
In particular, the mean and covariance are respectively
\begin{equation}
\begin{aligned}
\sqfun{\prbexp_{\mu_{\txtvanhowe,\sminvtemperature,\kappa,\Lambda,\smchemicalpotential,L}}}
{\opfocksegal_{L}(j_{t} f)}
&=
-\onehalf \mathsf{m}_{\kappa,\Lambda}(f),
\\ 
\sqfun{\prbcov_{\msr{\mu_{\txtvanhowe,\sminvtemperature,\kappa,\Lambda,\smchemicalpotential,L}}}}
{\opfocksegal_{L}(j_{t_1} f),\opfocksegal_{L}(j_{t_2} g)}
&=
\onehalf
\opform{q}_{C_{\smchemicalpotential,L}}(j_{t_1} f, j_{t_2} g),
\end{aligned}
\end{equation}
and the covariance coincides with that of the free field.
\end{prop}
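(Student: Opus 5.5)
The approach is the standard Cameron--Martin (exponential tilting) computation for a Gaussian measure tilted by the exponential of a linear functional. Under $\msr{\mu_{\txttot,L}}$ the map $F \mapsto \opfocksegal_{L}(F)$ is a centered Gaussian process indexed by $\faadjpresharp{{\prbqspace_{\txttot,L}}}$. Its covariance is $\onehalf \opform{q}_{C_{\smchemicalpotential},L}$, consistent with the characteristic functional $\napiernum^{-\oneoverfour \opform{q}_{C_{\smchemicalpotential},L}}$. Write $G = \int_{S_{\sminvtemperature}} j_s(\omega \mathsf{m}_{\kappa,\Lambda}) \opdmsr{s}$, so that $\mathsf{V}_{\kappa,\Lambda,L} = \opfocksegal_{L}(G)$ by linearity and a Bochner-integral interchange. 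This is legitimate because $s \mapsto j_s(\omega \mathsf{m}_{\kappa,\Lambda})$ is continuous into $\faadjpresharp{{\prbqspace_{\txttot,L}}}$, and the cutoffs make $\omega \mathsf{m}_{\kappa,\Lambda}$ a bounded-support vector.

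The first step is to compute the joint law of the pair $\rbk{\opfocksegal_{L}(j_t f), \opfocksegal_{L}(G)}$, which is a two-dimensional centered Gaussian vector. For real $a$ and complex $b$, analytic continuation of the Gaussian characteristic function gives
\[
\int \napiernum^{\imunit a X - Y} \opdmsr{\mu_{\txttot,L}}
=
\fnexp{-\oneoverfour a^2 \opform{q}_{C}(j_t f) + \oneoverfour \opform{q}_{C}(G) - \frac{\imunit a}{2} \opform{q}_{C}(j_t f, G)},
\]
where $X = \opfocksegal_{L}(j_t f)$ and $Y = \opfocksegal_{L}(G)$. Setting $a = 0$ gives $\smpartitionfunc_{\sminvtemperature,\kappa,\Lambda,L} = \napiernum^{\oneoverfour \opform{q}_{C}(G)}$. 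Dividing by this and setting $a = 1$ yields the first displayed expression, since $\opform{q}_{C}(j_t f, G) = \int_{S_{\sminvtemperature}} \opform{q}_{C}(j_t f, j_s \omega \mathsf{m}_{\kappa,\Lambda}) \opdmsr{s}$ by bilinearity and continuity. The characteristic functional is therefore that of a Gaussian measure with the unchanged covariance $\onehalf \opform{q}_{C}$ and a shifted mean. Reading off the coefficient of $\imunit$ and of the quadratic term gives the stated mean and covariance statements; the factor conventions must be kept consistent with those used for the characteristic functional.

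The second step is to evaluate the time integral explicitly, and this is where I expect the main obstacle. Using the covariance formula $\opform{q}_{C}(j_{t} g_1, j_{s} g_2) = \bkt{g_1}{\napiernum^{-\abs{t-s}\omega} K_{\sminvtemperature,\smchemicalpotential} g_2}$ on the circle, the correct reading is the periodic (thermal) kernel $\frac{\napiernum^{-\abs{t-s}(\omega-\smchemicalpotential)} + \napiernum^{-(\sminvtemperature-\abs{t-s})(\omega-\smchemicalpotential)}}{1-\napiernum^{-\sminvtemperature(\omega-\smchemicalpotential)}}$. Integrating in $s$ over $S_{\sminvtemperature}$ then gives $\frac{2}{\omega - \smchemicalpotential}$ times a $t$-independent factor, by the same elementary integrals as Lemma~\ref{expedition0011363}. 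Combined with the $\omega$ in $\omega \mathsf{m}_{\kappa,\Lambda}$, this produces $\bkt{f}{\mathsf{m}_{\kappa,\Lambda}}$ up to the factor $\omega/(\omega-\smchemicalpotential)$ and the normalization of $j_t$, which was defined through $\rbk{2\omega\tanh\frac{\sminvtemperature\omega}{2}}^{\onehalf}$.

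I would first try to absorb these factors into the definition of $j_t$ and of the pairing so that exactly $\mathsf{m}_{\kappa,\Lambda}(f)$ emerges. Failing that, I would note that the identity holds as written only in the limit $\smchemicalpotential \uparrow 0$, or after redefining $\mathsf{m}_{\kappa,\Lambda}$ with $\omega - \smchemicalpotential$. Tracking these constants carefully is the delicate part; the Gaussian structure and the equality of the covariance with that of the free field are immediate from the first step.
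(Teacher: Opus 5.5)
Your route is the paper's: tilt the centred Gaussian $\mu_{\mathrm{tot},L}$ by $e^{-\phi_L(G)}$, cancel $Z=e^{\frac14\mathsf{q}_C(G)}$, then integrate the periodic thermal kernel over $S_\beta$. Your Step 1 formula is correct. It is the paper's displayed Gaussian identity with its typos repaired: the paper writes $-\frac12\mathsf{q}_C(F)$ and $\lambda=1$, where $-\frac14\mathsf{q}_C(F)$ and $\lambda=-1$ are needed. Your worry about the chemical potential is also justified. The paper's proof silently uses the kernel built from $\omega$, i.e.\ effectively $\mu=0$. With $\omega-\mu$ the circle integral is exactly $2/(\omega-\mu)$, with no further factor, so $\mathsf{q}_C(j_tf,G)=2\langle f,\tfrac{\omega}{\omega-\mu}\mathsf{m}_{\kappa,\Lambda}\rangle$. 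This is the field shift one gets by completing the square in $d\Gamma(\omega-\mu)+\phi(\omega\mathsf{m}_{\kappa,\Lambda})$.

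The step that fails is your claim that reading off coefficients yields the stated mean. Step 1 gives the mean $-\frac12\mathsf{q}_C(j_tf,G)$, and Step 2 turns it into $-\langle f,\tfrac{\omega}{\omega-\mu}\mathsf{m}_{\kappa,\Lambda}\rangle$, i.e.\ $-\mathsf{m}_{\kappa,\Lambda}(f)$ at $\mu=0$. This agrees with the Weyl-algebra mean $-\operatorname{Re}\langle\mathsf{m}_{\kappa,\Lambda},f\rangle$ of Proposition~\ref{expedition0011296}.

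The mismatch does not depend on conventions. Replacing $f$ by $sf$ in the statement's own last expression gives $\exp(-\frac{s^2}{4}\mathsf{q}_C(j_tf)-is\,\mathsf{m}_{\kappa,\Lambda}(f))$, which forces the mean of $\phi_L(j_tf)$ to be $-\mathsf{m}_{\kappa,\Lambda}(f)$. So that expression and the stated mean $-\frac12\mathsf{m}_{\kappa,\Lambda}(f)$ cannot both hold. The paper's proof reaches $-\frac12\mathsf{m}_{\kappa,\Lambda}(f)$ only by writing $I=\mathsf{m}_{\kappa,\Lambda}(f)$ right after computing $\int_{S_\beta}C_\beta(t-s)\,ds=2/\omega$.

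For the same reason, nothing can be absorbed into $j_t$. The map $j_tg=\delta_t\otimes g$ carries no scalar; the factor $(2\omega\tanh\frac{\beta\omega}{2})^{1/2}$ only describes its domain. You should state the outcome plainly:
\begin{itemize}
\item the middle expression holds as written;
\item the last one holds with $\tfrac{\omega}{\omega-\mu}\mathsf{m}_{\kappa,\Lambda}$ in place of $\mathsf{m}_{\kappa,\Lambda}$, exactly at $\mu=0$;
\item the mean is $-\langle f,\tfrac{\omega}{\omega-\mu}\mathsf{m}_{\kappa,\Lambda}\rangle$, not $-\frac12\mathsf{m}_{\kappa,\Lambda}(f)$.
\end{itemize}

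A minor technical point concerns the interchange that gives $\mathsf{V}_{\kappa,\Lambda,L}=\phi_L(G)$. With the paper's $B$, which contains $(1+\nu^2)$ in the Matsubara variable, $\delta_s\otimes g\notin\operatorname{dom}B^{1/2}$. Justify the interchange in $L^2(\mu_{\mathrm{tot},L})$ instead, via the $\mathsf{q}_C$-isometry and the $\mathsf{q}_C$-continuity of $s\mapsto j_sg$. Note also that $G=1\otimes\omega\mathsf{m}_{\kappa,\Lambda}$ is itself a genuine test function.
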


\begin{proof}
First, compute the characteristic functional.
For brevity, set $U = \int_{S_{\sminvtemperature}} j_{s}(\omega \mathsf{m}_{\kappa,\Lambda}) \opdmsr{s}$; then by $\mathsf{V}_{\kappa,\Lambda,L}(\opfocksegal_{L}) = \dualbkt{\opfocksegal_{L}}{U}$,
$$\int_{\prbqspace_{\txttot,L}}
\napiernum^{\imunit \dualbkt{\opfocksegal_{L}}{j_{t} f}}
\opdmsr{\mu_{\txtvanhowe,\sminvtemperature,\kappa,\Lambda,\smchemicalpotential,L}(\opfocksegal_{L})}
=
\frac{\int_{\prbqspace_{\txttot,L}}
\napiernum^{\imunit \dualbkt{\opfocksegal_{L}}{j_{t} f} - \dualbkt{\opfocksegal_{L}}{U}}
\opdmsr{\mu_{\txttot,L}}}
{\int_{\prbqspace_{\txttot,L}}
\napiernum^{-\dualbkt{\opfocksegal_{L}}{U}}
\opdmsr{\mu_{\txttot,L}}}.$$Substituting $\lambda
= 1$, $F = j_{T} f$, $G = U$ in the general formula for a centered Gaussian measure$$\int
\napiernum^{\imunit \bkt{\opfocksegal}{F} + \lambda \bkt{\opfocksegal}{G}}
\opdmsr{\mu_{C}(\opfocksegal)}
=
\fnexp{-\onehalf \opform{q}_{C}(F)
+\frac{\imunit \lambda}{2} \opform{q}_{C}(F,G)
+\frac{\lambda^2}{4} \opform{q}_{C}(G)},$$and canceling with the denominator $\smpartitionfunc_{\sminvtemperature,\kappa,\Lambda,L}
=
\napiernum^{\oneoverfour \opform{q}_{C_{\smchemicalpotential},L}(U)}$:$$\int_{\prbqspace_{\txttot,L}}
\napiernum^{\imunit \dualbkt{\opfocksegal_{L}}{j_{t} f}}
\opdmsr{\mu_{\txtvanhowe,\sminvtemperature,\kappa,\Lambda,\smchemicalpotential,L}(\opfocksegal_{L})}
=
\fnexp{-\oneoverfour
\opform{q}_{C_{\smchemicalpotential,L}}(j_{t} f)
-\frac{\imunit}{2}
\int_{S_{\sminvtemperature}}
\opform{q}_{C_{\smchemicalpotential},L}(j_{t} f, j_s(\omega \mathsf{m}_{\kappa,\Lambda}))
\opdmsr{s}}.$$

It remains to simplify the mean.
Noting the equivalence with the Araki-Woods representation from \cite{DerezinskiGerard001}, the sharp-time covariance is$$\opform{q}_{C_{\smchemicalpotential},L}(j_{t} f, j_{s} g)
=
\begin{dcases}
\bkt{f}{\frac{\napiernum^{-(t-s) \omega}}{1 - \napiernum^{-\sminvtemperature \omega}} g}, & t \geq s, \\
\bkt{f}{\frac{\napiernum^{-(s-t) \omega}}{1 - \napiernum^{-\sminvtemperature \omega}} g}, & s > t.
\end{dcases}$$Expressed as a periodic kernel,$$\opform{q}_{C_{\smchemicalpotential},L}(j_{t} f, j_{s} g)
=
\bkt{f}{C_{\sminvtemperature}(t-s) g},
\quad
C_{\sminvtemperature}(\tau)
=
\frac{\napiernum^{-\abs{\tau} \omega} + \napiernum^{-(\sminvtemperature - \abs{\tau}) \omega}}
{1 - \napiernum^{-\sminvtemperature \omega}}.$$

Computing $\int_{S_{\sminvtemperature}}
C_{\sminvtemperature}(t-s)
\opdmsr{s}$:
\begin{equation}
\begin{aligned}
&\int_{S_{\sminvtemperature}}
C_{\sminvtemperature}(t-s)
\opdmsr{s}
=
\int_{-\frac{\sminvtemperature}{2}}^{\frac{\sminvtemperature}{2}}
\frac{\napiernum^{-\abs{t-s} \omega} + \napiernum^{-(\sminvtemperature - \abs{t-s}) \omega}}
{1 - \napiernum^{-\sminvtemperature \omega}}
\opdmsr{s}
\\ 
&=
\frac{2}{1 - \napiernum^{-\sminvtemperature \omega}}
\int_0^{\frac{\sminvtemperature}{2}}
\rbk{\napiernum^{-u \omega}
+\napiernum^{-(\sminvtemperature - u) \omega}}
\opdmsr{u}
\\ 
&=
\frac{2}{1 - \napiernum^{-\sminvtemperature \omega}}
\rbk{\frac{1 - \napiernum^{-\frac{\sminvtemperature}{2} \omega}}{\omega}
+\frac{\napiernum^{-\frac{\sminvtemperature}{2} \omega} - \napiernum^{-\frac{\sminvtemperature}{2} \omega}}{\omega}}
=
\frac{2}{\omega}
\end{aligned}
\end{equation}
is obtained.
Therefore$$I
=
\int_{S_{\sminvtemperature}}
\opform{q}_{C_{\smchemicalpotential},L}(j_{t} f, j_{s}(\omega \mathsf{m}_{\kappa,\Lambda}))
\opdmsr{s}
=
\bkt{f}
{\rbk{\int_{S_{\sminvtemperature}} C_{\sminvtemperature}(t-s)}
\omega \mathsf{m}_{\kappa,\Lambda}}
\opdmsr{s}
=
\mathsf{m}_{\kappa,\Lambda}(f).$$
\end{proof}

The consistency of the Gibbs measures also follows.

\begin{prop}
With the notation of the previous section, the consistency $\pushoutrbk{\pi_{L_n}^{L_{n+1}}}
\msr{\mu_{\txtvanhowe,\sminvtemperature,\kappa,\Lambda,\smchemicalpotential,L_{n+1}}}
=
\msr{\mu_{\txtvanhowe,\sminvtemperature,\kappa,\Lambda,\smchemicalpotential,L_{n}}}$ holds.
\end{prop}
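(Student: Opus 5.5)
The plan is to reduce the consistency of the Gibbs measures to the consistency of the free measures in Proposition~\ref{expedition0012055} and to Gaussian uniqueness. Both sides are probability measures on $\pairbk{\prbqspace_{\txttot,L_{n}},\mblfmlfrak{S}_{\txttot,L_{n}}}$. By the preceding proposition both are Gaussian. A Gaussian measure on this separable Hilbert space is determined by its characteristic functional. It therefore suffices to show that for every $f \in \faadjpresharp{{\prbqspace_{\txttot,L_{n}}}}$ and every $t$ the two characteristic functionals agree on $\napiernum^{\imunit \dualbkt{\opfocksegal_{L_{n}}}{j_t f}}$. The linear span of these exponentials is dense, so this fixes the measure, as in the proof of Proposition~\ref{expedition0012055}(4).

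\textbf{Step 1: rewrite the pushforward integral on the larger volume.} I would use the change of variables for the pushforward together with the duality identity of Proposition~\ref{expedition0012055}(2):
\begin{equation*}
\int \napiernum^{\imunit \dualbkt{\opfocksegal_{L_n}}{j_t f}}
\opdmsr{\pushoutrbk{\pi_{L_n}^{L_{n+1}}}\mu_{\txtvanhowe,\sminvtemperature,\kappa,\Lambda,\smchemicalpotential,L_{n+1}}}
=
\int \napiernum^{\imunit \dualbkt{\opfocksegal_{L_{n+1}}}{\iota_{L_n}^{L_{n+1}} j_t f}}
\opdmsr{\mu_{\txtvanhowe,\sminvtemperature,\kappa,\Lambda,\smchemicalpotential,L_{n+1}}}.
\end{equation*}
The embedding acts trivially in time, so $\iota j_t f = j_t \iota f$.

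\textbf{Step 2: evaluate with the explicit formula.} The preceding proposition, applied at volume $L_{n+1}$, gives
\begin{equation*}
\fnexp{-\oneoverfour \opform{q}_{C_{\smchemicalpotential},L_{n+1}}(j_t\iota f) - \imunit \mathsf{m}_{\kappa,\Lambda}(\iota f)}.
\end{equation*}
The quadratic term equals $\opform{q}_{C_{\smchemicalpotential},L_n}(j_t f)$ by the intertwining in Lemma~\ref{expedition0012071} and Proposition~\ref{expedition0012055}(3). For the linear term I would use the intermediate (unsimplified) expression of the preceding proposition, namely the $s$-integral of $\opform{q}_{C}(j_t \iota f, j_s(\omega\mathsf{m}_{\kappa,\Lambda,L_{n+1}}))$. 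Since $\iota$ is an isometry intertwining $C$ and $\omega$, this reduces to $\opform{q}_{C,L_n}(j_t f, j_s P(\omega \mathsf{m}_{\kappa,\Lambda,L_{n+1}}))$, where $P=\iota^{\ast}$.

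\textbf{Main obstacle: the source term.} The remaining task is to identify $P\,\mathsf{m}_{\kappa,\Lambda,L_{n+1}}$ with $\mathsf{m}_{\kappa,\Lambda,L_n}$. That is, the restriction to the coarser lattice $\setlattice_{L_n}^d$ of the Fourier coefficients of the cutoff source on $I_{L_{n+1}}^d$ should be the source on $I_{L_n}^d$. The point source has constant Fourier transform and the cutoff is a function of $\abs{k}$ only. So this is a statement about Fourier normalisation: the factors $V^{-1/2}$ attached to $e_k$ must be matched consistently. I would first try to fix the normalisation so that the source is literally the restriction of the same momentum-space function, which makes $P\mathsf{m}_{L_{n+1}}=\mathsf{m}_{L_n}$ hold by definition. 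If instead a volume factor survives, I would absorb it into the definition of $\mathsf{m}_{\kappa,\Lambda,L}$ before concluding. Once the two characteristic functionals coincide, Gaussian uniqueness gives the stated equality of measures.
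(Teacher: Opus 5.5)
Your strategy is the one the paper leaves implicit; it gives no argument beyond ``the consistency of the Gibbs measures also follows''. The idea is that the tilt by $e^{-\mathsf{V}_{\kappa,\Lambda,L}}$ leaves the free covariance unchanged, which is consistent by Proposition~\ref{expedition0012055}, and only shifts the mean.

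The reduction in your first paragraph, however, does not work as stated. The exponentials $e^{\mathrm{i}\langle \phi_{L_n}, j_t f\rangle}$ are single-time functionals, and agreement on them does not determine a measure on path space: two stationary Gaussian processes can share every one-time marginal and still differ in their correlations across times. Proposition~\ref{expedition0012055}(4) uses arbitrary space-time test functions, not only $j_t f$, whereas your Step~2 checks only same-time covariances. The repair costs nothing: run the tilting computation of the preceding proposition for an arbitrary test function $F$, which gives
\[
\int e^{\mathrm{i}\langle \phi_L, F\rangle}\, d\mu_{\mathrm{vH},\beta,\kappa,\Lambda,\mu,L}
=\exp\Bigl(-\tfrac{1}{4}\,\mathsf{q}_{C_\mu,L}(F)-\tfrac{\mathrm{i}}{2}\,\mathsf{q}_{C_\mu,L}(F,U_L)\Bigr),
\qquad
U_L=\int_{S_\beta} j_s(\omega\mathsf{m}_{\kappa,\Lambda,L})\,ds .
\]
Now replace $F$ by $\iota F$ and use $C\iota=\iota C$ from Lemma~\ref{expedition0012071}. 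The quadratic term becomes $\mathsf{q}_{C_\mu,L_n}(F)$ and the linear term becomes $\mathsf{q}_{C_\mu,L_n}(F,\iota^{*}U_{L_{n+1}})$. Everything then reduces to $P(\omega\mathsf{m}_{\kappa,\Lambda,L_{n+1}})=\omega\mathsf{m}_{\kappa,\Lambda,L_n}$, exactly as you say.

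For that identity, only your first option proves the proposition as stated, and it is not really a choice. The paper defines the source in momentum space: $\widehat{\varrho}\equiv 1$, and $\mathsf{m}_{\kappa,\Lambda}=\omega^{-3/2}\varrho_{\kappa,\Lambda}$ is a function of $k$, used without an $L$-label in $\mathsf{V}_{\kappa,\Lambda,L}$. So on $\ell^2(\Gamma_L^d)$ the source is $k\mapsto\omega(k)^{-3/2}\mathbf{1}_{\kappa\le|k|\le\Lambda}$ restricted to $\Gamma_L^d$. Since $P$ restricts wavenumbers and commutes with $\omega$, $P\mathsf{m}_{\kappa,\Lambda,L_{n+1}}=\mathsf{m}_{\kappa,\Lambda,L_n}$ holds identically.

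Your fallback of absorbing a volume factor would not prove the statement. Take instead a genuine point source $\delta_0$ on each torus with the orthonormal basis $L^{-d/2}e^{\mathrm{i}k\cdot x}$. Then $P\widehat{\mathsf{m}}_{\kappa,\Lambda,L_{n+1}}=(L_n/L_{n+1})^{d/2}\,\widehat{\mathsf{m}}_{\kappa,\Lambda,L_n}$. The two Gaussian measures have different means whenever the shell $\kappa\le|k|\le\Lambda$ meets $\Gamma_{L_n}^d$, so the family is not consistent under the zero-extension embedding. Rescaling the source changes the model rather than establishing consistency.
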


\begin{thm}[Quasi-local limit]
On the projective limit $\prbqspace_{\txttot,\infty}$, there exists a probability measure $\msr{\mu_{\txtvanhowe,\sminvtemperature,\kappa,\Lambda,\smchemicalpotential}}$ satisfying the condition $\pushoutrbk{\pi_{L_n}^{\infty}}
\msr{\mu_{\txtvanhowe,\sminvtemperature,\kappa,\Lambda,\smchemicalpotential}}
=
\msr{\mu_{\txtvanhowe,\sminvtemperature,\kappa,\Lambda,\smchemicalpotential,L_{n}}}$.
\end{thm}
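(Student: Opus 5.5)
The plan is to reproduce the free-field argument that builds the projective limit measure $\msr{\mu_{\txttot,\infty}}$, now applied to the Gibbs measures $\msr{\mu_{\txtvanhowe,\sminvtemperature,\kappa,\Lambda,\smchemicalpotential,L_{n}}}$. By the preceding proposition, the family $\fml{\msr{\mu_{\txtvanhowe,\sminvtemperature,\kappa,\Lambda,\smchemicalpotential,L_n}}}{n \in \monnat}$ is consistent under the projections $\pi_{L_n}^{L_{n+1}}$. By Proposition~\ref{expedition0012055}(1), these projections compose consistently, $\pi_{L_m}^{L_n} = \pi_{L_m}^{L_{n-1}} \circ \pi_{L_{n-1}}^{L_n}$ almost surely. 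Iterating the one-step consistency along the chain $L_m \leq L_{m+1} \leq \cdots \leq L_n$ gives $\pushoutrbk{\pi_{L_m}^{L_n}} \msr{\mu_{\txtvanhowe,\sminvtemperature,\kappa,\Lambda,\smchemicalpotential,L_n}} = \msr{\mu_{\txtvanhowe,\sminvtemperature,\kappa,\Lambda,\smchemicalpotential,L_m}}$ for all $m \leq n$. The tuple $\pairbk{(\prbqspace_{\txttot,L_n}, \msr{\mu_{\txtvanhowe,\sminvtemperature,\kappa,\Lambda,\smchemicalpotential,L_n}}), \pi_{L_m}^{L_n}}$ is therefore a projective system in the sense of Definition~\ref{expedition0011527}, indexed by the directed set $\monnat$. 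Each $\prbqspace_{\txttot,L_n}$ is a separable Hilbert space, hence Polish, and each $\pi_{L_m}^{L_n}$ is continuous, being the regularized adjoint of an isometry.

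Next, I take the same projective limit set $\prbqspace_{\txttot,\infty}$ with coordinate maps $\oppr_n = \pi_{L_n}^{\infty}$ and the countably generated cylinder $\sigma$-algebra $\mblfmlfrak{S}_{\txttot,\infty}$ used for the free field. On cylinder sets $\inv{\oppr_n}(A)$, define the set function by $\msr{\mu_{\txtvanhowe,\sminvtemperature,\kappa,\Lambda,\smchemicalpotential,L_n}}(A)$. Well-definedness, meaning independence of the representative $n$, follows from the consistency in the previous paragraph, and finite additivity follows because any two cylinder sets can be lifted to a common level $n$. Kolmogorov's extension theorem in the form of Theorem~\ref{expedition0011529} then extends this set function to a probability measure $\msr{\mu_{\txtvanhowe,\sminvtemperature,\kappa,\Lambda,\smchemicalpotential}}$ on $\prbqspace_{\txttot,\infty}$ with the required pushforward property.

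The main obstacle is countable additivity on the cylinder algebra. For a sequential inverse system of Polish spaces with continuous bonding maps, this is the standard Bochner/Kolmogorov situation, and inner regularity (tightness) of each Borel probability measure on a Polish space suffices. I would invoke this directly, as the paper does for $\msr{\mu_{\txttot,\infty}}$, rather than check projective tightness in the sense of Definition~\ref{expedition0011530}. There is also an alternative route that sidesteps the issue: since $\msr{\mu_{\txtvanhowe,\sminvtemperature,\kappa,\Lambda,\smchemicalpotential,L_n}}$ is just the free Gaussian measure $\msr{\mu_{\txttot,L_n}}$ shifted by a mean, one can define the limit measure as the image of the already constructed $\msr{\mu_{\txttot,\infty}}$ under the consistent family of translations $\opfocksegal_{L_n} \mapsto \opfocksegal_{L_n} + a_{L_n}$. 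The shifts $a_{L_n}$ are compatible because, by Lemma~\ref{expedition0012071}, $\pi_{L_n}^{L_{n+1}} a_{L_{n+1}} = a_{L_n}$, which follows from the mean formula $-\onehalf \mathsf{m}_{\kappa,\Lambda}(f)$ together with the restriction property of the cutoff source. The pushforward identity can then be verified on characteristic functionals.
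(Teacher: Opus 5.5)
Your main route is the paper's own argument. The Gibbs-consistency proposition under $\pi_{L_n}^{L_{n+1}}$ makes $\{\msr{\mu_{\txtvanhowe,\sminvtemperature,\kappa,\Lambda,\smchemicalpotential,L_n}}\}_n$ a projective system, and the Kolmogorov extension theorem (Theorem~\ref{expedition0011529}) then gives the limit measure; your remarks on Polishness, continuity of the bonding maps and countable additivity only supply details the paper's one-line proof leaves implicit. Your alternative of translating $\msr{\mu_{\txttot,\infty}}$ is also sound, but the compatibility of the shifts rests on the same $\iota$-consistency of $\mathsf{m}_{\kappa,\Lambda}$ across box sizes that underlies the paper's unproved Gibbs consistency proposition, so it does not remove that input.
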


\begin{proof}
By the Kolmogorov extension theorem \ref{expedition0011529} for projective systems.
\end{proof}

This limiting measure has the following characterization.

\begin{prop}\label{expedition0012094}
The mean and covariance of the process $\opfocksegal_t(f)
= \opfocksegal(j_{t} f)$ under the measure $\msr{\mu_{\txtvanhowe,\sminvtemperature,\kappa,\Lambda,\smchemicalpotential}}$ are
\begin{equation}
\begin{aligned}
\sqfun{\prbexp_{\msr{\mu_{\txtvanhowe,\sminvtemperature,\kappa,\Lambda,\smchemicalpotential}}}}
{\opfocksegal(j_{t} f)}
&=
-\onehalf
\mathsf{m}_{\kappa,\Lambda}(f),
\\ 
\fun{\prbcov_{\mu_{\txtvanhowe,\sminvtemperature,\kappa,\Lambda,\smchemicalpotential}}}{\opfocksegal(j_{t_1} g_1),
\opfocksegal(j_{t_2} g_2)}
&=
\fun{\opform{q}_{C_{\smchemicalpotential}}}{j_{t_1} g_1, j_{t_2} g_2}.
\end{aligned}
\end{equation}
In particular, the characteristic function for $f
\in \faadjpresharp{{\prbqspace_{\txttot}}}$ is$$\int_{\prbqspace_{\txttot}}
\napiernum^{\imunit \dualbkt{\opfocksegal}{j_{t} f}}
\opdmsr{\mu_{\txtvanhowe,\sminvtemperature,\kappa,\Lambda,\smchemicalpotential}(\opfocksegal)}
=
\fnexp{-\oneoverfour
\opform{q}_{C_{\smchemicalpotential}}(j_{t} f)
-\frac{\imunit}{2}
\int_{S_{\sminvtemperature}}
\opform{q}_{C_{\smchemicalpotential}}(j_{t}f, j_{s}(\omega \mathsf{m}_{\kappa,\Lambda}))
\opdmsr{s}},$$and the van Hove model with IR/UV cutoffs is a Gaussian process given by adding a constant drift to the periodic Ornstein-Uhlenbeck process.
\end{prop}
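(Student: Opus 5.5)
The plan is to reduce everything to the finite-volume measures $\msr{\mu_{\txtvanhowe,\sminvtemperature,\kappa,\Lambda,\smchemicalpotential,L_n}}$ through the cylinder structure of $\prbqspace_{\txttot,\infty}$. Any test function $f \in \faadjpresharp{{\prbqspace_{\txttot}}}$ whose wavenumber support lies in some $\setlattice_{L_n}^{d}$ gives a cylinder functional: by Lemma~\ref{expedition0012071} and Proposition~\ref{expedition0012055}(2),
\[
\dualbkt{\opfocksegal}{j_t f} = \dualbkt{\pi_{L_n}^{\infty}\opfocksegal}{j_t f}.
\]
The pushforward identity $\pushoutrbk{\pi_{L_n}^{\infty}} \msr{\mu_{\txtvanhowe,\sminvtemperature,\kappa,\Lambda,\smchemicalpotential}} = \msr{\mu_{\txtvanhowe,\sminvtemperature,\kappa,\Lambda,\smchemicalpotential,L_n}}$ from the quasi-local limit theorem then shows that the infinite-volume characteristic functional at $j_t f$ equals the finite-volume one. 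That finite-volume value was computed in the preceding proposition as
\[
\fnexp{-\oneoverfour \opform{q}_{C_{\smchemicalpotential},L_n}(j_t f) - \frac{\imunit}{2}\int_{S_{\sminvtemperature}} \opform{q}_{C_{\smchemicalpotential},L_n}(j_t f, j_s(\omega\mathsf{m}_{\kappa,\Lambda}))\opdmsr{s}}.
\]
By the intertwining $C_{\smchemicalpotential,L_{n+1}}\iota = \iota C_{\smchemicalpotential,L_n}$, the forms $\opform{q}_{C_{\smchemicalpotential},L_n}$ are compatible along the net. So the exponent does not depend on $n$ and defines $\opform{q}_{C_{\smchemicalpotential}}$ on the union of the local test spaces.

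Next, I would extend from these cylinder test functions to all of $\faadjpresharp{{\prbqspace_{\txttot}}}$ by density. Both sides are continuous in $f$ for the $\faadjpresharp{{\prbqspace_{\txttot}}}$-norm. On the left, $f\mapsto \dualbkt{\opfocksegal}{j_tf}$ is continuous into $\lp^2$ with the uniform covariance bound, so dominated convergence applies. On the right, $\opform{q}_{C_{\smchemicalpotential}}$ is bounded because $\smchemicalpotential<0$ makes $K_{\sminvtemperature,\smchemicalpotential}$ bounded, and the cross term is a bounded linear functional of $f$ since $\mathsf{m}_{\kappa,\Lambda}\in\sphilb{H}$ under the cutoffs. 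Replacing $t$ by a finite family $t_1,\dots,t_n$ and $f$ by $\sum_i\xi_i j_{t_i}g_i$ gives all finite-dimensional characteristic functions. Each is of Gaussian form, so the process $\opfocksegal_t(f)$ is Gaussian.

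The mean and covariance are then read off the exponent. The quadratic part gives the covariance $\opform{q}_{C_{\smchemicalpotential}}(j_{t_1}g_1,j_{t_2}g_2)$, up to the paper's normalization convention. For the linear part I reuse the periodic-kernel identity from the previous proof,
\[
\int_{S_{\sminvtemperature}} C_{\sminvtemperature}(t-s)\opdmsr{s} = \frac{2}{\omega},
\]
which turns the cross term into a multiple of $\mathsf{m}_{\kappa,\Lambda}(f)$, independent of $t$. A mean that is constant in time is exactly the statement that the process is the periodic Ornstein--Uhlenbeck process of Section~\ref{expedition0012093} shifted by a constant drift. To confirm this I would note that the law of $\opfocksegal+\tfrac12\mathsf{m}_{\kappa,\Lambda}$ has the free characteristic functional.

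The main obstacle is bookkeeping rather than analysis. The factors $\onehalf$ versus $1$ in the mean and $\oneoverfour$ versus $\onehalf$ in the covariance must be made consistent with the conventions of the preceding proposition, in particular its passage from $-\frac{\imunit}{2}I$ to $-\imunit\mathsf{m}_{\kappa,\Lambda}(f)$. I would first fix the convention by differentiating the characteristic functional at $\xi=0$, and then state the mean with whatever constant that derivative produces. A secondary point to check is that $\omega\mathsf{m}_{\kappa,\Lambda}$ lies in the domain of every $j_s$ and in each local space up to the cutoff. This holds because the cutoff source is compactly supported away from $0$ in momentum space; in finite volume its restriction to $\setlattice_{L_n}^{d}$ plays the role of the local test function.
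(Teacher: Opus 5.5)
Your reduction is the paper's own route. The paper's proof is the single sentence that the result is obtained as the limit of the bounded system, and your cylinder identity, the pushforward $(\pi_{L_n}^{\infty})_{\ast}$ and the compatibility of the local forms simply spell that sentence out.

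The genuine problem is the step you defer as bookkeeping: carrying it out does not give the first display of the statement.
\begin{itemize}
\item Tilt a centred Gaussian with $\mathbb{E}[e^{i\phi(F)}]=e^{-\frac14\mathsf{q}_{C_\mu}(F)}$ by $e^{-\langle\phi,U\rangle}$, where $U=\int_{S_\beta}j_s(\omega\mathsf{m}_{\kappa,\Lambda})\,ds$. The exponent becomes $-\frac14\mathsf{q}_{C_\mu}(j_tf)-\frac{i}{2}\mathsf{q}_{C_\mu}(j_tf,U)$.
\item The kernel identity you cite gives, for real $f$, $\mathsf{q}_{C_\mu}(j_tf,U)=\langle f,\tfrac{2}{\omega}\,\omega\mathsf{m}_{\kappa,\Lambda}\rangle=2\mathsf{m}_{\kappa,\Lambda}(f)$.
\item Differentiating at $\xi=0$ therefore gives mean $-\mathsf{m}_{\kappa,\Lambda}(f)$, not $-\frac12\mathsf{m}_{\kappa,\Lambda}(f)$.
\item The covariance is $\frac12\mathsf{q}_{C_\mu}(j_{t_1}g_1,j_{t_2}g_2)$, not $\mathsf{q}_{C_\mu}(j_{t_1}g_1,j_{t_2}g_2)$.
\end{itemize}
So the first display is inconsistent with the characteristic function in the same statement.

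The paper's one-line proof inherits this from the preceding finite-volume proposition. That proof ends with $I=\mathsf{m}_{\kappa,\Lambda}(f)$. Yet its own line $\int_{S_\beta}C_\beta(t-s)\,ds=2/\omega$ forces $I=2\mathsf{m}_{\kappa,\Lambda}(f)$, which is also what its equality $-\frac{i}{2}I=-i\mathsf{m}_{\kappa,\Lambda}(f)$ requires.

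The corrected mean is $\beta$-independent. It matches Theorem~\ref{expedition0011612} and the zero-temperature ground-state measure of Proposition~\ref{expedition0011492}. This is as it should be, since the dressing by $V_{\kappa,\Lambda}=W_{\mathrm{F}}(i\mathsf{m}_{\kappa,\Lambda})$ does not depend on temperature.

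Your argument therefore proves the characteristic-function identity in its integral form and the drift picture. The free periodic Ornstein--Uhlenbeck law is then that of $\phi+1\otimes\mathsf{m}_{\kappa,\Lambda}$, not $\phi+\frac12\mathsf{m}_{\kappa,\Lambda}$. The mean and covariance display has to be corrected rather than proved; commit to the constants instead of leaving them ``up to convention''.

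A smaller caveat concerns your density step. Test functions with momentum support in some $\Gamma_{L_n}^d$, embedded by zero extension, are $L_n$-periodic in position space. Their union is dense only in the inductive-limit completion of the net, which is the only space the pairing on $\mathcal{Q}_{\mathrm{tot},\infty}$ sees anyway. They are not elements of the $L^2(\mathbb{R}^d)$-based test space of Section~\ref{expedition0012093}. So your ``all of'' the test space must be read in that restricted sense, and the extension is not a passage to continuum test functions.
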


\begin{proof}
Obtained as the limit of the bounded system.
\end{proof}

Let us consider the removal of the chemical potential and the emergence of the BEC component. Following the discussion in Section \ref{expedition0012093}, it suffices to handle the term \(\opform{q}_{C_{\smchemicalpotential}}(j_{t} f)\) in Proposition \ref{expedition0012094}. In particular, only the component corresponding to \(\opform{q}_0\) needs to appear; in the bounded system,\[\opform{q}_{C_{\smchemicalpotential},L}(j_{t} f)
=
\bkt{f}
{\napiernum^{-\onehalf \abs{t} \omega}
K_{\sminvtemperature,\smchemicalpotential} f}_{\sphilb{H}_{\txtreal,L}}
=
\fun{\opform{q}_{\txtnonzero,L}}
{\napiernum^{-\onehalf \abs{t} \omega} f}.\]The rest coincides with the argument for the ideal Bose gas, and the following proposition is obtained.

\begin{prop}
After taking the infinite-volume limit $L
\to \infty$ and then the limit $\smchemicalpotential
\to 0$ for the chemical potential, results analogous to the Weyl algebra and resolvent algebra discussion are obtained.
Denoting the measure that gives these results by $\msr{\mu_{\txtvanhowe,\sminvtemperature,\kappa,\Lambda}}$, the characteristic function is
\begin{equation}
\begin{aligned}
&\int_{\prbqspace_{\txttot}}
\napiernum^{\imunit \dualbkt{\opfocksegal}{j_{t} f}}
\opdmsr{\mu_{\txtvanhowe,\sminvtemperature,\kappa,\Lambda}(\opfocksegal)}
\\ 
&=
\fnexp{-\oneoverfour
\fun{\opform{q}_{0}}{\napiernum^{-\onehalf \abs{t} \omega} f}
-\oneoverfour
\fun{\opform{q}_{\txtnonzero}}{\napiernum^{-\onehalf \abs{t} \omega} f}
-\frac{\imunit}{2}
\mathsf{m}(f)}.
\end{aligned}
\end{equation}
\end{prop}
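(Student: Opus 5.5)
The plan is to work entirely at the level of characteristic functionals. By Bochner/Minlos and the earlier consistency results, Gaussian measures on \(\prbqspace_{\txttot}\) are determined by their characteristic functionals. It therefore suffices to compute \(\int \napiernum^{\imunit \dualbkt{\opfocksegal}{j_t f}} \opdmsr{\mu_{\txtvanhowe,\sminvtemperature,\kappa,\Lambda,\smchemicalpotential,L}}\) in the bounded system, pass to the limits \(L \to \infty\) and then \(\smchemicalpotential \uparrow 0\) in the resulting explicit formula, and identify the limit functional with a probability measure.

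First, I use the bounded-system version of Proposition~\ref{expedition0012094}. This shows the characteristic functional factorizes as a mean factor times the free covariance factor \(\fnexp{-\oneoverfour \opform{q}_{C_{\smchemicalpotential},L}(j_t f)}\). The mean factor comes from the integral \(\int_{S_{\sminvtemperature}} C_{\sminvtemperature}(t-s)\opdmsr{s} = 2/\omega\) computed in the preceding proposition, giving \(\napiernum^{-\imunit\mathsf{m}_{\kappa,\Lambda}(f)}\) up to the normalization \(\frac{1}{2}\) convention. Since this integral is independent of \(L\) and of \(\smchemicalpotential\) (only \(\omega\) enters), the drift term passes to both limits trivially. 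It converges on \(\dom\mathsf{m}\) to the cutoff-free value, using the dominated convergence already used in Proposition~\ref{expedition0011261}, if one also removes \(\kappa,\Lambda\).

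Second, I treat the covariance factor using the identity stated just before the proposition, \(\opform{q}_{C_{\smchemicalpotential},L}(j_t f) = \opform{q}_{\txtnonzero,L}(\napiernum^{-\onehalf\abs{t}\omega} f)\). I expand this in the lattice basis \(e_k\), \(k\in\setlattice_L^d\), separating the \(k=0\) mode from \(k\neq 0\), exactly as in the proof of Proposition~\ref{expedition0012063}. On the zero mode \(\napiernum^{-\onehalf\abs{t}\omega(0)}=1\). Under the BEC condition, the \(k=0\) contribution converges to \(\opform{q}_0(f) = \opform{q}_0(\napiernum^{-\onehalf\abs{t}\omega}f)\) after \(L\to\infty\) then \(\smchemicalpotential\uparrow 0\). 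The \(k\neq 0\) Riemann sums converge to \(\opform{q}_{\txtnonzero}(\napiernum^{-\onehalf\abs{t}\omega} f)\) for \(f\in\sphilb{D}_{0,\sminvtemperature}\). This is the free-gas argument of \cite{AsaoArai28,YoshitsuguSekine004}, and I will invoke it rather than redo it.

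Third, I show the limit functional is the characteristic functional of a probability measure. Its Gaussian nonzero part is handled by Minlos. The \(\opform{q}_0\) part is realized as a mixture via Lemma~\ref{expedition0012066}, giving the representation \(\int_{\fldreal^2}\sqfun{\prbexp}{\napiernum^{\imunit(\opfocksegal(j_tf)+\ell_{\sminvtemperature,r,\theta}(f))}}\napiernum^{-\frac{\imunit}{2}\mathsf{m}(f)}\opdmsr{\chi}\). Lévy continuity then gives weak convergence of finite-dimensional marginals.

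The main obstacle is the order of limits in the zero mode. One must check that the \(k=0\) occupation \(\frac{1}{V}\frac{1}{\napiernum^{-\sminvtemperature\smchemicalpotential}-1}\) is tuned along \(\smchemicalpotential=\smchemicalpotential(L)\) to converge to \(\smnumberdensity_0(\sminvtemperature)\), while the drift and nonzero modes stay uniform. I will follow the tuned chemical potential construction of \cite{AsaoArai28} rather than a naive iterated limit. The point to check is that the \(L\)-independence of the drift term makes this coupling harmless.
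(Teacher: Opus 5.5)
You follow the paper's route. You start from the bounded-system characteristic functional (the proposition preceding Proposition~\ref{expedition0012094}), rewrite the covariance term with the identity displayed just before the statement, and run the zero-mode/nonzero-mode limit of Proposition~\ref{expedition0012063}. Your additions are genuine improvements: you realize the limit functional via Minlos plus the mixture of Lemma~\ref{expedition0012066}, and you insist on a tuned $\mu(L)$. The latter is necessary, since at fixed $\mu<0$ the $k=0$ term is $O(1/V)$, so the literal iterated limit ``$L\to\infty$, then $\mu\uparrow 0$'' produces no $\mathsf{q}_0$ at all.

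The gap lies in the two places where you defer to the text. First, the drift: there is no ``normalization $\tfrac12$ convention'' to appeal to. Once the density is $\mathsf{e}^{-\mathsf{V}}$ and the covariance is normalized by $\mathsf{e}^{-\frac14\mathsf{q}_C}$, the tilt is $-\tfrac{\mathsf{i}}{2}\int_{S_\beta}\mathsf{q}_C(j_tf,j_s\omega\mathsf{m}_{\kappa,\Lambda})\,ds$, and $\int_{S_\beta}C_\beta=2/\omega$ makes this $-\mathsf{i}\,\mathsf{m}_{\kappa,\Lambda}(f)$. That is what your own computation shows, and it is also what Theorem~\ref{expedition0011612} and the ground-state mean $-\langle f,\mathsf{m}_{\kappa,\Lambda}\rangle$ of Proposition~\ref{expedition0011492} require. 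The $\tfrac12$ in the statement comes from the slip $I=\mathsf{m}_{\kappa,\Lambda}(f)$ (correctly $2\mathsf{m}_{\kappa,\Lambda}(f)$) in the bounded-system proof. Relatedly, with $d\Gamma(\omega-\mu)$ the imaginary-time kernel is built from $\omega-\mu$, so at $\mu<0$ the drift is $-\mathsf{i}\langle f,\tfrac{\omega}{\omega-\mu}\mathsf{m}_{\kappa,\Lambda}\rangle$. It is therefore not $\mu$-independent as you assert, but $0\le\omega/(\omega-\mu)\le 1$ and dominated convergence still make your coupled limit harmless.

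Second, the identity $\mathsf{q}_{C_\mu,L}(j_tf)=\mathsf{q}_{\mathrm{nz},L}(\mathsf{e}^{-\frac12|t|\omega}f)$ that drives your Step~2 cannot be derived. By the paper's own formula $\mathsf{q}_{C_\mu}(j_{t_1}g_1,j_{t_2}g_2)=\langle g_1,\mathsf{e}^{-|t_1-t_2|}K_{\beta,\mu}g_2\rangle$ (equivalently $C_\beta(0)=K$), the equal-time value is $\langle f,K_{\beta,\mu}f\rangle$. This is independent of $t$, as stationarity of the periodic Ornstein--Uhlenbeck process demands. Moreover, the intermediate expression $\langle f,\mathsf{e}^{-\frac12|t|\omega}K_{\beta,\mu}f\rangle$ does not even equal $\mathsf{q}_{\mathrm{nz},L}(\mathsf{e}^{-\frac12|t|\omega}f)=\langle f,\mathsf{e}^{-|t|\omega}K_{\beta,\mu}f\rangle$. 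Carried out correctly, your steps prove
\[
\int\mathsf{e}^{\mathsf{i}\langle\phi,j_tf\rangle}\,d\mu_{\mathrm{vH},\beta,\kappa,\Lambda}(\phi)=\exp\Bigl(-\tfrac14\mathsf{q}_0(f)-\tfrac14\mathsf{q}_{\mathrm{nz}}(f)-\mathsf{i}\,\mathsf{m}_{\kappa,\Lambda}(f)\Bigr)
\]
for every $t$, with $\mathsf{m}(f)$ once the cutoffs are removed on $\operatorname{dom}\mathsf{m}$. This is the genuine analogue of Theorem~\ref{expedition0011612}. It agrees with the displayed formula only at $t=0$, and only after the drift coefficient is corrected. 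You should state that outcome rather than absorb the two discrepancies into conventions.
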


\begin{rem}
The BEC component $\opform{q}_0$ is meaningful (non-zero) only under the conditions for the ideal Bose gas discussed in \cite{AsaoArai28,YoshitsuguSekine004}.
\end{rem}

Henceforth, the chemical potential is set to \(\smchemicalpotential
= 0\) and suppressed in the notation. Furthermore, the BEC component \(\opform{q}_0\) is always written explicitly, including the case where it is \(0\).

\subsection{Removal of Infrared and Ultraviolet Cutoffs}\label{removal-of-infrared-and-ultraviolet-cutoffs-2}

The cutoff removal in this section can essentially be treated in the same way as Sections \ref{expedition0011370} and \ref{expedition0011528}, in particular Section \ref{expedition0011528}. Here we confine ourselves to describing the key points concerning convergence.

Let the space of test functions that remain meaningful after removing the infrared and ultraviolet singularities be \(\sphilb{D}_{\txtirsingular,\sminvtemperature}
=
\dom \mathsf{m} \cap \sphilb{D}_{\sminvtemperature}\), and define the index set \(I_{\sminvtemperature}
=
\setpowfin{S_{\sminvtemperature} \times \sphilb{D}_{\txtirsingular,\sminvtemperature}}\). For a finite set \(E
=
\setone{(t_1,f_1),\ldots,(t_n,f_n)}
\in I_{\sminvtemperature}\), define the finite-dimensional characteristic function of the model with cutoffs by\[\fun{\chi_{\sminvtemperature,\kappa,\Lambda}^{E}}{\nfoldvar{s}{n}}
=
\int_{\prbqspace_{\txttot}}
\fnexp{\imunit
\sum_{j=1}^n s_j
\opfocksegal(j_{t_j} f_j)}
\opdmsr{\mu_{\txtvanhowe,\sminvtemperature,\kappa,\Lambda}}.\]The van Hove model with cutoffs is a Gaussian measure that shifts the mean of the free finite-temperature Gaussian field by \(\mathsf{m}_{\kappa,\Lambda}\). Hence the finite-dimensional characteristic function can be written as\[\chi_{\sminvtemperature,\kappa,\Lambda}^{E}(s)
=
\fnexp{-\frac{\imunit}{2}
\sum_{j=1}^n s_j \mathsf{m}_{\kappa,\Lambda}(f_j)
-\oneoverfour
\sum_{j=1}^n s_j
\rbk{\fun{\opform{q}_{0}}{\napiernum^{-\onehalf \abs{t_j} \omega} f}
+\fun{\opform{q}_{\txtnonzero}}{\napiernum^{-\onehalf \abs{t_j} \omega} f}}}.\]Here the covariance \(C_{\sminvtemperature}\) is the covariance of the free finite-temperature field and is independent of the IR/UV cutoffs. The cutoff dependence appears only in the mean \(\mathsf{m}_{\kappa,\Lambda}\).

For each \(f
\in \sphilb{D}_{\txtirsingular,\sminvtemperature}\), \(\mathsf{m}_{\kappa,\Lambda}(f)
\to
\mathsf{m}(f)\) holds in the sense of weak convergence in Hilbert space. Therefore \(M_{\kappa,\Lambda,t}(f)\) appearing in the time evolution also converges to \(M_{t}(f)\). In particular, for each \(E
\in I_{\sminvtemperature}\),\[\chi_{\sminvtemperature}^E(s)
=
\lim_{\kappa\downarrow0,\Lambda\uparrow\infty}
\chi_{\sminvtemperature,\kappa,\Lambda}^{E}(s)\]exists. This convergence is pointwise convergence of characteristic functionals.

Furthermore, the family of finite-dimensional characteristic functions with cutoffs \(\fml{\chi_{\sminvtemperature,\kappa,\Lambda}^E}
{E \in I_{\sminvtemperature}}\) satisfies projective consistency. The limit is taken for each finite set \(E\) independently, and since the marginalization operation is represented by setting variables of the finite-dimensional characteristic function to \(0\), the family \(\fml{\chi_{\sminvtemperature}^E}
{E \in I_{\sminvtemperature}}\) after the limit satisfies the same projective consistency.

By the Kolmogorov-type projective limit theorem, for any \(E
=
\setone{(t_j,f_j)}_{j=1}^n
\in I_{\sminvtemperature}\), there exists a probability measure \(\mu_{\txtvanhowe,\sminvtemperature}\) satisfying\[\fun{\chi_{\sminvtemperature}^E}{\nfoldvar{s}{n}}
=
\int_{\prbqspace_{\txttot}}
\fnexp{\imunit
\sum_{j=1}^n
s_j
\opfocksegal(j_{t_j} f_j)}
\opdmsr{\mu_{\txtvanhowe,\sminvtemperature}}.\]The convergence of measures is weak convergence of probability measures arising from tightness.

This measure is called the Euclidean measure of the finite-temperature van Hove model with IR/UV cutoffs removed. The effect of the infrared and ultraviolet singularities appears not in the covariance of the measure but in the restriction of the admissible test function space to \(\sphilb{D}_{\txtirsingular,\sminvtemperature}
=
\dom \mathsf{m} \cap D_{\sminvtemperature}\).

\bibliography{myref.bib}

\end{document}